\newtheorem{theorem}{Theorem}
\numberwithin{theorem}{chapter}
\newtheorem{lemma}[theorem]{Lemma}
\numberwithin{lemma}{chapter}
\newtheorem{corollary}[theorem]{Corollary}
\numberwithin{corollary}{chapter}
\numberwithin{remark}{chapter}
\numberwithin{conjecture}{chapter}
\newtheorem{definition}[theorem]{Definition}
\numberwithin{definition}{chapter}
\newtheorem{example}[theorem]{Example}
\numberwithin{example}{chapter}
\newtheorem{proposition}[theorem]{Proposition}
\numberwithin{proposition}{chapter}
\newtheorem{observation}[theorem]{Observation}
\numberwithin{observation}{chapter}
\newenvironment{claim}[1]{\par\noindent\emph{Claim.}\space#1}{}
\newenvironment{claimproof}[1]{\par\noindent\emph{Proof of Claim.}\space#1}{\hfill $\blacksquare$ \emph{(Claim)}}
\mathchardef\mhyphen="2D
\newcommand{\ercq}{\mathsf{ERCQ}}
\newcommand{\fcregucq}{\mathsf{FC[REG] \mhyphen UCQ}}
\newcommand{\fcucq}{\mathsf{FC \mhyphen UCQ}}
\newcommand{\pat}{\mathsf{PAT}}
\newcommand{\body}{\mathsf{body}}
\newcommand{\intdiv}{\mathrel{\mathsf{div}}}
\newcommand{\intmod}{\mathrel{\mathsf{mod}}}
\newcommand{\atom}{\mathsf{atom}}
\newcommand{\decomp}{\Psi}
\newcommand{\formulaSize}[1]{| #1 |}
\newcommand{\cqhead}[1]{\mathsf{Ans}(#1) \leftarrow}
\newcommand{\query}{P}
\newcommand{\conclog}{\mathsf{2FC \mhyphen CQ}}
\newcommand{\kconclog}[1]{#1\mathsf{FC \mhyphen CQ}}
\newcommand{\labelFunction}{\tau}
\newcommand{\brac}{\mathsf{BPat}}
\newcommand{\noconstr}{}
\newcommand{\sercq}{\mathsf{SERCQ}}
\newcommand{\cpfc}{\mathsf{FC \mhyphen CQ}}
\newcommand{\cpfcreg}{\mathsf{FC[REG] \mhyphen CQ}}
\newcommand{\concreg}{2\cpfcreg}
\newcommand{\kconcreg}[1]{#1\cpfcreg}
\newcommand{\epfc}{\mathsf{EP \mhyphen FC}}
\newcommand{\epfcreg}{\mathsf{EP \mhyphen FC[REG]}}
\newcommand{\epcreg}{\mathsf{EC[REG]}}
\newcommand{\subs}{\sigma}
\newcommand{\nextsubform}[1]{\varphi_{#1}^{\nextrelation}}
\newcommand{\substrsubform}[1]{\varphi_{#1, \mathtt{a}}^{\equalsubstr}}
\newcommand{\openspanvar}[1]{ #1^{o} }
\newcommand{\closespanvar}[1]{ #1^{c} }
\newcommand{\symbolrel}[1]{\mathsf{P}_{#1} }
\newcommand{\dynfo}{\ifmmode{\mathsf{DynFO}}\else{\textsf{DynFO}}\xspace\fi}
\newcommand{\dynprop}{\ifmmode{\mathsf{DynPROP}}\else{\textsf{DynPROP}}\xspace\fi}
\newcommand{\dyncq}{\ifmmode{\mathsf{DynCQ}}\else{\textsf{DynCQ}}\xspace\fi}
\newcommand{\dynucq}{\ifmmode{\mathsf{DynUCQ}}\else{\textsf{DynUCQ}}\xspace\fi}
\newcommand{\dynqf}{\ifmmode{\mathsf{DynQF}}\else{\textsf{DynQF}}\xspace\fi}
\newcommand{\worddomain}{\mathrm{D}}
\newcommand{\wordstruc}{\mathfrak{W}}
\newcommand{\dynword}{\mathsf{word}}
\newcommand{\absins}[1]{\ifmmode{\mathsf{ins_{#1}}}\else{\textsf{ins_{#1}}}\xspace\fi}
\newcommand{\absreset}{\ifmmode{\mathsf{reset}}\else{\textsf{reset}}\xspace\fi}
\newcommand{\ins}[2]{\mathsf{ins}_{#1}(#2)}
\newcommand{\reset}[1]{\mathsf{reset}(#1)}
\newcommand{\unknownupdate}{\partial}
\newcommand{\auxstruc}{\mathfrak{W}_{aux}}
\newcommand{\programstate}{\mathcal{S}}
\newcommand{\updateprogram}{\vec{P}}
\newcommand{\updateformula}[3]{\phi^{#1}_{#2} ( #3 )}
\newcommand{\positionSym}[2]{\mathsf{pos}_{#1}(#2)}
\newcommand{\nextsym}{\leadsto_{w}}
\newcommand{\newnextsym}{\leadsto_{w'}}
\newcommand{\nextrelation}{R_{\mathsf{Next}}}
\newcommand{\lastrel}{R_{\mathsf{last}}}
\newcommand{\firstrel}{R_{\mathsf{first}}}
\newcommand{\equalsubstr}{R_{\mathsf{eq}}}
\newcommand{\oneopen}{x_o}
\newcommand{\oneclose}{x_c}
\newcommand{\twoopen}{y_o}
\newcommand{\twoclose}{y_c}
\newcommand{\patternquery}{\mathcal{P}}
\newcommand{\symel}[1]{\bigvee_{\mathtt{b} \in \Sigma} \symbolrel{\mathtt{b}}(#1)}
\newcommand{\subreset}[1]{\varphi_{\mathsf{reset}, #1}^{\equalsubstr}}
\newcommand{\scattrelation}{\bar{R}_{\mathsf{scatt}}}
\newcommand{\subwordscatt}{\sqsubseteq_{\mathsf{scatt}}}
\newcommand{\firsta}{\mathsf{first}_{\mathtt{a}}}
\newcommand{\lasta}{\mathsf{last}_{\mathtt{a}}}
\newcommand{\nexta}{\mathsf{next}_{\mathtt{a}}}
\newcommand{\numa}{\bar{R}_{\mathsf{num}(\mathtt{a})}}
\newcommand{\barnuma}{D_{\mathsf{num}(\mathtt{a})}}
\newcommand{\bigO}{\mathcal{O}}
\newcommand{\np}{\mathsf{NP}}
\newcommand{\pspace}{\mathsf{PSPACE}}
\newcommand{\VAset}{\mathsf{VA_{set}}}
\newcommand{\diff}{\setminus}
\newcommand{\select}{\zeta}
\newcommand{\join}{\bowtie}
\newcommand{\cored}{\mathsf{gcore}}
\newcommand{\core}{\mathsf{core}}
\newcommand{\synrgx}{\mathsf{sRGX}}
\newcommand{\rgx}{\mathsf{RGX}}
\newcommand{\RGXreg}{\rgx^{\spanreg}}
\newcommand{\VAsetreg}{\mathsf{VA}_{\mathsf{set}}^{\spanreg}}
\newcommand{\RGXcore}{\rgx^{\core}}
\newcommand{\RGXcored}{\rgx^{\cored}}
\newcommand{\VAsetcore}{\mathsf{VA}_{\mathsf{set}}^{\core}}
\newcommand{\VAsetcored}{\mathsf{VA}_{\mathsf{set}}^{\cored}}
\newcommand{\spn}[1]{[#1\rangle}
\newcommand{\fun}[1]{\llbracket #1 \rrbracket}
\newcommand{\spanner}[1]{\llbracket #1 \rrbracket}
\newcommand{\SVars}[1]{{\mathsf{vars}\left(#1\right)}} 
\newcommand{\strucbra}[1]{( #1 )}
\newcommand{\spanreg}{\mathsf{reg}}
\newcommand{\validr}[1]{\mathsf{Ref}(#1)}
\newcommand{\clr}{\mathsf{clr}}
\newcommand{\bind}[2]{#1\{#2\}}
\newcommand{\rquot}{\mathrel{/}}
\newcommand{\lquot}{\mathrel{\backslash}}
\newcommand{\strucvar}{\mathfrak{u}}
\newcommand{\fvar}{\mathsf{free}}
\newcommand{\var}{\mathsf{vars}}
\newcommand{\mv}{\mathsf{w}}
\newcommand{\splog}{\ifmmode{\mathsf{SpLog}}\else{\textsf{SpLog}}\xspace\fi}
\newcommand{\splogneg}{\ifmmode{\mathsf{SpLog}^{\neg}}\else{\textsf{SpLog}$^{\neg}$}\xspace\fi}
\newcommand{\dpcsplog}{\ifmmode{\mathsf{DPC}}\else{\textsf{DPC}}\xspace\fi}
\newcommand{\pcsplog}{\ifmmode{\mathsf{PC}}\else{\textsf{PC}}\xspace\fi}
\newcommand{\openvar}[1]{\mathbin{\vdash_{#1}}}
\newcommand{\closevar}[1]{\mathbin{\dashv_{#1}}}
\newcommand{\ECrtext}{\textsf{EC\textsuperscript{reg}}}
\newcommand{\EC}{\ifmmode{\mathsf{EC}}\else{\textsf{EC}}\xspace\fi}
\newcommand{\ECr}{\ifmmode{\mathsf{EC^{reg}}}\else{\ECrtext}\xspace\fi}
\newcommand{\fc}{\mathsf{FC}}
\newcommand{\fcreg}{\mathsf{FC}[\mathsf{REG}]}
\newcommand{\reg}{\mathsf{REG}}
\newcommand{\regconst}{\mathbin{\dot{\in}}}
\newcommand{\logeq}{\mathbin{\dot{=}}}
\newcommand{\fo}{\mathsf{FO}}
\newcommand{\domain}{\mathsf{Dom}}
\newcommand{\cq}{\ifmmode{\mathsf{CQ}}\else{\textsf{CQ}}\xspace\fi}
\newcommand{\ucq}{\ifmmode{\mathsf{UCQ}}\else{\textsf{UCQ}}\xspace\fi}
\newcommand{\interp}{\mathcal{I}}
\newcommand{\erasing}[1]{#1}
\newcommand{\nonerasing}[1]{#1_{\mathsf{NE}}}
\newcommand{\deltah}{\delta^*}
\newcommand{\quotproj}{\pi}
\newcommand{\lang}{\mathcal{L}} 
\newcommand{\rlang}{\mathcal{R}}  
\newcommand{\signature}{\upvarsigma}
\newcommand{\emptyword}{\varepsilon}
\newcommand{\true}{\ifmmode{\mathsf{True}}\else{\textsf{True}}\xspace\fi}
\newcommand{\false}{\ifmmode{\mathsf{False}}\else{\textsf{False}}\xspace\fi}
\newcommand{\union}{\mathrel{\cup}}
\newcommand{\intersect}{\mathrel{\cap}}
\newcommand{\biglor}{\bigvee}
\newcommand{\powerset}[1]{\mathcal{P}(#1)}
\newcommand{\df}{:=}
\newcommand{\ie}{i.\,e.\xspace}
\newcommand{\problemtitle}[1]{\gdef\@problemtitle{#1}}
\newcommand{\probleminput}[1]{\gdef\@probleminput{#1}}
\newcommand{\problemquestion}[1]{\gdef\@problemquestion{#1}}
  \par\addvspace{.5\baselineskip}
  \par\addvspace{.5\baselineskip}
\title{\textbf{Conjunctive Queries for Logic-Based \\ Information Extraction}} 
\author{Sam M. Thompson}
\crefname{observation}{Observation}{Observation}
\setlist{nosep}
\begin{document}
\maketitle 

\frontmatter
\chapter{Abstract}
This thesis offers two logic-based approaches to conjunctive queries in the context of information extraction. 
The first and main approach is the introduction of conjunctive query fragments of the logics $\fc$ and $\fcreg$, denoted as $\cpfc$ and $\cpfcreg$ respectively. 
$\fc$ is a first-order logic based on word equations, where the semantics are defined by limiting the universe to the factors of some finite input word.
$\fcreg$ is $\fc$ extended with regular constraints.

Our first results consider the comparative expressive power of $\cpfcreg$ in relation to document spanners (a formal framework for the query language AQL), and various fragments of $\cpfcreg$ -- some of which coincide with well-known language generators, such as patterns and regular expressions.

Then, we look at decision problems.
We show that many decision problems for $\cpfc$ and $\cpfcreg$ (such as equivalence and regularity) are undecidable.
The model checking problem for $\cpfc$ and $\cpfcreg$ is $\np$-complete even if the $\cpfc$ is acyclic -- under the definition of acyclicity where each word equation in an $\cpfc$ is an atom.
This leads us to look at the ``decomposition'' of word equations into binary word equations (\ie, of the form $x \logeq y \cdot z$).
If a query consists of only binary word equations and the query is acyclic, then model checking is tractable and we can enumerate results efficiently.
We give an algorithm that decomposes an $\cpfc$ into an acyclic $\cpfc$ consisting of binary word equations in polynomial time, or determines that this is not possible.

The second approach is to consider the \emph{dynamic complexity} of $\fc$.
This uses the common way of encoding words in a relational structure using a universe with a linear order along with symbol predicates.
Then, each element of the universe can carry a symbol if the predicate for said symbol holds for that element.
Instead of the ``usual way'' (looking at $\fo$ over these structures), we study the dynamic complexity, where symbols can be modified.
As each of these modifications only changes one position, the result of a query before and after the modification is likely to be related.
This gives rise to dynamic descriptive complexity classes based on the logic needed to incrementally \emph{maintain} a query.
For such an approach, conjunctive queries are sufficient to maintain the so-called core spanners.
In fact, dynamic conjunctive queries (\dyncq) are actually more expressive than core spanners, and dynamic first-order logic (\dynfo) is more expressive than generalized core spanners.

\chapter{Acknowledgements}
First and foremost, I'd like to thank my supervisor, Dominik D. Freydenberger for his unwavering support, guidance, and feedback.
His endless enthusiasm has made the three years spent on this project truly enjoyable.
I could not have asked for a better supervisor and mentor.

I'd also like to thank the numerous people at Loughborough university who have helped me during my time here, both as an undergraduate student and a PhD student.
I am grateful to Daniel Reidenbach for being my second supervisor, as well as for sparking my interest in theoretical computer science with his undergraduate module.
Thank you to Andrea Soltoggio and Szymon Łopaciuk for all the interesting discussions over lunch.
An additional thanks to Szymon for numerous conversations in the office about research and more.
I wish to express my gratitude to Parisa Derakhshan, Iain Phillips, Robert Merca\c{s}, Manfred Kufleitner, Dominik D. Freydenberger, and Yanning Yang for allowing me to teach their modules.
Doing so has made me a more well-rounded computer scientist.

I am grateful to Joel D. Day and  Ana S\u{a}l\u{a}gean for agreeing to be my internal examiners, and Anthony W. Lin for being my external examiner.

I would also like to thank Justin Brackemann for the helpful feedback and list of typos in the proof of~\cref{polytime}.

Last, but not least, I'd like to thank my family.
Without all their encouragement, I would not have been able to finish this project.
I can't put into words how thankful I am for my parents, my sister, and my grandparents.

\tableofcontents

\mainmatter   

\chapter{Introduction}
Information extraction (often shortened to IE) is the process of extracting relational data from text.
In rule-based information extraction, one can think of this as querying text as one would query a relational database.
Due to the massive amounts of unstructured textual data, the applications for information extraction are numerous.
For example, IE is used in health care~\cite{wang2018clinical}, social media analytics~\cite{morgan2014information}, business analysis~\cite{jacobs2022sentivent}, and many other applications.

Before considering information extraction, let us take a step back and consider \emph{relational databases}.
Originating as a mathematical model by Codd~\cite{codd2002relational}, relational databases are now a multi-billion dollar industry.
A simplified view of Codd's original idea is that the internal representation of data could be abstracted, and one could interface with data stored in tables through logic-based query languages.
Structured Query Language (SQL) is one such language and is almost ubiquitous amongst database systems.
As these query languages have their roots in logic, it is unsurprising that there is a strong connection between databases and the rich mathematical theory of \emph{finite-model theory} -- which studies properties (such as expressive power) of logics  over finite models.
For more information on finite-model theory, the reader should refer to Ebbinghaus and Flum~\cite{ebbinghaus1999finite}, and Libkin~\cite{libkin2004elements}, and for information on database theory, consider Abiteboul, Hull, and Vianu~\cite{abiteboul1995foundations}.

If one uses a logic-based declarative language for information extraction, it is natural to ask whether techniques naturally carry over from the wealth of research on database theory and finite-model theory.
This thesis looks at the particular case of \emph{conjunctive queries}, often shortened to $\cq$s.
These $\cq$s can be thought of as a logical characterization of ``\texttt{SELECT-FROM-WHERE-}'' queries in SQL (assuming the ``where'' conditions are only combined with ``and''), which are an essential foundation of the query language.
We shall explore the similarities and differences between conjunctive queries over relational databases, and conjunctive queries for information extraction.

\section{Background}
\paragraph*{Document Spanners.}
Document Spanners were introduced by Fagin, Kimelfeld, Reiss, and Vansummeren~\cite{fag:spa} as a formal framework for the \emph{Annotation Query Language} or AQL, which is an SQL-like declarative language for information extraction in IBM's SystemT.
A document spanner (which we often shortened to \emph{spanner}) is used to \emph{extract} relational data from some input text, where tuples in the relation consist of positional intervals -- called \emph{spans}.
The process of extracting these relations can be described as a two-step process.

First, so-called \emph{extractors} are used to convert the input text, a word over some finite alphabet, to a relation of spans (intervals of the text).
For the purposes of this introduction, we assume extractors to be regex formulas, however, there are other equivalent models that we shall explore in this thesis.
These regex formulas are regular expressions with capture variables.
A simple example of a relation one could extract with regex formulas is a unary relation of every interval in the text where a name from a predetermined list occurs.
This can be realized with 
\[ \gamma(x) \df \Sigma^* \cdot \bind{x}{\mathtt{Ann} \lor \mathtt{Ben} \lor\dots \lor \mathtt{Zoe}} \cdot \Sigma^*. \]
Ignoring the capture variable, this regular expression matches any word that has an occurrence of a name given in the predetermined list Ann, Ben, \dots, Zoe.
However, since we use the capture variable $x$, the regex formula $\gamma(x)$ can be used to convert some input word $w \in \Sigma^*$ into a unary relation consisting of those intervals of the text that correspond to a section where one of the names occur.

The second step is that the extracted relations are combined using a relational algebra. Classes of spanners can be defined by the choice of relational operators. 
\emph{Regular spanners} allow for $\union$ (union), $\pi$ (projection), and $\join$ (natural join), and can be evaluated efficiently, under some assumptions. 
A query given as a regular spanner can be ``compiled'' into a single so-called \emph{vset-automaton}. 
For such a representation, results can be enumerated with constant delay after linear-time preprocessing~\cite{florenzano2018constant, amarilli2020constant}. 
Unfortunately, the derived automaton may be of exponential size with respect to the original representation. 
Therefore, while regular spanners are efficient if the query is considered fixed, if the query, given as a join of regex formula, is part of the input, then evaluation is intractable. 
As shown in~\cite{freydenberger2018joining}, evaluation of Boolean spanners of the form $\query \df \pi_\emptyset ( \gamma_1 \join \gamma_2 \cdots \join \gamma_n)$ is $\np$-complete, 
even if $\query$ is acyclic (assuming each regex formula is an atom). 

\emph{Core spanners} extend regular spanners by allowing $\select^=$ (equality selection), which checks whether two (potentially different) spans represent the same factor of the input document.
Even when core spanners are restricted to queries of the form~$\pi_\emptyset \select^=_{x_1,y_1} \cdots \select^=_{x_m, y_m} \gamma$
for a single regex formula $\gamma$, evaluation is $\np$-complete~\cite{fre:doc}.

\paragraph*{Logics Over Words.}
One well-researched way of using logic over words considers encoding words as a \emph{linear order} and \emph{symbol predicates} (see, for example, Straubing~\cite{straubing2012finite}).
In this setting, a word can be considered a sequence of positions where $R_\mathtt{a}(x)$, for some symbol $\mathtt{a}$, holds if position $x$ carries the symbol $\mathtt{a}$.
For example, the word $\mathtt{abbaab}$ can be encoded as follows
\begin{center}
	\begin{tabular}{ccccccc}
		1&2&3&4&5&6\\
		$\mathtt{a}$&$\mathtt{b}$ & $\mathtt{b}$ & $\mathtt{a}$ & $\mathtt{a}$ & $\mathtt{b}$
	\end{tabular}
\end{center}
The structure of such a representation contains a universe $A$, along with a linear order $<$ over the elements, and for every $\mathtt{a} \in \Sigma$ we have the predicate $R_{\mathtt{a}}$.
In the above example, the top row illustrates the ordered universe and the bottom row illustrates which character each element of the universe carries.
One can consider a logic over such structures, and thus ``query'' the word.

It is known that the class of languages of first-order logic over these structures (often denoted as $\mathsf{FO}(<)$) coincides with the class of \emph{star-free languages} (for example, see~\cite{diekert2008survey}).
Therefore, first-order logic over linear-orders is not sufficient for expressing simple regular languages, such as $(\mathtt{aa})^+$, let alone core spanners.

An approach that is more powerful that $\mathsf{FO}(<)$, and has received considerable attention in research, is \emph{monadic second-order logic} over these same structures -- often abbreviated to $\mathsf{MSO}$, see~\cite{straubing2012finite}.
It is well-known that the class of languages definable by $\mathsf{MSO}$ is exactly the class of regular languages. 
Therefore, $\mathsf{MSO}$ cannot represent all core spanners (consider the non-regular language $\{ ww \mid w \in \Sigma^* \}$).
Furthermore, the fact that $\mathsf{MSO}$ reasons over positions makes queries that return a set of unique factors impossible due to the fact that two pairs of positions that relate to the same word are considered different in the setting of $\mathsf{MSO}$.

An alternative logic over words is the so-called \emph{theory of concatenation}.
This logic (denoted by $\mathsf{C}$) is a first-order logic over \emph{word equations}.
Word equations are an equality $\alpha_L \logeq \alpha_R$ where both $\alpha_L$ and $\alpha_R$ are words over an alphabet of terminal symbols and an alphabet of variables.
Each variable represents some word over the terminal alphabet.
This overcomes the problem of representing words as positions, as $\mathsf{C}$ represents words as words.
Furthermore, non-regular languages such as $\{ ww \mid w \in \Sigma^* \}$ can easily be represented by a formula in $\mathsf{C}$.
We can also consider concatenation of arbitrary factors, which is not doable in $\mathsf{MSO}$.

While $\mathsf{C}$ allows for more expressive power, the fact that the universe ($\Sigma^*$) is infinite gives rise to undecidability~\cite{durnev1995undecidability}.
Furthermore, without a meaningful distinction between satisfiability (does the formula hold for any $w \in \Sigma^*$) and model checking (does the formula hold for a given $w \in \Sigma^*$), the full and unrestricted logic $\mathsf{C}$ is rather removed from IE.
Yet suitable restrictions on $\mathsf{C}$ have strong connections to document spanners, especially if we \emph{constrain} variables to be members of a regular language, rather than representing any word.
Freydenberger and Holldack~\cite{fre:doc} considered representing core spanners as formulas in the \emph{existential theory of concatenation with regular constraints} (which we refer to as $\mathsf{EC}[\reg]$).

Freydenberger~\cite{fre:splog} then further developed this connection, giving a syntactic restriction of $\mathsf{EC}[\reg]$ known as $\splog$ which characterizes core spanners.
Instead of allowing any word over terminal symbols and variables on the left-hand side of word equations, $\splog$ only allows a variable $\mv$ that represents the input word -- thus ensuring that every variable represents a factor of the input word.

Recently, Freydenberger and Peterfreund~\cite{frey2019finite} streamlined this approach and developed $\fc$, a finite-model theory of the theory of concatenation. 
Instead of restricting the syntax, as was done for the logic $\splog$, $\fc$ uses the finite factors of the input word as the universe and defines a first-order logic based on concatenation over this universe. 
``Pure'' $\fc$ does not have regular constraints, however Freydenberger and Peterfreund~\cite{frey2019finite} allow $\fc$ to be extended with regular constraints, denoted as $\fcreg$.
From~\cite{frey2019finite}, it is known that $\fcreg$ is equivalent to generalized core spanners, and the existential positive fragment of $\fcreg$, denoted $\epfcreg$, is equivalent to core spanners.
The use of the term ``equivalent'' comes with a caveat:
Spanners reason over spans.
This is in contrast to $\fcreg$, and corresponding fragments, reason directly on words.
Further topics on $\fc$ such as static analysis, model checking, and extension with so-called repetition~operators are explored in~\cite{frey2019finite}.

\paragraph{Conjunctive Queries.}
As the title suggests, this thesis focuses on \emph{conjunctive queries}.
Informally, conjunctive queries are a syntactic restriction on prenex normal form first-order logic\footnote{Formulas of the form $Q_1 x_1 \colon \dots Q_n x_n \colon \varphi$, where each $Q_i$ is a quantifier, and $\varphi$ is quantifier-free.}; only allowing atoms, conjunction, and existential quantification.
In the relational setting, conjunctive queries have been intensively studied,
for example~\cite{bagan2007acyclic, deep2021ranked, tencate2021unique, abokhamis2019boolean, gottlob2001complexity, kolaitis2000conjunctive, deep2021enum}.
One reason is that conjunctive queries relate to the ``\texttt{SELECT-FROM-WHERE-}'' SQL queries, where conditions are combined only with conjunction. Therefore, conjunctive queries can be seen as a fundamental core of SQL queries.
Another reason is their strong connections to other fields, for example, \emph{constraint satisfaction} in AI~\cite{kolaitis2000conjunctive}.

While model checking for conjunctive queries is $\np$-complete~\cite{chandra1977optimal}, many restrictions on the structure of the query have been fruitful in finding tractable fragments.
One well-studied restriction are the \emph{acyclic conjunctive queries}. 
While there are different notions of acyclicity, for the purposes of this thesis, we consider $\alpha$-acyclicity.
That is, a conjunctive query is acyclic if there exists a so-called \emph{join tree} for that query (since we do not consider any other notions of acyclicity, if a query is $\alpha$-acyclic, we simply call it \emph{acyclic}).
When a conjunctive query is restricted to an acyclic conjunctive query, model checking can be done in polynomial time~\cite{YannakakisAlgorithm}, results can be enumerated in input/output linear time, or with polynomial delay~\cite{bagan2007acyclic}.

\section{Contributions of This Thesis}
\paragraph{Overview.} This thesis considers two approaches to conjunctive queries for logic-based IE.
The first (and main) approach is introducing a conjunctive query fragment of $\fc$ which we call $\cpfc$. 
Topics such as expressive power (\cref{chp:fccq}), complexity and decidability of decision problems (\cref{chp:fccq}), and tractable fragments (\cref{chp:split}) are considered.
The second approach is an examination of information extraction from a \emph{dynamic complexity} point of view (\cref{chp:dynfo}).

\paragraph{\cref{chp:prelims}.}
This chapter gives notational conventions and definitions that are used throughout this thesis.
We define first-order logic, conjunctive queries, words and languages, document spanners, and the logic $\fc$.

\paragraph{\cref{chp:fccq}.}
In this chapter, we introduce $\cpfc$ and $\cpfcreg$, and look at fundamental problems with regards to expressive power, static analysis and decision problems.
We show that $\cpfcreg$s have an equivalent expressive power as a conjunctive query fragment of core spanners which we call $\sercq$s.
From~\cite{fre:splog}, we immediately can determine that unions of $\cpfcreg$s (denoted $\fcregucq$s) have the same expressive power as core spanners.
The main result on expressive power is an inclusion diagram, which shows the relationship between $\cpfcreg$ and related models~(\cref{thm:hierarchy}).

We then turn our attention to decision problems and static analysis, and show that model checking (deciding whether $\varphi \in \cpfc$ is modelled by some $w \in \Sigma^*$) is $\np$-complete, even for restricted cases such as the input word being of length one or if the underlying query is acyclic~(\cref{npcomplete-modelcheck}).

Regarding static analysis problems, we have two main results:
\begin{itemize}
\item Universality for $\cpfcreg$ is undecidable~(\cref{corollary:UNIVandREG}), and
\item regularity for $\cpfc$ is undecidable~(\cref{theorem:FCCQreg}),
\end{itemize}
where universality asks whether the language generated by the input (in our case, some query) is $\Sigma^*$, and regularity asks whether a language generated by the input is regular.
The undecidability of $\cpfc$ equivalence follows almost immediately from~\cref{theorem:FCCQreg}.
Furthermore, these undecidability results have consequences for query optimization, such as:
There is no algorithm that given an $\cpfc$ computes an equivalent minimal $\cpfc$ (\cref{cor:minimization}).

We also consider questions regarding \emph{split correctness}.
The main idea behind split correctness is to split the input word into a set of factor, and evaluate the query on each of these factors.
This brings up static analysis questions as to whether this changes the semantics for a given query.
Split correctness in regards to information extraction was considered in~\cite{dol:split}, however, this research looked at regular spanners.
In~\cref{chp:fccq}, we show that three static analysis problems (split-correctness, splittability, and self-splittability) related to split correctness for $\cpfc$s are all undecidable~(\cref{theorem:split}).

The last topic considered in this chapter is the size of the output relations for queries. 
To look at this topic, we adapt the notion of \emph{pattern ambiguity} as considered in~\cite{mateescu1994finite} to $\cpfcreg$.
We show that it is $\pspace$-complete to decide whether a given $\cpfcreg$ always extracts a relation that has at most $k$ tuples~(\cref{theorem:kAmb}).

\paragraph{\cref{chp:split}.}
The goal of this chapter is to bridge the gap between acyclic relational $\cq$s and $\cpfc$.
To this end, we define the decomposition of an $\cpfc$ into a $\conclog$, where $\conclog$ denotes the set of $\cpfc$s where the right-hand side of each word equation is of at most length two. 

The first main result of this chapter is that there is a polynomial-time algorithm that decides whether a \emph{pattern} can be decomposed into an acyclic $\conclog$~(\cref{polytime}). 
Building on this result, we show that there is a polynomial-time algorithm that decides whether an $\cpfc$ can be decomposed into an acyclic $\conclog$~(\cref{theorem:LVJoinTree}). 
While these are decision problems, each algorithm constructs an acyclic $\conclog$ in polynomial time, if one exists. 
As soon as we have an acyclic $\conclog$, the upper bound results for model checking and enumeration of results follow from previous work on relational acyclic $\cq$s~\cite{gottlob2001complexity, bagan2007acyclic}.

In~\cref{chp:split}, we mainly focus on $\cpfc$s (\ie, no regular constraints) due to the fact that we can add regular constraints for ``free''.
This is because regular constraints are unary predicates, and therefore can be easily incorporated into a join tree.
Thus, the results of this chapter defines a class of $\sercq$s -- and therefore core spanners -- for which model checking can be solved in polynomial time, and results can be enumerated with polynomial delay, both in terms of combined complexity (both the query and the input word are considered part of the input).

We conclude this chapter by giving a parametrized class of patterns for which the membership problem can be solved in polynomial time.
We call this parametrized class $k$-ary local patterns, and is based upon sufficient criteria for a pattern to be decomposed into an acyclic $\cpfc$ where each word equation has a right-hand side of at most length $k$~(\cref{thm:karydecomp}).

The approach offered in~\cref{chp:split} provides new research directions for tractable document spanners. 
A lot of the current literature approaches regular spanners by ``compiling'' the spanner representation (regex formulas that are combined with projection, union, and joins) into a single automaton, where the use of joins can lead to a number of states that is exponential in the size of the original representation~\cite{flo:con,munoz2020constant,amarilli2020constant,freydenberger2018joining,pet:com}.
Instead, we look at decomposing $\fc$ conjunctive queries into small and tractable components.
This allows us to use the wealth of research on relational algebra, while also allowing for the use of the equality selection operator. 

\paragraph{\cref{chp:dynfo}.}
In this chapter, we examine the complexity of information extraction from a \emph{dynamic complexity} point of view. 
The classic dynamic complexity setting was independently introduced by 
Dong, Su, and Topor~\cite{don:non} and by
Patnaik and Immerman~\cite{pat:dynfo}.
Dynamic complexity assumes a relational database that is subject to updates in the form of adding or removing tuples from relations. 
The goal is then to maintain a set of auxiliary relations that can be updated using some fragment of logic.
The class of all problems that can be maintained using only first-order formulas is called \dynfo. 

A more restricted setting is \dynprop, where only quantifier-free formulas can be used. 
As one might expect, restricting the update formulas leads to various classes between \dynprop and \dynfo. 
Of particular interest to this chapter are \dyncq and \dynucq, where the update formulas are conjunctive queries or unions of conjunctive queries. 
As shown by Zeume and Schwentick~\cite{zeu:dyncq}, $\dyncq=\dynucq$ holds; but it is open whether these are proper subclasses of \dynfo (see Zeume~\cite{zeu:small} for detailed background information).

The main results of this chapter are summarized as follows:
\begin{itemize}
\item Any regular spanner can be maintained in $\dynprop$~(\cref{prop:regular}),
\item $\dyncq$ is more expressive than core spanners~(\cref{splogindyncq}), and
\item $\dynfo$ is more expressive than generalized core spanners~(\cref{genCoreInDynFO}).
\end{itemize}

As a consequence, under this view of incremental maintenance of queries via an auxiliary database, conjunctive queries are actually more expressive than core spanners. Likewise, first-order logic is more expressive than generalized core spanners. 
It is yet to be seen whether this could be useful for real-world systems, however, some recent work has looked at applying dynamic complexity. Schmidt, Schwentick, Tantau, Vortmeier and Zeume~\cite{schmidt2021work} studied the amount of parallel work needed to maintain an answer to certain questions (such as range queries) over words that are subject to updates.

\paragraph{\cref{ch:conclusions}.} 
This thesis concludes with a chapter that summarizes the thesis as a whole, and the main results from each chapter.
We consider open problems from each chapter, as well as directions for future research.

\subsection*{Publications}
This thesis consists of work from two conference publications~\cite{freydenberger2021splitting, freydenberger2020dynamic} which are joint work with Dominik D. Freydenberger, as well as unpublished results. 
\begin{itemize}
\item \cref{chp:fccq} mainly consists of unpublished results. The exception is~\cref{Prop:RGXtoPatCQ} which was originally published in~\cite{freydenberger2021splitting}.
\item \cref{chp:split} is based on results from~\cite{freydenberger2021splitting}. However, this research has been extended for this thesis. The parametrized class of pattern languages with polynomial time membership introduced in~\cref{sec:kfold} was not given in~\cite{freydenberger2021splitting}. 
\item The results of \cref{chp:dynfo} were originally published in~\cite{freydenberger2020dynamic}. The changes made for this thesis are instead of using \splog, this thesis uses fragments of $\fcreg$, and some proofs have been improved. The proof that pattern languages can be maintained in $\dyncq$ is based upon a proof giving in the author's final year project for their undergraduate degree. However, substantial changes to the proof were needed for~\cite{freydenberger2020dynamic} as well as this thesis.
\end{itemize}

Any results that are not the author's work shall be clearly cited.

\section{Related Literature}\label{sec:relLit}
Before moving on to more technical chapters, let us first look at the broader context of literature on document spanners, dynamic complexity, pattern languages, and word equations.

\paragraph{Document Spanners.}
Document spanners were introduced by Fagin, Kimelfeld, Reiss, and Vansummeren~\cite{fag:spa} as a formal framework for the \emph{Annotation Query Language} (or AQL) used in IBM's SystemT.
Regarding data complexity, Florenzano, Riveros, Ugarte, Vansummeren, and Vrgoc~\cite{florenzano2018constant} gave a constant-delay algorithm for enumerating the results of deterministic vset-automata, after linear time preprocessing.
Amarilli, Bourhis, Mengal, and Niewerth~\cite{amarilli2020constant} extended this result to  non-deterministic vset-automata.

Regarding combined complexity,
Freydenberger, Kimelfeld, and Peterfreund~\cite{freydenberger2018joining} introduced regex $\cq$s  and proved that their evaluation is $\np$-complete (even for acyclic queries), and that fixing the number of atoms and the number of equalities in $\sercq$s allows for polynomial-delay enumeration of results.
Freydenberger, Peterfreund, Kimelfeld, and Kr{\"{o}}ll~\cite{pet:com} showed that non-emptiness for a join of two sequential regex formulas is $\np$-hard, under schemaless semantics, even for a single character document. 
Connections between the theory of concatenation and spanners have been considered in~\cite{fre:doc, fre:splog, frey2019finite}, which give many of the lower bound complexity results for core spanners. Schmid and Schweikardt~\cite{schmid2020purely} examined a subclass of core spanners called refl-spanners, which incorporate equality directly into a regular spanner. 
Peterfreund~\cite{peterfreund2021grammars} considered extraction grammars, and gave an algorithm for unambiguous extraction grammars that enumerates results with constant-delay after quintic preprocessing.

Doleschal, Kimelfeld, Martens, Nahshon, and Neven~\cite{dol:split} studied the parallel correctness of regular spanners. 
Briefly, instead of querying the entirety of an input document, one could first ``split'' the document into sections. 
The result is then the union of the query applied to each of these sections.
This brings up the questions regarding \emph{split correctness} -- whether a query on a document produces the same result as a query over the split document.
Other further topics include (but are not limited to) weighted variants~\cite{doleschal2020weight} and cleaning~\cite{fagin2016declarative}.

\paragraph{Dynamic Complexity.}
Dynamic complexity was independently introduced by 
Dong, Su, and Topor~\cite{don:non} and 
Patnaik and Immerman~\cite{pat:dynfo}.
Gelade, Marquardt, and Schwentick~\cite{gel:dyn} examined the dynamic complexity of formal languages. 
Their result that \dynprop captures the regular languages is the basis for~\cref{prop:regular}, which states \dynprop can maintain every regular spanner. 
Gelade et al.~\cite{gel:dyn} also establishes that every context-free language is in \dynfo and that every Dyck-language is in \dynqf (\dynprop with auxiliary functions).

While~\cref{chp:dynfo} mainly considers the dynamic complexity framework from a theoretical point of view, Schmidt, Schwentick, Tantau, Vortmeier, and Zeume~\cite{schmidt2021work} recently studied \emph{work sensitive dynamic complexity}. 
They consider the amount of parallel resources required to answer a dynamic query. 
Therefore, it is at least possible that using techniques from this chapter and~\cite{schmidt2021work}, queries over dynamic texts could be made~efficient.

Mu{\~{n}}oz, Vortmeier, and Zeume~\cite{mun:dyn} studied the dynamic complexity in a graph database setting, namely for \emph{conjunctive regular path queries (CRQPs)} and \emph{extended conjunctive regular path queries (ECRPQs)}. 
In particular, Theorem~14 in~\cite{mun:dyn} states that on acyclic graphs, even a generalization of ECRPQs can be maintained in \dynfo. 
Fagin et al.~\cite{fag:spa}  established that on marked paths (a simple path with edges labeled with terminal symbols, and self-loops to mark the two endpoints of the path) core spanners have the same expressive powers as a CRPQs with word equalities (a fragment of ECRPQs). 
Freydenberger~\cite{fre:splog} furthers the connection between graph database query languages and IE, using the logic called $\splog$.
While marked paths are not acyclic in a strict sense, Section~7 of \cite{fre:splog} proposes a variant of this model that could be directly combined with the construction from~\cite{mun:dyn}. 
Thus, one could combine these results and observe that core spanners can be maintained in \dynfo. 
In contrast to this, results from~\cref{chp:dynfo} lower the upper bound to \dyncq. 

\paragraph{Patterns and Word Equations.}
Patterns were introduced by Angluin~\cite{angluin1979finding} and are a rather simple concept: a pattern is a word over terminal symbols and variables, where the variables are \emph{substituted} by terminal words.
Each pattern then generates a language of all words that can be obtained through such a substitution.
The language generated depends on whether we allow for ``erasing'' substitutions -- that is, when variables can be mapped to the empty word -- or not, as well as the chosen terminal alphabet.
Angluin's original definition does not allow for erasing substitutions.
These so-called \emph{erasing pattern languages} were introduced by Shinohara~\cite{shinohara1983polynomial}.

While patterns are rather simple to define, many problems regarding patterns are difficult.
For example, the so-called \emph{membership problem} (determining whether a word is a member of a pattern language) is $\np$-complete as shown by Ehrenfreucht and Rozenberg~\cite{ehrenfreucht1979finding} for the erasing case, and independently by Angluin~\cite{angluin1979finding} for the non-erasing case.
Since the membership problem in general is $\np$-complete, there has been work on finding classes of pattern languages for which the membership problem is tractable~\cite{manea2019matching, reidenbach2014patterns, fernau2015pattern, day2018local}.

The \emph{inclusion problem} (determining whether one pattern language is a subset of another) was shown to be undecidable by Jiang, Salomaa, Saloma and Yu ~\cite{jiang1993inclusion}. 
Later, Freydenberger and Reidenbach showed that the inclusion problem is still undecidable for fixed alphabets~\cite{freydenberger2010bad}, and Bremer and Freydenberger~\cite{bre:inc} showed that it is undecidable even if the patterns have a fixed number of variables.

Further topics regarding pattern languages include learning theory~\cite{reidenbach2006non, holte2022distinguishing, freydenberger2013inferring}, ambiguity~\cite{mateescu1994finite}, and extensions~\cite{schmid2012inside, koshiba1995typed}.
One such ``extension'' of particular relevance to this thesis are \emph{word equations}.
Put simply, a word equation is an equality $\alpha \logeq \beta$ where $\alpha$ and $\beta$ are patterns.
Then, a substitution satisfies the word equation if replacing the variables according to the substitution results in a valid equality.
There has been research on satisfiability~\cite{plandowski1999satisfiability}, expressive power~\cite{karhumaki2000expressibility, karhumaki2001expressibility}, and extensions~\cite{ganesh2012word, diekert2005existential}.
An extension of particular interest to the thesis is the theory of concatenation $\mathsf{C}$.
As discussed earlier, $\mathsf{C}$ is a logic first introduced by Quine~\cite{quine1946concatenation}, where word equations are the atomic formulas. 
While the full logic is undecidable (for example, see~\cite{durnev1995undecidability,kristiansen2021first}), the existential positive theory is decidable~\cite{makanin1977problem} and was later shown to be in $\pspace$~\cite{plandowski1999satisfiability}.

One application of word equations and the theory of concatenation (that has a very similar flavour to the research of this thesis) are so-called \emph{string solvers}~\cite{zheng2015effective,lin2016string, liang2016efficient,zheng2017z3str2}.
Informally, word equation based string solvers consider a logic over word equations, often along with constraints, such as regular constraints or length constraints.
For example, Lin and Barcel\'{o}~\cite{lin2016string} consider a theory over word equations and so-called \emph{finite-state transducers} with an application on analysing mutation XSS.
The main focus of string solvers is satisfiability.
While static analysis questions are of interest to this thesis, the main focus of information extraction is evaluation-based questions such as model-checking, enumeration of results, etc.

Furthermore, while there are undeniable connections between the research on string solvers and the similar word-equation based approach of this thesis, since we are interested in IE we assume a finite-model-esque logic, with the universe being the set of factors of some input word.
This is in contrast to string solvers (and word equations more broadly), which assume variables can be replaced with~$\Sigma^*$ (save for variables which have constraints placed on them).

In~\cref{sec:ambi} we briefly look at the \emph{ambiguity} of $\cpfcreg$s.
This is based upon the definition by Mateescu and Salomaa~\cite{mateescu1994finite} of the \emph{degrees of ambiguity} for~patterns.
Ambiguity in free monoids has received considerable attention~\cite{day2016restricting, reidenbach2011restricted, freydenberger2012weakly,nevisi2012conditions}.
A particular focus has been the question of given a word $\alpha \in A^*$ and a partial morphism $\subs \colon A^* \rightarrow B^*$, does there exist a partial morphism $\tau \colon A^* \rightarrow B^*$ such that $\subs(\alpha) = \tau(\alpha)$ and $\subs(\mathtt{a}) \neq \tau(\mathtt{a})$ for some symbol $\mathtt{a} \in A$ that appears in~$\alpha$.

\chapter{Preliminaries}\label{chp:prelims}
This section aims to provide the reader with definitions and notation conventions that are used throughout this thesis.
While this thesis is mostly self-contained, basic knowledge of discrete mathematics, algorithms~\cite{cormen2022introduction}, and complexity theory~\cite{arora2009computational} is required.

\section{Basic Notation}
For $n \geq 1$, let \index{$[n]$} $[n] \df \{ 1,2,\dots,n \}$. 
Let $\emptyset$ \index{$\emptyset$} denote the \emph{empty set} and let $\powerset{S}$ denote the powerset of some set $S$. 
We use $|S|$ for the \emph{cardinality} of $S$. 
If $S$ is a subset of $T$ then we write $S \subseteq T$ and if $S \neq T$ also holds, then $S \subset T$. 
The difference of two sets $S$ and $T$ is denoted as $S \setminus T$. 
We use $\mathbb{N}$ \index{N@$\mathbb{N}$} to denote the set $\{ 0, 1, \dots \}$ and \index{N@$\mathbb{N}_+$}$\mathbb{N}_+ \df \mathbb{N} \setminus \{0\}$.
If $\vec{x}$ is a tuple, we write $x \in \vec{x}$ to show that $x$ is a component of $\vec{x}$.

\section{First-Order Logic}
We now give some background to first-order logic. 
If the reader is comfortable with the canonical definitions of first-order logic, they are encouraged to skip to the definitions of conjunctive queries~(\cref{sec:CQdefns}). 
If the reader requires a more detailed explanation, see textbooks on finite-model theory such as~\cite{ebbinghaus1999finite, libkin2004elements}. 
\begin{definition}
A \emph{signature} $\signature$ \index{$\signature$ (signature)} is a set  of \emph{relational symbols} (often denoted $R_1, R_2, \dots$), function symbols (often denoted $f_1, f_2, \dots$), and constant symbols (often denoted $c_1, c_2, \dots$). Every relational symbol $R \in \signature$ has an associated arity~$\mathsf{ar}(R)$. 
Analogously, every function symbol $f \in \signature$ has an associated arity $\mathsf{ar}(f)$.\index{ar@$\mathsf{ar}(\cdot)$}

\index{A@$\mathfrak{A}$}
If $\signature \df \{ R_1, \dots, R_n, f_1, \dots, f_m, c_1, \dots, c_k \}$, then a $\signature$-structure $\mathfrak{A}$ is a tuple
\[ \mathfrak{A} \df (A, R_1^\mathfrak{A}, \dots, R_n^\mathfrak{A}, f_1^\mathfrak{A}, \dots, f_m^\mathfrak{A}, c_1^\mathfrak{A}, \dots, c_k^\mathfrak{A}), \]
where $A$ is a non-empty set of elements known as the \emph{universe}, and each symbol from $\signature$ is given an \emph{interpretation} as follows:
\begin{itemize}
\item every relational symbol $R \in \signature$ is interpreted as a relation $R^\mathfrak{A} \subseteq A^{\mathsf{ar}(R)}$,
\item every function symbol $f \in \signature$ is interpreted as a function $f^\mathfrak{A} \colon A^{\mathsf{ar}(f)} \rightarrow A$, and
\item every constant symbol $c \in \signature$ is interpreted as an element of the universe $c^\mathfrak{A} \in A$.
\end{itemize}
\end{definition}

Next, we define $\signature$-terms.

\begin{definition}
Let $\mathsf{VAR} = \{ x_i \mid i \in \mathbb{N} \}$ be a countably infinite set of variables.
We recursively define so-called $\signature$-\emph{terms} for a given signature $\signature$ as follows
\begin{itemize}
\item every variable $x_i \in \mathsf{VAR}$ is a term, 
\item every constant symbol $c \in \signature$ is a term, and
\item if $f \in \signature$ is a function symbol and $t_1, t_2, \dots, t_{\mathsf{ar}(f)}$ are terms, then $f(t_1, t_2, \dots, t_{\mathsf{ar}(f)})$ is a term,
\end{itemize}

\index{I@$\interp$}
A $\signature$-term $t$ is evaluated by an \emph{interpretation} $\interp \df (\mathfrak{A},h)$,  where $\mathfrak{A}$ is a $\signature$-structure with universe $A$, and $h \colon \mathsf{VAR} \rightarrow A$ is a partial function known as a \emph{variable assignment}. 
The valuation of $t$ under $\interp \df (\mathfrak{A}, h)$, denoted as $\fun{t}^\interp$, is defined as follows
\begin{itemize}
\item if $t = x$ where $x \in \mathsf{VAR}$, then $\fun{t}^\interp \df h(x)$,
\item if $t = c$ where $c \in \signature$ is a constant symbol, then $\fun{t}^\interp \df c^\mathfrak{A}$, and
\item if $t = f(t_1, t_2, \dots, t_{\mathsf{ar}(f)})$ where $f$ is a function symbol and $t_1, t_2, \dots, t_{\mathsf{ar}(f)}$ are $\signature$-terms, then $\fun{t}^\interp \df f^\mathfrak{A}( \fun{t_1}^\interp, \dots, \fun{t_{\mathsf{ar}(f)}}^\interp )$.
\end{itemize}
\end{definition}

Next, let us define the syntax of first-order logic.
\begin{definition}\index{first-order logic}\index{first-order logic!FO@$\fo[\signature]$}
Let $\fo[\signature]$ be the set of first-order logic formulas over the signature~$\signature$.
We first consider \emph{atomic} formulas.
If $t_1, t_2, \dots, t_k$ are $\signature$-terms, and $R \in \signature$ is a relational symbol then:
\begin{itemize}
\item if $t_1$ and $t_2$ are $\signature$-terms, then $(t_1 \logeq t_2) \in \fo[\signature]$ is an atomic formula, and
\item if $R \in\signature$ is a relational symbol, and $t_1,t_2,\dots, t_{\mathsf{ar}}$ are $\signature$-terms, then $R(t_1,t_2,\dots, t_k) \in \fo[\signature]$ is an atomic formula.
\end{itemize}
Now for the recursive rules, for any $\varphi, \psi \in \fo[\signature]$, we have:
\begin{itemize}
\item Conjunction: $(\varphi \land \psi) \in \fo[\signature]$,
\item disjunction: $(\varphi \lor \psi) \in \fo[\signature]$,
\item negation: $\neg \varphi\in \fo[\signature]$,
\item existential quantification: $\exists x \colon \varphi$ for any $x \in \mathsf{VAR}$, and
\item universal quantification: $\forall x \colon \varphi$ for any $x \in \mathsf{VAR}$.
\end{itemize}
\end{definition}

We omit stating the signature when it can be inferred from context.

If $\varphi = Q x \colon \psi$ where $Q \in \{\exists, \forall\}$, then $x$ is known as a \emph{bound variable} in $\varphi$.
If a variable is not a bound variable, then it is a \emph{free variable}.
We denote the set of free variables of $\varphi \in \fo[\signature]$ as $\fvar(\varphi)$.\index{free@$\fvar(\varphi)$!first-order logic}

We are now ready to define the semantics of first-order logic.
\begin{definition}
Let $\interp \df (\mathfrak{A}, h)$ be an interpretation for $\signature$.
Let $\varphi, \psi \in \fo[\signature]$ and let $t_1, t_2, \dots, t_k$ be terms over $\signature$.
The \emph{truth valuation} of $\varphi$ under the interpretation $\interp$ is denoted $\fun{\varphi}^\interp$, and is defined recursively as follows:
\begin{itemize}
\item $\fun{t_1 \logeq t_2}^\interp \df 1$ if $\fun{t_1}^\interp = \fun{t_2}^\interp$, and $\fun{t_1 \logeq t_2}^\interp \df 0$ otherwise,
\item if $R \in \signature$ is a relational symbol where $\mathsf{ar}(R) = k$, then $\fun{R(t_1,\dots,t_k)}^\interp \df 1$ if $(\fun{t_1}^\interp,\dots, \fun{t_k}^\interp) \in R^\mathfrak{A}$, and $\fun{R(t_1,\dots,t_k)}^\interp \df 0$ otherwise.
\item $\fun{\varphi \land \psi}^\interp \df 1$ if $\fun{\varphi}^\interp = \fun{\psi}^\interp = 1$, and $\fun{\varphi \land \psi}^\interp \df 0$ otherwise.
\item $\fun{\varphi \lor \psi}^\interp \df 1$ if $\fun{\varphi}^\interp =1$ or $\fun{\varphi}^\interp = 1$, and $\fun{\varphi \lor \psi}^\interp \df 0$ otherwise.
\item $\fun{\neg \varphi}^\interp \df 1$ if $\fun{\varphi}^\interp = 0$, and $\fun{\neg\varphi}^\interp \df 0$ otherwise.
\end{itemize}
For any variable $x \in \mathsf{VAR}$, variable assignment $h \colon \mathsf{VAR} \rightarrow A$, and element $a \in A$, let $h_{x \rightarrow a} \colon \mathsf{VAR} \rightarrow A$ be the function where $h_{x \rightarrow a}(x) = a$ and for all $y \in \mathsf{VAR} \setminus \{x\}$ we have $h_{x \rightarrow a}(y) = h(y)$.  Let $\interp_{x \rightarrow a} \df (\mathfrak{A}, h_{x \rightarrow a})$.
\begin{itemize}
\item $\fun{\exists x \colon \varphi}^\interp \df 1$ if there exists $a \in A$ such that $\fun{\varphi}^{\interp_{x \rightarrow a}} = 1$, and $\fun{\exists x \colon \varphi}^\interp \df 0$ otherwise.
\item $\fun{\forall x \colon \varphi}^\interp \df 1$ if for all $a \in A$ we have $\fun{\varphi}^{\interp_{x \rightarrow a}} = 1$, and $\fun{\forall x \colon \varphi}^\interp \df 0$ otherwise.
\end{itemize}
\end{definition}

For $\varphi \in \fo[\signature]$, we use $\var(\varphi)$ to denote the set of all variables used in~$\varphi$.\index{var@$\var(\varphi)$!first-order logic}

As shorthand, we can write $\bigwedge_{i=1}^m \varphi_i$ instead of $\varphi_1 \land \varphi_2 \land \dots \land \varphi_m$, and likewise we can write $\biglor_{i=1}^n \varphi_i$ instead of $\varphi_1 \lor \varphi_2 \lor \dots \lor \varphi_m$.

\index{$\models$!first-order logic}
If $\vec x$ is a tuple containing all the free-variables of $\varphi$, then we can write $\varphi(\vec x)$.
We use $\interp \models \varphi(\vec x)$ (read as $\interp$ \emph{models} $\varphi$) to say that $\fun{\varphi}^\interp = 1$ and $\vec x$ is the tuple of free variables.
This allows us to define relations from a $\signature$-structure $\mathfrak{A}$ and $\varphi \in \fo[\signature]$ as follows:

\begin{definition}\label{defn:forelation}
Let $\mathfrak{A}$ be a $\signature$-structure and let $A$ be the universe of $\mathfrak{A}$. Let $\varphi(x_1, x_2, \dots, x_m) \in \fo[\signature]$ with $(x_1,x_2, \dots, x_m)$ being a tuple consisting of all the free variables from $\fvar(\varphi)$.
We write that $\mathfrak{A} \models \varphi(a_1,a_2,\dots,a_m)$ if $(\mathfrak{A}, h) \models \varphi$ where $h(x_i) = a_i$ for each $i \in [m]$.
Now let:
\[ \varphi(\mathfrak{A}) \df \{ (a_1, a_2, \dots, a_m) \subseteq A^m \mid \mathfrak{A} \models \varphi(a_1,a_2,\dots,a_m) \} \]
be the relation defined by $\varphi$ on $\mathfrak{A}$.\index{$\varphi(\mathfrak{A})$}

If $\varphi \in \fo[\signature]$ and $\fvar(\varphi) = \emptyset$, then for any $\signature$-structure $\mathfrak{A}$, we have that $\varphi(\mathfrak{A})$ is the set containing the empty tuple or the empty set, which we encode as \emph{true} and \emph{false} respectively.
\end{definition}

Due to the fact that the focus of this thesis is on database theory, we can make some assumptions about the structures.
Firstly, we can assume that our structure is \emph{relational}. That is, there are no functions.
Secondly, that the universe is always finite which in turn implies that each relation in the structure is also finite.
These restrictions are common in the database theory community as such structures are closer to the \emph{relational model} used in relational databases see Chapter 3 of~\cite{abiteboul1995foundations} for more information on the relational model.

\section{Conjunctive Queries}\label{sec:CQdefns} \index{CQs@$\cq$ (conjunctive query)}

The class of \emph{conjunctive queries} or $\cq$s, is a subclass of first-order logic with strong connections to database theory.
They can be thought of as a logical representation of ``\texttt{SELECT-FROM-WHERE-}'' queries in SQL, as long as only conjunction is used in between the where conditions.
We define the syntax as follows:

\begin{definition}
Let $\signature \df \{ R_1,R_2,\dots,R_m,c_1,c_2,\dots,c_m \}$ be a relational signature.
Then $\varphi \in \fo[\signature]$ is a \emph{conjunctive query} if $\varphi \df \exists \vec{y} \colon \bigwedge_{i =1}^n R'_i(\vec{x}_i)$ where each $R_i'$ is a relational symbol in $\signature$ and $\vec{x}_i$ denotes the free variables of the atom $R_i$ for any~$i \in [n]$.
We use the shorthand $\varphi \df \cqhead{\vec x} \bigwedge_{i =1}^n R_i(\vec{x}_i)$ where $\vec{x}$ is a tuple consisting of all the free variables of $\varphi$.
\end{definition}

Using~\cref{defn:forelation}, a conjunctive query $\varphi$  can ``return'' a relation $\varphi(\mathfrak{A})$ from some input relational structure $\mathfrak{A}$.
This is analogous to a query language, such as SQL, where a query returns an output table from an input database.

\index{model checking!first-order logic}
An important problem in database theory and other related fields is the \emph{model checking problem}.
That is, the decision problem given $\interp$ and $\varphi$ as inputs, where $\fvar(\varphi) = \emptyset$, and asks whether $\interp \models \varphi$.
It is known that model checking for $\cq$s is $\np$-complete~\cite{chandra1977optimal}.
A well-known restriction on $\cq$s that makes model checking tractable (polynomial time) is the class of so-called \emph{acyclic conjunctive queries}~\cite{YannakakisAlgorithm,gottlob2001complexity}.\index{acyclic conjunctive queries}
In some research (such as~\cite{brault2016hypergraph}), the form of acyclicity we work with is known as $\alpha$-acyclicity. 
Since we do not consider other versions of acyclicity, we call $\alpha$-acyclicity just \emph{acyclicity}.
That is, a conjunctive query $\varphi$ is acyclic if there exists a so-called \emph{join tree} for $\varphi$.

\index{join tree}
\begin{definition}\label{defn:CQjointree}
Let $\varphi \df \cqhead{\vec x} \bigwedge_{i=1}^n R_i(\vec x_i)$ be a $\cq$.
A \emph{join tree} for $\varphi$ is an undirected tree $T \df (V,E)$ where $V \df \{ R_i(\vec{x}_i) \mid i \in [n]\}$ and the set of edges must comply with the following condition: if $x \in \bar x_i$ and $x \in \bar x_j$ for any $i,j \in [n]$, then $x \in \vec{x}_k$ for all nodes $R_k(\vec{x}_k)$ that exist on the path between $R_i(\vec{x}_i)$ and $R_j(\vec{x}_j)$.
\end{definition}

A join tree is often defined as a rooted and directed tree. 
However, for the purposes of this thesis, this is not important (any root can be assumed before applying algorithms for join trees).

\begin{example}\label{example:jointreeandtree}
Consider the following conjunctive queries:
\begin{align*}
\varphi_1 \df \cqhead{x,y} R(x,y) \land S(y,z) \land T(z,y), \\
\varphi_2 \df \cqhead{x,y} R'(x,y) \land S'(y,z) \land T'(z,x),
\end{align*}
where $x$, $y$, and $z$ are variables.
We can show that $\varphi_1$ has a join tree by considering the left tree in~\cref{fig:jointree}.
However, the right tree in~\cref{fig:jointree} is not a join tree for $\varphi_2$ as the $R'(x,y)$ and $T'(z,x)$ nodes contain an $x$, and yet the node that lies on the path between said nodes does not contain an $x$.
It can be easily verified that any tree where the nodes of the tree are the atoms of $\varphi_2$ cannot be a join tree.
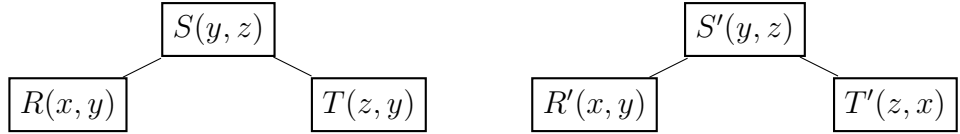
\begin{figure}
\center
\begin{tikzpicture}[shorten >=1pt,->]
\tikzstyle{vertex}=[rectangle,fill=white!35,minimum size=12pt,inner sep=4pt,draw=black, thick]
\tikzstyle{vertex2}=[rectangle,fill=white!35,minimum size=12pt,inner sep=4pt]
\node[vertex] (1) at (0,0) {$R(x,y)$};
\node[vertex] (2) at (2,1) {$S(y,z)$};
\node[vertex] (3) at (4,0) {$T(z,y)$};

\path [-](1) edge node[left] {} (2);
\path [-](2) edge node[left] {} (3);

\node[vertex] (4) at (7,0) {$R'(x,y)$};
\node[vertex] (5) at (9,1) {$S'(y,z)$};
\node[vertex] (6) at (11,0) {$T'(z,x)$};

\path [-](4) edge node[left] {} (5);
\path [-](5) edge node[left] {} (6);

\end{tikzpicture}\hspace{1cm}
\caption{\label{fig:jointree} A figure illustrating two trees: the left tree is a join tree for $\varphi_1$ as defined in~\cref{example:jointreeandtree}, and the right tree is not a join tree for $\varphi_2$ as defined in~\cref{example:jointreeandtree}.}
\end{figure}
\end{example}

There is an efficient algorithm for determining whether a $\cq$ is acyclic or not, and if the given $\cq$ is acyclic, then the algorithm returns a join tree.
This algorithm is known as the \emph{GYO algorithm}\footnote{Named after Graham, and Yu and \"{O}zsoyo\u{g}lu.} -- we give a version of said algorithm: 

\index{GYO algorithm}
\begin{definition}[GYO algorithm]\label{defn:gyo}
We define a version of the GYO algorithm that takes a conjunctive query $\varphi \df \cqhead{\vec{x}} \bigwedge_{i=1}^m R_i(\vec{x}_i)$ and either returns a join tree~$T$ for $\varphi$, or returns ``$\varphi$ is cyclic''.
\begin{enumerate}
\item Let $E \df \emptyset$ and $V \df \{ R_i \mid i \in [m]\}$.
\item Define all nodes of $V$ and all variables in $\var(\varphi)$ as \emph{unmarked}.
\item Repeat the following until nothing changes:
\begin{enumerate}
\item If there exists unmarked nodes $R_i(\vec{x}_i)$ and $R_j(\vec{x}_j)$ with $i \neq j$ such that $\var(R_i(\vec{x}_i)) \subseteq \var(R_j(\vec{x}_j))$, then add the edge $\{ R_i(\vec{x}_i), R_j(\vec{x}_j)\}$ to $E$ and mark $R_i(\vec{x}_i)$ .
\item Mark all $x \in \var(\varphi)$ that occurs in exactly one unmarked node.
\end{enumerate}
\item If there exists exactly one unmarked node, then return $T \df (V,E)$. 
\item Otherwise, return ``$\varphi$ is cyclic''.
\end{enumerate}
\end{definition}

Note that the GYO algorithm can be performed in polynomial time.

An alternative definition of acyclicity is often used where acyclicity is defined in terms of so-called \emph{hypergraphs}.
See, for example,~\cite{brault2016hypergraph} for more information. 
For this thesis, it is easier to simply use join-trees rather than defining hypergraphs.

While model checking for $\cq$s is $\np$-complete, using \emph{Yannakakis' algorithm}~\cite{YannakakisAlgorithm}, model checking for acyclic $\cq$s can be done in polynomial time.

\index{Yannakakis' algorithm}
\paragraph{Yannakakis' Algorithm.}
Recall~\cref{defn:forelation}, where given a $\signature$-structure $\mathfrak{A}$, and $\varphi(x_1,x_2,\dots,x_m) \in \fo[\signature]$, we have the relation
\[ \varphi(\mathfrak{A}) \df \{ (a_1, a_2, \dots, a_m) \subseteq A^m \mid \mathfrak{A} \models \varphi(a_1,a_2,\dots,a_m) \}. \]
In other terms, $\varphi(\mathfrak{A})$ can be seen as a set of mappings $\{ x_1,x_2,\dots,x_m\} \rightarrow A$, where $(x_1,x_2,\dots,x_m)$ are the free variables of $\varphi$.
Each such mapping is called a \emph{tuple} of~$R$.

The main idea behind Yannakakis' Algorithm is to use the structure of the join tree to remove all tuples that are not used in the resulting relation $\varphi(\mathfrak{A})$, such tuples are known as \emph{dangling tuples}.
To do so, we define an operation known as the \emph{semi-join}.

\index{semi-join}
Let $R = \varphi(\mathfrak{A})$ be an $m$-ary relation and $S = \psi(\mathfrak{A})$ be an $n$-ary relation, where $\varphi, \psi \in \fo[\signature]$.
If we interpret $R$ as a set of functions $f \colon X \rightarrow A$ and we interpret~$T$ as a set of mappings $t \colon Y \rightarrow A$ where $X$ and $Y$ are finite sets of variables, then we can define the \emph{semi-join} $R \ltimes T$ as 
\[ R \ltimes T \df \{ f \in R \mid \text{ there exists } t \in T \text{ such that } t|_{X \intersect Y} = f|_{X \intersect Y}\},\]
where $f |_X$ denotes the restriction of the function $f$ to the domain $X$.

\begin{example}
Consider the following relations.
\begin{table}[!htb]
    \begin{minipage}{.3\linewidth}
      \caption*{$R$}
      \centering
        \begin{tabular}{cc}
            \hline 
            $x$ & $y$ \\
            \hline 
            $a_1$ & $a_2$ \\
            $a_3$ & $a_4$ \\
            $a_1$ & $a_4$ \\
            $a_2$ & $a_1$
        \end{tabular}
    \end{minipage}
    \begin{minipage}{.3\linewidth}
      \caption*{$S$}
      \centering
        \begin{tabular}{cc}
        	\hline
        	$y$ & $z$ \\
        	\hline
            $a_3$ & $a_5$ \\
            $a_2$ & $a_2$ \\
            $a_4$ & $a_5$ \\
            &
        \end{tabular}
    \end{minipage} 
    \begin{minipage}{.3\linewidth}
      \caption*{$R \ltimes S$}
      \centering
        \begin{tabular}{ccc}
        	\hline
        	$x$ & $y$ \\
        	\hline
            $a_1$ & $a_2$ \\
            $a_3$ & $a_4$ \\
            $a_1$ & $a_4$  \\
            &
        \end{tabular}
    \end{minipage} 
\end{table}

Here, the relations are represented as tables.
For $R$ and $R \ltimes S$, every tuple can be seen as a single mapping $\{x,y\} \rightarrow \{ a_1,\dots,a_5\}$. 
For~$S$, every tuple can be seen as a mapping $\{y,z \} \rightarrow \{a_1,\dots,a_5 \}$.
\end{example}

Now we are ready for Yannakakis' algorithm.

\begin{definition}[Yannakakis' algorithm~\cite{YannakakisAlgorithm}]\label{algo:yann}
Given a Boolean acyclic conjunctive query $\varphi \df \cqhead{} \bigwedge_{i=1}^n R_i(\vec u_i)$ over the signature $\signature$, and a $\signature$-structure~$\mathfrak{A}$,
we can decide if $\mathfrak{A} \models \varphi$ using the following algorithm:
\begin{enumerate}
\item Construct a join tree $T_\varphi$ for $\varphi$.
\item Populate each node $R_i$ of $T_\varphi$ with $R_i(\mathfrak{A})$.
\item Traverse $T_\varphi$ in a bottom-up fashion:
\begin{enumerate}
\item Let $R_i$ denote the currently visited node.
\item For each child $R_j$ of $R_i$, update $R_i(\mathfrak{A})$ to be $R_i(\mathfrak{A}) \ltimes R_j(\mathfrak{A})$.
\end{enumerate}
\item Then, $\mathfrak{A} \models \varphi$ if and only if the root contains at least one tuple.
\end{enumerate}
\end{definition}

Observing~\cref{algo:yann}, it is clear that model checking for Boolean acyclic conjunctive queries can be done in polynomial time.
In fact, model checking can be done in $\bigO(|\varphi| \cdot |\mathfrak{A}|)$, where $|\varphi|$ is the size of the query, and $|\mathfrak{A}|$ is the size of the structure (see Theorem 6.25 in~\cite{libkin2004elements}).
For non-Boolean queries, we use a slight variation of Yannakakis' algorithm, where we do a bottom-up traversal (as shown in~\cref{algo:yann}) first, and then a top-down traversal where each node is semi-joined with its parent.
Then, we can join the resulting relations to get the final output in polynomial time with respect to the size of $\varphi$, the size of $\mathfrak{A}$, and the size of $\varphi(\mathfrak{A})$.
For more details, see Yannakakis~\cite{YannakakisAlgorithm}, or see Chapter 6, Section 4 of Abiteboul, Hull and Vianu~\cite{abiteboul1995foundations}.

\paragraph{Unions of Conjunctive Queries.}\index{UCQ@$\ucq$}
We can extend $\cq$s with finite disjunction which gives us the class of so-called \emph{unions of conjunctive queries} (or simply $\ucq$s).
More formally, if $\varphi_1(\vec x), \varphi_2(\vec x), \dots, \varphi_m(\vec x)$ are all $\cq$s over the same signature $\signature$, then~$\bigvee_{i=1}^m \varphi_i(\vec x)$ is a $\ucq$.

With regards to expressive power, $\ucq$s are equivalent to \emph{existential positive first-order logic} (for example, see Theorem 28.3 of~\cite{ABLMP21}). 
That is, the fragment of $\fo$ that consists of atoms, conjunction, disjunction, and existential quantifiers.

\section{Words and Languages}\label{sec:wordsAndLangs}
Let $A$ be an alphabet.
A word over $A$ is an element $w \in A^*$, where $A^*$ is the set of all words over $A$.
We use $|w|$ to denote the length of some word $w \in A^*$, and the word of length zero (the \emph{empty word}) is denoted $\emptyword$.
The number of occurrences of some $\mathtt{a} \in A$ within $w$ is $|w|_\mathtt{a}$. 
We write $u \cdot v$ or just $uv$ for the concatenation of words $u,v \in A^*$. 
If we have words $w_1,w_2, \dots, w_n \in A^*$, then we use $\Pi_{i=1}^n w_i$ as shorthand for $w_1 \cdot w_2 \cdots w_n$.
If $u = p \cdot v \cdot s$ for $p,s \in A^*$, then $v$ is a \emph{factor} of $u$, denoted $v \sqsubseteq u$. 
If $u \neq v$ also holds, then $v \sqsubset u$. 
Let $\Sigma$ \index{$\Sigma$ (terminal alphabet)} be an alphabet of \emph{terminal symbols} and let $\Xi$ \index{$\Xi$ (variable alphabet)} be a countably infinite alphabet of \emph{variables}. 
We assume that $\Sigma \cap \Xi = \emptyset$ and $|\Sigma| \geq 2$.
A \emph{language} $L \subseteq \Sigma^*$ is a set of words.\index{language}
We call a set of languages that share some property a \emph{class} of languages.
Next, we consider so-called \emph{language generators}.

\paragraph{Patterns.} A \emph{pattern} \index{patterns} is a word $\alpha \in (\Sigma \union \Xi)^*$. 
We call a pattern $\alpha$ \emph{terminal free} if $\alpha \in \Xi^*$.
A \emph{pattern substitution}\index{patterns!$\subs$ (substitution)} (or substitution) is a partial morphism $\subs \colon (\Sigma \union \Xi)^* \rightarrow \Sigma^*$ such that $\subs(\mathtt{a}) = \mathtt{a}$ for all $\mathtt{a} \in \Sigma$.
Let $\var(\alpha)$ be the set of variables that appear in $\alpha$. 
If $\subs$ is being applied to $\alpha$, we always assume that the domain of $\subs$ is a superset of $\var(\alpha)$.
The language $\alpha$ generates is defined as:
\[ \lang(\alpha) \df \{ \subs(\alpha) \mid \subs \text{ is a pattern substitution} \}. \]
The language $\lang(\alpha)$ is often known as the \emph{E-pattern language} in literature. For example, see~\cite{freydenberger2010bad}.
This is in comparison to the \emph{NE-pattern language}, where $\subs(x) \in \Sigma^+$ must hold for all variables in the domain of $\subs$.
\index{patterns!$\lang(\alpha)$}

\begin{example}
Consider the pattern $\alpha \df \mathtt{ab} x \mathtt{ba} x  y x$ and the pattern substitution $\subs \colon (\Sigma \union \Xi)^* \rightarrow \Sigma^*$, where $\subs(x) = \mathtt{aa}$ and $\subs(y) = \emptyword$, then 
\[  \subs(\alpha) = \underbrace{\mathtt{ab}}_{\subs(\mathtt{ab})} \cdot \underbrace{\mathtt{aa}}_{\subs(x)} \cdot \underbrace{\mathtt{ba}}_{\subs(\mathtt{ba})} \cdot \underbrace{\mathtt{aa}}_{\subs(x)} \cdot \underbrace{\emptyword}_{\subs(y)} \cdot \underbrace{\mathtt{aa}}_{\subs(x)}. \]
It therefore follows that $\mathtt{abaabaaaaa} \in \lang(\alpha)$.
\end{example}

\index{patterns!$\pat$}
Let $\pat \df \{ \alpha \mid \alpha \in (\Sigma \union \Xi)^* \}$ be the set of all patterns and let $\lang(\pat)$ be the class of languages definable by a pattern. 
Note that we always assume some fixed $\Sigma$ before discussing patterns, and the languages they generate.

A decision problem we often reference throughout this thesis is the \emph{membership problem for pattern languages}.
This problem takes a pattern $\alpha \in (\Sigma \union \Xi)^*$ and $w \in \Sigma^*$ as input, and decides whether $w \in \lang(\alpha)$.
It is shown (for the erasing pattern languages) by Albert and Wegner~\cite{albert1981languages} that this problem  is $\np$-complete.

Many aspects of patterns have been considered such as the membership problem~\cite{schmid2012membership}, learning~\cite{lange1991polynomial}, and extensions such as so-called \emph{relational patterns}~\cite{holte2022distinguishing}. 
Patterns (along with the next language generator we consider) are used as an essential tool throughout this thesis.

\paragraph{Regular Languages.}\index{regular languages}\index{regular expressions}
We define \emph{regular expressions} recursively:
\[ \gamma \df \mathtt{a} \mid \emptyset \mid \emptyword \mid (\gamma \lor \gamma) \mid (\gamma \cdot \gamma) \mid (\gamma)^* \]
for all $\mathtt{a} \in \Sigma$. 
When the meaning is clear, we may add or omit parentheses.

The language a regular expression generates is also defined recursively.
For any regular expression $\gamma$, let $\lang(\gamma)$ be the language $\gamma$ generates.
\begin{itemize}
\item $\lang(\mathtt{a}) \df \{\mathtt{a}\}$ for every $\mathtt{a} \in \Sigma$,
\item $\lang(\emptyset) \df \emptyset$,
\item $\lang(\emptyword) \df \{\emptyword\}$,
\item If $\gamma$ and $\gamma'$ are regular expressions, then:
\begin{itemize}
\item $\lang((\gamma)^*) \df \lang(\gamma)^*$,
\item $\lang(\gamma \lor \gamma') \df \lang(\gamma) \union \lang(\gamma')$, and
\item $\lang(\gamma \cdot \gamma') \df \lang(\gamma) \cdot \lang(\gamma')$.
\end{itemize}
\end{itemize}

Here are some useful shorthands:
We use $\Sigma$ in regular expressions as a shorthand for $(\mathtt{a} \lor \mathtt{b} \lor \cdots \lor \mathtt{z})$, where $\Sigma \df \{ \mathtt{a}, \mathtt{b}, \cdots, \mathtt{z} \}$.
A subset $S \subset \Sigma$ can be used in a regular expression to mean $(\mathtt{s}_1 \lor \cdots \lor \mathtt{s}_k)$ where $S = \{ \mathtt{s}_1, \dots, \mathtt{s}_k\}$. 
We also allow $S^+$ for any $S \subseteq \Sigma$ to mean $S \cdot S^*$.

\index{regular expressions!REG@$\reg$}
Let $\reg$ be the set of all regular expressions, and let $\lang(\reg)$ be the class of languages generated by a regular expression (probably known as the \emph{regular languages}).
There are alternative language generators for $\lang(\reg)$, such as \emph{nondeterministic finite automata} and \emph{deterministic finite automata}.
As the notation for finite automata is more ``standardised'' than the notation for regular expressions, we refer to~\cite{hopcroft2001introduction} for details on finite automata.

We say that two classes of languages $\lang(C_1)$ and $\lang(C_2)$ are \emph{incomparable}, denoted $\lang(C_1) \mathrel{\#} \lang(C_2)$, if there exists some $L \in \lang(C_1) \setminus \lang(C_2)$ and there exists some language $L' \in \lang(C_2) \setminus \lang(C_1)$. 

Since $\{ww \mid w \in \Sigma^*\} \in \lang(\pat) \setminus \lang(\reg)$ and $\emptyset \in \lang(\reg) \setminus \lang(\pat)$, we know that $\lang(\pat) \mathrel{\#} \lang(\reg)$ for any $\Sigma$ where $|\Sigma| \geq 2$.

\section{Document Spanners}\index{document spanners}\label{pre:spa}
In this section, we introduce document spanners and their representations. We begin with \emph{primitive spanners} (\cref{sec:spanner-rep}) and then combine these with \emph{spanner algebras} (\cref{sec:spannerAlgebra}).
The definitions given in this section are based on the definitions of~\cite{fag:spa}. 
However, for semantics, we use so-called \emph{ref-words} which were introduced by Schmid~\cite{schmid2016characterising} in a different context.

\subsubsection{Primitive Spanner Representations}\label{sec:spanner-rep}
Let $w := \mathtt{a}_1 \cdot \mathtt{a}_2 \cdots \mathtt{a}_n$ be a word, where $n \geq 0$ and $\mathtt{a}_1,\dots, \mathtt{a}_n \in \Sigma$. 
A \emph{span} of $w$ is an interval $\spn{i,j}$ \index{$\spn{i,j}$ (span)} with $1 \leq i \leq j \leq n+1$, that defines the factor $w_{\spn{i,j}} \df \mathtt{a}_i \cdot \mathtt{a}_{i+1} \cdots \mathtt{a}_{j-1}$.

\begin{example}
Consider the word $w\df \mathtt{banana}$.	
As $|w|=6$, the spans of $w$ are the $\spn{i,j}$ with $1\leq i \leq j\leq 7$. 
For example, we have $w_{\spn{1,2}}=\mathtt{b}$ and $w_{\spn{2,4}}=w_{\spn{4,6}}=\mathtt{an}$. 
Although $\spn{2,4}$ and $\spn{4,6}$ both describe the same factor $\mathtt{an}$, the two occurrences are at different locations (and, thus, at different spans). 
Analogously, we have $w_{\spn{1,1}}=w_{\spn{2,2}}=\cdots=w_{\spn{7,7}}=\emptyword$, but $\spn{i,i}\neq \spn{i',i'}$ for all distinct $1\leq i,i'\leq 7$.
\end{example}

Let  $V \subseteq \Xi$ and  $w \in \Sigma^*$. 
A $(V,w)$\emph{-tuple} \index{$(V,w)$-tuple} is a function $\mu$ that maps each $x\in V$ to a span $\mu(x)$ of $w$. 
A set of $(V,w)$-tuples is called a  $(V,w)$\emph{-relation}. 
A \emph{spanner} P is a function that maps every $w \in \Sigma^*$ to a $(V,w)$-relation~$P(w)$. 
We write $\SVars{P}$ to denote the set of variables $V$ of a spanner~$P$. 
Two spanners $P_1$ and $P_2$ are \emph{equivalent} if $\SVars{P_1} = \SVars{P_2}$ and  $P_1(w) = P_2(w)$ holds for all $w \in \Sigma^*$.

\index{input word}
In the usual applications of spanners, the word $w$ is some type of text, which we call the \emph{input word}. 
Hence, we can view a spanner $P$ as mapping an input word $w$ to a $(V,w)$-relation $P(w)$, which can be understood as a table of spans of~$w$. 

To define spanners, we use two types of \emph{primitive spanner representations}: the so-called \emph{regex formulas} and \emph{variable-set automata}. 
Both extend classical mechanisms for regular languages (regular expressions and NFAs) with variables.  

\begin{definition}\label{defn:regex}\index{regex formula}
The syntax of regex formulas is defined recursively as:
\[ \gamma \df \emptyset \mid \emptyword \mid \mathtt{a} \mid (\gamma \lor \gamma) \mid (\gamma\cdot \gamma) \mid (\gamma)^* \mid x \{ \gamma\}, \]
where $\mathtt{a} \in \Sigma$ and $x \in \Xi$. We use $\gamma^+$ to denote $\gamma \cdot \gamma^*$, and $\Sigma$ to denote $\bigvee_{\mathtt{a} \in \Sigma}$. 
\end{definition}

Like~\cite{fre:splog}, we define the  semantics of regex formulas using two step-semantics with  \emph{ref-words}. \index{ref-word} 
A ref-word is a word over the extended alphabet $(\Sigma \cup \Gamma)$ where $\Gamma \df \{ \openvar{x}, \closevar{x} \mid x \in \Xi \}$. 
The symbols $\openvar{x}$ and $\closevar{x}$ represent the beginning and end of the span for the variable~$x$. 
The first step in the definition of semantics is treating each regex formula $\gamma$ as generators of languages of ref-words $\rlang(\gamma)\subseteq (\Sigma \cup \Gamma)^*$, which  is defined by the following:
\begin{itemize}
\item $\rlang(\emptyset) \df \emptyset$, 
\item $\rlang(\mathtt{a}) \df \{ \mathtt{a} \}$ where $\mathtt{a} \in \Sigma \cup \{ \emptyword \}$, 
\item $\rlang(\gamma_1 \lor \gamma_2) \df \rlang(\gamma_1) \cup \rlang(\gamma_2)$, 
\item $\rlang(\gamma_1 \cdot \gamma_2) \df \rlang(\gamma_1) \cdot \rlang(\gamma_2)$, 
\item $\rlang(\gamma^*) \df \rlang(\gamma)^*$, and 
\item $\rlang(x\{ \gamma \}) \df \openvar{x} \rlang(\gamma) \closevar{x}$.
\end{itemize}

Let $\SVars{\gamma}$ be the set of all $x \in \Xi$ such that $x\{ \}$ occurs somewhere in $\gamma$. 
A ref-word $r \in \rlang(\gamma)$ is \emph{valid} if for all $x \in \SVars{\gamma}$, we have that $|r|_{\openvar{x}} = 1$.
We denote the set of valid ref-words in $\rlang(\gamma)$ as $\validr{\gamma}$ and say that 
a regex formula is \emph{functional} if $\rlang(\gamma) = \validr{\gamma}$. 
We write $\rgx$\index{RGX@$\rgx$} for the set of all functional regex formulas. 
By definition,  for  every $\gamma\in\rgx$, every $r \in \validr{\gamma}$, and  every $x \in \SVars{\gamma}$, there is  a unique factorization $r = r_1 \openvar{x} r_2 \closevar{x} r_3$. 

This allows us to define the second step of the semantics, which turns such a  factorization for some variable $x$ into a span $\mu(x)$. 
To this end, we define a morphism $\clr \colon (\Sigma \cup \Gamma)^* \rightarrow \Sigma^*$ by $\clr(\mathtt{a}) \df \mathtt{a} $ for $\mathtt{a} \in \Sigma$ and $\clr(g) = \emptyword$ for all $g \in \Gamma$. 
For a  factorization $r = r_1 \openvar{x} r_2 \closevar{x} r_3$, $\clr(r_1)$ is the factor of $w$ that appears before $\mu(x)$ and $\clr(r_2)$ is the factor~$w_{\mu(x)}$.

We use this for the definition of the semantics as follows: 
For $\gamma \in \rgx$ and $w \in \Sigma^*$, let $V \df \SVars{\gamma}$ and let $\validr{\gamma, w} \df \{ r \in \validr{\gamma} \mid \clr(r)=w \}$.

Every $r \in \validr{\gamma,w}$ defines a $(V,w)$-tuple $\mu^r$ in the following way:
For every variable $x \in \SVars{\gamma}$, we use  the unique factorization $r = r_1 \openvar{x} r_2 \closevar{x} r_3$ to define $\mu^r(x) \df \spn{|\clr(r_1)|+1, |\clr(r_1 r_2)|+1	}$. 
The spanner $\spanner{\gamma}$ is then defined by $\spanner{\gamma}(w) \df \{ \mu^r \mid r \in \validr{\gamma,w} \} $ for all $w\in\Sigma^*$.

\paragraph{Variable-set automata.}
Variable-set automata (or \emph{vset-automata} \index{vset-automata} for short) are NFAs that may use variable operations $\openvar{x}$ and $\closevar{x}$ as  transitions. 
More formally, let  $V \subset \Xi$ be a finite set of variables. 
A vset-automaton over $\Sigma$ with variables $V$ is a tuple $A = (Q, q_0, q_f , \delta)$, where $Q$ is the set of states, $q_0 \in Q$ is the initial state, $q_f \in Q$ is the accepting state, and $\delta \colon Q \times (\Sigma \cup \{ \emptyword \} \cup \Gamma_V) \rightarrow \powerset{Q}$ is the transition function with $\Gamma_V \df \{ \openvar{x}, \closevar{x} \mid x \in V \}$. 
 
To define the semantics, we use a two-step approach that is analogous to the one for regex formulas. 
Firstly, we treat $A$ as an NFA that defines the ref-language $\rlang(A)\df\{ r \in (\Sigma \cup \Gamma_V)^* \mid q_f \in \deltah(q_0, r) \}$, where 
$\deltah \colon Q \times (\Sigma \cup \Gamma_V) \rightarrow \powerset{Q}$
 is defined such that for all $p,q \in Q$ and $r \in (\Sigma \cup \Gamma_V)^*$, we have that $q \in \deltah(p,r)$ if and only if there exists a path in $A$ from $p$ to $q$ with the label $r$. 

Secondly, let $\SVars{A}$ be the set of $x \in V$ such that $\openvar{x}$ or $\closevar{x}$ appears in $A$. 
A ref-word $r \in \rlang(A)$ is \emph{valid} if for every $x \in \SVars{A}$, $|r|_{\openvar{x}} = |r|_{\closevar{x}}=1$, and $\openvar{x}$ always occurs to the left of $\closevar{x}$. 
Then $\validr{A}$, $\validr{A,w}$ and $\spanner{A}$ are defined analogously to regex formulas. 
We denote the set of all vset-automata using $\VAset$. 
As for regex formulas, a vset-automaton $A\in\VAset$ is called \emph{functional} if $\rlang(A)=\validr{A}$.

\begin{example}\label{ex:regexAutomaton}
Let 
$\gamma\df\Sigma^*\cdot x\{(\mathtt{wine})\lor(\mathtt{cake})\}\cdot\Sigma^*$ be a functional regex formula. We also define the functional vset-automaton~$A$ given in~\cref{fig:vset-automata}. For all $w\in\Sigma^*$, we have that $\spanner{\gamma}(w)=\spanner{A}(w)$ contains exactly those $(\{x\},w)$-tuples $\mu$ that have $w_{\mu(x)}=\mathtt{wine}$ or $w_{\mu(x)}=\mathtt{cake}$.

\begin{figure}
\begin{center}
	\begin{tikzpicture}[on grid, node distance =12mm,every loop/.style={shorten >=0pt}, every state/.style={inner sep=0pt,minimum size=5mm}]
	\node[state,initial text=,initial by arrow] (q0) {};
	\node[state, right= of q0] (q1) {};
	\node[state, above right=of q1] (b1) {};
	\node[state, right=of b1] (b2) {};
	\node[state, right=of b2] (b3) {};

	\node[state, below right=of q1] (c1) {};
		\node[state, right=of c1] (c2) {};
		\node[state, right=of c2] (c3) {};
	\node[state, below right=of b3] (q3) {};
	\node[state,accepting,right=of q3] (q4) {};	\path[->]
	(q0) edge [loop above] node {$\Sigma$} (q0)
	(q0) edge node[above] {$\openvar{x}$} (q1)
	(q1) edge[pos=0.2] node[above] {$\mathtt{w}$} (b1)
	(b1) edge[] node[above] {$\mathtt{i}$} (b2)
	(b2) edge[] node[above] {$\mathtt{n}$} (b3)
	(b3) edge[] node[above] {$\mathtt{e}$} (q3)
	(q1) edge[pos=0.2] node[below] {$\mathtt{c}$} (c1)
(c1) edge[] node[below] {$\mathtt{a}$} (c2)
(c2) edge[] node[below] {$\mathtt{k}$} (c3)
(c3) edge[] node[below] {$\mathtt{e}$} (q3)
	(q3) edge node[above] {$\closevar{x}$} (q4)
(q4) edge [loop above] node {$\Sigma$} (q4)
	;
	\end{tikzpicture}
\end{center} 
\caption{A vset-automaton used to extract those $(\{x\},w)$-tuples $\mu$ such that $w_{\mu(x)}=\mathtt{wine}$ or $w_{\mu(x)}=\mathtt{cake}$.} \label{fig:vset-automata}
\end{figure}
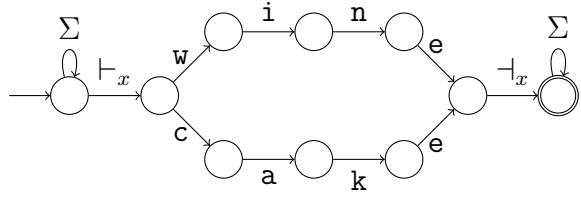
\end{example}

\subsubsection{Spanner Algebra}\label{sec:spannerAlgebra}\index{spanner algebra}
We now define an algebra in order to construct more complex spanners.

\begin{definition}
Two spanners $P_1$ and $P_2$ are \emph{compatible} if $\SVars{P_1}=\SVars{P_2}$.
We define  the following algebraic operators for all spanners $P, P_1, P_2$:
\begin{itemize}
\item If $P_1$ and $P_2$ are compatible, then:
\begin{itemize}
\item their \emph{union} is defined as $(P_1 \cup P_2)(w) \df P_1(w) \cup P_2(w)$ and 
\item their \emph{difference} is defined as $(P_1 \diff P_2)(w) \df P_1(w) \diff P_2(w)$.
\end{itemize}	
	\item The \emph{projection} $\quotproj_Y P$ for $Y \subseteq \SVars{P}$ is defined by $\quotproj_Y P(w) \df P|_Y(w)$, where $P|_Y(w)$ is the restriction of all $\mu \in P(w)$ to~$Y$.
	\item The \emph{natural join} $P_1 \join P_2$ is obtained by defining each $(P_1 \join P_2)(w)$ as the set of all  $(V_1 \cup V_2,w)$-tuples $\mu$ for which there exists $\mu_1 \in P_1(w)$ and $\mu_2 \in P_2(w)$ with $\mu|_{V_1}(w) = \mu_1(w)$ and $\mu|_{V_2}(w) = \mu_2(w)$, where $V_i \df \SVars{P_i}$ for $i \in \{ 1,2 \}$.   
	\item For a  $k$-ary relation $R \subseteq (\Sigma^*)^k$ and variables $x_1, \dots, x_k \in \SVars{P}$, we define the \emph{selection} $\select^R_{x_1 \dots x_k}P$ by
	\[\select^R_{x_1 \dots x_k} P(w)\df \{\mu \in P(w) \mid (w_{\mu(x_1)} , \dots , w_{\mu(x_k)}) \in R\}\] for $w\in\Sigma^*$.
\end{itemize}
Let  $\SVars{P_1 \cup P_2} \df \SVars{P_1\diff P_2}\df\SVars{P_1}=\SVars{P_2}$,   $\SVars{\quotproj_Y P} \df Y$,  $\SVars{P_1 \join P_2} \df \SVars{P_1} \cup \SVars{P_2}$, and $\SVars{\select^R_{x_1 \dots x_k} P} \df \SVars{P}$.
\end{definition}

Note that the relations $R$ used in the selection are usually infinite and they are never considered part of the input.
Also, while we define selection for arbitrary relations $R \subseteq (\Sigma^*)^k$, we usually only consider \emph{equality selection}. 
That is, the selection $\select^=_{x,y}$ for the equality relation $\{ (w_1,w_2) \in (\Sigma^*)^2 \mid w_1 = w_2 \}$.
\begin{example}
Recall $\gamma_1\df \Sigma^*\cdot x\{(\mathtt{wine})\lor(\mathtt{cake})\}\cdot\Sigma^*$ from~\cref{ex:regexAutomaton}.
Now, let $\gamma_2\df \Sigma^* \cdot x\{\Sigma^*\} \cdot \Sigma^* \cdot y\{\Sigma^*\} \cdot \Sigma^*$. 

We combine the two regex formulas into a core spanner $P \df \pi_{x}\select^=_{x,y} (\gamma_1\join\gamma_2)$. Then $\spanner{P}(w)$ contains all $(\{x\},w)$-tuples $\mu$ such that $w_{\mu(x)}$ is an occurrence of $\mathtt{wine}$ or $\mathtt{cake}$ in $w$ that is followed by another occurrence of the same word.
\end{example}

Let $\mathsf{O}$ be a spanner algebra and let $\mathsf{S}$ be a class of primitive spanner representations, then we use $\mathsf{S}^\mathsf{O}$ to denote the set of all spanner representations that can be constructed by repeated combinations of the symbols for the operators from $\mathsf{O}$ with the spanner representation from~$\mathsf{S}$. We denote the closure of $\spanner{\mathsf{S}}$ under the spanner operators $\mathsf{O}$ as $\spanner{\mathsf{S}^\mathsf{O}}$. 

We define \emph{regular spanners} $\spanner{\RGXreg}$, \emph{core spanners} $\spanner{\RGXcore}$ 
and \emph{generalized core spanners}  $\spanner{\RGXcored}$, where 
$\spanreg\df\{\quotproj,\cup,\join\}$,  $\core \df \{ \quotproj, \select^=, \cup, \join\}$, and $\cored \df \core \union \{\setminus\}$. 
As shown in~\cite{fag:spa}, we have\index{regular spanners}\index{core spanners}\index{generalized core spanners}
\[ \mathclap{\underbrace{\spanner{\RGXreg}=\spanner{\VAsetreg} =\spanner{\VAset}}_{\text{regular spanners}}  \subset  \underbrace{\spanner{\RGXcore}=\spanner{\VAsetcore}}_{\text{core spanners}} \subset \underbrace{\spanner{\RGXcored}=\spanner{\VAsetcored}}_{\text{generalized core spanners}}.} \]
In other words, there is a proper hierarchy of regular, core, and generalized core spanners; and for each of the classes, we can choose regex formulas or vset-automata as primitive spanner representations. As shown in~\cite{fre:splog}, functional vset-automata have the same expressive power as vset-automata in general -- however, the size difference can be exponential.

For any class of spanners $\mathsf{S}$, we use $\lang(\mathsf{S})$ to denote the class of languages expressible in $\mathsf{S}$.
A language $L \subseteq \Sigma^*$ is expressible in $\mathsf{S}$, if there is a Boolean query $\query \in \mathsf{S}$, such that $w \in L$ if and only if $\fun{\query}(w) \neq \emptyset$.

\section{The Theory of Concatenation and FC}
Freydenberger and Peterfreund~\cite{frey2019finite} introduced $\fc$ as a first-order logic that is based on word equations. 
We will now give the definitions of this logic, and provide some insights from~\cite{frey2019finite} that shows the connection between fragments of $\fcreg$ ($\fc$ with regular constraints) and classes of document spanners.

A \emph{word equation}\index{word equation} is a pair $\eta \df (\alpha_L, \alpha_R)$ where $\alpha_L, \alpha_R \in (\Sigma \union \Xi)^*$ are patterns known as the \emph{left} and \emph{right} side respectively. 
We usually write such $\eta$ as $(\alpha_L \logeq \alpha_R)$. 
The length of a word equation, denoted $|(\alpha_L \logeq \alpha_R)|$, is $|\alpha_L| + |\alpha_R|$. 
We recall the definition of a pattern substitution.
A pattern substitution is a partial morphism $\subs \colon (\Sigma \cup \Xi)^* \rightarrow \Sigma^*$ such that $\subs(\mathtt{a}) = \mathtt{a}$ always holds for $\mathtt{a} \in \Sigma$. 
Since $\subs$ is a morphism, we have $\subs(\alpha_1 \cdot \alpha_2) = \subs(\alpha_1) \cdot \subs(\alpha_2)$ for all $\alpha_1, \alpha_2 \in (\Sigma \cup \Xi)^*$. 
We call a pattern substitution $\subs$ a \emph{solution} to a word equation $(\alpha_L \logeq \alpha_R)$ if and only if $\subs(\alpha_L) = \subs(\alpha_R)$. 
We write $\domain(\subs)$ to denote the domain of $\subs$ and we always assume that $\var(\alpha_L \cdot \alpha_R) \subseteq \domain(\subs)$ for word equation $(\alpha_L \logeq \alpha_R)$.\index{domain@$\domain(\subs)$}

The \emph{theory of concatenation} is a first-order logic over these word equations.
The atoms of this logic are word equations $(\eta_L \logeq \eta_R)$, and the connectives are conjunction ($\land)$, disjunction ($\lor$), negation ($\neg$), and existential and universal quantifiers ($\exists$ and $\forall$ respectively) over $\Xi$.
We denote the set of all formulas in this logic as $\mathsf{C}$.
We now define the semantics.
\begin{definition}\index{theory of concatenation}\index{$\models$!theory of concatenation}
For all $\varphi \in \mathsf{C}$ and for all substitutions $\subs \colon (\Sigma \union \Xi)^* \rightarrow \Sigma^*$, we define $\subs \models \varphi$ as follows:
\begin{itemize}
\item $\subs \models (\eta_L \logeq \eta_R)$ if $\subs(\eta_L) = \subs(\eta_R)$, 
\item $\subs \models \varphi_1 \land \varphi_2$ if $\subs \models \varphi_1$ and $\subs \models \varphi_2$,
\item $\subs \models \varphi_1 \lor \varphi_2$ if $\subs \models \varphi_1$ or $\subs \models \varphi_2$,
\item $\subs \models \exists x \colon \varphi$ if there exists $u \in \Sigma^*$ such that $\subs_{x \rightarrow u} \models \varphi$, where $\subs_{x \rightarrow u}(x) = u$ and $\subs_{x \rightarrow u}(y) = \subs(y)$ for all $y \in \Xi \setminus \{ x \}$,
\item $\subs \models \forall x \colon \varphi$ if for all $u \in \Sigma^*$ we have that $\subs_{x \rightarrow u} \models \varphi$, where $\subs_{x \rightarrow u}(x) = u$ and $\subs_{x \rightarrow u}(y) = \subs(y)$ for all $y \in \Xi \setminus \{ x \}$.
\end{itemize}
\end{definition}

We use $|\varphi|$ to denote the size of $\varphi \in \mathsf{C}$ represented as a word (using some reasonable encoding).
We cannot simply define the size of $\varphi$ using the number of atoms used, as word equations can have an arbitrarily large size.

If the meaning of the formula is clear, we freely add or omit parenthesis.
The ``universe'' of $\mathsf{C}$ is $\Sigma^*$, which is infinite. 
As discussed in~\cite{frey2019finite}, when it comes to information extraction, we may not wish to reason over $\Sigma^*$, rather, we can reason over some input word $w \in \Sigma^*$ (analogous to how document spanner have an input word).
This leads to the definition of $\fc$, a finite model variant of $\mathsf{C}$.
The definitions of $\fc$ provided in this thesis follow~\cite{frey2019finite} closely, which introduces this logic.
First, we consider the so-called \emph{universe variable}.
\begin{definition}\index{u@$\strucvar$}
Let $\strucvar \in \Xi$ be distinguished as the \emph{universe variable}.
\end{definition}

This universe variable is used to represent the input word, and provides a vital role in the definition of $\fc$ \index{FC@$\fc$} semantics.
\begin{definition}
Let $\fc$ be the set of $\fc$-formulas, which is defined recursively as follows: The atoms are word equations of the form $(x \logeq \alpha)$, where $x \in \Xi$ and $\alpha \in (\Sigma \union \Xi)^*$.
If $\varphi, \psi \in \fc$, then we allow conjunction ($\varphi \land \psi$), disjunction ($\varphi \lor \psi$), negation ($\neg \varphi$), and quantifiers $\exists x \colon \varphi$ and $\forall x \colon \varphi$ where $x \in \Xi \setminus \{\strucvar\}$.
\end{definition}

Thus, the syntactic definition of $\fc$ is very similar to the syntactic definition of $\mathsf{C}$. The only differences are that word equations in $\fc$ must have exactly one variable on the left-hand side, and the universe variable $\strucvar$, which cannot be bound by a quantifier.

For any $\varphi \in \fc$, we define a set of \emph{free variables} $\fvar(\varphi)$ as the set of variables that are not bound by a quantifier.\index{free@$\fvar(\varphi)$!FC@$\fc$}
We use $\var(\varphi)$ for the set of all variables that appear in $\varphi$.\index{var@$\var(\varphi)$!FC@$\fc$}
Note that we always assume that $\strucvar \notin \fvar(\varphi)$.
A substitution $\subs$ is $\strucvar$-safe if for all $x \in \domain(\subs)$, we have that $\subs(x) \sqsubseteq \subs(\strucvar)$.

\begin{definition}
For any $\varphi \in \fc$ and any pattern substitution $\subs \colon (\Sigma \union \Xi)^* \rightarrow \Sigma^*$ where $\{ \strucvar\} \union \fvar(\varphi) \subseteq \domain(\subs)$, we define $\subs \models \varphi$ as was defined for $\mathsf{C}$, with the added condition that $\subs$ must be $\strucvar$-safe.
\end{definition}

Less formally, for some $\varphi \in \fc$, we have that $\subs \models \varphi$ has the same semantics as in $\mathsf{C}$, however every variable must be mapped to some factor of $\subs(\strucvar)$.
As $\subs(\strucvar)$ has a special role as the placeholder for the so-called \emph{input word} $w \in \Sigma^*$, we write $(w, \subs) \models \varphi$ if $\subs \models \varphi$ and $\subs(\strucvar) = w$.

As a convention, we write $\varphi(\vec x)$ where $\vec x$ is a tuple containing all the free variables of $\varphi$. 
For any $w \in \Sigma^*$, the notation $\fun{\varphi}(w)$ denotes the set of all $\subs$ such that $\subs \models \varphi$ and $\subs(\strucvar) = w$.
If $\varphi$ is a \emph{Boolean query}, that is, if $\fvar(\varphi) = \emptyset$, then $\fun{\varphi}(w)$ is the empty-set (representing ``false'') if $\subs \models \varphi$ does not hold, and is the set containing the empty tuple (representing ``true'') if $\subs \models \varphi$ does hold.\index{Boolean query}
Since for a Boolean query $\subs$ needs only be defined for $\strucvar$, as all other variables are quantified, we can simply write $w \models \varphi$ instead of $(w, \subs) \models \varphi$.

\index{model checking!$\fc$}
We can now define the model checking problem for $\fc$. 
\begin{definition}[Model checking]\label{defn:modelcheck}
Given a Boolean query $\varphi \in \fc$ and a substitution $\subs$ as input, does $\subs \models \varphi$ hold?
\end{definition}
A Boolean query $\varphi \in \fc$ defines a language $\lang(\varphi) \df \{ w \mid w \models \varphi \}$.
We say a language $L \subseteq \Sigma^*$ is an $\fc$-language if there exists $\varphi \in \fc$ such that $L = \lang(\varphi)$.
The set of all $\fc$-languages is denoted as $\lang(\fc)$. \index{FC@$\fc$!L@$\lang(\fc)$}

It is not known whether there are regular languages that are not $\fc$-languages (see~\cite{frey2019finite} for more details on the inexpressibility of $\fc$).
Thus, we do not know whether ``pure'' $\fc$ directly corresponds to any of the well-studied classes of document spanners (regular/core/generalized core spanners).
To overcome this, Freydenberger and Peterfreund~\cite{frey2019finite} extends $\fc$ with so-called \emph{regular constraints}. 
These regular constraints have been previously used as an extension to~$\mathsf{C}$, see~\cite{diekert2002makanin}.

\begin{definition}
Let $(x \regconst \gamma)$ be a \emph{regular constraint}, \index{regular constraint} where $x \in \Xi$ and $\gamma \in \reg$.
$\fcreg$ \index{FC@$\fc$!$\fcreg$} extends $\fc$ such that regular constraints are atomic formulas.
We extend the semantics as follows: for a pattern substitution $\subs$  we have that $\subs \models (x \regconst \gamma)$, if $\subs(x) \in \lang(\gamma)$ and $\subs(x) \sqsubseteq \subs(\strucvar)$.
\end{definition}

We use $\fcreg$ to denote the set of $\fc$-formulas extended with regular constraints.
Let $\epfc$\index{FC@$\fc$!EPFC@$\epfc$} be the existential positive fragment of $\fc$, and let $\epfcreg$\index{FC@$\fc$!EPFCREG@$\epfcreg$} be $\epfc$ extended with regular constraints.

\paragraph{FC and Document Spanners.}
In order to connect $\fc$ and document spanners, we must overcome the fact that $\fc$ reasons over words, whereas document spanners reason over spans.
To this end, we define how a spanner can \emph{realize} an $\fcreg$-formula, and how an $\fcreg$-formula can \emph{realize} a spanner.

\begin{definition}
\label{defn:realizing}
A pattern substitution $\subs$ \emph{expresses} a $(V,w)$-tuple, $\mu$, if we have that $\domain(\subs) = \{ x^P, x^C \mid x \in V \}$, and $\subs(x^P) = w_{\spn{1,i}}$ and $\subs(x^C) = w_{\spn{i,j}}$ for the span $\mu(x) = \spn{i,j}$ for all $x \in V$. 

An $\fcreg$-formula $\varphi$ \emph{realizes} a spanner $P$ if $\fvar(\varphi) =  \{ x^P, x^C \mid x \in \SVars{P} \}$ and $\subs\models\varphi$, for all $w \in \Sigma^*$ where $\subs(\strucvar) = w$, if and only if $\subs$ expresses some $\mu \in P(w)$.
\end{definition}

Intuitively, every spanner variable $x$ is represented by two $\fc$-variables $x^P$ and $x^C$, such that in each $(V,w)$-tuple $\mu$, we have that $x^C$ contains the actual content $w_{\mu(x)}$ and $x^P$ contains the prefix of $w$ before the start of $w_{\mu(x)}$.

\begin{definition}
\label{defn:realizing2}
A spanner $P$ \emph{realizes} $\varphi \in \fcreg$ if $\SVars{P} = \fvar(\varphi)$ and for all $w \in \Sigma^*$, we have that $\mu \in P(w)$ if and only if $\subs \models \varphi$ for $\subs$ where $\subs(x) = w_{\mu(x)}$ for all $x \in \SVars{P}$.
\end{definition}

Using~\cref{defn:realizing} and~\cref{defn:realizing2}, the following holds:

\begin{theorem}[Freydenberger and Peterfreund~\cite{frey2019finite}]
\label{theorem:frey2019finite}
There are two-way polynomial time conversions between:
\begin{itemize}
\item $\epfcreg$ and $\RGXcore$, and
\item $\fcreg$ and $\RGXcored$.
\end{itemize}
\end{theorem}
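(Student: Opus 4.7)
The plan is to exhibit four effective translations in polynomial time: $\epfcreg \to \RGXcore$, $\RGXcore \to \epfcreg$, $\fcreg \to \RGXcored$, and $\RGXcored \to \fcreg$, each built inductively on the structure of the input. The translations rest on the bookkeeping scheme of \cref{defn:realizing}: each spanner variable $x$ is mirrored by two $\fc$-variables, a prefix $x^P$ and a content $x^C$, with the invariant $\strucvar \logeq x^P \cdot x^C \cdot x^S$ for a fresh suffix $x^S$.

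For $\RGXcore \to \epfcreg$, I would walk a regex formula $\gamma$ from left to right, maintaining a running factorisation of $\strucvar$ into the already matched prefix and the remainder. A literal terminal contributes a factor bound by the trivial regular constraint $\mathtt{a}$; a subexpression $x\{\gamma'\}$ starts the content $x^C$ at the current prefix, proceeds recursively, and closes by introducing the corresponding $x^P$ via the factorisation of $\strucvar$; alternation becomes disjunction, Kleene star can be expressed via a regular constraint on a single fresh variable, and concatenation becomes conjunction over aligned factorisations. For the spanner algebra on top, projection becomes existential quantification over the projected-away $x^P, x^C$, union becomes disjunction, natural join becomes conjunction (since shared variable names directly express span equality in the $\fc$-world), and an equality selection $\select^=_{x,y}$ becomes the atom $x^C \logeq y^C$. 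Each step only adds a constant number of connectives per atom, so the result is polynomial.

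For $\epfcreg \to \RGXcore$, I would first move existential quantifiers outward into projections and unions, then treat each remaining atom. A regular constraint $x \regconst \gamma$ becomes the regex formula $\Sigma^* \cdot x\{\gamma\} \cdot \Sigma^*$. A word equation $x \logeq \alpha_1 \cdots \alpha_k$, with each $\alpha_i$ a terminal or a variable, cannot be expressed directly because core spanner selection compares only two spans. I would therefore introduce a fresh variable $z$ bound by a regex formula whose body concatenates independent bindings for each $\alpha_i$, join this with the regex formulas already extracting the variables on the right-hand side under separate copy-names, add equality selections forcing each copy to equal the canonical $y^C$, and finally add $\select^=_{x,z}$ (together with $\select^=$ on prefixes to enforce the $\strucvar$-safe factorisation). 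Conjunction becomes join, disjunction becomes union, and existential quantification becomes projection; introducing $\bigO(k)$ auxiliary variables per atom keeps the blowup polynomial.

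The main obstacle will be this last construction: reconciling the asymmetric atoms of $\fc$ (a single variable equated with an arbitrary concatenation) with the strictly binary equality selection of core spanners, while also preserving the $\strucvar$-safe interpretation so that every introduced content variable refers to a genuine factor of $\strucvar$ rather than an unconstrained word. Once the existential positive case is in place, the extensions to $\fcreg \leftrightarrow \RGXcored$ are routine: negation $\neg \varphi$ translates to the set difference of the universal tuple spanner and the spanner realising $\varphi$, and conversely spanner difference $P_1 \setminus P_2$ translates to $\varphi_1 \wedge \neg \varphi_2$ on matched variable names, with the remaining Boolean connectives carried through by the existing inductive cases.
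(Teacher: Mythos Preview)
This theorem is not proved in the paper at all: it is quoted as a result of Freydenberger and Peterfreund~\cite{frey2019finite} and used as a black box. There is therefore no ``paper's own proof'' to compare your proposal against.

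That said, your sketch is broadly along the right lines and is consistent with the constructions the thesis \emph{does} carry out for the conjunctive-query fragments in \cref{Prop:RGXtoPatCQ} and \cref{Prop:PatCQtoRGX}, which the author explicitly says ``follow closely to proofs from~\cite{frey2019finite, fre:doc, fre:splog}''. In those proofs the $\rgx\to\fc$ direction is handled by a parse tree of the regex formula (associating a fresh variable with every node and an atom with every concatenation/binding/regular-expression leaf, then recovering prefixes by a separate pass), and the $\fc\to\rgx$ direction first puts the formula into a normal form with $\strucvar$ on the left of every word equation, then replaces each variable occurrence in $\alpha$ by a fresh $\bind{x_i}{\Sigma^*}$ and adds equality selections to identify repeated variables. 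Your proposal is an informal version of the same idea.

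One point where your sketch is a little loose: when you say ``Kleene star can be expressed via a regular constraint on a single fresh variable'', this relies on the starred subexpression containing no variable bindings, which is guaranteed by functionality of the regex formula but should be stated. Similarly, your treatment of the word-equation-to-spanner direction glosses over the normal-form step (getting $\strucvar$ alone on the left and out of the right-hand side) that makes the translation uniform; the thesis handles this explicitly in \cref{lemma:StrucNormalForm}. These are routine, but they are where the polynomial bound actually needs to be checked.
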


Therefore, \cref{theorem:frey2019finite} allows us to use $\epfcreg$-formulas as a way to express core spanners, and $\fcreg$ as a way to express generalized core spanners. 
The difference between reasoning over words and reasoning over spans brings up an interesting point, spanners can represent factors as spans, however each span is a particular interval of the input text.
Whereas, $\fc$ can simulate spans using the prefix and content variables for every span variable, and furthermore can reason about factors without referring to a specific position in the text.

\paragraph{A Note on SpLog.}\index{Splog@$\splog$}
Prior to the introduction of $\fc$, Freydenberger~\cite{fre:splog} defined a syntactic restriction on $\epcreg$ (the existential theory of concatenation with regular constraints) called \splog.
We briefly discuss this logic, as we draw upon results from~\cite{fre:splog} in the coming chapters of this thesis.

\begin{definition}
Let $\mv \in \Xi$ be the \emph{main variable}.
Let $\splog$ be the set of all \splog-formulas.
Then, $\varphi \in \splog$ if $\varphi$ can be obtained from the following recursive rules:
\begin{itemize}
\item $\varphi \df (\mv \logeq \alpha)$, where $\alpha \in (\Sigma \union (\Xi \setminus \{ \mv \}))^*$,
\item $\varphi \df (\varphi_1 \land \varphi_2)$, where $\varphi_1, \varphi_2 \in \splog$,
\item $\varphi \df (\varphi_1 \lor \varphi_2)$, where $\varphi_1, \varphi_2 \in \splog$ and $\fvar(\varphi_1) = \fvar(\varphi_2)$,
\item $\varphi \df \exists x \colon \psi$, where $\psi \in \splog$ and $x \in \fvar(\varphi) \setminus \{ \mv \}$,
\item $\varphi \df \psi \land (x \regconst \gamma)$, where $\psi \in \splog$ and $x \in \fvar(\psi)$.
\end{itemize}
\end{definition}

The semantics of a \splog-formula follow from the semantics of $\epcreg$.
However, due to the fact that we have $\mv$ on the left-hand side of each word equation, for some substitution $\subs$, all variables must be substituted with a factor of $\subs(\mv)$. 
Thus, through the syntactic definition of any $\varphi \in \splog$, we have that $\subs(x) \sqsubseteq \subs(\mv)$ for all $x \in \var(\varphi)$.

However, because the finite universe is ensured through syntax, this requires some restrictions; such as regular constraints and disjunction being ``guarded''.
We also need restrictions when we extend \splog with negation.
\begin{definition}\index{Splog@$\splog$!$\splog^\neg$}
Let $\splog^\neg$ denote the set of $\splog^\neg$-formulas. Where $\splog^\neg$-formulas are defined as \splog-formulas extended with the following recursive rule:
If $\varphi_1, \varphi_2 \in \splog^\neg$ and $\fvar(\varphi_1) \subseteq \fvar(\varphi_2)$, then $(\neg \varphi_1 \land \varphi_2) \in \splog^\neg$.
\end{definition}

Due to these restrictions along with the fact that they ensure a finite universe through syntax, Freydenberger and Peterfreund~\cite{frey2019finite} said that $\splog$ and $\splog^\neg$ are ``\emph{...more  cumbersome than $\fc$ and do not generalize as nicely}''.

\splog was introduced as a logic for core spanners (and is short for spanner logic), and therefore it is not too surprising that extending \splog with negation results in the generalized core spanners.
To show this connection, we extend~\cref{defn:realizing} and~\cref{defn:realizing2} to \splog and $\splog^\neg$.  
We can then observe the following:
\begin{theorem}[Freydenberger~\cite{fre:splog}]
\label{theorem:Splog}
There are two-way polynomial time conversions between:
\begin{itemize}
\item $\splog$ and $\RGXcore$, and
\item $\splog^\neg$ and $\RGXcored$.
\end{itemize}
\end{theorem}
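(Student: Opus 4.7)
The plan is to prove both equivalences by structural induction, using the realization relations (extended from Definitions~\ref{defn:realizing} and~\ref{defn:realizing2} as indicated in the excerpt) as the bridge between span-based reasoning in spanners and word-based reasoning in \splog. For each spanner variable $x$, a span $\mu(x) = \spn{i,j}$ is encoded by two \splog-variables $x^P$ (the prefix $w_{\spn{1,i}}$) and $x^C$ (the content $w_{\spn{i,j}}$); conversely, every $\mv$-safe substitution gives positions, since each $\subs(x) \sqsubseteq \subs(\mv)$. All constructions must be polynomial in the source representation.

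For the direction $\splog \to \RGXcore$ I would induct on the structure of the \splog-formula. An atomic word equation $\mv \logeq \alpha$, where $\alpha = \beta_0 x_1 \beta_1 \cdots x_k \beta_k$ with $\beta_i \in \Sigma^*$, becomes a regex formula that scans $\mv$ as $\beta_0 \cdot x_1\{\Sigma^*\} \cdot \beta_1 \cdots x_k\{\Sigma^*\} \cdot \beta_k$, augmented with equality selections $\select^=_{x_i, x_j}$ for variables that appear more than once in $\alpha$. Conjunction is realized by natural join, disjunction by union, existential quantification by projection, and a regular constraint $(x \regconst \gamma)$ is realized by joining with the regex formula $\Sigma^* \cdot x\{\gamma\} \cdot \Sigma^*$ (suitably projected). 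Throughout, one maintains the invariant that the constructed core spanner realizes the subformula in the sense of~\cref{defn:realizing2}.

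For the reverse direction $\RGXcore \to \splog$ I would again induct, but now the base case is a regex formula $\gamma$. Treating $\gamma$ as functional, I would translate its capture structure into a word equation on $\mv$ with fresh auxiliary variables for each capture: recursively, a capture $x\{\gamma'\}$ contributes a word equation decomposing the relevant factor into context and $x$, together with a regular constraint on $x$ guaranteeing $x \in \lang(\gamma')$ (where the regular language is obtained by erasing capture markers from $\gamma'$). The algebraic operators translate cleanly: natural join becomes conjunction, equality selection $\select^=_{x,y}$ becomes a fresh word equation $x \logeq y$, projection becomes existential quantification, and union becomes disjunction (taking care that both disjuncts share the same free variables as required by the definition of \splog).

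The extension to $\splog^\neg \leftrightarrow \RGXcored$ is obtained by adding one case in each direction: set difference $P_1 \setminus P_2$ becomes the guarded negation $\neg \varphi_2 \land \varphi_1$, and vice versa. The guardedness condition $\fvar(\varphi_1) \subseteq \fvar(\varphi_2)$ is precisely what corresponds to the compatibility requirement $\SVars{P_1} = \SVars{P_2}$ of the difference operator. The main obstacle I expect is controlling the size of the translations: naive recursive translation of regex formulas with nested Kleene stars and captures could blow up, so care is needed in reusing auxiliary variables and in translating regular constraints back-and-forth between regex formulas and regular expressions in $\reg$; the solution is to keep the regex-formula-to-\splog direction strictly structural, introducing a constant number of new variables and atoms per syntactic node of $\gamma$, which yields a linear-size translation and hence the claimed polynomial-time conversions.
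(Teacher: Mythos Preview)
This theorem is cited from \cite{fre:splog} and is not proved in the present paper; there is no proof here to compare against. Your outline is in the spirit of the original proof in \cite{fre:splog}, and indeed the paper later carries out very similar structural translations for the closely related pair $\cpfcreg \leftrightarrow \sercq$ in Propositions~\ref{Prop:RGXtoPatCQ} and~\ref{Prop:PatCQtoRGX}.

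One concrete slip in your sketch: in the direction $\RGXcore \to \splog$, you write that equality selection $\select^=_{x,y}$ ``becomes a fresh word equation $x \logeq y$.'' That is not a well-formed \splog atom: by definition every \splog word equation has $\mv$ on the left-hand side. The fix is to encode $x^C = y^C$ via two $\mv$-equations sharing prefix and suffix variables, e.g.\ $\exists p,s\colon (\mv \logeq p\cdot x^C\cdot s) \land (\mv \logeq p\cdot y^C\cdot s)$, which forces equal length and equal position, hence equality. (In $\cpfc$ the atom $x^C \logeq y^C$ is legal, which is why the analogous step in Proposition~\ref{Prop:RGXtoPatCQ} is simpler.) With this repair your plan goes through.
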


In~\cref{chp:fccq}, we shall introduce a conjunctive query fragment of $\fc$.
The fragment we introduce is similar to the $\mathsf{PC}$ fragment of \splog introduced in~\cite{fre:splog}.

\begin{definition}\label{defn:DPC}\index{PC@$\pcsplog$}\index{DPC-normal form}
Let $\varphi \in \splog$.
We say that $\varphi \in \pcsplog$ if 
\[ \varphi = \exists \vec x \colon \bigl( \bigwedge_{i=1}^n (\mv \logeq \alpha_i) \land \bigwedge_{j=1}^m (x_j \regconst \gamma_j) \bigr). \]
A \splog formula is in DPC-normal form if it is a finite disjunction of $\pcsplog$-formulas.
\end{definition}

As the term DPC-normal form suggests, for a given $\varphi \in \splog$, we can compute an equivalent $\psi \in \splog$, where $\psi$ is in DPC-normal form (see Lemma 5.6 of~\cite{fre:splog}).
Freydenberger~\cite{fre:splog} conjectures that the blowup from a \splog-formula to a \splog-formula in DPC-normal form is exponential.

While we never directly work with \splog in this thesis, from time to time we make reference to it, and make reference to results from~\cite{fre:splog}.

\section{Computational Model and Complexity Measures}\label{compModel}
We use the \emph{random access machine} model with uniform cost measures.
That is, we assume a computational model that executes pseudo-code where basic operations (such as addition, subtraction, multiplication, divide, comparisons, variable definitions and access, etc.) each take constant time -- see Chapter 2 of Cormen, Leiseron, Rivest, and Stein~\cite{cormen2022introduction} for more details. 

The size of each machine word is logarithmic in the size of the input.
Factors of a word $w\in\Sigma^*$ are represented as spans of $w$ which allows us to check whether $u = v$ for $u,v \sqsubseteq w$ in constant time after preprocessing that takes linear time and space~\cite{gus:alg,karkkainen2006linear} (see \cref{lemma:datastructure} for more details). 

Unless stated otherwise, the complexity results stated in this thesis are in terms of \index{combined complexity}\emph{combined complexity}. 
That is, both the query and the ``data'' are considered part of the input.
When considering relational databases, the \emph{data} refers to the input database. For information extraction, by data we mean the input word.
\emph{Data complexity}\index{data complexity} refers to when the query is fixed, and the data is the input.
Analogously, \emph{query complexity}\index{query complexity} refers to when the data is fixed, and the query is the input.
 
When considering the enumeration of results for a query executed on a word, we say that we can enumerate results with \emph{polynomial delay}\index{polynomial delay} if there exists an algorithm which returns the first result in polynomial time, the time between two consecutive results is polynomial, and the time between the last result and terminating is polynomial.
 
We assume the canonical definitions, and acronyms (for example, $\np$ for non-deterministic polynomial time and $\pspace$ for polynomial space) for complexity classes. See, for example, Arora and Barak~\cite{arora2009computational}.\index{NP@$\np$}\index{PSPACE@$\pspace$}

\chapter{Conjunctive Queries for FC}\label{chp:fccq}
This chapter gives results on the expressive power of, and the complexity of various decision problems for $\cq$s in the context of information extraction.
A particular focus is the introduction of a conjunctive query fragments for $\fc$ and~$\fcreg$, which we denote as $\cpfc$ and $\cpfcreg$ respectively.

First, we define conjunctive queries for core spanners, called $\ercq$s, originally introduced by Freydenberger, Kimelfeld and Peterfreund~\cite{freydenberger2018joining}.
These $\ercq$s are a projection over a sequence of equalities on a join of regex formulas.
One issue with $\ercq$s is that regex formulas allow for variables to be in disjunction subexpressions which could be seen as un-CQ-like.
Because of this, we define \emph{synchronized} $\ercq$s, or $\sercq$s, that do not allow a variable in subexpressions of the form $(\gamma_1 \lor \gamma_2)$.

Then, we define a conjunctive query fragment for $\fc$ which we call $\cpfc$.
Like with $\fc$, we can extend $\cpfc$ with regular constraints, which we denote with $\cpfcreg$.
We also extend $\cpfc$ and $\cpfcreg$ with union, giving us $\fcucq$ and $\fcregucq$ respectively.

The first results of this chapter are on the expressive power of the aforementioned models.
We show that $\cpfcreg$ has the same expressive power as $\sercq$s, and from~\cref{theorem:Splog} we can immediately determine that $\fcregucq$ has the same expressive power as core spanners (while~\cref{theorem:Splog} uses the logic \splog, this is just a normal form of $\fcregucq$).
We also consider the comparative expressive power of various fragments of $\fcregucq$ (such as pattern languages, regular languages, etc.). 
The results of the section regarding the expressive power of $\fcregucq$ and related models are summarized in~\cref{fig:hierarchy}.
The question of whether $\cpfcreg$ is strictly less expressive than $\fcregucq$ is left open.
While the author believes this to be the case, we briefly consider certain cases where $\cpfcreg$ can simulate union.

The next focus is on the complexity of various decision problems for $\cpfc$ and $\cpfcreg$ -- the results of which are summarized in~\cref{table:decResults}.
Of particular interest  are \emph{static analysis problems}, as they have strong connections to query optimization.
For example, the \emph{containment problem} for relational $\cq$s is decidable, and hence can be used for tasks like query minimization, see Chapter 6 of~\cite{abiteboul1995foundations}.

While $\np$-completeness for model checking $\cpfc$s immediately follows from the erasing pattern language membership problem~\cite{ehrenfreucht1979finding}, we show that $\np$-hardness holds even if we make rather large restrictions, such as the input word being of length one, and the query being acyclic (under the view that each word equation is an atom). 

One of the main results of this chapter is that the so-called \emph{universality problem} is undecidable for $\cpfcreg$.
This, in turn, implies that $\cpfcreg$ equivalence, and $\cpfcreg$ regularity are also undecidable.
Another main result is that $\cpfc$ regularity is neither semi-decidable nor co-semi-decidable, the proof of which can be easily adapted to show that $\cpfc$ equivalence is undecidable.
These undecidability result have many consequences for query optimization.
For example, there does not exist a computable function that takes an $\cpfc$ and returns an ``minimal equivalent'', according to any \emph{complexity measure}.

Split-correctness, in relation to information extraction, was introduced by Doleschal, Kimelfeld, Martens, Nahshon, and Neven~\cite{dol:split}.
The main idea is that one may not wish to query a whole document.
Instead, in certain circumstances it is advantageous to first split the document into sections, and query these individual sections.
This then opens questions regarding whether the semantics for a query changes if one first splits the document and runs the original query over these section (in comparison to running the query over the whole document).
For the purposes of this chapter, we assume that $\cpfc$s are used both as a way to split a document (using a unary $\cpfc$), and to query the document.
We show that three static analysis problems considered in~\cite{dol:split} known as \emph{split-correctness}, \emph{splittability}, and \emph{self-splittability} are all undecidable for $\cpfc$s.
This answers a problem left open in~\cite{dol:split} regarding split-correctness when equality operators are incorporated.

The final section of this chapter considers the size of the output relations that $\cpfcreg$s extract.
To this end, we adapt the idea of \emph{ambiguity} from Mateescu and Salomaa~\cite{mateescu1994finite}.
Deciding whether a $\cpfcreg$ is $k$-ambiguous (produces a table of size at most $k$ for any input word) is $\pspace$-complete.
\section{ERCQs and SERCQs}\label{sec:ercqs}
One way to define conjunctive queries for information extraction would be to define conjunctive queries over regex formulas.
In this section, we consider such regex $\cq$s, and give some existing results on these regex $\cq$s and related models.
\begin{definition}\index{regex $\cq$}
A regex $\cq$\index{regex $\cq$} is a projection over a join of regex formulas. 
That is, queries of the form $\pi_Y \left( \gamma_1 \join \cdots \join \gamma_n \right)$ where $\gamma_i \in \rgx$ for each $i \in [n]$.
\end{definition}

One issue with the definition of regex $\cq$s is that regex formulas allow for disjunction over variables, which leads to ``un-CQ-like'' behaviour.

\begin{theorem}[Freydenberger, Kimelfeld, and Peterfreund~\cite{freydenberger2018joining}]
\label{thm:regcq}
Model checking for a regex $\cq$ is $\np$-complete, even if:
\begin{enumerate}
\item The query is acyclic,
\item each regex formula is of bounded size, or
\item the input word is of length one.
\end{enumerate}
\end{theorem}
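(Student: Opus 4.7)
The plan is to establish both directions of the completeness claim, with the hardness reductions engineered to preserve the three stated restrictions.

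For the upper bound, given a regex $\cq$ $Q \df \pi_Y(\gamma_1 \join \cdots \join \gamma_n)$ and an input word $w \in \Sigma^*$, I would guess, for each $i \in [n]$, a ref-word $r_i$ of length at most $|w| + 2|\SVars{\gamma_i}|$, verify that $r_i \in \validr{\gamma_i}$ and $\clr(r_i) = w$ (both polynomial in $|r_i|$ and $|\gamma_i|$), read off the induced tuples $\mu^{r_1}, \dots, \mu^{r_n}$, and check that they agree on every shared variable. This witnesses non-emptiness of $Q(w)$ in polynomial time, so the problem lies in $\np$.

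For hardness, I would reduce from 3-SAT. Given a 3-CNF formula $\phi$ with clauses $C_1, \dots, C_m$ over variables $x_1, \dots, x_n$, the idea is to represent the truth value of each $x_i$ as a span of the input word and to build, for each clause $C_j$, a regex formula $\gamma_j$ whose ref-language is the union of the at most seven satisfying truth assignments to its three variables. The natural join then enforces that different clauses agree on shared variables, so $Q(w) \neq \emptyset$ holds exactly when $\phi$ is satisfiable.

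To obtain claim~(3), I take $w \df \mathtt{a}$, so the only available spans are $\spn{1,1}$, $\spn{2,2}$, and $\spn{1,2}$. I encode the truth value of $x_i$ by whether $\mu(x_i) = \spn{1,2}$ (true) or $\mu(x_i) = \spn{1,1}$ (false). Each satisfying-assignment branch of $\gamma_j$ is then a concatenation of three variable captures in a fixed order, placing $\mathtt{a}$ inside the capture of a variable declared true and $\emptyword$ inside the captures of the others, with the overall ref-word satisfying $\clr(r) = \mathtt{a}$. Claim~(2) comes for free from this design, since each $\gamma_j$ involves only three spanner variables and has constant size. For claim~(1), I adjust the reduction so that all clause atoms share a single central atom $\gamma_0$ that non-deterministically assigns truth values to $x_1, \dots, x_n$; the remaining $\gamma_j$ then share variables only with $\gamma_0$, yielding a star-shaped, and hence acyclic, join hypergraph.

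The main obstacle, in my view, is reconciling restrictions~(1) and~(2) with a meaningful reduction. Relational acyclic $\cq$s are tractable by Yannakakis' algorithm, so the intractability must come from inside the regex atoms themselves; the reduction must push the combinatorial choice into the disjunctive structure of each $\gamma_j$ while keeping the shared-variable pattern acyclic and each atom bounded. Restriction~(3) is also delicate, because with only three positions available for variable captures one must verify carefully that the intended truth-value encoding is genuinely forced by the ref-word semantics and cannot be subverted by stray empty captures at position~$2$.
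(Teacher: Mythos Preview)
This theorem is not proved in the thesis; it is quoted as a result of Freydenberger, Kimelfeld, and Peterfreund~\cite{freydenberger2018joining} and used as motivation in Section~\ref{sec:ercqs}. There is no proof in the paper to compare your sketch against.

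As a standalone argument your outline is on the right track, but the encoding you describe for items~(2) and~(3) does not work as written. You say each satisfying-assignment branch is ``a concatenation of three variable captures in a fixed order, placing $\mathtt{a}$ inside the capture of a variable declared true and $\emptyword$ inside the captures of the others''. With two or more true literals this concatenation produces $\clr(r)\in\{\mathtt{aa},\mathtt{aaa}\}\neq\mathtt{a}$; and even with a single true literal, the empty captures that sit \emph{after} the $\mathtt{a}$ in the ref-word receive the span $\spn{2,2}$ rather than your intended $\spn{1,1}$, so two clause atoms that both set the same variable to false can disagree on its span and fail the join. The fix is to \emph{nest} the true-variable bindings around the single $\mathtt{a}$ and place all false-variable bindings before it, e.g.\ the assignment $(T,F,T)$ on $(x_1,x_2,x_3)$ becomes $x_2\{\emptyword\}\cdot x_1\{x_3\{\mathtt{a}\}\}$. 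With that repair your argument for~(2) and~(3) goes through.

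For item~(1), note that your star-shaped hypergraph with a central $\gamma_0$ yields $\alpha$-acyclicity, which is the notion used throughout this thesis; but as the thesis remarks just after the theorem statement, the cited paper actually establishes the stronger $\gamma$-acyclic case, and a star with overlapping leaf edges is generally not $\gamma$-acyclic. You would also need to spell out a polynomial-size $\gamma_0$ that realises all $2^n$ truth assignments over the one-letter word; this can be done with a nested disjunction of the form $x_1\{\gamma^{(1)}\}\lor x_1\{\emptyword\}\cdot\gamma^{(1)}$ (and recursively for $\gamma^{(1)},\ldots,\gamma^{(n-1)}$), but it is not the ``obvious'' construction your sketch suggests.
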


Thus, unlike relational $\cq$s, restricting the regex query to be \emph{acyclic} does not lead to a tractable fragment.
Furthermore, regex formulas allow for disjunction over variables, and nesting of variable.
Therefore, analysing the structure of a regex $\cq$, or providing syntactic restrictions on the query is not a straightforward task.
One caveat to~\cref{thm:regcq} is that~\cite{freydenberger2018joining} considers $\gamma$-acyclicity, which is an even more restrictive class than $\alpha$-acyclicity.
Freydenberger and Holldack~\cite{fre:doc} proved that model checking for a regex $\cq$ is $\np$-complete (\ie, without the restrictions given in~\cref{thm:regcq}).

\begin{definition}
Let $\gamma \in \rgx$. 
We say that $\gamma$ is \emph{synchronized} if for every subexpression of $\gamma$ of the form $(\gamma_1 \lor \gamma_2)$, neither $\gamma_1$ nor $\gamma_2$ contain any variables. The set of synchronized $\rgx$-formulas is denoted \index{sRGX@$\synrgx$}$\synrgx$.
\end{definition}

This definition of synchronized regex formulas is adapted from~\cite{pet:com}, which defines synchronization based on a set of variables.
That is, $\gamma$ is synchronized for a variable $x \in \Xi$ if for every subexpression of the form $(\gamma_1 \lor \gamma_2)$, neither $\gamma_1$ nor $\gamma_2$ contains an $x$.
Then, $\gamma$ is synchronized for $X \subseteq \Xi$ if $\gamma$ is synchronized for every $x \in X$.
We simply use the term \emph{synchronized} if the regex formula is synchronized for~$\Xi$.

\begin{definition}\index{ERCQ@$\ercq$}\index{SERCQ@$\sercq$}
The class of \emph{equality regex $\cq$s} (denoted as $\ercq$s) is defined as expressions of the form:
\[ \query \df \pi_Y \left( \select^=_{x_1, y_1} \cdots \select^=_{x_l, y_l} (\gamma_1 \join \cdots \join \gamma_k) \right), \]
where $l,k \in \mathbb{N}$, $Y \subset \Xi$, and $\gamma_i \in \rgx$ for each $i \in [k]$. We call an $\ercq$ a \emph{synchronized} $\ercq$, or just $\sercq$, if every regex formula is a synchronized $\rgx$-formula.\footnote{This nomenclature differs slightly from~\cite{freydenberger2021splitting}, where $\sercq$ is shorthand for \emph{string equality regex $\cq$}, and the term \emph{synchronized $\sercq$} was used.}
\end{definition}

Since $\sercq$s do not allow for disjunction over variables, some may consider $\sercq$s to be ``closer'' to relational $\cq$s than $\ercq$s.

\begin{example}
\label{example:regexCQ}
Consider $\query \df \select^=_{x_1,x_2} \left( \gamma_1 \join \gamma_2 \right)$ where  $\gamma_1 \df \Sigma^* \cdot \bind{x_1}{\Sigma^+} \cdot \mathtt{a} \cdot \Sigma^*$ and $\gamma_2 \df \Sigma^* \cdot \bind{x_2}{\Sigma^+} \cdot \mathtt{b} \cdot \Sigma^*$. Given $w \in \Sigma^*$, we have that $\fun{P}(w)$ contains those $\mu$ such that the factor $w_{\mu(x_1)}$ is non-empty, and is immediately followed by the symbol $\mathtt{a}$, the factor $w_{\mu(x_2)}$ is immediately followed by the symbol $\mathtt{b}$, and $w_{\mu(x_1)} = w_{\mu(x_2)}$. Since both $\gamma_1$ and $\gamma_2$ are synchronized, $\query$ is a $\sercq$.
\end{example}

With regards to model checking for $\sercq$s, due to the pattern language membership problem, even one regex formula is sufficient for $\np$-hardness.

\begin{theorem}[Freydenberger and Holldack~\cite{fre:doc}]
\label{sercq:modelcheck}
Model checking for a $\sercq$ is $\np$-complete, even if the query one has only one regex formula.
\end{theorem}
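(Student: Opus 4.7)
The plan is to prove NP-completeness in two stages: membership in NP by a straightforward guess-and-check, and NP-hardness via a reduction from the membership problem for (erasing) pattern languages, which is stated earlier to be NP-complete via Albert and Wegner / Angluin / Ehrenfreucht and Rozenberg.

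For membership in NP, I would observe that given a $\sercq$ $\query = \pi_Y \select^=_{x_1,y_1} \cdots \select^=_{x_l,y_l} \gamma$ with a single regex formula $\gamma$, and an input word $w$, a certificate is a valid ref-word $r \in \validr{\gamma,w}$, which has length polynomial in $|\gamma| + |w|$. Given $r$, one can verify in polynomial time that $r \in \validr{\gamma,w}$ (standard NFA membership after compiling $\gamma$) and that the induced $(\SVars{\gamma},w)$-tuple $\mu^r$ satisfies $w_{\mu^r(x_i)} = w_{\mu^r(y_i)}$ for every $i \in [l]$, using the constant-time factor-equality check after linear preprocessing mentioned in the preliminaries.

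For NP-hardness, I would reduce from the membership problem for pattern languages: given a pattern $\alpha \in (\Sigma \cup \Xi)^*$ and $w \in \Sigma^*$, decide whether $w \in \lang(\alpha)$. Write $\alpha = \beta_1 \cdots \beta_n$ where each $\beta_i$ is a terminal from $\Sigma$ or a variable from $\Xi$. For every occurrence $\beta_i$ of a pattern variable, introduce a fresh capture variable $z_i$, and build the single regex formula
\[ \gamma \df \delta_1 \cdot \delta_2 \cdots \delta_n, \qquad \delta_i \df \begin{cases} \mathtt{a} & \text{if } \beta_i = \mathtt{a} \in \Sigma, \\ \bind{z_i}{\Sigma^*} & \text{if } \beta_i \in \Xi. \end{cases} \]
Since $\gamma$ has no disjunction at all, it is trivially in $\synrgx$. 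For every pattern variable $x$ with occurrences at positions $i_1 < i_2 < \cdots < i_k$ in $\alpha$, add the equality selections $\select^=_{z_{i_1}, z_{i_2}}, \select^=_{z_{i_2}, z_{i_3}}, \ldots, \select^=_{z_{i_{k-1}}, z_{i_k}}$. Projecting to the empty set yields a Boolean $\sercq$ $\query$ with exactly one regex formula, constructed in polynomial time. I would then argue that $w \in \lang(\alpha)$ if and only if $\fun{\query}(w) \neq \emptyset$: a substitution $\subs$ with $\subs(\alpha) = w$ gives a $(\SVars{\gamma}, w)$-tuple in $\fun{\gamma}(w)$ respecting all equalities by setting $w_{\mu(z_i)} = \subs(\beta_i)$, and conversely any such tuple yields a valid $\subs$ by reading off $\subs(x) \df w_{\mu(z_{i_1})}$.

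I do not anticipate a serious obstacle, since both directions of the reduction are essentially tautological once the construction is fixed; the slight subtlety to get right is that erasing substitutions are allowed, which is precisely why the capture groups use $\Sigma^*$ rather than $\Sigma^+$, and that the resulting $\gamma$ contains no $\lor$ operator at all, guaranteeing it is synchronized and so the query is genuinely a $\sercq$ rather than a mere $\ercq$.
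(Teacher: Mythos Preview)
Your proposal is correct and follows essentially the same approach as the cited paper: the paper itself does not give a self-contained proof but notes that the proof in~\cite{fre:doc} reduces from the erasing pattern language membership problem and produces a core spanner that is in fact a $\sercq$, which is precisely your reduction. Your handling of the details (fresh capture variables for each occurrence, $\Sigma^*$ for erasing semantics, equality chains to tie together repeated variables, absence of $\lor$ guaranteeing synchronization) is sound.
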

While~\cite{fre:doc} states that the so-called \emph{evaluation problem} for Boolean core spanners is $\np$-complete, this is equivalent to the model checking problem being $\np$-complete. 
Furthermore, the proof in~\cite{fre:doc} reduces from the erasing pattern language membership problem (see~\cite{albert1981languages}), resulting in a core spanner that is an $\sercq$.

\cref{sercq:modelcheck} gives more evidence that finding a tractable class of $\sercq$s using a simple restriction on the structure of the query (akin to acyclicity for conjunctive queries) is not a straightforward task.
This is one reason why, for the majority of the thesis, we consider a different representation for conjunctive queries over words.

Note that sufficient criteria for tractable~$\ercq$s were given in~\cite{freydenberger2018joining} -- namely, if we restrict to the class of $\ercq$s to have a bounded number of joins and a bounded number of equality operators, then we can enumerate results with polynomial delay -- consequently, model checking is tractable.

\section{Introducing FC-CQs}\label{sec:prelims}
In this section, we define $\cpfc$, the conjunctive query fragment of $\fc$.

\index{FCCQ@$\cpfc$}
\begin{definition}
An $\cpfc$ is denoted as $\varphi \df \cqhead{\vec{x}} \bigwedge_{i=1}^n (x_i \logeq \alpha_i)$ where for all $i \in [n]$ we have $x_i \in \Xi$, and $\alpha_i \in (\Sigma \union \Xi)^*$. All variables that appear in $\vec{x}$ must appear in some word equation~$x_i \logeq \alpha_i$. For the purposes of defining semantics, we consider this notation to be shorthand for $\varphi(\vec{x}) \df \exists \vec{y} \colon \bigwedge_{i=1}^n \eta_i$ where $\vec{y}$ is the tuple of all variables 
that 
appear in some $\eta_i$ but not in $\vec{x}$.
\end{definition}

\index{FCCQ@$\cpfc$!$\cpfcreg$}
\index{FCCQ@$\cpfc$!$\fcregucq$}
\index{FCCQ@$\cpfc$!$\fcucq$}
In~\cite{frey2019finite}, $\fc$ was extended to  $\fcreg$ by adding regular constraints. 
We extend $\cpfc$ to $\cpfcreg$ in the same way.
Also, we extend $\cpfcreg$ to $\fcregucq$ (unions of $\cpfcreg$ formulas) canonically. 
More formally, $\varphi \in \fcregucq$ is a query of the form $\varphi \df \bigvee_{i=1}^m \varphi_i$
where $\varphi_i \in \cpfcreg$ for each $i \in [m]$, and for every $i,j \in [m]$ we have that $\fvar(\varphi_i) = \fvar(\varphi_j)$.
The semantics are defined as $\fun{\varphi}(w) \df \bigcup_{i=1}^m \fun{\varphi_i}(w)$ for any $w \in \Sigma^*$. 
We define $\fcucq$ analogously.

\index{body@$\body(\varphi)$}
For $\varphi \in \cpfcreg$ of the form $\varphi \df \cqhead{\vec x} \psi$ where $\psi$ is quantifier-free, let $\body(\varphi) \df \psi$.
Utilizing $\body(\varphi)$ allows us to talk about a substitution for all the variables of $\varphi$, since for a Boolean query $\varphi$, any substitution $\subs \in \fun{\varphi}(w)$ only needs to be defined for $\strucvar$.
If $\varphi \in \cpfcreg$ is Boolean, then for any $\subs$ such that~$\subs \models \body(\varphi)$, and $\subs(\strucvar) = w$, we have that $w \in \lang(\varphi)$. 

Recall the definition of acyclicity for $\cq$s (see~\cref{sec:CQdefns}).
We say that $\varphi \in \cpfcreg$ is \emph{weakly acyclic}\index{weakly acyclic} if the $\cq$ is acyclic, where word equations are treated as atoms. 
Analogously, we define \emph{weak join trees} for $\cpfcreg$s.

\begin{example}\label{example:FCCQquery}
Say that we wish to find pairs of sentences such that the two sentences in the pair start with the same word.
To that end, Let $\Sigma$ be an alphabet of ASCII characters, let $\gamma_{\mathsf{sen}}$ be a regular expression that accepts well-formed sentences, and let $\gamma_{\mathsf{word}}$ be a regular expression that accepts individual words.
For this example, we use \emph{word} to mean a sequence that starts with a space, then contains a sequence of uppercase and lowercase letters, and ends with either a full stop or a space.
Now consider the following $\cpfcreg$
\[ \varphi \df \cqhead{x,y} (x \logeq z_1 \cdot z_2) \land (y \logeq z_1 \cdot z_3) \land (x \regconst \gamma_\mathsf{sen}) \land (y \regconst \gamma_\mathsf{sen}) \land (z_1 \regconst \gamma_\mathsf{word}). \]
Given some natural language input text $w \in \Sigma^*$, we have that $\fun{\varphi}(w)$ extracts a binary relation consisting of those sentences $\subs(x)$ and $\subs(y)$ such that the first word in $\subs(x)$ and $\subs(y)$ are the same.
Notice that $\varphi$ does not enforce $x$ and $y$ to be substituted with different sentences from the input text.
We can see that $\varphi$ is weakly acyclic by observing the weak join tree in~\cref{fig:basicTree}.
\end{example}
\begin{figure}
\centering
\begin{tikzpicture}[shorten >=1pt,->]
\tikzstyle{vertex}=[rectangle,fill=white!35,minimum size=12pt,inner sep=4pt,draw=black, thick]
\tikzstyle{vertex2}=[rectangle,fill=white!35,minimum size=12pt,inner sep=4pt,draw=black, thick]
\node[vertex] (1) at (0,0) {$x \logeq z_1 \cdot z_2$};
\node[vertex] (2) at (4,0) {$y \logeq z_1 \cdot z_3$};
\node[vertex] (3) at (4,1.2) {$y \regconst \gamma_\mathsf{sen}$};
\node[vertex] (4) at (0,1.2) {$x \regconst \gamma_\mathsf{sen}$};
\node[vertex] (5) at (8,0) {$z_1 \regconst \gamma_\mathsf{word}$};

\path [-](1) edge node[left] {} (2);
\path [-](2) edge node[left] {} (3);
\path [-](1) edge node[left] {} (4);
\path [-] (2) edge node[left] {} (5);
\end{tikzpicture}
\caption{The query $\varphi$ from~\cref{example:FCCQquery} is weakly acyclic due to the weak join tree illustrated here.\label{fig:basicTree}}
\end{figure}
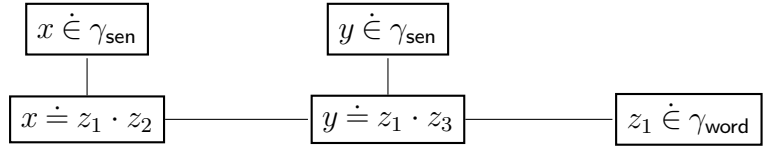

The reasoning behind the term \emph{weak acyclicity}, is that a single word equation can be considered shorthand for a binary concatenation term and therefore some may not consider a word equation to be atomic.
The fact that word equations are shorthand for a concatenation term has some consequences with regards to tractability --  weak acyclicity is not sufficient for tractable queries. 
In~\cref{chp:split}, we address this issue by ``splitting'' the word equation into small tractable components for which a tractable notion of acyclicity for $\cpfcreg$s is defined.
\section{Expressive Power}
This section considers the expressive power of $\cpfcreg$ and related models.
The main result of this section is~\cref{thm:hierarchy} which presents an inclusion diagram of the various classes we consider.

We first consider the expressive power of $\cpfcreg$s in comparison to $\sercq$s.
To that end, let us define a \emph{parse trees} for $\gamma \in \synrgx$. 
The parse tree for $\gamma$ that we define is specific for our use, and is different to the standard definition of $\gamma$-parse trees which are used to define the semantics for regex-formulas, see~\cite{fag:spa}.

\begin{definition}
Let $\gamma \in \synrgx$. A \emph{parse tree} for $\gamma$ is a rooted, direct tree $T_\gamma$. Each node of $T_\gamma$ is a subexpression of $\gamma$. The root of $T_\gamma$ is $\gamma$. For each node $v$ of $T_\gamma$, the following rules must hold.
\begin{enumerate}
\item If $v$ is $(\gamma_1 \cdot \gamma_1)$ where $\SVars{\gamma_1} \neq \emptyset$ or $\SVars{\gamma_2} \neq \emptyset$, then $v$ has a left child $\gamma_1$, and a right child $\gamma_2$,
\item if $v$ is $x \{ \gamma' \}$, then $v$ has $\gamma'$ as a single child, and 
\item if $v$ is any other subexpression, then $v$ is a leaf node. 
\end{enumerate}
\end{definition}

 The proof of the following proposition is similar to proofs from~\cite{fre:doc, fre:splog, frey2019finite}, however we include this proof for completeness sake. 

Recall \cref{defn:realizing} of how a formula $\varphi \in \fcreg$ realizes a spanner.

\begin{restatable}[]{proposition}{RGXtoPatCQ}
\label{Prop:RGXtoPatCQ}
Given $\query\in\sercq$, we can construct in polynomial time an $\cpfcreg$ that realizes $\query$.
\end{restatable}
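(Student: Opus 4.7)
The plan is to proceed by recursion on the parse tree of each synchronized regex formula, introducing auxiliary ``prefix'' and ``content'' variables at every node, and then to glue the resulting formulas together using conjunction for the joins, word equations for the equality selections, and the head of the $\cq$ for the projection.

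Fix $\query \df \pi_Y\bigl(\select^=_{x_1,y_1}\cdots\select^=_{x_l,y_l}(\gamma_1 \join \cdots \join \gamma_k)\bigr)$ with each $\gamma_i \in \synrgx$. For each $\gamma_i$ I would traverse its parse tree $T_{\gamma_i}$ and, for every node $v$, introduce a fresh variable $c_v$ that represents the factor of $\strucvar$ matched by the subexpression at $v$, together with a fresh variable $p_v$ that represents the prefix of $\strucvar$ occurring immediately before that factor. At the root $v_0$ I set $c_{v_0}\df\strucvar$ and add a word equation $(p_{v_0} \logeq \emptyword)$. The recursive clauses are then: for a concatenation node $v$ with children $v_1,v_2$, add $(c_v \logeq c_{v_1}\cdot c_{v_2})$, $(p_{v_1} \logeq p_v)$, and, via a fresh accumulator variable $q$, the pair $(q \logeq p_v \cdot c_{v_1})$, $(p_{v_2} \logeq q)$ (ensuring every left-hand side is a single variable, as $\cpfcreg$ syntax demands); for a binder node $v$ of the form $x\{\gamma'\}$ with child $v'$, add $(x^P \logeq p_v)$, $(x^C \logeq c_v)$, $(c_{v'} \logeq c_v)$, $(p_{v'} \logeq p_v)$; for a leaf $v$ whose subexpression $\gamma_v$ contains no variables, add the regular constraint $(c_v \regconst \gamma_v)$. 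Call the resulting conjunction $\varphi_{\gamma_i}$.

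Synchronization is what makes this work: because every disjunction and every Kleene star inside $\gamma_i$ is variable-free, the parse tree has exactly the three kinds of internal nodes above, and each variable-free leaf is a legitimate regular expression usable in a constraint. I then take the conjunction of the $\varphi_{\gamma_i}$, identifying the pairs $x^P, x^C$ across different $i$ whenever $x$ occurs in several $\gamma_i$; this realizes the join. For each equality selection $\select^=_{x_j,y_j}$ I append the atom $(x_j^C \logeq y_j^C)$, and for the projection $\pi_Y$ I place exactly the variables $\{x^P, x^C \mid x \in Y\}$ into the head tuple, leaving all other auxiliary variables implicitly existentially quantified.

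Correctness is a routine induction on $T_{\gamma_i}$: any $\strucvar$-safe substitution satisfying $\varphi_{\gamma_i}$ reads off, through the values $p_v$ and $c_v$, a valid ref-word in $\validr{\gamma_i,\subs(\strucvar)}$ whose induced span tuple is exactly the tuple expressed by $\subs$, and conversely every ref-word in $\validr{\gamma_i,w}$ induces such a substitution. For the complexity bound, $T_{\gamma_i}$ has $O(|\gamma_i|)$ nodes and each node contributes $O(1)$ atoms, so each $\varphi_{\gamma_i}$ has size $O(|\gamma_i|)$ and is computable in polynomial time; the remaining additions (equality atoms and choice of head) are linear in the size of $\query$. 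The only real obstacle I anticipate is the bookkeeping for the prefix variables at concatenation nodes — in particular keeping every word equation in the single-variable-LHS form required by $\cpfcreg$ — but this is purely notational and is handled by the fresh accumulator trick described above.
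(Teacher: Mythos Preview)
Your proposal is correct and follows essentially the same approach as the paper: build a parse tree for each synchronized regex formula, assign content variables to every node with concatenation word equations at internal nodes and regular constraints at variable-free leaves, handle the join by conjunction, equality selections by $(x_j^C \logeq y_j^C)$, and projection by the choice of head. The only notable difference is in how prefixes are computed: you thread a prefix variable $p_v$ through \emph{every} node via the accumulator trick, whereas the paper introduces prefix variables only at binder nodes, defining each $x^P$ directly as the concatenation of the content variables of all ``left ancestors'' along the root-to-binder path (via a function $\preffun$). Both implementations are straightforward and polynomial; yours is slightly more uniform at the cost of a few more auxiliary variables.
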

\begin{proof}
Let $\query \df \pi_Y \left( \select^=_{x_1, y_1} \cdots \select^=_{x_m, y_{m}} (\gamma_1 \join \cdots \join \gamma_k) \right)$ be a $\sercq$. We realize $\query$ using the following $\cpfcreg$:

\[ \varphi_\query \df \cqhead{\vec{x}} \bigwedge_{i=1}^m (x_i^C \logeq y_i^C) \land \bigwedge_{i=1}^k \varphi_{\gamma_i}, \]
where  $\vec{x}$ contains $x^P$ and $x^C$ for every $x \in Y$ and $\varphi_{\gamma_i}$ for each $i \in [k]$ is to be defined. 
If $\varphi_{\gamma_i}$ realizes $\gamma_i$ for each $i \in [k]$, then $\varphi_\query$ realizes $\query$ since equalities are handled by the word equations $x_i^C \logeq y_i^C$, and projection is handled by the choice of free variables in the head of $\varphi_\query$.

Now we give the construction for each $\varphi_{\gamma_i}$. 
We start with dealing with the content variables $x^C$ for each $x \in \var(\query)$ by giving construction that utilizes the parse tree $T_{\gamma_i}$.

For each $i \in [k]$, we define $\varphi_{\gamma_i}$  as follows: Take the parse tree $T_{\gamma_i}$ for $\gamma_i$ and associate every node $n$ of $T_{\gamma_i}$ with a variable $v_n$:
\begin{itemize}
	\item If $n$ is a variable binding  $x\{\cdot\}$, let $v_n\df x^C$.
	\item Otherwise -- that is, if $n$ is a concatenation $(\gamma \cdot \gamma')$ where $\gamma$ or $\gamma'$ contains a variable, or a regular expression -- let $v_n\df z_n$, where $z_n$ is a new variable that is unique to $n$.	
\end{itemize}
The construction shall ensure that, when matching $\gamma_i$ against a word $w$, each variable $v_n$ contains the part of $w$ that matches against the subexpression of the node $n$. 
To this end, for every node $n$, we also define an atom $A_n$ as follows:
\begin{itemize}
	\item If $n$ is a concatenation with left child $l$ and right child $r$, where $l$ or $r$ contains a variable, then $A_n$ is the word equation $(v_n\logeq v_l \cdot v_r)$.
	\item If $n$ is a variable binding, let $A_n$ be the word equation $(v_n \logeq v_c)$, where $c$ is the child of $n$.
	\item If $n$ is a regular expression $\gamma'$, then $A_n$ is the regular constraint $(v_n \regconst \gamma')$.
\end{itemize}
We join these atoms (defined by the above construction) via conjunction.
Then, we define~$\varphi_{\gamma_i}$ as this conjunctive query with the extra word equation $(\strucvar \logeq v_n)$, where $v_n$ is the variable associated to the root of the parse tree $T_{\varphi_{\gamma_i}}$. 
Up to this point, we have that every $\sigma\in\fun{\varphi_{\gamma_i}}\strucbra{w}$ encodes the contents of the spans of some  $\mu\in\fun{\gamma_i}(w)$.
The only part that is missing in the construction are the prefix variables.

\newcommand{\preffun}{\mathsf{p}}
Recall that for every node $n$ in the parse tree $T_{\gamma_i}$, we defined a variable $v_n$ that represents the part of $w$ that matches against the subexpression of $n$. 
To obtain the corresponding prefix, we define a function $\preffun$ that maps each node $n$ to a pattern $\preffun(n)\in\Xi^*$ as follows.
Given a node $n$, we look for the lowest node above $n$ that is a concatenation and has $n$ as a right child or as a descendant of its left child. 
If no such node exists -- that is, if no node above $n$ is a concatenation, or every concatenation above $n$ has $n$ as a descendant on the left side -- let $\preffun(n) \df \emptyword$.
If such a node exists, we denote it by $m$ and we denote its left child by $l$ and define $\preffun(n)\df \preffun(m)\cdot v_l$.
In other words, $\preffun(n)$ is the concatenation of all $v_l$ that belongs to nodes that refer a part of $w$ that is to the left of the part that belongs to $n$.
Hence, to get the values for prefix variables, we take each node $n$ that is a variable binding $x\{\cdot\}$ and add the word equation $(x^P\logeq \preffun(n))$ to $\varphi_{\gamma_i}$.

\subparagraph*{Complexity.} First, we build the parse tree $T_{\gamma_i}$ which can be constructed in time polynomial in the size of $\gamma_i$. Then, we mark each node of $T_{\gamma_i}$ with a variable and add a word equation or regular constraint to $\varphi_{\gamma_i}$, which takes polynomial time. 
To ensure the spanner $\gamma_i$ represents is correctly realized, we add an extra word equation for the prefix variable -- this clearly takes polynomial time. 
There are linearly many regex formulas in $\query$, we can construct $\varphi_{\gamma_i}$ for all $i \in [k]$ in polynomial time. 
The final step of computing $\varphi_\query$ takes polynomial time -- we consider each  equality and add the corresponding word equation, and consider each variable in the projection and add the corresponding variables to the head of the query. 
Therefore, the overall complexity is polynomial in the size of $\query$.
\end{proof}

Note that the main part in the proof of~\cref{Prop:RGXtoPatCQ} is converting some $\synrgx$ to an $\cpfcreg$, as join, equality operators, and projection can be easily handled using conjunction, word equations, and the choice of free variables respectively.

\begin{example}\label{example:rgxAsFC}
Let $\gamma \df \bigl( (\Sigma^* \cdot (\bind{x}{\Sigma^+} \cdot \mathtt{a}) ) \cdot \Sigma^* \bigr)$. 
We can realize $\gamma$ as
\begin{multline*} 
\varphi_\gamma \df \cqhead{x^P, x^C} (\strucvar \logeq v_1) \land (v_1 \logeq v_2 \cdot v_3) \land (v_3 \regconst \Sigma^*) \land (v_2 \logeq v_4 \cdot v_5) \\
 \land (v_4 \regconst \Sigma^*) \land (v_5 \logeq x^C \cdot v_6) \land (v_6 \regconst \mathtt{a}) \land (x^C \logeq v_7)  \land (v_7 \regconst \Sigma^+)\land (x^P \logeq v_4) .
\end{multline*}
We can see that $\varphi$ realizes $\gamma$ by considering the construction given in the proof of~\cref{Prop:RGXtoPatCQ} along with the parse tree for $\gamma$ given in~\cref{fig:parseTree}.

\begin{figure}
\center
\begin{tikzpicture}[shorten >=1pt,->]
\tikzstyle{vertex}=[rectangle,fill=white!25,minimum size=12pt, outer sep = 1pt, inner sep = 1.5pt]
\node[vertex] (1) at (0,4) {$\bigl( (\Sigma^* \cdot (\bind{x}{\Sigma^+} \cdot \mathtt{a}) ) \cdot \Sigma^* \bigr)$};

\node[vertex] (2) at (-2,3)   {$ (\Sigma^* \cdot (\bind{x}{\Sigma^+} \cdot \mathtt{a}))$};
\node[vertex] (3) at (2,3) {$\Sigma^*$};

\node[vertex] (4) at (-3,2) {$\Sigma^*$};
\node[vertex] (5) at (-1,2) {$(\bind{x}{\Sigma^+} \cdot \mathtt{a})$};

\node[vertex] (6) at (-2,1) {$\bind{x}{\Sigma^+}$};
\node[vertex] (7) at (0,1) {$\mathtt{a}$};

\node[vertex] (8) at (-2,0) {$\Sigma^+$};

\path [->] (1) edge node[right] {} (2);
\path [->] (1) edge node[right] {} (3);
\path [->] (2) edge node[right] {} (4);
\path [->] (2) edge node[right] {} (5);
\path [->] (5) edge node[right] {} (6);
\path [->] (5) edge node[right] {} (7);
\path [->] (6) edge node[right] {} (8);
\end{tikzpicture}
\caption{\label{fig:parseTree} This figure illustrates a parse tree for the $\rgx$ given in~\cref{example:rgxAsFC}.}
\end{figure}
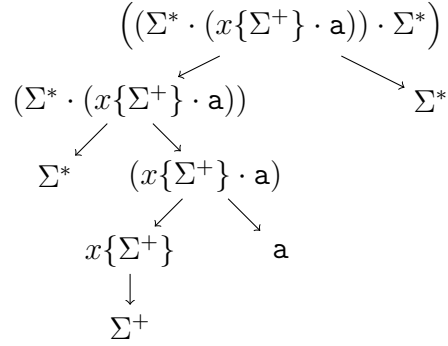
\end{example}

Next, we work towards showing a polynomial-time construction that given an $\cpfcreg$, outputs a $\sercq$ that realizes the given $\cpfcreg$.

\begin{definition}\index{structured normal form}\label{defn:SNF}
We say that $\varphi\in\cpfcreg$ is in \emph{structured normal form} if 
\[ \varphi\df\cqhead{\vec{x}} \bigwedge_{i=1}^n (\strucvar \logeq \alpha_i) \land \bigwedge_{j=1}^m (y_j \regconst \gamma_j), \]
where $\alpha_i \in (\Sigma \union (\Xi \setminus \{\strucvar\}))^*$ for each $i \in [n]$.
\end{definition}

Less formally, $\varphi$ is in structured normal form if every word equation has $\strucvar$ on the left-hand side, and the right-hand side of each word equation does not contain the universe variable $\strucvar$.

While the following lemma is not particularly interesting by itself, it helps us to streamline some subsequent proofs.

\begin{lemma}\label{lemma:StrucNormalForm}
Given $\varphi \in \cpfcreg$, we can construct in polynomial time an equivalent $\varphi' \in \cpfcreg$ such that $\varphi'$ is in structured normal form.
\end{lemma}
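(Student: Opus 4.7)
The plan is to describe a two-stage syntactic rewriting. Let $\varphi \df \cqhead{\vec x} \bigwedge_{i=1}^n (x_i \logeq \alpha_i) \land \bigwedge_{j=1}^m (y_j \regconst \gamma_j)$. I would first eliminate all occurrences of $\strucvar$ from the right-hand sides, and then rewrite any equation whose left-hand side is not $\strucvar$ so that it becomes $\strucvar$-headed.

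\emph{Stage 1 (proxy for $\strucvar$ on the right).} Introduce a single fresh variable $u$ and add the word equation $(\strucvar \logeq u)$. Replace every occurrence of $\strucvar$ on the right-hand side of every remaining word equation by $u$. Semantically this is sound: any $\strucvar$-safe $\subs$ that satisfies the original $(x_i \logeq \alpha_i)$ extends to one with $\subs(u) = \subs(\strucvar)$, and by cancellation in the free monoid the rewritten equation $(x_i \logeq \alpha_i[\strucvar/u])$ has the same solutions, given that $(\strucvar \logeq u)$ holds. After this stage every equation is either of the form $(\strucvar \logeq \beta)$ with $\beta$ not containing $\strucvar$, or of the form $(x \logeq \beta)$ with $x \in \Xi \setminus \{\strucvar\}$ and $\beta \in (\Sigma \cup (\Xi \setminus \{\strucvar\}))^*$.

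\emph{Stage 2 (positioning non-$\strucvar$ left-hand sides).} For each equation of the second kind, introduce fresh variables $p,s$ and replace $(x \logeq \beta)$ by the conjunction
\[ (\strucvar \logeq p \cdot x \cdot s) \;\land\; (\strucvar \logeq p \cdot \beta \cdot s). \]
For soundness, note that any $\subs$ satisfying $(x \logeq \beta)$ can be extended by choosing $\subs(p)$ and $\subs(s)$ to be a prefix/suffix decomposition of $\subs(\strucvar)$ around the occurrence witnessing $\subs(x) \sqsubseteq \subs(\strucvar)$ (such a decomposition exists by $\strucvar$-safety). Conversely, any extension satisfying the two new equations forces $\subs(x) = \subs(\beta)$ by free-monoid cancellation on the common prefix $\subs(p)$ and common suffix $\subs(s)$.

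The free-variable tuple $\vec x$ is preserved, and all introduced variables are implicitly existentially quantified. The resulting formula $\varphi'$ fits the shape of \cref{defn:SNF}, with the regular constraints copied over unchanged. For the complexity, Stage 1 adds one equation and performs a single symbol substitution of size linear in $|\varphi|$, while Stage 2 replaces each of the at most $n$ equations by two equations whose total size is bounded by $|\alpha_i| + \mathcal{O}(1)$; hence $|\varphi'| \leq \mathcal{O}(|\varphi|)$ and the construction runs in polynomial (in fact linear) time. I do not foresee any serious obstacle; the only thing requiring care is to justify the cancellations used in the soundness arguments, which rely on the fact that, under $\strucvar$-safety, the chosen fresh variables can be instantiated so that the two $\strucvar$-equations in Stage 2 share a common prefix and suffix of explicitly known length.
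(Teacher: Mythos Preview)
Your proposal is correct and follows essentially the same two-stage rewriting as the paper. Your Stage~2 is identical to the paper's second step. Your Stage~1 differs slightly: rather than introducing a proxy $u$ and substituting, the paper observes via a length argument that any atom $(x \logeq \beta_1 \cdot \strucvar \cdot \beta_2)$ forces $\subs(x)=\subs(\strucvar)$ and $\subs(\beta_1)=\subs(\beta_2)=\emptyword$, and rewrites accordingly. Both approaches are sound; yours is a bit more uniform, while the paper's produces a more aggressively simplified output. One cosmetic remark: the justification you give in Stage~1 does not actually need ``cancellation in the free monoid''---once $(\strucvar \logeq u)$ forces $\subs'(u)=\subs'(\strucvar)$, the equivalence of $(x_i \logeq \alpha_i)$ and $(x_i \logeq \alpha_i[\strucvar/u])$ is immediate from the morphism property.
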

\begin{proof}
Let $\varphi \in \cpfcreg$. 
We construct an equivalent $\varphi'$ in structured normal form in two steps.

\emph{Step one.} While there exists some atom of $\varphi$ of the form $x \logeq \beta_1 \cdot \strucvar \cdot \beta_2$, we replace this atom with $(\strucvar \logeq x) \land (z \logeq \beta_1) \land (z \logeq \beta_2) \land (z \logeq \emptyword)$.
We can show that the resulting formula is equivalent to $\varphi$ using a length argument.
It is clear that for any substitution $\subs$ such that $\subs \models \varphi$, the equality $|\subs(x)| = |\subs(\beta_1)| + |\subs(\strucvar)| + |\subs(\beta_2)|$ holds.
Since $|\subs(x)| \leq |\subs(\strucvar)|$, we know that $|\subs(x)| = |\subs(\strucvar)|$, and $|\subs(\beta_1)| = |\subs(\beta_2)| = 0$.
This immediately implies that $\subs(x) = \subs(\strucvar)$ and $\subs(\beta_1 \cdot \beta_2) = \emptyword$. 

\emph{Step two.} While there exists some atom of $\varphi$ of the form $(x \logeq \alpha)$ where $x \neq \strucvar$, then we replace this atom with $(\strucvar \logeq p_x \cdot x \cdot s_x) \land (\strucvar \logeq p_x \cdot \alpha \cdot s_x)$, where $p_x, s_x \in \Xi$ are new variables that are unique to $x$.
It is clear that this re-writing step maintains equivalence between the given and resulting formulas using an analogous length argument to step one.

This construction can be done in polynomial time, as we consider each atom of $\varphi$, and if either it has $\strucvar$ on the right-hand side, or $x \in \Xi \setminus \{\strucvar\}$ on the left-hand side, then we replace that atom with an easily constructed $\cpfcreg$.
\end{proof}

We now prove that $\sercq$s are at least as powerful as $\cpfc$s.

\begin{proposition}\label{Prop:PatCQtoRGX}
Given $\varphi\in\cpfcreg$, we can construct in polynomial time a $\sercq$ that realizes $\varphi$.
\end{proposition}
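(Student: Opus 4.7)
The plan is to first apply \cref{lemma:StrucNormalForm} to bring $\varphi$ into structured normal form
\[ \varphi \df \cqhead{\vec{x}} \bigwedge_{i=1}^n (\strucvar \logeq \alpha_i) \land \bigwedge_{j=1}^m (y_j \regconst \gamma_j), \]
with each $\alpha_i \in (\Sigma \cup (\Xi \setminus \{\strucvar\}))^*$, and then to transcribe each word equation and each regular constraint into a single synchronized regex formula. The key design choice is to use fresh per-occurrence span variables throughout and to defer every variable-coincidence to $\select^=$ operators, which is what lets all of the resulting regex formulas lie in $\synrgx$.

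For each word equation $(\strucvar \logeq \alpha_i)$ with $\alpha_i = \beta_{i,1}\cdots \beta_{i,k_i}$, I build a regex formula $\gamma_i^{eq} \df \delta_{i,1} \cdots \delta_{i,k_i}$, where $\delta_{i,j} \df \beta_{i,j}$ if $\beta_{i,j} \in \Sigma$ and $\delta_{i,j} \df z_{i,j}\{\Sigma^*\}$ otherwise, with $z_{i,j}$ a fresh span variable unique to that occurrence. For each regular constraint $(y_j \regconst \gamma_j)$, I build $\gamma_j^{reg} \df \Sigma^* \cdot w_j\{\gamma_j\} \cdot \Sigma^*$ with $w_j$ fresh. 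Every variable binding in these formulas sits at top level and never inside a subexpression of the form $(\cdot \lor \cdot)$, so each formula is in $\synrgx$. To honour the naming requirement from \cref{defn:realizing2}, namely $\SVars{P} = \fvar(\varphi)$, for every FC variable $x$ appearing in $\varphi$ I pick one canonical occurrence and rename the corresponding binding to $x$; if $x$ appears only in a regular constraint, I use that binding as its canonical one.

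The desired $\sercq$ is then
\[ P \df \pi_{\fvar(\varphi)}\bigl( \select^=_{u_1,v_1} \cdots \select^=_{u_\ell,v_\ell} (\gamma_1^{eq} \join \cdots \join \gamma_n^{eq} \join \gamma_1^{reg} \join \cdots \join \gamma_m^{reg}) \bigr), \]
where I add one equality $\select^=_{x, z_{i,j}}$ for every non-canonical occurrence $(i,j)$ of each FC variable $x$, and one $\select^=_{y_j, w_j}$ for each regular constraint whose variable already has a canonical word-equation occurrence. Correctness reduces to observing that on input $w$, span tuples $\mu \in P(w)$ correspond bijectively, via $\subs(x) \df w_{\mu(x)}$, to substitutions $\subs$ with $\subs(\strucvar) = w$ satisfying $\body(\varphi)$: each $\gamma_i^{eq}$ exactly witnesses a factorization of $w$ matching $\alpha_i$, each $\gamma_j^{reg}$ exactly picks an occurrence of a word of $\lang(\gamma_j)$, and the equality selections enforce precisely the coincidences demanded by repeated variables. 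The running time is polynomial because \cref{lemma:StrucNormalForm}, the symbol-by-symbol scan of each $\alpha_i$, and the addition of one $\select^=$ per non-canonical occurrence are all linear in $|\varphi|$. The only delicate point, and the sole reason the construction is not completely immediate, is the synchronization constraint of $\synrgx$, which is exactly what forces the fresh-variables-plus-selection detour rather than simply reusing an FC variable name twice inside one regex formula.
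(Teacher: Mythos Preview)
Your proposal is correct and follows essentially the same approach as the paper: put $\varphi$ into structured normal form, transcribe each word equation $\strucvar\logeq\alpha_i$ into a synchronized regex formula by replacing variable occurrences with $z\{\Sigma^*\}$ bindings, handle each regular constraint via $\Sigma^*\cdot w\{\gamma\}\cdot\Sigma^*$, then join, add equality selections for repeated variables, and project onto $\fvar(\varphi)$. The only bookkeeping difference is that the paper reuses the original name $x$ at the \emph{leftmost} occurrence of $x$ within \emph{each} word equation (so cross-equation agreement for a shared variable is enforced by the natural join on spans), whereas you pick a single global canonical occurrence per variable and enforce all other coincidences, including cross-equation ones, via $\select^=$; this is a cosmetic variation and does not change the argument.
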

\begin{proof}
We first show a conversion from a word equation of the form $(\strucvar \logeq \alpha)$ or a regular constraint $(x \regconst \gamma)$, into a pair $(\gamma, S)$ where $\gamma \in \synrgx$, and $S$ is a set of equality operators.
Then, these pairs constructed from atoms are combined into a single $\sercq$ that realizes the input query.

\subparagraph*{Word equations.} We describe a procedure that takes a word equation $\strucvar \logeq \beta$ where $\beta \in (\Sigma \union (\Xi \setminus \{ \strucvar \}) )^*$, and returns a pair consisting of an $\synrgx$-formula $\gamma$, and a set of equalities $S$.
Let $\beta = \beta_1 \cdot \beta_2 \cdots \beta_l$, where $\beta_i \in (\Sigma \union \Xi)$ for each $i \in [l]$.
We then define $\gamma \df \gamma_1 \cdot \gamma_2 \cdots \gamma_l$ as follows: If $\beta_i \in \Sigma$, then $\gamma_i = \beta_i$. If $\beta_i  = x$ where $x\in \Xi$, then we carry out one of the following:
\begin{itemize}
\item If $\beta_i$ is the left-most occurrence of $x$, let $\gamma_i = \bind{x}{\Sigma^*}$,
\item otherwise, let $\gamma_i = \bind{x_i}{\Sigma^*}$, where $x_i$ is a new and unique variable, and add the equality operator $\select^=_{x,x_i}$ to the set $S$.
\end{itemize}

Thus, from any word equation $\strucvar \logeq \beta$, we construct a set of equalities $S$ and a regex formula $\gamma$. It is clear that $\pi_Y S \gamma$ realizes $\cqhead{\vec{x}} \strucvar \logeq \beta$, where $Y$ is the set of variables in $\vec x$.

\subparagraph{Regular constraints.} We convert a regular constraint $x \regconst \gamma$ into a pair consisting of an $\synrgx$-formula $\gamma$, and set $S$ consisting of a single equality.
Let $\gamma \df \Sigma^* \cdot \bind{x'}{\gamma} \cdot \Sigma^*$, and let $S \df \{ \select_{x,x'}^= \}$, where $x' \in \Xi$ is a new and unique variable.

\subparagraph{Bringing it all together.} 
In the prior two steps of this construction, we have shown ways to construct a pair consisting of an $\synrgx$-formula and a set of equalities from a word equation or a regular constraint.
To finalize this construction, we assume $\varphi \in \cpfcreg$ is in structured normal form. 
That is, $\varphi$ is of the following form:
\[\varphi \df \cqhead{\vec{x}} \bigwedge_{i=1}^n (\strucvar \logeq \alpha_i) \land \bigwedge_{j=n+1}^m (y_j \regconst \gamma_j).\]

To convert $\varphi$ into an equivalent $\sercq$, we convert each atom $(\strucvar \logeq \alpha_i)$ into a pair $(\gamma_i, S_i)$ where $\gamma_i \in \synrgx$ and $S_i$ is a set of equality operators, using the previous described procedure.
Likewise, we convert $(y_j \regconst \gamma_j)$ into $(\gamma_j,S_j)$.
Then, we define $\query$ as follows:

\[ \query \df \pi_Y S_1 \cdots S_m \left( \gamma_1 \join \cdots \join \gamma_m \right),  \]
where $Y = \fvar(\varphi)$.
Notice that we use the sets of equality operators in the definition of $\query$.
This should be considered as shorthand for applying a sequence consisting of the equality operators in each set $S_i$.

\subparagraph*{Complexity.}
To construct $\query \in\sercq$ that realizes $\varphi \in \cpfcreg$, we first consider each word equation, and construct a pair consisting of a regex formula and a set of equalities in polynomial time, by replacing variables with $\bind{x}{\Sigma^*}$ or $\bind{x_i}{\Sigma^*}$ and adding the corresponding equalities to the set. 

Then, we consider regular constraints, which result in a regex formula which can also be constructed in polynomial time.
Joining the regex formulas constructed from each atom and converting the sets of equalities  into a sequence of equality operators can clearly be done in polynomial time. 
The last step is to add the projection -- which uses the set of free variables of $\varphi$, meaning that this step can be easily done in polynomial time.

As each step in the construction of $\query$ can be done in polynomial time, we have constructed a $\sercq$ that realizes an $\cpfcreg$ in polynomial time.
\end{proof}

The proofs for~\cref{Prop:RGXtoPatCQ} and~\cref{Prop:PatCQtoRGX} follow closely to proofs  from~\cite{frey2019finite, fre:doc, fre:splog}. 
Observing these previous works, \cref{Prop:RGXtoPatCQ} and~\cref{Prop:PatCQtoRGX} do not seem very surprising. 
This thesis contains the proofs for~\cref{Prop:RGXtoPatCQ} and~\cref{Prop:PatCQtoRGX} for the sake of completeness.

Let us now consider various fragments of $\fcreg$ and known language generators (such as regular languages and pattern languages). 
The results of this section are illustrated in~\cref{fig:hierarchy}.
Recall that a Boolean query $\varphi$ generates a language $\lang(\varphi) \df \{ w \in \Sigma^* \mid w \models \varphi\}$. 
If $\mathsf{A}$ is a fragment of $\fcreg$, by $\lang(\mathsf{A})$ we denote the class of languages definable by a query $\varphi \in \mathsf{A}$.
It follows from the definitions that if $\mathsf{A}$ is a syntactic fragment of $\mathsf{B}$, and $L \in \lang(\mathsf{B}) \setminus \lang(\mathsf{A})$, then $\mathsf{A}$ is strictly less expressive than $\mathsf{B}$, denoted $\lang(\mathsf{A}) \subset \lang(\mathsf{B})$. 

\begin{definition}\label{defn:typed}
A \emph{regularly typed pattern} is a pair $\alpha_T \df (\alpha, T)$ where $\alpha \in (\Sigma \union \Xi)^*$ is a pattern, and $T$ is a function that maps each $x \in \var(\alpha)$ to a regular language $T(x)$.
We denote the language $\alpha_T$ generates as $\lang(\alpha_T)$ and this language is defined as $\{ \subs(\alpha) \mid \subs \text{ is a substitution, and } \subs(x) \in T(x) \text{ for all } x \in \var(\alpha) \}$.
\end{definition}

The set of regular typed patterns is denoted by $\pat[\reg]$\index{patterns!$\pat[\reg]$}, and the class of languages definable by a regularly typed pattern is denoted by $\lang(\pat[\reg])$.

Typed pattern languages have been considered in the context of \emph{learning theory}~\cite{koshiba1995typed,geilke2012polynomial}, the details of which will not be considered in this thesis.
The expressive power of regularly typed patterns has been considered in~\cite{schmid2012inside}, which looks at the expressive power of regular typed patterns in comparison to so-called REGEX and so-called \emph{pattern expressions}.

\usetikzlibrary{arrows.meta}
\begin{figure}
\center
\begin{tikzpicture}[shorten >=1pt,->]
\tikzstyle{vertex}=[rectangle,fill=white!25,minimum size=12pt,inner sep=4pt,outer sep=1.5pt,draw=black]
\node[vertex] (1) at (4,-4) {$\pat$};
\node[vertex] (2) at (0,-4)   {$\reg$};
\node[vertex] (3) at (4,-2) {$\cpfc$};
\node[vertex] (7) at (0,-2) {$\pat[\reg]$};
\node[vertex] (8) at (-2.8,0) {$\sercq$};
\node[vertex] (4) at (0,0) {$\cpfcreg$};
\node[vertex] (5) at (4,0) {$\fcucq$};
\node[vertex] (9) at (-2.8,2) {$\rgx^\core$};
\node[vertex] (6) at (0,2) {$\fcregucq$};

\path [-Triangle](1.north west) edge node[midway] {} (7.south east);
\path [-Triangle](1) edge node[midway] {} (3);
\path [-Triangle](2) edge node[midway] {} (7);
\path [-Triangle](3) edge node[midway] {} (5);
\path [-Triangle](3.north west) edge node[midway] {} (4.south east);
\path [-Triangle] (5.north west) edge node[midway] {} (6.south east);
\path [-,dashed](2) edge node[midway] {} (1);
\path [-,dashed](2.north east) edge node[midway] {} (3.south west);
\path [-,dashed](7) edge node[midway] {} (3);
\path [-,dashed](7.north east) edge node[midway] {} (5.south west);
\path [-Triangle] (7) edge node[midway] {} (4);
\path [-] (4) edge node[midway, fill=white] {\textbf{?}} (6);
\path [-] (4) edge node[midway, fill=white] {\textbf{?}} (5);
\path [-] (4) edge node[right] {} (8);
\path [-] (6) edge node[right] {} (9);

\end{tikzpicture}
\caption{\label{fig:hierarchy} This figure illustrates the expressive power of different fragments of $\fcreg$. A directed edge from $\mathsf{A}$ to $\mathsf{B}$ denotes that $\lang(\mathsf{A}) \subset \lang(\mathsf{B})$. A dashed, undirected edge between $\mathsf{A}$ and $\mathsf{B}$ denotes that $\lang(\mathsf{A})$ and $\lang(\mathsf{B})$ are incomparable. An undirected edge from $\mathsf{A}$ to $\mathsf{B}$ denotes $\lang(\mathsf{A}) = \lang(\mathsf{B})$. Edge labelled with a question mark indicate an open problem.}
\end{figure}
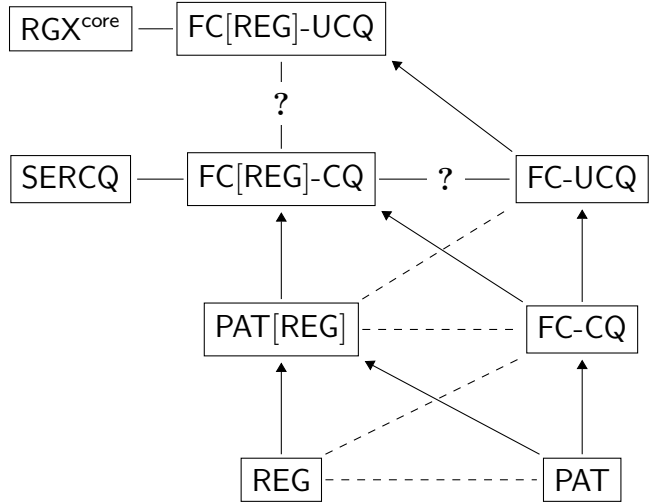

\begin{theorem}\label{thm:hierarchy}
\cref{fig:hierarchy} is correct for all $\Sigma$ where $|\Sigma|\geq 2$.
\end{theorem}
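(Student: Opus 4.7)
The proof decomposes into three groups of claims: two equalities, seven proper inclusions (each with its underlying containment), and four incomparabilities. The two equalities are essentially already done. $\lang(\cpfcreg) = \lang(\sercq)$ is immediate from \cref{Prop:RGXtoPatCQ} and \cref{Prop:PatCQtoRGX}, since those constructions preserve satisfaction and hence languages. For $\lang(\fcregucq) = \lang(\rgx^{\core})$, the plan is to combine \cref{theorem:Splog} (identifying $\splog$ with $\rgx^{\core}$) with the observation that, up to renaming $\strucvar$ to $\mv$, an $\fcregucq$ is exactly a disjunction of $\pcsplog$-formulas (\cref{defn:DPC}): by \cref{lemma:StrucNormalForm} each disjunct can be put into structured normal form, which is precisely the shape of a $\pcsplog$-formula.

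All upward inclusions are by direct translation. A pattern $\alpha$ is defined by $\cqhead{} (\strucvar \logeq \alpha)$, giving $\lang(\pat) \subseteq \lang(\cpfc) \subseteq \lang(\cpfcreg)$; a regularly typed pattern $(\alpha,T)$ is defined in $\cpfcreg$ by $\cqhead{} (\strucvar \logeq \alpha) \land \bigwedge_{x \in \var(\alpha)} (x \regconst \gamma_{T(x)})$, which specialises to $\lang(\reg) \subseteq \lang(\pat[\reg])$ by taking $\alpha = x$; the remaining inclusions $\lang(\cpfc) \subseteq \lang(\fcucq) \subseteq \lang(\fcregucq)$ and $\lang(\cpfc) \subseteq \lang(\cpfcreg)$ are syntactic.

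For the separations I work from a short catalogue of witnesses. The standard $\{ww \mid w \in \Sigma^*\} = \lang(xx)$ lies in $\lang(\pat) \setminus \lang(\reg)$ by a pumping-lemma argument, while $\emptyset \in \lang(\reg) \setminus \lang(\pat)$ because every pattern language is non-empty. These two witnesses together yield $\lang(\reg) \mathrel{\#} \lang(\pat)$ and properness of $\lang(\pat) \subset \lang(\pat[\reg])$, $\lang(\reg) \subset \lang(\pat[\reg])$, and $\lang(\pat) \subset \lang(\cpfc)$. For $\lang(\reg) \not\subseteq \lang(\cpfc)$ I take a finite regular language such as $\{\mathtt{a},\mathtt{b}\}$ and argue, by a case analysis on the word equations of a hypothetical defining $\cpfc$, that any such equation system either forces a single value for $\strucvar$ or admits infinitely many solutions, so that no $\cpfc$-language can be exactly $\{\mathtt{a},\mathtt{b}\}$. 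The same witness separates $\lang(\cpfc) \subset \lang(\cpfcreg)$ and $\lang(\cpfc) \subset \lang(\fcucq)$ (via the trivial disjunction $(\strucvar \logeq \mathtt{a}) \lor (\strucvar \logeq \mathtt{b})$), and $\mathtt{a}^*$ analogously separates $\lang(\fcucq) \subset \lang(\fcregucq)$ since a finite union of $\cpfc$-languages cannot equal $\mathtt{a}^*$ (by the same argument applied to each disjunct).

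The delicate remaining cases are the two incomparabilities involving $\lang(\pat[\reg])$ and the properness of $\lang(\pat[\reg]) \subset \lang(\cpfcreg)$. The directions $\lang(\pat[\reg]) \not\subseteq \lang(\cpfc)$ and $\lang(\pat[\reg]) \not\subseteq \lang(\fcucq)$ reuse the $\{\mathtt{a},\mathtt{b}\}$-type witnesses, which are regularly typed patterns of the form $(x, T)$ with $T(x) = \{\mathtt{a},\mathtt{b}\}$. The hard direction is producing an $\cpfc$-language that is not a regularly typed pattern language. My plan is to take $\varphi \df \cqhead{} (\strucvar \logeq x \mathtt{a} x) \land (\strucvar \logeq yy)$, whose language simultaneously imposes an $x\mathtt{a}x$-structure and a square-structure on $\strucvar$, and show by a combinatorial analysis of the possible shapes of a hypothetical defining $(\alpha, T)$ that no single regularly typed pattern can capture both constraints at once: any such $\alpha$ would need to admit both kinds of factorisation on every member word, which forces incompatible restrictions on $\var(\alpha)$ and on the regular types $T(\cdot)$. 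This combinatorial separation is the main obstacle; once established, it also witnesses $\lang(\pat[\reg]) \subset \lang(\cpfcreg)$ and completes the figure.
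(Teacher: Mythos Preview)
Your overall decomposition and the handling of the equalities and easy inclusions match the paper's strategy. Two of your separation witnesses, however, do not work.

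First, your proposed $\cpfc$-language for $\lang(\cpfc) \setminus \lang(\pat[\reg])$ is empty: any $\subs$ satisfying $\strucvar \logeq x\mathtt{a}x$ gives $|\subs(\strucvar)| = 2|\subs(x)|+1$, which is odd, while any $\subs$ satisfying $\strucvar \logeq yy$ gives even length. Since $\emptyset \in \lang(\pat[\reg])$ (type the single variable by $\emptyset$), this witnesses nothing, and the dependent strictness $\lang(\pat[\reg]) \subset \lang(\cpfcreg)$ is also left open. The paper uses a genuinely different witness: it takes $\varphi \df \cqhead{} (x_1 \logeq \mathtt{a}) \land (x_2 \logeq \mathtt{b}\cdot y \cdot \mathtt{b} \cdot y \cdot \mathtt{b})$, whose language is the set of words that contain an $\mathtt{a}$ \emph{and} contain a factor of the form $\mathtt{b}u\mathtt{b}u\mathtt{b}$, and then argues by a positional case analysis (on where a hypothetical typed pattern must place its unique ``$\mathtt{a}$-keeper'' variable relative to its ``$\mathtt{b}$-keepers'') that no single regularly typed pattern defines this language.

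Second, your witness $\mathtt{a}^*$ for $\lang(\fcucq) \subset \lang(\fcregucq)$ is actually \emph{in} $\lang(\fcucq)$. The $\cpfc$ given by $\cqhead{} (\strucvar \logeq x\mathtt{a}) \land (\strucvar \logeq \mathtt{a}x)$ forces $\subs(x)\mathtt{a} = \mathtt{a}\subs(x)$, hence $\subs(x) \in \mathtt{a}^*$, and defines exactly $\mathtt{a}^+$; disjoining with $\cqhead{} (\strucvar \logeq x) \land (\strucvar \logeq xx)$ (which defines $\{\emptyword\}$) yields $\mathtt{a}^*$. Note also that even granting your ``singleton-or-infinite'' dichotomy, it would not force each disjunct to be finite, so the argument would not go through regardless. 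The paper does not attempt a direct construction here; it invokes the result of Freydenberger and Peterfreund that $\epfc$ (and hence $\fcucq$) cannot express all regular languages. Identifying a concrete regular language outside $\lang(\fcucq)$ and proving this from scratch is nontrivial and is not what your sketched case analysis delivers.
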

To prove this theorem, we consider pairs of language representations from~\cref{fig:hierarchy}, and show that the corresponding relationship holds.
For readability, this proof is given its own section, and is split into a series of lemmas.

First, we give the definition of \emph{quotients}.
It is known that regular languages are closed under left and right quotients.
We use this closure property in the proof of~\cref{typedpatConnection}.

\index{quotient}
\begin{definition}\label{defn:quot}\index{left quotient|see {quotient}}\index{right quotient|see {quotient}}\index{quotient}
For a language $L \subseteq \Sigma^*$ and a symbol $\mathtt{a} \in \Sigma$, the \index{language quotient}\emph{right quotient of $L$ by $\mathtt{a}$}, denoted by $L \rquot \mathtt{a}$, is the language of all $w \in \Sigma^*$ where $w \cdot \mathtt{a} \in L$. 
Likewise, the \emph{left quotient of $L$ by $\mathtt{a}$}, denoted by $L \lquot \mathtt{a}$, is the language of all $w \in \Sigma^*$ such that $\mathtt{a} \cdot w \in L$.
\end{definition}

\subsection*{Proof of~\cref{thm:hierarchy}.}
The lemmas used to prove this theorem are as follows:
\begin{itemize}
\item \cref{FCCQandReg}. $\lang(\cpfc) \mathrel{\#} \lang(\reg)$.
\item \cref{patAndFCCQ}. $\lang(\pat) \subset \lang(\cpfc)$.
\item \cref{cpfcAndCpfcreg}. $\lang(\cpfc) \subset \lang(\cpfcreg)$.
\item \cref{cpfcAndCpfcreg}. $\lang(\fcucq) \subset \lang(\fcregucq)$.
\item \cref{cpfclang}. $\lang(\cpfc) \subset \lang(\fcucq)$.
\item \cref{typedpatConnection}. $\lang(\pat[\reg]) \mathrel{\#} \lang(\cpfc)$.
\item \cref{typedpatConnection}. $\lang(\pat[\reg]) \mathrel{\#} \lang(\fcucq)$.
\item \cref{typedpatConnection}. $\lang(\pat[\reg]) \subset \lang(\cpfcreg)$.
\item \cref{lemma:FCandDS}. $\lang(\cpfcreg) = \lang(\sercq)$.
\item \cref{lemma:FCandDS}. $\lang(\fcregucq) = \lang(\rgx^\core)$.
\end{itemize}

Since the pattern $xx$ generates the non-regular language $\{ ww \mid w \in \Sigma^*\}$, and the simple regular expression $\emptyset$ is not expressible by a pattern, $\lang(\pat) \mathrel{\#} \lang(\reg)$.
This further implies that $\lang(\reg) \subset \lang(\pat[\reg])$ and $\lang(\pat) \subset \lang(\pat[\reg])$.
\begin{lemma}
\label{FCCQandReg}
$\lang(\cpfc)$ and $\lang(\reg)$ are incomparable.
\end{lemma}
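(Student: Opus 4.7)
I will exhibit one witness language for each direction of incomparability. The $\lang(\cpfc) \setminus \lang(\reg)$ direction is immediate from the pattern-like power of $\cpfc$: consider the Boolean query $\varphi_1 \df \cqhead{} (\strucvar \logeq x \cdot x)$. A $\strucvar$-safe substitution satisfies $\varphi_1$ exactly when $\subs(\strucvar) = \subs(x)\cdot\subs(x)$, so $\lang(\varphi_1) = \{ww \mid w \in \Sigma^*\}$, which is a standard non-regular language (standard pumping lemma argument, using any two letters from $\Sigma$). Hence $\{ww \mid w \in \Sigma^*\} \in \lang(\cpfc) \setminus \lang(\reg)$.

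For the other direction, I will fix distinct $\mathtt{a}, \mathtt{b} \in \Sigma$ (available since $|\Sigma|\geq 2$) and argue that the regular language $L \df \{\mathtt{a}, \mathtt{b}\}$ lies outside $\lang(\cpfc)$. Suppose for contradiction that $\varphi \df \cqhead{} \bigwedge_{i=1}^n (x_i \logeq \alpha_i) \in \cpfc$ defines $L$. Because $\mathtt{a} \in \lang(\varphi)$, there is a $\strucvar$-safe substitution $\subs_1$ with $\subs_1(\strucvar)=\mathtt{a}$ satisfying $\body(\varphi)$. By $\strucvar$-safety, every variable image is a factor of $\mathtt{a}$, i.e., lies in $\{\emptyword, \mathtt{a}\}$; hence $\subs_1(\alpha_i) \in \{\mathtt{a}\}^*$ for every $i$ (terminals in $\alpha_i$ are preserved by $\subs_1$, so any terminal $\mathtt{b}$ in $\alpha_i$ would survive into $\subs_1(\alpha_i)$, a contradiction). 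Symmetrically, a witness $\subs_2$ for $\mathtt{b} \in \lang(\varphi)$ forces $\subs_2(\alpha_i) \in \{\mathtt{b}\}^*$ and rules out any terminal $\mathtt{a}$ in $\alpha_i$. Combining these constraints, no $\alpha_i$ contains any terminal symbol.

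Once every $\alpha_i$ is terminal-free, define the ``collapsing'' substitution $\subs_0$ with $\subs_0(\strucvar)=\emptyword$ and $\subs_0(y)=\emptyword$ for all other variables $y$. Then $\subs_0$ is trivially $\strucvar$-safe, and for each $i$ both sides of $x_i \logeq \alpha_i$ evaluate to $\emptyword$, so $\subs_0 \models \body(\varphi)$. Hence $\emptyword \in \lang(\varphi) = L$, contradicting $\emptyword \notin \{\mathtt{a},\mathtt{b}\}$. The main obstacle is the ``no-terminal'' step: one must be careful that the universe variable $\strucvar$ itself is allowed to appear in the $\alpha_i$, but this causes no trouble because $\subs_1(\strucvar)=\mathtt{a}$ and $\subs_2(\strucvar)=\mathtt{b}$ behave like variable images of length one under the $\strucvar$-safe analysis, and they in fact vanish under $\subs_0$. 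Everything else is bookkeeping about how substitutions act on patterns.
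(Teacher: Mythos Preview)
Your proof is correct. The first direction matches the paper's exactly. For the second direction, the paper simply cites Freydenberger and Peterfreund~\cite{frey2019finite} that already the larger fragment $\epfc$ fails to express all regular languages, so $\lang(\reg)\setminus\lang(\cpfc)\neq\emptyset$ follows immediately. Your argument instead exhibits a concrete witness $\{\mathtt{a},\mathtt{b}\}$ and proves it is not $\cpfc$-definable by a short elementary argument: the two satisfying substitutions for $\mathtt{a}$ and $\mathtt{b}$ jointly force every right-hand side to be terminal-free, whence the all-$\emptyword$ substitution witnesses $\emptyword\in\lang(\varphi)$. This is genuinely different and has the advantage of being fully self-contained rather than relying on an external inexpressibility result. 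One small wording issue: when you argue from $\subs_1$ that ``any terminal $\mathtt{b}$ in $\alpha_i$ would survive,'' you should really say ``any terminal other than $\mathtt{a}$,'' since for $|\Sigma|>2$ you need to rule out all non-$\mathtt{a}$ terminals (and symmetrically from $\subs_2$ all non-$\mathtt{b}$ terminals); the combination then kills every terminal. This is a trivial fix and does not affect the validity of the argument.
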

\begin{proof}
It is clear that $\{ ww \mid w \in \Sigma^* \}$ is in $\lang(\cpfc)$ and is not regular.
Furthermore, we know from~\cite{frey2019finite} that the more general $\epfc$ cannot represent all regular languages.
Thus, $\lang(\cpfc)$ and regular languages are incomparable.
\end{proof}

\begin{lemma}
\label{patAndFCCQ}
$\lang(\pat) \subset \lang(\cpfc)$.
\end{lemma}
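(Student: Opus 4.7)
The plan is to prove $\lang(\pat) \subseteq \lang(\cpfc)$ by a direct translation, and then separate the two classes using a language that no pattern can generate.

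For the inclusion, given a pattern $\alpha \in (\Sigma \cup \Xi)^*$, I would take the Boolean $\cpfc$
\[ \varphi_\alpha \df \cqhead{} (\strucvar \logeq \alpha). \]
Expanding the $\cqhead{}$ shorthand existentially quantifies every variable appearing in $\alpha$ (other than $\strucvar$). By the definition of $\cpfc$ semantics, $w \in \lang(\varphi_\alpha)$ iff there exists a $\strucvar$-safe substitution $\subs$ with $\subs(\strucvar) = w$ and $\subs(\strucvar) = \subs(\alpha)$, which collapses to $w = \subs(\alpha)$. Because every variable in $\alpha$ is necessarily mapped to a factor of $w$, the $\strucvar$-safety condition is automatically compatible with the definition of a pattern substitution. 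Hence $\lang(\varphi_\alpha)$ coincides with $\lang(\alpha)$.

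For strictness, I would use the fact that every pattern language is non-empty: given any pattern $\alpha$, the substitution mapping every variable to $\emptyword$ is a valid pattern substitution, and its image (the ``terminal skeleton'' of $\alpha$) belongs to $\lang(\alpha)$. Thus $\emptyset \notin \lang(\pat)$. On the other hand, since $|\Sigma| \geq 2$, fix two distinct symbols $\mathtt{a}, \mathtt{b} \in \Sigma$ and consider
\[ \psi \df \cqhead{} (\strucvar \logeq \mathtt{a}) \land (\strucvar \logeq \mathtt{b}). \]
Any $\subs \models \body(\psi)$ would force $\subs(\strucvar) = \mathtt{a}$ and $\subs(\strucvar) = \mathtt{b}$ simultaneously, which is impossible. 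So $\lang(\psi) = \emptyset \in \lang(\cpfc)$, establishing the strict inclusion.

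The proof does not contain any real obstacle: the containment direction is just an unpacking of definitions, and the separation exploits the syntactic ability of $\cpfc$ to form unsatisfiable conjunctions, something patterns cannot mimic. The only point that deserves mild care is verifying that the shorthand $\cqhead{}$ correctly existentially quantifies the variables of $\alpha$ (so that the $\cpfc$ matches the semantics of pattern substitutions), and that the witness for strictness relies only on the standing assumption $|\Sigma|\geq 2$.
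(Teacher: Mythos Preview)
Your proof is correct, and the separation argument is actually more elementary than the paper's. The paper establishes strictness by showing that the language $\{ w \in \Sigma^* \mid |w|_\mathtt{a} \geq 1 \text{ and } |w|_\mathtt{b} \geq 1 \}$, generated by $\cqhead{} (x \logeq \mathtt{a}) \land (y \logeq \mathtt{b})$, is not a pattern language; this requires a case analysis on how many $\mathtt{a}$'s and $\mathtt{b}$'s a hypothetical pattern $\alpha$ must contain and in which order, to rule out both $\mathtt{ab}$ and $\mathtt{ba}$ being generated. Your route---observing that $\emptyset \notin \lang(\pat)$ because every pattern has at least its terminal skeleton in its language, while an inconsistent conjunction such as $(\strucvar \logeq \mathtt{a}) \land (\strucvar \logeq \mathtt{b})$ yields $\emptyset \in \lang(\cpfc)$---is shorter and avoids the case distinction entirely. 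The paper's witness has the modest advantage of being a non-empty, non-trivial language (so the separation is not merely a degenerate artifact of unsatisfiability), but for the bare statement of the lemma your argument suffices and is cleaner.
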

\begin{proof}
The inclusion $\lang(\pat) \subseteq \lang(\cpfc)$ follows trivially from the definitions. 
We now show that $\lang(\pat) \neq \lang(\cpfc)$.

Let $\varphi \df \cqhead{} (x \logeq \mathtt{a}) \land (y \logeq \mathtt{b})$. Thus, we have that 
\[\lang(\varphi) = \{ w \mid w \in \Sigma^* \text{ and } |w|_\mathtt{a} \geq 1 \text{ and } |w|_\mathtt{b} \geq 1\}.\]

Working towards a contradiction, assume that there exists $\alpha \in (\Sigma \union \Xi)^*$ such that $\lang(\alpha) = \lang(\varphi)$.
We first prove that $|\alpha|_\mathtt{a} = 1$ and $|\alpha|_\mathtt{b} = 1$.
To show this, assume that $|\alpha|_\mathtt{a}= 0$. 
Then, consider the pattern substitution $\subs$ where $\subs(x) = \mathtt{b}$ for all $x \in \var(\alpha)$. 
Hence, there exists $w \in \lang(\alpha)$ such that $|w|_\mathtt{a} =0$, a contradiction.
The same reasoning shows that $|\alpha|_\mathtt{b} \geq 1$.
Now, if $|\alpha|_\mathtt{a} > 1$ or $|\alpha|_\mathtt{b} > 1$, then $\mathtt{ab} \notin \lang(\alpha)$ which is a contradiction.

We now have two possibilities for $\alpha$.
If the symbol $\mathtt{a}$ appears before $\mathtt{b}$ in $\alpha$, then $\mathtt{ba} \notin \lang(\alpha)$.
If the symbol $\mathtt{b}$ appears before $\mathtt{a}$ in $\alpha$, then $\mathtt{ab} \notin \lang(\alpha)$.
Since $\mathtt{ab},\mathtt{ba} \in \lang(\varphi)$, we have reached a contradiction and therefore $\lang(\varphi)$ is not a pattern language.
\end{proof}

\begin{lemma}
\label{cpfcAndCpfcreg}
$\lang(\cpfc) \subset \lang(\cpfcreg)$, and $\lang(\fcucq) \subset \lang(\fcregucq)$.
\end{lemma}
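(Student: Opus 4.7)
The plan is to observe that both inclusions $\lang(\cpfc) \subseteq \lang(\cpfcreg)$ and $\lang(\fcucq) \subseteq \lang(\fcregucq)$ are immediate from the definitions: every $\cpfc$-formula is an $\cpfcreg$-formula (with no regular constraints), and taking a single-disjunct union extends this to $\ucq$-variants. Hence the content of the lemma lies in the strictness. I expect this to be the easy part, since we need only a single separating language.

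The key observation I would use is that every Boolean $\cpfc$-formula admits the empty word. Let $\varphi \df \cqhead{} \exists \vec{y} \colon \bigwedge_{i=1}^n (x_i \logeq \alpha_i)$ be any Boolean $\cpfc$, and consider the substitution $\subs$ that sends every variable (including $\strucvar$ and all $x_i$ and all variables appearing in any $\alpha_i$) to $\emptyword$. Then $\subs$ is trivially $\strucvar$-safe, and for each atom we have $\subs(x_i) = \emptyword = \subs(\alpha_i)$, so $\subs \models \body(\varphi)$ and consequently $\emptyword \in \lang(\varphi)$. Because a Boolean $\fcucq$ is a finite disjunction of Boolean $\cpfc$s, the same conclusion transfers: $\emptyword \in \lang(\psi)$ for every Boolean $\psi \in \fcucq$.

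With that in hand, the separation is produced by any $\cpfcreg$-definable language that omits $\emptyword$. I would take $L \df \{\mathtt{a}\}$ and exhibit it via the Boolean $\cpfcreg$-formula
\[ \varphi' \df \cqhead{} (x \logeq \strucvar) \land (x \regconst \mathtt{a}), \]
so that $\lang(\varphi') = L$. Since $L \in \lang(\cpfcreg) \subseteq \lang(\fcregucq)$ but $\emptyword \notin L$, the preceding observation forces $L \notin \lang(\cpfc)$ and $L \notin \lang(\fcucq)$, which establishes both strict inclusions simultaneously. There is no real obstacle here; the only subtlety is remembering that a Boolean $\cpfc$ can never exclude $\emptyword$, which is exactly what regular constraints let us overcome.
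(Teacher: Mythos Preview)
Your argument has a genuine gap: the claim that every Boolean $\cpfc$ accepts $\emptyword$ is false. The right-hand side $\alpha_i$ of an $\cpfc$ atom lies in $(\Sigma \cup \Xi)^*$ and may contain terminal symbols, so sending every variable to $\emptyword$ need not satisfy the equation. Concretely, $\cqhead{}(\strucvar \logeq \mathtt{a})$ is a Boolean $\cpfc$ whose language is exactly $\{\mathtt{a}\}$; the all-$\emptyword$ substitution gives $\subs(\strucvar)=\emptyword$ but $\subs(\mathtt{a})=\mathtt{a}$, so it fails. This same example shows that your intended separating language $L=\{\mathtt{a}\}$ already lies in $\lang(\cpfc)$, so it cannot witness strictness.

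The paper's proof instead invokes the (nontrivial) result from \cite{frey2019finite} that $\epfc$ cannot express every regular language. Since $\lang(\fcucq)\subseteq\lang(\epfc)$, this yields a regular language outside $\lang(\fcucq)$ and hence outside $\lang(\cpfc)$, while any regular language is trivially in $\lang(\cpfcreg)$ via a single constraint $(\strucvar \regconst \gamma)$. You cannot avoid appealing to some such inexpressibility result: finite languages like $\{\mathtt{a}\}$ or $\{\emptyword,\mathtt{a}\}$ are too easily captured by word equations (or unions thereof), so the separation genuinely requires knowing something substantive about which regular languages escape $\epfc$.
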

\begin{proof}
From the definitions, we immediately can determine that $\lang(\cpfc) \subseteq \lang(\cpfcreg)$ and $\lang(\fcucq) \subseteq \lang(\fcregucq)$.
The fact that $\lang(\cpfc) \neq \lang(\cpfcreg)$ and $\lang(\fcucq) \neq \lang(\fcregucq)$ follows from the fact that $\epfc$ cannot represent all regular languages (see~\cite{frey2019finite}).
\end{proof}

The following lemma is used to show that $\lang(\cpfc) \subset \lang(\fcucq)$.
We shall also use it in~\cref{fccq-universality} to prove $\np$-completeness for the so-called \emph{universality problem} for $\cpfc$.

\begin{lemma}\label{cpfclang}
Let $\varphi \in \cpfc$. We have that $\lang(\varphi) = \Sigma^*$ if and only if for at least one $\mathtt{a} \in \Sigma$, we have that $\emptyword, \mathtt{a} \in \lang(\varphi)$.
\end{lemma}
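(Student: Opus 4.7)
The forward direction is immediate: if $\lang(\varphi) = \Sigma^*$, then in particular $\emptyword \in \lang(\varphi)$ and $\mathtt{a} \in \lang(\varphi)$ for any fixed $\mathtt{a} \in \Sigma$, so the promised witness exists.

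For the reverse direction, suppose $\emptyword, \mathtt{a} \in \lang(\varphi)$ for some $\mathtt{a} \in \Sigma$, and fix witnessing substitutions $\subs_0, \subs_1$ with $\subs_0(\strucvar) = \emptyword$, $\subs_1(\strucvar) = \mathtt{a}$, and $\subs_0, \subs_1 \models \body(\varphi)$. The first step is a structural observation: because $\subs_0$ must be $\strucvar$-safe, every $y \in \var(\varphi)$ satisfies $\subs_0(y) \sqsubseteq \emptyword$, hence $\subs_0(y) = \emptyword$. Applied to any atom $(x \logeq \alpha)$ of $\varphi$, this forces $\subs_0(\alpha) = \emptyword$, and since the image of any terminal under a substitution is itself, we conclude that $\alpha \in \Xi^*$ (no terminal symbols appear on the right-hand side of any word equation of $\varphi$). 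A second structural observation: since $\subs_1$ is also $\strucvar$-safe, for every $y \in \var(\varphi)$ we have $\subs_1(y) \in \{\emptyword, \mathtt{a}\}$, and in particular $\subs_1(\strucvar) = \mathtt{a}$.

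The second step is the substitution transfer. Given an arbitrary $w \in \Sigma^*$, define $\subs_w$ by $\subs_w(\strucvar) \df w$, and for every other variable $y \in \var(\varphi)$ set $\subs_w(y) \df w$ if $\subs_1(y) = \mathtt{a}$ and $\subs_w(y) \df \emptyword$ otherwise. This assignment is consistent on $\strucvar$ because $\subs_1(\strucvar) = \mathtt{a}$ forces $\subs_w(\strucvar) = w$. It is $\strucvar$-safe because every variable is mapped to $\emptyword$ or $w$, both of which are factors of $w$. To check that $\subs_w \models \body(\varphi)$, take any atom $(x \logeq \alpha)$ with $\alpha = y_1 \cdots y_k \in \Xi^*$. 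The identity $\subs_1(x) = \subs_1(y_1) \cdots \subs_1(y_k)$ holds over $\{\emptyword, \mathtt{a}\}$, so either $\subs_1(x) = \emptyword$ and all $\subs_1(y_i) = \emptyword$, or $\subs_1(x) = \mathtt{a}$ and exactly one $\subs_1(y_i) = \mathtt{a}$ while the remaining $y_j$ satisfy $\subs_1(y_j) = \emptyword$. In the first case, $\subs_w$ sends every symbol on both sides to $\emptyword$; in the second case, $\subs_w$ sends $x$ and exactly one $y_i$ to $w$ and everything else to $\emptyword$. Either way $\subs_w(x) = \subs_w(\alpha)$, so $\subs_w \models \body(\varphi)$ and hence $w \in \lang(\varphi)$. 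As $w$ was arbitrary, $\lang(\varphi) = \Sigma^*$.

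\emph{Main obstacle.} There is no real obstacle beyond identifying the right two structural consequences: (i) accommodating $\emptyword$ kills all terminals on right-hand sides, and (ii) accommodating a single letter $\mathtt{a}$ determines a $\{0,1\}$-pattern of which variables are empty and which equal the whole input. Once these are in place, rescaling $\mathtt{a} \mapsto w$ gives a valid substitution for every $w$, with no further case analysis needed.
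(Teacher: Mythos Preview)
Your proof is correct and follows essentially the same approach as the paper: use $\emptyword \in \lang(\varphi)$ to force all right-hand sides to be terminal-free, then use the $\{\emptyword,\mathtt{a}\}$-valued substitution witnessing $\mathtt{a} \in \lang(\varphi)$ as a template, rescaling $\mathtt{a} \mapsto w$. The only cosmetic difference is that the paper first passes to structured normal form (so every atom has $\strucvar$ on the left), whereas you handle general atoms $(x \logeq \alpha)$ directly; your treatment is slightly more self-contained for that reason.
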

\begin{proof}
The only-if direction follows immediately, and therefore we focus on the if direction.
Without loss of generality, assume that $\varphi \df \cqhead{} \bigwedge_{i=1}^n (\strucvar \logeq \alpha_i)$ is in structured normal form.
Since $\emptyword \in \lang(\psi)$, it follows that for all $i \in [n]$, the pattern $\alpha_i$ is terminal-free.

We know that $\mathtt{a} \in \lang(\varphi)$, therefore $\mathtt{a} \in \lang(\alpha_i)$ for each $i \in [n]$.
It follows that for any $\subs$ where $\subs \models \body(\varphi)$ where $\subs(\strucvar) = \mathtt{a}$, each $\alpha_i$ has exactly one variable $x_i \in \var(\alpha_i)$ such that $\subs(x_i) = \mathtt{a}$ and for all $y \in \var(\alpha_i) \setminus \{ x_i\}$, we have that $\subs(y) = \emptyword$.
Therefore, for any $w \in \Sigma^*$ we can define new substitution $\tau \models \body(\varphi)$, where $\tau(x_i) = w$ for all $i \in [n]$, and $\tau(y) = \emptyword$ for all $y \in \var(\varphi) \setminus \{x_i \mid i \in [n] \}$.
Thus, $w \in \lang(\varphi)$ for any $w \in \Sigma^*$.
\end{proof}

\begin{lemma}\label{typedpatConnection}
$\lang(\pat[\reg]) \mathrel{\#} \lang(\cpfc)$ and $\lang(\pat[\reg]) \subset \lang(\cpfcreg)$.
\end{lemma}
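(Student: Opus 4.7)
The plan is to tackle three sub-claims and then combine them. For the inclusion $\lang(\pat[\reg]) \subseteq \lang(\cpfcreg)$, a direct syntactic translation suffices: given a regularly typed pattern $(\alpha, T)$ with $\var(\alpha) = \{x_1, \dots, x_k\}$ and each $T(x_i)$ described by a regular expression $\gamma_i$, I form
\[ \varphi \df \cqhead{} (\strucvar \logeq \alpha) \land \bigwedge_{i=1}^k (x_i \regconst \gamma_i). \]
A substitution satisfies $\varphi$ precisely when it maps $\alpha$ to $\strucvar$ with each $x_i$ substituted by a word of $T(x_i)$, so $\lang(\varphi) = \lang(\alpha, T)$. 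This construction runs in polynomial time.

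For $\lang(\pat[\reg]) \not\subseteq \lang(\cpfc)$, I use the regular language $\mathtt{a}^*$: it lies in $\lang(\pat[\reg])$ via $(\alpha, T) = (x, \mathtt{a}^*)$, yet Freydenberger and Peterfreund~\cite{frey2019finite} show that $\mathtt{a}^*$ is not expressible in pure $\epfc$, and hence not in $\lang(\cpfc)$.

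The main obstacle is $\lang(\cpfc) \not\subseteq \lang(\pat[\reg])$, for which I will exhibit a specific $\cpfc$-query whose language lies outside $\lang(\pat[\reg])$. A natural candidate is $\varphi \df \cqhead{} (\strucvar \logeq x\mathtt{a}y) \land (\strucvar \logeq y\mathtt{a}x)$, which uses two equations that share the variables $x, y$ in swapped positions. To rule out $\lang(\varphi) \in \lang(\pat[\reg])$, I would assume a representation $\lang(\varphi) = \lang(\alpha, T)$ for some pattern $\alpha$ and regular types $T$, and derive a contradiction by a case analysis on the shape of $\alpha$ in the style of the proof of~\cref{patAndFCCQ}. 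The key point is that the regular type constraints are applied independently per variable, so they cannot encode the cross-occurrence coupling of $x$ and $y$ forced by the swap equation; combined with the closure of regular languages under quotients (\cref{defn:quot}) to control what each $T(x_i)$ can contain, this will force a contradiction once one inspects the $\mathtt{a}$-only words $\mathtt{a}^n \in \lang(\varphi)$ alongside the non-$\mathtt{a}$-only members of $\lang(\varphi)$ (such as those generated by the subfamily $x = y$, giving $\lang(x\mathtt{a}x)$) and the specific words outside $\lang(\varphi)$ (such as $\mathtt{aba}$). This non-expressibility argument is the technical heart of the proof and the step on which the lemma really hinges.

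Putting everything together, the second and third parts yield the incomparability $\lang(\pat[\reg]) \mathrel{\#} \lang(\cpfc)$, while the first part combined with the witness from the third part gives the strict inclusion $\lang(\pat[\reg]) \subsetneq \lang(\cpfcreg)$, since that witness lies in $\lang(\cpfc) \subseteq \lang(\cpfcreg)$ but not in $\lang(\pat[\reg])$.
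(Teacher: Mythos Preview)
Your first two sub-claims are fine and parallel the paper: the syntactic translation into $\cpfcreg$ is exactly what the paper does, and your separating example $\mathtt{a}^*$ for $\lang(\pat[\reg])\setminus\lang(\cpfc)$ works (indeed \cref{cpfclang} already kills it: any $\cpfc$-language containing $\emptyword$ and $\mathtt{a}$ is all of $\Sigma^*$).

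The gap is in the third sub-claim, which you yourself call ``the technical heart''. You propose the witness $\varphi = \cqhead{}(\strucvar\logeq x\mathtt{a}y)\land(\strucvar\logeq y\mathtt{a}x)$; a short calculation (append $\mathtt{a}$ and use commutation) shows $\lang(\varphi)=\{s(\mathtt{a}s)^n\mid s\in\Sigma^*,\ n\ge 1\}$. Whether this is outside $\lang(\pat[\reg])$ is plausible, but you do not prove it: you only announce a ``case analysis on the shape of $\alpha$'' and an appeal to quotient-closure, without saying what the cases are, what invariant the quotients pin down, or how the specific words you list (the $\mathtt{a}^n$, the $x\mathtt{a}x$ instances, the non-member $\mathtt{aba}$) combine into a contradiction. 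A proof ``in the style of \cref{patAndFCCQ}'' will not transfer as-is, because typed patterns can absorb terminal symbols into variable types, so the simple terminal-counting argument used there breaks down.

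For comparison, the paper chooses a very different witness, $\varphi = \cqhead{}(x_1\logeq\mathtt{a})\land(x_2\logeq\mathtt{b}y\mathtt{b}y\mathtt{b})$, whose language is the intersection of ``contains an $\mathtt{a}$'' with ``contains a factor $\mathtt{b}u\mathtt{b}u\mathtt{b}$''. This witness is engineered so that any candidate typed pattern must have a unique \emph{$\mathtt{a}$-keeper} variable (forcing at least one $\mathtt{a}$) and one or more \emph{$\mathtt{b}$-keeper} variables, and the proof is a four-case analysis on the relative positions of these keepers, each case producing a concrete word in the symmetric difference. Your candidate has no such built-in asymmetry to exploit, so even if it lies outside $\lang(\pat[\reg])$, the argument would need a genuinely new idea that you have not supplied.
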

\begin{proof}
Notice that for every $\alpha_T \in \pat[\reg]$, there is an equivalent $\cpfcreg$. 
Let $\alpha_T \df (\alpha,T)$, then let
\[ \varphi_{\alpha_T} \df \cqhead{} (\strucvar \logeq \alpha) \land \bigwedge_{x \in \var(\alpha)}\bigl( x \regconst T(x) \bigr). \] 
It is clear from the definitions that $\lang(\alpha_T) = \lang(\varphi_{\alpha_T})$.
It therefore follows that $\lang(\pat[\reg]) \subseteq \lang(\cpfcreg)$.
Therefore, if $\lang(\pat[\reg]) \mathrel{\#} \lang(\cpfc)$, then the inclusion $\lang(\pat[\reg]) \subset \lang(\cpfcreg)$ immediately follows.
The rest of this proof focuses on showing $\lang(\pat[\reg]) \mathrel{\#} \lang(\cpfc)$.

We know that there are regular languages that cannot be expressed in $\cpfc$. 
Consequently, $\lang(\pat[\reg]) \setminus \lang(\cpfc) \neq \emptyset$: For example, consider~\cref{cpfclang} and $(\mathtt{a} \lor \emptyword)$.
Next, we show that $\lang(\cpfc) \setminus \lang(\pat[\reg]) \neq \emptyset$. 
Consider
\[ \varphi \df \cqhead{} (x_1 \logeq \mathtt{a}) \land (x_2 \logeq \mathtt{b}\cdot y \cdot \mathtt{b} \cdot y \cdot \mathtt{b}). \]
We can represent $\lang(\varphi)$ as 
\[ \lang(\varphi) \df (\Sigma^* \cdot \mathtt{a} \cdot \Sigma^*) \intersect ( \Sigma^* \cdot \{ \mathtt{b} \cdot  u \cdot \mathtt{b} \cdot u \cdot \mathtt{b}\mid u \in \Sigma^* \} \cdot \Sigma^* ). \]
First, we show that $\lang(\varphi)$ is not regular.
To the contrary, assume that $\lang(\varphi)$ is regular.
Since regular languages are closed under intersection we have that 
\[ L_1 \df \lang(\varphi) \intersect (\mathtt{b} \cdot\mathtt{a}^* \cdot \mathtt{b} \cdot \mathtt{a}^* \cdot \mathtt{b})\]
is regular. Where, from the definition of $\lang(\varphi)$, we have 
\[ L_1 = \{ \mathtt{b} \cdot\mathtt{a}^n \cdot \mathtt{b} \cdot \mathtt{a}^n \cdot \mathtt{b} \mid n \in \mathbb{N} \}.\]
Furthermore, since regular languages are closed under quotients, we have that $L_2$ is regular where $L_2 \df (L_1 \lquot \mathtt{b}) \rquot \mathtt{b}$ and consequently $L_2 \df \{ \mathtt{a}^n \cdot \mathtt{b} \cdot \mathtt{a}^n \mid n \in \mathbb{N} \}$.
This is a contradiction, since proving $L_2$ is non-regular is straightforward exercise in the pumping lemma for regular languages (for example, see~\cite{hopcroft2001introduction}).
Therefore, we can continue with the proof with the knowledge that $\lang(\varphi)$ is not regular.

Assume there exists a regularly typed pattern language $\alpha_T \df (\alpha,T)$ such that $\lang(\alpha_T) = \lang(\varphi)$.
Further assume that $\alpha \df \alpha_1 \cdot \alpha_2 \cdots \alpha_n$ where $\alpha_i \in \Xi$ for all $i \in [n]$.
We can make the assumption that $\alpha$ is terminal free because every terminal symbol~$\mathtt{a}$ can be represented as a new variable $x_\mathtt{a}$ with the regular type~$T(x_\mathtt{a}) \df \{ \mathtt{a} \}$.

Since for all $w \in \lang(\varphi)$, we have that $|w|_{\mathtt{a}} \geq 1$, it follows that there is some $i \in [n]$ such that for all $u \in T(\alpha_i)$, we have that $|u|_{\mathtt{a}} \geq 1$.
Otherwise, there is a word $w \in \lang(\alpha_T)$ such that $|w|_{\mathtt{a}} = 0$.
This can be seen by picking a substitution~$\tau$ such that $\tau(\alpha) \in \lang(\alpha_T)$ and $|\tau(x)|_\mathtt{a} = 0$ for all $x \in \var(\alpha)$.
Note that $|\alpha|_{\alpha_i} = 1$ and there cannot exist $i' \in [n] \setminus \{ i \}$ where  for all $u \in T(\alpha_{i'})$ we have that $|u|_{\mathtt{a}} \geq 1$.
Otherwise, $\mathtt{abbb} \in \lang(\varphi)$ would not be expressible.
We call $\alpha_i$ the \emph{$\mathtt{a}$-keeper}.

Likewise, there must be some $x \in \var(\alpha)$, where for all $v \in T(x)$ we have that $|v|_{\mathtt{b}} \geq 1$.
But, $x$ may not be unique, and $|\alpha|_x = 1$ does not necessarily hold.
We call such a variable $x \in \var(\alpha)$ a \emph{$\mathtt{b}$-keeper}.
Let $j \in [n]$ be the smallest (or left-most) position such that $\alpha_j$ is a $\mathtt{b}$-keeper, and let $k \in [n]$ be the largest (or right-most) position such that $\alpha_k$ is a $\mathtt{b}$-keeper.

For any $i' \in [n]$ where $\alpha_{i'}$ is not the $\mathtt{a}$-keeper, nor a $\mathtt{b}$-keeper, then $\emptyword \in T(\alpha_{i'})$.
This is because we have shown there is exactly one $\mathtt{a}$-keeper, and there exists $w\in \lang(\varphi)$ such that $|w|_\mathtt{c} = 0$ for all $\mathtt{c} \in \Sigma \setminus \{ \mathtt{a}, \mathtt{b} \}$.

We now look at some cases for the relative positions of $i$, $j$, and $k$, and prove a contradiction for each:
\begin{description}
\item[Case 1.] $i<j$: For this case, we consider those $w \in \lang(\alpha_T)$ where $|w|_{\mathtt{a}} =1$ $|w|_{\mathtt{b}} = 3$ and $|w|_{\mathtt{c}} = 0$ for all $\mathtt{c} \in \Sigma \setminus \{ \mathtt{a}, \mathtt{b} \}$. Due to the fact that $i<j$, we have that the $\mathtt{a}$-keeper appears to the right of the right-most $\mathtt{b}$-keeper.
Therefore, for any word in $\lang(\alpha_T)$, the $\mathtt{a}$-symbol must appear to the right of some $\mathtt{b}$-symbol. Thus, $\mathtt{abbb} \in \lang(\varphi) \setminus \lang(\alpha_T)$. 
\item[Case 2.] $i>k$:  For this case, we again consider those $w \in \lang(\alpha_T)$ where $|w|_{\mathtt{a}} =1$ $|w|_{\mathtt{b}} = 3$ and $|w|_{\mathtt{c}} = 0$ for all $\mathtt{c} \in \Sigma \setminus \{ \mathtt{a}, \mathtt{b} \}$.
Analogously to Case 1, the $\mathtt{a}$-symbol must come before some $\mathtt{b}$-symbol. It therefore follows that $\mathtt{bbba} \in \lang(\varphi) \setminus \lang(\alpha_T)$. 
\item[Case 3.] $j<i<k$: 
Again, we consider those $w \in \lang(\alpha_T)$ where $|w|_{\mathtt{a}} =1$, $|w|_{\mathtt{b}} = 3$, and $|w|_{\mathtt{c}} = 0$ for all $\mathtt{c} \in \Sigma \setminus \{ \mathtt{a}, \mathtt{b} \}$.
For any $w \in \lang(\alpha_T)$ where $|w|_{\mathtt{a}} =1$, we have that the $\mathtt{a}$-symbol must come between two $\mathtt{b}$-symbols (analogously to Case 1 and Case 2). It therefore follows that $\mathtt{bbba} \in \lang(\varphi) \setminus \lang(\alpha_T)$. 
\item[Case 4.] $i=j=k$: Let $x = \alpha_i$. 
We know that for all $y \in \var(\alpha)$ where $y \neq x$, we have that $\emptyword \in T(y)$ since it is neither an $\mathtt{a}$-keeper, nor a $\mathtt{b}$-keeper. 
We hence claim that $T(x) \subseteq \lang(\varphi)$.
To prove this claim, let $w \in T(x) \setminus \lang(\varphi)$.
We can then define a substitution $\subs$ such that $\subs(x) = w$ and $\subs(y) = \emptyword$ for all $y \in \var(\alpha)$ where $y \neq x$. 
Consequently, for all $w \in T(x)$, we have that~$|w|_{\mathtt{a}} \geq 1$ and $|w|_{\mathtt{b}} \geq 3$. 
Since otherwise, $\lang(\varphi) \neq \lang(\alpha_T)$. 
We now claim that~$T(x) \intersect (\mathtt{b} \cdot \mathtt{a}^* \cdot \mathtt{b} \cdot \mathtt{a}^* \cdot \mathtt{b}) = \mathtt{b} \cdot \mathtt{a}^n \cdot \mathtt{b} \cdot \mathtt{a}^n \cdot \mathtt{b}$. To prove this claim, assume the contrary.
Then, there is some $w \in T(x)$ where $w = \mathtt{b} \cdot \mathtt{a}^p \cdot \mathtt{b} \cdot \mathtt{a}^q \cdot \mathtt{b}$ where $p,q \in \mathbb{N}$ and $q \neq p$. This is a contradiction, since then $T(x) \subseteq \lang(\varphi)$ does not hold.
Consequently, $T(x)$ is not regular which is a contradiction.
\end{description}

Hence, $\lang(\alpha_T) \neq \lang(\varphi)$ and thus there does not exist a regularly typed pattern that can generate $\lang(\varphi)$.
Consequently, $\lang(\varphi) \in \lang(\cpfc) \setminus \lang(\pat[\reg])$.
\end{proof}

The focus of the proof of~\cref{typedpatConnection} is giving a language that is in $\lang(\cpfc)$ but is not in $\lang(\pat[\reg])$.
This immediately implies the following:
\begin{itemize}
\item $\lang(\pat[\reg]) \mathrel{\#} \lang(\fcucq)$ as $\fcucq$ cannot represent all regular languages and we have given some $\lang \in \lang(\fcucq) \setminus \lang(\pat[\reg])$, and 
\item $\lang(\pat[\reg]) \subset \lang(\cpfcreg)$ as each regularly typed pattern language can be easily written as an $\cpfcreg$. 
\end{itemize}

\begin{lemma}\label{lemma:FCandDS}
$\lang(\cpfcreg) = \lang(\sercq)$ and $\lang(\fcregucq) = \lang(\rgx^\core)$.
\end{lemma}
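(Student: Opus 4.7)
The plan is to obtain both equalities by assembling results that are already available. For the first equality, $\lang(\cpfcreg)=\lang(\sercq)$, I would appeal directly to \cref{Prop:RGXtoPatCQ} and \cref{Prop:PatCQtoRGX}. The crucial observation is that both constructions preserve the Boolean case: in \cref{Prop:PatCQtoRGX} the resulting $\sercq$ is projected onto $Y=\fvar(\varphi)$, so if $\fvar(\varphi)=\emptyset$ then $\SVars{\query}=\emptyset$; symmetrically, in \cref{Prop:RGXtoPatCQ} the set of free variables of the constructed $\varphi_\query$ is $\{x^P,x^C \mid x\in\SVars{\query}\}$, which is empty whenever $\query$ is Boolean. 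Combining this with \cref{defn:realizing} and \cref{defn:realizing2}, for Boolean representations we have $P(w)\neq\emptyset$ iff $(w,\subs)\models\varphi$ iff $w\in\lang(\varphi)$, so $\lang(P)=\lang(\varphi)$. Hence the two language classes coincide.

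For the second equality, $\lang(\fcregucq)=\lang(\rgx^{\core})$, my plan is to factor through $\epfcreg$ and invoke \cref{theorem:frey2019finite}, which already gives $\lang(\epfcreg)=\lang(\rgx^{\core})$ (again after restricting to Boolean queries as above). Thus it suffices to show $\lang(\fcregucq)=\lang(\epfcreg)$. The inclusion $\lang(\fcregucq)\subseteq\lang(\epfcreg)$ is immediate, since every $\cpfcreg$ is syntactically an existential positive $\fcreg$-formula (only conjunction and existential quantification), and a finite disjunction of such formulas is again existential positive. The reverse inclusion uses the standard move of putting an $\epfcreg$-formula into prenex form and then into disjunctive normal form, as is done for $\ucq$-equivalence of existential positive $\fo$ (cited in the preliminaries just after \cref{defn:CQjointree}); disjunction distributes over the other connectives and existential quantifiers can be pulled to the front since we have no universal quantifiers or negation. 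Regular constraints $(x \regconst \gamma)$ are atomic and behave like ordinary atoms under this rewriting. The resulting formula is a disjunction $\bigvee_i \varphi_i$ where each $\varphi_i\in\cpfcreg$.

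The only bookkeeping obstacle is the syntactic requirement of our definition of $\fcregucq$ that all disjuncts share the same free-variable set. This is easily handled: for each disjunct $\varphi_i$, add trivial conjuncts of the form $(x\logeq x)$ for every $x\in\bigcup_j\fvar(\varphi_j)\setminus\fvar(\varphi_i)$, which are semantically vacuous under any $\strucvar$-safe substitution. This normalization preserves the models and hence the generated language, so the argument goes through and we obtain $\lang(\fcregucq)=\lang(\epfcreg)=\lang(\rgx^{\core})$, completing the proof.
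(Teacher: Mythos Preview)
Your proof is correct. For the first equality you use the same two propositions as the paper, with a little extra care about the Boolean case. For the second equality your route differs slightly from the paper's: the paper goes via \splog and its DPC-normal form (\cref{theorem:Splog} and \cref{defn:DPC}), whereas you go directly through $\epfcreg$ using \cref{theorem:frey2019finite} together with the standard ``existential positive $=$ UCQ'' normalization. These are really the same argument at heart---the DPC-normal form lemma from~\cite{fre:splog} is precisely that normalization specialized to \splog---so your approach is a minor but perfectly valid variation that avoids introducing \splog as an intermediary. Your handling of the free-variable alignment via trivial atoms $(x \logeq x)$ is a clean way to meet the syntactic requirement on $\fcregucq$.
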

\begin{proof}
The fact that $\lang(\cpfcreg) = \lang(\sercq)$ follows directly from~\cref{Prop:RGXtoPatCQ} and~\cref{Prop:PatCQtoRGX}. 
To show that $\lang(\fcregucq) = \lang(\rgx^\core)$, we consider~\cref{theorem:Splog} which shows that any core spanner can be realized by a so-called \splog-formula, and any \splog-formula can be realized by a core spanner.
Furthermore, we can assume that any \splog-formula is in DPC-normal form, which can also be considered a normal form of $\fcregucq$ (see~\cref{defn:DPC}).
\end{proof}

There are a couple of points of emphasis for~\cref{lemma:FCandDS}. 
Firstly, fragments of $\fcreg$ reason over words and document spanners reason over spans, meaning that for a non-Boolean query, the resulting relations are not technically ``equal''. 
This first point is not of much importance for~\cref{lemma:FCandDS} as we only consider Boolean queries.
Secondly, the connection between $\cpfcreg$s and $\sercq$s is stronger than just generating the same class of languages: Conversions between the two models can be done in polynomial time.

\subsection{Simulating Union}

The reader may have noticed that~\cref{fig:hierarchy} contains an edge labelled with a question mark between $\cpfcreg$ and $\fcregucq$.
While the author believes that $\lang(\cpfcreg)$ is a strict subset of $\lang(\fcregucq)$, a proof not immediately obvious to the author.
This is because some restricted forms of union are expressible in $\cpfcreg$.
We now examine cases where $\cpfcreg$ can express union.

\begin{example}\label{example:neq}
The language $L_< \df \{ \mathtt{a}^n \mathtt{b} \mathtt{a}^m \mid n,m \in \mathbb{N} \text{ and } n<m\}$ can be expressed in $\cpfcreg$ using the following formula:
\[ \varphi_< \df \cqhead{} (\strucvar \logeq x \cdot \mathtt{b} \cdot y) \land (y \logeq z \cdot x) \land (y \regconst \mathtt{a}^+) \land (z \regconst \mathtt{a}^+). \]
Furthermore, we can represent the language $L_{\neq} \df \{ \mathtt{a}^n \mathtt{b} \mathtt{a}^m \mid n,m \in \mathbb{N} \text{ and } n \neq m\}$ in $\fcregucq$ using a union of two queries analogous to $\varphi_<$.
However, as we observe with the following query, $L_{\neq}$ can be expressed in $\cpfcreg$.
\begin{multline*}
\cqhead{} (\strucvar \logeq x \cdot y_{\mathtt{b},1} \cdot z \cdot y_{\mathtt{b},2} \cdot x) \land (\strucvar \regconst \mathtt{a}^* \mathtt{b} \mathtt{a}^*) \land \\ (z \regconst \mathtt{a}^+) \land (x \regconst \mathtt{a}^*) \land (y_\mathtt{b} \logeq y_{\mathtt{b},1} \cdot y_{\mathtt{b},2}) \land (y_\mathtt{b} \regconst \mathtt{b}). 
\end{multline*}
The use of $y_\mathtt{b}$ ensures that $y_{\mathtt{b},1} = \mathtt{b}$ and $y_{\mathtt{b},2} = \emptyword$, or vice versa.
Thus, the word $\subs(z)$, for any satisfying substitution $\subs$, appears one side of the $\mathtt{b}$ symbol.
This ensures that $\mathtt{a}^m \mathtt{b} \mathtt{a}^n$ and $m \neq n$.
\end{example}

In~\cref{example:neq}, we use concatenation and the fact that for any $w \in L_{\neq}$, we have that $|w|_\mathtt{b} = 1$ to express $L_{\neq}$ in $\cpfcreg$.
While there may be more simple $\cpfcreg$s that generate the language $L_{\neq}$ considered in~\cref{example:neq}, the query given demonstrates the complexity one can encode using concatenation.
We shall later see that this leads to undecidability for many decision problems.
Next, we consider classes of $\cpfcreg$ that allow for some form of union. 

Peterfreund, ten Cate, Fagin, and Kimelfeld~\cite{peterfreund2019recursive} showed that for unary alphabets, $\lang(\rgx^\cored) = \lang(\reg)$.
This is observed from the fact that a generalized core spanner over a unary alphabet generates a so-called \emph{semi-linear} language.
This allows us to make the following simple observation about the comparative expressive power of fragments of $\fcreg$ over unary alphabets.

\begin{observation}\label{unary:cqUcq}
If $|\Sigma|=1$, then the following are equivalent
\begin{itemize}
\item $\lang(\fcreg)$,
\item $\lang(\fcregucq)$, 
\item $\lang(\cpfcreg)$,
\item $\lang(\pat[\reg])$, and
\item $\lang(\reg)$.
\end{itemize}
\end{observation}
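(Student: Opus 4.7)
The plan is to establish a chain of inclusions linking all five classes, and then to collapse the chain at the top using the cited result of Peterfreund et al.\ that $\lang(\rgx^\cored)=\lang(\reg)$ over a unary alphabet.

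First I would observe the easy inclusions, all of which hold over an arbitrary alphabet. Every regular language $L\subseteq\Sigma^*$ is realized by the regularly typed pattern $(x,T)$ with $T(x)\df L$, so $\lang(\reg)\subseteq\lang(\pat[\reg])$. The inclusion $\lang(\pat[\reg])\subseteq\lang(\cpfcreg)$ was already shown in the construction used for \cref{typedpatConnection}. Trivially $\lang(\cpfcreg)\subseteq\lang(\fcregucq)$ (take a union of one disjunct), and every $\fcregucq$-formula is a disjunction of existential positive $\fcreg$-formulas, hence $\lang(\fcregucq)\subseteq\lang(\fcreg)$. Chaining these yields
\[
\lang(\reg)\subseteq\lang(\pat[\reg])\subseteq\lang(\cpfcreg)\subseteq\lang(\fcregucq)\subseteq\lang(\fcreg).
\]

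Next I would close the loop. By \cref{theorem:frey2019finite}, $\fcreg$ and $\rgx^\cored$ have the same expressive power (via the realization correspondence), and in particular generate the same class of languages; hence $\lang(\fcreg)=\lang(\rgx^\cored)$. Specializing to $|\Sigma|=1$ and invoking the cited result of Peterfreund, ten Cate, Fagin, and Kimelfeld~\cite{peterfreund2019recursive}, we get $\lang(\rgx^\cored)=\lang(\reg)$. Combining these two equalities gives $\lang(\fcreg)=\lang(\reg)$ over a unary alphabet.

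Finally, substituting this top-level equality into the chain of inclusions forces every intermediate class to coincide with $\lang(\reg)$. This proves the observation. There is no real obstacle here beyond being careful that the ``language'' of an $\fcreg$-formula is defined via Boolean queries, so that the passage from the expressive-power statement of \cref{theorem:frey2019finite} to an equality of language classes is legitimate; since Boolean queries translate under realization (a Boolean query has no free span/prefix-content variables, and the correspondence preserves the set of input words for which the relation is nonempty), this step goes through without incident.
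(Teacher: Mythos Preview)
Your proposal is correct and follows essentially the same approach as the paper: establish the chain $\lang(\reg)\subseteq\lang(\pat[\reg])\subseteq\lang(\cpfcreg)\subseteq\lang(\fcregucq)\subseteq\lang(\fcreg)$, then collapse it via $\lang(\fcreg)=\lang(\rgx^\cored)=\lang(\reg)$ using \cref{theorem:frey2019finite} and the unary-alphabet result of~\cite{peterfreund2019recursive}. The paper's proof is slightly terser but structurally identical.
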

\begin{proof}
From the definitions, we know
\[\lang(\reg) \subseteq\lang(\pat[\reg]) \subseteq \lang(\cpfcreg) \subseteq \lang(\fcregucq) \subseteq \lang(\fcreg).\] 
Furthermore, for unary alphabets, generalized core spanners generate exactly the regular languages~\cite{peterfreund2019recursive}.
From~\cite{frey2019finite} we know that the class of $\fcreg$ languages is exactly the class of generalized core spanner languages.
Thus, for unary alphabets $\lang(\fcreg) = \lang(\reg)$.
This means that the inclusions all collapse, which concludes this proof as the stated observation follows directly.
\end{proof}

Therefore, if $|\Sigma| = 1$, then for any $\varphi \in \fcregucq$, there exists $\gamma \in \reg$ such that $\lang(\varphi) = \lang(\gamma)$.
Since $\lang(\reg) \subseteq \lang(\cpfcreg)$ it follows directly from~\cref{unary:cqUcq} that there exists some $\psi \in \cpfcreg$ such that $\lang(\psi) = \lang(\varphi)$. 

Next, we show a sufficient condition for $\lang(\fcucq) \subset \lang(\cpfcreg)$ to hold for any alphabet size.
This is based on the condition that $\cpfcreg$ is closed under so-called \emph{single letter quotients}.
For a logic $\mathcal{C} \subseteq \fcreg$, we say that $\mathcal{C}$ is closed under \emph{single letter quotients} if for any $\varphi \in \mathcal{C}$ and any $\mathtt{a} \in \Sigma$, there exists $\varphi_{\lquot \mathtt{a}} \in \mathcal{C}$ and $\varphi_{\rquot \mathtt{a}} \in \mathcal{C}$ such that $\lang(\varphi_{\lquot \mathtt{a}}) = \lang(\varphi) \lquot \mathtt{a}$ and $\lang(\varphi_{\rquot \mathtt{a}}) = \lang(\varphi) \rquot \mathtt{a}$.
Refer back to~\cref{defn:quot} for the definition of quotients.

The following proof is similar to the proof of Lemma 3.2 in~\cite{freydenberger2013expressiveness}, where it was shown that that one can effectively construct a so-called CRPQ with equalities, from a union of H-systems. 
Without going into details, an H-system is equivalent to a regularly typed pattern language, where each regular type must include the empty word.
Then, the resulting CRPQ with equalities encodes the language of a union of H-systems with ``extra padding''.
The following proof uses a similar way to encode union by utilizing concatenation.

\begin{proposition}\label{quotientExp}
If $\lang(\cpfcreg)$ is closed under single letter quotients, then $\lang(\fcucq)$ is strictly less expressive than $\lang(\cpfcreg)$.
\end{proposition}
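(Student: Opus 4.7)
The plan is to establish the two inclusions separately. For the easy direction that $\lang(\cpfcreg)\setminus\lang(\fcucq)$ is non-empty: every regular language $R$ is expressible as $\cqhead{}(\strucvar\regconst R)\in\cpfcreg$, while $\fcucq$ sits inside $\epfc$, which is known not to capture every regular language (the fact already invoked in the proof of~\cref{cpfcAndCpfcreg}). So once the other inclusion $\lang(\fcucq)\subseteq\lang(\cpfcreg)$ is in hand, the strict containment follows.

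For the inclusion, start from an arbitrary $\varphi\df\bigvee_{i=1}^k \varphi_i\in\fcucq$ with each $\varphi_i\in\cpfc$; an induction on $k$ reduces to $k=2$. By~\cref{lemma:StrucNormalForm} assume $\varphi_i\df\cqhead{}\bigwedge_{j=1}^{n_i}(\strucvar\logeq\alpha^{(i)}_j)$ with none of the $\alpha^{(i)}_j$ containing $\strucvar$. Fix some $\mathtt{a}\in\Sigma$ and choose a length $M$ that dominates the total size of all the $\alpha^{(i)}_j$. The construction builds a single $\hat\psi\in\cpfcreg$ whose language is the padded union $\mathtt{a}^M\cdot(\lang(\varphi_1)\cup\lang(\varphi_2))$. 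The key gadget is the switch of~\cref{example:neq}: a fresh variable $s$ together with $(s\logeq s_1\cdot s_2)\land(s\regconst\mathtt{a})$, whose only satisfying assignments are $(s_1,s_2)\in\{(\mathtt{a},\emptyword),(\emptyword,\mathtt{a})\}$. The two pieces $s_1$ and $s_2$ are threaded through the word equations of both branches, so that the position of the non-empty $s_j$ selects which $\varphi_i$ is active; the long $\mathtt{a}^M$ padding supplies enough slack that, when branch $i$ is selected, the variables appearing only in the equations of the non-selected $\varphi_{3-i}$ can always be assigned to factors of the $\mathtt{a}^M$-block, making those equations vacuously realisable, while the equations of $\varphi_i$ constrain the un-padded suffix precisely as $\varphi_i$ does. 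Finally, applying the single-letter left-quotient by $\mathtt{a}$ exactly $M$ times to $\lang(\hat\psi)$ produces $\lang(\varphi_1)\cup\lang(\varphi_2)$, and by the closure hypothesis every intermediate language remains in $\lang(\cpfcreg)$.

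The main technical obstacle is the design of $\hat\psi$ in the padded-encoding step: the threading of $s_1$ and $s_2$ and the choice of $M$ must be such that selecting either branch admits a consistent assignment of the other branch's variables to factors of the padding, for otherwise the conjunction of both branches' equations would over-constrain $\strucvar$. This is exactly the role of the ``extra padding'' in the proof of Lemma~3.2 of~\cite{freydenberger2013expressiveness}, which the construction here adapts to the $\cpfcreg$ setting, and it is the reason the argument genuinely invokes the quotient-closure hypothesis rather than staying entirely inside $\cpfcreg$.
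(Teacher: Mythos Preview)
Your overall plan --- encode the union inside a padded language and then strip the padding via single-letter quotients --- matches the paper's strategy, but two concrete steps in your proposal do not go through as written.

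\textbf{The induction on $k$ does not reduce to your $k=2$ construction.} After one application of the $k=2$ case you obtain a formula in $\cpfcreg$, not in $\cpfc$; the next step of the induction therefore requires combining a $\cpfcreg$-language with a $\cpfc$-language. But your construction explicitly assumes both $\varphi_1,\varphi_2\in\cpfc$ in structured normal form (pure word equations, no regular constraints). Either the gadget must be shown to work for two arbitrary $\cpfcreg$ inputs, or --- as the paper does --- you must handle all $k$ disjuncts simultaneously in a single construction.

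\textbf{The ``vacuous realisability'' of the non-selected branch is not justified.} Your $\alpha_j^{(i)}$ may contain terminal symbols from $\Sigma$; structured normal form does not remove them. If the non-selected branch has an equation $(\strucvar\logeq\alpha)$ with $\alpha$ containing some terminal $\mathtt{b}\neq\mathtt{a}$, then no assignment of the branch's \emph{variables} to factors of the $\mathtt{a}^M$-block will satisfy it: the left-hand side is the whole word $\mathtt{a}^M\cdot w$, and the terminal $\mathtt{b}$ on the right must land somewhere. Your two-valued switch $(s_1,s_2)$ only produces $\mathtt{a}$ or $\emptyword$, so it cannot by itself suppress a $\mathtt{b}$ on the right-hand side. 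You say the threading of $s_1,s_2$ handles this, but you never explain the mechanism.

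The paper's construction addresses exactly these two points. It handles all $k$ branches at once (no induction), works over $\Sigma\cup\{\dagger,\$\}$ with fresh separator symbols, and --- crucially --- replaces every terminal $\mathtt{c}$ occurring in branch $i$ by a fresh variable $x_i^{\mathtt{c}}$ constrained to $\lang(\mathtt{c}\lor\emptyword)$. A selection block $x_i^{\mathsf{sel}}=x_i^\dagger x_i^{\mathtt{a}_1}\cdots x_i^{\mathtt{a}_m}x_i^\dagger$, together with a single regular constraint on $\strucvar$ that forces exactly four $\dagger$'s, guarantees that precisely one $i$ has $x_i^\dagger=\dagger$ and all its terminal variables set to their symbols, while every other branch's terminal variables and code variable $x_i^{\mathsf{cod}}$ collapse to $\emptyword$. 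That is the missing ingredient: a per-branch, per-symbol switch tied to the selection, not merely a binary $\mathtt{a}/\emptyword$ toggle.
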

\begin{proof}
We know that there are languages in $\lang(\cpfcreg)$ that are not in $\lang(\fcucq)$, since $\fcucq$ cannot express every regular language.
This follows from the fact that $\epfc$ cannot express all the regular languages~\cite{frey2019finite}.

Now assume that $\lang(\cpfcreg)$ is closed under single letter quotients. 
For intuition, given $\psi \in \fcucq$, we shall construct $\varphi \in \cpfcreg$, such that 
\[\lang(\varphi) = \dagger \cdot \mathtt{a}_1 \cdot \mathtt{a}_2 \cdots \mathtt{a}_m \cdot \dagger \cdot \$ \cdot \dagger \cdot \lang(\psi) \cdot \dagger \cdot \$,\]
where $\Sigma = \{  \mathtt{a}_1 , \mathtt{a}_2 , \dots,  \mathtt{a}_m \}$ and we have the meta-symbols $\dagger, \$ \notin \Sigma$, where $\dagger \neq \$$.
Therefore, if $\lang(\cpfcreg)$ is closed under single letter quotients, we can remove the ``padding'' from $\lang(\varphi)$ to get a query $\varphi' \in \cpfcreg$ such that $\lang(\varphi') = \lang(\psi)$.

The rest of this proof is dedicated to showing how we construct such a query $\varphi\in\cpfcreg$ from $\psi \in \fcucq$.
Let $\psi \df \bigvee_{i=1}^k \psi_i$ where $\psi_i$ is a Boolean $\cpfc$ for each $i\in[k]$.
We safely make the following assumptions about $\psi$:
\begin{enumerate}
\item $\psi_i$ is in structured normal form for all $i \in [n]$ (see~\cref{lemma:StrucNormalForm}), and 
\item $\var(\psi_i) \intersect \var(\psi_j) = \emptyset$ if $i \neq j$.
\end{enumerate}
We can assume the second assumption stated due to the fact that each $\psi_i$ is Boolean, and thus the set of free variables of $\psi_i$ is always empty for every $i \in[k]$.
Therefore, a simple renaming of variables for each $\psi_i$ to a variable set unique to $\psi_i$ does not change the semantics of the query.

We then define
\[  \varphi \df \cqhead{} (\strucvar \logeq x_1^\mathsf{sel} \cdot x_2^\mathsf{sel} \cdots x_k^\mathsf{sel} \cdot \$ \cdot x_1^\mathsf{cod} \cdot x_2^{\mathsf{cod}} \cdots x_k^\mathsf{cod} \cdot \$) \land (\strucvar \regconst \gamma) \land \bigwedge_{i=1}^k \psi_i^\mathsf{sel} \land \bigwedge_{i=1}^k \psi_i^\mathsf{cod}, \]
where each $\psi_i^\mathsf{sel}$ and $\psi_i^\mathsf{cod}$ are subformulas that shall be defined later, and
\[ \gamma \df \dagger \cdot \mathtt{a}_1 \cdot \mathtt{a}_2 \cdots \mathtt{a}_m \cdot \dagger \cdot \$ \cdot \dagger \cdot \Sigma^* \cdot \dagger \cdot \$.\]

It is clear that $w \in \lang(\psi)$ if $w \in \lang(\psi_i)$ for at least one $i \in [k]$.
We use a variable~$x_i^\mathsf{sel}$ along with $\psi_i^\mathsf{sel}$ for each $i \in [k]$ to ``select'' which $\psi_i$ for $i \in [k]$ we wish to use.
Then, $x_i^\mathsf{cod}$ along with $\psi_i^\mathsf{cod}$ is used to \emph{encode} the query $\psi_i$.
Therefore, for any $w \in \lang(\psi)$, we only need $w$ to match for one such $\psi_i^\mathsf{sel}$ and $\psi_i^\mathsf{cod}$ pair.

For every $i \in [k]$, we define $\psi_i^\mathsf{sel}$ as follows:
\[ \psi_i^\mathsf{sel} \df (x_i^\mathsf{sel} \logeq x_i^\dagger \cdot x_i^{\mathtt{a}_1} \cdot x_i^{\mathtt{a}_2} \cdots x_i^{\mathtt{a}_m} \cdot x_i^\dagger) \land \bigl( x_i^\dagger \regconst (\dagger \lor \emptyword) \bigr)  \land \bigwedge_{j=1}^m \bigl( x_i^{\mathtt{a}_j} \regconst (\mathtt{a}_j \lor \emptyword) \bigr).  \]

Note that for any substitution $\subs$ where $\subs \models \body(\varphi)$, there is at most one $i \in [k]$ such that $\subs(x_i^\dagger) = \dagger$, and for all $j \in [k] \setminus \{i\}$, we have that $\subs(x_j^\dagger) = \emptyword$.
This is because $\subs(\strucvar) \in \lang(\gamma)$ must hold, and therefore $|\subs(\strucvar)|_\dagger = 4$.
However, if $\subs(x_j^\dagger) = \subs(x_i^\dagger) = \dagger$ where $i \neq j$, then $|\subs(\strucvar)|_\dagger \geq 8$.
Analogously, there is at least one $i \in [k]$ such that $\subs(x_i^\dagger) \neq \emptyword$.

For each $i \in [k]$ let $\psi_i \df \cqhead{} \bigwedge_{j=1}^{l_i} (\strucvar \logeq \alpha_{i,j})$, and let $\psi_i^\mathsf{cod}$ be defined by 
\[ \psi_i^\mathsf{cod} \df \bigwedge_{j=1}^{l_i} (x_i^\mathsf{cod} \logeq x_i^\dagger \cdot \bar\alpha_{i,j} \cdot x_i^\dagger) \land \bigl( x_i^\mathsf{cod} \regconst ( (\dagger \cdot \Sigma^* \cdot \dagger ) \lor \emptyword) \bigr) ,\]
where $\bar\alpha_{i,j} \df h(\alpha_{i,j})$ for the  partial morphism $h$ defined as $h(x) = x_i$ for all $x \in \Xi$, where $x_i$ is a new variable, unique for $x$ and $i \in [k]$, and $h(\mathtt{a}) = x^{\mathtt{a}}_i$ for $\mathtt{a} \in \Sigma$.

\subparagraph{Correctness.}
We now prove that $v \in \lang(\psi)$ if and only if $w \in \lang(\varphi)$ where
\[ w =  \dagger \cdot \mathtt{a}_1 \cdots \mathtt{a}_m \cdot \dagger \cdot \$ \cdot \dagger \cdot v \cdot \dagger \cdot \$. \]

\emph{If direction.}
Since for any $\subs$ where $\subs \models \body(\varphi)$ we have that $\subs(x_i^\dagger)\neq \emptyword$ for exactly one $i \in [k]$, it follows that $\subs(x_i^\mathsf{cod}) \neq \emptyword$, and $\subs(x_j^\mathsf{cod}) = \emptyword$ for all $j \in [n] \setminus \{ i \}$.
Thus, $\subs(\strucvar) = \dagger \cdot \mathtt{a}_1 \cdots \mathtt{a}_m \cdot \dagger \cdot \$ \cdot \dagger \cdot v \cdot \dagger \cdot \$$ where $v \in \lang(\psi_i)$ for at least one $i \in [k]$.

\emph{Only if direction.}
For any $v \in \lang(\psi_i)$ where $i \in [k]$, let 
\[w = \dagger \cdot \mathtt{a}_1 \cdots \mathtt{a}_m \cdot \dagger \cdot \$ \cdot \dagger \cdot v \cdot \dagger \cdot \$.\]
Then, there is a substitution $\subs$ where $\subs \models \body(\varphi)$ such that $\subs(x_i^\mathsf{cod}) = \dagger v \dagger$.

Thus
\[\lang(\varphi) = \dagger \cdot \mathtt{a}_1 \cdot \mathtt{a}_2 \cdots \mathtt{a}_m \cdot \dagger \cdot \$ \cdot \dagger \cdot \lang(\psi) \cdot \dagger \cdot \$,\]
where $\Sigma = \{  \mathtt{a}_1 , \mathtt{a}_2 , \dots,  \mathtt{a}_m \}$ and $\dagger, \$ \notin \Sigma$, where $\dagger \neq \$$ are meta symbols.
Consequently, if indeed $\cpfcreg$ is closed under single letter quotients, then there exists some $\varphi' \in \cpfcreg$ such that $\lang(\varphi') = \lang(\psi)$.
\end{proof}

In this section, we have considered the expressive power of many fragments of $\fcreg$, with a focus on conjunctive query fragments.
The results of this section have been summarized in~\cref{fig:hierarchy}.
The biggest open problem from this section is whether $\lang(\cpfcreg)$ is a strict subclass of $\lang(\fcregucq)$.

The author believes that $\lang(\cpfcreg) \subset \lang(\fcregucq)$, however, a proof is not immediately obvious.
To demonstrate this, we have given cases for which $\cpfcreg$ can simulate union using properties of combinatorics on words, namely via concatenation (for example, see~\cref{quotientExp}).

It is even open as to whether $\fun{\cpfcreg}\subset\fun{\fcregucq}$.
That is, does there exist a query (perhaps a non-Boolean query) that can be expressed in $\fcregucq$ but not in $\cpfcreg$. 
While $\lang(\cpfcreg) \subset \lang(\fcregucq)$ implies that $\fun{\cpfcreg} \subset \fun{\fcregucq}$, the converse may not hold.

\section{Decision Problems and Tradeoffs}\label{sec:decProbs}
We now consider the decidability and complexity of fundamental decision problems for $\cpfcreg$, $\fcregucq$, and related models.
One of our main focuses is on \emph{static analysis problems}, which have important implications for query optimization.
Many of these static analysis problems are intractable or undecidable. 
Consequently, certain query optimizations, such as $\cpfc$ query minimization, are not possible.
\cref{table:decResults} gives a summary of the complexity and decidability results from this section.

It is easily observed that model checking for $\cpfc$ is $\np$-complete by considering the membership problem for pattern languages~\cite{ehrenfreucht1979finding} and $\cqhead{} \strucvar \logeq \alpha$.
However, we show that this holds even for rather restricted cases.
One may think that simply restricting the query to a weakly acyclic $\cpfc$, restricting the class of word equations to very simple word equations, or restricting the size of the input word would result in tractable model checking.
We prove in~\cref{npcomplete-modelcheck} that for each of these cases, we still hit $\np$-hardness. 

But first, we define the class of \emph{regular patterns} for which the membership problem (is $w \in \lang(\alpha)$?) can be solved in linear time~\cite{shinohara1983polynomial}.
\begin{table}[]
\renewcommand{\arraystretch}{1.2}
\begin{center}
\begin{tabular}{l  llll}
				\hline
               		  & $\cpfc$ &  $\cpfcreg$ &  $\fcregucq$&  \\
               		  \hline
\emph{Model checking} &    $\np$-c (\ref{npcomplete-modelcheck})   & $\np$-c (\ref{npcomplete-modelcheck}) & $\np$-c~\cite{fre:splog} &  \\               		  
\emph{Satisfiability} &      $\np$-h (\ref{theorem:SAT}) & $\pspace$-c  (\ref{theorem:SAT})&  $\pspace$-c~\cite{frey2019finite} &  \\
\emph{Universality} 	     &      $\np$-c (\ref{fccq-universality})& Undec. (\ref{corollary:UNIVandREG}) &  Undec.~\cite{fre:splog} &  \\
\emph{Regularity}         &     Undec. (\ref{theorem:FCCQreg}) & Undec. (\ref{corollary:UNIVandREG})  & Undec.~\cite{frey2019finite} &  \\
\emph{Equivalence}       &     Undec. (\ref{theorem:equiv})  &  Undec. (\ref{theorem:equiv}) &  Undec.~\cite{frey2019finite} &  \\
\hline
\end{tabular}
\end{center}
\caption{A summarization of results from~\cref{sec:decProbs} and previous research. We use ``$\mathsf{A}$-h'' to mean $\mathsf{A}$-hard, and ``$\mathsf{A}$-c'' to mean $\mathsf{A}$-complete. The numbers next to each of these results are a reference to the corresponding result in this thesis. For example, (\ref{npcomplete-modelcheck}) is shorthand for~\cref{npcomplete-modelcheck}.} 
\label{table:decResults}
\end{table}

\begin{definition}\index{regular pattern}
A pattern $\alpha \in (\Sigma \union \Xi)^*$ is a \emph{regular pattern} if we have $|\alpha|_x = 1$ for every variable $x \in \var(\alpha)$.
An $\cpfc$ consists only of regular patterns if it has a body of the form $\bigwedge_{i=1}^n (\strucvar \logeq \alpha_i)$ where $\alpha_i$ is a regular pattern for each $i \in [n]$.
\end{definition}

It is clear that we can construct an equivalent regular expression from a regular pattern by replacing each occurrence of a variable by $\Sigma^*$.
Consequently, the regular patterns enjoy desirable algorithmic properties.

Model checking for Regex $\cq$s is $\np$-complete for words of length one (refer back to~\cref{sec:ercqs} for more details).
However, as we will now observe, this also holds for $\cpfc$s.
Furthermore, model checking is $\np$-complete for weakly acyclic $\cpfc$s and $\cpfc$s consisting only of regular patterns.

\begin{theorem}\label{npcomplete-modelcheck}
Model checking for $\cpfc$ is $\np$-complete, and remains $\np$-hard even if one of the following conditions hold
\begin{enumerate}
\item the query is weakly acyclic,
\item the query consists only of regular patterns, or
\item the input word is of length one.
\end{enumerate}
\end{theorem}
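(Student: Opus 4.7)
The upper bound is the usual guess-and-verify: a candidate substitution assigns to each variable in $\varphi$ a factor of the input word $w$, and factors are encodable as span pairs, so a witness has polynomial size. Checking that the substitution satisfies every word equation (and regular constraint, where applicable) takes polynomial time using the factor-equality data structure mentioned in \cref{compModel}. Hence model checking is in $\np$.

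For the lower bound I would handle the three restricted cases by three separate reductions, each from a known $\np$-hard problem.

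\emph{Condition (1), weakly acyclic.} Reduce directly from the $\np$-complete membership problem for (erasing) pattern languages~\cite{ehrenfreucht1979finding}: given $(\alpha,w)$, let $\varphi \df \cqhead{} (\strucvar \logeq \alpha)$. Then $w \in \lang(\alpha)$ iff $w \models \varphi$. A single-atom $\cpfc$ is trivially weakly acyclic (its weak join tree is a single node), so $\np$-hardness for this restriction is immediate.

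\emph{Condition (2), regular patterns only.} Reduce again from pattern language membership. Given $\alpha$, for every $x \in \var(\alpha)$ with $c_x \df |\alpha|_x \geq 1$, replace the $k$-th occurrence of $x$ in $\alpha$ by a fresh variable $x^{(k)}$, obtaining a regular pattern $\alpha'$ (each variable appears exactly once). Then set
\[
\varphi \df \cqhead{} (\strucvar \logeq \alpha') \land \bigwedge_{x \in \var(\alpha)} \bigwedge_{k=2}^{c_x} (x^{(1)} \logeq x^{(k)}).
\]
Each conjunct is a word equation whose right-hand side is a regular pattern, and a straightforward inductive argument shows $w \in \lang(\alpha)$ iff $w \models \varphi$.

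\emph{Condition (3), input word of length one.} Reduce from $3$-$\mathsf{SAT}$. Fix $w \df \mathtt{a}$ and, for an instance over variables $x_1,\dots,x_m$ and clauses $C_1,\dots,C_n$ with $C_j = \ell_{j,1} \lor \ell_{j,2} \lor \ell_{j,3}$, use the body
\[
\bigwedge_{i=1}^m (\strucvar \logeq y_i \cdot \bar{y}_i) \;\land\; \bigwedge_{j=1}^n \Bigl( (\strucvar \logeq u_{j,1} \cdot u_{j,2} \cdot u_{j,3}) \;\land\; \bigwedge_{k=1}^3 (v_{j,k} \logeq u_{j,k} \cdot z_{j,k}) \Bigr),
\]
where $v_{j,k} \df y_i$ if $\ell_{j,k} = x_i$ and $v_{j,k} \df \bar{y}_i$ if $\ell_{j,k} = \neg x_i$. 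Because $|w|=1$, $\strucvar$-safety forces every variable into $\{\emptyword,\mathtt{a}\}$; the first group of equations then enforces that exactly one of $y_i, \bar{y}_i$ equals $\mathtt{a}$ (encoding the truth value of $x_i$); the second group forces exactly one $u_{j,k}$ to be $\mathtt{a}$, and for that index $v_{j,k} = \mathtt{a} \cdot z_{j,k}$ forces $v_{j,k} = \mathtt{a}$, i.e., at least one literal of $C_j$ is satisfied. Conversely, any satisfying assignment of the $3$-$\mathsf{SAT}$ instance yields a model by choosing the $u_{j,k}$ witnessing a true literal.

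\paragraph*{Main obstacle.} Conditions (1) and (2) are short reductions from pattern membership. The conceptual work is in (3): concatenation over a one-letter universe must simulate both boolean negation and disjunction. The trick is to exploit the fact that $(\strucvar \logeq t_1 \cdots t_k)$ with $|w|=1$ is an ``exactly one equals $\mathtt{a}$'' constraint, which simultaneously yields a complement gadget (for literals) and a selector gadget (for clauses); the equations $(v_{j,k} \logeq u_{j,k} \cdot z_{j,k})$ then propagate the selected $u_{j,k}=\mathtt{a}$ into the corresponding literal variable.
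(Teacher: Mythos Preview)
Your upper bound and your arguments for conditions (1) and (3) are correct. Condition (1) is exactly the paper's reduction; condition (3) uses $3$-$\mathsf{SAT}$ where the paper uses $1$-in-$3$-$\mathsf{SAT}$, but your selector gadget $(\strucvar \logeq u_{j,1} u_{j,2} u_{j,3})$ together with the propagation equations $(v_{j,k} \logeq u_{j,k} \cdot z_{j,k})$ works for the same reason.

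There is a genuine gap in condition (2). The paper's definition of ``consists only of regular patterns'' requires the body to be of the form $\bigwedge_i (\strucvar \logeq \alpha_i)$ with each $\alpha_i$ a regular pattern --- every left-hand side must be $\strucvar$. Your equality constraints $(x^{(1)} \logeq x^{(k)})$ have $x^{(1)}$ on the left, so the constructed query does not fall under the definition and the reduction does not establish hardness for this restriction. The paper sidesteps this by invoking an external result (Day et~al.) that membership is already $\np$-hard for patterns of the form $\beta_1 \beta_2$ with $\beta_1,\beta_2$ regular and $\var(\beta_1)=\var(\beta_2)$, and then encodes this as
\[
(\strucvar \logeq z_1 z_2) \land (\strucvar \logeq z_1 \beta_2) \land (\strucvar \logeq \beta_1 z_2),
\]
where the desired equalities $\sigma(z_1)=\sigma(\beta_1)$ and $\sigma(z_2)=\sigma(\beta_2)$ are forced by shared variables \emph{across} equations rather than repetitions within one. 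Your approach could be repaired in the same spirit by replacing each $(x^{(1)} \logeq x^{(k)})$ with $(\strucvar \logeq p_k \cdot x^{(1)} \cdot s_k) \land (\strucvar \logeq p_k \cdot x^{(k)} \cdot s_k)$ for fresh $p_k,s_k$, but as written the proof does not handle condition~(2).
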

\begin{proof}
The upper bound for model checking follows immediately from the upper bound for $\epfcreg$, see~\cite{frey2019finite}. Thus, we focus on the $\np$-hardness proofs for each of the restricted class of $\cpfc$s considered in the Theorem's statement.
	
Given $\alpha\in \Xi^*$ and $w\in \Sigma^*$,  deciding whether there is a substitution $\sigma\colon\Xi^*\to\Sigma^*$ with $\sigma(\alpha)=w$ is $\np$-complete (see Ehrenfeucht and Rozenberg~\cite{ehrenfreucht1979finding}). 
Thus, model checking for weakly acyclic $\cpfc$s of the form $\cqhead{}(\strucvar\logeq \alpha)$ is $\np$-hard.
	
From Theorem 4 of~\cite{day2018matching}, it is known that the membership problem is $\np$-complete for patterns of the form $\beta_1 \cdot \beta_2$ with $\var(\beta_1) = \var(\beta_2)$ and $\beta_1, \beta_2$ being regular patterns. 
Thus, model checking is $\np$-hard for queries of the form 
\[ \varphi \df \cqhead{} (\strucvar \logeq z_1 \cdot z_2) \land (\strucvar \logeq z_1 \cdot \beta_2) \land (\strucvar \logeq \beta_1 \cdot z_2), \]	
where $z_1, z_2 \in \Xi \setminus \var(\beta_1 \cdot \beta_2)$.
Furthermore, it is clear that since $\beta_1$ and $\beta_2$ are regular patterns, that $\varphi$ consists only of regular patterns.

We now consider the less straightforward proof on the lower bounds for model checking when the input word is of length one.
To prove this, we reduce from the $\np$-complete problem of \emph{1-in-3-SAT}.
An instance of 1-in-3 SAT consists of a conjunction of clauses $P \df C_1 \land C_2 \land \dots \land C_m$ and a set of variables $\{ y_1, y_2, \dots, y_k \}$.

Each clause $C_i$ is defined as a disjunction of exactly three literals, where each literal is either $y_i$ or $\neg y_i$ for $i \in [k]$.
A satisfying assignment for $P$ is an assignment $\tau \colon \{y_1, \dots, y_k\} \rightarrow \{0,1\}$ that satisfies $P$, and exactly one literal in every clause is evaluated to $1$ (every other literal must evaluate to $0$).

Given an instance $P \df C_1 \land C_2 \land \dots \land C_m$ of 1-in-3-SAT with $k$ variables, we let $\subs(\strucvar) \df \mathtt{a}$ for some $\mathtt{a} \in \Sigma$, and we construct $\varphi_P$ as follows:
\[ \varphi_P \df \cqhead{} \bigwedge_{i=1}^k (\strucvar \logeq x_{y_i}^t \cdot x_{y_i}^f) \land \bigwedge_{i=1}^m \bigl( (\strucvar\logeq x_{C_i}) \land(x_{C_i} \logeq z_{C_i} \cdot z_{C_i}' \cdot z_{C_i}'') \bigr), \]
where $z_{C_i} = x_{y_j}^t$ if $y_j$ appears in $C_i$, and $z_{C_i} = x_{y_j}^f$ if $\neg y_j$ appears in $C_i$. 
The variables $z_{C_i}'$ and $z_{C_i}''$ are defined analogously. 
Next, we prove that $\subs \models \varphi_P$ if and only if $P$ is~satisfiable.

Assume there exists a substitution $\tau$ such that $\tau \models \body(\varphi)$ and $\tau(\strucvar) = \mathtt{a}$.
Since $\tau(\strucvar) = \mathtt{a}$, we have that $\tau(x_{C_i}) = \tau( x_{y_i}^t \cdot x_{y_i}^f) = \mathtt{a}$, thus either $\tau(x_{y_i}^t) = \mathtt{a}$ and $\tau(x_{y_i}^f) = \emptyword$, or vice versa.
This encodes $y_i$ as true or false.

To ensure we correctly encode 1-in-3-SAT, we encode each clause $C_i$ as $x_{C_i}$ which contains a concatenation of variables that correspond to the literals that appear in $C_i$.
Since we have that $\tau(x_{C_i}) = \mathtt{a}$ for some $\tau$ where $\tau \models \body(\varphi)$, it follows that exactly one of the variables that encodes a literal of $C_i$ must be $\mathtt{a}$, and thus the other literals must be $\emptyword$.

If $\tau \models \body(\varphi)$, then every clause variable $x_{C_i}$ is substituted with an $\mathtt{a}$, and exactly one literal variable  ($x_{y_i}^t$ or $x_{y_i}^f$) is substituted with an $\mathtt{a}$. 
This corresponds to every clause in $P$ being evaluated to true, and exactly one literal from every clause being evaluated to true.
It is clear that such a substitution $\tau$ exists if and only if $\subs \models \varphi$ where $\subs(\strucvar) = \mathtt{a}$.
Hence, $\subs \models \varphi$ where $\subs(\strucvar) \logeq \mathtt{a}$ if and only if $P$ is satisfiable.

This concludes the proof, as we have proven $\np$-hardness for $\cpfc$ under the three restrictions given in this theorem's statement.
\end{proof}

As a consequence of~\cref{npcomplete-modelcheck}, simply restricting the structure of the query, the class of word equations used, or the input word does not necessarily lead to tractable model checking for $\cpfc$.

\subsection{Static Analysis}
Our next focus is on \emph{static analysis problems}.
In~\cref{sec:optim} we shall see how undecidability for certain static analysis problems have consequences in regards to query optimization.

We now define the static analysis problems we consider in this section.
\index{universality (decision problem)}
\index{regularity (decision problem)}
\index{statisfiability (decision problem)}
\index{equivalence (decision problem)}
\begin{definition}
Let $\mathcal{C}$ be a class of $\fcreg$ formulas.
If $\varphi \in \mathcal{C}$ or $\varphi_1, \varphi_2 \in \mathcal{C}$ are given as input, then we have the following static analysis problems for $\mathcal{C}$:
\begin{enumerate}
\item \emph{Universality}. Given that $\varphi \in \mathcal{C}$ is Boolean, does $\lang(\varphi) = \Sigma^*$ hold?
\item \emph{Regularity}. Given that $\varphi \in \mathcal{C}$ is Boolean, is $\lang(\varphi)$ regular?
\item \emph{Satisfiability}. Does $\subs \models \varphi$ for some substitution $\subs$?
\item \emph{Equivalence}. Does $\fun{\varphi_1}(w) = \fun{\varphi_2}(w)$ hold for all $w \in \Sigma^*$?
\end{enumerate}
\end{definition}

As shown in~\cref{cpfclang}, for any $\varphi \in \cpfc$ and any $\mathtt{a} \in \Sigma$ we have that $\lang(\varphi) = \Sigma^*$ if and only if $\emptyword, \mathtt{a} \in \lang(\varphi)$. 
Therefore, one consequence of~\cref{npcomplete-modelcheck} is that the universality problemis $\np$-complete.

\begin{corollary}\label{fccq-universality}
The universality problem is $\np$-complete for $\cpfc$.
\end{corollary}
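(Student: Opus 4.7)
The plan is to establish both directions using the characterization of universality for $\cpfc$ provided by~\cref{cpfclang} together with the $\np$-completeness of model checking established in~\cref{npcomplete-modelcheck}.

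\textbf{Membership in $\np$.} By~\cref{cpfclang}, for $\varphi \in \cpfc$ we have $\lang(\varphi) = \Sigma^*$ if and only if $\emptyword \in \lang(\varphi)$ and $\mathtt{a} \in \lang(\varphi)$ for some $\mathtt{a} \in \Sigma$. Thus, to decide universality, I would first verify $\emptyword \in \lang(\varphi)$, which is a polynomial-time check: the only substitution with $\subs(\strucvar) = \emptyword$ is the one mapping every variable to $\emptyword$ (by $\strucvar$-safety), so one just evaluates $\body(\varphi)$ under this trivial assignment. Second, for each $\mathtt{a} \in \Sigma$, I would guess a substitution $\subs$ with $\subs(\strucvar) = \mathtt{a}$ and verify $\subs \models \body(\varphi)$; this is the standard $\np$ upper bound for model checking from~\cref{npcomplete-modelcheck} (each $\subs(x)$ is a factor of $\mathtt{a}$, so the certificate has polynomial size). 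Accept iff both tests succeed for at least one $\mathtt{a}$. The whole procedure is in $\np$.

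\textbf{$\np$-hardness.} Here I would simply reuse the reduction from 1-in-3-SAT constructed in the proof of~\cref{npcomplete-modelcheck}(3). That reduction maps an instance $P$ of 1-in-3-SAT to a query $\varphi_P \in \cpfc$ such that $\mathtt{a} \in \lang(\varphi_P)$ if and only if $P$ is 1-in-3-satisfiable, for a fixed symbol $\mathtt{a} \in \Sigma$. The key observation I would exploit is that $\emptyword \in \lang(\varphi_P)$ holds unconditionally: inspection of $\body(\varphi_P)$ shows that every atom is a word equation whose left- and right-hand sides become $\emptyword$ under the all-$\emptyword$ substitution. Consequently, applying~\cref{cpfclang} to $\varphi_P$ yields
\[
\lang(\varphi_P) = \Sigma^* \iff \mathtt{a} \in \lang(\varphi_P) \iff P \text{ is 1-in-3-satisfiable},
\]
so deciding universality of $\varphi_P$ solves 1-in-3-SAT. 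Since $\varphi_P$ is constructed in polynomial time, this proves $\np$-hardness.

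The main subtlety, and the only step requiring any care, is justifying the polynomial-time check for $\emptyword \in \lang(\varphi)$ in the upper bound and verifying that $\emptyword \in \lang(\varphi_P)$ in the hardness reduction; both reduce to a routine syntactic inspection of the body under the trivial substitution, so no real obstacle arises.
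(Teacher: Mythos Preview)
Your proposal is correct and follows essentially the same route as the paper: both use \cref{cpfclang} to reduce the upper bound to two model-checking instances, and both reuse the 1-in-3-SAT reduction from \cref{npcomplete-modelcheck}(3) for hardness. The only minor difference is in the hardness argument: the paper observes that in fact $w\models\varphi_P$ holds for \emph{every} $w\in\Sigma^*$ whenever $P$ is satisfiable (so $\lang(\varphi_P)$ is either $\Sigma^*$ or $\emptyset$), whereas you instead verify $\emptyword\in\lang(\varphi_P)$ directly and then invoke \cref{cpfclang} once more to reduce universality to the single check $\mathtt{a}\in\lang(\varphi_P)$. Your variant is arguably tidier, since it only relies on the exact statement of \cref{npcomplete-modelcheck}(3) rather than a stronger property of $\varphi_P$, but the two arguments are equivalent in substance.
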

\begin{proof}
From~\cref{cpfclang} we know that $\lang(\varphi) = \Sigma^*$ if and only if $\emptyword \in \lang(\varphi)$ and $\mathtt{a} \in \lang(\varphi)$.
Therefore, since model checking for $\cpfc$ is $\np$-complete, we know that deciding if $\lang(\varphi) = \Sigma^*$ is in $\np$.
To show $\np$-hardness, we consider~\cref{npcomplete-modelcheck} and we consider the alphabet $\Sigma = \mathtt{a}$.
We know deciding if $\mathtt{a} \in \lang(\varphi)$ is $\np$-complete.
However, this holds for any $w \in \Sigma^*$.
That is, from an instance $P$ of 1-in-3 SAT, we construct $\varphi_P \in \cpfc$ such that for any $w \in \Sigma^*$, we have that $w \models \varphi_P$ if and only if $P$ is satisfiable.
Thus, if $\varphi_P$ is satisfiable then $\lang(\varphi_P) = \Sigma^*$. 
Consequently, deciding if $\lang(\varphi) = \Sigma^*$ is $\np$-hard.
\end{proof}

The fact that universality is $\np$-complete for $\cpfc$ follows almost immediately from~\cref{cpfclang} and~\cref{npcomplete-modelcheck}.
Later in this chapter, we consider universality for $\cpfcreg$ and somewhat surprisingly, the addition of regular constraints results in universality being undecidable.

Next, let us consider the \emph{satisfiability problem}\index{satisfiability problem} for $\cpfcreg$s and $\cpfc$s. 
For Boolean queries, this question can be framed as does there exist $w \in \lang(\varphi)$?
From~\cite{diekert2005existential}, we know that the satisfiability problem for the existential theory of concatenation with regular constraints is in $\pspace$.
Furthermore, Theorem 4.4 of~\cite{fre:doc} shows that satisfiability for $\rgx^{\{\select^=\}}$ is $\pspace$-hard.
Thus, it is not too surprising that satisfiability is $\pspace$-complete for $\cpfcreg$. 

\begin{proposition}\label{theorem:SAT}
Satisfiability is $\pspace$-complete for $\cpfcreg$, and is $\np$-hard for $\cpfc$.
\end{proposition}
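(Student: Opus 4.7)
The statement has three parts, which I would handle separately: a $\pspace$ upper bound and a $\pspace$ lower bound for $\cpfcreg$-satisfiability, and an $\np$-hardness argument for $\cpfc$-satisfiability.

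For the $\pspace$ upper bound, I would reduce $\cpfcreg$-satisfiability to satisfiability of the existential theory of concatenation with regular constraints, which lies in $\pspace$ by Diekert~\cite{diekert2005existential} (quoted in the preamble). The only subtlety is the $\strucvar$-safety requirement built into $\fc$-semantics. Given $\varphi\df\cqhead{\vec x}\bigwedge_{i=1}^n(x_i\logeq\alpha_i)\land\bigwedge_{j=1}^m(y_j\regconst\gamma_j)$, I would form
\[\psi\df\exists\vec z\colon\Bigl(\bigwedge_{i=1}^n(x_i\logeq\alpha_i)\land\bigwedge_{j=1}^m(y_j\regconst\gamma_j)\land\bigwedge_{x\in\var(\varphi)\setminus\{\strucvar\}}(\strucvar\logeq p_x\cdot x\cdot s_x)\Bigr),\]
where the $p_x,s_x$ are fresh and $\vec z$ collects every non-$\strucvar$ variable of $\varphi$ together with the new $p_x,s_x$. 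The added atoms force $\subs(x)\sqsubseteq\subs(\strucvar)$ for every variable, so a solution of $\psi$ (in the usual existential-concatenation sense) is automatically $\strucvar$-safe, and conversely every $\strucvar$-safe solution of $\varphi$ extends to a solution of $\psi$ by setting $p_x,s_x$ to a prefix/suffix witnessing $\subs(x)\sqsubseteq\subs(\strucvar)$. The construction is polynomial, so $\pspace$-containment follows.

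For the $\pspace$ lower bound, I would reduce from the classical $\pspace$-complete problem of intersection non-emptiness for regular expressions: given $\gamma_1,\ldots,\gamma_n\in\reg$, decide whether $\bigcap_{i=1}^n\lang(\gamma_i)\neq\emptyset$. This maps in linear time to the $\cpfcreg$
\[\varphi\df\cqhead{}\;\bigwedge_{i=1}^n(\strucvar\regconst\gamma_i),\]
for which $\lang(\varphi)=\bigcap_{i=1}^n\lang(\gamma_i)$ holds directly from the semantics of regular constraints; so $\varphi$ is satisfiable iff the intersection is non-empty. Combined with the upper bound this yields $\pspace$-completeness for $\cpfcreg$.

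For the $\np$-hardness of $\cpfc$-satisfiability, I would revisit the 1-in-3-SAT reduction from the proof of \cref{npcomplete-modelcheck}(3). The formula $\varphi_P$ produced there is, as a satisfaction problem, trivially satisfiable via the all-empty substitution, because $\subs(\strucvar)=\emptyword$ collapses every word equation to $\emptyword\logeq\emptyword$. I would patch this by augmenting $\varphi_P$ with the single atom $(\strucvar\logeq\mathtt{a})$ for a fixed $\mathtt{a}\in\Sigma$, which forces $\subs(\strucvar)=\mathtt{a}$ in any satisfying substitution. The correctness analysis carried out in \cref{npcomplete-modelcheck}(3) then applies verbatim: the augmented formula is satisfiable iff $P$ is satisfiable as a 1-in-3-SAT instance. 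Since the augmentation is polynomial and $\varphi_P\land(\strucvar\logeq\mathtt{a})$ is an $\cpfc$, this gives the claimed $\np$-hardness. The only non-routine point in the whole proof is the $\strucvar$-safety padding in the upper bound; the two lower bounds are essentially immediate reductions.
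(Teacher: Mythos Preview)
Your proof is correct. The upper bound and the $\pspace$-hardness reduction match the paper's approach exactly (the paper is terser about $\strucvar$-safety in the upper bound, but your explicit padding with $(\strucvar\logeq p_x\cdot x\cdot s_x)$ is the right way to spell it out).

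For the $\np$-hardness of $\cpfc$-satisfiability you take a different route than the paper. The paper reduces from the membership problem for erasing pattern languages: given a pattern $\alpha$ and a word $w$, it builds the two-atom formula $\cqhead{}\,(\strucvar\logeq\alpha)\land(\strucvar\logeq w)$, which is satisfiable iff $w\in\lang(\alpha)$. Your approach instead recycles the 1-in-3-SAT gadget $\varphi_P$ from \cref{npcomplete-modelcheck} and pins $\strucvar$ with the extra atom $(\strucvar\logeq\mathtt{a})$. Both reductions are valid and polynomial. The paper's is more self-contained and yields a shorter formula; yours has the advantage of reusing a construction already on the page, at the cost of depending on that earlier proof.
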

\begin{proof}
The upper bounds follow from the fact that satisfiability for the existential theory of concatenation with regular constraints is in $\pspace$~\cite{diekert2005existential}.

For the lower bounds, we reduce from the \emph{intersection problem for regular expressions}. 
This is almost identical to how it was shown that core spanner satisfiability is $\pspace$-hard (see Theorem 4.4 of~\cite{fre:doc}).
Which in turn follows from the $\pspace$-hard bounds for the satisfiability for the existential theory of concatenation with regular constraints is in $\pspace$~\cite{diekert2005existential}.
That is, given a set $S \df \{ \gamma_1, \gamma_2, \dots, \gamma_n \}$ of regular expressions, deciding whether $\bigcap_{i=1}^n \lang(\gamma_i) = \emptyset$ is $\pspace$-complete~\cite{kozen1977lower}.
Let $\varphi \in \cpfcreg$ be defined by $ \varphi \df \cqhead{} \bigwedge_{i=1}^n (\strucvar \regconst \gamma_i)$.
Thus, if $\bigcap_{i=1}^n \lang(\gamma_i) = \emptyset$, then there does not exist $w \in \Sigma^*$ such that $w \in \lang(\varphi)$, and if $\bigcap_{i=1}^n \lang(\gamma_i) \neq \emptyset$, then $w \in \bigcap_{i=1}^n \lang(\gamma_i)$ where $w \in \lang(\varphi)$.

We prove that $\cpfc$ satisfiability is $\np$-hard from a simple reduction from the membership problem for erasing pattern languages~\cite{ehrenfreucht1979finding}.
Given $\alpha \in (\Sigma \union \Xi)^*$ and $w \in \Sigma^*$, we construct $ \varphi \df \cqhead{} (\strucvar \logeq \alpha) \land (\strucvar \logeq w)$. 
It follows immediately, that $\varphi$ is satisfiable if and only if $w \in \lang(\alpha)$.
\end{proof}

While it is clear that $\cpfc$ satisfiability is in $\pspace$, the exact upper bound is unknown.
This is because whether a word equation has a satisfying solution is known to be in $\pspace$~\cite{plandowski1999satisfiability} and $\np$-hard (for example, from the pattern language membership problem~\cite{albert1981languages}), but whether or not word equation satisfiability is $\np$-complete is an open question.

These complexity results for satisfiability are in contrast to satisfiability for relational conjunctive queries.
In the relational setting, one can construct a \emph{canonical database instance} from any given conjunctive query such that this canonical database instance satisfies the query.
See, for example, Proposition 4.2.2 in~\cite{abiteboul1995foundations}.

Next, we consider universality for $\cpfcreg$ and regularity for $\cpfc$.
However, these problems require some technical preparations.
We define so-called extended Turing machines, which were introduced by Freydenberger~\cite{freydenberger2013extended}.
The following definitions and descriptions follows closely to the definition and description in~\cite{freydenberger2013extended}.
However, some details that are not important for our use have been omitted.
Refer to~\cite{freydenberger2013extended} for these omitted details.
While these extended Turing machines were introduced for the particulars of so-called \emph{extended regular expressions}, we observe that they are useful for our purposes.
They have also been used to show universality for $\epfc$ is undecidable (Theorem 4.7 of~\cite{frey2019finite}).

\index{extended Turing machine}
\subsubsection{Extended Turing Machines.}
An \emph{extended Turing machine} is denoted as a triple $\mathcal{X} \df (Q, q_1, \delta)$ where $Q \df \{q_1,q_2,\dots,q_k \}$ for some $k \geq 1$ is the set of states, $q_1 \in Q$ is the initial state, and $\delta$ is the transition function that we will define soon.

Extended Turing machines have a tape alphabet of $\Gamma \df \{0,1\}$ where $0$ is used as the blank symbol.
Let us now define the transition function 
\[\delta \colon \Gamma \times Q \rightarrow (\Gamma \times \{L,R\} \times Q) \union \{ \mathsf{HALT} \} \union (\mathsf{CHECK}_R \times Q). \]
If $\delta( \mathtt{a}, q) = (\mathtt{b}, M, p)$ where $\mathtt{a}, \mathtt{b} \in \Gamma$ are tape symbols, $p,q \in Q$ are states, and $M \in \{L,R\}$, then the machine replaces the symbol under the head (in this case $\mathtt{a}$) with $\mathtt{b}$, moves the head either to the left or the right (depending on $M$), and enters state $p$.
As a convention, let us assume that $\delta(0,q_1) = (0,L,q_2)$.
If $\delta( \mathtt{a}, q) = \mathsf{HALT}$, then the machine halts execution, and accepts. 
We always assume that there is at least one $(\mathtt{a},q) \in \Gamma \times Q$ such that $\delta(\mathtt{a}, q) = \mathsf{HALT}$.

\index{CHECK@$\mathsf{CHECK}_R$}
We now need to look at what it means if $\delta(\mathtt{a}, q) = (\mathsf{CHECK}_R, p)$.
If indeed $\delta(\mathtt{a}, q) = (\mathsf{CHECK}_R, p)$, then the machine immediately checks -- without moving the head's position -- whether the tape to the right of the head's current position only contains blank symbols.
If the right-side of the tape is blank, then the machine moves to state $p$. Otherwise, the machine stays in state $q$.
Hence, if the right-side of the tape is not blank, then the machine continuously remains in the same state, and thus we enter an infinite loop.

Without going into details, the $\mathsf{CHECK}_R$-instruction is used in-place of meta-symbols used to mark the start and end of the input word.
See~\cite{freydenberger2013extended} for more discussion on the use of the behaviour of the $\mathsf{CHECK}_R$-instruction.

Let $\mathtt{a}^\omega$ denote the one-sided infinite word $(t_i)_{i=1}^\infty$ where $t_i = \mathtt{a}$ for all $i \in \mathbb{N}_+$.
Let $\mathcal{X} \df (Q, q_1, \delta)$ be an extended Turing machine.
A configuration $C$ for $\mathcal{X}$ is a tuple of the form $(w_L, w_R, \mathtt{a}, q)$, where we have:\index{configuration}
\begin{itemize}
\item $w_L, w_R \in \Gamma^* 0^\omega$ are used to denote the tape to the left and right of the head's current position respectively,
\item $\mathtt{a} \in \Gamma$ is the symbol currently under the head of $\mathcal{X}$, and
\item $q \in Q$ is the current state.
\end{itemize}

For two configurations $C,C'$ we use $C \vdash_\mathcal{X} C'$ to denote that if $\mathcal{X}$ is in configuration $C$, then it immediately enters $C'$.
For an example, see~\cref{fig:Turing}.

\begin{figure}
\begin{center}
\begin{tikzpicture}
    \node[circle,draw, minimum size=1cm] (c) at (0,0){$q_j$};

    \draw[->] (0,-0.5) -- (0,-0.9);
   
    \draw[-] (-3,-1) -- (3, -1);
    \draw[-] (-3,-2) -- (3, -2);
   
    \draw[-] (-0.5,-1) -- (-0.5,-2);
    \draw[-] (0.5,-1) -- (0.5,-2);
    
    \draw[-] (-1.5,-1) -- (-1.5,-2);
    
    \draw[-] (1.5,-1) -- (1.5,-2);
    
    \draw (-1,-1.5) node {$0$};
    \draw (0,-1.5) node {$0$};
    \draw (1,-1.5) node {$1$};
    
    \draw (-2.2,-1.5) node {$0^\omega$};
    \draw (2.2,-1.5) node {$0^\omega$};
    
    \draw[|->] (-0.5,-2.5) -- (-3, -2.5) node[midway,below] {$w_L$};
    
    \draw[|->] (0.5,-2.5) -- (3, -2.5) node[midway,below] {$w_R$};
    
    \draw (4,-1.5) node {$\vdash_{\mathcal{X}}$};

    \node[circle,draw, minimum size=1cm] (c) at (9,0){$q_l$};
    \draw[->] (9,-0.5) -- (9,-0.9);
    \draw[-] (5,-1) -- (11, -1);
    \draw[-] (5,-2) -- (11, -2);
    
     \draw (5.8,-1.5) node {$0^\omega$};
    \draw (10.2,-1.5) node {$0^\omega$};

    \draw[-] (6.5,-1) -- (6.5,-2);
    \draw[-] (7.5,-1) -- (7.5,-2);
    \draw[-] (8.5,-1) -- (8.5,-2);
    \draw[-] (9.5,-1) -- (9.5,-2);
    
    \draw (7,-1.5) node {$0$};
    \draw (8,-1.5) node {$1$};
    \draw (9,-1.5) node {$1$};

    \draw[|->] (8.5,-2.5) -- (5, -2.5) node[midway,below] {$w_L'$};
    
    \draw[|->] (9.5,-2.5) -- (11, -2.5) node[midway,below] {$w_R'$};
\end{tikzpicture}
\end{center}
\caption{\label{fig:Turing}This figure illustrates two configurations of some extended Turing machine $\mathcal{X}$. The configuration on the left shows $\mathcal{X}$ currently in state $q_j$ with the head reading $0$. If $\delta(0, q_j) = (1, R, q_l)$, then it follows that the configuration on the right is the immediate successor configuration.}
\end{figure}
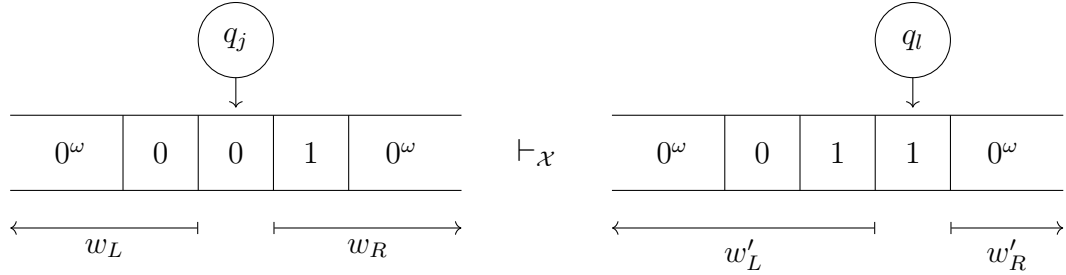

\index{dom@$\mathsf{dom}(\mathcal{X})$}
We define $\mathsf{dom}(\mathcal{X})$ for an extended Turing machine $\mathcal{X}$, as the set of all $w \in \Gamma^* \cdot 0^\omega$ where, if $\mathcal{X}$ starts an initial configuration $(0^\omega,w, \mathtt{a}, q_1)$ for some $\mathtt{a} \in \Gamma$, then $\mathcal{X}$ halts after a finite number of steps.

\index{emptiness}\index{finiteness}
\begin{lemma}[Freydenberger~\cite{freydenberger2013extended}]\label{lemma:exTM}
Consider the following decision problems for extended Turing machines:
\begin{enumerate}
\item Emptiness: Given an extended Turing machine $\mathcal{X}$, is $\mathsf{dom}(\mathcal{X})$ empty?
\item Finiteness: Given an extended Turing machine $\mathcal{X}$, is $\mathsf{dom}(\mathcal{X})$ finite?
\end{enumerate}
Then emptiness is not semi-decidable, and finiteness is neither semi-decidable, nor co-semi-decidable.
\end{lemma}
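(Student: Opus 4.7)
The plan is to prove both statements by reductions from well-known classical undecidability results about ordinary Turing machines. The crucial preparatory observation is that extended Turing machines can faithfully simulate ordinary Turing machines: the transition function $\delta$ already contains all the usual $(\Gamma\times\{L,R\}\times Q)$ moves and a $\mathsf{HALT}$ action, and the $\mathsf{CHECK}_R$ instruction is an \emph{additional} feature that we simply avoid using in the reductions. The convention $\delta(0, q_1) = (0, L, q_2)$ just forces a dummy first step which we can absorb by renaming the state we wish to treat as ``the real initial state'' as $q_2$.

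For emptiness, I would reduce from the complement of the halting problem, a standard $\Pi_1$-complete problem that is not semi-decidable. Given a pair $(M,w)$, I construct an extended Turing machine $\mathcal{X}_{M,w}$ that, on any input, first erases the input, writes $w$ on the tape, and then simulates $M$ on $w$; it halts precisely when $M$ halts on $w$. Since the behaviour is independent of the actual input, $\mathsf{dom}(\mathcal{X}_{M,w})$ is either all of $\Gamma^* \cdot 0^\omega$ (if $M$ halts on $w$) or empty (otherwise). Hence $\mathsf{dom}(\mathcal{X}_{M,w})=\emptyset$ iff $M$ does not halt on $w$, so any semi-decision procedure for emptiness would give a semi-decision procedure for non-halting, which is impossible.

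For finiteness, I would reduce from the classical problem $\mathsf{FIN}=\{M \mid L(M)\text{ is finite}\}$, which is $\Sigma_2$-complete and therefore neither semi-decidable nor co-semi-decidable (its complement $\mathsf{INF}$ is $\Pi_2$-complete, so the same holds for co-semi-decidability of $\mathsf{FIN}$). From a standard Turing machine $M$ over the binary alphabet $\{0,1\}$ (with $0$ as blank) I construct an extended Turing machine $\mathcal{X}_M$ that simulates $M$ directly on its input, halting iff $M$ accepts. Then $\mathsf{dom}(\mathcal{X}_M) = L(M)$ up to the identification $\{0,1\}^* \leftrightarrow \{0,1\}^* \cdot 0^\omega$, so $\mathsf{dom}(\mathcal{X}_M)$ is finite iff $L(M)$ is. Any semi- or co-semi-decision procedure for finiteness of $\mathsf{dom}(\mathcal{X})$ would therefore transport to $\mathsf{FIN}$, contradicting its $\Sigma_2$-completeness.

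The main obstacle, and the step that deserves careful checking, is the simulation itself: I need to argue that despite the one-sided tape, the enforced first move, and the blank symbol doubling as a proper tape symbol, a general $M$ can still be simulated step-for-step using only non-$\mathsf{CHECK}_R$ transitions, so that \emph{halting} of $\mathcal{X}$ matches \emph{halting/acceptance} of $M$ exactly. Once the simulation lemma is in place, both reductions are short and the undecidability consequences follow mechanically from the choice of source problem.
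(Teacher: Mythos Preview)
The thesis does not prove this lemma; it is cited from \cite{freydenberger2013extended} and used as a black box, so there is no in-paper argument to compare against. Your overall plan (reduce emptiness from non-halting, finiteness from the classical $\mathsf{FIN}$ index set) is the natural one.

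The finiteness reduction, however, has a real gap, and it is precisely your decision to avoid $\mathsf{CHECK}_R$. If $\mathcal{X}$ never uses $\mathsf{CHECK}_R$ it is an ordinary Turing machine, and if it halts on some $w\in\Gamma^*\cdot 0^\omega$ in $t$ steps it has inspected at most the first $t$ input cells; hence it halts, with the identical computation, on every $w'$ agreeing with $w$ on those cells, of which there are infinitely many in $\Gamma^*\cdot 0^\omega$. So for $\mathsf{CHECK}_R$-free machines $\mathsf{dom}(\mathcal{X})$ is finite iff it is empty, and finiteness collapses to the $\Pi_1$ emptiness problem---no reduction from a $\Sigma_2$-complete source can land inside this subclass. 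Your ``identification $\{0,1\}^*\leftrightarrow\{0,1\}^*\cdot 0^\omega$'' is the same issue in disguise: it is a surjection, not a bijection, and with $0$ as blank $L(M)\subseteq\{0,1\}^*$ is closed under trailing zeros, hence finite iff empty. The fix is to let $\mathcal{X}_M$ use $\mathsf{CHECK}_R$ to delimit its input (this is exactly what the instruction is for, as the thesis remarks) and then simulate a standard $M$ whose blank is separate from its input alphabet. A smaller point: ``erase the input'' in the emptiness reduction also cannot terminate without $\mathsf{CHECK}_R$; simply run the simulation of $M$ on $w$ entirely on the all-blank left half-tape and never touch the right side.
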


We interpret the left-hand and right-hand side of the tape as natural numbers. 
For an infinite sequence $t \df (t_i)_{i=1}^\infty$ over $\Gamma$, let $\mathsf{e}(t) \df \sum_{i=1}^\infty 2^i \mathsf{e}(t_i)$, where $\mathsf{e}(x) = x$ for $x \in \{0,1\}$.
Therefore, $\mathsf{e}(w_L)$ and $\mathsf{e}(w_R)$ can be thought of as a binary number, where the cell closest to the head is the least significant bit.
For example, if~$w_L = 0 1 1 0 1 \cdot 0^\omega$, then $\mathsf{e}(w_L) = 22$.
Note that since there can only be a finite number of $1$s on the left-hand side and the right-hand side of the tape, the function~$\mathsf{e}$ is well defined.

\index{enc@$\mathsf{enc}$}
For any configuration $C = (w_L,w_R, \mathtt{a}, q_i)$ of $\mathcal{X}$, we define an encoding function $\mathsf{enc}$ that encodes $C$ over the alphabet $\{ 0, \# \}$ as follows:
\[  \mathsf{enc}(w_L,w_R, \mathtt{a}, q_i) \df 0^{\mathsf{e}(w_L)+1} \cdot \# \cdot 0^{\mathsf{e}(w_R) + 1} \cdot \# \cdot 0^{\mathsf{e}(\mathtt{a})+1} \cdot \# \cdot 0^{i}. \]

\index{$\vdash_\mathcal{X}$}
If $(C_i)_{i=1}^n$ is a sequence of configurations for $\mathcal{X}$, we say that $(C_i)_{i=1}^n$ is an \emph{accepting run} if $C_1$ is an initial configuration, $C_n$ is a halting configuration, and $C_i \vdash_\mathcal{X} C_{i+1}$ for all $i \in [n-1]$.

\begin{observation}[Freydenberger~\cite{freydenberger2013extended}]\label{obs:mod}
Let $\mathcal{X} \df (Q, q_1, \delta)$ be an extended Turing machine in the configuration $C = (w_L, w_R, \mathtt{a}, q_i)$, and $\delta(\mathtt{a},q_i) = (\mathtt{b}, M, q_j)$, where $\mathtt{a},\mathtt{b} \in \Gamma$ are tape symbols, $q_i,q_j \in Q$ are states, and $M \in \{L,R\}$.
For the unique successor state $C' = (w_L', w_R', \mathtt{a}', q_j)$ where $C \vdash_\mathcal{X} C'$, we have that:
\begin{center}
\begin{tabular}{ c c c c }
 If $M=L$: & $\mathsf{e}(w_L') = \mathsf{e}(w_L) \intdiv 2$ & $\mathsf{e}(w_R') = 2\mathsf{e}(w_R) + \mathtt{b}$      &  $\mathtt{a}' = \mathsf{e}(w_L) \intmod 2$, \\ 
 if $M=R$: & $\mathsf{e}(w_L') = 2\mathsf{e}(w_L) + \mathtt{b}$      & $\mathsf{e}(w_R') = \mathsf{e}(w_R) \intdiv 2$ &  $\mathtt{a}' = \mathsf{e}(w_R) \intmod 2$,
\end{tabular}
\end{center}
where $\intdiv$ denotes integer division, and $\intmod$ denotes the modulo operation.
\end{observation}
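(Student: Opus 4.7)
The plan is to prove the observation by a direct case analysis on the direction $M$ of the head movement, using nothing more than the definition of the encoding $\mathsf{e}$ and the semantics of $\vdash_\mathcal{X}$. The key insight, already highlighted in the preamble, is that in the encoding $\mathsf{e}$ the cell closest to the head corresponds to the least significant bit of the binary representation. Once this is kept firmly in mind, the arithmetic drops out almost mechanically.

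First I would set up notation for a single transition. Write $w_L = t_L t_L'$ and $w_R = t_R t_R'$, where $t_L, t_R \in \Gamma$ are the tape symbols immediately to the left and right of the head, and $t_L', t_R' \in \Gamma^* 0^\omega$ are the remainders. Then by definition of $\mathsf{e}$ we have $\mathsf{e}(w_L) = t_L + 2\,\mathsf{e}(t_L')$ and $\mathsf{e}(w_R) = t_R + 2\,\mathsf{e}(t_R')$; equivalently $t_L = \mathsf{e}(w_L) \intmod 2$ with $\mathsf{e}(t_L') = \mathsf{e}(w_L) \intdiv 2$, and symmetrically on the right.

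Next I would handle the case $M = R$ (the $M = L$ case is completely symmetric). By the semantics of $\delta$, the machine writes $\mathtt{b}$ over $\mathtt{a}$ and then moves right, so the new configuration has the tape to the left of the head equal to $\mathtt{b} \cdot w_L$ (with $\mathtt{b}$ now adjacent to the head), the tape to the right of the head equal to $t_R'$, and the new head symbol equal to $t_R$. Thus $w_L' = \mathtt{b} \cdot w_L$, $w_R' = t_R'$, and $\mathtt{a}' = t_R$. Plugging into $\mathsf{e}$:
\begin{align*}
\mathsf{e}(w_L') &= \mathtt{b} + 2\,\mathsf{e}(w_L) = 2\,\mathsf{e}(w_L) + \mathtt{b}, \\
\mathsf{e}(w_R') &= \mathsf{e}(t_R') = \mathsf{e}(w_R) \intdiv 2, \\
\mathtt{a}' &= t_R = \mathsf{e}(w_R) \intmod 2.
\end{align*}
This matches the claimed identities for $M = R$. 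For $M = L$ the same argument applies after swapping the roles of left and right, yielding the symmetric formulas.

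The main ``obstacle'' is really bookkeeping rather than mathematical difficulty: one must be careful that the symbol written under the head before moving (namely $\mathtt{b}$, not the old $\mathtt{a}$) becomes the cell adjacent to the head in the new configuration, and that the old adjacent cell on the opposite side is the one that is now read. Once that is pinned down, the identities are an immediate consequence of the positional definition of $\mathsf{e}$, and no induction or non-trivial combinatorial argument is needed. Since the statement is attributed to \cite{freydenberger2013extended}, I would simply remark that the verification is routine and possibly omit one of the two symmetric cases.
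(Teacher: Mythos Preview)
Your verification is correct. The paper does not actually prove this observation; it is stated without proof and attributed to \cite{freydenberger2013extended}, which is consistent with your closing remark that the check is routine. Your unpacking of the encoding (cell closest to the head as least significant bit, so that prepending a symbol $\mathtt{b}$ gives $\mathtt{b} + 2\,\mathsf{e}(\cdot)$ and stripping the first symbol gives $\intdiv 2$ and $\intmod 2$) is exactly the intended reading, and the case analysis on $M$ is handled correctly.
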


\index{VALC@$\mathsf{VALC}(\mathcal{X})$}
\index{INVALC@$\mathsf{INVALC}(\mathcal{X})$}
For an extended Turing machine $\mathcal{X}$, let us define a language $\mathsf{VALC}(\mathcal{X}) \subseteq \Sigma^*$.
\[ \mathsf{VALC}(\mathcal{X}) \df \{ \#\# \mathsf{enc}(C_1) \#\# \cdots \#\# \mathsf{enc}(C_n) \#\# \mid (C_i)_{i=1}^n \text{ is an accepting run} \}. \]

Thus, $\mathsf{VALC}(\mathcal{X})$ encodes the so-called computational history of every accepting run of $\mathsf{VALC}(\mathcal{X})$.
Let us also define $\mathsf{INVALC}(\mathcal{X}) \df \Sigma^* \setminus \mathsf{VALC}(\mathcal{X})$.
The language~$\mathsf{INVALC}(\mathcal{X})$ can be thought of as the language of computational histories of every erroneous run of $\mathcal{X}$.
We distinguish between two types of errors that prohibits a word from being in $\mathsf{VALC}(\mathcal{X})$, and hence is in $\mathsf{INVALC}(\mathcal{X})$:
\begin{enumerate}
\item \emph{Structural errors.} A word $w \in \Sigma^*$ contains a structural error if it does not encode any sequence of configurations that start with a valid initial state for~$\mathcal{X}$, and end with a valid final state for $\mathcal{X}$.\index{structural error}
\item \emph{Behavioural errors.} A word $w \in \Sigma^*$ contains a behavioural error if it encodes a sequence of configurations $(C_i)_{i=1}^n$ for some $n \geq 1$, but $C_i \vdash_\mathcal{X} C_{i+1}$ does not hold for some $i \in [n-1]$.\index{behavioural error}
These behavioural errors can further be distinguished between three types of errors:
\begin{enumerate}
\item \emph{State errors.} The state in $C_{i+1}$ is incorrect, or $C_i$ is a halting configuration.\index{state error}
\item \emph{Head errors.} The head in $C_{i+1}$ reads the wrong symbol.\index{head error}
\item \emph{Tape errors.} The tape in $C_{i+1}$ does not follow from $C_i$.\index{tape error}
\end{enumerate}
\end{enumerate}

It is immediately clear that $\mathsf{INVALC}(\mathcal{X}) = \Sigma^*$ if and only if $\mathsf{dom}(\mathcal{X})$ is empty.
Hence, given an extended Turing machine, determining whether $\mathsf{INVALC}(\mathcal{X}) = \Sigma^*$ is undecidable.
From~\cite{freydenberger2013extended}, we know that an extended Turing machine $\mathcal{X}$, the language $\mathsf{INVALC}(\mathcal{X})$ is regular if and only if $\mathsf{dom}(\mathcal{X})$ is finite.
Since finiteness of $\mathcal{X}$ is undecidable (again, see~\cite{freydenberger2013extended}), it follows that given $\mathcal{X}$, it is undecidable to determine whether $\mathsf{INVALC}(\mathcal{X})$ is regular.

\subsubsection{Undecidability Results.}
It was shown in~\cite{fre:doc} that universality for core spanners is undecidable.
Then, in~\cite{frey2019finite}, it was shown that universality for $\epfc$ is undecidable (even if the so-called \emph{width} of the given formula is four).
Both of these undecidability results also show that regularity for core spanners and $\epfc$ is undecidable.
Recently, Day, Ganesh, Grewal and Manea~\cite{day2022formal} considered the expressive power  and decidability results regarding theories over strings. They show that universality and regularity is undecidable for word equations, where the language for a word equation is defined as the set of satisfying substitutions to the word equation projected onto a single predetermined variable.

Utilizing extended Turing machines along with~\cref{lemma:exTM}, we show that universality for $\cpfcreg$ is undecidable and regularity for $\cpfc$ is undecidable.
In both cases we encode $\mathsf{INVALC}(\mathcal{X})$ as a query.
While for $\cpfcreg$ this language can directly be encoded, for $\cpfc$ we use a similar technique to~\cref{quotientExp} of using concatenation to model disjunction.

The proofs in this section are similar to the proof of Theorem 14 from~\cite{freydenberger2013extended}.
While some work is needed to adapt this proof so we can encode $\mathsf{INVALC}(\mathcal{X})$ in~$\cpfcreg$ (and later $\cpfc)$, this boils down to one or two small encoding ``gadgets''.
Then, the rest of the proof is essentially a translation of the proof of Theorem 14 of~\cite{freydenberger2013extended}.
Those readers who are more interested in the consequences of these undecidability results are invited to skip to~\cref{sec:optim}

\begin{lemma}\label{theorem:Invalc}
Given an extended Turing machine $\mathcal{X}$, one can effectively construct $\varphi \in \cpfcreg$ such that $\lang(\varphi) = \mathsf{INVALC}(\mathcal{X})$.
\end{lemma}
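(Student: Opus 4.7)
The plan is to adapt the construction from Theorem 14 of~\cite{freydenberger2013extended}, which establishes the analogous fact for extended regular expressions, and translate it into $\cpfcreg$. I would begin by decomposing $\mathsf{INVALC}(\mathcal{X})$ along the case split already laid out in the preliminaries: a word $w \in \Sigma^*$ belongs to $\mathsf{INVALC}(\mathcal{X})$ if and only if it either contains a \emph{structural error} (failing to match the expected $\#\#$-separated layout of unary blocks, or failing to start with a valid initial configuration and end with a halting configuration) or contains a \emph{behavioural error} (a structurally well-formed witness of some pair of adjacent configurations $C_i, C_{i+1}$ with $C_i \not\vdash_{\mathcal{X}} C_{i+1}$, where a behavioural error is further a state, head, or tape error, and each tape error is parametrized by the instruction $\delta(\mathtt{a},q_j) = (\mathtt{b},M,q_l)$ being applied).

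I would first observe that the set $L_{\text{struct}}$ of structurally malformed words is regular, so it can be captured by a single regular constraint $(\strucvar \regconst \gamma_{\text{struct}})$. For each specific behavioural error type, I would write a sub-formula in $\cpfcreg$ that guesses a location of the error via an existentially quantified split $\strucvar \logeq p \cdot \#\# \cdot e_L \cdot \# \cdot e_R \cdot \# \cdot e_{\mathtt{a}} \cdot \# \cdot e_q \cdot \#\# \cdot e_L' \cdot \# \cdot e_R' \cdot \# \cdot e_{\mathtt{a}}' \cdot \# \cdot e_q' \cdot \#\# \cdot s$, constrains each block to be in $0^+$ via regular constraints, and then encodes the arithmetic relations from~\cref{obs:mod} (e.g.\ $\mathsf{e}(w_L') = 2\mathsf{e}(w_L) + \mathtt{b}$ becomes $e_L' \logeq e_L \cdot e_L \cdot 0^{\mathtt{b}}$ in unary). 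The \emph{wrong} transition is then a conjunction of ``correct source'' constraints on $e_L, e_R, e_{\mathtt{a}}, e_q$ with a word equation forcing $e_L', e_R', e_{\mathtt{a}}', e_q'$ to differ from what the transition prescribes in precisely one coordinate; differing can be positively witnessed by introducing a nonempty offset variable, since block lengths live in the unary numerals.

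The main obstacle is that $\mathsf{INVALC}(\mathcal{X})$ is a finite union of these cases -- one structural case plus one behavioural case per state/symbol pair per error type -- whereas $\cpfcreg$ has no top-level disjunction. This is exactly the gadget the paper promises. The plan is to use a single \emph{error-selector} variable $e$ together with a regular constraint $(e \regconst \gamma_{\text{sel}})$ whose language enumerates finitely many disjoint markers $m_1, \dots, m_k$, one per error case; the overall body is a conjunction of clauses of the form $(\strucvar \logeq p_i \cdot e \cdot r_i \cdot \psi_i \cdot s_i)$ where, crucially, the substitution of $e$ dictates which clause activates a nontrivial arithmetic check and which collapses to an identity. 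Concretely, I would tailor the structure of $\gamma_{\text{sel}}$ and the placement of $e$ so that the natural $\#\#$-delimited layout of $\mathsf{INVALC}$-words absorbs the markers without enlarging the language, reusing the technique from~\cref{example:neq} and~\cref{quotientExp} but in a padding-free way driven by the $\#\#$ delimiters already present in every $\mathsf{VALC}$-style word, and letting structurally malformed words be covered by the structural clause without interfering with the behavioural clauses.

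The hardest step will be designing this selector gadget so that the markers impose no spurious membership: one must verify that for each $w \in \mathsf{INVALC}(\mathcal{X})$ there is at least one error case whose variables can be consistently substituted, and conversely, that any satisfying substitution $\sigma$ forces $\sigma(\strucvar)$ to witness the corresponding error (the structural one when $\sigma(e) = m_1$, a specific behavioural one otherwise). After this gadget is in place, correctness follows by routine inspection of the arithmetic encodings case by case.
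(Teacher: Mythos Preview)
Your high-level decomposition into error cases and the intent to simulate disjunction via a selector are on target, but two points deserve attention, one a simplification you miss and one a gap in the gadget.

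First, the simplification: not only structural errors but also state errors and head errors are \emph{regular} languages over $\{0,\#\}$. For state errors one only needs to check a bounded-length pattern around $\#\#$; for head errors one only needs the parity of a $0^+$-block, which is regular. The paper exploits this by pushing all three kinds into a single regular expression $\gamma_{\mathsf{error}}$ and reserving the word-equation machinery exclusively for tape errors. Your plan to encode state and head errors via word equations is not wrong, but it creates many more cases for the selector to handle.

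Second, and more importantly, your selector sketch has a real gap. You propose a shared variable $e$ with $(e \regconst \gamma_{\mathsf{sel}})$ ranging over distinct markers $m_1,\dots,m_k$, and clauses of the form $(\strucvar \logeq p_i \cdot e \cdot r_i \cdot \psi_i \cdot s_i)$. But every such clause forces $\sigma(e)$ to be a factor of $\sigma(\strucvar)$. If the markers $m_j$ are distinct nonempty words, then any word accepted with selector value $m_j$ must contain $m_j$ as a factor; yet $\mathsf{INVALC}(\mathcal{X})$ contains $\emptyword$ and arbitrary short words. Saying that the $\#\#$-delimited layout ``absorbs the markers'' does not help for structurally malformed inputs, which need not contain any $\#$ at all. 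The paper's selector avoids this entirely: it never places a marker inside $\strucvar$. Instead it writes $\strucvar \logeq x_{\mathsf{error}} \cdot x_{\mathsf{tape}}$ with $(x_{\mathsf{error}} \regconst \gamma_{\mathsf{error}})$ and $(x_{\mathsf{tape}} \regconst (\gamma_{\mathsf{struc}}' \lor \emptyword))$, so either $x_{\mathsf{tape}}=\emptyword$ and the whole word is a non-tape error, or $x_{\mathsf{tape}}$ is a structurally well-formed block that must exhibit a tape error. For the finitely many tape-error patterns $\beta_1,\dots,\beta_\rho$, the selector is the splitting trick $x_\# \logeq x_{\#,1}\cdots x_{\#,\rho}$ with $x_\# \logeq \#$ (and similarly for $0$), which forces exactly one index $i$ to have $x_{\#,i}=\#$, $x_{0,i}=0$; all patterns are concatenated via $x_{\mathsf{tape}} \logeq \beta_1\cdots\beta_\rho$, and every unselected $\beta_{i'}$ collapses to $\emptyword$ because all its terminal-carrying variables are empty. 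This is precisely the \cref{example:neq} idea you cite, but applied to auxiliary single-character variables rather than to a marker embedded in $\strucvar$. Reworking your gadget along these lines closes the gap.
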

\begin{proof}
Let $\mathcal{X}$ be an extended Turing machine.
Let $\Sigma \df \{ \#, 0\}$.
We construct $\varphi$ to simulate a disjunction of errors, each of which prohibits $w \in \Sigma^*$ from being in $\mathsf{VALC}(\mathcal{X})$.
Let 
\[\varphi \df (\strucvar \logeq x_{\mathsf{error}} \cdot x_{\mathsf{tape}}) \land (x_{\mathsf{error}} \regconst \gamma_{\mathsf{error}}) \land \bigl( x_{\mathsf{tape}} \regconst (\gamma_\mathsf{struc}' \lor \emptyword) \bigr) \land \psi_\mathsf{tape},\]
where all errors except for tape errors are ``pushed'' into the regular constraint $\gamma_{\mathsf{error}}$.
The regular constraint $\gamma_\mathsf{struc}'$ accepts all encoded runs that do not contain a structural error (we shall define this regular expression when handling structural errors).
We shall ensure that there is some $\subs$ such that $\subs \models \varphi$ where $\subs(\strucvar) = \emptyword$.
Thus, we can choose between a tape error, or some other error, by considering those substitutions where $\subs \models \body(\varphi)$ and either $\subs(x_\mathsf{error}) \neq \emptyword$ or $\subs(x_\mathsf{tape}) \neq \emptyword$.
If~$\subs(x_\mathsf{error}) \neq \emptyword$ and $\subs(x_\mathsf{tape}) \neq \emptyword$ both hold, then we shall ensure that there is a tape error with the definition of the subformula $\psi_\mathsf{tape}$.

As one can observe from the proof of Theorem 14 in~\cite{freydenberger2013extended}, all errors except for tape errors can be encoded as a regular expression.
We give a formal proof here for completeness sake, and due to the fact that this proof can be considered a primer for the proof of~\cref{theorem:FCCQreg}.

Before looking at the actual construction, we define a useful partition on $\Gamma \times Q$.
Given an extended Turing machine~$\mathcal{X} \df (Q, q_1, \delta)$, we define the following sets:
\begin{align*}
S_\mathsf{HALT} \df & \{ (\mathtt{a},q_j) \in \Gamma \times Q \mid \delta(\mathtt{a},q_j) = \mathsf{HALT} \}, \\
S_\mathsf{L,R} \df  &  \{ (\mathtt{a},q_j) \in \Gamma \times Q \mid \delta(\mathtt{a},q_j) \in (\Gamma \times \{ L,R \} \times Q) \}, \text{ and}\\
S_\mathsf{CHECK} \df & \{ (\mathtt{a},q_j) \in \Gamma \times Q \mid \delta(\mathtt{a},q_j) \in (\mathsf{CHECK}_R \times Q) \}.
\end{align*}

\paragraph{Structural Errors.}
To handle structural errors, we first construct a regular expression $\gamma_\mathsf{struc}'$ that accepts all words that do not have a structural error.
Then, we use the compliment of this regular language to handle said structural errors. 

First, let us consider the following regular expression:
\[ \gamma_1 \df \#\# (0^+ \# 0^+ \# 0 (0 \lor \emptyword) \# 0^+ \#\#)^+. \]
We have that $\lang(\gamma_1)$ provides the ``backbone'' for a sequence of configurations.
We ensure that the first configuration is a starting configuration using
\[ \gamma_2 \df  \#\# 0 \# 0^+ \# 0 (0 \lor \emptyword) \# 0 \# \# \cdot \Sigma^*.\]
Likewise, we ensure that the final configuration is a halting configuration.
Let
\[  \gamma_3 \df  \bigvee_{(\mathtt{a}, q_j) \in S_\mathsf{HALT}} \bigl( \Sigma^* \# 0^{\mathsf{e}(\mathtt{a})+1} \# 0^j \#\# \bigr). \]

Lastly, we only ensure valid states are used.
Let 
\[ \gamma_4 = \bigvee_{m \in |Q|}(\Sigma^* \# 00^m \#\# \Sigma^*). \]
Notice that due to the definition of $\mathsf{enc}$, if we have $\# 0^m \#\#$, then~$m$ encodes the state of the current configuration.
Using the occurrence of $\#\#$ as a way to parse out the individual elements of a configuration shall be commonly used throughout this proof and the proof of~\cref{theorem:FCCQreg}.

Let us now define $\gamma_{\mathsf{struc}}'$ such that  
\[\lang(\gamma_{\mathsf{struc}}') = \lang(\gamma_1) \intersect \lang(\gamma_2) \intersect \lang(\gamma_3) \intersect \lang(\gamma_4).\] 
We have that $\gamma_{\mathsf{struc}}'$ defines those sequence of configurations that start with an initial configuration of $\mathcal{X}$, ends with a halting configuration of $\mathcal{X}$, and only uses states from $\mathcal{X}$.
This regular expression can be constructed, due to the fact that regular languages are effectively closed under intersection.
\footnote{Note that the regular expression $\gamma_\mathsf{struc}'$ is the same as the regular constraint placed on the variable $x_\mathsf{tape}$.}

Then, let $\gamma_{\mathsf{struc}}$ such that $\lang(\gamma_{\mathsf{struc}}) = \Sigma^* \setminus \lang(\gamma_{\mathsf{struc}}')$.
Thus, $\gamma_{\mathsf{struc}}$ handles all structural errors. Notice that $\emptyword \in \lang(\gamma_{\mathsf{struc}})$.

\paragraph{State Errors.}
To handle state errors, we define $\gamma_\mathsf{state}$ as:
\[ \gamma_\mathsf{state} \df \gamma_\mathsf{state}^\mathsf{halt} \lor \gamma_\mathsf{state}^{L,R} \lor \gamma_\mathsf{state}^\mathsf{check}, \]
where $\gamma_\mathsf{state}^\mathsf{halt}$, $\gamma_\mathsf{state}^{L,R}$, and $\gamma_\mathsf{state}^\mathsf{check}$ are to be defined.
Each of these regular expressions deal with a certain type of instructions.
For example, $\gamma_\mathsf{state}^{L,R}$ handles state errors for those instructions that cause the head to move.

Let 
\[ \gamma_{\mathsf{state}}^{\mathsf{halt}} \df \bigvee_{(\mathtt{a},q_j) \in S_\mathsf{halt}} \bigl( \Sigma^* \# 0^{\mathsf{e}(\mathtt{a})+1} \# 0^j \#\# 0 \Sigma^* \bigr).\]
It follows that for all $w \in \lang(\gamma_{\mathsf{state}}^\mathsf{halt})$, either $w$ has a structural error, or $w$ has a halting configuration that has a successor configuration.

To help with other state errors, we define a useful regular language $\gamma_{\neq j}$ for every $j \in Q$, such that $\lang(\gamma_{\neq j}) \df \{ 0^k \mid k \neq j \}$.

Next, we define $\gamma_\mathsf{state}^{L,R}$ as follows
\[ \gamma_\mathsf{state}^{L,R} \df \bigvee_{(\mathtt{a},q_j) \in S_\mathsf{L,R}} \bigl( \Sigma^* \# 0^{\mathsf{e}(\mathtt{a})+1} \# 0^j \#\# 0^+ \# 0^+ \# 0^+ \# \gamma_{\neq j'} \#\# \Sigma^* \bigr), \]
where $\delta(\mathtt{a},q_j) = (\mathtt{b}, M, q_{j'})$ for all $(\mathtt{a}, q_j) \in S_\mathsf{L,R}$.
Thus, for any $w \in \lang(\gamma_\mathsf{state}^{L,R})$, it follows that $w \in \mathsf{INVALC}(\mathcal{X})$ since either there is a structural error, or the state in one configuration does not follow from the head symbol-state pair in the previous configuration.

The last type of state error we need to handle are for those $(\mathtt{a},q_j) \in \Gamma \times Q$ such that $\delta(\mathtt{a},q_j) = (\mathsf{CHECK}_R, q_l)$.
To that end, we define
\begin{align*}
\gamma^{\mathtt{a},q_j}_{\mathsf{state}} \df & (\Sigma^*  \# 0 \# 0^{\mathsf{e}(\mathtt{a})+1} \# 0^j \#\# 0^+ \# 0^+ \# 0^+\# \gamma_{\neq l} \#\# \Sigma^*) \lor \\
                                              & (\Sigma^* \# 00^+ \# 0^{\mathsf{e}(\mathtt{a})+1} \# 0^j \#\# 0^+ \# 0^+ \# 0^+\# \gamma_{\neq j} \#\# \Sigma^*).
\end{align*}
Recall that if $\delta(\mathtt{a}, q_j) = (\mathsf{CHECK}_R, q_l)$, then we have two cases: 
If $w_R = 0^\omega$, then $\mathcal{X}$ moves into state $q_l$.
Otherwise, we have that $w_R \neq 0^\omega$, and $\mathcal{X}$ remains in $q_j$ which leads to an ``infinite loop''.
Thus, $\gamma^{\mathtt{a},q_j}_{\mathsf{state}}$ handles $\mathsf{CHECK}_R$ instructions by moving to the incorrect state for the cases where $w_R = 0^\omega$ and $w_R \neq 0^\omega$.

Then, let
\[ \gamma_\mathsf{state}^\mathsf{check} \df \bigvee_{(\mathtt{a},q_j) \in S_\mathsf{state}^\mathsf{check}} \gamma^{\mathtt{a},q_j}_{\mathsf{state}}.\] 

We have now considered every state error, and encoded these errors in one of three regular expressions ($\gamma_\mathsf{state}^\mathsf{halt}$, $\gamma_\mathsf{state}^{L,R}$, and $\gamma_\mathsf{state}^\mathsf{check}$). 
Since $\gamma_\mathsf{state}$ is a disjunction of these three regular expressions, we have handled these state errors.

\paragraph{Head Errors.}

If $(w_L, w_R, \mathtt{a}, q_j) \vdash_\mathcal{X} (w_L', w_R', \mathtt{a}', q_l)$ where $\delta(\mathtt{a}, q_j) = (\mathtt{b}, L, q_l)$, then from~\cref{obs:mod}, we know that $\mathtt{a}' = \mathsf{e}(w_L) \intmod 2$. 

Therefore, for all $(\mathtt{a}, q_j)$ where $\delta(\mathtt{a},q_j) = (\mathtt{b}, L, q_l)$, we define
\begin{align*}
\gamma^{\mathtt{a}, q_j}_{\mathsf{head}} \df & (\Sigma^* \# 0(00)^* \# 0^+ \# 0^{\mathsf{e}(\mathtt{a})+1} \# 0^j \#\# 0^+ \# 0^+ \# 00 \# \Sigma^*) \lor \\
                                           & (\Sigma^* \# 00(00)^* \# 0^+ \# 0^{\mathsf{e}(\mathtt{a})+1} \# 0^j \#\# 0^+ \# 0^+ \# 0 \# \Sigma^*).
\end{align*}
Thus, for both parities of $w_L$, we have a parity error in the head symbol.

We do the analogous encoding for those $(\mathtt{a}, q_j)$ where $\delta(\mathtt{a},q_j) = (\mathtt{b}, R, q_l)$:
\begin{align*}
\gamma^{\mathtt{a}, q_j}_{\mathsf{head}} \df & (\Sigma^* \# 0(00)^* \# 0^{\mathsf{e}(\mathtt{a})+1} \# 0^j \#\# 0^+ \# 0^+ \# 00 \# \Sigma^*) \lor \\
                                           & ( \Sigma^* \# 00(00)^* \# 0^{\mathsf{e}(\mathtt{a})+1} \# 0^j \#\# 0^+ \# 0^+ \# 0 \# \Sigma^*).
\end{align*}
This expression comes from considering~\cref{obs:mod}, and encoding the errors for the new head symbol for both parities of $\mathsf{e}(w_R)$.

To conclude encoding head errors, we consider the $\mathsf{CHECK}_R$ instruction.
From the definition, $\mathsf{CHECK}_R$ instructions do not change the head.
Thus, for every $(\mathtt{a}, q_j) \in \Gamma \times Q$ where $\delta(\mathtt{a},q_j) = (\mathsf{CHECK}_R, q_l)$, we have
\[ \gamma^{\mathtt{a},q_j}_\mathsf{head} \df \Sigma^* \# 0^{\mathsf{e}(\mathtt{a})+1} \# 0^j \#\# 0^+ \# 0^+ \# \cdot \gamma_{\neq \mathsf{e}(\mathtt{a}) + 1} \cdot \# \Sigma^*. \]
Since the new head symbol is not the same as the previous one, we have handled head errors for the $\mathsf{CHECK}_R$ instruction.
While we defined the regular expressions $\gamma_{\neq j}$ with state errors in mind, since $0^{\mathsf{e}(\mathtt{a})+1} \in \{0,00\}$, we can reuse these regular expressions in $\gamma_{\mathtt{a},q_j}^\mathsf{head}$, as can be seen above.

Now, let us define 
\[  \gamma_\mathsf{head} \df \bigvee_{(\mathtt{a},q_j) \in (S_{\mathsf{L,R}} \union S_\mathsf{CHECK})}  \gamma^{\mathtt{a},q_j}_\mathsf{head}. \]
It follows that $w \in \lang(\gamma_\mathsf{head})$ if and only if $w$ contains a head error or a structural error.

Now let us define
\[ \gamma_\mathsf{error} \df \gamma_\mathsf{struc} \lor \gamma_\mathsf{state} \lor \gamma_\mathsf{head}.\]
Therefore, $\gamma_\mathsf{error}$ contains all words that have a structural error, a state error, or a head error.

\paragraph{Tape Errors.}
Recall that 
\[\varphi \df \cqhead{} (\strucvar \logeq x_{\mathsf{error}} \cdot x_{\mathsf{tape}}) \land (x_{\mathsf{error}} \regconst \gamma_{\mathsf{error}}) \land (x_{\mathsf{tape}} \regconst (\gamma_{\mathsf{struc}}' \lor \emptyword)) \land \psi_\mathsf{tape},\]
where all errors except for tape errors are ``pushed'' into the regular constraint~$\gamma_{\mathsf{error}}$.
Also recall that $\gamma_\mathsf{struc}'$ defines the language of all encoded runs that do  not contain a structural error.
To handle tape errors, we consider those substitutions~$\subs$ where $\subs \models \body(\varphi)$ and $\subs(x_\mathsf{tape}) \neq \emptyword$.	
Thus, $\subs(x_\mathsf{tape}) \in \gamma_\mathsf{struc}'$.

We define $\psi_\mathsf{tape}$ as follows:
\begin{multline*}
\psi_\mathsf{tape} \df (x_{\mathsf{tape}} \logeq x_\#  x_\#  x_0 x_\# \cdot x) \land (x_\mathsf{tape} \logeq \beta_1 \beta_2 \cdots \beta_\rho) \land (x \regconst (\emptyword \lor \gamma_{\mathsf{structured}})) \\ 
\land (x_\# \regconst (\# \lor \emptyword)) \land (x_\# \logeq x_{\#,1} \cdot x_{\#,2} \cdots x_{\#,\rho})  \\
\land (x_0 \regconst  (0 \lor \emptyword)) \land (x_0 \logeq x_{0,1} x_{0,2} \cdots x_{0,\rho}) \land \psi_+ \\
\land \bigwedge_{i=1}^\rho \Bigl(  \bigl(x_{\# 0,i} \regconst (\# 0 \lor \emptyword) \bigr) \land (x_{\# 0,i} \logeq x_{\#,i} \cdot x_{0,i}) \Bigr) ,
\end{multline*}
where $\gamma_{\mathsf{structured}} \df (0^+ \# 0(0 \lor \emptyword)  \# 0^+ \#\#) \cdot (0^+ \# 0^+ \# 0^+ \# 0^+ \#\#)^*$, and $\beta_i \in \Xi$ for $i \in [\rho]$ are terminal-free patterns to be defined.
The regular constraint $\gamma_\mathsf{structured}$ accepts a sequence of configurations without the $\#\# 0 \#$ prefix.
It follows that since we are considering the case where $\subs(x_\mathsf{tape}) \neq \emptyword$, we have that $\subs(x_\mathsf{tape}) \in \gamma_\mathsf{struc}'$.
Consequently, $\subs(x_\mathsf{tape}) \in (\#\# 0 \# 0^+ \# 0(0 \lor \emptyword) \# 0 \#\#) \cdot \Sigma^*$.
Thus, it follows that $\subs(x) \in \gamma_\mathsf{structured}$, $\subs(x_\#) = \#$ and $\subs(x_0) = 0$ must hold, otherwise $\subs(x_\mathsf{tape}) \notin \gamma_\mathsf{struc}'$.

Since we know that $\subs(x_\#)  = \#$ and $\subs(x_0) = 0$ for any $\subs$ where $\subs \models \body(\varphi)$, we have that there is exactly one $i \in [\rho]$ such that $\subs(x_{0,i}) = 0$ and $\subs(x_{\#,i}) = \#$ due to the subformula:
\[ (x_\# \regconst (\# \lor \emptyword)) \land (x_\# \logeq x_{\#,1} \cdot x_{\#,2} \cdots x_{\#,\rho}) \land (x_0 \regconst  (0 \lor \emptyword)) \land (x_0 \logeq x_{0,1} x_{0,2} \cdots x_{0,\rho}).  \]
For all $i' \in [\rho] \setminus \{ i\}$, we have that $\subs(x_{0,i'}) = \emptyword$ and $\subs(x_{\#,i'}) = \emptyword$.
Notice that it cannot hold that $\subs(x_{\#,i}) = \#$ and $\subs(x_{0,i'}) = 0$ where $i, i' \in [\rho]$ and $i \neq i'$ due to the previous observation along with the subformula 
\[ \bigwedge_{i=1}^\rho \bigl( (x_{\# 0,i} \regconst (\# 0 \lor \emptyword) ) \land (x_{\# 0,i} \logeq x_{\#,i} \cdot x_{0,i}) \bigr) .\]

If indeed $\subs(x_{0,i}) = 0$ and $\subs(x_{\#,i}) = \#$ for some substitution $\subs \models \body(\varphi)$ and $i \in [\rho]$, then we call $i $ the \emph{selected error} for $\subs$.

The last thing to do before handling the tape errors is to define $\psi_+$. 
This subformula states that if $i$ is the selected error, then certain variables must be replaced with $0^+$.
Let
\[
\psi_+ \df  \bigwedge_{i=1}^\rho \bigwedge_{r=1}^3 \bigl( (z_{i,r} \logeq x_{0,i} \cdot z_{i,r}') \land (z_{i,r} \logeq z_{i,r}' \cdot x_{0,i}) \bigr). 
\]
It follows that if $i \in [\rho]$ is the selected error for $\subs$, then $\subs(z_{i,1}), \subs(z_{i,2}), \subs(z_{i,3}) \in 0^+$. 
This is from a combinatorics on words observation:
If $vu = uv$ for $u,v \in \Sigma^*$, then there is some $z \in \Sigma^*$ and $k_1,k_2 \in \mathbb{N}$ such that $u = z^{k_1}$ and $v = z^{k_2}$ (for example, see Proposition 1.3.2 in Lothaire~\cite{lothaire1997combinatorics}).
Assume that $\subs(x_{0,i}) = 0$, then $\subs(z_{i,1}) = \subs(z_{i,1}') \subs( x_{0,i}) = \subs(x_{0,i}) \subs(z_{i,1}')$ and thus $\subs(z_{i,1}') \in 0^*$. 
It therefore follows that $\subs(z_{i,1}) \in 0^+$ holds.
Notice that if $i$ is not the selected error, then $\subs(z_{i,r}) = \subs(z_{i,r'})$ for $r \in [3]$, and $\subs(z_{i,r}) = \emptyword$ can hold.

We are now ready to define the patterns that encode the tape errors.
The patterns $\beta_i$ for $i \in [\rho]$ are used to encode the actual tape errors.
For $i \neq j$ where~$i, j \in [\rho]$, we have $\var(\beta_i) \intersect \var(\beta_j) = \emptyset$.
While defining each pattern $\beta_i$ for $i \in [\rho]$, we assume that $i$ is the selected error.
Therefore, we can assume that~$x_{\#,i} = \#$, $x_{0,i} = 0$, and $z_{i,1}, z_{i,2} , z_{i,3} \in \lang(0^+)$.

A \emph{tape error} describes where we have two configurations $C = (w_L, w_R, \mathtt{a}, q_j)$ and $C' = (w_L', w_R', \mathtt{a},q_j)$, but $w_L'$ or $w_R'$ is not what is expected. 
We look at each type of instruction, and encode a tape error for each.

\underline{The $\mathsf{CHECK}_R$-instruction:}
As it is the simplest to encode, due to the fact that~$w_L = w_L'$ and $w_R = w_R'$ should both hold, we start with the $\mathsf{CHECK}_R$-instruction.

First, we express an error by considering when $\mathsf{e}(w_L) < \mathsf{e}(w_L')$.
For every $(\mathtt{a},q_j) \in S_\mathsf{CHECK}$, we have a unique $i \in [\rho]$ such that
\[ \beta_i \df y_i \cdot x_{\#,i} \cdot \underbrace{z_{i,1}}_{\mathsf{e}(w_L)+1} \cdot x_{\#,i} \cdot z_{i,2} \cdot x_{\#,i} \cdot (x_{0,i})^{\mathsf{e}(\mathtt{a})+1} \cdot x_{\#,i} \cdot (x_{0,i})^j \cdot (x_{\#,i})^2 \cdot \underbrace{z_{i,1}  z_{i,3}}_{\mathsf{e}(w_L')+1} \cdot y_i'.\]

As $z_{i,3} \in \lang(0^+)$, we have that $|0^{\mathsf{e}(w_L)}| < |0^{\mathsf{e}(w_L')}|$ and due to the fact that $\mathsf{CHECK}_R$ does not change the tape, this encodes an error.

Next, we deal with when $\mathsf{e}(w_L) > \mathsf{e}(w_L')$. 
For this case, there is a unique $i \in [\rho]$ for every $(\mathtt{a},q_j) \in S_\mathsf{CHECK}$ such that
\[ \beta_i \df y_i \cdot \underbrace{ z_{i,1}  z_{i,2}}_{\mathsf{e}(w_L)+1} \cdot x_{\#,i} \cdot z_{i,3} \cdot x_{\#,i} \cdot (x_{0,i})^{\mathsf{e}(\mathtt{a})+1} \cdot x_{\#,i} \cdot (x_{0,i})^j \cdot (x_{\#,i})^2 \cdot \underbrace{z_{i,1}}_{\mathsf{e}(w_L')+1} \cdot x_{\#,i} \cdot y_i'.\]
The above pattern hence encodes $\mathsf{e}(w_L) > \mathsf{e}(w_L')$.

As with $w_L$, we also have the analogous case where $w_R \neq w_R'$.
To that end, there are unique values $i, i' \in [\rho]$ for every $(\mathtt{a},q_j) \in S_\mathsf{CHECK}$ such that
\begin{align*}
\beta_i \df     & y_i  x_{\#,i} \cdot \overbrace{z_{i,1}}^{\mathsf{e}(w_R)+1} \cdot x_{\#,i} \cdot \gamma_i \cdot x_{\#,i} \cdot \overbrace{ z_{i,1} \cdot z_{i,3}}^{\mathsf{e}(w_R')+1} \cdot x_{\#,i}  y_i', \\   
\beta_{i'} \df & y_{i'}  x_{\#,i'} \cdot \underbrace{z_{i',1} z_{i',2}}_{\mathsf{e}(w_R)+1} \cdot x_{\#,i'} \cdot \gamma_{i'} \cdot x_{\#,i'} \cdot \underbrace{z_{i',1}}_{\mathsf{e}(w_R')+1}  \cdot x_{\#,i'}  y_{i'}',
\end{align*}
where $\gamma_k \df (x_{0,k})^{\mathsf{e}(\mathtt{a})+1} \cdot x_{\#,k} \cdot (x_{0,k})^j \cdot (x_{\#,k})^2 \cdot z_{k,2}$ for $k \in \{i, i'\}$.
The reasoning behind these errors (when $w_R \neq w_R'$) follows from the case where $w_L \neq w_L'$.
That is, we deal with the two case where $\mathsf{e}(w_R) < \mathsf{e}(w_R')$, and where $\mathsf{e}(w_R) > \mathsf{e}(w_R')$ with $\beta_i$ and $\beta_{i'}$ respectively.

This concludes our look at tape errors for $\mathsf{CHECK}_R$ instructions.
For intuition, for every symbol-state pair that leads to the $\mathsf{CHECK}_R$ instruction, we have four patterns.
Two patterns deal with the case where $\mathsf{e}(w_L) \neq \mathsf{e}(w_L')$, and two patterns deal with the case where $\mathsf{e}(w_R) \neq \mathsf{e}(w_R')$.

\underline{Head-movement instructions:}
To deal with the instructions that cause the head to move, we partition $S_\mathsf{L,R}$ further.
Let
\begin{align*}
S_\mathsf{L} \df & \{ (\mathtt{a}, q_j) \in \Gamma \times Q \mid \delta(\mathtt{a},q_j) \in (\Gamma \times \{ L \} \times Q \}  \}, \text{ and } \\
S_\mathsf{R} \df & \{ (\mathtt{a}, q_j) \in \Gamma \times Q \mid \delta(\mathtt{a},q_j) \in (\Gamma \times \{ R \} \times Q \}  \}. \\
\end{align*}

Let us now consider $\delta(\mathtt{a}, q_j) = (\mathtt{b}, L, q_l)$.
We know from~\cref{obs:mod} that~$\mathsf{e}(w_L') = \mathsf{e}(w_L) \intdiv 2$.
First, we consider when $\mathsf{e}(w_L')$ is too large.
This is split into two very similar cases (depending on the parity of $w_L$).
For each $(\mathtt{a},q_j) \in S_\mathsf{L}$, we have unique values $i, i', i'' \in [\rho]$ such that 
\begin{align*}
\beta_i \df & y_i \cdot x_{\#,i} \overbrace{(x_{0,i})^2 \cdot (z_{i,1})^2}^{\mathsf{e}(w_L)+1}  \cdot  \gamma_i \cdot \overbrace{x_{0,i} z_{i,1} z_{i,3}}^{\mathsf{e}(w_L')+1} \cdot y_i', \\
\beta_{i'} \df & y_{i'} \cdot x_{\#,i'} \underbrace{x_{0,i'}\cdot (z_{i',1})^2}_{\mathsf{e}(w_L)+1}  \cdot \gamma_{i'} \cdot \underbrace{x_{0,i'} z_{i',1} z_{i',3}}_{\mathsf{e}(w_L')+1} \cdot y_{i'}'.
\end{align*}

We also need to handle the special case where $\mathsf{e}(w_L) = 0$:
\[ \beta_{i''} \df  y_i'' \cdot x_{\#,i''} \underbrace{x_{0,i''}}_{\mathsf{e}(w_L)+1}  \cdot \gamma_{i''} \cdot \underbrace{ x_{0,i''} \cdot z_{i'',1}}_{\mathsf{e}(w_L')+1} \cdot y_{i''}' \]

where $\gamma_k$ for $k \in \{i, i', i''\}$ is defined as
\[ \gamma_k\df x_{\#,k} \cdot z_{k,2} \cdot x_{\#,k} \cdot (x_{0,k})^{\mathsf{e}(\mathtt{a}+1)} \cdot x_{\#,k} \cdot (x_{0,k})^j \cdot (x_{\#,k})^2.\]
Since $z_{k,r}' \in 0^+$ for $k \in \{i,i',i''\}$ and $r \in [3]$, we have that $\mathsf{e}(w_L') > \mathsf{e}(w_L) \intdiv 2$.

To handle the case where $\delta(\mathtt{a}, q_j) = (\mathtt{b}, L, q_l)$ and $w_L'$ is too small, we instead encode this as $w_L$ being too large.
Since $\mathsf{e}(w_L') = \mathsf{e}(w_L) \intdiv 2$, we have that $\mathsf{e}(w_L')$ is too small if $\mathsf{e}(w_L) > 2 \mathsf{e}(w_L') + 1$.
Thus, for each $(\mathtt{a},q_j) \in S_\mathsf{L}$, we have a unique~$i \in [\rho]$ such that 
\begin{align*}
\beta_i \df& y_i \cdot \underbrace{(z_{i,1})^2}_{\mathsf{e}(w_L)+1} \cdot  x_{\#,i} \cdot z_{i,2} \cdot x_{\#,i} \cdot (x_{0,i})^{\mathsf{e}(\mathtt{a}+1)}  \cdot (x_{0,i})^j \cdot (x_{\#,i})^2 \cdot  \underbrace{z_{i,1}}_{\mathsf{e}(w_L')+1} \cdot x_{\#,i} \cdot y_i'.
\end{align*}
Note that in $\beta_i$ the right-most occurrence of $z_{i,1}$ encodes $0 \cdot 0^{\mathsf{e}(w_L')}$.
Thus, the occurrence of $(z_{i,1})^2$ encodes $(0 0^{\mathsf{e}(w_L')})^2$.
Hence, $(z_{i,1})^2$ encodes $2 \mathsf{e}(w_L') + 2$.

Now, let us deal with errors on the right-hand side of the tape.
This is more complicated since we have to deal with what is being written to the tape.

First, let us deal with $w_R'$ being too large.
We know from~\cref{obs:mod} that if $\delta(\mathtt{a},q_j) = (\mathtt{b}, L, q_l)$, then $\mathsf{e}(w_R') = 2 \mathsf{e}(w_R) + \mathtt{b}$.
Hence, for each $(\mathtt{a},q_j) \in S_\mathsf{L}$ where $\delta(\mathtt{a},q_j) = (\mathtt{b}, L, q_l)$, we have unique values $i,i' \in [\rho]$ such that
\[ \beta_i \df y_i \cdot x_{\#,i} \cdot \underbrace{x_{0,i} z_{i,1}}_{\mathsf{e}(w_R)+1} \cdot x_{\#,i} \cdot \gamma_i  \cdot x_{\#,i} \cdot \underbrace{(x_{0,i})^{\mathsf{e}(\mathtt{b})+1}  (z_{i,1})^2 z_{i,2}}_{\mathsf{e}(w_R')+1} \cdot y_i'. \]
where
\[ \gamma_i \df (x_{0,i})^{\mathsf{e}(\mathtt{a})+1} \cdot x_{\#,i} \cdot (x_{0,i})^j \cdot (x_{\#,i})^2 \cdot z_{i,3} .  \]
Therefore, $\beta_i$ defines the case where $0 \cdot 0^{\mathsf{e}(w_R')} = 0 \cdot 0^{\mathsf{e}(\mathtt{b})} \cdot 0^{2 \mathsf{e}(w_R)} \cdot 0^m$, for some $m \geq 1$.
Therefore, if $i$ is the selected error, then $w \in \mathsf{INVALC}(\mathcal{X})$.

To deal with $w_R = 0$, we define
\[\beta_{i'} \df y_{i'} \cdot x_{\#,i'} \cdot x_{0,i'} \cdot x_{\#,i'} \cdot (x_{0,i'})^{\mathsf{e}(\mathtt{a})+1} \cdot x_{\#,i'} \cdot (x_{0,i'})^j \cdot (x_{\#,i'})^2 \cdot z_{i',3} \cdot x_{\#,i'} \cdot (x_{0,i'})^{\mathsf{e}(\mathtt{b})+2} \cdot y_{i'}',\]

Next, let us consider when $w_R'$ is too small.
First, we just deal with the case where $\mathsf{e}(w_R')$ is of the wrong parity.
To that end, for each $(\mathtt{a},q_j) \in S_\mathsf{L}$ where $\delta(\mathtt{a},q_j) = (\mathtt{b}, L, q_l)$, we have unique values $i, i' \in [\rho]$ such that 
\begin{align*}
\beta_i \df & \gamma_i  \cdot x_{\#,i} \cdot x_{0,i} (z_{i,2})^2 (x_{0,i})^{1- \mathsf{e}(\mathtt{b})} \cdot x_{\#,i} \cdot y_i', \\
\beta_{i'} \df & \gamma_{i'} \cdot x_{\#,i'} \cdot x_{0,i'} (x_{0,i'})^{1- \mathsf{e}(\mathtt{b})} \cdot x_{\#,i'} \cdot y_{i'}',
\end{align*}
where $\gamma_k$ for $k \in \{i,i'\}$ is defined as:
\[ \gamma_k \df y_k \cdot x_{\#,k} \cdot (x_{0,k})^{\mathsf{e}(\mathtt{a}) + 1} \cdot x_{\#,k} \cdot (x_{0,k})^j \cdot (x_{\#,k})^2 \cdot z_{k,1}. \]

We have that $\beta_i$ and $\beta_{i'}$ together encode
\[ \Sigma^* \cdot \# 0^{\mathsf{e}(\mathtt{a})+1} \# 0^j \# \# 0^+  (00)^* 0^{1- \mathsf{e}(\mathtt{b})} \# \Sigma^*. \]
Since from~\cref{obs:mod}, $\mathsf{e}(w_R') \intmod 2 = \mathtt{b}$ should hold, if $i$ or $i'$ is the selected error, then we have a parity error of the type just encoded.

We now handle the case where $w_R'$ is too small and is of the correct parity.
Again, we encode $w_R'$ being too small as $w_R$ being too large.
For each $(\mathtt{a},q_j) \in S_\mathsf{L}$ where $\delta(\mathtt{a},q_j) = (\mathtt{b}, L, q_l)$, we have unique $i, i' \in [\rho]$ where
\begin{align*}
\beta_i \df & y_i \cdot \overbrace{x_{0,i} z_{i,1} z_{i,2}}^{\mathsf{e}(w_R)+1} \cdot x_{\#,i} \cdot \gamma_k \cdot x_{\#,i} \cdot \overbrace{x_{0,i} (z_{i,1})^2 (x_{0,i})^{\mathsf{e}(\mathtt{b})}}^{\mathsf{e}(w_R')+1} \cdot x_{\#,i} y_i', \\
\beta_{i'} \df &  y_{i'} \cdot \underbrace{x_{0,i'} z_{i',2}}_{\mathsf{e}(w_R)+1} \cdot x_{\#,i'} \cdot \gamma_k \cdot x_{\#,i'} \cdot \underbrace{x_{0,i'}  (x_{0,i'})^{\mathsf{e}(\mathtt{b})}}_{\mathsf{e}(w_R')+1} \cdot x_{\#,i'} y_{i'}',
\end{align*}
where for $k \in \{i, i' \}$ the subpattern $\gamma_k$ is defined as
\[ \gamma_k \df (x_{0,k})^{\mathsf{e}(\mathtt{a})+1} \cdot x_{\#,k} \cdot (x_{0,k})^j \cdot (x_{\#,k})^2 \cdot z_k''. \]

Thus $\beta_i$ and $\beta_{i'}$ together, encode
\[ \Sigma^* 0 0^m 0^n \# 0^{\mathsf{e}(\mathtt{a})+1} \# 0^j \#\# 0^+ \# 0 0^{2m} 0^{\mathsf{e}(\mathtt{b})} \# \Sigma^*, \]
where $m \geq 0$, and $n \geq 1$.
It is clear from the above representation of $\beta_i$ and $\beta_{i'}$, we have that $\mathsf{e}(w_R') < 2 \mathsf{e}(w_R) + b$.
Thus, we have handled this type of error.

The final type of error that we need to encode are for those $(\mathtt{a}, q_j) \in \Gamma \times Q$ such that $\delta(\mathtt{a}, q_j) = (\mathtt{b}, R, q_l)$.
However, recalling~\cref{obs:mod}, this case is symmetrical to when $\delta(\mathtt{a}, q_j) = (\mathtt{b}, L, q_l)$, where the roles for $w_L, w_L'$ and $w_R, w_R'$ are reversed.
Thus, it is clear that we can handle the tape errors for $\delta(\mathtt{a}, q_j) = (\mathtt{b}, R, q_l)$ the same way we handled the tape errors for $\delta(\mathtt{a}, q_j) = (\mathtt{b}, L, q_l)$, with the minor necessary changes.

\paragraph{Correctness.}
Now, we show that $w \in \mathsf{INVALC}(\mathcal{X})$ if and only if $w \in \lang(\varphi)$, where $\varphi$ is as defined above.

\emph{If direction:}
For each $w \in \mathsf{INVALC}(\mathcal{X})$, there is at least one error that prohibits~$w \in \mathsf{VALC}(\mathcal{X})$ from holding.
We have defined a regular expression $\gamma_\mathsf{error}$ such that~$\lang(\gamma_\mathsf{error})$ handles all but the tape errors.
Therefore, for any $w \in \mathsf{INVALC}(\mathcal{X})$, if $w$ contains a non-tape error, then $w \in \lang(\varphi)$.

If $w \in \Sigma^*$ only contains a tape error, then for some $i \in [\rho]$, there is some pattern $\beta_i$ that we have defined such that for some substitution $\subs \models \body(\varphi)$, we have that $\subs(\strucvar) = \subs(\beta_i) = w$.
Note that $\subs(\beta_{i'}) = \emptyword$ can hold for all $i' \in [\rho] \setminus \{ i \}$ due to the fact that we know $\subs(x_{\#,i'}) = \subs(x_{0,i'}) = \emptyword$ whenever $i'$ is not the selected error, and all other variables in $\beta_{i'}$ do not have regular constraints.

\emph{Only if direction:}
Let $w \in \lang(\varphi)$ and let $\subs$ be a substitution such that $\subs \models \body(\varphi)$ and $\subs(\strucvar) = w$.
If $\subs(x_\mathsf{tape}) = \emptyword$, then $w \in \gamma_\mathsf{error}$ must hold.
Therefore, $w \in \mathsf{INVALC}(\mathcal{X})$.
If $\subs(x_\mathsf{tape}) \neq \emptyword$, then $\subs(x_\mathsf{tape}) \in \gamma_\mathsf{struc}'$ must hold.
However, if this is the case, then there is some selected error $i \in [\rho]$ for $\subs$ and therefore:
\begin{itemize}
\item $\subs(x_{0,i}) = 0$, 
\item $\subs(x_{\#,i}) = \#$, and
\item $\subs(z_{i,r}) \in 0^+$ for $r \in \{1,2,3\}$.
\end{itemize}
Since for every $i \in [\rho]$ we have that $\beta_i$ uses the variables $x_{0,i}$, $x_{\#,i}$ and $z_{i,r}$ for $r \in [3]$ to encode a tape error,
and these variables are not mapped to the empty word, we have that $w$ has a tape error.
Therefore, $w \in \mathsf{INVALC}(\mathcal{X})$.
Note that if $\subs(x_\mathsf{tape}) \neq \emptyword$, then $\subs(\strucvar)$ must contain a tape error due to the fact that the above reasoning still holds no matter what $x_\mathsf{error}$ is substituted with.

To conclude this proof, notice that if $\subs \models \body(\varphi)$ where $\subs(\strucvar) = w$, then $w \in \lang(\varphi)$.
Therefore, we have shown that $w \in \mathsf{INVALC}(\mathcal{X})$ if and only if $w \in \lang(\varphi)$, in turn showing that $\lang(\varphi) = \mathsf{INVALC}(\mathcal{X})$.
\end{proof}

The proof of~\cref{theorem:Invalc} is somewhat similar to the proof of Lemma 2.4 and Lemma 3.2 in Freydenberger and Schweikardt~\cite{freydenberger2013expressiveness}.
These results in~\cite{freydenberger2013expressiveness} are used in the context of CRPQs with equalities over marked paths.
However, while it is possible to alter these proofs for $\cpfcreg$s, we utilize extended Turing machines to show that universality is undecidable (which does not immediately follow from a direct adaptation of proofs from~\cite{freydenberger2013expressiveness}).
For more details on the connection between CRPQs with equalities, and fragments of $\fcreg$, see Section 7 of Freydenberger~\cite{fre:splog}.

\begin{theorem}\label{corollary:UNIVandREG}
For $\cpfcreg$, universality is not semi-decidable, and regularity is neither semi-decidable, nor co-semi-decidable.
\end{theorem}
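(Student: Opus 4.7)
The plan is to reduce directly from the decision problems for extended Turing machines given in \cref{lemma:exTM}, using the effective construction from \cref{theorem:Invalc} as the bridge.

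For universality, I would proceed as follows. Given an extended Turing machine $\mathcal{X}$, apply \cref{theorem:Invalc} to effectively construct $\varphi_\mathcal{X} \in \cpfcreg$ with $\lang(\varphi_\mathcal{X}) = \mathsf{INVALC}(\mathcal{X})$. The paper has already observed that $\mathsf{INVALC}(\mathcal{X}) = \Sigma^*$ holds if and only if $\mathsf{dom}(\mathcal{X}) = \emptyset$, since a word lies in $\mathsf{VALC}(\mathcal{X})$ exactly when it encodes an accepting run, and such a run exists precisely when $\mathsf{dom}(\mathcal{X})$ is non-empty. Thus $\lang(\varphi_\mathcal{X}) = \Sigma^*$ iff $\mathsf{dom}(\mathcal{X}) = \emptyset$. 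Since emptiness for extended Turing machines is not semi-decidable by \cref{lemma:exTM}, if $\cpfcreg$ universality were semi-decidable, we could semi-decide emptiness by composing the (computable) construction with the semi-decision procedure. This contradicts \cref{lemma:exTM}, so universality is not semi-decidable.

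For regularity, the same reduction works, but using the second part of \cref{lemma:exTM}. Again construct $\varphi_\mathcal{X}$ with $\lang(\varphi_\mathcal{X}) = \mathsf{INVALC}(\mathcal{X})$. As noted in the paper (citing \cite{freydenberger2013extended}), $\mathsf{INVALC}(\mathcal{X})$ is regular if and only if $\mathsf{dom}(\mathcal{X})$ is finite. Therefore $\lang(\varphi_\mathcal{X})$ is regular iff $\mathsf{dom}(\mathcal{X})$ is finite. If regularity of $\cpfcreg$ were semi-decidable (respectively, co-semi-decidable), then finiteness of $\mathsf{dom}(\mathcal{X})$ would be too, contradicting \cref{lemma:exTM}.

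There is essentially no obstacle to overcome at this stage: all the hard work has been absorbed into \cref{theorem:Invalc} and the properties of $\mathsf{VALC}$/$\mathsf{INVALC}$ already recorded from \cite{freydenberger2013extended}. The only things to verify carefully in the write-up are that the construction in \cref{theorem:Invalc} is indeed effective (so the reduction is computable) and the two equivalences linking $\mathsf{INVALC}(\mathcal{X})$ to emptiness and finiteness of $\mathsf{dom}(\mathcal{X})$ — both of which have been asserted in the excerpt immediately preceding the theorem. The proof is therefore a short two-paragraph reduction argument rather than a new technical construction.
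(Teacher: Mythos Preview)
Your proposal is correct and matches the paper's proof essentially line for line: the paper also invokes \cref{theorem:Invalc} to effectively obtain $\varphi$ with $\lang(\varphi)=\mathsf{INVALC}(\mathcal{X})$, then argues that semi-decidability of universality would yield semi-decidability of emptiness for extended Turing machines, and that (co-)semi-decidability of regularity would yield (co-)semi-decidability of finiteness, both contradicting \cref{lemma:exTM}. Your observation that all technical work has already been absorbed into \cref{theorem:Invalc} is exactly right.
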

\begin{proof}
From~\cref{theorem:Invalc} we know that given an extended Turing machine $\mathcal{X}$, one can effectively construct $\varphi \in \cpfcreg$ such that $\lang(\varphi) = \mathsf{INVALC}(\mathcal{X})$.
Therefore, if universality for $\cpfcreg$ were semi-decidable, then emptiness for extended Turing machines would be semi-decidable, which is a contradiction.
If regularity for $\cpfcreg$ were semi-decidable or co-semi-decidable, then finiteness for Turing machines would be semi-decidable or co-semi-decidable.
\end{proof}

As observed in~\cref{fccq-universality}, the universality problem for $\cpfc$s is $\np$-complete.
Therefore, we cannot effectively construct $\varphi \in \cpfc$ such that $\lang(\varphi) = \mathsf{INVALC}(\mathcal{X})$ for a given extended Turing machine $\mathcal{X}$ (otherwise, the emptiness problem for extended Turing machines would be $\np$-complete).
However, using a similar proof idea to~\cref{theorem:Invalc}, we are able to conclude that the regularity problem for $\cpfc$ is undecidable.

\begin{theorem}\label{theorem:FCCQreg}
Regularity for $\cpfc$ is neither semi-decidable, nor co-semi-decidable.
\end{theorem}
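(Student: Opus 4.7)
We reduce from the finiteness problem for extended Turing machines, which by \cref{lemma:exTM} is neither semi-decidable nor co-semi-decidable; recall moreover that by the results of Freydenberger used in the proof of \cref{theorem:Invalc}, $\mathsf{INVALC}(\mathcal{X})$ is regular if and only if $\mathsf{dom}(\mathcal{X})$ is finite. Hence it suffices, given an extended Turing machine $\mathcal{X}$, to effectively construct an $\cpfc$-formula $\varphi_\mathcal{X}$ whose language is regular if and only if $\mathsf{INVALC}(\mathcal{X})$ is. Neither semi-decidability nor co-semi-decidability of the resulting regularity problem then follows directly from the corresponding properties of finiteness for $\mathcal{X}$.

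The plan is to adapt the $\cpfcreg$-construction of \cref{theorem:Invalc} so that no regular constraints are used. Inspecting that proof, the tape-error subformula $\psi_{\mathsf{tape}}$ already consists almost entirely of word equations; the only regular constraints it really uses are the very restricted ones of the form $(y \regconst (c \lor \emptyword))$ for a single terminal symbol $c \in \{0, \#\}$, together with constraints of the form $0^+$ on a few auxiliary variables. Both kinds can be simulated in pure $\cpfc$ over an alphabet enlarged by a fresh marker $\dagger$: I prepend to every word in $\lang(\varphi_\mathcal{X})$ a short fixed header such as $\dagger \cdot 0 \cdot \dagger \cdot \# \cdot \dagger$, and use multiple concatenation equations together with the combinatorial fact that ``$uv = vu$ implies $u, v$ are powers of a common word'' (already exploited in the subformula $\psi_+$ of the proof of \cref{theorem:Invalc}) to pin down two variables $x_0$ and $x_\#$ as the single characters $0$ and $\#$ respectively. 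A constraint $(y \regconst (c \lor \emptyword))$ is then rewritten as $(y \logeq u) \land (u \cdot x_c \logeq x_c \cdot u)$, which forces $y \in \{ \emptyword, x_c \} = \{ \emptyword, c\}$ via the same trick; the $0^+$-constraints are handled analogously.

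The principal obstacle is that the proof of \cref{theorem:Invalc} offloads all structural, state, and head errors into a single large regular expression $\gamma_{\mathsf{error}}$, and uses the guard $(x_{\mathsf{tape}} \regconst (\gamma_{\mathsf{struc}}' \lor \emptyword))$ whose disjunct $\gamma_{\mathsf{struc}}'$ is similarly non-trivial; neither can be expressed directly in pure $\cpfc$. To work around this, my plan is to alter the target language: instead of aiming for $\lang(\varphi_\mathcal{X}) = \mathsf{INVALC}(\mathcal{X})$, I aim for a regularity-equivalent language obtained by inserting additional $\dagger$-markers that syntactically segregate configurations, so that ``well-formedness'' is enforced by a single weakly acyclic word-equation template (again using the $uv = vu$ trick to fix the positions of $\#$- and $\dagger$-markers and to force the $0^+$-blocks between them). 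All non-tape errors can then either be folded into this template, or replaced by augmented tape-error patterns that additionally trigger on the relevant state/head mismatch. A routine verification, following the correctness argument of \cref{theorem:Invalc}, shows that the resulting $\lang(\varphi_\mathcal{X})$ is a regularity-preserving image of $\mathsf{INVALC}(\mathcal{X})$, so $\lang(\varphi_\mathcal{X})$ is regular if and only if $\mathsf{INVALC}(\mathcal{X})$ is, completing the reduction.
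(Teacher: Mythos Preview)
Your high-level plan---reduce from finiteness of extended Turing machines via a regularity-preserving padding of $\mathsf{INVALC}(\mathcal{X})$, use the $uv=vu$ trick for $0^+$, and prepend a fixed header to supply terminal anchors---matches the paper's approach. However, two concrete points make the proposal a sketch rather than a proof.

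First, your replacement of $(y \regconst (c \lor \emptyword))$ by $(y\logeq u)\land(u\cdot x_c \logeq x_c\cdot u)$ does \emph{not} force $y\in\{\emptyword,c\}$: commutation with a single letter only yields $y\in c^*$. The paper handles these selector variables differently, by a length argument: it fixes $x_\# \logeq \#$ and adds $(x_\# \logeq x_{\#,1}\cdots x_{\#,\mu})$ (likewise for $0$), so that exactly one $x_{\#,i}$ equals $\#$ and the rest are $\emptyword$. A further equation on $0\#0$ couples the $0$- and $\#$-selectors so that the \emph{same} index $i$ is chosen for both. This ``one-hot'' mechanism is the actual substitute for the $(c\lor\emptyword)$ constraints.

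Second, and more importantly, you do not explain how the large disjunction over error types is simulated without regular constraints. Your plan to enforce structural well-formedness ``by a single weakly acyclic word-equation template'' cannot work as stated: the language $\lang(\gamma_{\mathsf{struc}}')$ has an unbounded Kleene-star block structure, which no single pattern captures. The paper avoids this entirely. It writes \emph{every} error type---structural, state, head, and tape---as an individual terminal-free pattern $\beta_i$ (with the restricted types realised via the selector mechanism above), and then simulates the disjunction by concatenation: it sets $z \logeq \alpha_1\cdots\alpha_\mu$ with $\alpha_i \logeq (x_{\#,i})^3\,\beta_i\,(x_{\#,i})^3$, so that for the unique selected index $i$ we get $\alpha_i = \#^3\beta_i\#^3$ while all other $\alpha_j$ may collapse to $\emptyword$. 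The header $0\#0\#^3$ and trailer $\#^3$, together with a short argument about where the $\#^3$ blocks can land, then give $\lang(\varphi) = 0\#0\#^3\cdot\mathsf{INVALC}(\mathcal{X})\cdot\#^3$. This concatenation-as-disjunction device (already previewed in the proof of \cref{quotientExp}) is the missing piece in your outline; once you have it, no extra $\dagger$-symbol is needed and the construction stays over $\{0,\#\}$.
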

\begin{proof}
Let $\mathcal{X} \df (Q,q_1,\delta)$ be an extended Turing machine, and let $\Sigma \df \{0,\# \}$.
From $\mathcal{X}$, we construct $\varphi \in \cpfc$ such that $\lang(\varphi)$ is regular if and only if $\mathsf{INVALC}(\mathcal{X})$ is regular.
More formally, we define $\varphi$ such that:
\[ \lang(\varphi) = \{ 0 \cdot \# \cdot 0 \cdot \#^3 \cdot w \cdot \#^3 \mid w \in \mathsf{INVALC}(\mathcal{X}) \}. \]
In other words, we have fixed words $w_1, w_2 \in \{ 0, \# \}^+$ such that for each extended Turing machine $\mathcal{X}$, there exists $\varphi \in \cpfc$ such that $\lang(\varphi) = w_1 \cdot \mathsf{INVALC}(\mathcal{X}) \cdot w_2$, where $w_1 = 0 \# 0 \#^3$ and $w_2 = \#^3$.

Instead of first defining the formula $\varphi$, and then proving correctness, we take a different approach and define a set of regularly typed patterns $\{ (\beta_i,T_i) \mid i \in [\mu] \}$, where $\mu \in \mathbb{N}$ depends on $\mathcal{X}$ and for each $i,j \in [\mu]$ where $i \neq j$, we have that $\var(\beta_i) \intersect \var(\beta_j) = \emptyset$.
This set is defined such that $\union_{i=1}^\mu \lang(\beta_i,T_i) = \mathsf{INVALC}(\mathcal{X})$, where $T_i$ is the typing function that maps variables to regular languages. We assume $T_i$ is defined as follows:
\begin{itemize}
\item $T_i(x_{\mathtt{a},i}) \df \{ \mathtt{a} \}$ for all $\mathtt{a} \in \Sigma$,
\item $T_i(y_i) \df \Sigma^*$ and $T_i(y_i') \df \Sigma^*$, and
\item $T_i(z_{i,r}) \df 0^+$ for each $r \in [4]$.
\end{itemize}
Each $\beta_i$ is defined only using variables shown above.
Then, the definitions of the patterns used to define the errors are somewhat similar to the tape errors in the proof of~\cref{theorem:Invalc}.

As with many proofs of this sort, we use a finite number of ``rules'' for a word $w \in \Sigma^*$ to contain an error, and thus  $w \in \mathsf{INVALC}(\mathcal{X})$.
Each rule is encoded as some $(\beta_i,T_i)$.
The encoding of the rules follows closely to the proof of Theorem 14 in~\cite{freydenberger2013extended}.

We partition $[\mu]$ into sets of contiguous numbers for different types of errors.
\begin{enumerate}
\item $E_{\mathsf{struc}} \df \{1,2,\dots,e_1 \}$ denote the structural errors,
\item $E_{\mathsf{state}} \df \{e_1+1, e_1+2, \dots, e_2 \}$ denotes the state errors,
\item $E_{\mathsf{head}} \df \{e_2+1, e_2+2, \dots, e_3 \}$ denotes the head errors, and
\item $E_{\mathsf{tape}} \df \{e_3+1, e_3+2, \dots, \mu\}$ denotes the tape errors.
\end{enumerate}
For example, for every $i \in E_\mathsf{struc}$, we have some specific structural error that causes a word to be in $\mathsf{INVALC}(\mathcal{X})$. 
The error associated to $i$ is encoded in  $(\beta_i,T_i)$.

\paragraph{Structural Errors.}
First, we define $\beta_i$ for $i \in [7]$:
\begin{align*}
\beta_1 \df & \, \emptyword, \\
\beta_2 \df & \, x_{0,2} \cdot y_2, \\
\beta_3 \df & \,y_3 \cdot x_{0,3}, \\
\beta_4 \df & \, x_{\#,4} \cdot x_{0,4} \cdot y_4, \\
\beta_5 \df & \, y_5 \cdot x_{0,5} \cdot x_{\#,5}, \\
\beta_6 \df & \,x_{\#,6}, \\
\beta_7 \df & \, x_{\#,7} \cdot x_{\#,7}.
\end{align*}

Up to this point, if $w \notin \lang(\beta_i, T_i)$ for all $i \in [7]$, then $w = u \#\#$ and $w = \#\# v$ for some $u,v \in \Sigma^*$.

We now define $\beta_8$ to $\beta_{12}$:
\begin{align*}
\beta_8 \df & \, y_8 \cdot (x_{\#,8})^3 \cdot y_8, \\
\beta_{9} \df & \, y_{9} \cdot x_{\#,9} \cdot x_{\#,9} \cdot  z_{9,1} \cdot x_{\#,9} \cdot x_{\#,9} \cdot y_{9}, \\
\beta_{10} \df & \, y_{10} \cdot x_{\#,10} \cdot x_{\#,10} \cdot z_{10,1} \cdot x_{\#,10} \cdot z_{10,2} \cdot x_{\#,10} \cdot x_{\#,10} \cdot y_{10}, \\
\beta_{11} \df & \, y_{11} \cdot x_{\#,11} \cdot x_{\#,11} \cdot z_{11,1} \cdot x_{\#,11} \cdot z_{11,2} \cdot x_{\#,11} \cdot z_{11,3} \cdot x_{\#,11} \cdot x_{\#,11} \cdot y_{11}, \\
\beta_{12} \df & \, y_{12} \cdot x_{0,12} \cdot x_{\#,12} \cdot z_{12,1} \cdot x_{\#,12} \cdot z_{12,2} \cdot x_{\#,12} \cdot z_{12,3} \cdot x_{\#,12} \cdot x_{0,12} \cdot y_{12}.
\end{align*}

After having dealt with the above cases, we have that if $w \notin \lang(\beta_i, T_i)$ for all~$i \in [12]$, then 
\[ w \in \lang( \#\# (0^+ \# 0^+ \# 0^+ \# 0^+ \#\#)^+). \]

Next, we deal with the case where the state does not encode a state from $\{ q_1, \dots, q_k \}$ where $k \geq 1$.
Notice that for some word $w \in \mathsf{VALC}(\mathcal{X})$, we have that if $\# \cdot 0^m \cdot \#\#$ is a factor of $w$, then $m$ must be an encoding of a state.
If $m>k$, then this configuration uses a state that is not in the state set of $\mathcal{X}$.
We now define $\beta_{13}$:
\[  \beta_{13} \df \underbrace{y_{13}}_{\Sigma^*} \cdot \underbrace{x_{\#,13}}_{\#} \cdot \underbrace{(x_{0,13})^{k} \cdot z_{13,1}}_{0^m} \cdot \underbrace{(x_{\#,13})^2}_{\#\#} \cdot \underbrace{y_{13}'}_{\Sigma^*}. \]
This patterns deals with the case where there is a state $q$ encoded in a configuration, yet $q$ is not in the set of states of $\mathcal{X}$. 
This is due to the fact that $T_{13}(z_{13,1}) \in 0^+$, and therefore $(x_{0,13})^k \cdot z_{13,1} = 0^m$ for some $m > |Q|$.

Now, we deal with the error that the symbol at the head is currently reading is not valid.
For this, we define $\beta_{14}$:
\[ \beta_{14} \df y_{14} \cdot  (x_{0,14})^3  \cdot x_{\#,14} \cdot  z_{14,1} \cdot (x_{\#,14})^2 \cdot y_{14}' .\]
Thus, if $w \in (\beta_{14},T_{14})$, then $w \in \lang(\Sigma^* \cdot 000 \cdot \# \cdot 0^+ \cdot \#\# \cdot \Sigma^*)$.
Therefore, the head is reading a symbol that is not part of the tape alphabet.

Next, we look at when the initial configuration is not correct:
To that end, we define $\beta_{15}$ and $\beta_{16}$:
\begin{align*}
\beta_{15} \df & \, (x_{\#,15})^2 \cdot (x_{0,15})^2 \cdot y_{15},    \\
\beta_{16} \df & \, (x_{\#,16})^2 \cdot z_{16,1} \cdot x_{\#,16} \cdot z_{16,2} \cdot x_{\#,16} \cdot z_{16,3} \cdot x_{\#,16} \cdot (x_{0,16})^2 \cdot y_{16}.    
\end{align*}

The above patterns deals with those $w \in \Sigma^*$ that are of the form:
\begin{align*}
&\#\# 00 \Sigma^*, \\
&\#\# 0^+ \# 0^+ \# 0^+ \# 00 \Sigma^*.
\end{align*}
These cases encode the possible errors with the initial configuration.
This is because valid initial configurations are of the form $\#\# 0 \# 0^+ \# 0^{\mathsf{e}(\mathtt{a})+1} \# 0 \#\#$, for some $\mathtt{a} \in \Gamma$.
Thus, if $\mathsf{e}(w_L) \neq 0$, then we have an error (handled by $\beta_{15}$), and if the state is not $q_1$, then we have an error (handled by $\beta_{16}$).
We have already handled the case where the head symbol is incorrect with $\beta_{14}$.

To conclude the encoding of structural errors, we look at when the final configuration is incorrect.
For this, we consider $i \in E_{\mathsf{struc}} \setminus [16]$, each of which are uniquely mapped to some $(\mathtt{a},q_j) \in \Gamma \times Q$ such that $\delta(\mathtt{a}, q_j) \neq \mathsf{HALT}$.
For each $i \in E_{\mathsf{struc}} \setminus [17]$ we construct a pattern $\beta_{i}$ as follows:
\[ \beta_i \df y_i \cdot x_{\#,i} \cdot (x_{0,i})^{\mathsf{e}(\mathtt{a}) +1} \cdot x_{\#,i} \cdot (x_{0,i})^j \cdot (x_{\#,i})^2. \]
Thus, for any $i \in E_{\mathsf{struc}} \setminus [17]$, if $w \in (\beta_i, T_i)$, then $w \in \lang(\Sigma^* \# 0^{\mathsf{e}(\mathtt{a})+1}  \# 0^j \#\#)$ where if $\mathcal{X}$ is in state $q_j$ and reads the symbol $\mathtt{a}$, then $\mathcal{X}$ does not halt.
That is, the run finishes on a configuration that should not halt.

Thus, if $w \notin \lang(\beta_i,T_i)$ for all $i \in E_\mathsf{struc}$, then:
\begin{itemize}
\item $w \in \lang( \#\# (0^+ \# 0^+ \# 0 (0 \lor \emptyword) \# 0^j \#\#)^+)$ for $j \leq |Q|$,
\item $w$ starts with a valid initial configuration, and 
\item $w$ ends with a valid halting configuration.
\end{itemize}
Consequently, if $w \in \lang(\beta_i,T_i)$ for some $i \in E_\mathsf{struc}$, then $w$ contains a structural error.
This concludes our encoding of structural errors.

\paragraph{Behavioural Errors.}
Consider two possible configurations $C = (w_L, w_R, \mathtt{a}, q_j)$ and $C' = (w_L', w_R', \mathtt{a}', q_l)$.
We have three types of behavioural errors such that $C \vdash_\mathcal{X} C'$ does not hold, and these error types need to be accounted for.
\begin{enumerate}
\item State errors. $\delta(\mathtt{a},q_j) = \mathsf{HALT}$ or the encoding of state $q_l$ is incorrect,
\item Head errors. The encoding of the head symbol $\mathtt{a}'$ is incorrect, and
\item Tape errors. The left-side of the tape $w_L'$ or the right-side of the tape $w_R'$ contains an error.
\end{enumerate}
For each of these errors we consider each type of instruction, and consider an error.
 
Recall the following partition of $[\mu] \setminus E_\mathsf{struc}$:
\begin{itemize}
\item $E_{\mathsf{state}} \df \{e_1+1, e_1+2, \dots, e_2 \}$ denotes the state errors,
\item $E_{\mathsf{head}} \df \{e_2+1, e_2+2, \dots, e_3\}$ denotes the head errors, and
\item $E_{\mathsf{tape}} \df \{e_3+1, e_3+2, \dots, \mu\}$ denotes the tape errors.
\end{itemize}
Each $i \in E_\mathsf{state}$ is mapped to a certain state error encoded with $\beta_i$.
The analogous holds for $E_\mathsf{head}$ and $E_\mathsf{tape}$.

Before looking at encoding the errors, we define a useful partition on $\Gamma \times Q$. 
Given an extended Turing machine~$\mathcal{X} \df (Q, q_1, \delta)$, we define the following sets:
\begin{align*}
S_\mathsf{HALT} \df & \{ (\mathtt{a},q_j) \in \Gamma \times Q \mid \delta(\mathtt{a},q_j) = \mathsf{HALT} \}, \\
S_\mathsf{L,R} \df  &  \{ (\mathtt{a},q_j) \in \Gamma \times Q \mid \delta(\mathtt{a},q_j) \in (\Gamma \times \{ L,R \} \times Q) \}, \text{ and}\\
S_\mathsf{CHECK} \df & \{ (\mathtt{a},q_j) \in \Gamma \times Q \mid \delta(\mathtt{a},q_j) \in (\mathsf{CHECK}_R \times Q) \}.
\end{align*}

\paragraph{State errors.}
For two configurations $C = (w_L, w_R, \mathtt{a}, q_j)$ and $C' = (w_L', w_R', \mathtt{a}', q_l)$, we say a state error for $C \vdash_\mathcal{X} C'$ has occurred if on reading $\mathtt{a}$ in state $q_j$, the machine $\mathcal{X}$ goes into a state that is not $q_l$.
We also deal with the case where $\mathcal{X}$ continues its computation, even though configuration $C$ leads to $\mathsf{HALT}$.

\underline{The $\mathsf{HALT}$-instruction:}
Let $E_{\mathsf{state}}^{\mathsf{halt}} \subseteq E_{\mathsf{state}}$.
For every $(\mathtt{a},q_j) \in S_\mathsf{HALT}$, there is a unique $i \in E_\mathsf{state}^\mathsf{halt}$ such that 
\[ \beta_i \df y_i \cdot  x_{\#,i} \cdot (x_{0,i})^{\mathsf{e}(\mathtt{a})+1} \cdot x_{\#,i} \cdot (x_{0,i})^j \cdot (x_{\#,i})^2 \cdot x_{0,i} \cdot y_i'.\]
It follows that for each $(\mathtt{a}, q_j) \in S_\mathsf{HALT}$, there exists a unique $i \in E_\mathsf{state}^\mathsf{halt}$ such that~$(\beta_i, T_i)$ encodes
\[ \Sigma^* \cdot \# 0^{\mathsf{e}(\mathtt{a})+1} \# 0^j \#\# 0 \cdot \Sigma^*. \]

We use $\#\#$ as a separator between two configurations.
Thus, we encode a configuration where the head is reading an $\mathtt{a}$, is in state $q_j$, and has a successor configuration which is an error since $\delta(\mathtt{a},q_j) = \mathsf{HALT}$. 
In these $\beta_i$, we do not consider the encoding of the left and right side of the tape, nor do we encode the ``successor configuration''.
While the error encoded by $(\beta_i,T_i)$ may coincide with other errors, we have sufficient criteria for the word have the specified state error.

\underline{Head movement instructions:}
Next, let us consider the cases where $\mathcal{X}$ moves into the wrong state.
Let~$E_\mathsf{state}^\mathsf{state}$ be a large enough subset of $E_\mathsf{state}$, where $E_\mathsf{state}^\mathsf{state} \intersect E_\mathsf{state}^\mathsf{halt} = \emptyset$.
For $(\mathtt{a}, q_j) \in \Gamma \times Q$ where $\delta(\mathtt{a}, q_j) = (\mathtt{b}, M, q_l)$ for $M \in \{ L, R\}$, $\mathtt{b} \in \Gamma$, and $q_j, q_l \in Q$, we associate $l \in \mathbb{N}$ elements of $E_\mathsf{state}^\mathsf{state}$.
We use $l-1$ of these elements to encode when the state in the configuration is smaller than $l$, and we use one to encode when the state is larger than $l$.

We first look at when the state is larger than $l$.
To help with the readability of this, we first define the pattern
\[ \gamma_i \df y_i \cdot x_{\#,i} \cdot (x_{0,i})^{\mathsf{enc}(\mathtt{a})+1} \cdot x_{\#,i} \cdot (x_{0,i})^j \cdot (x_{\#,i})^2, \]

Since we encode each state as a number, we first consider the case where the state is ``too large''.
For each $(\mathtt{a}, q_j) \in \Gamma \times Q$ where $\delta(\mathtt{a}, q_j) = (\mathtt{b}, M, q_l)$ where $M \in \{ L, R\}$, $\mathtt{b} \in \Gamma$, and $q_j, q_l \in Q$, we have a unique $i \in E_\mathsf{state}^\mathsf{state}$ such that
\[ \beta_i \df  \gamma_i \cdot z_{i,1} \cdot x_{\#,i} \cdot z_{i,2} \cdot x_{\#,i} \cdot z_{i,3} \cdot x_{\#,i}
 \cdot (x_{0,i})^{l+1} \cdot  y_i'. \]
The above pattern encodes words of the form
\[ \underbrace{\Sigma^* \cdot \# 0^{\mathsf{enc}(a)+1} \# 0^j \#\#}_{\gamma_i} \cdot 0^+ \# 0^+ \# 0^+ \# 0^{l+1} \cdot \Sigma^*.\]

We now deal with when the value representing the state is ``too small''.
To realize this behaviour, we look at all values between $1$ and $l-1$ inclusive.
That is, for every $(\mathtt{a}, q_j) \in \Gamma \times Q$ where $\delta(\mathtt{a}, q_j) = (\mathtt{b}, M, q_l)$, and every $p \in [l-1]$, we have a unique $i \in E_\mathsf{state}^\mathsf{state}$ such that 
\[ \beta_i\df \gamma_i \cdot z_{i,1} \cdot x_{\#,i} \cdot z_{i,2} \cdot x_{\#,i} \cdot z_{i,3} \cdot x_{\#,i} \cdot (x_{0,i})^{p} \cdot x_{\#,i} \cdot  y_i', \] 
where $\gamma_i$ is as defined above (with the correct $i$ values for the variable indices).
This deals with the case where the encoding section is of the form
\[ \underbrace{\Sigma^* \cdot \# 0^{\mathsf{enc}(a)+1} \# 0^i \#\#}_{\gamma_i} \cdot  0^+ \# 0^+ \# 0^+ \# 0^p \# \cdot \Sigma^*\]
for some $p \in [l-1]$.

Thus, for each pair $(\mathtt{a}, q_j) \in S_\mathsf{L,R}$, we have a set of patterns, each of which encodes an error in which the state in one configuration does not follow from the previous configuration.

\underline{The $\mathsf{CHECK}_R$-instruction:}
The only state error left to deal with is when $\delta(\mathtt{a}, q_j) = (\mathsf{CHECK}_R, q_l)$ for some $q_l \in Q$.
Recall that if $\delta(\mathtt{a}, q_j) = (\mathsf{CHECK}_R, q_l)$, then we have two cases: 
If $w_R = 0^\omega$, then $\mathcal{X}$ moves into state $q_l$.
Otherwise, we have that~$w_R \neq 0^\omega$, and $\mathcal{X}$ remains in $q_j$ which leads to an ``infinite loop''.
Let $E_\mathsf{state}^\mathsf{check} \subseteq E_\mathsf{state}$ such that $E_\mathsf{state}^\mathsf{halt}$, $E_\mathsf{state}^\mathsf{state}$, and $E_\mathsf{state}^\mathsf{check}$ forms a partition on $E_\mathsf{state}$.

Let us consider the case where $\mathsf{e}(w_R) = 0$, and therefore $w_R = 0^\omega$.
For every $(\mathtt{a}, q_j) \in S_\mathsf{CHECK}$ where $\delta(\mathtt{a}, q_j) = (\mathsf{CHECK}_R, q_l)$, have a unique $i \in E_\mathsf{state}^\mathsf{check}$ such that
\[ \beta_i \df \gamma_i \cdot z_{i,1} \cdot x_{\#,i} \cdot z_{i,2} \cdot x_{\#,i} \cdot z_{i,3} \cdot x_{\#,i} \cdot (x_{0,i})^{l+1}  \cdot y_i', \]
where
\[ \gamma_i \df  y_i \cdot x_{\#,i} \cdot x_{0,i} \cdot x_{\#,i} \cdot (x_{0,i})^{\mathsf{e}(\mathtt{a})+1} \cdot x_{\#,i} \cdot  (x_{0,i})^j \cdot (x_{\#,i})^2. \]
This $\gamma_i$ is used as a prefix of the $\beta_i$ that models an error, and describes
\[ \Sigma^* \# 0 \# 0^{\mathsf{e}(\mathtt{a})+1} \# 0^j \#\#. \]
Thus, $\mathsf{e}(w_R) = 0$.
The pattern $\beta_i$ encodes the fact that the encoded state is too large by encoding the following language:
\[  \underbrace{\Sigma^* \# 0 \# 0^{\mathsf{e}(\mathtt{a})+1} \# 0^j \#\#}_{\gamma_i}  0^+ \# 0^+ \# 0^+ \# 0^{l+1} \Sigma^*, \]
and thus we have an error.

Now we deal with when the state is too small.
For every $(\mathtt{a}, q_j) \in S_\mathsf{CHECK}$ where $\delta(\mathtt{a}, q_j) = (\mathsf{CHECK}_R, q_l)$, and every $p \in [ l-1 ]$, we have a unique $i \in E_\mathsf{state}^\mathsf{check}$ such~that
\[ \beta_i \df \gamma_i \cdot z_{i,1} \cdot x_{\#,i} \cdot z_{i,2} \cdot x_{\#,i} \cdot z_{i,3} \cdot x_{\#,i} \cdot (x_{0,i})^{p} \cdot (x_{\#,i})^2 \cdot y_i', \]
where
\[ \gamma_i \df  y_i \cdot x_{\#,i} \cdot x_{0,i} \cdot x_{\#,i} \cdot (x_{0,i})^{\mathsf{e}(\mathtt{a})+1} \cdot x_{\#,i} \cdot  (x_{0,i})^j \cdot (x_{\#,i})^2. \]
Since, we have that after the configuration encoded by $\gamma_i$, the machine should move into state $q_l$ (because $\gamma_i$ encodes the case where $\mathsf{e}(w_R)=0$), if we move into some state $q_p$ where $p<l$, we have an error.

Next, we deal with the case where $\mathsf{e}(w_R) \neq 0$, and therefore $w_R \neq 0^\omega$.
This is analogous to how we deal with the case when $w_R \neq 0^\omega$.
For every $(\mathtt{a}, q_j) \in S_\mathsf{CHECK}$ where $\delta(\mathtt{a}, q_j) = (\mathsf{CHECK}_R, q_l)$, have a unique $i \in E_\mathsf{state}^\mathsf{check}$ such that 
\[ \beta_i \df \gamma_i \cdot z_{i,2} \cdot x_{\#,i} \cdot z_{i,3} \cdot x_{\#,i} \cdot z_{i,4} \cdot x_{\#,i} \cdot (x_{0,i})^{j+1}  \cdot y_i', \]
where
\[ \gamma_i \df y_i \cdot x_{\#,i} \cdot x_{0,i} z_{i,1} \cdot x_{\#,i} \cdot (x_{0,i})^{\mathsf{e}(\mathtt{a}) + 1} \cdot x_{\#,i} \cdot (x_{0,i})^j \cdot (x_{\#,i})^2. \]
Note that $z_{i,1} \in 0^+$ is used for $0^{\mathsf{e}(w_L)}$, and thus $w_R \neq 0^\omega$.
Therefore, the above $\beta_i$ we deal with the case where the new state $p$ is greater than $j$.

Next, we deal with $p < j$.
for every $(\mathtt{a}, q_j) \in S_\mathsf{CHECK}$ where $\delta(\mathtt{a}, q_j) = (\mathsf{CHECK}_R, q_l)$, and every $1 \leq p < j$, we have a unique $i \in E_\mathsf{state}^\mathsf{check}$ such that
\[ \beta_i \df \gamma_i \cdot z_{i,2} \cdot x_{\#,i} \cdot z_{i,3} \cdot x_{\#,i} \cdot z_{i,4} \cdot x_{\#,i} \cdot (x_{0,i})^p \cdot x_{\#,i}  \cdot y_i', \]
where
\[ \gamma_i \df y_i \cdot x_{\#,i} \cdot x_{0,i} z_{i,1} \cdot x_{\#,i} \cdot (x_{0,i})^{\mathsf{e}(\mathtt{a}) + 1} \cdot x_{\#,i} \cdot (x_{0,i})^j \cdot (x_{\#,i})^2. \]

This concludes the state errors.

\paragraph{Head errors.}
In the previous case (looking at state errors), we have handled all cases for which a halting configuration is followed by another configuration.
Thus, for this case (looking at head errors), we can safely ignore such ``halting errors''.

Consider the configurations $(w_L,w_R,\mathtt{a},q_j)$ and $(w_L', w_R' ,\mathtt{a}' ,q_l)$ where $\delta(\mathtt{a},q_j) = (\mathtt{b}, M, q_l)$ and $C \vdash_\mathcal{X} C'$.
Utilizing~\cref{obs:mod}, we can use the parity of $\mathsf{e}(w_L)$ or $\mathsf{e}(w_R)$ to determine the symbol under the head in a successor configuration.
We shall use this observation to encode head errors.

\underline{Head movement instructions:} For each $(\mathtt{a},q_j) \in \Gamma \times Q$ where $ \delta (\mathtt{a},q_j) = (\mathtt{b}, L, q_l)$, there is a unique $i \in E_\mathsf{head}$ such~that
\[
\gamma_i \df y_i \cdot x_{\#,i} \cdot  x_{0,i} \cdot x_{\#,i} \cdot z_{i,1} \cdot x_{\#,i} \cdot (x_{0,i})^{\mathsf{e}(\mathtt{a})+1} \cdot x_{\#,i} \cdot (x_{0,i})^j \cdot (x_{\#,i})^2.
\]
The above pattern $\gamma_i$ deals with the case where $\mathsf{e}(w_L) = 0$.
We now define $\beta_i$ for each such $i$ as
\[ \beta_i \df \gamma_i \cdot z_{i,2} \cdot x_{\#,i} \cdot z_{i,3} \cdot x_{\#,i} \cdot (x_{0,i})^2 \cdot x_{\#,i} \cdot y_i'. \]

Thus, $\beta_i$ expresses encoding sections of the form:
\[ \underbrace{\Sigma^* \cdot \# 0 \# 0^+ \# 0^{\mathsf{e}(a)+1} \# 0^j \#\#}_{\gamma_i} \cdot 0^+ \# 0^+ \# 00 \# \Sigma^*. \]

As the head should be moving left, and $\mathsf{e}(w_L) = 0$, it is an error for $0^{\mathsf{e}(\mathtt{a}')+1} = 00$ to hold.

We look at the more general case, where we have that $\mathsf{e}(w_L)$ is even, and greater than zero.
Again, for each $(\mathtt{a},q_j) \in \Gamma \times Q$ where $ \delta (\mathtt{a},q_j) = (\mathtt{b}, L, q_l)$, there is a unique $i \in E_\mathsf{head}$ such that
\[
\gamma_{i} \df   y_{i} \cdot x_{\#,i} \cdot  \underbrace{x_{0,i} (z_{i,1})^2}_{\mathsf{e}(w_L)+1} \cdot x_{\#,i} \cdot z_{i,2} \cdot x_{\#,i} \cdot (x_{0,i})^{\mathsf{e}(\mathtt{a})+1} \cdot x_{\#,i} \cdot (x_{0,i})^j \cdot (x_{\#,i})^2,
\]
and $\beta_{i}$ is defined as follows
\[
\beta_{i} \df \gamma_{i} \cdot z_{i,3} \cdot x_{\#,i} \cdot z_{i,4} \cdot x_{\#,i} \cdot \underbrace{(x_{0,i})^2}_{\mathsf{e}(\mathtt{a}')+1} \cdot x_{\#,i}\cdot y_{i}'.
\]

For intuition, this deals with configurations of the form $(w_L, w_R, \mathtt{a}, q_j)$  where $\mathsf{e}(w_L)$ is even by splitting it up into two cases -- either $0^{\mathsf{e}(w_L)+1} = 0$, or $0^{\mathsf{e}(w_L)+1} \in (00)^+$ by using $x_{0,i}$ and $x_{0,i} (z_{i,1})^2$ respectively.
For both cases, $\mathsf{e}(w_L)$ is even, and thus $\mathsf{e}(\mathtt{a}')$ should be $0$.
However, since $0^{\mathsf{e}(\mathtt{a}')+1} = 00$, we have encoded an error.

Next, we deal with the analogous case for when $\mathsf{e}(w_L)$ is odd.
For each $(\mathtt{a},q_j) \in \Gamma \times Q$ where $ \delta (\mathtt{a},q_j) = (\mathtt{b}, L, q_l)$, we associate some unique $i \in E_\mathsf{head}$ and define
\[
\gamma_{i} \df   y_{i} \cdot x_{\#,i} \cdot  \underbrace{(z_{i,1})^2}_{\mathsf{e}(w_L)+1} \cdot x_{\#,i} \cdot z_{i,2} \cdot x_{\#,i} \cdot (x_{0,i})^{\mathsf{e}(\mathtt{a})+1} \cdot x_{\#,i} \cdot (x_{0,i})^j  \cdot (x_{\#,i})^2 .
\]
In $\gamma_{i}$, we have encoded $0^{\mathsf{e}(w_L)+1}$ as $(z_{i,1})^2$.
Therefore, $0^{\mathsf{e}(w_L)+1} = (00)^+$, which means $\mathsf{e}(w_L)$ is odd.
We now define $\beta_{i}$ as follows
\[
\beta_{i} \df \gamma_{i} \cdot z_{i,3} \cdot x_{\#,i} \cdot z_{i,4} \cdot x_{\#,i} \cdot \underbrace{x_{0,i}}_{\mathsf{e}(\mathtt{a}')+1} \cdot x_{\#,i} \cdot y_{i}'.
\]
As $\mathsf{e}(w_L)$ is odd, we should have that $0^{\mathsf{e}(\mathtt{a}')+1} = 00$ holds, however, as we have encoded that $0^{\mathsf{e}(\mathtt{a}')+1} = x_{0,i}$, we have a head error.

Next, let us go over the details of encoding head errors when the head moves to the right.
That is, for those $(\mathtt{a}, q_j) \in \Gamma \times Q$ where $\delta(\mathtt{a},q_j) = (\mathtt{b}, R, q_l)$ we associate some unique $i \in E_\mathsf{head}$ and define
\[ \gamma_i \df y_i \cdot x_{\#,i} \cdot \underbrace{x_{0,i}}_{\mathsf{e}(w_R)+1} \cdot x_{\#,i} \cdot (x_{0,i})^{\mathsf{e}(\mathtt{a})+1} \cdot x_{\#,i} \cdot (x_{0,i})^j \cdot (x_{\#,i})^2 . \]
The pattern defined above deals with the case where $\mathsf{e}(w_R) = 0$.
Therefore, in the correct successor configuration, we should have that the symbol under the head is~$0 \intmod 2$.
We now encode the error of this kind in the subsequent $\beta_i$.
\[ \beta_i \df \gamma_i  \cdot z_{i,1} \cdot x_{\#,i} \cdot z_{i,2} \cdot x_{\#,i} \cdot \underbrace{(x_{0,i})^2}_{\mathsf{e}(\mathtt{a}')+1} \cdot x_{\#,i} \cdot y_i'. \]

Analogously to when we move the head to the left, let us now consider the case where we move the head to the right, and $\mathsf{e}(w_R)$ is of the form $(00)^+$.
For each $(\mathtt{a}, q_j) \in \Gamma \times Q$ where $\delta(\mathtt{a},q_j) = (\mathtt{b}, R, q_l)$ we have a unique $i \in E_\mathsf{head}$ where
\[ \gamma_{i} \df y_i \cdot x_{\#,i} \cdot \underbrace{x_{0,i} \cdot (z_{i,1})^2}_{\mathsf{e}(w_R)+1} \cdot x_{\#,i} \cdot (x_{0,i})^{\mathsf{e}(\mathtt{a})+1} \cdot x_{\#,i} \cdot (x_{0,i})^j \cdot (x_{\#,i})^2 . \]
We use $\gamma_{i}$ as a prefix of $\beta_{i}$, which is defined as
\[ \beta_{i} \df \gamma_{i} \cdot z_{i,2} \cdot x_{\#,i} \cdot z_{i,3} \cdot x_{\#,i} \cdot \underbrace{(x_{0,i})^2}_{\mathsf{e}(\mathtt{a'})+1} \cdot x_{\#,i} \cdot y_{i}'. \]
Due to the fact that $\mathtt{a}' = \mathsf{e}(w_R) \intmod 2$ should hold and $\mathsf{e}(w_R)$ is even, it should hold that $\mathsf{e}(\mathtt{a}') = 0$.
However since we have encoded $0^{\mathsf{e}(\mathtt{a}')+1} = 00$, we have an error.

Let us now consider the case where $\mathsf{e}(w_R)$ is odd.
For each $(\mathtt{a}, q_j) \in \Gamma \times Q$ where $\delta(\mathtt{a},q_j) = (\mathtt{b}, R, q_l)$ we have some unique $i \in E_\mathsf{head}$ such that
\begin{align*}
\gamma_{i} \df & y_i \cdot x_{\#,i} \cdot \overbrace{(z_{i,1})^2}^{\mathsf{e}(w_R)+1} \cdot x_{\#,i} \cdot (x_{0,i})^{\mathsf{e}(a) + 1} \cdot x_{\#,i} \cdot (x_{0,i})^j \cdot (x_{\#,i})^2 , \text{ and} \\
\beta_{i} \df     & \gamma_{i} \cdot z_{i,2} \cdot x_{\#,i} \cdot z_{i,3} \cdot x_{\#,i} \cdot \underbrace{x_{0,i}}_{\mathsf{e}(\mathtt{a}')+1} \cdot x_{\#,i} \cdot y_i'.	
\end{align*}
The correctness of this behaviour is analogous to previous cases.

\underline{The $\mathsf{CHECK}_R$-instruction:}
The last head error we must deal with is when $\delta(\mathtt{a}, q_j) = (\mathsf{CHECK}_R,q_l)$.
From the definition of $(\mathsf{CHECK}_R,q_l)$, the symbol under the head does not change. 
Therefore if indeed it does change, we have an error.
To that end, for each $(\mathtt{a}, q_j) \in S_\mathsf{CHECK}$, we have $i \in E_\mathsf{head}$ such that
\[ \beta_i \df y_i \cdot x_{\#,i} \cdot (x_{0,i})^{\mathsf{e}(\mathtt{a})+1} \cdot x_{\#,i} \cdot (x_{0,i})^j \cdot (x_{\#,i})^2 \cdot z_{i,1} \cdot x_{\#,i} \cdot z_{i,2} \cdot x_{\#,i} \cdot (x_{0,i})^{c+1} \cdot x_{\#,i} \cdot y_i',  \]
where if $\mathtt{a} = 1$, then $c=0$ and if $\mathtt{a} = 0$, then $c=1$.
This encodes a change in the symbol under the head, which is an error if $\delta(\mathtt{a},q_j) = \mathsf{CHECK}_R(q_l)$.

\paragraph{Tape errors.}
In the proof of~\cref{theorem:Invalc}, we define $\rho$ many patterns to encode the tape errors.
The patterns in $E_\mathsf{tape}$ are identical to the patterns used to encode tape errors in the proof of~\cref{theorem:Invalc}, with the necessary changes to the indices.
More formally, let $\{ \beta_i \mid i \in [\rho] \}$ be the set of patterns used to define tape errors for $\mathcal{X}$, as defined in~\cref{theorem:Invalc}.
Let $f \colon [\rho] \rightarrow E_\mathsf{tape}$ be a bijective function.
Then, for each $i \in [\rho]$, let $\beta_{f(i)}$ be $\beta_i$, where the only difference is that $f(i)$ is used for the indices of the variables, rather than $i$.
Since we have the same assumptions for the variables $x_{\mathtt{a},i}$, $z_{i,r}$ and $y_i, y_i'$, it is clear that each $\beta_i$ for $i \in E_\mathsf{tape}$ encodes a tape error.

\paragraph{Defining the Query.}
Let
\[ \varphi \df \cqhead{} \varphi_{0\#} \land \varphi_{\mathsf{err}} \land \varphi_{\mathsf{type}} \land \varphi_+. \]
First, let 
\[ \varphi_{0\#} \df (\strucvar \logeq 0 \cdot \# \cdot 0\cdot z) \land (z \logeq \#^3 \cdot z' \cdot \#^3). \]
This $\varphi_{0\#}$ ensures that any $w \in \lang(\varphi)$ is of the form $0 \# 0 \#^3 \Sigma^* \#^3$.

Let $\varphi_\mathsf{err}$ be defined by
\[ \varphi_{\mathsf{err}} \df (z \logeq \alpha_1 \alpha_2 \cdots \alpha_\mu) \land \bigwedge_{i=1}^{\mu} (\alpha_i \logeq (x_{\#,i})^3 \cdot \beta_i \cdot (x_{\#,i})^3),\] 
where each $\beta_i$ encodes an error.
It is clear that for any substitution $\subs$ such that $\subs \models \body(\varphi)$, we have that $\subs(\strucvar) = 0 \# 0 \#^3 u \#^3= 0 \# 0 \cdot \subs(\alpha_1 \alpha_2 \cdots \alpha_\mu)$ for some $u \in \Sigma^*$.
Consequently, we have that $\subs(\alpha_1 \alpha_2 \cdots \alpha_\mu) = \#^3 u \#^3$.

Next, we consider $\varphi_\mathsf{type}$ and $\varphi_+$ which deals with types on the variables.
The formula $\varphi_\mathsf{type}$ ensures that for any $i \in [\mu]$ and any substitution $\subs$ where $\subs \models \body(\varphi)$ we have that $\subs(x_{0,i}) \in \{ \emptyword, 0 \}$ and $\subs(x_{\#,i}) \in \{ \emptyword, \# \}$. Furthermore, we ensure that there is exactly one $i \in [\mu]$ such that $\subs(x_{\#,i} \cdot x_{0,i}) = \# 0$ and for all $i' \in [\mu] \setminus \{i \}$ we have that $\subs(x_{\#,i'} \cdot x_{0,i'}) = \emptyword$.
\begin{multline*}
\varphi_{\mathsf{type}} \df (x_0 \logeq 0) \land (x_{0} \logeq x_{0,1} \cdot x_{0,2} \cdots x_{0,\mu}) \land (x_\# \logeq \#) \land (x_{\#} \logeq x_{\#,1} \cdot x_{\#,2} \cdots x_{\#,\mu})  \\
\land (x' \logeq 0 \cdot \# \cdot 0) \land \bigl( x' \logeq \prod_{i=1}^\mu (x_{0,i} \cdot x_{\#,i} \cdot x_{0,i}) \bigr).
\end{multline*}

Let $\subs$ be a substitution where $\subs \models \body(\varphi)$.
Since we have that $\subs(x_0) = 0$ and $\subs(x_0) = \subs(x_{0,i} \cdots x_{0,\mu})$ it follows that there is exactly one $i \in [\mu]$ such that $\subs(x_{0,i}) = 0$ and $\subs(x_{0,i'}) = \emptyword$ for all $i' \in [\mu] \setminus \{i \}$. 
The analogous reasoning states that there is exactly one $i \in [\mu]$ such that $\subs(x_{\#,i}) = \#$.
Furthermore, $\subs(x_{\#,i}) \neq \emptyword$ if and only if $\subs(x_{0,i}) \neq \emptyword$ due to the fact that $\prod_{i=1}^\mu \subs(x_{0,i} \cdot x_{\#,i} \cdot x_{0,i}) = 0 \cdot \# \cdot 0$.
That is, if $\subs(x_{0,i}) \neq \emptyword$ and $\subs(x_{\#,i'}) \neq \emptyword$ for some $i, i' \in [\mu]$ where, without loss of generality, $i < i'$, then 
\[ 0 \# 0 = \subs(x') = \prod_{i=1}^\mu \subs(x_{0,i} \cdot x_{\#,i} \cdot x_{0,i}) = 00 \cdot \# , \]
due to the fact that there is exactly one $i \in [\mu]$ such that $\subs(x_{\mathtt{a},i}) = \mathtt{a}$ for each $\mathtt{a} \in \Sigma$.
If this is the case, then $\subs \models \body(\varphi)$ does not hold.

Thus, $\varphi_\mathsf{type}$ states that for any substitution $\subs$ where $\subs \models \body(\varphi)$ there is exactly one $i \in [\mu]$ such that for all $\mathtt{a} \in \Sigma$, we have $\subs(x_{\mathtt{a},i}) = \mathtt{a}$, and for all $j \in [\mu] \setminus \{i\}$, we have $x_{\mathtt{a},j} = \emptyword$.
If this does indeed hold, then we say that $i \in [\mu]$ is the \emph{selected error} for $\subs$.

The last subformula $\varphi_+$ deals with the type for variables of the form $z_{i,1}$, $z_{i,2}$, etc.
For this, we define:
\[
\varphi_+ \df \bigwedge_{i=1}^\mu \,\bigwedge_{r=1}^4 \bigl( (z_{i,r} \logeq x_{0,i} \cdot z_{i,r}') \land (z_{i,r} \logeq z_{i,r}' \cdot x_{0,j}) \bigr). 
\]
Thus, if $\subs(x_{0,i}) = 0$, then $z_{i,1}, z_{i,2}, z_{i,3}, z_{i,4} \in \lang(0^+)$.

Therefore, if $i \in [\mu]$ is the selected error for $\subs$, then 
\begin{itemize}
\item $\subs(x_{0,i}) = 0$, 
\item $\subs(x_{\#,i}) = \#$, and
\item $\subs(z_{i,r}) \in 0^+$ for $r \in [4]$.
\end{itemize}
Consequently, if $i \in [\mu]$ is the selected error for $\subs$, then $\subs(\beta_i) \in \mathsf{INVALC}(\mathcal{X})$.
This is because each $\beta_i$ uses these variables (which are not mapped to the empty-word if $i$ is the selected error for $\subs$) to encode some error that prohibits $\subs(\beta_i)$ from being in $\mathsf{VALC}(\mathcal{X})$.

\paragraph{Correctness.}
For this correctness proof, we show that 
\[ \lang(\varphi) \df 0 \# 0 \#^3 \mathsf{INVALC}(\mathcal{X}) \#^3. \]
In other words, $0 \# 0 \#^3 w \#^3 \in \lang(\varphi)$ if and only if $w \in \mathsf{INVALC}(\mathcal{X})$.

\emph{If direction:}
Let $v = 0 \# 0 \#^3 w  \#^3$ where $w \in\mathsf{INVALC}(\mathcal{X})$.
Since $w \in \mathsf{INVALC}(\mathcal{X})$, the word must contain at least one error.
For every possible $w \in \mathsf{INVALC}(\mathcal{X})$, we have some $i \in [\mu]$ such that $w \in \lang(\beta_i, T_i)$.
Thus, there exists $i \in [\mu]$ and $\subs \models \body(\varphi)$ where $i$ is the selected error for $\subs$ and
\[ \subs(\strucvar) = 0 \cdot \# \cdot 0 \cdot \#^3 \cdot \subs(\beta_i) \cdot \#^3,\] 
and $\subs(\strucvar) = v$.
It follows that $\subs(\alpha_{i'}) = \emptyword$ for all $i' \in [\mu] \setminus \{ i \}$.
Thus, we have dealt with one direction for the correctness proof.

\emph{Only if direction:}
Let $v \in \lang(\varphi)$. 
From the structure of $\varphi$, we know that $v = 0 \# 0 \#^3 w \#^3$ for some $w \in \Sigma^*$.
Furthermore, we know that for any substitution $\subs$ where $\subs \models \body(\varphi)$ such that  $\subs(\strucvar) = v$, we have that $\#^3 w \#^3 = \subs(\alpha_1 \alpha_2 \cdots \alpha_\mu)$.

Let $v = 0 \# 0 \#^3 w  \#^3$ where $v \in \lang(\varphi)$.
Let $\subs \models \body(\varphi)$ where $\subs(\strucvar) = v$, and let $i \in [\mu]$ be the selected error for $\subs$.
We now look at two cases.
\begin{itemize}
\item \emph{Case 1.} For all $j \in [\mu] \setminus \{ i \}$, we have that $\subs(\alpha_j) = \emptyword$.
\item \emph{Case 2.} There exists $j \in [\mu] \setminus \{ i \}$ such that $\subs(\alpha_j) \neq \emptyword$.
\end{itemize}
Recall that since $i$ is the selected error for $\subs$, we have that $\subs(x_{0,i}) = 0$, $\subs(x_{\#,i}) = \#$, and $z_{i,r} \in 0^+$ for $r \in [4]$.

\emph{Case 1:}
For all $j \in [\mu] \setminus \{i\}$, we have $\subs(\alpha_j) = \emptyword$.
Therefore,
\begin{itemize}
\item $\subs(\strucvar) = 0  \cdot \# \cdot 0 \cdot \subs(z)$,
\item $\subs(z) = \subs(\alpha_i)$, and 
\item $\subs(\alpha_i) = \#^3 \cdot \subs(\beta_i) \cdot \#^3$.
\end{itemize}
Consequently, $\subs(\strucvar) = 0 \# 0 \#^3 \cdot \subs(\beta_i) \cdot \#^3$ where $\subs(\beta_i) \in \mathsf{INVALC}(\mathcal{X})$.

\emph{Case 2:}
Let $j \in [\mu] \setminus \{i\}$ such that $\subs(\alpha_j) \neq \emptyword$.
For any $\subs \models \body(\varphi)$, we know that:
\begin{align*}
\subs(\strucvar) &= 0 \cdot \# \cdot 0 \cdot \#^3 \cdot v \cdot \#^3, \text{ and} \\
\subs(\strucvar) &= 0 \cdot \# \cdot 0 \cdot \subs(z),
\end{align*}
 where  $\subs(z) = \subs(\alpha_1 \cdot \alpha_2 \cdots \alpha_i \cdots \alpha_\mu)$.
Thus, 
\[ \subs(\alpha_1 \cdot \alpha_2 \cdots \alpha_{i-1}) \cdot \subs(\alpha_i) \cdot \subs(\alpha_{i+1} \cdots \alpha_\mu) = \#^3 \cdot v \cdot \#^3.\]
Recall that $i$ is the selected error for $\subs$ and therefore $\subs(\alpha_i) \in \#^3 \cdot \Sigma^* \cdot \#^3$.
Because there exists some $j \in [\mu] \setminus \{i\}$ where $\subs(\alpha_j) \neq \emptyword$, we have that 
\[
w_1 \cdot \underbrace{\#^3 \cdot \subs(\beta_i) \cdot \#^3}_{\subs(\alpha_i)} \cdot w_2 = \#^3 \cdot v \cdot \#^3,
\]
where $\subs(\alpha_1 \cdots \alpha_{i-1}) = w_1$ and $\subs(\alpha_{i+1} \cdots \alpha_\mu) = w_2$, and $w_1 \cdot w_2 \neq \emptyword$.
The rest of the proof for this direction shows that $v \in \mathsf{INVALC}(\mathcal{X})$.

\begin{itemize}
\item Let $|w_1| > 3$.  Then, $w_1 = \#^3 w_1'$ where $w_1' \in \Sigma^+$. It therefore must hold that $\subs(\alpha_1 \cdots \alpha_i) = \#^3 w_1' \#^3 \subs(\beta_i) \#^3$, and thus we have a structural error due to the occurrence of $\#^3$ in $w$.
\item Let $|w_2|> 3$. Then, $w_2 = w_2' \#^3$ where $w_2' \in \Sigma^+$. Thus, we have that $\subs(\alpha_i \cdots \alpha_\mu) = \#^3 \subs(\beta_i) \#^3 w_2' \#^3$, and we again have a structural error.
\item Let $|w_1| \leq 3$ and $|w_2| \leq 3$. Then, $\subs(\alpha_1 \cdots \alpha_\mu) = \#^{k_1} \cdot \subs(\beta_i) \cdot \#^{k_2}$ for some $k_1, k_2 \geq 3$, where $\subs(\beta_i) \in \mathsf{INVALC}(\mathcal{X})$.
Therefore, $w \in \mathsf{INVALC}(\mathcal{X})$.
\end{itemize}
Thus, we have shown that $v \in \lang(\varphi)$ where $v = 0 \cdot \# \cdot 0 \cdot \#^3 \cdot w \cdot \#^3 $ if and only if $w \in \mathsf{INVALC}(\mathcal{X})$.

To conclude, given an extended Turing machine $\mathcal{X}$, we construct $\varphi \in \cpfc$ such that 
\[ \lang(\varphi) \df \{ 0 \cdot \# \cdot 0 \cdot \#^3 \cdot w \cdot \#^3 \mid w \in \mathsf{INVALC}(\mathcal{X}) \}.\]
Since regular languages are closed under quotients, it is thus clear that $\lang(\varphi)$ is regular if and only if $\mathsf{INVALC}(\mathcal{X})$ is regular.
Therefore, observing~\cref{lemma:exTM}, the regularity problem for $\cpfc$ is undecidable.
\end{proof}

From the proof of~\cref{theorem:FCCQreg}, we can conclude:

\begin{corollary}\label{theorem:equiv}
The following problems are undecidable
\begin{itemize}
\item $\cpfc$ equivalence,
\item deciding whether an $\cpfc$ is equivalent to a fixed pattern, and 
\item deciding whether an $\cpfc$ is equivalent to a fixed regular expression.
\end{itemize}
\end{corollary}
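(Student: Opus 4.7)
The plan is to obtain all three undecidability results by a simple reduction from the emptiness problem for extended Turing machines (which is not semi-decidable by \cref{lemma:exTM}), reusing the construction from the proof of \cref{theorem:FCCQreg}.

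Recall that the proof of \cref{theorem:FCCQreg} associates to every extended Turing machine $\mathcal{X}$ an $\cpfc$-formula $\varphi_\mathcal{X}$ such that
\[ \lang(\varphi_\mathcal{X}) = \{ 0 \cdot \# \cdot 0 \cdot \#^3 \cdot w \cdot \#^3 \mid w \in \mathsf{INVALC}(\mathcal{X}) \}, \]
over the alphabet $\Sigma = \{0,\#\}$. The crucial observation is that $\mathsf{INVALC}(\mathcal{X}) = \Sigma^*$ if and only if $\mathsf{dom}(\mathcal{X}) = \emptyset$, and emptiness for extended Turing machines is not semi-decidable. Consequently, comparing $\lang(\varphi_\mathcal{X})$ against the ``fully padded'' language $L^\star := 0 \cdot \# \cdot 0 \cdot \#^3 \cdot \Sigma^* \cdot \#^3$ is undecidable.

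For each of the three targets I would exhibit a fixed ``comparison object'' whose generated language is exactly $L^\star$. First, the pattern $\alpha^\star := 0 \cdot \# \cdot 0 \cdot \#^3 \cdot x \cdot \#^3$ satisfies $\lang(\alpha^\star) = L^\star$, giving undecidability of equivalence between a given $\cpfc$ and a fixed pattern. Second, the regular expression $\gamma^\star := 0 \cdot \# \cdot 0 \cdot \#^3 \cdot \Sigma^* \cdot \#^3$ satisfies $\lang(\gamma^\star) = L^\star$, giving the regular-expression case. Third, the $\cpfc$-formula $\psi^\star := \cqhead{} (\strucvar \logeq 0 \cdot \# \cdot 0 \cdot \#^3 \cdot x \cdot \#^3)$ also generates $L^\star$ and is itself in $\cpfc$, which handles $\cpfc$-equivalence (and shows it is not even semi-decidable).

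Putting it together, for any extended Turing machine $\mathcal{X}$ we have $\lang(\varphi_\mathcal{X}) = \lang(\alpha^\star) = \lang(\gamma^\star) = \lang(\psi^\star) = L^\star$ if and only if $\mathsf{INVALC}(\mathcal{X}) = \Sigma^*$, i.e.\ iff $\mathsf{dom}(\mathcal{X})$ is empty; a decision procedure for any of the three equivalence problems would therefore decide emptiness of extended Turing machines. There is no real obstacle here: the construction of \cref{theorem:FCCQreg} does all the heavy lifting, and the only thing to verify is that the padding factors $0 \cdot \# \cdot 0 \cdot \#^3$ and $\#^3$ that appear in $\lang(\varphi_\mathcal{X})$ match those of $\alpha^\star$, $\gamma^\star$, and $\psi^\star$, which is immediate by construction.
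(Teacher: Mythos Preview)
Your proposal is correct and takes essentially the same approach as the paper: reduce from emptiness of extended Turing machines via the $\cpfc$ formula $\varphi_\mathcal{X}$ from \cref{theorem:FCCQreg}, and compare against a fixed pattern, regular expression, and $\cpfc$ all generating $0\cdot\#\cdot 0\cdot\#^3\cdot\Sigma^*\cdot\#^3$. The paper's proof is slightly terser (it only says one ``can easily construct'' the three comparison objects), while you spell them out explicitly.
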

\begin{proof}
The proof of~\cref{theorem:FCCQreg} constructs $\varphi \in \cpfc$ from a given extended Turing machine $\mathcal{X}$, such that
\[ \lang(\varphi) = \{ 0 \cdot \# \cdot 0 \cdot \#^3 \cdot w \cdot \#^3 \mid w \in \mathsf{INVALC}(\mathcal{X}) \}. \]
Furthermore, from~\cref{lemma:exTM}, deciding whether $\mathsf{INVALC}(\mathcal{X}) = \Sigma^*$ is undecidable.
Since we can easily construct an $\cpfc$, a pattern, or a regular expression that generates the language $0 \# 0 \#^3 \Sigma^* \#^3$, the stated corollary holds.
\end{proof}

While these undecidability results are themselves of interest, in the subsequent section, we consider their implications with regards to query optimization.

\subsection{Undecidability and Query Optimization}\label{sec:optim}
Let us now consider some of the consequences of the aforementioned undecidability results.
For each of the following results, we give a reference to the undecidability result for which it is a consequence.

To examine the problem of \emph{query minimization}, we first must discuss what complexity measure we wish to minimize for.
Instead of giving a explicit measure, such as the length of the query, we give the more general definition of a \emph{complexity measure}.
\begin{definition}\index{complexity measure}
Let $\mathcal{C} \subseteq \cpfcreg$ be a class of $\cpfcreg$s.
A \emph{complexity measure}\index{complexity measure} for $\mathcal{C}$ is a recursive function $c \colon \mathcal{C} \rightarrow \mathbb{N}$ such that the set of queries $\mathcal{C}$ can be enumerated in increasing order, and for every $n \in \mathbb{N}$, there exist finitely many $\varphi \in \mathcal{C}$ with $c(\varphi) = n$.
\end{definition}

As a consequence of~\cref{theorem:FCCQreg}, we have that there is no algorithm that can minimize an $\cpfcreg$.

\begin{corollary}\label{cor:minimization}
Let $c$ be a complexity measure for $\cpfc$.
There is no algorithm that given $\varphi \in \cpfc$, constructs $\psi \in \cpfc$ such that $\lang(\varphi) = \lang(\psi)$ and $\psi$ is $c$-minimal. 
\end{corollary}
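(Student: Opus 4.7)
The plan is to argue by contradiction, deducing from the existence of such a minimisation algorithm $A$ a decision procedure for $\cpfc$-equivalence to a fixed query, which is undecidable by~\cref{theorem:equiv}.

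Assume $A$ exists. The first observation is that the quantity $c^{*}(\varphi) := c(A(\varphi))$ depends only on $\lang(\varphi)$, since it equals the minimum $c$-value attained by any $\cpfc$ equivalent to $\varphi$. Fix a reference query $\varphi_0 \in \cpfc$ for which equivalence is undecidable (for instance one realising $0 \cdot \# \cdot 0 \cdot \#^3 \cdot \Sigma^* \cdot \#^3$, using the construction from the proof of~\cref{theorem:FCCQreg}), precompute $n := c^{*}(\varphi_0)$, and observe that for any input $\varphi$ the test $c^{*}(\varphi) \neq n$ already certifies $\lang(\varphi) \neq \lang(\varphi_0)$. This yields one half of the required decision procedure essentially for free, and uses only that $c$ and $A$ are recursive.

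The remaining case is $c^{*}(\varphi) = n$, where two inequivalent languages could still share the same minimum $c$-value. To address this, I would use the enumeration property of $c$ to list the finite set $S_n := \{\chi \in \cpfc : c(\chi) \leq n\}$ and apply $A$ to every element, producing a finite family of syntactic minimal forms. Combining this with the syntactic composition tools available inside $\cpfc$ (for example the concatenation-based encodings of restricted disjunction used in~\cref{quotientExp} and of combinatorial constraints used in the proof of~\cref{theorem:FCCQreg}), the idea is to build auxiliary test queries whose $c^{*}$-values distinguish equivalent from inequivalent pairs among $S_n$, yielding a decision procedure for ``$\lang(\varphi) = \lang(\varphi_0)$?''. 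The main obstacle, and the step I expect to be the hardest, is the non-canonicity of $A$: a $c$-minimiser may output syntactically distinct minimal forms for equivalent inputs, so pure syntactic comparison of $A$-images cannot by itself detect language equality. Overcoming this requires designing an auxiliary construction that collapses the ambiguity of $A$ and reads off language equality from $c^{*}$-data alone; once this is in place, the resulting decision procedure contradicts~\cref{theorem:equiv}, completing the proof.
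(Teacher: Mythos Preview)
Your overall framework is right---reduce from undecidability of equivalence to a fixed query $\varphi_0$---and you correctly spot that comparing $A(\varphi)$ to $A(\varphi_0)$ syntactically can fail because $A$ need not be canonical. But your proposed fix (building ``auxiliary test queries whose $c^*$-values distinguish equivalent from inequivalent pairs'') is vague, unlikely to work as stated, and, more importantly, unnecessary.

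The key idea you are missing, and which the paper uses, is a non-constructive hardcoding step. Let $S$ be the set of \emph{all} $c$-minimal queries equivalent to $\varphi_0$. By the defining property of a complexity measure, only finitely many queries have $c$-value equal to $c^*(\varphi_0)$, so $S$ is finite. Crucially, $S$ is \emph{fixed}: it depends only on $\varphi_0$ and $c$, not on the input $\varphi$. Hence, even though we have no procedure to compute $S$, there \emph{exists} an algorithm with $S$ hardcoded as a finite list. The decision procedure is then simply: compute $A(\varphi)$ and check syntactically whether $A(\varphi) \in S$. If $\lang(\varphi) = \lang(\varphi_0)$ then $A(\varphi)$ is a $c$-minimal equivalent of $\varphi_0$, so $A(\varphi) \in S$; conversely, if $A(\varphi) \in S$ then $\lang(\varphi) = \lang(A(\varphi)) = \lang(\varphi_0)$. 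This contradicts \cref{theorem:equiv}. The non-canonicity of $A$ is harmless precisely because $S$ contains \emph{every} minimal form, not just $A(\varphi_0)$. You had $S_n$ in hand and were one step away; the missing observation is that the relevant subset $S \subseteq S_n$ is itself fixed and finite, so membership in it is trivially decidable.
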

\begin{proof}
Consider $\varphi\in\cpfc$ that was constructed in the proof of~\cref{theorem:FCCQreg} from an extended Turing machine.
We remind the reader that for a given extended Turing machine $\mathcal{X}$, we construct $\varphi$ such that:
\[ \lang(\varphi) = \{ 0 \cdot \# \cdot 0 \cdot \#^3 \cdot w \cdot \#^3 \mid w \in \mathsf{INVALC}(\mathcal{X}) \}. \]
Now consider the following query:
\[ \psi \df \cqhead{} \strucvar \logeq 0 \# 0 \#^3 x \#^3. \]
From $\psi$ and $\varphi$, we have that $\lang(\varphi) = \lang(\psi)$ if and only if $\mathsf{INVALC}(\mathcal{X}) = \Sigma^*$. 
As determining whether $\mathsf{INVALC}(\mathcal{M}) = \Sigma^*$ is undecidable (recall~\cref{lemma:exTM}), it follows that deciding whether $\lang(\varphi) = \lang(\psi)$ is undecidable.

Since $\psi$ is fixed, the set of $c$-minimal queries $\psi' \in \cpfcreg$ such that $\lang(\psi) = \lang(\psi')$ is finite, there exists a recursive function to find such a $c$-minimal query $\psi'$.
Now assume that there exists an algorithm $P$ that given $\varphi$, gives an equivalent $\varphi'$ such that $c(\varphi')$ is minimal.
Then, there would be an algorithm that determines whether $\mathsf{INVALC}(\mathcal{X}) = \Sigma^*$ by checking whether $\varphi'$ is equivalent to $\psi'$.
Consequently, the assumption that $P$ exists cannot hold.
\end{proof}

\index{non-recursive trade-off}
Given the complexity measures $c_1$ for a language generator $\mathcal{F}_1$ and $c_2$ for a language generator $\mathcal{F}_2$, we say that there is a \emph{non-recursive trade-off}\index{non-recursive trade-off} from $\mathcal{F}_1$ to~$\mathcal{F}_2$ if for every recursive function $f \colon \mathbb{N} \rightarrow \mathbb{N}$, there exists $\varphi \in \mathcal{F}_1$ and $\psi \in \mathcal{F}_2$ such that $\lang(\varphi) = \lang(\psi)$, but $c_1(\varphi) \geq f(c_2(\psi))$. 
Hartmanis’ meta theorem~\cite{hartmanis1983godel} allows us to draw conclusions about the relative succinctness of models from certain undecidability results.
Thus, we can conclude the following.

\begin{proposition}\label{theorem:non-rec}
The trade-off from $\cpfc$s to regular expressions is non-recursive.
\end{proposition}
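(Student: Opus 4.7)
The plan is to derive a contradiction from Theorem~\ref{theorem:FCCQreg}, which states that regularity for $\cpfc$ is neither semi-decidable nor co-semi-decidable. Suppose, for contradiction, that the trade-off is recursive. Then there exists a recursive $f\colon\mathbb{N}\to\mathbb{N}$ such that for every $\varphi\in\cpfc$ with $\lang(\varphi)$ regular, some regular expression $\gamma$ with $\lang(\gamma)=\lang(\varphi)$ and $|\gamma|\le f(|\varphi|)$ exists. I will use $f$ to build a semi-decision procedure for \emph{non-regularity} of $\cpfc$, contradicting Theorem~\ref{theorem:FCCQreg}.

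The main preliminary observation I need is that, for any fixed $\varphi\in\cpfc$ and regular expression $\gamma$, the predicate $\lang(\varphi)\ne\lang(\gamma)$ is semi-decidable. This follows because a witness $w\in\lang(\varphi)\symdiff\lang(\gamma)$ can be searched for by enumerating all $w\in\Sigma^*$ in length-lexicographic order and testing $w\in\lang(\varphi)$ (decidable, in fact $\np$-complete by Theorem~\ref{npcomplete-modelcheck}) and $w\in\lang(\gamma)$ (decidable in polynomial time). If the two languages differ, such a $w$ is eventually found; if they agree, the search runs forever. This is the core subroutine.

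With this subroutine in hand, given $\varphi\in\cpfc$, the semi-decision procedure for non-regularity proceeds as follows. Compute $f(|\varphi|)$ and enumerate the finite list $\gamma_1,\dots,\gamma_k$ of all regular expressions of size at most $f(|\varphi|)$. Launch, in parallel (or dovetailed), the $k$ semi-decision procedures for $\lang(\varphi)\ne\lang(\gamma_i)$, and wait for all of them to halt. If $\lang(\varphi)$ is not regular, then in particular no $\gamma_i$ is equivalent to $\varphi$, so each of the $k$ procedures halts in finite time, and our procedure correctly accepts. Conversely, if $\lang(\varphi)$ is regular, then by the assumed recursive bound some $\gamma_i$ satisfies $\lang(\gamma_i)=\lang(\varphi)$, so the corresponding non-equivalence search loops forever, and our procedure correctly does not accept. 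Hence non-regularity of $\cpfc$ would be semi-decidable, contradicting Theorem~\ref{theorem:FCCQreg}. Therefore no such recursive $f$ can exist, and the trade-off is non-recursive.

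The delicate step is really just the first paragraph's observation: having decidable model checking on both sides of the translation makes $\ne$ semi-decidable, which is exactly what powers the Hartmanis-style argument. The rest is a straightforward parallel/dovetailed enumeration bounded by the hypothetical recursive $f$. No further technical machinery beyond Theorem~\ref{theorem:FCCQreg} and decidable model checking is needed.
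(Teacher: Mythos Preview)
Your proof is correct and follows essentially the same route as the paper: the paper simply invokes Hartmanis' meta-theorem (citing~\cite{hartmanis1983godel,kutrib2005phenomenon}) together with Theorem~\ref{theorem:FCCQreg}, whereas you unroll the meta-theorem's contradiction argument by hand, using decidable model checking on both sides to semi-decide non-equivalence and then dovetailing over the finitely many candidate expressions. The two arguments are the same in substance; yours is just self-contained rather than black-boxing the meta-theorem.
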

\begin{proof}
This proof follows from Hartmanis’ meta theorem~\cite{hartmanis1983godel} which states the following:
For two systems $A$ and $B$ of representations, given a representation $r \in B$, if it is not co-semi-decidable whether there exists an equivalent representation $r' \in A$, then there is a non-recursive trade-off from $B$ to $A$. See~\cite{kutrib2005phenomenon} for more details.

From the proof of~\cref{theorem:FCCQreg} we know that determining whether $\lang(\varphi)$ is regular for a given $\varphi \in \cpfc$ is not decidable, semi-decidable, or co-semi-decidable.
Thus, we can invoke Hartmanis’ meta theorem, to determine that there is a non-recursive trade-off from $\cpfc$ to regular expressions.
\end{proof}

Less formally, \cref{theorem:non-rec} states that even for those $\cpfc$s that generate a regular language, the size blowup from the $\cpfc$ to the regular expression that accepts the same language is not bound by any recursive function.

It was shown in Theorem 4.10 of~\cite{fre:doc} that the trade-off from core spanners to regular spanners is non-recursive.
Then, in~\cite{frey2019finite} it was shown that the trade-off from $\epfc$-formulas to regular expressions is non-recursive.
In this section, we have shown that even for $\cpfc$, the trade-off to regular expressions is non-recursive.
\subsection{Split-Correctness for FC-CQs}\index{split correctness}
In certain cases, one may not wish to query the whole input document.
Instead, it can be advantageous to first split the document into sections, and then perform a query on each of these sections.
For example, if we wanted to query an email conversation, it would be wasteful to query the whole conversation when simply querying the individual emails is sufficient.

We adapt the formal definition of information extraction on split documents given in~\cite{dol:split}.
That is, first a so-called \emph{splitter} converts an input word into a set of factors of the input word, then the resulting relation is the union of a query performed on each of these factors.
We shall give a formal definition of this process.

This brings up many intriguing questions regarding querying ``split documents''.
In particular, we look at the static analysis problems of whether we get the same result if we first split the document and then query the smaller segments individually, or query the whole document.
In~\cite{dol:split}, these questions were introduced and explored with regards to 
regular spanners.
With regards to the relational setting, parallel correctness has been considered in~\cite{ameloot2017parallel, ameloot2016data}.
The complexity of various decision problems regarding this so-called \emph{split-correctness} when incorporating equalities were left open.

Using $\cpfc$ as both queries and splitters, we show that so-called \emph{splittabillity}, \emph{split-correctness}, and \emph{self-splittability} are all undecidable.
To show this, we reduce from the \emph{inclusion problem for pattern languages} over fixed alphabets~\cite{freydenberger2010bad}.

\begin{figure}
\tikzset{every picture/.style={line width=0.75pt}}     
\begin{center}
\begin{tikzpicture}[thick,scale=0.9, every node/.style={scale=0.9}]
\tikzstyle{vertex}=[rectangle,fill=white!25,minimum size=12pt,inner sep=2pt,outer sep=2pt,draw=white]
   \node[vertex] (1) at (-0.5,-1.5) {$w$};
   \node[vertex] (2) at (1,-1.5) {Split};
   \path [->] (1) edge node[] {} (2);

   \node[vertex] (3) at (4,0) {$w_1$};
   \node[vertex] (4) at (4,-1) {$w_2$};
   \node[vertex] (5) at (4,-2) {$\vdots$};
   \node[vertex] (6) at (4,-3) {$w_k$};
   \path [->] (2.north east) edge node[] {} (3.west);
   \path [->] (2.east) edge node[] {} (4.west);
   \path [->] (2.south east) edge node[] {} (6.west);
      
   \node[vertex] (7) at (6,0) {$\fun{\varphi}(w_1)$};
   \node[vertex] (8) at (6,-1) {$\fun{\varphi}(w_2)$};
   \node[vertex] (9) at (6,-2) {$\vdots$};
   \node[vertex] (10) at (6,-3) {$\fun{\varphi}(w_k)$};
   \path [->] (3.east) edge node[] {} (7.west);
   \path [->] (4.east) edge node[] {} (8.west);
   \path [->] (6.east) edge node[] {} (10.west);
      
   \node[vertex] (11) at (9, -1.5) {Union};
   \path [->] (7) edge node[] {} (11.north west);
   \path [->] (8) edge node[] {} (11.west);
   \path [->] (10) edge node[] {} (11.south west);
   
    \node[vertex] (12) at (11.5, -1.5) {$\fun{\varphi \circ \psi}(w)$};
    \path[->] (11) edge node[] {} (12);
    
     \draw[dotted] (3.2,-3.5) -- (4.7,-3.5) -- (4.7,0.5) -- (3.2,0.5)  -- cycle;
   \draw (3.8,-4) node {$\mathcal{L}(\psi,w)$};
\end{tikzpicture}
\end{center}
\caption{\label{fig:split}An illustration of querying a split document using $\cpfc$. In this figure, we use $\psi$ as the splitter, and $\varphi$ as the query.}
\end{figure}
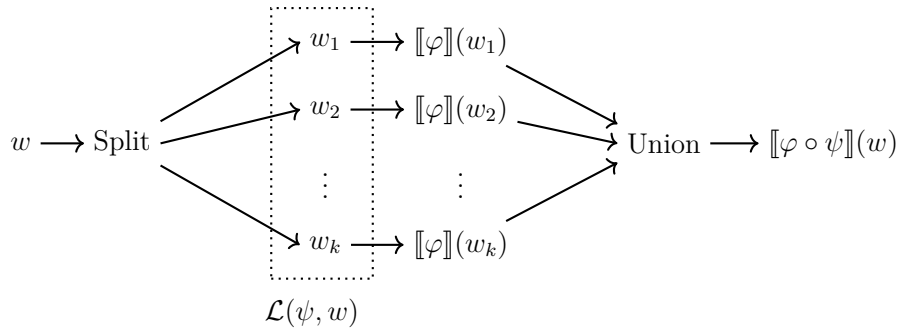

We call $\varphi \in \cpfc$ a \emph{splitter} if $\varphi$ is a unary query (that is, $|\fvar(\varphi)| = 1$).\index{splitter}
It follows that $\fun{\varphi}(w)$ for any $w \in \Sigma^*$ produces a set of subwords.
Let $\varphi \in \cpfc$ and assume that $\fvar(\varphi) = \{ x \} $. 
Then, for each $w \in \Sigma^*$, we define \emph{language of $\varphi$ over $w$} and denote it by $\lang(\varphi, w) \df \{ u \mid \subs \in \fun{\varphi}(w) \text{ and } \subs(x) = u \}$. \index{L@$\lang(\varphi,w)$}

Let $\varphi \in \cpfc$ and let $\psi \in \cpfc$ be a splitter.
We write $\fun{\varphi \circ \psi}(w)$ to denote $\bigcup_{v \in \lang(w, \psi)} \fun{\varphi}(v)$.
If $\fun{\varphi}(w) = \fun{\varphi' \circ \psi}(w)$ for all $w \in \Sigma^*$, then we simply write $\varphi = \varphi' \circ \psi$.
For an illustration, see~\cref{fig:split}.
Note that if $\lang(\psi, w) = \emptyset$ for some splitter $\psi$, then $\fun{\varphi \circ \psi}(w)=\emptyset$ for any $\varphi \in \cpfc$, and any $w \in \Sigma^*$.

\index{split-correctness, self-splittability, splittability problem}
We now define decision problems for parallel-correctness of $\cpfc$s.
The decision problem definitions given here are based on the definitions given in~\cite{dol:split}, however, as we work with $\cpfc$ and not document spanners, the definitions differ slightly.
\begin{definition}
For $\varphi, \varphi_1, \varphi_2 \in \cpfc$ and a splitter $\psi \in \cpfc$, we define the following decision problems.
\begin{enumerate}
\item \emph{Split-correctness.} Does $\varphi_1 = \varphi_2 \circ \psi$?
\item \emph{Self-splittability} is a special case of split-correctness where $\varphi_1 = \varphi_2$.
\item \emph{Splittability.} Does there exists $\varphi' \in \cpfc$ such that $\varphi = \varphi' \circ \psi$?
\end{enumerate}
\end{definition}

For patterns $\alpha, \beta \in (\Sigma \union \Xi)^*$ where $|\Sigma|=2$, it is undecidable to determine whether $\lang(\alpha) \subseteq \lang(\beta)$, see~\cite{freydenberger2010bad}.
Using this undecidability result, we get the following result:

\begin{theorem}\label{theorem:split}
Each of splittability, split-correctness, and self-splittability for $\cpfc$ is undecidable.
\end{theorem}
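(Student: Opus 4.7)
The plan is to reduce from the inclusion problem for E-pattern languages, which by Freydenberger and Reidenbach~\cite{freydenberger2010bad} is undecidable already for $|\Sigma| = 2$. Given an instance consisting of two patterns $\alpha, \beta \in (\Sigma \cup \Xi)^*$, I will construct a splitter $\psi \in \cpfc$ and a unary query $\varphi \in \cpfc$ such that all three decision problems over $(\varphi, \psi)$ (respectively $(\varphi, \varphi, \psi)$ for split-correctness) reduce to deciding $\lang(\beta) \subseteq \lang(\alpha)$.

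The construction is uniform: set
\[
\psi \df \cqhead{x} (\strucvar \logeq x) \land (x \logeq \alpha), \qquad
\varphi \df \cqhead{x} (\strucvar \logeq x) \land (x \logeq \beta).
\]
Both are unary $\cpfc$s in which the unique free variable is forced to equal $\strucvar$, so for every $w \in \Sigma^*$ we get $\lang(\psi, w) = \{w\}$ if $w \in \lang(\alpha)$ and $\lang(\psi, w) = \emptyset$ otherwise; analogously $\fun{\varphi}(w)$ is a single substitution mapping $x$ to $w$ if $w \in \lang(\beta)$, and is empty otherwise. Unfolding the definition of $\varphi \circ \psi$ then gives $\fun{\varphi \circ \psi}(w) = \fun{\varphi}(w)$ whenever $w \in \lang(\alpha)$, and $\fun{\varphi \circ \psi}(w) = \emptyset$ whenever $w \notin \lang(\alpha)$.

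For \emph{self-splittability}, the condition $\varphi = \varphi \circ \psi$ therefore reduces to: for every $w \notin \lang(\alpha)$, also $\fun{\varphi}(w) = \emptyset$, i.e.~$w \notin \lang(\beta)$. That is precisely $\lang(\beta) \subseteq \lang(\alpha)$. Since self-splittability is the special case $\varphi_1 = \varphi_2$ of split-correctness, undecidability of \emph{split-correctness} follows immediately from the same reduction.

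For \emph{splittability}, the forward direction is the main subtlety, because one must rule out every possible witness $\varphi'$. The key observation is that $\psi$ filters the input at the outermost level: for any $w \notin \lang(\alpha)$ we have $\lang(\psi, w) = \emptyset$, and hence $\fun{\varphi' \circ \psi}(w) = \emptyset$ no matter which $\varphi'$ is chosen. So if a witness $\varphi'$ exists then $\fun{\varphi}(w) = \emptyset$ for all $w \notin \lang(\alpha)$, which gives $\lang(\beta) \subseteq \lang(\alpha)$. The converse is trivial: if $\lang(\beta) \subseteq \lang(\alpha)$ then $\varphi' \df \varphi$ itself witnesses splittability, because the two values of $\fun{\varphi}$ and $\fun{\varphi \circ \psi}$ only disagree on words $w$ with $w \notin \lang(\alpha)$ and $w \in \lang(\beta)$, which are absent under the inclusion. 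Combining the three reductions completes the proof. The only step requiring care is this splittability argument, but the structure of $\psi$ makes it go through without any analysis of $\varphi'$.
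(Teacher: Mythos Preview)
Your proof is correct and takes essentially the same approach as the paper: both reduce from the inclusion problem for erasing pattern languages by building a splitter that filters on one pattern and a query that tests membership in the other, then observe that the splitter forces the composition to be empty on inputs outside the filter's language. The only cosmetic differences are that the paper uses a Boolean query $\varphi \df \cqhead{} (\strucvar \logeq \alpha)$ rather than your unary one, and swaps the roles of $\alpha$ and $\beta$; neither affects the argument.
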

\begin{proof}
We prove that both self-splittability and splittability are undecidable by reducing from the pattern inclusion problem.

\subparagraph*{Self-Splittability is undecidable.}
We reduce from the inclusion problem for pattern languages over a fixed alphabet, which is undecidable~\cite{freydenberger2010bad}.
For $|\Sigma| = 2$, we know that deciding whether $\lang(\alpha) \subseteq \lang(\beta)$ for $\alpha, \beta \in (\Sigma \union \Xi)^*$ is undecidable.
Thus, we define a splitter $\psi \in \cpfc$ and a query $\varphi \in \cpfc$ as follows:
\begin{align*}
\psi \df \cqhead{x} & (\strucvar \logeq x) \land (x \logeq \beta), \\
\varphi \df \cqhead{} & (\strucvar \logeq \alpha).
\end{align*}

If there exists $w \in \lang(\alpha) \setminus \lang(\beta)$, then $w \in \lang(\varphi)$ and $\lang(w, \psi) = \emptyset$.
Therefore, $\varphi \neq \varphi \circ \psi$.
If $\lang(\alpha) \subseteq \lang(\beta)$, then for all $w \in \lang(\varphi)$, we have that $\lang(w, \psi) = \{ w \}$.
Therefore, $\varphi = \varphi \circ \psi$.
Thus, $\varphi = \varphi \circ \psi$ if and only if $\lang(\alpha) \subseteq \lang(\beta)$.
Consequently, self-splittability is undecidable.

Since self-splittability is a special case of split-correctness, it follows that split-correctness is undecidable.

\subparagraph*{Splittability is undecidable.}
Let $\varphi_1 \df \cqhead{} (\strucvar \logeq \alpha)$ and let $\psi \df \cqhead{x} (\strucvar \logeq x) \land (x \logeq \beta)$. 
If $\lang(\alpha) \subseteq \lang(\beta)$, then $\lang(\alpha) = \lang(\alpha) \intersect \lang(\beta)$. 
Therefore, let $\varphi_2 \df \cqhead{} (\strucvar \logeq \alpha)$. 
It follows that if $\lang(\alpha) \subseteq \lang(\beta)$, then there exists $\varphi_2 \in \cpfc$ such that $\fun{\varphi_1}(w) = \bigcup_{v \in \lang(\psi,w)} \fun{\varphi_2}(v)$ for all $w \in \Sigma^*$.

If $\lang(\alpha) \not\subseteq \lang(\beta)$, then there exists some $w \in \Sigma^*$ such that $w \in \lang(\varphi)$ and $\lang(w,\psi) = \emptyset$.
Therefore, $\varphi_1 = \varphi_2 \circ \psi$ cannot hold for any $\varphi_2 \in \cpfc$.
This is because $\fun{\varphi_1}(w) \neq \emptyset$, but $\bigcup_{v \in \lang(w,\psi)} \fun{\varphi_2}(v) = \emptyset$ for any $\varphi_2 \in \cpfc$.
\end{proof}

While it may seem that certain ways of splitting the document would give tractable evaluation of $\cpfc$s, even if one assumes the ``strongest'' possible splitter, that splits the document into single characters, \cref{npcomplete-modelcheck} shows that under combined complexity we still hit $\np$-hardness.
Therefore, for tractable evaluation of $\cpfc$s over a split document, the structure of the $\cpfc$ would also need to be taken into account.
\subsection{Ambiguity}\label{sec:ambi}
One key difference between $\fc$ and first-order logic is that an $\fc$ formula materializes the relations it works with from the query and the input word, whereas for relational first-order logic, the relations are considered part of the input. 
Due to the unbounded arity of word equations, the number of tuples in the materialized relation may be exponential.
Consequently, model checking can be intractable, even if the underlying query is tractable (refer back to~\cref{npcomplete-modelcheck}).
This raises the following question: For which $\cpfcreg$s is the resulting relation small?

In the present section, we consider the problem of whether an $\cpfcreg$ always materializes a constant size output relation with respect to the length of the input word.
We show that deciding whether an $\cpfcreg$ materializes a constant size relation is $\pspace$-complete.

\begin{definition}
We say that a pattern $\alpha \in (\Sigma \union \Xi)^*$ is \emph{ambiguous} if there exists substitutions $\subs, \tau$ such that $\subs(\alpha) = \tau(\alpha)$ but $\subs(x) \neq \tau(x)$ for some $x \in \var(\alpha)$.
A pattern is \emph{unambiguous} if it is not ambiguous.
Likewise, $\alpha$ is finitely-ambiguous if for every $w \in \lang(\alpha)$, the number of pattern substitutions $\subs$ such that $\subs(\alpha) = w$ is constant; that is, ignoring those substitutions that are equal for the variables that appear in $\alpha$, but are different for variables that do not appear in $\alpha$.
\end{definition}

The definition for unambiguous patterns and finitely ambiguous patterns follows from~\cite{mateescu1994finite}. More details on finite degrees of ambiguity in pattern languages can be found in this work. Recall~\cref{sec:relLit} for related literature on ambiguity.

\begin{example}
Let $\alpha \df x_1 \cdot \mathtt{a} \cdot \mathtt{b} \cdot x_1 \cdot x_2$, and let $w \df \mathtt{ab ab ab abb}$. 
Consider two substitutions $\tau, \subs$ such that $\tau(\alpha) = \subs(\alpha) = w$, but $\tau \neq \subs$.
Let $\tau(x_1) = \mathtt{ab}$ and $\tau(x_2) = \mathtt{abb}$. Now, let $\subs(x_1) = \emptyword$ and $\subs(x_2) = \mathtt{abababb}$.
\begin{align*}
w = &\overbrace{\mathtt{ab}}^{\tau(x_1)} \cdot \mathtt{ab} \cdot \overbrace{\mathtt{ab}}^{\tau(x_1)} \cdot \overbrace{\mathtt{abb}}^{\tau(x_2)}, \\
w = &\underbrace{\emptyword}_{\subs(x_1)} \cdot \mathtt{ab} \cdot \underbrace{\emptyword}_{\subs(x_1)}  \cdot \underbrace{\mathtt{abababb}}_{\subs(x_2)}.
\end{align*}
Thus, $\alpha$ is an ambiguous pattern.
\end{example}

We generalize the idea of ambiguity to $\cpfcreg$s.

\begin{definition}\index{ambiguous}\index{ambiguity}\index{unambiguous}\index{k@$k$-ambiguous}
Let $\varphi \in \cpfcreg$. 
For any $w \in \Sigma^*$ and any substitution~$\subs \in \fun{\varphi}(w)$, we assume that $\domain(\varphi) = \fvar(\varphi)$.
We say that $\varphi$ is \emph{unambiguous} if~$|\fun{\varphi}(w)| \in \{0,1\}$ for every $w \in \Sigma^*$, and is \emph{ambiguous} otherwise.
For all $w \in \Sigma^*$, if $|\fun{\varphi}(w)|$ is finite, then $\varphi$ is \emph{finitely-ambiguous}, and
if $|\fun{\varphi}(w)| \leq k$ for some fixed $k \in \mathbb{N}$, then $\varphi$ is $k$-ambiguous. 
\end{definition}

It may seem that simply restricting the number of variables in a pattern could guarantee a certain degree of ambiguity. 
However, as shown in~\cite{mateescu1994finite}, this is not the~case.
\begin{theorem}[Mateescu and Salomaa~\cite{mateescu1994finite}]
For every $k \in \mathbb{N}$ there exists a pattern $\alpha \in (\Sigma \union \Xi)^*$ such that $|\var(\alpha)| = k$ and $\alpha$ is unambiguous.
\end{theorem}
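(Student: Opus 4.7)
The plan is to prove this existence result by exhibiting, for every $k$, an explicit pattern $\alpha_k$ with $|\var(\alpha_k)|=k$ that is unambiguous, and then verifying unambiguity by induction on $k$. The base cases $\alpha_0 \df \emptyword$ and $\alpha_1 \df x_1$ are immediate, so the work is in the inductive construction. The guiding idea is to interleave variable occurrences with terminal \emph{markers} so that (i) every variable occurs at least twice, rigidifying its possible values, and (ii) the markers cannot be ``shifted'' into or absorbed by a neighbouring variable substitution.

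A candidate construction over $\Sigma\supseteq\{\mathtt{a},\mathtt{b}\}$ is
\[
\alpha_k \df w_0 \cdot x_1 \cdot w_1 \cdot x_1 \cdot w_2 \cdot x_2 \cdot w_3 \cdot x_2 \cdot w_4 \cdots x_k \cdot w_{2k-1} \cdot x_k \cdot w_{2k},
\]
where each $w_i$ is a short terminal word (for instance $w_i \df \mathtt{a}\mathtt{b}^{i+1}\mathtt{a}$) chosen so that the $w_i$ are pairwise distinct and enjoy the property that no nontrivial ``internal shift'' of $w_i$ inside the word $w_{2i-2}\, u\, w_{2i-1}\, u\, w_{2i}$ (for any $u\in\Sigma^*$) is possible. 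Under this construction, $|\var(\alpha_k)|=k$ is clear by inspection.

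For the unambiguity, I would proceed inductively as follows. Assume $\subs(\alpha_k)=\tau(\alpha_k)=u$ for two substitutions $\subs,\tau$. Since $w_0$ is a known terminal prefix of $\alpha_k$, both $\subs(x_1)$ and $\tau(x_1)$ must begin immediately after $w_0$ in $u$. The first step is to prove that the ``official'' occurrence of $w_1$ separating the two copies of $x_1$ lies at a unique position in $u$; this yields $|\subs(x_1)|=|\tau(x_1)|$, hence $\subs(x_1)=\tau(x_1)$. Once $x_1$ is pinned down, we can strip $w_0\, \subs(x_1)\, w_1\, \subs(x_1)\, w_2$ off the front of $u$, obtaining an equation of the form $\subs(\alpha_{k-1}')=\tau(\alpha_{k-1}')$ for the tail pattern $\alpha_{k-1}'$, which has $k-1$ variables and the same structural shape; the induction hypothesis then finishes the argument.

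The main obstacle is the uniqueness of the position of $w_1$ in $u$, because \emph{a priori} $\subs(x_1)$ might itself contain $w_1$ as a factor and thereby create a spurious candidate split. Overcoming this requires a careful combinatorial analysis: if a spurious position existed, then a length comparison with the second, redundant occurrence $x_1\, w_1\, x_1$-block would force $\subs(x_1)$ to have a period compatible with $|w_1|$, and by tuning the $w_i$ (e.g.\ making $w_i$ start and end with $\mathtt{a}$ and have a unique Parikh vector in $\mathtt{b}$s) one can derive a contradiction. Standard tools from combinatorics on words such as the Fine--Wilf theorem or first-occurrence/overlap arguments should suffice. Once this uniqueness step is established, the inductive reduction is essentially mechanical.
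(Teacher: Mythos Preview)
The paper does not give its own proof of this theorem; it is stated with attribution to Mateescu and Salomaa and then used as a black box. So there is nothing in the paper to compare against beyond the citation.

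As for your proposal: the overall strategy is reasonable, but it is a plan rather than a proof, and the missing piece is exactly the hard part. You correctly identify uniqueness of the position of the separator $w_1$ in the image word as ``the main obstacle,'' and then defer it to ``standard tools from combinatorics on words such as the Fine--Wilf theorem.'' This is not substantiated. Fine--Wilf concerns words with two periods whose sum is bounded by the length plus their gcd; it is unclear how you would extract such a period structure from the mere hypothesis that two different candidate positions of $w_1$ are both consistent with the block $x_1 w_1 x_1$. More importantly, your left-to-right induction assumes you can pin down $\sigma(x_1)=\tau(x_1)$ using only the first block, but the second copy of $x_1$ is immediately followed by $w_2\, x_2\, w_3\, x_2\cdots$, and a spurious placement of $w_1$ would shift into this region---so determining $x_1$ already seems to require information about what comes after, which you do not yet control inductively.

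The specific choice $w_i=\mathtt{a}\mathtt{b}^{i+1}\mathtt{a}$ is plausible and likely can be made to work, but the argument that succeeds is probably a global one (e.g.\ analysing the rightmost or longest $\mathtt{b}$-run in the image word to locate $w_{2k-1}$, then peeling off $x_k$ from the right), rather than the left-to-right local induction you sketch. As written, the proposal leaves the genuine combinatorial lemma unproved.
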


While~\cite{mateescu1994finite} considers patterns, this result carries over to $\cpfcreg$ immediately by considering queries of the form $\cqhead{\vec x} (\strucvar \logeq \alpha)$.

We can encode the problem of determining whether $\varphi \in \cpfcreg$ is $k$-ambiguous as a formula in $\epfcreg$.
To do so, we make $k$ copies of $\varphi$ with the variables of each copy being renamed.
Then, we see whether $\varphi$ conjuncted each of the $k$ copies is satisfiable when at least two copies ``disagree'' on same variable.
In the formula we give to encode this problem, we use the shorthand $x \neq y$ to show that the two variables represent different words. 
Consequently, utilizing the fact that satisfiability for $\epfcreg$ is $\pspace$-complete, we can show the following result (see Theorem 7 in~\cite{mateescu1994finite} for an analogous result for pattern ambiguity).

\begin{theorem}\label{theorem:kAmb}
For any $k \in \mathbb{N}$, deciding if $\varphi \in \cpfcreg$ is $k$-ambiguous is $\pspace$-complete.
\end{theorem}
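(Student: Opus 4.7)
The plan is to obtain the $\pspace$ upper bound by a reduction to $\epfcreg$ satisfiability, and the matching $\pspace$-hardness by a reduction from $\cpfcreg$ satisfiability (\cref{theorem:SAT}). Both reductions are polynomial for fixed $k$, so the decision problem sits in $\pspace = \mathsf{coPSPACE}$ and is $\pspace$-hard.

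For the upper bound, given $\varphi \in \cpfcreg$, I will construct an $\epfcreg$-formula $\psi$ that is satisfiable if and only if $\varphi$ is \emph{not} $k$-ambiguous. Make $k+1$ copies $B^{(1)}, \dots, B^{(k+1)}$ of the body of $\varphi$ in which the only variable shared across distinct copies is the universe variable $\strucvar$; every other variable $y$ is replaced by a fresh copy $y^{(i)}$. Then let
\[ \psi \df \bigwedge_{i=1}^{k+1} \exists \vec{y}^{(i)} \colon B^{(i)} \; \land \; \bigwedge_{1 \leq i < j \leq k+1} \bigvee_{x \in \fvar(\varphi)} (x^{(i)} \neq x^{(j)}). \]
The disequality $u \neq v$ is expressed in $\epfcreg$ as a disjunction witnessing either a position where $u$ and $v$ carry different single symbols (using fresh variables $z_u, z_v$ with word equations $u \logeq p \cdot z_u \cdot u'$ and $v \logeq p \cdot z_v \cdot v'$ together with regular constraints $z_u \regconst \mathtt{a}$, $z_v \regconst \mathtt{b}$ for each pair $\mathtt{a} \neq \mathtt{b}$ in $\Sigma$), or that one of $u, v$ is a proper prefix of the other (via $u \logeq v \cdot t$ or $v \logeq u \cdot t$ together with $t \regconst \Sigma^+$). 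Since $\psi$ uses only word equations, regular constraints, conjunction, disjunction, and existential quantification, it lies in $\epfcreg$ and has size polynomial in $|\varphi|$. As $\epfcreg$-satisfiability is in $\pspace$ (the upper bound invoked in the proof of \cref{theorem:SAT}), the $k$-ambiguity problem is in $\mathsf{coPSPACE} = \pspace$.

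For the lower bound, I reduce from satisfiability of $\cpfcreg$-formulas whose satisfying assignments use non-empty input words; this restricted variant is still $\pspace$-hard, since the standard $\pspace$-hardness proof for $\cpfcreg$-satisfiability in \cref{theorem:SAT} (from intersection non-emptiness of regular expressions) can be trivially augmented with $(\strucvar \regconst \Sigma^+)$. Given such a Boolean $\varphi$, assume it is in structured normal form (\cref{lemma:StrucNormalForm}) with body $\bigwedge_i (\strucvar \logeq \alpha_i) \land \bigwedge_j (y_j \regconst \gamma_j)$, and construct
\[ \varphi'(x, p) \df \cqhead{x, p} (\strucvar \logeq p \cdot x \cdot s) \land \widehat{\varphi}(x), \]
where $\widehat{\varphi}(x)$ is obtained from the body of $\varphi$ by syntactically replacing $\strucvar$ with the fresh free variable $x$ and renaming each auxiliary variable to a private fresh copy. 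If $\varphi$ is unsatisfiable on $\Sigma^+$, then $\fun{\varphi'}(w) = \emptyset$ for every $w$, so $\varphi'$ is $0$-ambiguous, hence $k$-ambiguous. If $\varphi$ admits some $v \in \Sigma^+$ as a model, take $w \df v^{k+1}$: for each $\ell \in \{0, 1, \dots, k\}$ the assignment $x = v$, $p = v^\ell$, $s = v^{k-\ell}$ (with auxiliaries chosen to witness $v \in \lang(\varphi)$) lies in $\fun{\varphi'}(w)$, and since $v \neq \emptyword$ the $k+1$ resulting $(x, p)$-assignments are pairwise distinct; hence $\varphi'$ is not $k$-ambiguous. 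Thus $\varphi'$ is not $k$-ambiguous iff $\varphi$ is satisfiable with a non-empty model, giving $\pspace$-hardness.

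The main obstacle is the relativization step $\widehat{\varphi}(x)$ in the lower-bound construction: its semantic correctness relies on the structured normal form, because then every auxiliary variable occurs inside some $\alpha_i$ and a satisfying substitution is forced to map it to a factor of $x$, which is a factor of $\strucvar$, so the $\strucvar$-safety condition of $\cpfcreg$ is automatically preserved and $\widehat{\varphi}(x)$ really expresses ``$x \in \lang(\varphi)$''. The $\epfcreg$ encoding of $x \neq y$ in the upper bound is the only other place requiring care, but once spelled out it is standard and of polynomial size.
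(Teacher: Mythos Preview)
Your upper bound is essentially the paper's: build $k{+}1$ variable-disjoint copies of the body sharing only $\strucvar$, demand pairwise disagreement on free variables, and invoke $\pspace$ decidability of $\epfcreg$ satisfiability. The paper cites \cite{fre:splog} for the $\epfcreg$ encoding of inequality rather than spelling it out, but the construction is the same.

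Your lower bound takes a different route. The paper reduces directly from regular-expression intersection non-emptiness, using
\[
\varphi \df \cqhead{x_1}\,(x_1 \logeq x_1)\ \land\ \bigwedge_{i=1}^{n}(x_2 \regconst \gamma_i),
\]
with $x_1$ free and $x_2$ existentially quantified: if the intersection is empty the answer set is always empty, and if some $u$ lies in the intersection then for any $w$ containing $u$ as a factor, $x_1$ ranges over \emph{all} factors of $w$, so $|\fun{\varphi}(w)|$ grows quadratically and is not bounded by any fixed $k$. This is shorter than your relativization, though both are valid. Your construction has the minor advantage of producing exactly $k{+}1$ witnesses rather than ``many'', which makes the dependence on $k$ explicit.

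One caution about your final paragraph: you claim that in structured normal form ``every auxiliary variable occurs inside some $\alpha_i$'', but \cref{defn:SNF} imposes no such requirement---a variable may appear only in a regular constraint, in which case your relativization would constrain it to be a factor of the outer $\strucvar$ rather than of $x$. This does not break your proof, because the hard instances you actually reduce from (those of \cref{theorem:SAT}, namely $\bigwedge_i(\strucvar \regconst \gamma_i)$) have no auxiliary variables at all, so the relativization is trivially sound; but the general justification you give for $\widehat{\varphi}(x)$ overreaches.
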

\begin{proof}
Let $\varphi \in \cpfcreg$ and $k \in \mathbb{N}$ be inputs.
From $\varphi$, we construct $\varphi_i \in \cpfcreg$ for every $i \in [k+1]$, where $\varphi_i$ is constructed by taking $\varphi$ and replacing each variable $x \in \var(\varphi)$ with $x_i \in \Xi \setminus \var(\varphi)$, a variable that is unique to $x \in \var(\varphi)$ and $i \in [k+1]$.
Then, construct $\psi \in \epfcreg$ as follows:

\[ \psi \df \exists \vec x \colon \bigl( \varphi \land \bigwedge_{i=1}^k ( \varphi_i \land \biglor_{x \in \fvar(\varphi)} (x \neq x_i) ) \land \bigwedge_{i \in [k+1]}  \, \bigwedge_{j \in [k+1] \setminus \{ i \}} \, \biglor_{x \in \fvar(\varphi)} (x_i \neq x_j) \bigr), \]

where $\vec x$ is the vector of quantified variables from $\varphi$, and the corresponding variables for $\varphi_i$ for all $i \in [k+1]$.
We use inequality in $\psi$ as syntactic sugar, as inequality can be expressed in $\epfcreg$, see~\cite{fre:splog}.
If $\psi$ is satisfiable, then we have the pattern substitutions $\subs, \subs_1, \subs_2, \dots, \subs_k$ such that $\subs \models \varphi$ and $\subs_i\models \varphi_i$ for each $i \in [k]$. Furthermore, for some $x \in \var(\varphi)$, we have that $\subs(x) \neq \subs_i(x_i)$ for all $i \in [k+1]$, and for each $i,j \in [k+1]$ where $i \neq j$, there is some $x_i, x_j \in \var(\psi)$ such that $\subs_i(x_i) \neq \subs_j(x_j)$.
Since $\varphi$ and $\varphi_i$ are equivalent up to renaming for all $i \in [k+1]$, we have that $|\fun{\varphi}(w)| > k$ for some $w \in \Sigma^*$ and thus $\varphi$ is not $k$-ambiguous.
Deciding the satisfiability of $\psi$ is clearly in $\pspace$ with respect to $|\psi|$ since $\epfcreg$ satisfiability is in $\pspace$~\cite{fre:doc, fre:splog}.
Furthermore, $|\psi|$ is polynomial in the size of $\varphi$, since we consider $k$ ``copies'' of $\varphi$, and we consider every pair of corresponding variables from these copies, and therefore it is clear that $\psi$ can be constructed in polynomial time. 

We now consider the lower bounds by reducing from the \emph{intersection problem for regular expressions} to $\cpfcreg$ $1$-ambiguity. 
That is, given a set $S \df \{ \gamma_1, \gamma_2, \dots, \gamma_n \}$ of regular expressions, deciding whether $\bigcap_{i=1}^n \lang(\gamma_i) = \emptyset$ is $\pspace$-complete~\cite{kozen1977lower}.
Now consider the following $\cpfcreg$:

\[ \varphi \df \cqhead{x_1} (x_1 \logeq x_1) \land \bigwedge_{i=1}^n (x_2 \regconst \gamma_i). \]

We have two possibilities:
\begin{enumerate}
\item There does not exist $u \in \Sigma^*$ such that $u \in \lang(\gamma_i)$ for all $i \in [k]$. Hence, $\varphi$ is not satisfiable. Thus, for any $w \in \Sigma^*$, we have that $|\fun{\varphi}(w)| =0$ and $\varphi$ is 1-ambiguous.
\item There exists $u \in \Sigma^*$ such that $u \in \lang(\gamma_i)$ for all $i \in [k]$. Hence, $\varphi$ is satisfiable for any $w \in \Sigma^*$ where $w = p \cdot u \cdot s$ for some $p,s \in \Sigma^*$. Thus, $|\fun{\varphi}(w)|$ is quadratic in the size of $w$ since we return every factor $w$ and hence, $\varphi$ is not 1-ambiguous.
\end{enumerate}
It therefore follows that if we can determine whether $\varphi$ is ambiguous, we can also determine whether there does not exist $u \in \Sigma^*$ such that $u$ is accepted by all regular expressions in the given set $S$.
\end{proof}

In this section, we have defined ambiguity for $\fcreg$.
We proved that it is $\pspace$-complete to deciding whether an $\cpfcreg$ is $k$-ambiguous.
Further research in the area of ambiguity for $\cpfcreg$ could have important insights into reducing the size of intermediate tables when enumerating results.
This could be useful for finding tractable fragments of $\cpfcreg$, and for heuristic query optimization.

\chapter{Splitting FC Atoms}\label{chp:split}
Syntactic restrictions on relational conjunctive queries have been incredibly fruitful for finding tractable fragments. 
Recall Yannakakis' algorithm (see~\cref{algo:yann} or~\cite{YannakakisAlgorithm}) which performs model checking for acyclic conjunctive queries in polynomial time. 
Further research on the complexity of acyclic conjunctive queries~\cite{gottlob2001complexity} and the enumeration of results for acyclic conjunctive queries~\cite{bagan2007acyclic} has shown the efficacy of this restriction. On the other hand, for $\cpfc$s and document spanners, such syntactic restrictions are yet to unlock tractable fragments. 
Model checking for regex $\cq$s is $\np$-complete for acyclic queries (see~\cref{thm:regcq}), and model checking for $\cpfc$ is $\np$-complete for weakly acyclic queries (see~\cref{npcomplete-modelcheck}).

The relation defined by a word equation can have an exponential number of tuples.
It is therefore intractable to simply materialize the relation for each atom in an $\cpfc$ and apply usual algorithms for acyclic $\cq$s.
This is due to the unbounded arity of word equations. 
However, an $\fc$ atom can be considered shorthand for a concatenation term. 
For example, the word equation $y \logeq x_1 x_2 x_3 x_4$ can be represented as $y \logeq f(f(x_1, x_2), f(x_3, x_4))$ where $f$ denotes binary concatenation. 
This lends itself to the ``decomposition'' of the word equation into a $\cq$ consisting of smaller word equations.
We can express the above word equation as~$(y \logeq z_1 \cdot z_2) \land (z_1 \logeq x_1 \cdot x_2) \land (z_2 \logeq x_3 \cdot x_4)$. 
For such a \emph{decomposition}, the relations defined by each word equation can be stored in linear space and we can enumerate them with constant delay.
Thus, if the resulting query is acyclic, then the tractability properties of acyclic conjunctive queries directly translate to the~$\cpfc$ by materializing the relations and then applying the usual acyclic $\cq$ algorithms.
 
In this chapter, we define the decomposition of an $\cpfc$ into a $\conclog$, where $\conclog$ denotes the set of $\cpfc$s where the right-hand side of each word equation is of at most length two. 
Our first main result is a polynomial-time algorithm that decides whether a pattern can be decomposed into an acyclic $\conclog$~(\cref{sec:decomp}). 
Building on this result, we give a polynomial-time algorithm that decides whether an $\cpfc$ can be decomposed into an acyclic $\conclog$~(\cref{sec:acycCQFC}). 
While both of these are decision problems, each algorithm constructs an acyclic $\conclog$ in polynomial time, if one exists. 

As soon as we have an acyclic $\conclog$, the upper bound results for model checking and enumeration of results follow from previous work on relational acyclic $\cq$s~\cite{gottlob2001complexity, bagan2007acyclic}.
This chapter mainly focuses on $\cpfc$s (\ie, no regular constraints) due to the fact that we can add regular constraints for ``free''.
This is because regular constraints are unary predicates, and therefore can be easily incorporated into a join tree.
The work in this chapter therefore defines a class of $\sercq$s for which model checking can be solved in polynomial time, and results can be enumerated with polynomial delay (both in terms of combined complexity).

To conclude this chapter, we generalize the process of decomposing patterns into $\conclog$s to decomposing patterns into acyclic $\cpfcreg$s with a right-hand side of at most length $k$, which we denote with $\kconcreg{k}$.
We give sufficient criteria for patterns to be decomposed into acyclic $\kconcreg{k}$s and show that decomposing a pattern into an acyclic $\kconcreg{k}$ using this criteria can be done in polynomial time.
This gives a parametrized class of patterns for which the membership problem is tractable.
\section{Acyclic Pattern Decomposition}\label{sec:decomp}
This section examines decomposing terminal-free patterns (\ie, $\alpha\in\Xi^+$) into acyclic $\conclog$s, where \index{FCCQ@$\cpfc$!$\conclog$}$\conclog$ denotes the set of $\cpfc$s where each word equation has a right-hand side of at most length two. 
Patterns are the basis for $\cpfc$ atoms, and hence, this section gives us a foundation on which to investigate the decomposition of $\cpfc$s. 
We do not consider regular constraints, or patterns with terminals.
This is because regular constraints are unary predicates, and therefore can be easily added to a join tree; and terminals can be expressed through regular constraints.

We use $\conclog$s for two reasons. 
Firstly, binary concatenation is the most elementary form of concatenation, as it cannot be decomposed into further (non-trivial) concatenations. 
Secondly, this ensures that we can store the tables in linear space and enumerate them with constant delay -- as we conclude from the following proposition, which was written by~Freydenberger and was published in joint work with the author~\cite{freydenberger2021splitting}. 
The proof has been omitted from this thesis, as it is not the author's work, and is not necessary for understanding its consequences.

\begin{restatable}[Freydenberger, Thompson~\cite{freydenberger2021splitting}]{proposition}{Datastructure}\label{lemma:datastructure}
Given $w \in \Sigma^*$, we can construct a data structure in linear time that, for $x,y,z \in \Xi$, allows for the enumeration of $\fun{x \logeq y \cdot z}\strucbra{w}$ with constant delay, and to decide if $\sigma\in\fun{x \logeq y \cdot z}\strucbra{w}$ in constant~time.
Note that $x$, $y$, and $z$ are not necessarily distinct.
\end{restatable}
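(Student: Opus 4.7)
The plan is to leverage the standard linear-time string-indexing machinery for $w$ --- the suffix tree together with an LCA augmentation, or equivalently a suffix array with range-minimum queries on the LCP array --- that already underlies the constant-time factor-equality queries used elsewhere in the thesis, and to augment it with just enough bookkeeping to cope with the extra structure of a binary word equation.

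First I would, in linear time, construct the suffix tree of $w$ equipped with the constant-time LCA / LCP structure of~\cite{karkkainen2006linear}. This yields an $O(1)$ equality test $w_s = w_{s'}$ for any two spans $s, s'$. I would also precompute, for every span $\spn{i,j}$, the explicit or implicit suffix-tree locus of the factor $w_{\spn{i,j}}$; this locus serves as a canonical identifier for the factor, so that two spans representing the same factor are detected in $O(1)$ and substitutions (rather than mere span-triples) can be enumerated without repetition.

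Membership is then immediate: given a candidate substitution encoded by spans $s_x, s_y, s_z$, it suffices to verify $|s_x| = |s_y| + |s_z|$ and to compare, by two constant-time equality queries, $w_{s_y}$ with the length-$|s_y|$ prefix of $w_{s_x}$ and $w_{s_z}$ with the corresponding suffix. For enumeration I would branch on which of $x$, $y$, $z$ are identified. The cases $x = y = z$, $x = y \neq z$, and $x = z \neq y$ each force one of the factors to be $\emptyword$ and reduce to enumerating the distinct factors of $w$ via a depth-first walk of the suffix tree with constant delay. The case $y = z \neq x$ asks for factors $u$ whose square $uu$ is also a factor of $w$; a linear-time precomputation using standard machinery for repetitions lets us emit these in constant delay as well.

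The main case, with $x$, $y$, $z$ pairwise distinct, is technically hardest. Here I would iterate over pairs $(u_y, u_z)$ of canonical factors of $w$ and, for each pair, test in $O(1)$ whether $u_y \cdot u_z$ is again a factor; if so, emit $y \mapsto u_y$, $z \mapsto u_z$, $x \mapsto u_y u_z$, using canonical identifiers to avoid duplicates. The crux is implementing this extension test: given $u_y$ and $u_z$ by their loci, decide in constant time whether some occurrence of $u_y$ in $w$ is immediately followed by $u_z$. Plain LCP queries do not suffice, and the expected solution is to augment the suffix tree with level-ancestor or weighted-ancestor queries so that, from an occurrence $i$ of $u_y$, one can locate the locus of $w_{\spn{i, i+|u_y|+|u_z|}}$ in $O(1)$ and compare it with the locus of $u_y u_z$ predicted from that of $u_z$. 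Making this extension query truly constant-time with only linear preprocessing is the data-structurally delicate step on which the ``constant delay'' claim ultimately rests.
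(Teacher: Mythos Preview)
The thesis explicitly omits the proof of this proposition, attributing it to Freydenberger and citing the conference version~\cite{freydenberger2021splitting}; there is therefore no in-paper argument to compare against. Your membership test and the degenerate cases are fine, but the main enumeration case (with $x,y,z$ pairwise distinct) has a real gap. You propose to iterate over all pairs $(u_y,u_z)$ of canonical factors and filter by an $O(1)$ extension test. Even if that test is constant-time, the delay is not: there are up to $\Theta(|w|^4)$ pairs of canonical factors, but only $O(|w|^3)$ of them satisfy $u_y u_z \sqsubseteq w$. On a word with pairwise distinct letters every factor has a unique occurrence, so the valid pairs are in bijection with triples $i \le k \le j$ --- order $|w|^3$ of them --- while the total number of pairs is order $|w|^4$; the average gap between successes is $\Omega(|w|)$. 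The difficulty you flag (constant-time extension queries) is therefore not the crux; a generate-and-test scheme over $(u_y,u_z)$ cannot meet the delay bound regardless of how fast the test is.

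The natural repair is to reverse the enumeration: iterate over distinct factors $u_x$ (a DFS of the suffix tree visits each exactly once with constant delay) and, for each, over its $|u_x|+1$ split points. Every valid substitution $(u_x,u_y,u_z)$ with $u_x = u_y u_z$ arises exactly once this way and no extension test is needed at all. A smaller point: ``precompute, for every span $\spn{i,j}$, the locus'' is $\Theta(|w|^2)$ work, not linear; the standard remedy is constant-time weighted-ancestor queries after linear preprocessing, which you allude to later but should use in place of the blanket precomputation.
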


Although the cardinality of $\fun{x \logeq y \cdot z}\strucbra{w}$ is cubic in $|w|$, \cref{lemma:datastructure} allows us to represent this relation in linear space. 
As we can query such relations in constant time, they behave ``nicer'' than relations in relational algebra.
Furthermore, after materializing the relations defined by each atom of an $\conclog$,~\cref{lemma:datastructure} allows us to treat the $\conclog$ as a relational conjunctive query.

We now introduce a way to \emph{decompose} a pattern into a conjunction of word equations where the right-hand side of each atom is at most length two. 
We start by looking at a canonical way to decompose terminal-free patterns. 
To decompose $\alpha \in\Xi^+$, we first factorize is so that it can be written using only binary concatenation.
\begin{definition}
We define the set of all \emph{well-bracketed patterns}, denoted as \index{bpat@$\brac$}$\brac$, recursively as follows:
\begin{itemize}
\item $x \in \brac$ for all $x \in \Xi$, and 
\item if $\tilde\alpha, \tilde\beta \in \brac$, then $(\tilde\alpha \cdot \tilde\beta) \in \brac$.
\end{itemize}
\end{definition}
For convenience, we tend to use $\tilde\alpha$ to denote a bracketing of the pattern $\alpha \in \Xi^+$.
We extend the notion of a factor to a \emph{sub-bracketing}. 
We write $\tilde{\alpha} \sqsubseteq \tilde{\beta}$ if $\tilde{\alpha}$ is a factor of $\tilde{\beta}$ and $\tilde\alpha, \tilde\beta \in \brac$. 
Let $\alpha \in \Xi^+$, by $\brac(\alpha)$ we denote the set of all bracketings which correspond to the pattern $\alpha$. 

\begin{definition}\index{decomposition!patterns}
\label{defn:conclogConversion}
Every $\tilde{\alpha} \in \brac(\alpha)$ can be converted into $\decomp_{\tilde\alpha} \in \conclog$ using the following: 
While there exists $\tilde\beta \sqsubseteq \tilde\alpha$ where $\tilde\beta = (x \cdot y)$ for some $x,y \in \Xi$, we replace every occurrence of $\tilde\beta$ in $\tilde\alpha$ with a new and  unique variable~$z \in \Xi \setminus \var(\alpha)$ and add the word equation $(z \logeq x \cdot y)$ to $\decomp_{\tilde\alpha}$ via conjunction. 
If $\tilde\alpha = \tilde\beta$, then $z = \strucvar$.
\end{definition}

For the purposes of this section, the choice of free variables for $\decomp_{\tilde\alpha} \in \conclog$ is not important.
Therefore, we assume a decomposition to be a Boolean query.

Therefore, up to renaming of variables and reordering of atoms, every $\tilde\alpha \in \brac$ has a corresponding formula $\decomp_{\tilde\alpha} \in \conclog$. 
We call $\decomp_{\tilde\alpha}$ the \emph{decomposition} of $\tilde\alpha$. 
The decomposition can be thought of as a logic formula expressing a so-called \emph{straight-line program} of the pattern (see~\cite{lohrey2012algorithmics} for a survey on algorithms for SLPs). 
We now give an example of \emph{decomposing} a bracketing.

\begin{example}
\label{example:decomp}
Let $\alpha \df x_1 x_2 x_1 x_1 x_2$ and let $\tilde{\alpha} \in \brac(\alpha)$ be defined as follows: 
\[ \tilde{\alpha} \df  (((x_1 \cdot x_2)\cdot x_1) \cdot (x_1 \cdot x_2)).\] 
We now list $\tilde\alpha$ after each sub-bracketing is replaced with a variable. We also give the corresponding word equation that is added to $\decomp_{\tilde\alpha}$.
\begin{align*}
& (( \underline{(x_1 \cdot x_2)} \cdot x_1) \cdot \underline{(x_1 \cdot x_2)}) && z_1 \logeq x_1 \cdot x_2 \\
& ( \underline{(z_1 \cdot x_1)} \cdot   z_1) && z_2 \logeq z_1 \cdot x_1 \\
& \underline{(z_2 \cdot z_1)} && \strucvar \logeq z_2 \cdot z_1 
\end{align*}
Therefore, we get the decomposition $\decomp_{\tilde\alpha} \in \conclog$, which is defined as 
\[ \decomp_{\tilde\alpha} \df \cqhead{} (z_1 \logeq x_1 \cdot  x_2) \land (z_2 \logeq z_1 \cdot x_1) \land ( \strucvar \logeq z_2 \cdot z_1). \]
Notice that every sub-bracketing of $\tilde\alpha$ has a corresponding word equation in $\decomp_{\tilde\alpha}$. 
\end{example}

The decomposition of $\tilde\alpha$ is somewhat similar to so-called \emph{Tseytin transformations}, see \cite{prestwich2009cnf}, which transform propositional logic formulas into a formula in \emph{Tseytin normal form}. 

Our next focus is to study which patterns can be decomposed into an \emph{acyclic} $\conclog$. 
But first note that we use variants of $\chi$ to denote atoms of a $\conclog$ to distinguish them from word equations with arbitrarily long right-hand sides (which we denote with variants of $\eta$). 

We define a join tree for $\conclog$s the same way as it was defined for relational $\cq$s (see~\cref{defn:CQjointree}).
Likewise, the GYO algorithm (recall~\cref{defn:gyo}) for deciding whether a $\cq$ is acyclic can be directly translated to $\conclog$s.
The only difference being that the atoms of $\conclog$s are word equations of the form $x \logeq y \cdot z$ for $x,y,z \in \Xi$. 
We consider $\strucvar$ to be a constant symbol (not a variable) since $\subs(\strucvar)$ is always substituted with our input word.
Thus, occurrences of $\strucvar$ do not need to adhere to the conditions required of other variables in a join tree.

If there is a join tree for $\decomp \in \conclog$, then $\decomp$ is \emph{acyclic}, otherwise $\decomp$ is~\emph{cyclic}.

\begin{definition}[Acyclic Patterns]
If \index{acyclic pattern}$\decomp_{\tilde\alpha} \in \conclog$ is a decomposition of the bracketed pattern $\tilde{\alpha} \in \brac$ and $\decomp_{\tilde\alpha}$ is acyclic, then we call $\tilde{\alpha}$ \emph{acyclic}.  
Otherwise,~$\tilde{\alpha}$ is \emph{cyclic}. 
If there exists $\tilde{\alpha} \in \brac(\alpha)$ which is acyclic, then we say that $\alpha$ is \emph{acyclic}. 
Otherwise, $\alpha$ is \emph{cyclic}.
\end{definition}

The first question one may ask is whether there is an acyclic decomposition for every pattern. This is not the case. 
The proof of~\cref{cycPat} is not particularly enlightening. 
Therefore, if there reader wishes, they can skip to~\cref{sec:CharAcycDecomp}.

\begin{restatable}[]{proposition}{cycPat}
\label{cycPat}
$x_1 x_2 x_1 x_3 x_1$ is a cyclic pattern, and $x_1 x_2 x_3 x_1$ is an acyclic pattern that has a cyclic bracketing.
\end{restatable}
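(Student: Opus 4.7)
The plan is to prove both halves of the proposition by explicit case analysis on bracketings, applying \cref{defn:conclogConversion} to obtain a decomposition and then running GYO (\cref{defn:gyo}) to decide acyclicity. Throughout I keep in mind that $\strucvar$ is a constant, so it is ignored when comparing atom variable-sets.

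For the easier half ($x_1 x_2 x_3 x_1$ is acyclic but admits a cyclic bracketing), I exhibit two witnesses. As the acyclic bracketing take $\tilde\alpha = (x_1 \cdot ((x_2 \cdot x_3) \cdot x_1))$, which decomposes to $(z_1 \logeq x_2 \cdot x_3) \land (z_2 \logeq z_1 \cdot x_1) \land (\strucvar \logeq x_1 \cdot z_2)$; the atoms have variable-sets $\{z_1, x_2, x_3\}$, $\{z_2, z_1, x_1\}$, $\{x_1, z_2\}$, and GYO first marks $x_2, x_3$ (each in a single atom), leaving the reduced sets $\{z_1\}$, $\{z_2, z_1, x_1\}$, $\{x_1, z_2\}$; the first and third are both contained in the middle one, producing a star-shaped join tree. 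For the cyclic bracketing take $\tilde\beta = ((x_1 \cdot x_2) \cdot (x_3 \cdot x_1))$; its decomposition has variable-sets $\{z_1, x_1, x_2\}$, $\{z_2, x_3, x_1\}$, $\{z_1, z_2\}$, and marking $x_2, x_3$ yields the triangle $\{z_1, x_1\}, \{z_2, x_1\}, \{z_1, z_2\}$ in which no atom is contained in another and every variable occurs in exactly two atoms, so GYO halts without a join tree.

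For the harder half ($x_1 x_2 x_1 x_3 x_1$ is cyclic) the plan is to show that \emph{every} bracketing $\tilde\alpha$ of $\alpha = x_1 x_2 x_1 x_3 x_1$ has a cyclic decomposition. The setup is uniform: $x_2$ and $x_3$ each occur exactly once in $\alpha$, so each lies in exactly one atom of $\decomp_{\tilde\alpha}$ (the atom for its parent in the bracketing tree), and GYO marks both in its first round. The surviving atoms then involve only $x_1$ (which occupies three distinct parent atoms, one per occurrence) together with the auxiliary $z$-variables introduced at the internal nodes. My claim is that the residual formula always contains the same kind of triangle obstruction $\{z_a, x_1\}, \{z_b, x_1\}, \{z_a, z_b\}$ exhibited in Part~A, so GYO halts.

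The cleanest way to make this uniform is a structural reduction: deleting the $x_2$ and $x_3$ leaves from the bracketing tree and contracting the resulting unary nodes (which corresponds, inside GYO, to absorbing the $2$-element copy atoms that appear after $x_2$ and $x_3$ are marked) yields a binary tree on the three $x_1$-occurrences with exactly two remaining internal nodes. Up to mirror symmetry the contracted tree has only the two topologies $((x_1 \cdot x_1) \cdot x_1)$ and $(x_1 \cdot (x_1 \cdot x_1))$; in both, the three $x_1$-occurrences occupy three distinct atoms that pairwise share exactly one of the two surviving $z$-variables, producing the triangle $\{z_a, x_1\}, \{z_b, x_1\}, \{z_a, z_b\}$. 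The main obstacle is justifying that the order of GYO reductions cannot avoid this triangle -- that is, that the copy-atom absorption really does commute with the remaining GYO steps. A safe fallback, if the structural argument proves delicate, is to enumerate the $C_4 = 14$ bracketings of a five-leaf pattern and verify each by a GYO computation in the style of Part~A; each case is routine and terminates in the same triangle obstruction.
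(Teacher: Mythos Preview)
Your treatment of $x_1 x_2 x_3 x_1$ is correct and matches the paper's approach (the paper uses $((x_1 \cdot (x_2 \cdot x_3)) \cdot x_1)$ as the acyclic witness, but your choice $(x_1 \cdot ((x_2 \cdot x_3) \cdot x_1))$ works equally well; the cyclic witness is identical).

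Your fallback for $x_1 x_2 x_1 x_3 x_1$ --- enumerate the $C_4 = 14$ bracketings and run GYO on each --- is precisely what the paper does, and that is the part you should commit to.

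The structural-reduction shortcut, however, does not work as stated. After contracting to a three-leaf binary tree there are only \emph{two} internal nodes, hence only two atoms in the decomposition, and since the root carries the constant $\strucvar$, only \emph{one} auxiliary $z$-variable survives. Concretely, $(x_1 \cdot (x_1 \cdot x_1))$ decomposes into $(z \logeq x_1 \cdot x_1) \land (\strucvar \logeq x_1 \cdot z)$ with atom variable-sets $\{z, x_1\}$ and $\{x_1, z\}$; GYO finishes immediately, so the contracted query is \emph{acyclic}. Your sentence ``the three $x_1$-occurrences occupy three distinct atoms'' is therefore false for the contracted tree, and the promised triangle does not arise there. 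The implication you would need is ``contracted cyclic $\Rightarrow$ original cyclic'', but since the contracted query is acyclic, nothing follows about the original. (Also, even in the original decompositions the GYO residue is not always your triangle: for the bracketing $(((x_1 \cdot x_2) \cdot (x_1 \cdot x_3)) \cdot x_1)$ GYO halts on the four atoms $\{z_1,x_1\},\{z_2,x_1\},\{z_3,z_1,z_2\},\{z_3,x_1\}$ with no further containment.)

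The intuition you are reaching for --- that the three $x_1$-parents in the \emph{original} concatenation tree can never be arranged along a path --- is exactly what the paper formalises later as $x$-localisation in \cref{lemma:cycledistance}, but that lemma is proved after \cref{cycPat}; the present proposition is established purely by the brute-force enumeration.
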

\begin{proof}
We prove this Proposition in two parts.

\subparagraph*{Part 1. There exists a cyclic pattern:}
Let $\alpha \df x_1 x_2 x_1 x_3 x_1$. We prove that $\alpha$ is cyclic by enumerating every possible bracketing $\tilde\alpha \in \brac(\alpha)$, and then show that the decomposition of each bracketing is cyclic.
\begin{itemize}
\item $\tilde\alpha_1 \df ((x_1 \cdot (x_2 \cdot (x_1 \cdot (x_3 \cdot x_1)))))$ which decomposes into 
\begin{align*}
\decomp_{\tilde\alpha_1} \df & \cqhead{} (z_1 \logeq x_3 \cdot x_1) \land (z_2 \logeq x_1 \cdot z_1) \land (z_3 \logeq x_2 \cdot z_2) \land (\strucvar \logeq x_1 \cdot z_3), 
\end{align*}
\item $\tilde\alpha_2 \df (x_1 \cdot ( x_2 \cdot ((x_1 \cdot x_3) \cdot x_1)))$ which decomposes into 
\begin{align*} 
\decomp_{\tilde\alpha_2} \df & \cqhead{} (z_1 \logeq x_1 \cdot x_3) \land (z_2 \logeq z_1 \cdot x_1) \land (z_3 \logeq x_2 \cdot z_2) \land (\strucvar \logeq x_1 \cdot z_3), 
\end{align*}
\item $\tilde\alpha_3 \df ((x_1 \cdot x_2) \cdot (x_1 \cdot (x_3 \cdot x_1)))$ which decomposes into 
\begin{align*}
\decomp_{\tilde\alpha_3} \df & \cqhead{}(z_1 \logeq x_1 \cdot x_2) \land (z_2 \logeq x_3 \cdot x_1) \land (z_3 \logeq x_1 \cdot z_2) \land (\strucvar \logeq z_1 \cdot z_3), 
\end{align*}
\item $\tilde\alpha_4 \df (x_1 \cdot ((x_2 \cdot x_1) \cdot (x_3 \cdot x_1)))$ which decomposes into 
\begin{align*}
\decomp_{\tilde\alpha_4} \df  &\cqhead{} (z_1 \logeq x_3 \cdot x_1) \land (z_2 \logeq x_2 \cdot x_1) \land (z_3 \logeq z_1 \cdot z_2) \land (\strucvar \logeq x_1 \cdot z_3), 
\end{align*}
\item $\tilde\alpha_5 \df (x_1 \cdot ((x_2 \cdot (x_1 \cdot x_3)) \cdot x_1 ))$ which decomposes into 
\begin{align*}
\decomp_{\tilde\alpha_5} \df  &\cqhead{} (z_1 \logeq x_1 \cdot x_3) \land (z_2 \logeq x_2 \cdot z_1) \land (z_3 \logeq z_2 \cdot x_1) \land (\strucvar \logeq x_1 \cdot z_3), 
\end{align*}
\item $\tilde\alpha_6 \df (x_1 \cdot (((x_2 \cdot x_1) \cdot x_3) \cdot x_1))$ which decomposes into 
\begin{align*} 
\decomp_{\tilde\alpha_6} \df  &\cqhead{} (z_1 \logeq x_2 \cdot x_1) \land (z_2 \logeq z_1 \cdot x_3) \land (z_3 \logeq z_2 \cdot x_1) \land (\strucvar \logeq x_1 \cdot z_3), 
\end{align*}
\item $\tilde\alpha_7 \df   ((x_1 \cdot x_2) \cdot (( x_1 \cdot x_3) \cdot x_1))$ which decomposes into 
\begin{align*} 
\decomp_{\tilde\alpha_7} \df & \cqhead{}(z_1 \logeq x_1 \cdot x_2) \land (z_2 \logeq x_1 \cdot x_3) \land (z_3 \logeq z_2 \cdot x_1) \land (\strucvar \logeq z_1 \cdot z_3), 
\end{align*}
\item $\tilde\alpha_8 \df (x_1 \cdot (x_2 \cdot x_1)) \cdot (x_3 \cdot x_1))$ which decomposes into 
\begin{align*} 
\decomp_{\tilde\alpha_8} \df & \cqhead{}(z_1 \logeq x_2 \cdot x_1) \land (z_2 \logeq x_3 \cdot x_1) \land (z_3 \logeq x_1 \cdot z_1) \land (\strucvar \logeq z_3 \cdot z_2), 
\end{align*}
\item $\tilde\alpha_9 \df (x_1 \cdot (x_2 \cdot (x_3 \cdot x_1))) \cdot x_1)$ which decomposes into 
\begin{align*} 
\decomp_{\tilde\alpha_9} \df & \cqhead{}(z_1 \logeq x_3 \cdot x_1) \land (z_2 \logeq x_2 \cdot z_1) \land (z_3 \logeq z_2 \cdot x_1) \land (\strucvar \logeq x_1 \cdot z_3), 
\end{align*}
\item $\tilde\alpha_{10} \df ((x_1 \cdot ((x_2 \cdot x_1) \cdot x_3)) \cdot x_1)$ which decomposes into 
\begin{align*} 
\decomp_{\tilde\alpha_{10}} \df & \cqhead{} (z_1 \logeq x_2 \cdot x_1) \land (z_2 \logeq z_1 \cdot x_3) \land (z_3 \logeq x_1 \cdot z_2) \land (\strucvar \logeq z_3\cdot x_1), 
\end{align*}
\item $\tilde\alpha_{11} \df (( (x_1 \cdot x_2) \cdot (x_1 \cdot x_3) ) \cdot x_1)$ which decomposes into 
\begin{align*} 
\decomp_{\tilde\alpha_{11}} \df & \cqhead{} (z_1 \logeq x_1 \cdot x_2) \land (z_2 \logeq x_1 \cdot x_3) \land (z_3 \logeq z_1 \cdot z_2) \land (\strucvar \logeq z_3 \cdot x_1), 
\end{align*}
\item $\tilde\alpha_{12} \df (((x_1 \cdot x_2) \cdot x_1) \cdot (x_3 \cdot x_1))$ which decomposes into
\begin{align*} 
\decomp_{\tilde\alpha_{12}} \df & \cqhead{} (z_1 \logeq x_1 \cdot x_2) \land (z_2 \logeq x_3 \cdot x_1) \land (z_3 \logeq z_1 \cdot x_1) \land (\strucvar \logeq z_3 \cdot z_2), 
\end{align*}
\item $\tilde\alpha_{13} \df (((x_1 \cdot (x_2 \cdot x_1)) \cdot x_3) \cdot x_1)$ which decomposes into 
\begin{align*} 
\decomp_{\tilde\alpha_{13}} \df & \cqhead{} (z_1 \logeq x_2 \cdot x_1) \land (z_2 \logeq x_1 \cdot z_1) \land (z_3 \logeq z_2 \cdot x_3) \land (\strucvar \logeq z_3 \cdot x_1), 
\end{align*}
\item $\tilde\alpha_{14} \df ((((x_1 \cdot x_2) \cdot x_1 ) \cdot x_3 ) \cdot x_1)$ which decomposes into
\begin{align*} 
\decomp_{\tilde\alpha_{14}} \df & \cqhead{} (z_1 \logeq x_1 \cdot x_2) \land (z_2 \logeq z_1 \cdot x_1) \land (z_3 \logeq z_2 \cdot x_3) \land (\strucvar \logeq z_3 \cdot x_1).
\end{align*}
\end{itemize}
For every $\tilde\alpha_i \in \brac(\alpha)$, we have that $\decomp_{\tilde\alpha_i}$ is cyclic.

\subparagraph*{Part 2: There exists an acyclic pattern which has a cyclic bracketing.}
Let $\alpha \df x_1 x_2 x_3 x_1$, and consider the bracketings
\begin{align*}
\tilde\alpha_1 & \df ( (x_1 \cdot(x_2 \cdot x_3))\cdot x_1), \text { and} \\
\tilde\alpha_2 & \df ( (x_1 \cdot x_2) \cdot (x_3 \cdot x_1) ).
\end{align*}
These decompose into
\begin{align*}
\decomp_{\tilde\alpha_1} & \df \cqhead{} (\strucvar \logeq z_2 \cdot x_1)  \land  (z_2 \logeq x_1 \cdot z_1)\land (z_1 \logeq x_2 \cdot x_3), \text{ and}\\
\decomp_{\tilde\alpha_2} & \df \cqhead{} (\strucvar \logeq z_1 \cdot z_2) \land (z_1 \logeq x_1 \cdot x_2) \land (z_2 \logeq x_3 \cdot x_1).
\end{align*}
Executing the GYO algorithm on $\decomp_{\tilde\alpha_1}$ shows it to be acyclic. Whereas executing the GYO algorithm on $\decomp_{\tilde\alpha_2}$ shows it to be cyclic. 
\end{proof}

\cref{cycPat} leads us to question which patterns are acyclic.

\subsection{Characterizing Acyclic Decompositions}\label{sec:CharAcycDecomp}
Given a pattern $\alpha \in \Xi^+$, we have that $|\brac(\alpha)| = C_{|\alpha|-1}$, where $C_i$ is the $i$th \emph{Catalan number}, see~\cite{van2001course}. 
The $i$th Catalan number can be expressed as follows
\[ C_i = \frac{ (2i) !}{(i+1) ! \, i  !} \, . \]

As the Catalan numbers grow exponentially, a brute-force approach to find an acyclic bracketing would be intractable. This leads to the following key question: \emph{Can we decide whether a pattern is acyclic in polynomial time?}

If $\decomp_{\tilde\alpha} \in \conclog$ is a decomposition of $\tilde{\alpha} \in \brac(\alpha)$, then we call the variable~$x \in \Xi$ which represents the whole pattern the \emph{root variable}. If $x$ is the root variable, then the atom $(x \logeq y \cdot z)$ for some $y,z \in \Xi$, is called the \emph{root atom}. So~far, the root variable has always been~$\strucvar$. 
In~\cref{sec:acycCQFC}, different root variables shall be considered when we generalize from decomposing patterns to $\cpfc$s.

Let $\decomp_{\tilde\alpha} \in \conclog$ be the decomposition of $\tilde{\alpha} \in \brac(\alpha)$ for some $\alpha \in \Xi^+$. We define the \emph{concatenation tree} of $\decomp_{\tilde\alpha}$ as a rooted, undirected, binary tree $\mathcal{T} \df (\mathcal{V}, \mathcal{E}, <, \Gamma, \labelFunction, v_r)$, where $\mathcal{V}$ is a set of nodes and $\mathcal{E}$ is a set of undirected edges. 
If $v$ and $v'$ have a shared parent node, then we use $v<v'$ to denote that $v$ is the left child and $v'$ is the right child of their shared parent. 
We also have $\Gamma \df \var(\decomp_{\tilde\alpha})$ and the function $\labelFunction \colon \mathcal{V} \rightarrow \Gamma$ that
\emph{labels} nodes from the concatenation tree with variables from $\var(\decomp_{\tilde\alpha})$. 
We use $v_r$ to denote the root of $\mathcal{T}$. 
The construction of the \emph{concatenation tree} for $\decomp_{\tilde\alpha} \in \conclog$ is as follows:

\begin{definition}
\label{defn:concatenationTree}
Let $\decomp_{\tilde\alpha} \df \cqhead{\vec{x}} \bigwedge_{i=1}^{n} (z_i \logeq x_i \cdot x_i')$ be the decomposition of $\tilde\alpha \in \brac$ with the root variable $z_n$. 
We define a \index{concatenation tree}concatenation tree for $\decomp_{\tilde\alpha}$ in two steps. 
First, we define a tree for $\decomp_{\tilde\alpha}$ recursively:
\begin{itemize}
\item Let $v_n$ be the root, labelled with $z_n$,
\item if $v \in \mathcal{V}$ is labelled with $z_i$ for some $i \in [n]$, then $v$ has a left and right child that are labelled with $x_i$ and $x_i'$ respectively.
\end{itemize}

Then, the tree defined above is pruned to remove redundancies.
For each set of \emph{non-leaf nodes} that share a label, we define an ordering $\ll$. 
If~$\tau(v_i) = \tau(v_j)$ and the distance from the root of $\mathcal{T}$ to $v_j$ is strictly less than the distance from the root to $v_i$, then $v_j \ll v_i$. If $\tau(v_i) = \tau(v_j)$ and the distance from $v_r$ to $v_i$ and $v_j$ is equal, then $v_j \ll v_i$ if and only if $v_j$ appears to the \emph{right} of $v_i$. 

For each set of non-leaf nodes that share a label, all nodes other than the $\ll$-maximum node are called \emph{redundant}. 
All descendants of redundant nodes are removed. 
\end{definition}

Concatenation trees for $\conclog$s can be understood as a variation of \emph{derivation trees} for straight-line programs~\cite{lohrey2012algorithmics}.
IE over SLP-represented documents has received some attention recently~\cite{schmid2021spanner, schmid2022query}.
This opens a research direction on query evaluation of $\cpfc$ decompositions on SLP-compressed documents.\footnote{The author was made aware of this future direction for research during a brief online discussion with Markus L. Schmid at ICDT 2022.}
However, this thesis does not consider this topic.

\index{atom@$\atom()$}
Due to the pruning procedure, every non-leaf node represents a unique sub-bracketing. 
For every non-leaf node $v$ with left child $v_l$ and right child $v_r$, we define~$\atom(v) \df (\labelFunction(v) \logeq \labelFunction(v_l) \cdot \labelFunction(v_r))$.
Note that for any two non-leaf nodes~$v,v' \in \mathcal{V}$ where $v \neq v'$, we have that $\atom(v) \neq \atom(v')$. 
We call $v \in \mathcal{V}$ an~$x$-parent if one of the child nodes of $v$ is labelled $x$. 
If $v$ is an $x$-parent, then~$\atom(v)$ must contain the variable $x$. \index{x-parent@$x$-parent}

\begin{definition}\index{xlocalized@$x$-localized}
Let $\decomp_{\tilde\alpha} \in \conclog$ be the decomposition of $\tilde{\alpha} \in \brac$ and let $\mathcal{T}$ be the concatenation tree for $\decomp_{\tilde\alpha}$. 
For a variable $x \in \var(\decomp_{\tilde\alpha})$, we say that $\decomp_{\tilde\alpha}$ is \emph{$x$-localized} if all nodes that 
lie on the path between any two $x$-parents in $\mathcal{T}$ are also $x$-parents.
\end{definition}

Since there is only one concatenation tree for a decomposition $\decomp_{\tilde\alpha} \in \conclog$ of $\tilde\alpha \in \brac$, we can say $\decomp_{\tilde\alpha}$ is $x$-localized without referring to the concatenation~tree.

\begin{example}
\label{example:concatTree}
Consider the following two bracketings of $\alpha  \df x_1 x_2 x_1 x_2$: 
\begin{align*}
\tilde\alpha_1 & \df ( (x_1 \cdot x_2) \cdot (x_1 \cdot x_2) ) \text{ and } \\
\tilde\alpha_2 & \df (((x_1 \cdot x_2) \cdot x_1) \cdot x_2),
\end{align*}
which are decomposed into 
\begin{align*}
\decomp_1 & \df \cqhead{} (z_1 \logeq x_1 \cdot x_2) \land (\strucvar \logeq z_1 \cdot z_1), \text{ and} \\
\decomp_2 & \df \cqhead{} (z_1 \logeq x_1 \cdot x_2) \land (z_2 \logeq z_1 \cdot x_1) \land (\strucvar \logeq z_2 \cdot x_2).
\end{align*}
 The concatenation trees for $\decomp_1$ and $\decomp_2$ are given in~\cref{fig:concatTree}. The label for each node is given in parentheses next to the corresponding node. We can see that $\atom(v_2) = (z_1 \logeq x_1 \cdot x_2)$. 
 
 It follows that $\decomp_2$ is $x_1$-localized, but $\decomp_2$ is not~$x_2$-localized. Observe that $v_3 \ll v_2$, since $v_2$ appears to the left of $v_3$. Therefore, $v_3$ does not have any descendants, since it is a redundant node.

\begin{figure}
\center
\begin{tikzpicture}[shorten >=1pt,->]
\tikzstyle{vertex}=[rectangle,fill=white!25,minimum size=12pt,inner sep=2pt]
\node[vertex] (1) at (0,0) {$v_1 \; (\strucvar)$};
\node[vertex] (2) at (1,-1)   {$v_3 \; (z_1)$};
\node[vertex] (3) at (-1,-1)  {$v_2 \; (z_1)$};
\node[vertex] (4) at (-2,-2) {$v_4 \; (x_1)$};
\node[vertex] (5) at (0,-2) {$v_5 \; (x_2)$};

\path [-](1) edge node[left] {} (2);
\path [-](1) edge node[left] {} (3);
\path [-](3) edge node[left] {} (4);
\path [-](3) edge node[left] {} (5);
\end{tikzpicture}\hspace{1cm}
\begin{tikzpicture}[shorten >=1pt,->]
\tikzstyle{vertex}=[rectangle,fill=white!25,minimum size=12pt,inner sep=2pt]
\node[vertex] (1) at (0,0) {$v_6 \; (\strucvar)$};
\node[vertex] (2) at (1,-1)   {$v_8 \; (x_2)$};
\node[vertex] (3) at (-1,-1)  {$v_7 \; (z_2)$};
\node[vertex] (4) at (-2,-2) {$v_9 \; (z_1)$};
\node[vertex] (5) at (0,-2) {$v_{10} \; (x_1)$};
\node[vertex] (6) at (-3,-3) {$v_{11} \; (x_1)$};
\node[vertex] (7) at (-1,-3) {$v_{12} \; (x_2)$};

\path [-](1) edge node[left] {} (2);
\path [-](1) edge node[left] {} (3);
\path [-](3) edge node[left] {} (4);
\path [-](3) edge node[left] {} (5);
\path [-](4) edge node[left] {} (6);
\path [-](4) edge node[left] {} (7);
\end{tikzpicture}
\caption{\label{fig:concatTree}Concatenation trees for the decompositions of $((x_1 \cdot x_2) \cdot (x_1 \cdot x_2))$ and $(((x_1 \cdot x_2) \cdot x_1) \cdot x_2)$. This figure is used to illustrate~\cref{example:concatTree}.}
\end{figure}
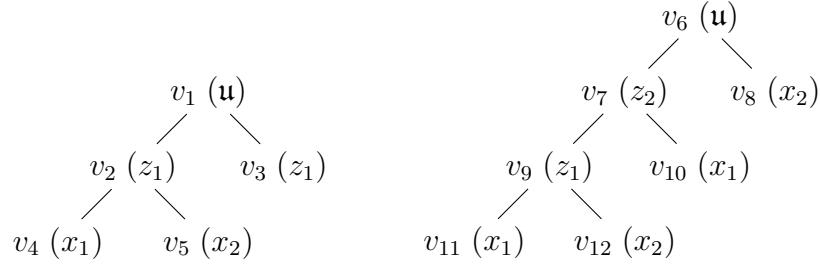
\end{example}

Let us now observe a straightforward graph theoretic lemma.
While this result is not of much interest by itself, it will help us streamline the proof of~\cref{lemma:cycledistance}.

\begin{lemma}
\label{lemma:TreePath}
If $T \df (V, E)$ is an undirected tree where $V \df [n]$ for some $n \in \mathbb{N}$, then for all $1 \leq i \leq j \leq n$, every node that lies on the path from $i$ to $j$ must lie on a path from $k$ to $k+1$ for some $k \in \{i, i+1, \dots, j-1\}$.
\end{lemma}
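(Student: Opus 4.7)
The plan is to exploit the fact that in a tree there is a unique path between any two vertices, and that any walk from one vertex to another must contain this unique path. Let $P_{a,b}$ denote the unique path in $T$ between vertices $a$ and $b$, and consider the concatenation
\[ W \df P_{i,i+1} \cdot P_{i+1,i+2} \cdots P_{j-1,j}, \]
which is a walk (not necessarily simple) from $i$ to $j$. The aim is to show that every vertex on $P_{i,j}$ appears somewhere in $W$; since $W$ is by construction the union of the paths $P_{k,k+1}$ for $k \in \{i,\dots,j-1\}$, this yields exactly the conclusion of the lemma.

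First I would establish the following simple observation: for every vertex $v$ on $P_{i,j}$, removing $v$ from $T$ yields a forest in which $i$ and $j$ lie in distinct connected components (if $v \in \{i,j\}$ the claim is trivial, so assume $v$ is an internal node of $P_{i,j}$). This holds because the restriction of $T$ to $V \setminus \{v\}$ is still acyclic, and the unique $i$-$j$ path in $T$ passes through $v$, so after deleting $v$ there can be no path from $i$ to $j$ (otherwise together with the two halves of $P_{i,j}$ we would obtain a cycle). Consequently, every walk in $T$ from $i$ to $j$ must visit $v$ at least once.

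Next I would apply this observation to $W$. Since $W$ is a walk from $i$ to $j$, every vertex on $P_{i,j}$ is visited by $W$, hence lies on at least one of the sub-paths $P_{k,k+1}$ for some $k \in \{i, i+1,\dots,j-1\}$. This is exactly the statement of the lemma.

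I do not expect any serious obstacle: the argument is a standard tree-connectivity observation together with the triangle-inequality-style concatenation of consecutive paths. The only minor care needed is to handle the degenerate cases $i = j$ (where $P_{i,j}$ is a single vertex and the statement is vacuous or trivial) and when $v$ coincides with an endpoint $i$ or $j$ (where $v$ trivially lies on $P_{i,i+1}$ or $P_{j-1,j}$).
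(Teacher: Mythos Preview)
Your proposal is correct and follows essentially the same approach as the paper: both form the union (or concatenation) of the consecutive paths $P_{k,k+1}$ for $k\in\{i,\dots,j-1\}$ and argue that this union must contain the unique $i$--$j$ path. The paper phrases this by observing that the union is a connected subtree containing $i$ and $j$, while you make the same point via the cut-vertex/separation argument; your version is a bit more explicit but the underlying idea is identical.
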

\begin{proof}
Let $T \df (V,E)$ be an undirected tree where $V \df [n]$. For any $k,k' \in [n]$, let $p_{k \rightarrow k'}$ be the path from $k$ to $k'$ in $T$. 
Then, we construct a subtree $T'$ of $T$, consisting of all the edges in the path $p_{k \rightarrow k+1}$ for all $i \leq k < j$, along with the necessary nodes.
Since $T'$ contains the path from $i$ and $j$, and there can only be one path between any two nodes in a tree, the stated lemma holds.
\end{proof}

\cref{lemma:TreePath} can clearly be generalized to trees with any vertex set, $V$, by considering some bijection from the vertices of the tree to $[n]$ where $n \df |V|$.

If $\eta \df (x \logeq y \cdot z)$ is an atom of the acyclic decomposition $\decomp_{\tilde\alpha} \in \conclog$, then $\eta$ can be replaced with $\eta' \df (x \logeq z \cdot y)$, and $\decomp_{\tilde\alpha}$ remains acyclic. 
Therefore, in the following proofs, when the right-hand side of an atom is ambiguous, we can assume one ordering of the right-hand sides without loss of generality.

Utilizing concatenation trees for the decomposition $\decomp_{\tilde\alpha}$ of $\tilde\alpha \in \brac(\alpha)$, and the notion of $\decomp_{\tilde\alpha}$ being $x$-localized for $x \in \var(\decomp_{\tilde\alpha})$, we are now able to state sufficient and necessary
conditions for $\tilde\alpha \in \brac$ to be acyclic.

\begin{restatable}[]{lemma}{cycledistance}
\label{lemma:cycledistance}
The decomposition $\decomp_{\tilde\alpha} \in \conclog$ of $\tilde{\alpha} \in \brac(\alpha)$ is acyclic if and only if $\decomp_{\tilde\alpha}$ is $x$-localized for all $x \in \var(\decomp_{\tilde\alpha})$.
\end{restatable}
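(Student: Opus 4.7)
I plan to prove the equivalence in two directions. The backward direction is essentially constructive: I take the non-leaf subtree $\mathcal{T}^*$ of $\mathcal{T}$ and show it is a join tree whenever $\decomp_{\tilde\alpha}$ is $x$-localised for every $x$. The forward direction I handle by contrapositive, showing that failure of $x$-localisation forces the hypergraph of $\decomp_{\tilde\alpha}$ to violate conformality, which will be the main obstacle.

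\textbf{Backward direction.} Take $\mathcal{T}^*$, the subgraph of $\mathcal{T}$ induced by non-leaf nodes; note that $\mathcal{T}^*$ is a tree (the parent of a non-leaf node is non-leaf), its nodes are in bijection with the atoms of $\decomp_{\tilde\alpha}$ via $v \mapsto \atom(v)$, and edges come from the parent-child relation of $\mathcal{T}$. For each variable $x \in \var(\decomp_{\tilde\alpha})$, the atoms that contain $x$ correspond exactly to the node set $N_x \df \{v_x\} \cup P_x$, where $v_x$ is the unique non-leaf node with $\tau(v_x) = x$ if it exists, and $P_x$ is the set of $x$-parents. I would verify that $N_x$ induces a connected subgraph of $\mathcal{T}^*$: $x$-localisation gives connectedness of $P_x$; moreover, if $v_x$ exists and is not the root, its parent lies in $P_x$, so $v_x$ is adjacent in $\mathcal{T}^*$ to $P_x$; and if $v_x$ is the root, then (by the construction of the decomposition) $\tau(v_x) = \strucvar$ never appears on the right-hand side of any atom, so $P_x = \emptyset$ and connectedness is trivial. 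Hence $\mathcal{T}^*$ is a join tree, so $\decomp_{\tilde\alpha}$ is acyclic.

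\textbf{Forward direction.} I argue the contrapositive. Suppose $\decomp_{\tilde\alpha}$ is not $x$-localised, so there exist $v_1, v_2 \in P_x$ with some node $w$ strictly between them on the path in $\mathcal{T}^*$ that is not an $x$-parent. The first step is a structural observation that lets us assume $\atom(w)$ does not contain $x$ at all: $v_x$ cannot sit strictly between two $x$-parents, because every non-leaf descendant of $v_x$ corresponds to a proper sub-bracketing of the sub-bracketing represented by $v_x$, so it cannot carry $x$ on its right-hand side and hence cannot be an $x$-parent; so any node on a path between $x$-parents in $\mathcal{T}^*$ that is not itself in $P_x$ is also not $v_x$, and its atom is $x$-free.

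\textbf{The hard part.} Given such a violating triple, I plan to exhibit a triangle in the primal graph of $\decomp_{\tilde\alpha}$ that is not contained in any single atom, violating conformality and hence acyclicity (via the Beeri--Fagin--Maier--Yannakakis characterisation: $\alpha$-acyclic iff conformal and chordal). Choosing a minimal-length violation, I would reduce to a length-two path $v_1, w, v_2$ in $\mathcal{T}^*$ and split into cases on the parent-child relations between $w$ and $\{v_1, v_2\}$: either both $v_1, v_2$ are children of $w$, or (up to symmetry) $w$ is a child of $v_1$ and a parent of $v_2$. In the first case, $\atom(w)$ witnesses the edge $\tau(v_1) - \tau(v_2)$, $\atom(v_1)$ witnesses $x - \tau(v_1)$, and $\atom(v_2)$ witnesses $x - \tau(v_2)$, giving the triangle $\{x, \tau(v_1), \tau(v_2)\}$; one then checks that none of the three atoms contains all three vertices. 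The second case is analogous with triangle $\{x, \tau(w), \tau(v_2)\}$. Longer violating paths are handled either by re-minimising or by promoting the argument to a chordless cycle of length at least four, violating chordality. The delicate bookkeeping here — in particular ruling out that the triangle is accidentally contained in some third atom via shared original-variable labels — is where the argument will require the most care, and is the main obstacle I anticipate.
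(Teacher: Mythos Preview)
Your backward direction is correct and matches the paper's: both take the tree of non-leaf nodes of $\mathcal{T}$ and observe that $x$-localisation makes the set of atoms containing each variable connected.

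Your forward direction is a genuinely different route from the paper's. The paper assumes a join tree exists, takes the atoms $\atom(v_0),\ldots,\atom(v_n)$ along a violating concatenation-tree path, and derives a contradiction by an intricate case analysis on the label of the join-tree node closest to $\atom(v_1)$ on the path between $\atom(v_0)$ and $\atom(v_n)$ (splitting on ancestor/descendant relationships and, in the hardest sub-case, on the label of a second auxiliary node). You instead invoke the Beeri--Fagin--Maier--Yannakakis characterisation and look for a non-conformal triangle or a chordless long cycle in the primal graph.

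For the length-two case your triangle argument is sound, and the ``delicate bookkeeping'' you flag can indeed be carried out: the pruning rule forces the $\ll$-maximum node with a given label to be the deepest one, so if (say) $v_1$'s second child were labelled $\tau(v_2)$ then $v_2$ would be redundant, contradicting that $v_2$ is an $x$-parent; similar depth comparisons exclude $\atom(v_x)$ and all other atoms.

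The gap is your handling of minimal violations of length $n\geq 3$. These genuinely occur: for $((((x_1\cdot x_2)\cdot x_3)\cdot x_4)\cdot x_1)$ the only $x_1$-parents are the root and the $z_1$-node, at distance $3$. Your proposed cycle $x-\tau(v_1)-\cdots-\tau(v_n)-x$ is not in general chordless: whenever some $v_i$ has its second child labelled $\tau(v_j)$ with $|i-j|\geq 2$ (which the pruning rule permits, since that child can be a redundant leaf), you get a chord $\tau(v_i)-\tau(v_j)$; and chords of the form $x-\tau(v_i)$ can also arise via $\atom(v_x)$ or via $x$-parents off the path. ``Re-minimising'' does not help once $n\geq 3$ is already minimal, and the chord-cutting you allude to must be accompanied by an argument that at least one resulting triangle is not contained in any atom --- which is not automatic, since triangles not containing $x$ can easily coincide with atoms of intermediate $v_i$. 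I believe your approach can be completed (chord-cutting while tracking the triangle through $x$, combined with the same depth/pruning arguments as in the length-two case), but this is substantial work you have not sketched, and it may end up comparable in length to the paper's direct join-tree analysis.
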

\begin{proof}
Let $\decomp_{\tilde\alpha} \in \conclog\noconstr$ be a decomposition of $\tilde\alpha \in \brac$ with the concatenation tree
$\mathcal{T} \df (\mathcal{V}, \mathcal{E}, <, \Gamma, \labelFunction, v_r)$.

\subparagraph*{If-direction.} If $\decomp_{\tilde\alpha}$ is $x$-localized for all $x \in \var(\decomp_{\tilde\alpha})$, then we can construct a join tree for $\decomp_{\tilde\alpha}$ by augmenting the concatenation tree. First, replace all non-leaf nodes $v \in \mathcal{V}$ with $\atom(v)$. Then, remove all leaf nodes. 

By the definition of the concatenation tree, every atom of $\decomp_{\tilde\alpha}$ is a node in the supposed join tree. Also due to the definition of a concatenation tree, if $v$ is an $x$-parent, then $x$ occurs in~$\atom(v)$. Because $\decomp_{\tilde\alpha}$ is $x$-localized for all $x \in \var(\decomp_{\tilde\alpha})$, it follows that if two nodes in the supposed join tree contain the variable $x$, then all nodes which exist on the path between these two nodes also contains an $x$. Hence, the resulting tree is a join tree for $\decomp_{\tilde\alpha}$.

\subparagraph*{Only if-direction.}
Let $v_0, v_n \in \mathcal{V}$ be two $x$-parents for some $x \in \var(\decomp_{\tilde\alpha})$ such that the distance between $v_0$ and $v_n$ in the concatenation tree $\mathcal{T}$ is $n > 1$. Let $v_1, v_2, \dots v_{n-1} \in \mathcal{V}$ be the nodes on the path between $v_0$ and $v_n$ in $\mathcal{T}$ where $v_i$ is not an $x$-parent for all $i \in [n-1]$, hence $\decomp_{\tilde\alpha}$ is not $x$-localized. For readability, we assume that $\labelFunction(v_i) = z_i$ for all $i \in \{ 0, 1, \dots, n \}$. 
Because the concatenation tree is pruned, $\atom(v_i) = \atom(v_j)$ if and only if $i = j$, for $i,j \in \{ 0, 1,\dots, n\}$. Furthermore, if $\labelFunction(v) = z_i$ where $v$ is a non-leaf node, then $v = v_i$, because two different non-leaf nodes cannot share a label.  \cref{fig:OnlyIf} illustrates a subtree of $\mathcal{T}$. The variable that labels each node is given next to the node in parentheses.

\begin{figure}
\center
\begin{tikzpicture}[shorten >=1pt,->]
\tikzstyle{vertex}=[rectangle,fill=white!25,minimum size=12pt,inner sep=2pt]
\node[vertex] (1) at (0,0) {$v_k \; (z_k)$};
\node[vertex] (2) at (2,-1)   {$v_{k+1} \; (z_{k+1})$};
\node[vertex] (3) at (-2,-1)  {$v_{k-1} \; (z_{k-1})$};
\node[vertex] (4) at (3, -2) {$\vdots$};
\node[vertex] (5) at (-3, -2) {$\vdots$}; 
\node[vertex] (6) at (-4, -3) {$v_0 \; (z_0)$};
\node[vertex] (7) at (4, -3) {$v_n \; (z_n)$};
\node[vertex] (8) at (1, -2) {$\vdots$};
\node[vertex] (9) at (-1, -2) {$\vdots$};
\node[vertex] (10) at (-5,-4) {$v \; (x)$};
\node[vertex] (11) at (5,-4) {$v' \; (x)$};
\node[vertex] (12) at (-3,-4) {$\vdots$};
\node[vertex] (13) at (3,-4) {$\vdots$};

\path [-](1) edge node[left] {} (2);
\path [-](1) edge node[left] {} (3);
\path [-] (2) edge node[left] {} (4);
\path [-] (3) edge node[left] {} (5);
\path [-] (2) edge node[left] {} (8);
\path [-] (3) edge node[left] {} (9);
\path [-] (5) edge node [left] {} (6);
\path [-] (4) edge node [left] {} (7);
\path [-] (6) edge node [left] {} (10);
\path [-] (7) edge node [left] {} (11);
\path [-] (6) edge node [left] {} (12);
\path [-] (7) edge node [left] {} (13);
\end{tikzpicture}
\caption{\label{fig:OnlyIf} The concatenation tree $\mathcal{T}$ we use for the only if-direction in the proof of~\cref{lemma:cycledistance}.}
\end{figure}
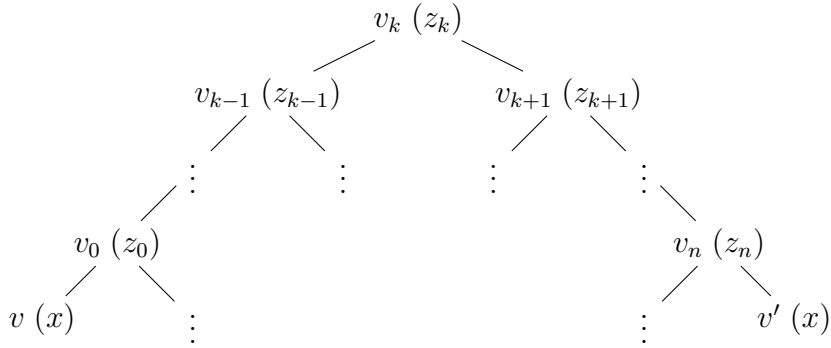

For sake of a contradiction, assume there is a join tree $T \df (V, E)$ for $\decomp_{\tilde\alpha}$. Nodes in the join tree are the atoms of $\decomp_{\tilde\alpha}$ and therefore any element of $V$ can be uniquely determined by $\atom(v)$ where $v \in \mathcal{V}$ is a non-leaf node of $\mathcal{T}$. 

We remind the reader that $\atom(v) = (z \logeq x \cdot x')$ if $v$ is labeled $z$ and the left and right children of $v$ are labeled $x$ and $x'$ respectively.  To improve readability, we use~$v$, and variants such as $v_1$, for nodes of the concatenation tree, and we use $\atom(v)$ for nodes of the join tree where $v$ is some non-leaf of $\mathcal{T}$. 

We relax the factor notation to variables in $\var(\decomp_{\tilde\alpha})$. For $z, z' \in \var(\decomp_{\tilde\alpha})$, we write $z \sqsubset z'$ if there exists $v, v' \in \mathcal{V}$ where $v'$ which is an ancestor of $v$ in the concatenation tree, and $\labelFunction(v') = z'$ and $\labelFunction(v) = z$. We do this because the pattern that $z$ represents is a strict factor of the pattern that $z'$ represents. 

For any $i,j \in \{ 0, 1, \dots, n\}$, let $p_{i \rightarrow j}$ be the path in the join tree $T$ from $\atom(v_i)$ to $\atom(v_j)$. The atom $\atom(v_1)$ cannot exist on the path $p_{0 \rightarrow n}$ because $\atom(v_0)$ and $\atom(v_n)$ contain the variable $x$, but $\atom(v_1)$ does not contain the variable~$x$. We therefore consider some non-leaf node $v_1' \in \mathcal{V}$ of the concatenation tree such that $\atom(v_1')$ is the atom on the path $p_{0 \rightarrow n}$ which is closest (with regards to distance) to $\atom(v_1)$. See~\cref{fig:proofIdea} for a diagram to illustrate $\atom(v_1')$. We know that $\atom(v_1')$ has a variable $x$ since it lies on the path $p_{0 \rightarrow n}$. 

We now prove that $\atom(v_1')$ contains some variable $z_i$ where $i \in [n]$. Since $\atom(v_1')$ is the node closest to $\atom(v_1)$ on the path $p_{0 \rightarrow n}$, we have that $\atom(v_1')$ must also exist on the path $p_{1 \rightarrow n}$ (see~\cref{fig:proofIdea}). Therefore, because of~\cref{lemma:TreePath}, $\atom(v_1')$ must exist on some path $p_{j \rightarrow j+1}$ for some $j \in [n-1]$. Since $\atom(v_j)$ and $\atom(v_{j+1})$ share the variable $z_j$ or $z_{j+1}$ (depending on whether $v_j$ or $v_{j+1}$ is the parent) for all $j \in [n-1]$, it follows that $\atom(v_1')$ must contain the variable $z_i$ for some $i \in [n]$.

We now look at three cases, and conclude a contradiction from each case.
\begin{itemize}
\item Case 1. $v_n$ is an ancestor of $v_0$ in $\mathcal{T}$.
\item Case 2. $v_0$ is an ancestor of $v_n$ in $\mathcal{T}$.
\item Case 3. $v_0$ is not an ancestor of $v_n$, and $v_n$ is not an ancestor of $v_0$ in $\mathcal{T}$.
\end{itemize}

Notice that $v_0$ and $v_n$ are named arbitrarily and can be exchanged in the proof.
Therefore, Case 2 follows immediately from Case 1.

\begin{figure}
\center
\begin{tikzpicture}[shorten >=1pt,->]
\tikzstyle{vertex}=[rectangle,fill=white!35,minimum size=12pt,inner sep=4pt,draw=black, thick]
\tikzstyle{vertex2}=[rectangle,fill=white!35,minimum size=12pt,inner sep=4pt]
\node[vertex] (1) at (0,0) {$\atom(v_0)$};
\node[vertex2] (2) at (2,0) {\dots};
\node[vertex] (3) at (4,0) {$\atom(v_1')$};
\node[vertex2] (4) at (4,1) {$\vdots$};
\node[vertex] (5) at (4,2) {$\atom(v_1)$};
\node[vertex2] (6) at (6,0) {\dots};
\node[vertex] (7) at (8,0) {$\atom(v_n)$};

\path [-](1) edge node[left] {} (2);
\path [-](2) edge node[left] {} (3);
\path [-](3) edge node[left] {} (4);
\path [-](4) edge node[left] {} (5);
\path [-](3) edge node[left] {} (6);
\path [-](6) edge node[left] {} (7);
\end{tikzpicture}\hspace{1cm}
\caption{\label{fig:proofIdea} A figure to illustrate paths $p_{0 \rightarrow 1}$ and $p_{0 \rightarrow n}$.}
\end{figure}
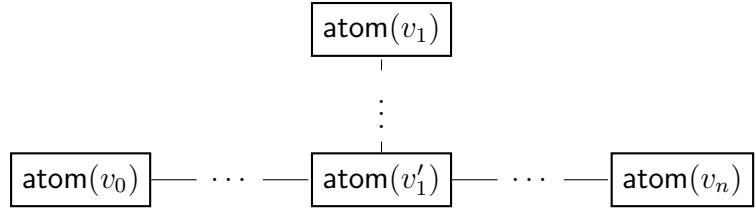

\subparagraph*{Case 1: $v_n$ is an ancestor of $v_0$ in $\mathcal{T}$.}

Since $v_n$ is an ancestor of $v_0$, we know that $v_i$ is an ancestor of $v_0$ (and hence $x \sqsubset z_0 \sqsubset z_i$) for all $i \in [n]$. Furthermore, it follows that $v_1$ is a $z_0$-parent and therefore $\atom(v_1) = (z_1 \logeq z_0 \cdot z')$ for some $z' \in \Xi$. Since $\atom(v_1')$ lies on the path $p_{0 \rightarrow 1}$, it follows that $\atom(v_1')$ contains the variable $z_0$. One of the variables of $\atom(v_1')$ must be the label of $v_1'$, we therefore consider all the possible labels for $v_1'$ and show a contradiction for each.

\begin{itemize}
\item $\labelFunction(v_1') = x$. This implies that $\atom(v_1') = (x \logeq z_0 \cdot z_i)$ and therefore $z_0 \sqsubset x$. We know that $x \sqsubset z_0$ since $v_0$ is an $x$-parent. Therefore, $z_0 \sqsubset z_0$ which is a contradiction and hence $\labelFunction(v_1') = x$ cannot hold.
\item $\labelFunction(v_1') = z_i$ where $i \in [n]$. We split this case into two parts:
\begin{itemize}
\item $\labelFunction(v_1') = z_i$ where $i \in [n-1]$. This implies that $\atom(v_1') = \atom(v_i)$. The word equation $\atom(v_i)$ does not contain an $x$. Since we know that $\atom(v_1')$ contains the variable $x$, we conclude that $\labelFunction(v_1') = z_i$ where $i \in [n-1]$ leads to a contradiction.
\item $\labelFunction(v_1') = z_n$. This implies that $\atom(v_1') = \atom(v_n)$ and therefore, without loss of generality, $\atom(v_n) = (z_n \logeq x \cdot z_0)$. However, since we are in the case that $v_n$ is an ancestor of $v_0$, it follows that $v_n$ is a parent of $v_0$ (since a node labeled $x$ or $z_0$ cannot be an ancestor of $v_0$). Therefore $\decomp_{\tilde\alpha}$ is $x$-localized, and hence $\labelFunction(v_1') = z_n$ cannot hold.
\end{itemize}
\item $\labelFunction(v_1') = z_0$. This implies that $\atom(v_1') = \atom(v_0)$. Therefore, without loss of generality, $\atom(v_0) = (z_0 \logeq x \cdot z_i)$. We also know that $z_0 \sqsubset z_i$ since $v_i$ is an ancestor of $v_0$. Therefore, $z_0 \sqsubset z_i \sqsubset z_0$. Thus, $\labelFunction(v_1') = z_0$ cannot hold.
\end{itemize}

We have proven that, for the case where $v_n$ is an ancestor of $v_0$ in the concatenation tree, there does not exist a valid label for $v_1'$. Hence we have reached a contradiction and therefore our assumption $\decomp_{\tilde\alpha}$ is acyclic cannot hold. 
  
\subparagraph*{Case 2: $v_0$ is an ancestor of $v_n$ in $\mathcal{T}$.}
The case where $v_0$ is an ancestor of~$v_n$ is trivially identical to Case 1 by considering the closest node to $\atom(v_{n-1})$ on the path $p_{n \rightarrow 0}$. We have therefore omitted the proof. 
  
\subparagraph*{Case 3: $v_n$ is not an ancestor of $v_0$ and $v_0$ is not an ancestor of $v_n$.} 

Let $k \in [n-1]$ such that $v_k \in \mathcal{V}$ is the lowest common ancestor of $v_0$ and $v_n$ in $\mathcal{T}$. 
Recall that $\atom(v_1')$ has the variables $x$ and $z_i$ for some $i \in [n]$ because $\atom(v_1')$ lies on the paths $p_{0 \rightarrow n}$ and $p_{1 \rightarrow n}$. 
We also have that $\atom(v_1) = (z_1 \logeq z_0 \cdot z')$ for some $z' \in \Xi$, because for this case, $v_1$ must be a parent of $v_0$, otherwise $v_0$ would be an ancestor of $v_n$. Therefore, since $\atom(v_0)$ and $\atom(v_1)$ share the variable $z_0$, we know that $\atom(v_1')$ also contains $z_0$ because $\atom(v_1')$ lies on the path $p_{0 \rightarrow 1}$. We now consider each label for $v_1'$ and show a contradiction for each case.
In other words, we consider the following cases:
\begin{itemize}
\item Case 3.1: $\labelFunction(v_1') = x$.
\item Case 3.2: $\labelFunction(v_1') = z_i$ where $i \in [n-1]$.
\item Case 3.3: $\labelFunction(v_1') = z_n$.
\item Case 3.4: $\labelFunction(v_1') = z_0$.
\end{itemize}

\subparagraph*{Case 3.1: $\labelFunction(v_1') = x$.} 
Without loss of generality, $\atom(v_1') = (x \logeq z_0 \cdot z_i)$ which implies that $x \sqsubset z_0$ and $z_0 \sqsubset x$. This is a contradiction as $x \sqsubset x$ cannot hold due to the fact that $\sqsubset$ is used to denote the strict factor relation.

\subparagraph*{Case 3.2: $\labelFunction(v_1') = z_i$ where $i \in [n-1]$.}
This implies $\atom(v_1') = \atom(v_i)$, but $\atom(v_i)$ cannot have the variable $x$. Thus, we reach a contradiction.

\subparagraph{Case 3.3: $\labelFunction(v_1') = z_n$.} 
This implies that $\atom(v_1') = \atom(v_n)$ and therefore without loss of generality, we know that $\atom(v_n) = (z_n \logeq z_0 \cdot x)$, because $\atom(v_1')$ must contain the variable $z_0$ and $x$. For this case, we first prove that $k \geq 2$ where~$v_k$ is the lowest common ancestor of $v_0$ and $v_n$. 
That is, we show $k \neq 1$.

For sake of contradiction, assume $k=1$. It follows that the distance from $v_k$ to $v_0$ is one and the distance from $v_k$ to $v_n$ is greater than or equal to one. Hence, the distance from $v_k$ to the children of $v_n$ is greater than or equal to two. Since~$v_n$ is a $z_0$ parent, and the children of $v_n$ are further from the root than $v_0$, we know that $v_0$ must be redundant. If this is the case, $v_0$ would have no children due to the pruning procedure used when defining a concatenation tree. Therefore, $v_0$ would not be an $x$-parent which we know cannot hold (we have chosen $v_0$ because it is an $x$-parent). Consequently, $k=1$ cannot hold and we can conclude $k \geq 2$. 

We now consider $\atom(v_k)$. We know that $\atom(v_k) = (z_k \logeq z_{k-1} \cdot z_{k+1})$ and since we have proven that $k \geq 2$, it follows that $z_{k-1} \neq z_0$. Since both $\atom(v_1)$ and $\atom(v_n)$ contain the variable $z_0$, we know that $\atom(v_k)$ cannot exist on the path $p_{1 \rightarrow n}$. Hence, we consider some non-leaf node $v_k' \in \mathcal{V}$ such that $\atom(v_k')$ lies on the path $p_{1 \rightarrow n}$ and $\atom(v_k')$ is the node on $p_{1 \rightarrow n}$ which is closest node (with regards to distance) to $\atom(v_k)$. We illustrate a subtree of such a join tree in~\cref{fig:vkProofIdea}. 

Next, we prove that $\atom(v_k')$ must contain some $z_j \in \Xi$, where $j \in [k-1]$. 
We know that $\atom(v_k')$ lies on the path $p_{1 \rightarrow k}$. 
Thus, due to~\cref{lemma:TreePath}, $\atom(v_k')$ must lie on the path $p_{i \rightarrow i+1}$ for some $i \in [k-1]$. 
Since each atom which lies on the path $p_{i \rightarrow i+1}$ must contain the variable $z_i$, it follows that $\atom(v_k')$ contains the variable $z_j$ for some $j \in [k-1]$.~\cref{fig:OnlyIf} illustrates why all nodes on the path~$p_{i \rightarrow i+1}$ for $i \in [k-1]$ must contain the variable $z_i$ (because $v_{i+1}$ is a parent of $v_i$ for $i \in [k-1]$).

Furthermore, we shall show that $\atom(v_k')$ must also contain the variable $z_l \in \Xi$ for some~$l \in \{k+1, \dots, n\}$. 
We know that $\atom(v_k')$ lies on the path $p_{k \rightarrow n}$.
Therefore, because of~\cref{lemma:TreePath}, $\atom(v_k')$ must lies on the path $p_{i \rightarrow i+1}$ for some $i \in \{k, \dots, n-1 \}$. 
Since each atom which lies on the path $p_{i \rightarrow i+1}$ must contain the variable $z_{i+1}$ for $i \in \{k, \dots, n-1\}$, it follows that $\atom(v_k')$ contains the variable $z_l$ for some $l \in \{k+1, \dots, n\}$. 
Next, we consider the possible labels of $v_k'$.
That is, we consider the following cases:
\begin{itemize}
\item Case 3.3.1. $\labelFunction(v_k') = z_0$.
\item Case 3.3.2. $\labelFunction(v_k') = z_j$ where $j \in [k-1]$.
\item Case 3.3.3. $\labelFunction(v_k') = z_l$ where $l \in \{k+1,k+2,\dots,n\}$.
\end{itemize}

\begin{figure}
\center
\begin{tikzpicture}[shorten >=1pt,->]
\tikzstyle{vertex}=[rectangle,fill=white!35,minimum size=12pt,inner sep=4pt,,draw=black, thick]
\tikzstyle{vertex2}=[rectangle,fill=white!35,minimum size=12pt,inner sep=4pt]
\node[vertex] (1) at (4,2) {$z_0 \logeq x \cdot z_0'$};
\node[vertex2] (2) at (4,1) {$\vdots$};
\node[vertex] (3) at (4,0) {$z_n \logeq x \cdot z_0$};
\node[vertex2] (4) at (6,0) {\dots};
\node[vertex] (5) at (8,0) {$\atom(v_k')$};
\node[vertex2] (6) at (10,0) {\dots};
\node[vertex] (7) at (12,0) {$z_1 \logeq z_0 \cdot z'$};
\node[vertex2] (8) at (8,1) {$\vdots$};
\node[vertex] (9) at (8,2) {$z_k \logeq z_{k-1} \cdot z_{k+1}$};

\path [-](1) edge node[left] {} (2);
\path [-](2) edge node[left] {} (3);
\path [-](3) edge node[left] {} (4);
\path [-](4) edge node[left] {} (5);
\path [-](5) edge node[left] {} (6);
\path [-](6) edge node[left] {} (7);
\path [-](5) edge node[left] {} (8);
\path [-](8) edge node[left] {} (9);
\end{tikzpicture}\hspace{1cm}
\caption{\label{fig:vkProofIdea} A subtree of a join tree with nodes $\atom(v_0)$, $\atom(v_n)$, $\atom(v_k)$, $\atom(v_k')$ and $\atom(v_1)$. This figure is used to illustrate Case 3.3.}
\end{figure}
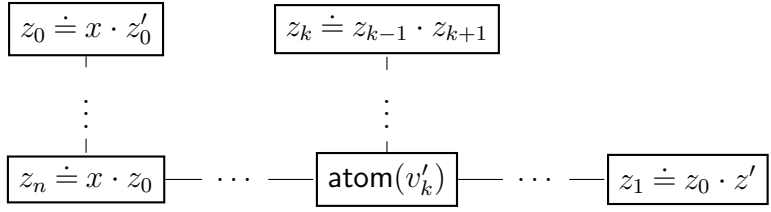

\subparagraph{Case 3.3.1: $\labelFunction(v_k') = z_0$.}
 This implies $\atom(v_k') = \atom(v_0)$. We can therefore state, without loss of generality, that $\atom(v_0) = (z_0 \logeq z_j \cdot z_l)$. However, if this is the case then $x$ is not a variable of $\atom(v_0)$. 
This is a contradiction because~$v_0$ is an $x$-parent.
Hence, $\labelFunction(v_k')=z_0$ cannot hold.

\subparagraph{Case 3.3.2: $\labelFunction(v_k') = z_j$ where $j \in [k-1]$.}
This implies that $\atom(v_k') = (z_j \logeq z_0 \cdot z_l)$. If this is the case then $j=1$ must hold, since this is the only value for $j$ such that $(z_j \logeq z_0 \cdot z_l)$ can hold. We can therefore say that $\atom(v_k') = \atom(v_1)$. For all nodes $v \in \mathcal{V}$, let $D(v)$ be the distance from the root of $\mathcal{T}$ to $v$. Since $v_l$ cannot be a redundant node, it follows that $D(v_1) + 1 \leq D(v_l)$. This implies that $D(v_k) + k - 1 + 1 \leq D(v_k) + l - k$ and hence, $k \leq \frac{l}{2}$. Because $\atom(v_n) = (z_n \logeq z_0 \cdot x)$ and $v_0$ is not redundant, we can also say that $D(v_n) + 1 \leq D(v_0)$ and hence, $D(v_k) + n - k + 1 \leq D(v_k) + k$ and therefore $n+1 \leq 2k$. Consequently, $\frac{n+1}{2} \leq k \leq \frac{l}{2}$ and hence, $n+1 \leq l$. This is a contradiction since $l \in \{k+1, \dots, n\}$. Thus $\labelFunction(v_k) = z_j$ cannot hold.

\subparagraph{Case 3.3.3: $\labelFunction(v_k') = z_l$ where $l \in \{k+1,k+2 \dots, n \}$.}
We split this case into two parts:
\begin{itemize}
\item $\labelFunction(v_k') = z_n$. This implies that $\atom(v_k') = \atom(v_n)$. Recall $\atom(v_n) = (z_n \logeq z_0 \cdot x)$. Therefore, $\atom(v_n)$ does not contain the variable $z_j$ for $j \in [k-1]$, yet we know that $\atom(v_k')$ does contain the variable $z_j$. Consequently, $\labelFunction(v_k') = z_n$ cannot hold.
\item $\labelFunction(v_k') = z_l$ where $l \in \{k+1, \dots, n-1\}$. This implies that, without loss of generality, $\atom(v_k') = (z_l \logeq z_0 \cdot z_j)$. This cannot hold since if~$k < l < n$, then $\atom(v_l)$ contains the variable $z_{l+1}$. However, $\atom(v_k')$ does not contain the variable $z_{l+1}$. 
\end{itemize}

Consequently, we have proven that if $\tau(v_1') = z_n$, then there does not exist a valid label for the non-leaf node $v_k'$, where $\atom(v_k')$ is the closest node to $\atom(v_k)$ on the path $p_{1 \rightarrow n}$. Therefore $\labelFunction(v_1') = z_n$ cannot hold.

\subparagraph*{Case 3.4: $\labelFunction(v_1') = z_0$.} 
This implies that $\atom(v_1') = \atom(v_0)$. Without loss of generality, $\atom(v_0) = (z_0 \logeq x \cdot z_i)$. It follows that $k < i \leq n$, since if $1 \leq i \leq k$ then $z_i \sqsubset z_0 \sqsubset z_i$ which is a contradiction. 

We now claim that $n>2$ must hold. 
Assume that $n=2$. 
Since we know that $v_k$ is the lowest common ancestor of $v_0$ and $v_n$, it follows that $k=1$. 
It also follows that $i=n$ since $k < i \leq n$. 
The distance from $v_k$ to $v_n$ is one and the distance from $v_k$ to the children of $v_0$ is two. 
Since $v_0$ has a child with the label $z_n$, it follows that $v_n$ is a redundant node, and hence it is not an $x$-parent. 
We know this cannot hold and hence $n > 2$. 

We now consider $\atom(v_{n-1}')$ which is the atom of the path $p_{n \rightarrow 0}$ which is closest to $\atom(v_{n-1})$. 
Since $n>2$, it follows that $v_{n-1} \neq v_1$. 
The nodes $v_0$ and $v_n$ are arbitrarily named and therefore $v_0$ and $v_n$ can be exchanged in the proof. 
Thus, it must hold that $\atom(v_{n-1}') = (z_n \logeq x \cdot z_j)$ where $0 \leq j < k$ in the same way that $\atom(v_1') = (z_0 \logeq x \cdot z_i)$ where $k<i \leq n$. 
Consequently, $z_0 \sqsubset z_j$ and $z_n \sqsubset z_i$ since $0 < j < k$ and $k < i < n$. 
Here lies our contradiction, since $z_i \sqsubset z_0 \sqsubset z_j$ and $z_j \sqsubset z_n \sqsubset z_i$ cannot hold simultaneously.

To conclude this proof, we have considered all cases for $\atom(v_1')$ and have shown a contradiction for each, our assumption that $\decomp_{\tilde\alpha}$ is acyclic and is not $x$-localized for some $x \in \var(\decomp_{\tilde\alpha})$ cannot hold.
 \end{proof}

Referring back to~\cref{example:concatTree}, we can see that $\decomp_2$ is not $x_2$-localized and therefore $\decomp_2$ is cyclic, whereas we have that $\decomp_1$ is $x$-localized for all $x \in \var(\decomp_1)$ and hence $\decomp_1$ is acyclic. 

\subsection{Acyclic Pattern Algorithm}\label{subsec:acycPatAlgorithm}
We now use~\cref{lemma:cycledistance} as the foundation for an algorithm that decides in polynomial time whether a pattern is acyclic.
 
\begin{restatable}[]{theorem}{polytime}
\label{polytime}
Whether $\alpha \in \Xi^+$ is acyclic can be decided in time $\bigO(|\alpha|^7)$.
\end{restatable}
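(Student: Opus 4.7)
The plan is a dynamic programming algorithm over the factors of $\alpha$, built on the $x$-localization characterization from \cref{lemma:cycledistance}. Given $\alpha \in \Xi^+$ of length $n$, there are $O(n^2)$ factors $\beta = \alpha[i..j]$, and I would first preprocess equal factors using a standard tool such as a suffix tree or Karp-Miller-Rosenberg labelling, since equal factors must be represented by the same variable in any decomposition due to the pruning in the concatenation tree. The goal is then to compute, for each factor $\beta$, whether $\beta$ admits an acyclic bracketing, together with a small amount of additional information describing how that bracketing interacts with variables near its root.

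For each factor $\beta$, the DP state has to record more than a yes/no answer, because $x$-localization couples the sub-bracketings: a variable $x$ that is an $x$-parent in some sub-tree cannot be ``forgotten'' when that sub-tree is combined with another one containing $x$-parents. The state would track, for each $x \in \var(\beta)$, whether $x$ has $x$-parents in the current sub-bracketing and whether those are ``visible at the root'' (that is, whether the sub-root itself is an $x$-parent, which in turn constrains one of the root's children to be labelled $x$). When combining bracketings of $\alpha[i..k]$ and $\alpha[k+1..j]$ under a new root, a variable appearing as an $x$-parent in both halves forces the new root to be an $x$-parent, which then forces the root of at least one of the halves to be labelled $x$; iterating this rule along the tree yields the compatibility check that enforces \cref{lemma:cycledistance}.

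The algorithm fills the DP table bottom-up. Single-variable factors $\alpha[i..i]$ are trivially acyclic with trivial localization state, and for each larger factor $\beta$ we try every split point $k$, combine the recorded states of $\alpha[i..k]$ and $\alpha[k+1..j]$, and verify $x$-localization for each relevant variable. The pattern $\alpha$ itself is acyclic iff some split for $\alpha[1..n]$ succeeds, and a witnessing bracketing can be extracted by remembering the successful split at each entry. The $O(n^7)$ bound arises from $O(n^2)$ factors, $O(n)$ split points per factor, $O(n)$ variables to verify per candidate combination, and access to previously computed states that carry up to $O(n^3)$ information about sub-bracketings.

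The main obstacle is formulating the DP state so that it remains polynomially bounded while still being compositional enough to decide $x$-localization after merging. The delicate point is that forcing a sub-root to be labelled $x$ can cascade further down (since all $x$-parents must form a connected subtree, and this subtree must ``reach'' the root of the sub-bracketing whenever $x$ is relevant higher up), so the merge rule has to recursively propagate these label constraints. Making the merge rule precise, and proving that the compressed per-variable state genuinely captures everything required by \cref{lemma:cycledistance}, is the technical heart of the argument; once that is in place, the claimed $O(|\alpha|^7)$ bound follows from the table size and the per-entry cost described above.
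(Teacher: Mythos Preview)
Your overall plan---bottom-up dynamic programming over factors of $\alpha$, guided by the $x$-localization characterization of \cref{lemma:cycledistance}---matches the paper's approach. However, you are missing the structural lemma that makes the merge step clean and obviously polynomial, and this is exactly the point at which your proposal becomes vague (``making the merge rule precise\ldots is the technical heart'').

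The paper does \emph{not} track per-variable localization data or propagate label constraints. Instead it proves the following: if $\tilde\alpha_1$ and $\tilde\alpha_2$ are acyclic bracketings, then $(\tilde\alpha_1\cdot\tilde\alpha_2)$ is acyclic if and only if one of six simple cases holds---(1) $\tilde\alpha_1=\tilde\alpha_2$; (2) $\var(\tilde\alpha_1)\cap\var(\tilde\alpha_2)=\emptyset$; or (3)--(6) one of the two bracketings coincides with a root-child of the other (i.e.\ $\tilde\alpha_1=(\tilde\alpha_2\cdot\tilde\beta)$, $\tilde\alpha_1=(\tilde\beta\cdot\tilde\alpha_2)$, or the symmetric versions). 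The ``only if'' direction is the key: if none of these hold, one exhibits a variable $x$ shared by both halves for which the new root cannot be an $x$-parent while the lowest $x$-parents on each side are separated by a non-$x$-parent, violating localization. With this lemma in hand, the DP state is simply the set $E'$ of triples $\bigl((i,k),(i,j),(j{+}1,k)\bigr)$ witnessing that $\alpha[i,k]$ has \emph{some} acyclic bracketing whose root children are $\alpha[i,j]$ and $\alpha[j{+}1,k]$. The merge test for a candidate split then reduces to checking cases (1)--(2) directly and scanning $E'$ for cases (3)--(6); no per-variable bookkeeping or cascading is needed.

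Your description of the cascade (``forcing a sub-root to be labelled $x$ can cascade further down'') reflects a genuine difficulty with the state you propose, but the paper sidesteps it entirely: because only the \emph{root split} of each acyclic sub-bracketing matters for the six cases, recording $E'$ (size $O(n^3)$) suffices. The fixed-point loop over $E'$, the $O(n^2)$ choices of $(i,k)$, the $O(n)$ splits, and the $O(n)$ scan inside the case check give the $O(n^7)$ bound. So the missing idea is not a richer state but a sharper combination lemma that makes a rich state unnecessary.
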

\begin{proof}
Let $\alpha \df \alpha_1 \cdot \alpha_2 \cdots \alpha_n$ where $\alpha_i \in \Xi$ for all $i \in [n]$. 
For any $i,j \in \mathbb{N}$ such that $1 \leq i \leq j \leq n$, we use $\alpha[i,j]$ to denote $\alpha_i \cdot \alpha_{i+1} \cdots \alpha_j$. 
We now give an algorithm to determine whether $\alpha$ is acyclic. 
This algorithm is essentially a bottom-up implementation of~\cref{lemma:cycledistance}. 
\cref{algorithm:acycPat} is the main algorithm, which continuously adds larger subpatterns to a set until we reach a so-called \emph{fixed point}. \cref{algorithm:IsAcyclic} is a ``helper procedure''.

\begin{algorithm}
\SetAlgoLined
\SetKwInOut{Input}{Input}
\SetKwInOut{Output}{Output}
\Input{$\alpha \in \Xi^+$, where $|\alpha| = n$.}
\Output{True if $\alpha$ is acyclic, and False otherwise.}

$V \leftarrow \{ (i,i), (i+1,i+1), (i,i+1) \mid i \in [n-1] \} $;

$E' \leftarrow \{ ((i,i+1),(i,i),(i+1,i+1)) \mathrel{|} i \in [n-1] \}$;

$E \leftarrow \emptyset$;
 
\While{$E' \neq E$}{
	$E \leftarrow E'$;
 	
 	\For{$i,k \in [n]$ where $i < k$}{
 		\For{$j \in \{i, i+1, \dots, k-1\}$ where $((i,k),(i,j),(j+1,k)) \notin E'$} {\
 			\If{$(i,j),(j+1, k) \in V$ and $\mathsf{IsAcyclic}(i,j,k, \alpha, E')$}{
	        		Add $((i,k),(i,j),(j+1,k))$ to $E'$; 
	        
	        		Add $(i,k)$ to $V$;
    			}	
 		}
 	}
 	
}

Return $\mathsf{True}$ if $(1,n) \in V$, and $\mathsf{False}$ otherwise;

\caption{Acyclic pattern algorithm. Given $\alpha \in \Xi^+$, decides if $\alpha$ is acyclic.\label{algorithm:acycPat}}
\end{algorithm}

\begin{algorithm}
\SetAlgoLined
\SetKwInOut{Input}{Input}
\SetKwInOut{Output}{Output}
\Input{$i,j,k \in [|\alpha|]$, $\alpha \in \Xi^+$, $E'$}
\Output{True if $\alpha[i,j]$ is acyclic, and False otherwise}

\uIf{$\alpha[i,j] = \alpha[j+1,k]$}{
	Return $\mathsf{True}$;
}
 
\uElseIf{$\mathsf{var}(\alpha[i,j]) \mathrel{\cap} \mathsf{var}(\alpha[j+1,k])= \emptyset$}{
	Return $\mathsf{True}$;
}
 
\For{$x \in \{ i, i+1, \dots, j-1 \}$}{
	\uIf{$((i,j),(i,x),(x+1,j)) \in E'$ and $\alpha[j+1,k] = \alpha[i,x]$}{
	Return $\mathsf{True}$;
} 

\uElseIf{$((i,j),(i,x),(x+1,j)) \in E'$ and $\alpha[j+1,k] = \alpha[x+1,j]$}{
	Return $\mathsf{True}$;
}

\uElseIf{$((j+1,k),(j+1,x),(x+1,k)) \in E'$ and $\alpha[i,j] = \alpha[j+1,x]$}{
	Return $\mathsf{True}$;
} 

\uElseIf{$((j+1,k),(j+1,x),(x+1,k)) \in E'$ and $\alpha[i,j] = \alpha[x+1,k]$}{
	Return $\mathsf{True}$;
} 
}

Return $\mathsf{False}$; \;

\caption{$\mathsf{IsAcyclic}(i,j,k,\alpha,E')$. Decides whether $\alpha[i,j] \cdot \alpha[j+1,k]$ is acyclic.
\label{algorithm:IsAcyclic}}
\end{algorithm}

\paragraph*{Correctness.} We first give a high-level overview. 
The algorithm works using a bottom-up approach, and continuously adding larger acyclic subpatterns of $\alpha$ to the set $V$. 
Each subpattern is stored in $V$ as two indices for the start and end positions of the subpattern. 
To ensure that the subpatterns we are adding are acyclic, we also store an edge relation $E$ and call the $\mathsf{IsAcyclic}$ subroutine.

The subroutine $\mathsf{IsAcyclic}$ is given two acyclic subpatterns ($\alpha[i, j]$ and $\alpha[j+1, k]$), and uses $E'$ to determine whether there exists~$\tilde\beta \in \brac(\alpha[i, j] \cdot \alpha[j+1, k])$ such that $\tilde\beta$ is acyclic. 
That is, the decomposition of $\tilde\beta$ is $x$-localized for all variables, see~\cref{lemma:cycledistance}. 
The main algorithm is given in~\cref{algorithm:acycPat}, and 
$\mathsf{IsAcyclic}$ is a subroutine called by~\cref{algorithm:acycPat}, and is given in~\cref{algorithm:IsAcyclic}. 

\underline{Correctness of~\cref{algorithm:acycPat}.}
First, assume that $\mathsf{IsAcyclic}$ returns true (given $i,j,k, \alpha$, and $E'$) if and only if there exists $\tilde\alpha_1 \in \brac(\alpha[i,j])$ and $\tilde\alpha_2 \in \brac(\alpha[j+1, k])$ such that $(\tilde\alpha_1 \cdot \tilde\alpha_2)$ is acyclic. 

Lines 1 and 2 are the ``base case'' and state that $\alpha[i,i]$, $\alpha[i+1,i+1]$ and $\alpha[i, i+1]$ are acyclic for all $i \in [n-1]$.
We also add $((i,i+1),(i,i),(i+1,i+1))$ to $E$ to denote that $\alpha[i,i+1]$ is a concatenation of $\alpha[i,i]$ and $\alpha[i+1,i+1]$.

Now consider the while loop given on line 4. 
This code block adds $(i,k)$ to $V$ if and only if there exists $(i,j), (j+1,k) \in V$, and therefore $\alpha[i,j]$ and $\alpha[j+1,k]$ are acyclic, such that the concatenation $\alpha[i,j] \cdot \alpha[j+1,k]$ is an acyclic pattern.
We~then add $((i,k),(i,j),(j+1,k))$ to $E$ to denote that $(i,j)$ and $(j+1,k)$ are the left and right children of $(i,k)$ respectively.

\index{fixed point}
This while loop terminates when $E$ reaches a so-called \emph{fixed-point}.
That is, no more acyclic subpatterns of the input pattern can be derived from $E$. 
Then, either~$(1, n) \in V$ and therefore $\alpha$ is acyclic, or $(1,n) \notin V$ and $\alpha$ is cyclic. 

Therefore, as long as the subroutine $\mathsf{IsAcyclic}$ is correct, our algorithm is correct.

\begin{figure}
\center
\begin{tikzpicture}[shorten >=1pt,->]
\tikzstyle{vertex}=[rectangle,fill=white!25,minimum size=12pt,inner sep=2pt]
\node[vertex] (1) at (0,0) {$\alpha[i,k] \; (z')$};
\node[vertex] (2) at (2,-1)   {$\alpha[j+1,k] \; (x_1)$};
\node[vertex] (3) at (-2,-1)  {$\alpha[i,j] \; (z)$};
\node[vertex] (4) at (-4,-2) {$\alpha[i,x] \; (x_1)$};
\node[vertex] (5) at (0,-2) {$\alpha[x+1, j] \; (x_2)$};
\node[vertex] (6) at (-4,-3) {$\vdots$};
\node[vertex] (7) at (0,-3) {$\vdots$};

\path [-](1) edge node[left] {} (2);
\path [-](1) edge node[left] {} (3);
\path [-](3) edge node[left] {} (4);
\path [-](3) edge node[left] {} (5);
\end{tikzpicture}
\caption{\label{fig:AlgoconcatTree} Illustrating Case 3 for the correctness of the $\mathsf{IsAcyclic}$ subroutine.}
\end{figure}
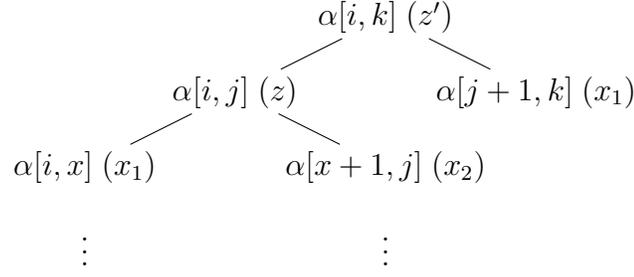

\begin{figure}
\center
\begin{tikzpicture}[shorten >=1pt,->]
\tikzstyle{vertex}=[rectangle,fill=white!25,minimum size=12pt,inner sep=2pt]
\node[vertex] (1) at (0,0) {$v_1 \; (z)$};
\node[vertex] (2) at (-2,-1)   {$v_2 \; (z_1)$};
\node[vertex] (3) at (2,-1)  {$v_3 \; (z_2)$};

\node[vertex] (4) at (-3, -2) {$v_4 \; (x_1)$};
\node[vertex] (5) at (-1, -2) {$v_5 \; (y_1)$};

\node[vertex] (6) at (1, -2) {$v_6 \; (x_2)$};
\node[vertex] (7) at (3, -2) {$v_7 \; (y_2)$};

\node[vertex] (8) at (-3,-3) {$\vdots$};
\node[vertex] (9) at (-1,-3) {$\vdots$};
\node[vertex] (10) at (1,-3) {$\vdots$};
\node[vertex] (11) at (3,-3) {$\vdots$};

\path [-](1) edge node[left] {} (2);
\path [-](1) edge node[left] {} (3);
\path [-](2) edge node[left] {} (4);
\path [-](2) edge node[left] {} (5);
\path [-](3) edge node[left] {} (6);
\path [-](3) edge node[left] {} (7);
\end{tikzpicture}
\caption{\label{fig:AlgoconcatTree2} Illustrating the only-if direction for the correctness of the $\mathsf{IsAcyclic}$ subroutine. Note that $z_1 \neq z_2$, $z_2 \notin \{x_1, y_1\}$, and $z_1 \notin \{ x_2, y_2 \}$.}
\end{figure}
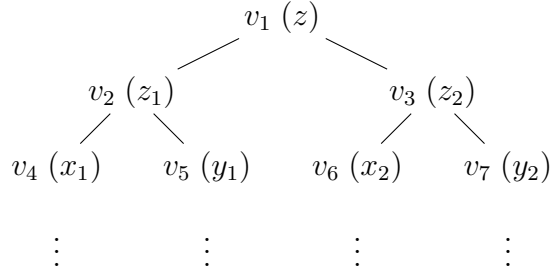

\underline{Correctness of~\cref{algorithm:IsAcyclic}.}
We now show that the subroutine $\mathsf{IsAcyclic}$ is correct. 
Assume $\mathsf{IsAcyclic}$ is passed $i,j,k$ (where $1 \leq i \leq j \leq k \leq n$), the pattern $\alpha \in \Xi^+$, and the edge relation $E'$. 
Since $\mathsf{IsAcyclic}$ has been passed $i$, $j$, and $k$, it follows that $(i,j), (j+1,k) \in V$ and therefore there exists $\tilde\alpha_1 \in \brac(\alpha[i,j])$ and $\tilde\alpha_2 \in \brac(\alpha[j+1,k])$ such that $\tilde\alpha_1$ and $\tilde\alpha_2$ are acyclic. 
We now prove that $\tilde\alpha \in \brac(\alpha[i,j] \cdot \alpha[j+1, k])$ is acyclic if and only if one of the following cases hold:
\begin{description}
\item[Case 1.] $\alpha[i,j] = \alpha[j+1,k]$. 
\item[Case 2.] $\var(\alpha[i,j]) \intersect \var(\alpha[j+1,k]) = \emptyset$.
\item[Case 3.] $\tilde\alpha = ((\tilde\alpha_2 \cdot \tilde\beta) \cdot \tilde\alpha_2)$ for some $\tilde \beta \in \brac$. 
\item[Case 4.] $\tilde\alpha = ((\tilde\beta \cdot \tilde\alpha_2) \cdot \tilde\alpha_2)$ for some $\tilde \beta \in \brac$. 
\item[Case 5.] $\tilde\alpha = (\tilde\alpha_1 \cdot (\tilde\alpha_1 \cdot \tilde\beta))$ for some $\tilde \beta \in \brac$. 
\item[Case 6.] $\tilde\alpha = (\tilde\alpha_1 \cdot (\tilde\beta \cdot \tilde\alpha_1))$ for some $\tilde \beta \in \brac$.
\end{description}
Notice that Case 4, Case 5, and Case 6 are analogous to Case 3.
Therefore, once we prove correctness for Case 3, cases 4, 5, and 6 follow analogously.
These six cases mirror the six conditions that cause~\cref{algorithm:IsAcyclic} to return  true.

\emph{If direction.}
We consider Case 1, Case 2, and Case 3.
If any of these cases hold, we show that $\tilde\alpha \in \brac(\alpha[i,j] \cdot \alpha[j+1, k])$ is acyclic.
\begin{description}
\item[Case 1.] $\alpha[i,j] = \alpha[j+1,k]$. Since there exists an acyclic decomposition $\decomp \in\conclog$ for some $\tilde\alpha \in \brac(\alpha[i,j])$, it follows immediately from~\cref{lemma:cycledistance} that $(\tilde\alpha \cdot \tilde\alpha)$ is acyclic. Hence, $\alpha[i,k]$ is acyclic and we can add $(i,k)$ to $\mathcal{V}$.
\item[Case 2.] $\var(\alpha[i,j]) \intersect \var(\alpha[j+1,k]) = \emptyset$. Because $\alpha[i,j]$ and $\alpha[j+1,k]$ are acyclic, there exists acyclic decompositions $\decomp_1, \decomp_2 \in \conclog$ where $\decomp_1$ is the decomposition for some bracketing of $\alpha[i,j]$, $\decomp_2$ is the decomposition of some bracketing of $\alpha[j+1,k]$, and $\var(\decomp_1) \intersect \var(\decomp_2) = \emptyset$. Therefore, $\decomp \df \decomp_1 \land \decomp_2 \land (z \logeq z' \cdot z'')$ is an acyclic decomposition for some $\tilde\alpha \in \brac(\alpha[i,k])$, where $z \in \Xi$ is a new variable, and $z'$ and $z''$ are the root variables for $\decomp_1$ and $\decomp_2$ respectively. It follows from~\cref{lemma:cycledistance} that $\decomp$ is acyclic.
\item[Case 3.] $\tilde\alpha = ((\tilde\alpha_2 \cdot \tilde\beta) \cdot \tilde\alpha_2)$ for some $\tilde \beta \in \brac$. This implies that $\tilde\alpha_1 \df (\tilde\alpha_2 \cdot \tilde\beta)$. Let $\decomp_1 \in \conclog$ be an acyclic decomposition of $\tilde\alpha_1$. Let $(z \logeq x_1 \cdot x_2)$ be the root atom of $\decomp_1$, where $x_1$ represents the bracketing $\tilde\alpha_2$. Therefore, the decomposition of $\tilde\alpha$ can be obtained from adding the atom $(z' \logeq z \cdot x_1)$ to $\decomp$ where $z' \in \Xi$ is a new variable. We illustrate a concatenation tree for this case in~\cref{fig:AlgoconcatTree} where nodes of the concatenation tree are denoted by factors of $\alpha$. Assuming $\alpha[i,x]$ and $\alpha[x+1,j]$ are acyclic, it is clear that $\decomp$ is $x$-localized for all $x \in \var(\decomp)$. Hence, $(\tilde\alpha_1 \cdot \tilde\alpha_2)$ is acyclic. 
\end{description}
Each condition has a corresponding if-condition in the subroutine $\mathsf{IsAcyclic}$. 
Therefore, we know that if $\mathsf{IsAcyclic}$ returns true, given $i,j,k$ (where $1 \leq i \leq j \leq k \leq n$), the pattern $\alpha \in \Xi^+$, and the relation $E'$, then $\alpha[i,k]$ is acyclic.

\emph{Only if direction.}
Now assume that cases 1 to 6 do not hold. 
Let $\decomp_1$ be the acyclic decomposition of $\tilde\alpha_1$ and let $\decomp_2$ be the acyclic decomposition of $\tilde\alpha_2$. 
Let $(z_1 \logeq x_1 \cdot y_1)$ be the root atom of $\decomp_1$, and let $(z_2 \logeq x_2 \cdot y_2)$ be the root atom of $\decomp_2$. 
The decomposition of $\tilde\alpha \df (\tilde\alpha_1 \cdot \tilde\alpha_2)$ would be $\decomp \df \decomp_1 \land \decomp_2 \land (z \logeq z_1 \cdot z_2)$, where $z \in \Xi$ is a new variable. 
We illustrate part of the concatenation tree for $\decomp$ in~\cref{fig:AlgoconcatTree2}. 
Due to the fact that $\tilde\alpha_1 \neq \tilde\alpha_2$ it follows that $z_1 \neq z_2$. 
Also, because Cases 3 to 6 do not hold, we know that $z_1 \notin \{x_2,y_2\}$ and $z_2 \notin \{x_1,y_1 \}$. 
However, since $\var(\decomp_1) \intersect \var(\decomp_2) \neq \emptyset$ it follows that there exists some $x \in \var(\decomp_1) \intersect \var(\decomp_2)$ such that $\decomp$ is not $x$-localized. 
Hence $\decomp$ is cyclic. 
Notice that $x_1$ (or $y_1$) \emph{could} be in the set $\{x_2, y_2 \}$. 
But then $z_1$ and $z_2$ are both $x_1$-parents ($y_1$-parents) and $z$ is not an $x_1$-parent ($y_1$-parent).

\paragraph*{Complexity.} 
We first consider the subroutine $\mathsf{IsAcyclic}$. 
The first two if-statements (lines 16 and 18), run in $\bigO(|\alpha|)$ time. 
Then, the for loop (line 20) iterates $\bigO(|\alpha|)$ times.
The if-statements on lines 21, 23, 25, and 27 run in time $\bigO(1)$ as it takes $\bigO(1)$ time to check whether the two factors of $\alpha$ are equal (after linear time preprocessing, see~\cref{compModel}). 
Therefore, $\mathsf{IsAcyclic}$ runs in time $\bigO(|\alpha|)$.

The set $\mathcal{V}$ holds factors of $\alpha$, and therefore $|\mathcal{E}| \leq n^3$, since each $(i,k) \in \mathcal{V}$ has $\bigO(n)$ outgoing edges. 
It follows that the while loop from line 4 to line 14 is iterated $\bigO(|\alpha|^3)$ times. 
The for loop on line 6 is iterated $\bigO(|\alpha|^2)$ times. 
The for loop on line 7 is iterated $\bigO(|\alpha|)$ times. 
Therefore, the algorithm runs in time~$\bigO(|\alpha|^7)$.
\end{proof}

\cref{polytime} gives a polynomial time for the decision problem of whether a terminal-free pattern is acyclic.
Next, we look at finding a decomposition for an acyclic pattern in polynomial time.

\begin{theorem}\label{thm:acyclicPatternAlgo}
If $\alpha \in \Xi^+$ is acyclic, then in time $\bigO(|\alpha|^7)$, we can find an acyclic decomposition $\decomp_{\tilde\alpha} \in \conclog$ of $\alpha$.
\end{theorem}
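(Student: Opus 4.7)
The plan is to augment the decision algorithm from~\cref{polytime} so that, in the same asymptotic running time, it returns an explicit acyclic bracketing $\tilde\alpha\in\brac(\alpha)$, from which the decomposition $\decomp_{\tilde\alpha}\in\conclog$ can be read off in linear time via~\cref{defn:conclogConversion}.

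First I would re-run~\cref{algorithm:acycPat}, but also attach to every factor $(i,k)$ that is added to $V$ a \emph{witness} consisting of the split index $j$ (so that $((i,k),(i,j),(j+1,k))\in E$) together with the label of the clause of the subroutine $\mathsf{IsAcyclic}$ that fired (one of Cases~1--6 identified in the correctness argument of~\cref{polytime}), including, for Cases~3--6, the auxiliary index $x$ that witnesses the required equality of factors. Recording this witness costs only constant time per successful call to $\mathsf{IsAcyclic}$, so the overall running time remains $\bigO(|\alpha|^7)$. Since $\alpha$ is acyclic, $(1,n)\in V$ at termination, and so the witness data is defined for $(1,n)$.

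Next I would build a concrete bracketing by a recursive procedure $\mathsf{Brack}(i,k)$ that returns an acyclic $\tilde\beta\in\brac(\alpha[i,k])$:
\begin{itemize}
\item If $i=k$, return $\alpha_i$.
\item Otherwise, read the witness for $(i,k)$ with split $j$. For Case~1 ($\alpha[i,j]=\alpha[j+1,k]$) let $\tilde\gamma\df\mathsf{Brack}(i,j)$ and return $(\tilde\gamma\cdot\tilde\gamma)$. For Case~2 (disjoint variables) return $(\mathsf{Brack}(i,j)\cdot\mathsf{Brack}(j+1,k))$.
\item For Cases~3--6 the witness supplies an auxiliary index $x$ such that, in the bracketing of one half, a top-level subtree matches the other half as a pattern. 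I recurse first on the half that contains the matching subtree, forcing its top split at $x$, extract that subtree $\tilde\gamma$ from the result, and use $\tilde\gamma$ in place of a freshly computed bracketing for the twin factor. For instance, in Case~3 I compute $\tilde\alpha_1\df\mathsf{Brack}(i,j)=(\tilde\gamma\cdot\tilde\delta)$ with $\tilde\gamma\in\brac(\alpha[i,x])$ and return $(\tilde\alpha_1\cdot\tilde\gamma)$. This directly mirrors the acyclic concatenation trees exhibited in the if-direction of the correctness proof of~\cref{algorithm:IsAcyclic}, so the resulting bracketing is acyclic by~\cref{lemma:cycledistance}.
\end{itemize}
Since every recursive call descends to strictly smaller subfactors of $\alpha$, and each call does a constant amount of work beyond its recursive calls, $\mathsf{Brack}(1,n)$ terminates in time $\bigO(|\alpha|)$ (memoizing identical subcalls is not required, but would save repeated traversals if desired).

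Finally, I apply the construction of~\cref{defn:conclogConversion} to the bracketing $\tilde\alpha\df\mathsf{Brack}(1,n)$ to obtain $\decomp_{\tilde\alpha}$; this step is linear in $|\tilde\alpha|$. The only non-routine part of the argument is the bookkeeping required for Cases~3--6: one must make sure that the recursive call on the ``inner'' half commits to a specific top split so that the shared subtree $\tilde\gamma$ is literally reused -- if one were instead to bracket the two equal factors independently, the resulting bracketing could fail to be acyclic. Because every witness stored during the decision phase already pins down that split, this coordination is free. The total running time is dominated by the decision phase and is therefore $\bigO(|\alpha|^7)$.
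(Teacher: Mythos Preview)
Your overall strategy---extract enough information from the decision phase of \cref{algorithm:acycPat} to drive a top-down reconstruction of an acyclic bracketing---is exactly the paper's strategy, and the complexity accounting is fine. But there is a genuine gap in your handling of Cases~3--6.

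You store \emph{one} witness per node $(i,k)$, namely the split~$j$, the case label, and (for Cases~3--6) the auxiliary index~$x$. When the witness for $(i,k)$ is Case~3, you force the top split of $(i,j)$ at~$x$ and then recurse on $(i,x)$ and $(x{+}1,j)$ using \emph{their} stored node-level witnesses. The problem is that nothing guarantees that $(\mathsf{Brack}(i,x)\cdot\mathsf{Brack}(x{+}1,j))$ is acyclic. The edge $((i,j),(i,x),(x{+}1,j))\in E$ only tells you that \emph{some} acyclic bracketing of $\alpha[i,j]$ with top split at~$x$ exists; it does not say that every pair of acyclic sub-bracketings composes acyclically. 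Concretely, if that edge entered~$E$ via Case~1 (so $\alpha[i,x]=\alpha[x{+}1,j]$) but the stored witnesses for $(i,x)$ and $(x{+}1,j)$ lead to \emph{different} bracketings $\tilde\gamma\neq\tilde\delta$ of the same pattern, then $(\tilde\gamma\cdot\tilde\delta)$ can be cyclic---take $\alpha[i,x]=\alpha[x{+}1,j]=x_1x_2x_3$ with $\tilde\gamma=((x_1\cdot x_2)\cdot x_3)$ and $\tilde\delta=(x_1\cdot(x_2\cdot x_3))$; the resulting decomposition is not $x_1$-localized. So your inductive claim ``$\mathsf{Brack}$ returns an acyclic bracketing'' fails at the forced-split step, because you never looked up which case justified the edge $((i,j),(i,x),(x{+}1,j))$ and hence cannot coordinate the sub-bracketings.

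The paper avoids this by (i) keeping the full edge set~$E$ and re-searching it at every leaf of the tree under construction (so the required ``inner'' split is discovered anew rather than pulled from a stale node witness), and (ii) labelling nodes of the resulting tree by the \emph{pattern} $\alpha[i,j]$ rather than by the interval, and then pruning. Step~(ii) globally identifies all nodes representing the same sub-pattern, which forces the reuse that Cases~1 and~3--6 require. Your approach can be repaired along similar lines: store a witness per \emph{edge} of~$E$ (not per node), and let $\mathsf{Brack}$ take an edge as input so that the forcing cascades correctly.
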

\begin{proof}
Let $(V,E)$ be the resulting relations from executing \cref{algorithm:acycPat} on $\alpha \in \Xi^+$.
If $(1,n) \in V$, then we know that $\alpha$ is acyclic. 
We can then use $V$ and $E$ to derive a concatenation tree $\mathcal{T}$ for some acyclic decomposition $\decomp_{\tilde\alpha} \in \conclog$ of $\tilde\alpha \in \brac(\alpha)$.

\begin{algorithm}
\SetAlgoLined
\SetKwInOut{Input}{Input}
\SetKwInOut{Output}{Output}
\Input{$V,E$}
\Output{A concatenation tree $\mathcal{T}$}

Let $\mathcal{T}$ be an empty concatenation tree;

Let $v_r = (1,n)$ be the leaf node of $\mathcal{T}$;

\While{there exists some leaf node $(i,k)$ of $\mathcal{T}$ where $i \neq k$}{

\For{$j \in \{ i, i+1, \dots, k \} $} {
	\uIf{$\alpha[i,j] = \alpha[j+1,k]$ or $\var(\alpha[i,j]) \intersect \var(\alpha[j+1,k]) = \emptyset$}{
		Add $\{(i,k), (i,j) \}$ and $\{(i,j), (j+1, k) \}$ to $\mathcal{E}$;

		Let $(i,j) < (j+1,k)$;
		
		break;
	}
	
\uElseIf{there exists $x \in V$ such that $((i,j),(i,x),(x+1,j)) \in E$ and $\alpha[i,x]= \alpha[j+1,k]$ or $\alpha[x+1,j] = \alpha[j+1,k]$}{
				Add $\{(i,k), (i,j) \}$ and $\{(i,j), (j+1, k) \}$ to $\mathcal{E}$;

				Let $(i,j) < (j+1,k)$; 

				Add $\{(i,j), (i,x) \}$ and $\{(i,j), (x+1, j) \}$ to $\mathcal{E}$;

				Let $(i,x) < (x+1,j)$;
				
				break;
			}
			\uElseIf{there exists $x \in V$ such that $((j+1,k),(j+1,x),(x+1,k)) \in E$, and $\alpha[i,j] = \alpha[j+1,x]$ or $\alpha[i,j] = \alpha[x+1,k]$}{
				Add $\{(i,k), (i,j) \}$ and $\{(i,j), (j+1, k) \}$ to $\mathcal{E}$;

				Let $(i,j) < (j+1,k)$;

				Add $\{(j+1,k), (j+1,x) \}$ and $\{(j+1,k), (x+1, j) \}$ to $\mathcal{E}$; 

				Let $(j+1,x) < (x+1,j)$;
				
				break;
			}
		}	
}
Return $\mathcal{T}$ \;

\caption{$\mathsf{ConcateTree}(V,E)$. Given $V$ and $E$ from~\cref{algorithm:acycPat}, derives a concatenation tree for some $\tilde\alpha \in \brac(\alpha)$. \label{algorithm:tree}}
\end{algorithm}

During the execution of~\cref{algorithm:tree}, we assume that $\mathcal{V}$ is always updated to be the set of nodes that the edge relation $\mathcal{E}$ uses. 
For intuition, we are essentially taking the relation $E$, which have been computed by~\cref{algorithm:acycPat}, and choosing one binary tree from this set of edges. 
Some care is needed to ensure that the binary tree we choose will result in a concatenation tree for an acyclic decomposition. 
This is why we cannot choose any edge from $E$ recursively.

Once the tree has been computed, we mark each node with a variable: 
\begin{itemize}
\item Mark $(1,n)$ with $\strucvar$, 
\item mark $(i,i)$ with $x$ where $\alpha[i,i] = x$, 
\item and each $(i,j)$, where $i \neq j$ and either $i \neq 1$ or $j \neq n$, is marked with $x_\beta$ where $\beta = \alpha[i,j]$. 
\end{itemize}
We then prune the tree, as defined in~\cref{defn:concatenationTree}. 
The resulting tree is the concatenation tree for some acyclic decomposition of $\tilde\alpha \in \brac(\alpha)$.

\paragraph{Complexity.}
First, we execute the algorithm given in~\cref{polytime} which runs in $\bigO(|\alpha|^7)$.
Note that after the execution of this algorithm, we have a graph $(V,E)$.
There are $\bigO(|\alpha|)$ nodes in a concatenation tree, and given a node $(i,j)$, where $i \neq j$, finding an edge $((i,k), (i,j), (j+1,k)) \in E$ takes at most $\bigO(|\alpha|^3)$ time; since there are at most $|\alpha|$ such values for $j$, and making sure the relative conditions hold (see~\cref{algorithm:tree}) takes $\bigO(|\alpha|^2)$ time, as we have previously discussed when discussing the time complexity for~\cref{algorithm:acycPat}.
Therefore, deriving the concatenation tree from $(V,E)$, without pruning, takes $\bigO(|\alpha|^4)$ time. 
Finally, pruning the concatenation tree takes $\bigO(|\alpha^2)$ time, since we consider each variable that labels a node, traverse the tree to find the $\ll$-maximum (see~\cref{defn:concatenationTree}), and prune accordingly. 
Therefore, we can derive the concatenation tree from $V$ and $E$ in time $\bigO(|\alpha|^4)$.
\end{proof}
In this section, we have given a characterization for $\tilde\alpha \in \brac$ to be acyclic, which lead to a polynomial-time algorithm that determines whether a terminal-free pattern is acyclic.
\section{FC-CQ Decomposition}\label{sec:acycCQFC}
In this section, we generalize the idea of decomposing patterns to decomposing $\cpfc$s. 
The main result of this section is a polynomial-time algorithm that determines whether an $\cpfc$ can be decomposed into an acyclic $\conclog$.

Decomposing a word equation $(x \logeq \alpha)$ where $x \in \Xi$ and $\alpha \in (\Xi \setminus \{x\})^+$ is the same as decomposing $\alpha$, but whereas $\strucvar$ is the root variable when decomposing a pattern, we use $x$ as the root variable when decomposing $(x \logeq \alpha)$. 

In~\cref{sec:decomp}, we studied the decomposition of terminal-free patterns. 
If~$\varphi$ is an $\cpfc$ of the form $\cqhead{\vec{x}} \bigwedge_{i=1}^n \eta_i$, then the right hand side of some~$\eta_i$ may not be terminal-free. 
Therefore, before defining the decomposition of $\cpfcreg$s, we define a way to \emph{normalize} $\cpfcreg$s in order to better utilize the techniques of~\cref{sec:decomp}. 

\begin{definition}
\label{defn:normalization}
We call an $\cpfc$ with body $\bigwedge_{i=1}^n (x_i \logeq \alpha_i)$ \index{normalized $\cpfc$}\emph{normalized} if for all $i,j \in [n]$, we have that $\alpha_i \in \Xi^+$, $x_i \notin \var(\alpha_i)$, $\strucvar \notin \var(\alpha_i)$, and $\alpha_i = \alpha_j$ if and only if $i = j$.
An $\cpfcreg$ with body $\bigwedge_{i=1}^n (x_i \logeq \alpha_i) \land \bigwedge_{j=1}^m (y_j \regconst \gamma)$ is \emph{normalized} if the subformula $\bigwedge_{i=1}^n (x_i \logeq \alpha_i)$ is normalized.
\end{definition}

Since we are interested in polynomial-time algorithms, the following lemma allows us to assume that we can normalize $\cpfc$s without affecting complexity~claims.

\begin{restatable}[]{lemma}{normalization}
\label{lemma:normalization}
Given $\varphi \in \cpfcreg$, we can construct an equivalent, normalized $\cpfcreg$ in time $\bigO(|\varphi|^2)$.
\end{restatable}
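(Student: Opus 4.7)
The plan is to prove the lemma by a sequence of equivalence-preserving local rewriting passes, each executable in linear time and adding at most a linear amount of new syntax, so that the four normalization conditions from~\cref{defn:normalization} (terminal-freeness, no occurrence of $x_i$ on the right-hand side, no occurrence of $\strucvar$ on the right-hand side, and pairwise distinct right-hand sides) can be enforced successively while keeping the total running time quadratic.

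First I would eliminate terminals: for every terminal symbol $\mathtt{a} \in \Sigma$ occurring on some right-hand side, introduce a single fresh variable $z_\mathtt{a}$, add the regular constraint $(z_\mathtt{a} \regconst \mathtt{a})$, and replace every such occurrence of $\mathtt{a}$ by $z_\mathtt{a}$. The resulting query is equivalent and of size $\bigO(|\varphi|)$. Next, I would deal with atoms in which $x_i$ or $\strucvar$ appears on the right-hand side using the length argument already exploited in the proof of \cref{lemma:StrucNormalForm}. If $\alpha_i$ contains $x_i$, then $\strucvar$-safety combined with $\subs(x_i) = \subs(\alpha_i)$ forces every other symbol of $\alpha_i$ to map to $\emptyword$, so I replace the atom by $(y \regconst \emptyword)$ for every variable $y \in \var(\alpha_i) \setminus \{x_i\}$ and drop the atom itself (or delete it entirely when $\alpha_i = x_i$). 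The case $\strucvar \in \var(\alpha_i)$ is symmetric: the length argument forces $\subs(x_i) = \subs(\strucvar)$ and every other symbol of $\alpha_i$ to map to $\emptyword$, so I replace the atom by $(\strucvar \logeq x_i \cdot z) \land (z \regconst \emptyword) \land \bigwedge_y (y \regconst \emptyword)$ with $z$ fresh, where $y$ ranges over the remaining variables of $\alpha_i$. If after the terminal-elimination step such an atom still contains a variable $z_\mathtt{a}$ that is constrained to the non-empty word $\mathtt{a}$, the length argument is violated and the whole query is unsatisfiable; I detect this syntactically and replace the entire formula by a canonical unsatisfiable atom such as $(\strucvar \logeq z') \land (z' \regconst \emptyset)$.

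Finally, to enforce pairwise distinctness of the right-hand sides I introduce, for each remaining atom $(x_i \logeq \alpha_i)$, a fresh padding variable $p_i$ together with the constraint $(p_i \regconst \emptyword)$ and replace $\alpha_i$ by $\alpha_i \cdot p_i$. Because each $p_i$ is unique to its atom, the right-hand sides become pairwise distinct, and semantics is preserved because $\subs(p_i) = \emptyword$ under every satisfying substitution. Each rewriting pass runs in linear time and adds at most a linear number of new atoms, variables, and constraints; since we apply only a constant number of passes and each pass processes a formula of size $\bigO(|\varphi|)$, the overall running time is $\bigO(|\varphi|^2)$. The main obstacle I expect is organising the $\strucvar$-case and the $x_i$-case so that each rewrite is genuinely local and does not trigger a cascade of further rewrites; performing terminal elimination first avoids this, as the subsequent right-hand sides are then purely over $\Xi$ and the length argument reduces to a straightforward syntactic inspection.
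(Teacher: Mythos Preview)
Your overall strategy matches the paper's: a constant number of local rewriting passes that enforce the four normalization conditions in sequence (terminals, $x_i$-occurrences, $\strucvar$-occurrences, distinctness), each running in linear time on a formula of size $\bigO(|\varphi|)$. The paper also does terminal elimination first (using a fresh variable per maximal terminal block rather than per symbol) and handles the pathological cases with the same length argument. Your distinctness step is genuinely different: you pad each right-hand side with a fresh $\emptyword$-constrained variable, whereas the paper removes duplicate atoms and, when two atoms $(x_i \logeq \alpha)$ and $(x_j \logeq \alpha)$ share a right-hand side, replaces the second by $(x_j \logeq x_i)$. Both work; yours is arguably simpler and even avoids the pairwise comparison that is the source of the paper's $\bigO(|\varphi|^2)$ bound.

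There is one gap. In the case $x_i \in \var(\alpha_i)$ you drop the word equation entirely and keep only regular constraints $(y \regconst \emptyword)$. But the syntactic definition of $\cpfcreg$ requires every free variable to occur in some word equation; after you delete the atom, $x_i$ itself---and any $y \in \var(\alpha_i)\setminus\{x_i\}$ that occurred only in this atom---may survive only in the head or in a regular constraint, so the output is not a well-formed $\cpfcreg$. The same issue affects the $y$-variables in your $\strucvar$-case. The paper sidesteps this by replacing the offending atom with $(x_i \logeq z)$ for a fresh $z$ and by using word equations $(y \logeq \emptyword)$ rather than constraints, so every such variable still anchors a word equation. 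Your fix is just as easy: retain a trivial anchoring equation (e.g.\ $(x_i \logeq z_i)$ with $z_i$ fresh) for each variable that would otherwise be orphaned. This costs only a linear amount of extra syntax and leaves the rest of your argument, including the padding pass and the time bound, intact.
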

\begin{proof}
Let $\varphi \df \cqhead{\vec{x}} \bigwedge_{i=1}^n \eta_i$ be an $\cpfc\noconstr$. We give a way to construct a normalized $\varphi' \in \cpfcreg$ where $\varphi'$ is equivalent to $\varphi$. 

\subparagraph*{Step 1.} For all $i \in [n]$ assume that $\eta_i = (x \logeq \alpha)$ where $\alpha \in (\Sigma \union \Xi)^*$. We now consider the unique factorization $\alpha \df \beta_1 \cdot \beta_2 \cdots \beta_k$ for some $k \in \mathbb{N}$, where for all $\beta_j$ where $j \in [k]$, either $\beta_j \in \Xi^+$ or $\beta_j \in \Sigma^+$. Furthermore, if $\beta_{j} \in \Xi^+$ then $\beta_{j+1} \in \Sigma^+$, and if $\beta_j \in \Sigma^+$ then $\beta_{j+1} \in \Xi^+$ for all $j \in [k-1]$. We then replace each $\beta_i$ where $\beta_i \in \Sigma^+$ with a new variable $z_i \in \Xi$ and add the regular constraint $(x_i \regconst \beta_i)$ to $\varphi$. This takes linear time by scanning each $\eta_i$ from left to right, and replacing each $\beta_i \in \Sigma^+$ with a new variable and adding an extra word equation.

\subparagraph*{Step 2.} While there exists an atom of $\varphi$ of the form $(x \logeq \alpha_1 \cdot x \cdot \alpha_2)$, we define $\psi \in \cpfc\noconstr$ with the following body:
\[(x \logeq z) \land \bigwedge_{y \in \var(\alpha_1 \cdot \alpha_2)} (y \logeq \emptyword),\]
where $z \in \Xi$ is a new variable. We then replace $(x \logeq \alpha_1 \cdot x \cdot \alpha_2)$ in $\varphi$ with $\psi$. We can show the $\psi$ is equivalent to $(x \logeq \alpha_1 \cdot x \cdot \alpha_2)$ by a simply length argument. Given any $\sigma$ which satisfies $(x \logeq \alpha_1 \cdot x \cdot \alpha_2)$, we have that $|\sigma(x)| \mathrel{=} |\sigma(\alpha_1)| \mathrel{+} |\sigma(x)| \mathrel{+} |\sigma(\alpha_2)|$ and hence, $|\sigma(\alpha_1)| + |\sigma(\alpha_2)| = 0$, which implies that $\sigma(\alpha_1) = \sigma(\alpha_2) = \emptyword$. 

\subparagraph*{Step 3.}
While there exists an atom of $\varphi$ of the form $\eta_i = (x_i \logeq \alpha_1 \cdot \strucvar \cdot \alpha_2)$, we can replace $\eta_i$ with the subformula $\psi$ with body:
\[ (\strucvar \logeq x_i) \land \bigwedge_{y \in \var(\alpha_1 \cdot \alpha_2)} (y \logeq \emptyword). \]

We show that replacing $\eta_i$ with $\psi$ results in an equivalent subformula using a length argument. It follows that $|\sigma(x_i)| \mathrel{=} |\sigma(\alpha_1)| \mathrel{+} |\sigma(\strucvar)| \mathrel{+} |\sigma(\alpha_2)|$. Furthermore, we know that $|\sigma(x_i)| \leq |\sigma(\strucvar)|$ and therefore it must hold that $|\sigma(x_i)| = |\sigma(\strucvar)|$, which implies that $\sigma(x_i) = \sigma(\strucvar)$. Therefore, $|\sigma(\alpha_1)| \mathrel{+} |\sigma(\alpha_2)| = 0$ which can only hold if $\sigma(\alpha_1) \cdot \sigma(\alpha_2) = \emptyword$. 

The process defined takes polynomial time, since for each atom, we linearly scan the right-hand side. If we find $\strucvar$, then we replace a word equation with $\psi$, as described above. Since we perform a linear scan, this takes $\bigO(|\varphi|)$ time.

\subparagraph*{Step 4.} If two atoms are identical, then one can be removed. If $\eta_i = (x_i \logeq \alpha)$ and $\eta_j = (x_j \logeq \alpha)$ where $x_i \neq x_j$, then we can replace $\eta_j$ in $\varphi$ with $(x_j \logeq x_i)$. This takes $\bigO(|\varphi|^2)$ time by considering every pair of atoms.

We now conclude the proof.
As we are always replacing a subformula of $\varphi$ with an equivalent subformula, it follows that the result of the above construction is equivalent and it is normalized. Furthermore, we have shown that the re-writing procedure defined takes $\bigO(|\varphi|^2)$ time, and the blow up in size is $\bigO(|\varphi|)$.
\end{proof}

The following proof illustrates the rewriting process of queries given in the proof of~\cref{lemma:normalization}.

\begin{example}
We define an $\cpfcreg$ along with an equivalent normalized $\cpfcreg$:
\begin{align*}
\varphi \df& \cqhead{\vec{x}} (x_1 \logeq x_2 \cdot \strucvar \cdot x_2) \land (x_4 \logeq x_4) \land  (x_3 \logeq \mathtt{aab}), \\
\varphi' \df& \cqhead{\vec{x}} (\strucvar \logeq x_1) \land (x_2 \regconst \emptyword) \land (x_4 \logeq z_2) \land  (x_3 \logeq z_1) \land (z_1 \regconst \mathtt{aab}).
\end{align*}
\end{example}

\index{decomposition!FCCQ@$\cpfc$}
We now generalize the process of decomposing patterns to decomposing $\cpfc$s. 
For every $\cpfc$ $\varphi \df \cqhead{\vec{x}} \bigwedge_{i=1}^{n} \eta_i$, we say that $\decomp_\varphi \in \conclog$ where $\decomp_\varphi \df \cqhead{\vec{x}} \bigwedge_{i=1}^{n}  \decomp_i$ is a \emph{decomposition} of $\varphi$ if every $ \decomp_i$ is a decomposition of $\eta_i$ and, for all $i,j\in [n]$ with $i\neq j$, the sets of introduced variables for $\decomp_i$ and $\decomp_j$ are disjoint.
Note that the free variables for $\varphi$ and $\decomp_\varphi$ are the same.

\begin{example}
\label{example:LVDecomp}
Consider the following $\cpfc$ along with a decomposition for each of the atoms:
\begin{align*}
\varphi &\df  \cqhead{\vec{x}} (x_1 \logeq y_1 \cdot y_2 \cdot y_3) \land (x_2 \logeq y_2 \cdot y_3 \cdot y_3 \cdot y_4), \\
\decomp_1 &\df (x_1 \logeq y_1 \cdot z_1) \land (z_1 \logeq y_2 \cdot y_3), \text{ and } \\ \decomp_2 &\df  (x_2 \logeq z_2 \cdot y_4) \land (z_2 \logeq z_3 \cdot y_3) \land (z_3 \logeq y_2 \cdot y_3). 
\end{align*}
Therefore, $\decomp_\varphi \df \cqhead{\vec{x}} \decomp_1 \land \decomp_2$ is a decomposition of $\varphi$. 
\end{example}

\subsection{Acyclic FC-CQs}
If every atom of $\varphi \in \cpfc$ is acyclic, then $\varphi$ does not necessarily have tractable model checking.
If this were the case, then any decomposition $\decomp_{\tilde\alpha} \in \conclog$ of some $\tilde\alpha \in \brac$ would have tractable model checking (because every word equation of the form $z \logeq x \cdot y$ is acyclic).
This would imply that the membership problem for patterns can be solved in polynomial time, which contradicts~\cite{ehrenfreucht1979finding}, unless $\mathsf{P} = \mathsf{NP}$.
Furthermore, weak acyclicity is not sufficient for tractable model checking (refer back to~\cref{npcomplete-modelcheck}).
Therefore, we require a more refined notion of acyclicity for $\cpfc$s.

\begin{definition}[Acyclic $\cpfc$s]\index{acyclic $\cpfc$}
If $\decomp_\varphi \in \conclog$ is a decomposition of $\varphi \in \cpfc$, we say that $\decomp_\varphi$ is \emph{acyclic} if there exists a join tree for $\decomp_\varphi$.
Otherwise, $\decomp_\varphi$ is \emph{cyclic}. 
If there exists an acyclic decomposition of $\varphi$, then we say that $\varphi$ is \emph{acyclic}.
Otherwise, $\varphi$ is \emph{cyclic}. 
\end{definition}

Recall that, since $\strucvar$ is always mapped to $w$, we can consider $\strucvar$  a constant symbol.
Therefore, if $T \df (V,E)$ is a join tree for some decomposition of $\varphi$, then there can exist two nodes that both contain $\strucvar$, yet it is not necessary for all nodes on the path between these two nodes to also contain $\strucvar$. 
Referring back to~\cref{example:LVDecomp}, we can see that $\varphi$ is acyclic by executing the GYO algorithm on the decomposition. 
Our next focus is to study which $\cpfc$s are acyclic, and which are not. 

\index{forest}
If $T \df (V,E)$ is a tree and $V' \subset V$, then the induced subgraph of $T$ on $V'$ is the graph $G \df (V', E')$ where we have the edge $e \in E'$ if and only if $e \in E$ and the two endpoints of $e$ are in the set $V'$. Notice that $G$ is not necessarily a tree because $G$ may not be connected.
However, it is always a so-called \emph{forest}.
For our purposes, a forest is an undirected graph, with no cycles.
The only difference between a forest and a tree is that a forest may not be connected.

\begin{restatable}[]{lemma}{subtree}
\label{lemma:subtree}
If $\decomp_\varphi \in \conclog$ is a decomposition of $\varphi \df \cqhead{\vec{x}} \bigwedge_{i=1}^n \eta_i$, and we have a join tree $T \df (V,E)$ for $\decomp_\varphi$, then we can partition $T$ into $T^1, T^2, \dots T^n$ such that for each $i \in [n]$, we have that $T^i$ is a join tree for a decomposition of $\eta_i$.
\end{restatable}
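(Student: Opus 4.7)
The plan is to construct the partition $\{T^1, \ldots, T^n\}$ by grouping atoms of $\decomp_\varphi$ according to which subformula $\decomp_i$ they belong to, and then show each induced subgraph is a join tree for $\decomp_i$. First I would observe that this grouping is well-defined: because the definition of the decomposition $\decomp_\varphi$ requires that the sets of introduced variables of the $\decomp_i$ are pairwise disjoint, every atom of $\decomp_\varphi$ belongs to exactly one $\decomp_i$ (we can identify $\decomp_i$ by which introduced variable appears on the left-hand side, or, in the case of the root atom, by the free variable $x_i$). Call the resulting vertex partition $V^1,\ldots,V^n$, and let $T^i$ be the subgraph of $T=(V,E)$ induced by $V^i$.

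The heart of the proof is showing that each $T^i$ is connected. The key observation is that if $z$ is an introduced variable unique to $\decomp_i$, then $z$ appears only in atoms belonging to $V^i$. Hence, for any two atoms $\chi,\chi'\in V^i$ sharing such a variable $z$, the join tree property of $T$ forces every atom on the unique $T$-path from $\chi$ to $\chi'$ to also contain $z$, and consequently to lie in $V^i$. To extend this from ``locally connected'' to ``globally connected'', I would use the concatenation-tree structure of $\decomp_i$ from \cref{defn:concatenationTree}: any non-root node $v$ of this concatenation tree is adjacent to its parent $v'$, and the atoms $\atom(v)$ and $\atom(v')$ share the label $\labelFunction(v)$, which is either an introduced variable of $\decomp_i$ or (if $v$ is a leaf's parent) the left-hand side of $\atom(v)$ itself, which is still an introduced variable unique to $\decomp_i$. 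Hence adjacent atoms in the concatenation tree of $\decomp_i$ are joined by a $T$-path lying entirely in $V^i$, and a straightforward induction on the concatenation tree yields that $V^i$ is connected in $T$.

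Since $T^i$ is a connected induced subgraph of the tree $T$, it is itself a tree. The join tree property for $T^i$ with respect to $\decomp_i$ follows at once: if a variable $y$ appears in two atoms of $V^i$, the join tree property of $T$ gives a $T$-path between them whose atoms all contain $y$, and by the connectedness argument this path already lies inside $T^i$. Together with the fact that $V = \bigcupdot_{i=1}^n V^i$, this gives the claimed partition into join trees for decompositions of the~$\eta_i$.

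The main obstacle is the boundary case of leaves in the concatenation tree of $\decomp_i$, whose atoms have the form $(z \logeq y_1 \cdot y_2)$ with $y_1,y_2\in\var(\alpha_i)$; such variables $y_1,y_2$ may be shared with other $\decomp_j$, so they cannot be used to trace a $V^i$-internal path. The resolution is to always trace paths upward in the concatenation tree using the introduced variable $z$ (the left-hand side), which is unique to $\decomp_i$; this ensures the argument passes through leaves correctly. A secondary subtlety is the universe variable $\strucvar$: because $\strucvar$ is treated as a constant and need not satisfy the join tree connectivity condition, no issue arises when $\strucvar$ occurs in atoms across several $V^i$.
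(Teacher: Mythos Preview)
Your proposal is correct and follows essentially the same approach as the paper: define $T^i$ as the induced subgraph of $T$ on the atoms of $\decomp_i$, and use the fact that adjacent non-leaf nodes in the concatenation tree of $\decomp_i$ share an introduced variable (unique to $\decomp_i$) to conclude that the $T$-path between them stays within $V^i$. The paper phrases the connectivity argument as a contradiction using the path between two arbitrary atoms via their lowest common ancestor, while you phrase it as an induction up to the root; the underlying idea is identical, and your observation that a connected induced subgraph of a tree automatically inherits the join-tree property is exactly how the paper dispatches that final step.
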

\begin{proof}
Let $\varphi \in \cpfc\noconstr$ be an acyclic formula defined as $\varphi \df \cqhead{\vec{x}} \bigwedge_{i=1}^{n} \eta_i$. 
Let $\decomp_\varphi \in \conclog\noconstr$ be an acyclic decomposition of $\varphi$ and let $T \df (V, E)$ be a join tree of $\decomp_\varphi$. 
By definition, $\decomp_\varphi \df \cqhead{\vec{x}} \bigwedge_{i=1}^{n} \decomp_i $ where $\decomp_i$ is a decomposition of $\eta_i$ for each $i \in [n]$. 
Since $V$ contains all atoms of $\decomp$, it follows that all atoms of $\decomp_i$ are in $V$.

Let $T^i \df (V^i, E^i)$ be the induced subgraph of $T$ on the atoms of $\decomp_i$. 
We now prove that $T^i$ is a join tree for $\decomp_i$. 
By definition, we know that all atoms of $\decomp_i$ are present in $T^i$ and that no cycles exist in $T^i$ (since it is a subgraph of a tree $T$). 
Therefore, to show that the resulting structure is a join tree it is sufficient to show that this structure is connected. 

We prove that $T^i$ is connected by a contradiction. 
Assume $(z_1 \logeq z_2 \cdot z_3), (z_4 \logeq z_5 \cdot z_6) \in V^i$ are not connected. 
Let $\mathcal{T} \df (\mathcal{V}, \mathcal{E}, <, \Gamma, \labelFunction, v_r)$ be the concatenation tree for $\decomp_i$. 
Let $v_1,v_n \in \mathcal{V}$ be non-leaf nodes of $\mathcal{T}$ such that $\atom(v_1) = (z_1 \logeq z_2 \cdot z_3)$ and $\atom(v_n) = (z_4 \logeq z_5 \cdot z_6)$. Let $(v_1, v_2, \dots, v_n)$ be the sequence of nodes which exist on the path in the concatenation tree from $v_1$ to $v_n$. 
Let $k \in [n]$ such that $v_k \in \mathcal{V}$ is the lowest common ancestor of $v_1$ and $v_n$. 

Notice that $\atom(v_i)$ and $\atom(v_{i+1})$ for all $i \in [k-1]$ share the variable that labels $v_i$. 
Therefore, since $T$ is a join tree, these nodes are connected via a path where each node that lies on that path contains the variable that labels $v_i$. 

We know that any node that is removed does not contain the variable $v_i$ since it is an introduced variable for $\decomp_i$.
Therefore, the variable that labels~$v_i$ is not present in any atom of $\decomp_j$ for any $j \in [n] \setminus \{ i \}$. 
Hence, $\atom(v_i)$ and $\atom(v_{i+1})$ must be connected for all $i \in [k-1]$ in the structure resulting from the above manipulating the join tree. 
Thus, $\atom(v_1)$ and $\atom(v_k)$ are connected in this structure by transitivity. 
The analogous reasoning means that $\atom(v_n)$ and $\atom(v_k)$ are connected in $T^i$. 
Hence, $\atom(v_1)$ and $\atom(v_n)$ is connected in the resulting structure and we have reached the desired contradiction. 
If $v_1$ is an ancestor of $v_n$ (or vice versa), then the fact that $\atom(v_1)$ and $\atom(v_n)$ are connected in $T^i$ follows trivially. 

Consequently, there is a subtree of $T \df (V,E)$ that is a join tree for the decomposition of $\eta_i$. 
Due to the fact that the body of $\decomp_\varphi$ is $\bigwedge_{i=1}^n \decomp_i$ where $\decomp_i$ is a decomposition of $\eta_i$ such that the set of introduced variables for $\decomp_i$ is disjoint from the introduced variables for $\decomp_j$, where $i \neq j$, it follows that $V^i \intersect V^j = \emptyset$ for $T^i \df (V^i, E^i)$ and $T^j \df (V^j, E^j)$. 
\end{proof}

Our next consideration is sufficient (but not necessary) conditions for an $\cpfc$ to be cyclic.
Let $\varphi \df \cqhead{\vec{x}} \bigwedge_{i=1}^n \eta_i$ be a normalized $\cpfc$. 
Recall that a join tree $T \df (V,E)$ for $\varphi$ where $V = \{\eta_i \mid i \in [n]\}$ is called a weak join tree. 
If there exists a weak join tree for $\varphi$, then we say that $\varphi$ is weakly acyclic. Otherwise, $\varphi$ is weakly cyclic. 
As seen in~\cref{npcomplete-modelcheck}, weak acyclicity is not sufficient for tractability.

\begin{restatable}[]{lemma}{CyclicConditions}
\label{lemma:CyclicConditions}
Let $\varphi \df \cqhead{\vec{x}} \bigwedge_{i=1}^{n} \eta_i$ be a normalized $\cpfc$. If any of the following conditions holds, then $\varphi$ is cyclic: 
\begin{enumerate}
\item $\varphi$ is weakly cyclic,
\item $\eta_i$ is cyclic for any $i \in [n]$,
\item $|\var(\eta_i) \intersect \var(\eta_j)| > 3$ for any $i, j \in [n]$ where $i \neq j$, or
\item $|\var(\eta_i) \intersect \var(\eta_j)| = 3$, and $|\eta_i| > 3$ or $|\eta_j| > 3$ for any $i, j \in [n]$ where $i \neq j$. 
\end{enumerate}\leavevmode
\end{restatable}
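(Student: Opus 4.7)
The plan is to use Lemma~\ref{lemma:subtree} as the workhorse for all four cases. Given any acyclic decomposition $\decomp_\varphi \in \conclog$ with join tree $T$, that lemma partitions the vertices of $T$ into subtrees $T^1,\dots,T^n$ where each $T^i$ is a join tree of a decomposition $\decomp_i$ of $\eta_i$. Since $T$ is a tree and the $T^i$ are connected and disjoint, contracting each $T^i$ to a single point yields a quotient tree $T_{\mathsf{quot}}$ on $\{\eta_1,\dots,\eta_n\}$ whose $n-1$ edges correspond bijectively to the edges of $T$ that cross subtrees. I will reason against this quotient tree together with the running intersection property of $T$, exploiting the fact that an introduced variable of $\decomp_k$ is fresh and therefore cannot coincide with an original variable of any $\eta_m$.

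\textbf{Conditions 1 and 2.} Condition 2 is immediate: if $\eta_i$ were cyclic, Lemma~\ref{lemma:subtree} would produce an acyclic join tree $T^i$ for a decomposition of $\eta_i$, directly contradicting cyclicity. For condition 1 I will prove the contrapositive -- that an acyclic decomposition yields a weak join tree -- by showing $T_{\mathsf{quot}}$ satisfies the running intersection property. For a variable $v$ shared by several $\eta_i$, I will verify that the set of $T^k$ that contain a $v$-atom coincides with $\{k : v\in\var(\eta_k)\}$; the nontrivial inclusion uses that $v$, being an original variable globally, cannot be an introduced variable of any $\decomp_k$. Connectedness of this set in $T_{\mathsf{quot}}$ then follows by contracting the connected $v$-subtree of $T$ given by the running intersection property of $T$.

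\textbf{The shared-variable principle for conditions 3 and 4.} The key observation is this: for any $i\neq j$ and any $v\in\var(\eta_i)\cap\var(\eta_j)$, the $v$-containing atoms of $T$ form a connected subtree that meets both $T^i$ and $T^j$, hence must cross every edge along the unique path in $T_{\mathsf{quot}}$ from $T^i$ to $T^j$. Consequently, both endpoints of the first connecting edge on this path contain $v$. Applying this to every shared variable simultaneously, the atom $\chi$ of $T^i$ incident to the first connecting edge must contain \emph{every} variable in $\var(\eta_i)\cap\var(\eta_j)$; symmetrically, the analogous atom $\chi'$ in $T^j$ does too.

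\textbf{Finishing conditions 3 and 4.} Since $\chi$ has the form $w\logeq u\cdot v$, it has at most three distinct variables, so condition 3 is immediate: $|\var(\eta_i)\cap\var(\eta_j)|>3$ is impossible. For condition 4, exactly three shared variables $\{v_1,v_2,v_3\}$ force $\chi$ to have these as its three distinct variables. I will then do a case split on whether $\chi$ is the root atom of $\decomp_i$. If $\chi$ is non-root, its left-hand side $w$ is an introduced variable of $\decomp_i$, hence fresh and not in $\var(\eta_j)$; but $w\in\{v_1,v_2,v_3\}\subseteq\var(\eta_j)$ -- contradiction. If $\chi$ is the root, then $w=x_i$ and $\chi=x_i\logeq u\cdot v$ with $u,v$ shared (hence original, not introduced) variables; but the root atom of $\decomp_i$ corresponds to the top-level split of $\tilde\alpha_i$, and a split in which both halves are already single original variables forces $|\alpha_i|=2$ and so $|\eta_i|=3$. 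Running the same argument on $\chi'$ forces $|\eta_j|=3$. Together these contradict ``$|\eta_i|>3$ or $|\eta_j|>3$''. I expect the last step -- formally ruling out that the top-level split yields two non-introduced variables when $|\alpha_i|>2$ -- to be the main obstacle, since it requires a careful appeal to the construction of $\decomp_{\tilde\alpha_i}$ (cf. Definition~\ref{defn:conclogConversion}) to argue that any side of length at least two must be represented by a fresh introduced variable in the root atom.
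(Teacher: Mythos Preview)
Your proposal is correct and takes essentially the same approach as the paper, using Lemma~\ref{lemma:subtree} to partition the join tree into subtrees $T^i$ and then arguing that the atom of $T^i$ incident to the connecting edge toward $T^j$ must contain all of $\var(\eta_i)\cap\var(\eta_j)$. For Condition~4 the paper's phrasing is slightly more direct---it observes that when $|\eta_i|>3$ every atom of $\decomp_i$ contains at least one introduced variable (the left-hand side for non-root atoms; one of the right-hand sides for the root atom, since $|\alpha_i|\geq 3$ forces some side of the top-level split to have length $\geq 2$), so at most two of its variables can lie in $\var(\eta_j)$---but this is exactly what your root/non-root case split unpacks, and your anticipated obstacle dissolves for the same reason.
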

\begin{proof}
Let $\varphi \df \cqhead{\vec{x}} \bigwedge_{i=1}^n \eta_i$ be an $\cpfc\noconstr$, and let $\decomp_\varphi$ be an acyclic decomposition of $\varphi$. 
If $T \df (V,E)$ is a join tree of $\decomp_\varphi$, then for each $i \in [n]$, we use  $T^i \df (V^i, E^i)$ to denote the subtree of $T$ that is a join tree for the decomposition of $\eta_i$. 
From~\cref{lemma:subtree}, we can conclude that $T^i$ and $T^j$ are disjoint for all $i,j \in [n]$ where $i \neq j$.

We now prove that if any of the conditions stated in the lemma statement hold, then $\varphi$ is cyclic.

\subparagraph*{Condition 1.} For sake of a contradiction, assume $\varphi$ is an acyclic, normalized $\cpfc\noconstr$ which is weakly cyclic. 
Let $T \df (V, E)$ be a join tree for $\decomp_\varphi$. From~\cref{lemma:subtree}, it follows that for each $i \in [n]$ there exists a subtree $T^i$ of $T_\varphi$ which is a join tree for a decomposition of $\eta_i$. 
From $T$, we construct a weak join tree for $\varphi$. 
Let $T_w \df (V_w,E_w)$ where $V_w \df \{ \eta_i \mid i \in [n] \}$, and $\{ \eta_i, \eta_j \} \in E_w$ if and only if there is an edge $\{v_i, v_j\} \in E$ where $v_i \in V^i$ and $v_j \in V^j$ for each $i,j \in [n]$ where $i \neq j$. 
We now prove that this is a weak join tree for $\varphi$. 

For sake of contradiction, assume that $T_w$ is not a weak join tree for $\varphi$. 
By the procedure used to compute $T_w$ we know that $V_w = \{ \eta_i \mid i \in [n] \}$, and that this structure is a tree (because if $T_w$ is not a tree, then $T$ is not a tree). 
Therefore, if $T_w$ is not a join tree, it follows that there exists $\eta_i \in V_w$ and $\eta_j \in V_w$ such that there is some variable $x \in \var(\eta_i) \intersect \var(\eta_j)$ where some node $\eta_k \in V_w$ exists on the path between $\eta_i$ and $\eta_j$ in $T_w$, and $x \notin \var(\eta_k)$. 
If this is the case, then $x \in \var(\decomp_i) \intersect \var(\decomp_j)$, and $x \notin \var(\decomp_k)$. 
Hence there is a path between two nodes in $T$ which contain the variable $x \in \var(\decomp_\varphi)$, which are atoms of $\decomp_i$ and $\decomp_j$, yet there is a node on the path between these nodes which does not contain the variable $x$, which is some atom of $\decomp_k$. 
Therefore, $T$ is not a join tree and we have reached a contradiction. 
Consequently, $T_w \df (V_w, E_w)$ is a weak join tree for $\varphi$ and hence if $\varphi$ is weakly cyclic, we can conclude that $\varphi$ is cyclic.

\subparagraph*{Condition 2.} 
This follows directly from~\cref{lemma:subtree}. 
Since for any join tree $T \df (V,E)$ of a decomposition of $\varphi$, there exists a subtree which is a join tree for some decomposition of $\eta_i$, we can conclude that if $\eta_i$ is cyclic, then $\varphi$ is cyclic.

\subparagraph*{Condition 3.} 
For sake of a contradiction, assume that $\varphi$ is acyclic, and assume that $|\var(\eta_i) \intersect \var(\eta_j)| > 3$ for some $i, j \in [n]$ where $i \neq j$. 
Let $T \df (V, E)$ be a join tree for $\decomp_\varphi$. 
Let $T^i$ and $T^j$ be subtrees of $T$ which are join trees for the decompositions of $\eta_i$ and $\eta_j$ respectively. 
Note that these trees are disjoint. 

Let $(z_1 \logeq x_1 \cdot y_1)$ and $(z_2 \logeq x_2 \cdot y_2)$ be nodes of $T^i$ and $T^j$ respectively, such that  $(z_1 \logeq x_1 \cdot y_1)$ is the closest node (with regards to distance) to any node in $T^j$, and $(z_2 \logeq x_2 \cdot y_2)$ is the closest node to any node in $T^i$, these nodes are well defined because $T$ is a tree. Notice that $|\var(z_1 \logeq x_1 \cdot y_1) \intersect \var(z_2 \logeq x_2 \cdot y_2)| \leq 3$. Therefore, there is a node of $T^i$ which shares a variable with some node of $T^j$, yet this variable does not exist on the path between these nodes, since $(z_1 \logeq x_1 \cdot y_1)$ must exist on such a path.
Thus, we have a contradiction.

\subparagraph*{Condition 4.} 
Again, we work towards a contradiction. 
Assume that $\varphi$ is acyclic and there exists $i, j \in [n]$, where $i \neq j$, such that $|\var(\eta_i ) \intersect \var(\eta_j)| = 3$ and $|\eta_i|>3$ (the other case is symmetric). 
Let $T \df (V, E)$ be a join tree for $\decomp_\varphi \in \conclog\noconstr$. Let $T^i$ be the subtree of $T$ which is a join tree for $\eta_i$ and let $T^j$ be the subtree of $T$ which is a join tree for $\eta_j$. 
Since we have that $|\eta_i| > 3$, we decompose $\eta_i$ into $\decomp_i \in \conclog\noconstr$. Note that for each atom of $\decomp_i$, there is a variable $z \in \var(\decomp_i) \setminus \var(\decomp_j)$. 
This holds due to the fact that the set of introduced variables for $\decomp_i$ is disjoint from the set of introduced variables for $\decomp_j$, unless $i = j$. 

Therefore the maximum number of shared variable between an atom of $\decomp_i$ and an atom of $\decomp_j$ is $2$. 
Using the same argument in Condition 3, this results in a contradiction.
Therefore, our assumption that $\varphi$ is acyclic cannot hold. 

To conclude this proof, we have shown that if any of the four conditions given in the lemma statement hold for $\varphi \in \cpfc$, then $\varphi$ is acyclic.
\end{proof}

While Conditions 3 and 4 might seem strict, we can pre-factor common subpatterns. 
For example, we can write $(x_1 \logeq \alpha_1 \cdot \alpha_2 \cdot \alpha_3) \land (x_2 \logeq \alpha_4 \cdot \alpha_2 \cdot \alpha_5)$, where $\alpha_i \in \Xi^+$ for $i \in [5]$, as $(x_1 \logeq \alpha_1 \cdot z  \cdot \alpha_3) \land (x_2 \logeq \alpha_4 \cdot z \cdot \alpha_5) \land (z \logeq \alpha_2)$ where $z \in \Xi$ is a new variable. 
We illustrate this further in the following example.

\begin{example}
Consider the following $\cpfc$:
\[ \varphi \df \cqhead{} (x_1 \logeq y_1 \cdot y_2 \cdot y_3 \cdot y_4 \cdot y_5) \land (x_2 \logeq y_6 \cdot y_2 \cdot y_3 \cdot y_4 \cdot y_5). \]
Using condition 3 of~\cref{lemma:CyclicConditions}, we see that $\varphi$ is cyclic. However, since the right-hand side of the two word equations share a common subpattern, we can rewrite $\varphi$ as
\[ \varphi' \df \cqhead{} (x_1 \logeq y_1 \cdot z) \land (x_2 \logeq y_6 \cdot z) \land (z \logeq y_2 \cdot y_3 \cdot y_4 \cdot y_5). \]
\end{example}

One could alter our definition of $\cpfc$ decomposition so that if two atoms are bracketing such that they share a common subbracketing, then the two occurrences of that common subbracketing are replaced with the same variable (analogously to decomposing patterns). 
The author believes it is likely that such a definition of $\cpfc$ decomposition is equivalent to our definition of $\cpfc$ decomposition after ``factoring out'' common subpatterns between atoms. 

Our next consideration is how the shape of a join tree for a decomposition of an acyclic query $\varphi \in \cpfcreg$ relates to the weak join tree for $\varphi$.

\begin{definition}[Skeleton Tree]\index{skeleton tree}
Let $\decomp_\varphi \in \conclog$ be an acyclic decomposition of the $\varphi \df \cqhead{\vec{x}} \bigwedge_{i=1}^n \eta_i$, and let $T \df (V,E)$ be a join tree for $\decomp_\varphi$. 
We say that a weak join tree $T_w \df (V_w, E_w)$ is the 
\emph{skeleton tree} of $T$ if there exists an edge in $E$ from a node in $V^i$ to a node in $V^j$ if and only if $\{ \eta_i, \eta_j \} \in E_w$.
\end{definition}

In the proof of~\cref{lemma:CyclicConditions} (Condition 1), we show that every join tree for a decomposition has a corresponding skeleton tree. 

\begin{example}
\label{example:SkeletonTree}
We define $\varphi\in \cpfc$ and a decomposition $\decomp_\varphi$ as follows:
\begin{align*}
\varphi &\df \cqhead{\vec{x}} (x_1 \logeq x_2 \cdot x_3 \cdot x_2) \land (x_2 \logeq x_4 \cdot x_4 \cdot x_5),\\
\decomp_\varphi &\df  \cqhead{\vec{x}} (x_1 \logeq x_2 \cdot z_1) \land (z_1 \logeq x_3 \cdot x_2) \land (x_2 \logeq z_2 \cdot x_5) \land (z_2 \logeq x_4 \cdot x_4).
\end{align*}	
The skeleton tree along with the join tree of $\decomp_\varphi$ are given in~\cref{fig:skeletonTree}.
\begin{figure}
\center
\begin{tikzpicture}[shorten >=1pt,->]
\tikzstyle{vertex}=[rectangle,fill=white!35,minimum size=12pt,inner sep=4pt,draw=black, thick]
\tikzstyle{vertex2}=[rectangle,fill=white!35,minimum size=12pt,inner sep=4pt,draw=black, thick]
\node[vertex] (1) at (0,0) {$x_1 \logeq x_2 \cdot z_1$};
\node[vertex] (2) at (4,0) {$z_1 \logeq x_3 \cdot x_2$};
\node[vertex2] (3) at (0, 1.2) {$x_2 \logeq z_2 \cdot x_5$};
\node[vertex2] (4) at (4, 1.2) {$z_2 \logeq x_4 \cdot x_4$};

\path [-](1) edge node[left] {} (3);
\path [-](1) edge node[left] {} (2);
\path [-](3) edge node[left] {} (4);
\end{tikzpicture} \hspace{1.5cm}
\begin{tikzpicture}[shorten >=1pt,->]
\tikzstyle{vertex}=[rectangle,fill=white!35,minimum size=12pt,inner sep=4pt,draw=black, thick]
\tikzstyle{vertex2}=[rectangle,fill=white!35,minimum size=12pt,inner sep=4pt,draw=black, thick]
\node[vertex] (1) at (0,0) {$x_1 \logeq x_2 \cdot x_3 \cdot x_2$};
\node[vertex2] (2) at (0,1.2) {$x_2 \logeq x_4 \cdot x_4 \cdot x_5$};

\path [-](1) edge node[left] {} (2);
\end{tikzpicture}
\caption{\label{fig:skeletonTree} The join tree of (left) and the skeleton tree of the join tree (right) for~\cref{example:SkeletonTree}.}
\end{figure}
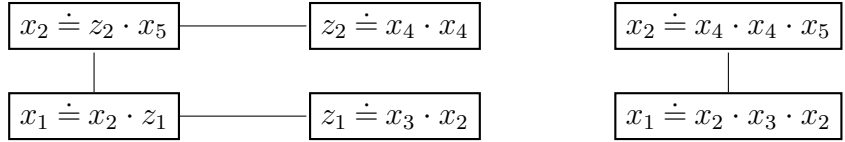
\end{example}

One might assume that some skeleton trees are more “desirable” than others in terms of using it for finding an acyclic decomposition of an $\cpfcreg$.
However, as we observe next, any skeleton tree is sufficient.

\begin{restatable}[]{lemma}{skeletonTree}
\label{lemma:skeletonTree}
Let $\decomp_\varphi \in \conclog$ be a decomposition of $\varphi \in \cpfc$. If $\decomp_\varphi$ is acyclic, then any weak join tree for $\varphi$ can be used as the skeleton tree.
\end{restatable}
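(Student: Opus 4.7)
My plan is to take an arbitrary weak join tree $T_w$ of $\varphi$ and build, from scratch, a join tree $T$ for some decomposition $\decomp_\varphi$ whose skeleton is exactly $T_w$. Since $\varphi$ is assumed to admit an acyclic decomposition, Condition~2 of \cref{lemma:CyclicConditions} tells me that every individual atom $\eta_i$ is itself acyclic, so for each $i$ I can pick a bracketing $\tilde\alpha_i$ together with a decomposition $\decomp_i$ of $\eta_i$ and a join tree $T^i = (V^i, E^i)$ for $\decomp_i$. Setting $\decomp_\varphi \df \bigwedge_i \decomp_i$, it then suffices to add, for each edge $\{\eta_i,\eta_j\}$ of $T_w$, one \emph{bridging edge} between a node of $V^i$ and a node of $V^j$, and to verify that the result is a join tree of $\decomp_\varphi$ whose skeleton coincides with $T_w$.

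The heart of the argument is showing that the bridging edges can always be chosen so that the join-tree property is preserved across the glueings. Here I would exploit Conditions~3 and~4 of \cref{lemma:CyclicConditions}: for each edge $\{\eta_i,\eta_j\}$ of $T_w$, the set $S_{ij} \df \var(\eta_i)\cap\var(\eta_j)$ has size at most two in general, and size three only when both $|\eta_i|,|\eta_j|\leq 3$ (so that each $T^i$ is essentially a single atom). Thus for every edge of $T_w$ it suffices to connect $T^i$ and $T^j$ through atoms that jointly witness all of $S_{ij}$. I would then argue, possibly by tailoring the choice of each $\decomp_i$ to $T_w$ rather than fixing it up front, that interface atoms containing the relevant shared variables can always be selected; the case analysis on $|S_{ij}|\in\{0,1,2,3\}$ is short, and the size-three case is essentially trivial since each of $\eta_i,\eta_j$ is (equivalent to) an atom of $\decomp_\varphi$ already.

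Finally, I would verify that the resulting graph $T$ is indeed a join tree. Connectedness and acyclicity follow because $T_w$ is a tree and each $T^i$ is a tree, and a single bridging edge is added per edge of $T_w$. For the variable-path property, a variable $x$ that appears in a single $\eta_i$ already has the required property inside $T^i$; a variable $x$ shared among several atoms induces, by the weak-join-tree property of $T_w$, a connected subtree $T_w(x)$ of $T_w$, and the bridging edges were chosen so that the corresponding atoms in each $T^i$ for $\eta_i\in T_w(x)$ are joined through $x$-containing nodes. By construction, each edge of $T_w$ corresponds to a bridge between some $V^i$ and some $V^j$, so $T_w$ is precisely the skeleton of $T$.

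The main obstacle is the bridging step: one must guarantee, uniformly across all edges of $T_w$, that each $T^i$ has an atom collecting the shared variables needed to interface with every neighbour of $\eta_i$ in $T_w$. I expect the cleanest route is to choose the decomposition $\decomp_i$ of each $\eta_i$ adaptively, based on which variable sets $S_{ij}$ have to be ``exposed'' at $T^i$; this is a constrained bracketing problem, but is feasible because the characterisation of acyclic patterns from \cref{lemma:cycledistance} is robust enough to allow such refinements, and because the bounds $|S_{ij}|\leq 3$ from \cref{lemma:CyclicConditions} keep the local requirements light.
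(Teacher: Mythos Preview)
Your proposal proves a different statement than the lemma. The hypothesis hands you a \emph{specific} acyclic decomposition $\decomp_\varphi$, and the claim is that $\decomp_\varphi$ itself admits a join tree with skeleton $T_w$. You instead build a fresh decomposition $\bigwedge_i \decomp_i$ with each $\decomp_i$ chosen adaptively; even if that works, it does not show that the given $\decomp_\varphi$ has a join tree with skeleton $T_w$.

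More importantly, the step you flag as the ``main obstacle''---guaranteeing that each $T^i$ exposes an atom containing $S_{ij}$ for every neighbour $\eta_j$---is exactly where your argument is only a sketch, and it is also exactly the step the paper avoids entirely. The paper keeps the \emph{same} decomposition $\decomp_\varphi$ and the \emph{same} subtrees $T^i$ (obtained via \cref{lemma:subtree} from the given join tree $T$), and for each edge $\{\eta_i,\eta_j\}\in E_w$ it picks as bridging atoms the endpoints $\chi_{i,j}\in V^i$, $\chi_{j,i}\in V^j$ of the shortest path in $T$ between $T^i$ and $T^j$. Because $T$ is a join tree, every variable in $\var(\eta_i)\cap\var(\eta_j)$ must lie on that path and hence in both endpoints, giving
\[
\var(\chi_{i,j})\cap\var(\chi_{j,i}) \;=\; \var(\eta_i)\cap\var(\eta_j).
\]
This single observation supplies the interface atoms for \emph{every} pair $(i,j)$ simultaneously, with no constrained-bracketing problem to solve and no adaptive choice of the $\decomp_i$. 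The verification that the rewired forest is a join tree then proceeds much as you outline. So the key idea you are missing is that the existing join tree $T$ already certifies the existence of the right bridging atoms for any $T_w$; you do not need to manufacture them by tailoring the decompositions.
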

\begin{proof}
Let $\varphi \df \cqhead{\vec{x}} \bigwedge_{i=1}^{n} \eta_i$ be a normalized $\cpfc\noconstr$ and let $\decomp_\varphi \df \cqhead{\vec{x}} \bigwedge_{i=1}^{n} \decomp_i$ be an acyclic decomposition of $\varphi$ such that $\decomp_i \in \conclog\noconstr$ is the decomposition of $\eta_i$ for each $i \in [n]$. Let $T\df (V, E)$ be a join tree of $\decomp_\varphi$ and let $T_s \df (V_s, E_s)$ be the skeleton tree of $T$. 

We work towards a contradiction. Assume $T_w \df (V_w, E_w)$ is a weak join tree for $\varphi$, but there does not exist a join tree $T' \df (V', E')$ of $\decomp_\varphi$ such that $T_w$ is the skeleton tree of $T'$. We now transform $T$ to obtain the join tree $T'$, and thus reach our contradiction. 

For each $i \in [n]$, let $T^i \df (V^i, E^i)$ be the subtree of $T$ such that $T^i$ is a join tree for $\decomp_i$. We know that these subtrees are disjoint. Let $F \df (V_f, E_f)$ be a forest where $V_f \df \bigcup_{i=1}^{n} V^i$ and $E_f \df \bigcup_{i=1}^{n} E^i$. Then, for each edge $\{ \eta_i, \eta_j \} \in E_w$, let $\chi_{i,j}$ be the atom of $\decomp_i$ and $\chi_{j,i}$ the atom of $\decomp_j$ such that these are the end nodes in the shortest path from any atom of $\decomp_i$ to any atom of $\decomp_j$ in $T$. Then, add the edge $\{ \chi_{i,j}, \chi_{j,i} \}$ to $E_f$ for each $\{ \eta_i, \eta_j \} \in E_w$. Let $T' \df (V', E')$ be the result of the above augmentation of $T$.

We now prove that $T' \df (V', E')$ is a join tree for $\decomp_\varphi$. We can see that $T'$ is a tree, every atom of $\decomp_\varphi$ is a node of $T'$, and that 
\begin{equation}
 \var(\chi_{i,j}) \intersect \var(\chi_{j,i}) = \var(\eta_i) \intersect \var(\eta_j),\label{eq:vars}
\end{equation}
which holds because otherwise $T$ would not be a join tree (see Conditions 3 and 4 of~\cref{lemma:CyclicConditions}). 
We use~\cref{eq:vars} to show that every node that lies on the path between any $\chi, \chi' \in V'$ where $x \in \var(\chi) \intersect \var(\chi')$, also contains the variable $x$. 
Without loss of generality, assume that $\chi \in V^1$ and $\chi' \in V^k$ where $V^1$ and $V^k$ are the set of vertices for the join tree for the decomposition of $\eta_1$ and $\eta_k$ respectively. 
Further assume that the path from $\eta_1$ to $\eta_k$ in $T_w$ consists of $\{\eta_i, \eta_{i+1} \}$ for $i \in [k-1]$. 
Since $T_w$ is a weak join tree, and that $\eta_1$ and $\eta_k$ both contain the variable $x$, it follows that for all $i \in [k-1]$, the word equation $\eta_i$ contains the variable $x$. 
Furthermore, we know that for any any edge $\{ \chi_i, \chi_{i+1} \} \in E'$, where $\chi_i \in V^i$ and $\chi_{i+1} \in V^{i+1}$, that $\var(\chi_i) \intersect \var(\chi_{i+1}) =  \var(\eta_i) \intersect \var(\eta_{i+1})$. Therefore, it follows that $x \in \var(\chi_i) \intersect \var(\chi_{i+1})$. Because $T^i \df (V^i, E^i)$ is a join tree for $\decomp_i$, every node that lies on the path between two nodes of $V^i$ which have the variable $x$, also has the variable $x$. Furthermore, for any edge $\{ \chi_i, \chi_{i+1} \} \in E'$, where $\chi_i \in V^i$ and $\chi_{i+1} \in V^{i+1}$, we know that $x \in \var(\chi_i) \intersect \var(\chi_{i+1})$. Hence, all nodes on the path between $\chi$ and $\chi'$ contain~$x$. 
\end{proof}

Given a weak join tree of an acyclic query $\varphi$, the proof of~\cref{lemma:skeletonTree} transforms the
join tree of $\decomp_\varphi$ so that the resulting join tree has the given weak join tree as its skeleton
tree. Thus, we can use any weak join tree as a “template” for the eventual join tree of the
decomposition (assuming the query is acyclic).

While~\cref{lemma:CyclicConditions} and~\cref{lemma:skeletonTree} give some insights and necessary conditions for deciding whether $\varphi \in \cpfc$ is acyclic, these conditions are not sufficient. 
We therefore give the following lemma which is needed in the proof of~\cref{theorem:LVJoinTree} to find the sufficient conditions for $\varphi$ to be acyclic.
To illustrate the purpose of the subsequent lemmas, consider the following example.
\begin{example}\label{example:NotAcyclic}
Consider the following $\cpfc$ along with the decomposition of each of the atoms. 
\begin{align*}
\varphi & \df \cqhead{} (x_1 \logeq y_1 \cdot y_2 \cdot y_3) \land (x_2 \logeq y_1 \cdot y_4 \cdot y_3), \\
\decomp_1 & \df (x_1 \logeq z_1 \cdot y_3) \land (z_1 \logeq y_1 \cdot y_2), \\
\decomp_2 & \df (x_2 \logeq z_2 \cdot y_3) \land (z_2 \logeq y_1 \cdot y_4).
\end{align*}
While $\varphi$ is weakly acyclic, $\decomp_1$ and $\decomp_2$ are acyclic, and the two atoms of $\varphi$ only share two variables, $\varphi$ is not acyclic.
It is straightforward to show that $\decomp_\varphi \df \cqhead{} \decomp_1 \land \decomp_2$ is not acyclic using the GYO algorithm.
\end{example}

Thus, we state another key lemma, which shall be given in~\cref{lemma:atomDecomp}.
But first, we deal with a special case.

\begin{lemma}
\label{constrainedBracketings}
Given $\alpha \in \Xi^+$ and $ C \subseteq \{ \{ x, y \} \mid x,y \in \var(\alpha) \text{ and } x \neq y \}$,
we can decide in polynomial time whether there exists an acyclic $\tilde\alpha \in \brac(\alpha)$ such that for each $\{x,y \} \in C$, either $(x \cdot y) \sqsubseteq \tilde\alpha$ or $(y \cdot x) \sqsubseteq \tilde\alpha$.
\end{lemma}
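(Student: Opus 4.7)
The plan is to extend the dynamic programming from the proof of \cref{polytime}. First I would precompute, for each constraint $\{x,y\} \in C$, the set of positions $P_{\{x,y\}} \df \{p \in [|\alpha|-1] : \{\alpha[p], \alpha[p+1]\} = \{x, y\}\}$, returning \emph{false} immediately if any $P_{\{x,y\}}$ is empty. Note that since each position $p$ covers exactly one unordered pair $\{\alpha[p], \alpha[p+1]\}$, the sets $P_c$ for distinct $c \in C$ are pairwise disjoint. Then I would run \cref{algorithm:acycPat} to obtain the tables $V$ and $E$.

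Next I would add constraint tracking. For each $(i, k) \in V$, let $C_{[i,k]} \df \{c \in C : P_c \cap [i, k-1] \neq \emptyset\}$ be the constraints whose covering positions fall inside the subpattern. The target predicate is $\mathsf{AllCov}(i, k)$ -- true iff some acyclic bracketing of $\alpha[i, k]$ covers every constraint in $C_{[i, k]}$ -- and the natural recursion is that $\mathsf{AllCov}(i, k)$ holds iff some $j$ satisfies $((i, k), (i, j), (j+1, k)) \in E$, $\mathsf{AllCov}(i, j)$, $\mathsf{AllCov}(j+1, k)$, and no $c \in C$ has $P_c \cap [i, k-1] = \{j\}$ (since the top-level split position $j$ can itself serve as a length-2 sub-bracketing only in the base case $k = i+1$). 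The final answer is $\mathsf{AllCov}(1, |\alpha|)$.

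The main obstacle is justifying this recursion. An edge $((i, k), (i, j), (j+1, k)) \in E$ may only be witnessed by specific bracketings of the two halves (see Cases~3--6 of \cref{algorithm:IsAcyclic}), while the bracketings realising $\mathsf{AllCov}(i, j)$ and $\mathsf{AllCov}(j+1, k)$ may use incompatible sub-bracketings. For instance, on $\alpha = x_1 x_2 x_3 x_1$ with $C = \{\{x_1, x_2\}\}$ the only acyclic bracketings $((x_1 (x_2 x_3)) x_1)$ and $(x_1 ((x_2 x_3) x_1))$ contain no $(x_1 \cdot x_2)$ sub-bracketing, so the correct answer is \emph{false}, yet the naive DP would succeed. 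To repair this I would refine the DP by parameterising each subpattern by which IsAcyclic case witnesses its participation in the parent: Cases~1 and~2 allow any acyclic sub-bracketings and preserve the naive recursion, but Cases~3--6 pin down a specific decomposition of one half, through which the constraint check must be propagated. Correctness of the refined DP follows by an extension of the correctness argument for IsAcyclic, and polynomial time follows as in \cref{polytime} since there remain $O(|\alpha|^2)$ states, $O(|\alpha|)$ candidate splits per state, $O(|C|\cdot |\alpha|)$ time for the stuck-constraint check, and only constant extra blow-up per IsAcyclic case.
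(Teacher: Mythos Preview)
Your identification of the mismatch is correct, and the example is well-chosen. But the repair you sketch is where the proof actually lives, and it is left largely unwritten. ``Parametrise by which IsAcyclic case witnesses participation'' does not obviously terminate in a polynomial-size state space: in Cases~3--6 the pinned sub-split of one half may itself only be justified by Cases~3--6 at the next level, so you must argue that the parameter needed is just the \emph{top} split position (giving states $\mathsf{AllCov}(i,k,j)$), not a chain of forced splits. You also owe a completeness argument: if an acyclic $\tilde\alpha$ with top split $j$ covers all of $C_{[i,k]}$, why must each half cover its own $C_{[i,j]}$, $C_{[j+1,k]}$? A priori a constraint in $C_{[i,j]}$ might be covered only via a position in the right half. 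This can be handled case-by-case (in Cases~1 and 3--6 the halves share a sub-bracketing, so coverage transfers), but ``follows by an extension of the correctness argument for IsAcyclic'' does not discharge it. Your claim that Case~1 ``allows any acyclic sub-bracketings'' is also imprecise: only $(\tilde\alpha_1\cdot\tilde\alpha_1)$ is guaranteed acyclic there, so the recursion must reuse the same bracketing on both sides.

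The paper takes a different and much shorter route. It first proves a structural claim: if an acyclic bracketing $\tilde\alpha$ contains $(x\cdot y)$ with $\{x,y\}\in C$, then wherever the single variable $x$ is concatenated to some $\tilde\beta$ in $\tilde\alpha$, necessarily $\var(\tilde\beta)=\{x,y\}$ (otherwise two $x$-parents would be separated by a non-$x$-parent, violating $x$-localisation via \cref{lemma:cycledistance}). This reduces constraint satisfaction to a purely local condition on each concatenation step, so the paper simply re-runs \cref{algorithm:acycPat} with two modifications: the initial seeds $((i,i{+}1),(i,i),(i{+}1,i{+}1))$ are restricted to pairs where $\{\alpha[i],\alpha[i{+}1]\}$ is either in $C$ or disjoint from every $c\in C$; and an extra linear-time check forbids concatenating a single constrained variable $x$ with any $\tilde\beta$ whose variable set is not $\{x,y\}$. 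No layered DP and no case-tracking are needed.
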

\begin{proof}
Let $\alpha \in \Xi^+$ and let $C \subseteq \{ \{x,y\} \mid x,y \in\var(\alpha) \text{ and } x \neq y \}$.

We assume that every variable that appears in $C$ also appears in the input pattern. 
Otherwise, we can immediately return $\false$.
This initial check can clearly be done in polynomial time.

The algorithm used to solve the problem stated in the lemma is given in~\cref{algorithm:acycPatTwo}.
This is a variation of the algorithm given in~\cref{polytime}, but $V$ and $E'$ are initialized differently.
There is also an extra subroutine given in~\cref{algorithm:extraCheck} to deal with a special case. 
It follows from the proof of~\cref{polytime} that if then~\cref{algorithm:acycPatTwo} returns $\true$, then an acyclic concatenation tree can be derived from $E$ and $V$ in polynomial time. 
We first look at the correctness.

\begin{algorithm}
\SetAlgoLined
\SetKwInOut{Input}{Input}
\SetKwInOut{Output}{Output}
\Input{$\alpha \in \Xi^+$, where $|\alpha| = n$.}
\Output{True if $\alpha$ is acyclic, and False otherwise.}

$E' \leftarrow \{ ((i,i+1),(i,i),(i+1,i+1)) \mathrel{|} \text{ for all } c \in C \text{ we have } (i,i), (i+1,i+1) \notin c \}$;

$E' \leftarrow E' \union \{ ((i,i+1),(i,i),(i+1,i+1)) \mathrel{|} \{ (i,i), (i+1,i+1)\} \in C \}$;

$V$ is the set of nodes in $E'$;

Add $(i,i)$ to $V$ for all $i \in [n]$;

$E \leftarrow \emptyset$;
 
\While{$E' \neq E$}{
	$E \leftarrow E'$;
 	
 	\For{$i,k \in [n]$ where $i < k-1$}{
 		\For{$j \in \{i, i+1, \dots, k-1\}$ where $((i,k),(i,j),(j+1,k)) \notin E'$} {\
 			\If{$(i,j),(j+1, k) \in V$ and $\mathsf{IsAcyclic}(i,j,k, \alpha, E')$ and $\mathsf{extraCheck}(i,j,k, \alpha, C)$ }{
	        		Add $((i,k),(i,j),(j+1,k))$ to $E'$; 
	        
	        		Add $(i,k)$ to $V$;
    			}	
 		}
 	}

}

Return $\mathsf{True}$ if $(1,n) \in V$, and $\mathsf{False}$ otherwise;

\caption{A variant of the Acyclic Pattern Algorithm. The subroutine $\mathsf{IsAcyclic}$ (\cref{algorithm:IsAcyclic}) is identical to how it was given in the proof of~\cref{polytime}. \label{algorithm:acycPatTwo}}
\end{algorithm}

\begin{algorithm}
\SetAlgoLined
\SetKwInOut{Input}{Input}
\SetKwInOut{Output}{Output}
\Input{$i,j,k,\alpha, C$}
\Output{False, if $\{x,y \} \in C$ and $x$ is concatenated to $\tilde\beta$ where $\var(\tilde\beta) \neq \{x,y\}$. \\ True, otherwise.}

\uIf{$i = j$ and there exists $\{x,y\} \in C$ where $\alpha[i,j] \in \{x,y\}$}{
	\uIf{$\var(\alpha[j+1,k]) = \{ x, y\}$ }{
		Return $\mathsf{True}$;	
	}
	\Else{
		Return $\mathsf{False}$;
	}
	
}
 
\uElseIf{$j = k$ and there exists $\{x,y\} \in C$ where $\alpha[j+1,k] \in \{x,y\}$}{
	\uIf{$\var(\alpha[i,j]) = \{ x, y\}$ }{
		Return $\mathsf{True}$;	
	}
	\Else{
		Return $\mathsf{False}$;
	}
}

\Else{  
	Return $\mathsf{True}$ \;
}
\caption{$\mathsf{extraCheck}(i,j,k,\alpha,C)$. This algorithm ensures that if some $x \in \Xi$, where $\{x,y\} \in C$ is concatenated to some $\tilde\beta \in \brac$, then $\var(\tilde\beta)=\{x , y \}$. \label{algorithm:extraCheck}}
\end{algorithm}

\subparagraph*{Correctness.}
\cref{algorithm:acycPatTwo} initializes $E'$ such that one of the following conditions must hold:
\begin{enumerate}
\item $\bigl\{ \bigl( (i,i+1),(i,i),(i+1,i+1) \bigr) \bigr\} \in E'$ where $\{ (i,i), (i+1,i+1)\} \in C$, or
\item $\bigl\{ \bigl( (i,i+1),(i,i),(i+1,i+1) \bigr) \bigr\} \in E'$ where for all $c \in C$ we have that $(i,i) \notin c$ and $(i+1,i+1) \notin c$.
\end{enumerate}

Line 62 ensures that $i < k-1$. 
This avoids the case where $(i,i+1)$ does not satisfy one of the aforementioned conditions, but is added to $V$. 

\cref{algorithm:extraCheck} ensures that if some $x \in \Xi$, where $\{x,y\} \in C$ is concatenated to some $\tilde\beta \in \brac$, then the set of variables in $\tilde\beta$ is $\{x , y \}$. 
We now consider two cases. We note that we use the shorthand $\var(\tilde\alpha)$ for any $\tilde\alpha \in \brac$ to denote the set variables that appears in $\tilde\alpha$.

\subparagraph*{Case 1: If $\tilde\alpha$ exists, then \cref{algorithm:acycPatTwo} returns true.}
This direction follows from the proof of~\cref{polytime}. 
However, we need to prove that the new restrictions added to \cref{algorithm:acycPatTwo} ensures that if such an $\tilde\alpha$ (that satisfies the conditions given in the lemma statement) exists, then \cref{algorithm:acycPatTwo} still returns~true.
 
Let $\alpha \in \Xi^+$ and $ C \subseteq \{ \{ x, y \} \mid x,y \in \var(\alpha) \text{ and } x \neq y \}$. Let $\tilde\alpha \in \brac(\alpha)$ such that for each $\{x,y \} \in C$, either $(x \cdot y) \sqsubseteq \tilde\alpha$ or $(y \cdot x) \sqsubseteq \tilde\alpha$. 
 
Due to the initialization of $E'$ and $V$, we know that if $(x \cdot y) \sqsubseteq \tilde\alpha$ for $x,y \in \Xi$, then either there exist some $\{x,y\} \in C$; or for all $\{x',y' \} \in C$, we have that $x \notin \{x',y'\}$ and $y \notin \{x',y'\}$. 
To show that this is the correct behaviour, we prove the following claim:
\begin{claim}
If, without loss of generality, $(x \cdot y) \sqsubseteq \tilde\alpha$ for some $\{x,y\} \in C$, and $(x \cdot z) \sqsubseteq \tilde\alpha$ where $z \notin \{x,y \}$, then $\tilde\alpha$ is cyclic.
\end{claim}
\begin{claimproof}
To prove this claim, we work towards a contradiction.
Let $\alpha \in \Xi^+$ and assume that $\tilde\alpha \in \brac(\alpha)$ is acyclic where $(x \cdot y), (x \cdot z) \sqsubset \tilde\alpha$ and $z \notin \{ x, y\}$.
Let $\decomp_{\tilde\alpha} \in \conclog\noconstr$ be the decomposition of $\tilde\alpha$.
We can see that both $(z \logeq x \cdot y)$ and $(z' \logeq x \cdot z)$ are atoms of $\decomp_{\tilde\alpha}$ where $z \neq z'$.
Let $\mathcal{T} \df (\mathcal{V}, \mathcal{E}, <, \Gamma, \labelFunction, v_r)$ be the concatenation tree for $\decomp_{\tilde\alpha}$.
It follows that, there exists two nodes $v,v' \in \mathcal{V}$ where $\tau(v) = z$ and $\tau(v') = z'$ where $z$ and $z'$ are $x$-parents.
Consider the lowest common ancestor of $z$ and $z'$.
This lowest common ancestor is not an $x$-parent, since it must be a parent of two nodes labelled with an introduced variable, yet it lies on the path between $z$ and $z'$.
Hence, $\decomp_{\tilde\alpha}$ is not $x$-localized and it therefore follows that $\tilde\alpha$ is cyclic.
\end{claimproof}

Observing this claim, the initialization of $E'$ and $V$ is the desired behaviour.	

Next, we look at \cref{algorithm:extraCheck}. 
Assume that without loss of generality $(x \cdot y) \sqsubseteq \tilde\alpha$ for all $\{x, y\} \in C$, and $\tilde\alpha$ is acyclic. 
It follows that there exists a node $v_1$ with two children $v_2$ and $v_3$ such that $\labelFunction(v_2) = x$ and $\labelFunction(v_3) = y$.
Let $\decomp_{\tilde\alpha} \in \conclog$ be the decomposition of $\tilde\alpha$, and let $\mathcal{T}$ be the concatenation tree for $\decomp_{\tilde\alpha}$.
Since $\decomp_{\tilde\alpha}$ is acyclic, it must be both $x$-localized and $y$-localized. 
Therefore, since $v_1$ is itself an $x$-parent, all $x$ parents form a subtree of $\mathcal{T}$ which is connected to $v_1$.
Hence, if $x$ is concatenated to $\tilde\beta$ in $\tilde\alpha$, it follows that $\var(\tilde\beta) = \{x, y\}$ must hold.  

\subparagraph*{Case 2: If \cref{algorithm:acycPatTwo} returns true, then $\tilde\alpha$ exists.}
If after~\cref{algorithm:acycPatTwo} terminates we have $(1,n) \in V$, then $\alpha$ is acyclic and we can derive a concatenation tree for some acyclic decomposition $\decomp_{\tilde\alpha}$ of $\tilde\alpha \in \brac(\alpha)$, see the proof of~\cref{thm:acyclicPatternAlgo}. 
The derivation procedure adds edges from $E$ to the concatenation tree until the leaf nodes are all $(i,i)$ for $i \in [n]$. 
Hence, if a node has the children $(i,i)$ and $(i+1,i+1)$, it follows that these nodes must satisfy the conditions defined in the initialization of $E$.
We now show that $\{x,y\} \in C$, either $(x \cdot y) \sqsubseteq \tilde\alpha$ or $(y \cdot x) \sqsubseteq \tilde\alpha$. 
For sake of a contradiction, assume that there exists some $\{ x, y \} \in C$ such that, without loss of generality, $(x \cdot y) \sqsubseteq \tilde\alpha$ does not hold.
Due to the initialization of $E'$, it follows that there cannot exist some $(x \cdot z) \sqsubseteq \tilde\alpha$ such that $z \neq y$. 
Furthermore, if $x \in \Xi$ is concatenated to some $\tilde\beta \sqsubset \tilde\alpha$, then it follows that $\var(\tilde\beta) = \{x, y \}$. 

Hence, without loss of generality, $(x \cdot y) \sqsubseteq \tilde\beta$ holds.
We also do a preprocessing step to make sure that all the variables that appear in $C$, also appear in $\alpha$.
Therefore, the resulting concatenation tree represents an acyclic bracketing $\tilde\alpha$ of the input pattern $\alpha$, where $(x \cdot y)$ or $(y \cdot x)$ is a subbracketing of $\tilde\alpha$ for all $\{x, y\}\in C$.

\subparagraph*{Complexity.}
Due to the fact that~\cref{algorithm:acycPatTwo} is almost identical to the algorithm given in the proof of~\cref{polytime}, it is sufficient to prove that it takes polynomial time to initialize $V$ and $E$, and that the \cref{algorithm:extraCheck} is in polynomial time. We can assume that we precompute the set $C' \df \bigcup_{s \in C} s$. 
Precomputing $C'$ can clearly be done in polynomial time.

We first consider the initialization of $V$ and $E'$.
For each $i \in [n-1]$, we check whether $\{ \alpha[i], \alpha[i+1] \} \in C$, and if that is false, we check whether $\alpha[i], \alpha[i+1] \notin C'$. 
Therefore, the initialization of $E'$ takes $\bigO(|\alpha|)$, since the checks for each $i \in [n-1]$ takes constant time, and adding to $E'$ takes constant time. 
Furthermore, adding all nodes of $E'$ to $V$ takes $\bigO(|E'|)$ time, and since for the initialization, $|E'| \in \bigO(|\alpha|)$, this also takes~$\bigO(|\alpha|)$ time. 

Now, we consider the time complexity of the $\mathsf{extraCheck}$ subroutine. 
Deciding whether $i=j$ and $\alpha[i] \in C'$ takes constant time (line 47), and deciding whether $\var(\alpha[j+1,k]) =  \{x, y\}$ takes $\bigO(|\alpha|)$ time. 
Since the other case is symmetric, the total running time of $\mathsf{extraCheck}$ is $\bigO(|\alpha|)$.
Therefore, it follows from the proof of~\cref{polytime} that~\cref{algorithm:acycPatTwo} runs in time $\bigO(|\alpha|^7)$.
\end{proof}

\cref{constrainedBracketings} is the main case for the following key lemma.

\begin{restatable}[]{lemma}{atomDecomp}
\label{lemma:atomDecomp}
Given a normalized $\cpfc$ of the form $\varphi \df \cqhead{\vec{x}} (z \logeq \alpha)$ and a set $C \subseteq \{ \{ x, y \} \mid x,y \in \var(z \logeq \alpha) \text{ and } x \neq y \}$,  we can decide in time $\bigO(|\alpha|^7)$ whether there is an acyclic decomposition $\decomp \in \conclog$ of $\varphi$ such that, for every $\{ x,y \} \in C$, there is an atom of $\decomp$ that contains both $x$ and $y$.
\end{restatable}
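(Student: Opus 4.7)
The plan is to reduce to \cref{constrainedBracketings} via a case analysis on $C_z \df \{\{z,x\} \in C\}$, letting $C_\alpha \df C \setminus C_z$. In any decomposition of $\varphi$, the root variable $z$ appears only as the left-hand side of the root atom, so each pair $\{z, x\} \in C_z$ forces $x$ to be one of the two top-level factors of the bracketing $\tilde\alpha$. Three sub-cases are then immediate: if $|C_z| \geq 3$ I return false; if $|C_z| = 2$ with $C_z = \{\{z, x_1\}, \{z, x_2\}\}$ I return true iff $\alpha \in \{x_1 x_2, x_2 x_1\}$; and if $|C_z| = 0$ I invoke \cref{constrainedBracketings} directly on $\alpha$ with the constraints $C_\alpha = C$.

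The interesting case is $|C_z| = 1$, say $\{z, x\} \in C_z$. I will first establish a structural claim: an acyclic decomposition with $x$ as a top-level factor of $\tilde\alpha$ exists only if $\alpha$ has the form $\alpha = x^a \delta x^b$ with $a + b \geq 1$ and $x \notin \var(\delta)$. The proof applies \cref{lemma:cycledistance} to the variable $x$: forcing the root atom to be an $x$-parent propagates by $x$-localization to require that every $x$-parent of the concatenation tree lie on a connected subtree containing the root. Any $x$-parent inside the sub-bracketing $\tilde\delta$ would need to be reached from the root through the attachment node of $\tilde\delta$, but that node's non-spine child is the root of $\tilde\delta$ (an introduced variable, not $x$), so it is not itself an $x$-parent and the connected subtree breaks. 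Hence the occurrences of $x$ in $\alpha$ must be contiguous at the two ends, and this form is verified by a linear scan.

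Given $\alpha = x^a \delta x^b$ of the required form, every valid $\tilde\alpha$ is a comb bracketing: a linear spine of $x$-parent nodes, each carrying an $x$-leaf on one side, whose unique non-$x$ child at the attachment point holds $\tilde\delta$. Since the spine is automatically acyclic and the introduced variable labelling $\tilde\delta$'s root appears in only two adjacent atoms, acyclicity of $\decomp$ reduces to acyclicity of the sub-decomposition of $\tilde\delta$. The pair constraints split accordingly: pairs $\{u, v\} \in C_\alpha$ with $u, v \in \var(\delta)$ carry over directly as sub-bracketing constraints on $\tilde\delta$, while any pair $\{x, y\} \in C_\alpha$ needs $y$ to be a sibling of some $x$. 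The latter can be realised only at the spine/$\tilde\delta$ attachment, which means $|\delta| = 1$ and $\delta = y$. Accordingly: if $C_\alpha$ contains two distinct pairs $\{x, y\} \neq \{x, y'\}$ I return false; if it contains exactly one such pair I check $\alpha = x^a y x^b$; and if none of the pairs in $C_\alpha$ involve $x$ I apply \cref{constrainedBracketings} to $\delta$ with the remaining pair constraints.

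I expect the main obstacle to be establishing the structural characterisation in the second paragraph cleanly, since it intertwines the root-level constraint from $\{z, x\}$ with the global $x$-localization condition across the entire concatenation tree. Once the characterisation is in place, the rest of the algorithm consists of a constant number of $\bigO(|\alpha|)$ boundary-condition checks together with one call to \cref{constrainedBracketings} on a pattern of length at most $|\alpha|$, so the $\bigO(|\alpha|^7)$ bound of that lemma carries over to give the claimed running time.
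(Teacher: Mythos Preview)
Your proposal is correct and follows essentially the same approach as the paper: both reduce to \cref{constrainedBracketings} after handling pairs involving the root variable $z$ via the structural characterisation $\alpha = x^a\delta x^b$ with $x\notin\var(\delta)$, proved through $x$-localization of the concatenation tree. Your case analysis is in fact more thorough than the paper's---you explicitly treat $|C_z|\in\{0,1,2,\geq 3\}$ and the interaction of $\{x,y\}\in C_\alpha$ with the comb structure, whereas the paper asserts (somewhat loosely, and not quite correctly for the boundary case $|C_z|=2$, $|\alpha|=2$) that non-disjoint constraint pairs force failure and then hands off to \cref{constrainedBracketings} without spelling out how the remaining constraints on $\delta$ are threaded through.
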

\begin{proof}
If for all $\{ x,y \} \in C$, we have that $x,y \in \var(\alpha)$, then we know that this problem can be decided in time $\bigO(|\alpha|^7)$. 
We use~\cref{constrainedBracketings} and decide whether there exists an acyclic bracketing $\tilde\alpha \in \brac(\alpha)$ such that $(x \cdot y) \sqsubset \tilde\alpha$ or $(y \cdot x) \sqsubseteq \tilde\alpha$.
If such a decomposition exists, it follows that $(z_1 \logeq x \cdot y)$ or $(z_1  \logeq y \cdot x)$, for some $z_1 \in \Xi$, is an atom in the decomposition of $(z \logeq \alpha)$, where $\tilde\alpha \in \brac(\alpha)$ is the bracketing used for the decomposition. 

If for some $\{x,y\} \in C$, we have that $x = z$, then we know $y \in \var(\alpha) \setminus \{ z \}$ since $x \neq y$.  
We now claim that the acyclic decomposition $\decomp \in \conclog\noconstr$ exists, in the case where $x=z$, if and only if there exists $i,j \in \mathbb{N}$ such that $\alpha = y^i \cdot \beta \cdot  y^j$ where $\beta \in \Xi^*$ is acyclic and $|\beta|_y = 0$.

For the if direction, we give the following bracketing of $\alpha$:
\[  \tilde\alpha \df ( ( ( y \cdot (\cdots ( y \cdot (y \cdot \tilde\beta ) ) ) \cdot y ) \cdots ) \cdot y),  \]

where $\tilde\beta \in \brac(\beta)$ and the decomposition, $\decomp_{\tilde\beta}$, of $\tilde\beta$ is acyclic. We can see that $\tilde\alpha$ is decomposed $\decomp_{\tilde\alpha} \in \conclog\noconstr$ which is acyclic since $\tilde\beta$ is acyclic, and we are repeatedly prepending $y$ symbols before repeatedly appending $y$ symbols. Therefore, $\decomp_{\tilde\alpha}$ is $y$-localized and $x'$-localized for all $x' \in \var(\decomp_{\tilde\beta})$. Furthermore, we have that $(z \logeq z' \cdot y)$, for some $z' \in \Xi$, is an atom of the decomposition. 

We now prove the only if direction. 
Let~$\decomp_{\tilde\alpha} \in \conclog\noconstr$ be an acyclic decomposition of $(z \logeq \alpha)$ such that some atom of $\decomp_{\tilde\alpha}$ contains the variables $z$ and $y$. 
Let $\mathcal{T} \df (\mathcal{V}, \mathcal{E}, <, \Gamma, \labelFunction, v_r)$ be the concatenation tree for $\tilde\alpha \in \brac(\alpha)$, where $\decomp_{\tilde\alpha}$ is the decomposition of $\tilde\alpha$. 
Since $z$ only appears in the root atom of $\decomp_{\tilde\alpha}$, we know that for $y$ and $z$ to appear in the same atom, the root atom of $\decomp_{\tilde\alpha}$ must contain the variable $y$. 
That is, the root atom is either $(z \logeq y \cdot z')$ or $(z \logeq z' \cdot y)$ for some~$z' \in \var(\decomp_{\tilde\alpha})$. 
It therefore follows that there exists $\{v_1, v_2\}, \{v_1, v_3\} \in \mathcal{E}$, where~$v_2 < v_3$, such that $\tau(v_1) = z$ and either $\tau(v_2) = y$ or $\tau(v_3) = y$ and where~$v_1 \in \mathcal{V}$ is the root of the concatenation tree. 
Let $\mathcal{T}_y$ be the induced sub-tree of~$\mathcal{T}$ which contains only $y$-parents along with their children. 
We know that $\mathcal{T}_y$ is connected because $\decomp$ is $y$-localized; since $\decomp_{\tilde\alpha}$ is acyclic. 
We also know that the root of the tree is a $y$-parent. 
Thus, each $y$ can only contribute to the prefix or suffix of $\alpha$ and hence $\alpha = y^i \cdot \beta \cdot y^j$ where $|\beta|_y = 0$ must hold (e.g., see~\cref{fig:subConcTree}). 

\begin{center}
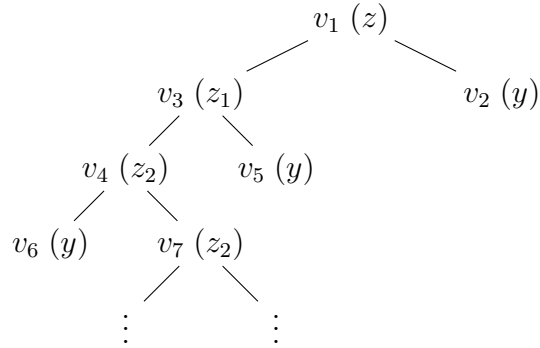
\begin{figure}
\begin{center}
\begin{tikzpicture}[shorten >=1pt,->]
\tikzstyle{vertex}=[rectangle,fill=white!25,minimum size=12pt,inner sep=2pt]
\node[vertex] (1) at (0,0) {$v_1 \; (z)$};
\node[vertex] (2) at (2,-1)   {$v_2 \; (y)$};
\node[vertex] (3) at (-2,-1)  {$v_3 \; (z_1)$};
\node[vertex] (4) at (-3,-2) {$v_4 \; (z_2)$};
\node[vertex] (5) at (-1,-2) {$v_5 \; (y)$};
\node[vertex] (6) at (-4,-3) {$v_6 \; (y)$};
\node[vertex] (7) at (-2,-3) {$v_7 \; (z_2)$};
\node[vertex] (8) at (-1,-4){$\vdots$};
\node[vertex] (9) at (-3,-4){$\vdots$};

\path [-](1) edge node[left] {} (2);
\path [-](1) edge node[left] {} (3);
\path [-](3) edge node[left] {} (4);
\path [-](3) edge node[left] {} (5);
\path [-](4) edge node[left] {} (6);
\path [-](4) edge node[left] {} (7);
\path [-](7) edge node[left] {} (8);
\path [-](7) edge node[left] {} (9);
\end{tikzpicture}
\end{center}
\caption{\label{fig:subConcTree}A diagram of $\mathcal{T}_y$ used to illustrate the proof of~\cref{lemma:atomDecomp}.}
\end{figure}
\end{center}

Therefore, to decide whether $(z \logeq \alpha)$ can be decomposed into an acyclic formula~$\decomp \in \conclog\noconstr$ such that there exists an atom of $\decomp$ which has the variables $z$ and $y$, it is sufficient to decide whether $\alpha = y^i \cdot \beta \cdot y^j$ where $\beta$ is acyclic and $|\beta|_y =0$. 
This can obviously be decided in $\bigO(|\alpha|^7)$ time by removing the prefix $y^i$ and the suffix $y^j$ in linear time, then checking whether $\beta$ is acyclic. 
Note that exactly one element of $C$ contains the variable $z$ due to the fact that if two elements of $C$ are not disjoint, then we know that $\decomp$ does not exist. 
Therefore, after we have dealt with this case, we can continue with the procedure defined in~\cref{constrainedBracketings} to determine whether whether there is an acyclic decomposition $\decomp \in \conclog\noconstr$ of $\varphi$ such that for every $\{ x,y \} \in C$, there exists an atom $(z_1 \logeq z_2 \cdot z_3)$ of $\decomp$ where $\{ x,y \} \subseteq \var(z_1 \logeq z_2 \cdot z_3)$ in $\bigO(n^7)$.
\end{proof}

The purposes of~\cref{lemma:atomDecomp} should become clearer after giving the following necessary and sufficient criteria for an $\cpfcreg$ to be acyclic:
Let 
\[\varphi \df \cqhead{\vec x} \bigwedge_{i=1}^m (x_i \logeq \alpha_i) \land \bigwedge_{j=1}^n (y_j \regconst \gamma_j)\]
be a normalized $\cpfcreg$. 
Then, there exists an acyclic decomposition $\decomp \in \concreg$ of $\varphi$ if and only if the following hold:
\begin{enumerate}
\item $\varphi$ is weakly acyclic,
\item the pattern $\alpha_i$ is acyclic for all $i \in [m]$, and
\item for every $i \in [m]$, there is a decomposition $\decomp_i$ of $x_i \logeq \alpha_i$ such that for all $j \in [m] \setminus \{ i \}$ there is a decomposition $\decomp_j$ of $x_j \logeq \alpha_j$ where 
\[ \var(\chi_i) \intersect \var(\chi_j) = \var(x_i \logeq \alpha_i) \intersect \var(x_j \logeq \alpha_j), \]
for some atoms $\chi_i$ of $\decomp_i$ and~$\chi_j$ of~$\decomp_j$.
\end{enumerate}

We now give the main result of this chapter. 

\begin{restatable}[]{theorem}{LVJoinTree}
\label{theorem:LVJoinTree}
Given $\varphi \in \cpfcreg$, we can decide in time $\bigO(|\varphi|^8)$ whether it is acyclic.
\end{restatable}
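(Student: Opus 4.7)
The plan is to implement the necessary and sufficient characterization stated just before the theorem as a polynomial-time algorithm. First, I would normalize $\varphi$ via \cref{lemma:normalization} in time $\bigO(|\varphi|^2)$, which also handles regular constraints: after normalization every regular constraint is a unary predicate $(y_j \regconst \gamma_j)$, so it can be ignored during the structural analysis and simply pasted back into the eventual join tree as a leaf attached to any atom mentioning~$y_j$. Write the normalized word-equation part of the body as $\bigwedge_{i=1}^m (x_i \logeq \alpha_i)$.

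Next, I would check weak acyclicity by running the GYO procedure (\cref{defn:gyo}) on the atoms $\eta_i = (x_i \logeq \alpha_i)$ treated as relational atoms; by Condition 1 of \cref{lemma:CyclicConditions}, if $\varphi$ is not weakly acyclic, then it is cyclic and the algorithm returns false. Otherwise, this produces a weak join tree $T_w$. Then I would verify individual atom acyclicity: for each $i$ apply \cref{polytime} to $\alpha_i$, rejecting if any $\alpha_i$ is cyclic (Condition 2 of \cref{lemma:CyclicConditions}). For every edge $\{\eta_i,\eta_j\}$ of $T_w$ I compute $V_{i,j}\df \var(\eta_i)\cap\var(\eta_j)$ and reject whenever $|V_{i,j}|\geq 4$, or $|V_{i,j}|=3$ with $|\eta_i|>3$ or $|\eta_j|>3$ (Conditions 3 and 4). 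Whenever $|V_{i,j}|=3$ the atoms $\eta_i,\eta_j$ are already in $\conclog$-shape and no further decomposition is needed.

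The core of the algorithm is then to enforce, per atom, the ``meeting atom'' requirement from the third condition preceding the theorem. By \cref{lemma:skeletonTree} we may use $T_w$ itself as the skeleton. So for each $i$, collect
\[ C_i \df \bigl\{\, V_{i,j} \;\bigm|\; \{\eta_i,\eta_j\}\in E(T_w),\ |V_{i,j}|=2 \bigr\}, \]
a family of pairs of variables of $\eta_i$, and invoke \cref{lemma:atomDecomp} on $(x_i\logeq\alpha_i)$ with constraint set $C_i$. If \cref{lemma:atomDecomp} accepts for every $i$, the resulting per-atom decompositions glue together along $T_w$ into an acyclic decomposition of $\varphi$ (its correctness is exactly the ``if'' part of the characterization, together with \cref{lemma:skeletonTree}); if some call rejects then by the ``only if'' part $\varphi$ admits no acyclic decomposition.

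For the running time, normalization is $\bigO(|\varphi|^2)$, GYO on $T_w$ is polynomial in $|\varphi|$, and computing all the sets $C_i$ and performing the size checks of \cref{lemma:CyclicConditions} is $\bigO(|\varphi|^2)$. The dominating cost is the call to \cref{lemma:atomDecomp} for each atom, each running in $\bigO(|\alpha_i|^7)$; summed over the (at most $|\varphi|$ many) atoms this yields $\bigO(|\varphi|^8)$, matching the claimed bound. The main obstacle is the correctness of the gluing step: verifying that if every atom admits an acyclic decomposition whose chosen ``meeting atom'' contains the exact overlap with each neighbour in $T_w$, then the union of these decompositions really is acyclic. This is essentially the content of \cref{lemma:skeletonTree} in the forward direction, which lets us use any weak join tree as a template, combined with \cref{lemma:subtree} in the reverse direction, which shows that any acyclic decomposition must in fact arise in this schematic way; so the bulk of the proof consists of carefully invoking these two lemmas to justify that $C_i$ captures exactly the constraints imposed by $T_w$.
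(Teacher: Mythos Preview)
Your proposal is correct and follows essentially the same approach as the paper's proof: normalize, check weak acyclicity via GYO to obtain a skeleton $T_w$, reject on Conditions~2--4 of \cref{lemma:CyclicConditions}, then for each atom build the constraint set $C_i$ from the labels of its incident edges in $T_w$ and invoke \cref{lemma:atomDecomp}, finally gluing the per-atom join trees along $T_w$. The paper structures the case distinction on $\max_{k\in C_i}|k|$ slightly more explicitly and proves an in-proof claim corresponding to the ``only if'' direction you attribute to \cref{lemma:subtree}, but the argument and the $\bigO(|\varphi|^8)$ accounting are the same.
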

\begin{proof}
Let $\varphi \df \cqhead{\vec{x}} \bigwedge_{i=1}^{m} \eta_i$ be a normalized $\cpfc\noconstr$, where $\eta_i \df (x_i \logeq \alpha_i)$ for all $i \in [m]$. We first rule out some cases where $\varphi$ must be cyclic (see~\cref{lemma:CyclicConditions}):
\begin{enumerate}
\item If $\varphi$ is weakly cyclic, then return ``$\varphi$ is cyclic'', otherwise let $T_w \df (V_w, E_w)$ be a weak join tree for $\varphi$.
\item If there exists $\{ \eta_i , \eta_j \} \in E_w$ such that $|\var(\eta_i) \intersect \var(\eta_j)| > 3$ then return ``$\varphi$ is cyclic''.
\item If there exists an edge $\{ \eta_i, \eta_j\} \in E_w$ where $|\var(\eta_i) \intersect \var(\eta_j)| = 3$ and $|\eta_i| > 3$ or $|\eta_j| > 3$, then return ``$\varphi$ is cyclic''.
\end{enumerate}

We then label every edge, $e \in E_w$, with the set of variables that the two endpoints share. 
For every atom $\eta_i$ of $\varphi$, we create the set $C_i \in \mathcal{P}(\Xi)$. 
We define $C_i$ by considering every outgoing edge of $\eta_i$ in $T_w$, and taking a union of the sets that label of those edges. We now give a construction to find an acyclic decomposition $\decomp_\varphi \in \conclog\noconstr$ of $\varphi$, if one exists.

If $|C_i|=0$, then let $\decomp_i$ be any acyclic decomposition of $\eta_i$. 
If $\mathsf{max}_{k \in C_i} (|k|) = 1$ then let $\decomp_i$ be any acyclic decomposition of $\eta_i$. 
If $\mathsf{max}_{k \in C_i} (|k|) = 2$ then we can use~\cref{lemma:atomDecomp} to obtain the acyclic decomposition $\decomp_i$ of $\eta_i$ such that for all $k \in C_i$ where $|k| = 2$, there is an atom of $\decomp_i$ which contains the variables of $k$. 
If $\mathsf{max}_{k \in C_i} (|k|) = 3$ then we know that $|\eta_i| \leq 3$, and therefore $\decomp_i = \eta_i$ (see~\cref{lemma:CyclicConditions}). 

\smallskip
\begin{claim}
Assume there does not exist an acyclic decomposition $\decomp_i \in \conclog\noconstr$ of $\eta_i$ such that for all $k \in C_i$ where $|k| = 2$, there is an atom of $\decomp_i$ which contains all the variables of $k$. Then, $\varphi$ is cyclic. 
\end{claim}
\smallskip

\begin{claimproof}
We prove this claim by working towards a contradiction. Assume that there exists $\decomp_\varphi \in \conclog\noconstr$ which is an acyclic decomposition of $\varphi$, and that there exists two atoms $\eta_i$ and $\eta_j$ such that there does not exist an acyclic decomposition $\decomp_i$ of $\eta_i$ where some atom of $\decomp_i$ is of the form $(z \logeq x \cdot y)$, where $\var(\eta_i) \intersect \var(\eta_j)  = \var(z \logeq x \cdot y) \intersect \var(\eta_j) $.

Let $T \df (V,E)$ be the join-tree for $\decomp_\varphi$. We know from~\cref{lemma:subtree} that there exists a sub-tree of $T$ which is a join tree for the decompositions of $\eta_i$ and $\eta_j$. Let $T^i$ be the sub-tree of $T$ which represents a join-tree for $\decomp_i$ (the decomposition of $\eta_i$), and let $T^j$ be the sub-tree of $T$ which is a join-tree for $\decomp_j$ (the decomposition of $\eta_j$). 
Let $p$ be the shortest in path in $T$ from some node in $T^i$ to some node in~$T^j$. Because $T$ is a tree, this path is uniquely defined. However, there does not exist a node $(z \logeq x \cdot y)$ of $T^j$ such that $\var(\eta_i) \intersect \var(\eta_j) = \var(z \logeq x \cdot y) \intersect \var(\eta_j) $.
Therefore, there is some variable $z' \in \var(\eta_i) \intersect \var(\eta_j)$ where $z'$ is not a variable of every atom on the path $p$. Therefore $T$ is not a join tree.
\end{claimproof}	

\smallskip
Once we have an acyclic formula $\decomp_i \in \conclog\noconstr$ for all $i \in [m]$, we can define $\decomp_\varphi \in \conclog\noconstr$ as an acyclic decomposition of $\varphi$ as $\decomp_\varphi \df \cqhead{\vec{x}} \bigwedge_{i=1}^{m}  \decomp_i $. 

\subparagraph*{Complexity.} We now prove that, given the normalized $\varphi \in \cpfc$ we can decide in polynomial time whether $\varphi$ is acyclic.

First, construct a weak join tree for $\varphi$, which takes polynomial time using the GYO algorithm, and we label each edge with the variables that the two end points of that edge share (which takes~$\bigO(|\varphi|^2)$ time). 

We then find an acyclic decomposition of each $\eta_i$ in polynomial time using~\cref{polytime}; and if $\eta_i$ shares two variables with another atom, then we use~\cref{lemma:atomDecomp} to find an acyclic decomposition in polynomial time. 
Since there are $\bigO(\formulaSize{\varphi})$ atoms of $\varphi$, constructing the decomposition $\decomp_i$ for all atoms $\eta_i$ of $\varphi$ takes $\bigO(|\varphi||\eta_{\mathsf{max}}|^7)$ time, where $\eta_{\mathsf{max}}$ is the largest $|\eta_i|$ of any~$i \in [m]$. 
Then, let $\decomp_\varphi$ have the body $\bigwedge_{i=1}^{m} \decomp_i$ and let $\decomp_\varphi$ have the same free variables as $\varphi$. 
This last step takes $\bigO(|\varphi|)$ since we are just conjuncting the decompositions of each atoms, and setting the free variables.
 
Therefore, we can construct the acyclic formula $\decomp_\varphi$ in time $\bigO(|\varphi| |\eta_{\mathsf{max}}|^7)$.
Since $|\eta_{\mathsf{max}}| = |\varphi|$ if $m=1$, we get the final running time of $\bigO(|\varphi|^8)$.
While $\varphi$ is not necessarily normalized, we know from~\cref{lemma:normalization} that normalizing $\varphi$ can be done in $\bigO(|\varphi|^2)$. 
Furthermore, in the proof of~\cref{lemma:normalization}, the resulting normalized $\cpfc$ is of size $\bigO(n)$, where $n$ is the size of the input $\cpfc$.
Therefore, this does not affect the complexity claims of this lemma.

\subparagraph*{Correctness.} To prove that $\decomp_\varphi$ is acyclic, we construct a join tree for $\decomp_\varphi$ using the weak join tree $T_w \df (V_w, E_w)$ as the skeleton tree. Let $T^i \df (V^i, E^i)$ be a join tree for $\decomp_i$ for each $i \in [m]$. 

Let $T \df (V,E)$ be a forest where $V \df \bigcup_{i=1}^{n} V^i$ and let $E \df \bigcup_{i=1}^{n} E^i$. 
For any $\chi_i \in V^i$ and $\chi_j \in V^j$, we add an edge $\{ \chi_i, \chi_j \} \in E$ if and only if $\{ \eta_i, \eta_j \} \in E_w$ and $\var(\chi_i) \intersect\var(\chi_j) = \var(\eta_i) \intersect \var(\eta_j)$.
We know that $\decomp_\varphi$ are nodes of $V$, and $T$ is a tree.
Therefore, to show that $T \df (V, E)$ is a join tree, it is sufficient to prove that for any $\chi, \chi' \in V$ where $x \in \var(\chi) \intersect \var(\chi')$, every node that lies on the path between $\chi$ and $\chi'$ in $T$ contains the variable $x$. 
We include the proof for completeness sake, even though it is analogous to the proof of~\cref{lemma:skeletonTree}.

Assume $\chi \in V^1$ and $\chi' \in V^k$ where $V^1$ and $V^k$ are the set of vertices for the join tree for the decompositions of $\eta_1$ and $\eta_k$~respectively. 
Further assume that the path from $\eta_1$ to $\eta_k$ in $T_w$ consists of $\{\eta_i, \eta_{i+1} \}$ for all $i \in [k-1]$. 
We have a weak join tree $T_w$, and two atoms $\eta_1$ and $\eta_k$ that both contain the variable $x$.
Therefore, each word equation $\eta_i$, for $i \in [k]$, contains the variable $x$. 
Thus, for any any edge $\{ \chi_i, \chi_{i+1} \} \in E$, where $\chi_i \in V^i$ and $\chi_{i+1} \in V^{i+1}$, we have that $\var(\chi_i) \intersect \var(\chi_{i+1}) = \var(\eta_i) \intersect \var(\eta_{i+1})$.
Consequently,~$x \in \var(\chi_i) \intersect \var(\chi_{i+1})$. 

Recall that $(V^i, E^i)$ is a join tree for $\decomp_i$.
Therefore, any path between any two nodes in $V^i$ that share the variable $x$ also contain the variable $x$ (due to the fact that $T^i \df (V^i, E^i)$ is a join tree for $\decomp_i$).
Furthermore, $x \in \var(\chi_i) \intersect \var(\chi_{i+1})$  for any edge $\{ \chi_i, \chi_{i+1} \} \in E$, where $\chi_i \in V^i$ and $\chi_{i+1} \in V^{i+1}$.
Hence, it follows that all nodes on the path between $\chi$ and $\chi'$ contain the variable $x$. 
Therefore, $T \df (V, E)$ is a join tree.
\end{proof}

\cref{theorem:LVJoinTree} is the main result from this section, and gives us a tractable notion of acyclicity for $\cpfc$s and $\cpfcreg$s. As we observe next:

\begin{restatable}[]{proposition}{enumAndEval}
\label{corollary:enumerationAndEvaluation}
If $\decomp \in \concreg$ is acyclic, then:
\begin{enumerate}
\item Given $w \in \Sigma^*$, the model checking problem can be solved in time $\bigO(|\decomp|^2 |w|^3)$.
\item Given $w \in \Sigma^*$, we can enumerate $\fun{\varphi}\strucbra{w}$ with $\bigO(|\decomp|^2|w|^3)$ delay.
\end{enumerate}\leavevmode
\end{restatable}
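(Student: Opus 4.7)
The plan is to reduce the two statements to well-known facts about acyclic relational conjunctive queries, namely Yannakakis' algorithm (\cref{algo:yann}) for model checking and the Bagan--Durand--Grandjean enumeration algorithm~\cite{bagan2007acyclic}, once each atom of $\decomp$ has been turned into a concrete materialized relation over the factors of $w$. The three ingredients needed are: (i) a way to represent relations defined by atoms compactly, (ii) a way to materialize those relations within the claimed time bound, and (iii) a join tree for $\decomp$ to feed into Yannakakis/BDG.

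First I would preprocess $w$ in linear time to build the data structure of \cref{lemma:datastructure}, so that for every word equation $\chi = (x \logeq y \cdot z)$ occurring in $\decomp$, the relation $\fun{\chi}(w)$ can be enumerated with constant delay and a tuple can be tested for membership in constant time. Since $|\fun{\chi}(w)| \in \bigO(|w|^3)$, explicitly materializing $\fun{\chi}(w)$ as a table takes $\bigO(|w|^3)$ time and space per atom. For each regular constraint $(x \regconst \gamma)$ we materialize the unary relation of factors of $w$ lying in $\lang(\gamma)$; this is a standard computation over the spans of $w$ and fits comfortably within $\bigO(|\gamma| \cdot |w|^2) \subseteq \bigO(|\decomp| \cdot |w|^3)$. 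Across all $\bigO(|\decomp|)$ atoms the total materialization cost is $\bigO(|\decomp| \cdot |w|^3)$.

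Next, since $\decomp$ is acyclic, we run the GYO procedure (\cref{defn:gyo}) in polynomial time to obtain a join tree $T$ for $\decomp$, and treat the materialized tables as the input relations for a relational acyclic CQ over the (finite) universe of factors of $w$. For part (1) we apply Yannakakis' algorithm (\cref{algo:yann}): a bottom-up traversal of $T$ computes, for each node, the semi-join of its current relation with each child's relation. Each semi-join between two tables of size at most $\bigO(|w|^3)$ can be carried out in time $\bigO(|w|^3)$ using the constant-time membership check from \cref{lemma:datastructure}. With $\bigO(|\decomp|)$ nodes and $\bigO(|\decomp|)$ semi-joins, model checking takes $\bigO(|\decomp|^2 \cdot |w|^3)$, as claimed; the answer is ``yes'' iff the root table is non-empty after the traversal. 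For part (2), we extend this with the standard top-down semi-join pass used in Yannakakis' non-Boolean variant (see \cref{algo:yann} and the surrounding discussion), producing reduced, globally consistent tables in $\bigO(|\decomp|^2 \cdot |w|^3)$. We then invoke the Bagan--Durand--Grandjean enumeration procedure for acyclic CQs~\cite{bagan2007acyclic} on the reduced instance, which enumerates $\fun{\decomp}(w)$ with delay polynomial in the size of the reduced instance, hence within $\bigO(|\decomp|^2 \cdot |w|^3)$ per output tuple.

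The main obstacle is bookkeeping rather than conceptual: one must check that the materialized tables genuinely behave like classical relations over the universe of factors of $w$ (identifying distinct spans denoting equal factors via the data structure of \cref{lemma:datastructure}), so that semi-joins and projections preserve the $\fc$-semantics with $\strucvar$ bound to $w$, and that regular constraints, being unary, slot into the join tree without destroying acyclicity (as already observed in the chapter's introduction). Once this identification is in place, the two claims follow directly from the cited results on acyclic~CQs.
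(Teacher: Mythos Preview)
Your proposal is correct and follows essentially the same approach as the paper: materialize each atom's relation (word equations in $\bigO(|w|^3)$, regular constraints in $\bigO(|\gamma|\cdot|w|^2)$) to obtain a relational database of size $\bigO(|\decomp|\cdot|w|^3)$, then invoke Yannakakis' algorithm for model checking and the Bagan--Durand--Grandjean result for polynomial-delay enumeration. The paper derives the $\bigO(|\decomp|^2|w|^3)$ bound directly from the $\bigO(|\psi|\cdot|D|)$ guarantees of those black-box results rather than analyzing individual semi-joins, but the substance is the same.
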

\begin{proof}
For each word equation $\chi$ of $\decomp$, we can enumerate $\fun{\chi}\strucbra{w}$ in time $\bigO(|w|^3)$, since $\chi = (x_1 \logeq x_2 \cdot x_3)$, or $\chi = (x_1 \logeq x_2)$ for some $x_1, x_2, x_3 \in \Xi$. For every regular constraint $(x \regconst \gamma)$ of $\decomp$, we can enumerate $\fun{(x \regconst \gamma)}\strucbra{w}$ in polynomial time, since there are $\bigO(|w|^2)$ factors of $w$, and for each factor, the membership problem for regular expressions can be (conservatively) solved in time $\bigO(|\gamma| \cdot |w|)$~\cite{myers1992four}). Since there are $\bigO(|\decomp|)$ atoms of $\decomp$, computing $\fun{\chi}\strucbra{w}$ for each atom of $\decomp$ takes time $\bigO(|\decomp| \cdot |w|^3)$. Then, we can proceed with the model checking problem and enumeration of results identically to relational $\cq$s.

The upper bounds for model checking follow from previous work; for example, see Theorem 6.25~\cite{libkin2004elements}. 
Polynomial delay enumeration follows from Theorem 13 of~\cite{bagan2007acyclic}, where it was proven that given an acyclic (relational) conjunctive query $\psi$ and a database $D$, we can enumerate $\psi(D)$ with $\bigO(|\psi||D|)$ delay. 
Our ``database'' is of size $\bigO(|\decomp| \cdot |w|^3)$ because each atom of the form $(z \logeq x \cdot y)$ defines a relation of size $\bigO(|w|^3)$.
Consequently, we get the stated time complexities.
\end{proof}

For $\cpfcreg$s, we first find an acyclic decomposition $\decomp_\varphi \in \concreg$ of $\varphi$ in $\bigO(|\varphi|^7)$, and then proceed as described in~\cref{corollary:enumerationAndEvaluation}. 
Considering techniques from~\cite{bagan2007acyclic}, it may seem that for an acyclic $\cpfcreg$ without projection then we can enumerate $\fun{\varphi}\strucbra{w}$ with constant delay after polynomial-time preprocessing. 
However this is not the case. 
The decomposition of $\varphi \in \cpfc$ introduces new variables that are not free.
Hence, the resulting $\concreg$ may not be free-connex, which is required for constant delay enumeration~\cite{bagan2007acyclic}.

\paragraph{Faster algorithms.}
Up to this point, this chapter offers a notion of $\cpfcreg$ acyclicity for which model checking is tractable (that is, polynomial time).
However, finding faster algorithms for $\cpfcreg$s is an important question that remains open.
Here, we give two possible directions.

The first direction is taking an ``engineers-style'' approach, and consider algorithms that are efficient for most real-world cases, but may not improve the worst-case complexity bounds.
For example, the approach offered in~\cref{corollary:enumerationAndEvaluation} leaves room for a small optimization: Assume we are dealing with a word equation $\chi$ and regular constraint $(x \regconst \gamma)$ for some $x \in \var(\chi)$. 
Instead of computing $\fun{\chi}\strucbra{w}$ and $\fun{(x \regconst \gamma)}\strucbra{w}$ separately and then joining them, we enumerate~$\fun{\chi}(w)$ and for each substitution $\subs$ we include the check as to whether~$\subs(x) \in \lang(\gamma)$ holds. 

That is, instead of constructing a relation with $\bigO(|w|^3)$ elements, constructing a relation with $\bigO(|w|^2)$ elements, and then combining them, we instead construct $\fun{\chi\land (x \regconst \gamma)}\strucbra{w}$ directly.
This does not lower the worst-case time complexity -- as we still need to iterate over $\bigO(|w|^3)$ factors of~$w$ -- but we can avoid constructing unnecessary tables.
This approach would be very beneficial if the regular constraint ``filters out'' most factors of $w$. 

The second direction is to look at criteria that allows us to lower the upper bounds.
One could look at improving the acyclicity algorithms given in this chapter.
For example, $\bigO(|\varphi|^8)$ time for decomposing acyclic $\cpfc$s is sufficient for the purposes of this chapter (which looks at a notion of acyclicity that allows for polynomial-time model checking). However, it seems very likely that the exponent can be lowered.
Alternatively, one could consider sufficient criteria for $\cpfcreg$s that allows for sub-quadratic time model checking.

\subsection{Acyclicity for Spanners} 
Combining~\cref{Prop:RGXtoPatCQ} and~\cref{corollary:enumerationAndEvaluation} gives us a class of $\sercq$s for which model checking can be solved in polynomial-time, and for which we can enumerate results with polynomial delay. 
However, this assumes a fixed conversion from $\sercq$s to $\concreg$s, as opposed to a more ``semantic'' approach.

The hardness of deciding semantic acyclicity (whether a given $\sercq$ can be realized by an acyclic $\cpfcreg$s) is an open problem. 
The author believes that semantic acyclicity for $\sercq$s is undecidable, partly due to the fact that various minimization and static analysis problems are undecidable for $\cpfc$ and $\cpfcreg$, see~\cref{sec:decProbs}. 

First, let us consider sufficient criteria for an $\sercq$ to be realized by an acyclic $\cpfcreg$. 
\begin{definition}\index{pseudo-acyclic $\sercq$}
\label{quasi_acyclic_regex_cqs}
Consider $\query \df \pi_Y \bigl(  \select^=_{x_1,y_1} \select^=_{x_2,y_2} \cdots \select^=_{x_k, y_k} \left( \gamma_1 \join \gamma_2 \cdots \join \gamma_n \right) \bigr)$. We say that $\query$ is \emph{pseudo-acyclic} if every regex formula $\gamma_i$ is of the form $\beta_1 \cdot \bind{x_i}{\beta_2} \cdot \beta_3$ for some $x_i \in \Xi$, and $\beta_1$, $\beta_2$, and $\beta_3$ are regular expressions.
\end{definition}

We now show that~\cref{quasi_acyclic_regex_cqs} gives sufficient criteria for an $\sercq$ to be realized by an acyclic $\cpfcreg$.

\begin{restatable}[]{proposition}{QuasiAcyclicSpanners}
\label{prop:quasiAcyclic}
Given a pseudo-acyclic $\sercq$ $\query$, we can construct in polynomial time an acyclic $\cpfcreg$ that realizes $\query$.
\end{restatable}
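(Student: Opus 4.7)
The plan is to simplify the parse-tree construction from \cref{Prop:RGXtoPatCQ} by exploiting the restricted shape of pseudo-acyclic regex formulas, and then certify acyclicity of the result by exhibiting an explicit acyclic decomposition and invoking \cref{theorem:LVJoinTree}. Throughout I assume, without loss of generality, that each captured variable of $\query$ is captured by a unique regex formula (otherwise rename and add one equality selection).

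First, I would preprocess the equality operators into an equivalence relation $\sim$ on the captured variables $\{x_1,\dots,x_n\}$, computed by union-find in near-linear time. For each equivalence class $K$ I would introduce one fresh $\cpfcreg$ variable $c_K$ -- intuitively, the single ``content'' variable shared by every regex formula whose capture lies in $K$. Then, for each $\gamma_i = \beta_{i,1} \cdot \bind{x_i}{\beta_{i,2}} \cdot \beta_{i,3}$, I would emit
\[
\varphi_{\gamma_i} \df (\strucvar \logeq p_i \cdot c_{[x_i]} \cdot s_i) \land (p_i \regconst \beta_{i,1}) \land (c_{[x_i]} \regconst \beta_{i,2}) \land (s_i \regconst \beta_{i,3}) \land (x_i^P \logeq p_i),
\]
with fresh $p_i, s_i$. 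To match the naming convention of \cref{defn:realizing}, I would also add, for every $x \in Y$, the atom $(x^C \logeq c_{[x]})$, and set the free variables of $\varphi_\query$ to $\{x^P, x^C : x \in Y\}$. Correctness -- that $\varphi_\query$ realizes $\query$ -- then follows the same route as \cref{Prop:RGXtoPatCQ}, the only novelty being that the string-equalities from the selection operators are enforced by variable sharing rather than by explicit equality atoms.

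The substantive step is the acyclicity argument. Each word-equation right-hand side is either a single variable or the length-three pattern $p_i \cdot c_{[x_i]} \cdot s_i$ in three distinct variables, which is trivially acyclic via the canonical decomposition $(\strucvar \logeq p_i \cdot z_i) \land (z_i \logeq c_{[x_i]} \cdot s_i)$. For the weak join tree, I would observe that $p_i, s_i, x_i^P$, and each $x^C$ occur only inside their introducing block, so the only possible ``global'' sharing between two main atoms $(\strucvar \logeq p_i \cdot c_{[x_i]} \cdot s_i)$ and $(\strucvar \logeq p_j \cdot c_{[x_j]} \cdot s_j)$ with $i \neq j$ is the single variable $c_{[x_i]} = c_{[x_j]}$, which happens exactly when $x_i \sim x_j$. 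Consequently, each equivalence class contributes a ``star'' of main atoms around its common $c_K$, all pairwise intersections have size at most one (so Conditions~3 and~4 of \cref{lemma:CyclicConditions} are vacuous), and the second atom of the canonical decomposition above already witnesses the required shared variable on each tree edge, so \cref{lemma:atomDecomp} applies trivially. Gluing the per-block decompositions along this skeleton yields an acyclic $\decomp_{\varphi_\query} \in \concreg$, so \cref{theorem:LVJoinTree} certifies $\varphi_\query$ as acyclic. The construction is linear-size in $|\query|$ and computed in near-linear time.

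The hard part -- and the reason a verbatim reuse of \cref{Prop:RGXtoPatCQ} does not suffice -- is the treatment of the equality operators. The generic translation emits atoms $(x_j^C \logeq y_j^C)$ whose hypergraph, combined with the main atoms, is weakly cyclic whenever the equality graph on captured variables contains a cycle, and the definition of pseudo-acyclic places no restriction on that graph. Collapsing captures into equivalence-class representatives dissolves those cycles without changing the realized spanner, but then one must be careful to re-export each required $x^C$ under the original name so that the final formula still satisfies the free-variable requirements of \cref{defn:realizing}. Once this bookkeeping is in place, the remainder of the argument reduces to the routine verifications sketched above.
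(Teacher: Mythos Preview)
Your proposal is correct and takes a genuinely different route from the paper. The paper keeps the captured variables separate (it emits $(\strucvar \logeq x_i^P \cdot z_i) \land (z_i \logeq x_i^C \cdot x_i^S)$ together with the three regular constraints), and then handles the equality selections by building the equality graph $G_\select$, computing a \emph{spanning forest}, and adding one explicit atom $(x_i^C \logeq y_i^C)$ per forest edge; the acyclicity argument then assembles the join tree by first threading these equality atoms into line graphs $L_n$ and then attaching the per-formula word equations. Your approach instead collapses each $\sim$-class up front into a single representative $c_K$ and lets variable sharing do the work of the equality atoms. Both ideas address the same obstruction you identify in your last paragraph: the naive translation of selections creates cycles in the weak hypergraph whenever the equality graph is cyclic. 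Your version is arguably cleaner because it drops the number of shared variables between any two word equations to at most one, which makes the invocation of the machinery of \cref{lemma:CyclicConditions}/\cref{lemma:atomDecomp}/\cref{theorem:LVJoinTree} essentially vacuous (the case $\max_{k\in C_i}|k|=1$). The paper's version, on the other hand, lands directly in $\concreg$ without a further decomposition step and makes the handling of the output variables $x^C$ slightly less fiddly (no re-export via $(x^C \logeq c_{[x]})$ is needed). Either argument would be acceptable here.
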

\begin{proof}
Let $\query \df \pi_Y \left( \select^=_{x_1,y_1} \select^=_{x_2,y_2} \cdots \select^=_{x_m, y_m} \left( \gamma_1 \join \gamma_2 \cdots \join \gamma_k \right) \right)$ be a pseudo-acyclic $\sercq$. 
We now define $\varphi_\query \in \cpfcreg$ such that $\varphi_\query$ is acyclic. 
\begin{itemize}
\item For every $x_i \in \SVars{\query}$, we add $(\strucvar \logeq x_i^P \cdot z_i)$ and $(z_i \logeq x_i^C \cdot x_i^S)$ to $\varphi_\query$.
\item For every $\gamma_i$ for $i \in [k]$, we add $(x_i^P \regconst \beta_1)$, $(x_i^C \regconst \beta_2)$ and $(x_i^S \regconst \beta_3)$ to $\varphi_\query$.
\end{itemize}

Since for any $\gamma_i$ and $\gamma_j$ for $1 \leq i, j \leq k$ where $i \neq j$ the word equations we add to $\varphi_\query$ are disjoint, it follows that $\varphi_\query$ is (so far) acyclic. 
Furthermore, $\varphi_\query$ remains acyclic after adding the regular constraints since they are unary. Next, we deal with equalities.

Let $G_\select \df (V_\select, E_\select)$ be a graph where the set of nodes is $V_\select \df \{ x_i, y_i \mid i \in [m] \}$ and the set of edges is $E_\select \df \{ \{ x_i, y_i \} \mid i \in [m] \}$.
Let $F_s \df (V_s, E_s)$ be a spanning forest of $G_\select$. 
For every edge $\{ x_i, y_i \} \in E_s$, we add the word equation $(x_i^C \logeq y_i^C)$ to $\varphi_\query$ via conjunction. 
Finally, for every $x \in Y$, where $Y$ is the set of variable in the projection $\pi_Y$, we add $x^P$ and $x^C$ to the set of free variables for $\varphi_\query$.

\subparagraph*{Complexity.} 
First, we add two word equations to $\varphi_\query$ for every $x \in \SVars{\query}$, and for each $i \in [k]$, we add three regular constraints to $\varphi_\query$. 
For each regex formula, this takes constant time.

Then, we create an equality graph $G_\select$, and find a spanning forest of this graph. 
Finding this spanning forest takes time $\bigO(|E_\select| \cdot \mathsf{log}(|V_\select|))$ using Kruskal's algorithm (for example, see Section 23.2 of~\cite{cormen2022introduction}).
Since, via a basic combinatorial argument, we have that $|E_\select| \leq \frac{1}{2} \bigl( |V_\select|^2 - |V_\select| \bigr)$, the total time to find this spanning tree is in
\begin{align*}
& \; \bigO \bigl( \frac{1}{2} \bigl( |V_\select|^2 - |V_\select| \bigr) \cdot  \mathsf{log}(|V_\select|) \bigr) \\
= & \; \bigO \bigl( \frac{1}{2} \bigl( 2m^2 - 2m \bigr) \cdot  \mathsf{log}(2m) \bigr), \\
= & \; \bigO (m^2 \cdot \mathsf{log}(m)).
\end{align*}

Finally, for every edge in this spanning forest we add a word equation to $\varphi_\query$. 
For each edge in the spanning forest, adding such a word equation takes constant time.
Consequently, the total running time of the algorithm is in
\[ \bigO \bigl(m^2 \mathsf{log}(m) +k \bigr),  \]
where $k$ is the number of regex formulas, and $m$ is the number of string equalities in the input $\sercq$.

\subparagraph*{Correctness.} 
To show that $\varphi_\query$ is acyclic, we construct a join tree. 
For each tree of $F_\select$, let an arbitrary node be the root and assume all edges are directed away from the root. 
That is, we have a forest consisting of rooted, directed tree where an edge from $n_1$ to $n_2$ denotes an equality selection $\select^=_{n_1, n_2}$.

Then, for each non-leaf node $n$ we create an undirected line graph $L_n$ containing nodes $(n \logeq n')$ for each $(n, n') \in E_\select$, where $E_\select$ is the set of edges of $F_\select$. 
If $(n, n') \in E_\select$ and $n'$ is not a leaf node, then we find a node of $L_n$ containing the variable $n'$ -- since~$(n,n') \in E_\select$, such a node must exist -- and add a new edge to some node in $L_{n'}$. 
This results is a new forest, $F \df (G, E)$. 

Next, pick one node in each tree in $F$, and add edges between these nodes so that no cycles are introduced. 
This can be done by assuming an arbitrary ordering on the constituent trees, and adding an edge from a node in one tree to a node in the next tree (with regards to the ordering).
It follows that $F$ is now a join tree for $\bigwedge_{i=1}^k (x_i^C \logeq y_i^C)$. 

For each variable $x_i \in \SVars{\query}$, we add the nodes $(\strucvar \logeq x_i^P \cdot z_i)$ and $(z_i \logeq x_i^C \cdot x_i^S)$ to $F$, and add an edge between $(\strucvar \logeq x_i^P \cdot z_i)$ and $(z_i \logeq x_i^C \cdot x_i^S)$.
Then, add a further edge between any node of some $L_n$ that contains $x_i^C$ and $(z_i \logeq x_i^C \cdot x_i^S)$. 
Finally, we incorporate every regular constraint into the tree -- which can easily be done. 
Therefore, we have a join tree for $\varphi_\query$, and hence $\varphi_\query$ is acyclic.
\end{proof}

Freydenberger et al.~\cite{freydenberger2018joining} proved that fixing the number of atoms and the number of equalities in an $\sercq$ allows for polynomial delay enumeration of~results. 
This is in comparison to \cref{prop:quasiAcyclic} which allows for an unbounded number of joins and equality selection operators. 
However, in order to have this tractability result, the expressive power of each regex formula is restricted to only allow one variable.
While~\cref{quasi_acyclic_regex_cqs} gives sufficient criteria for an $\sercq$ to be represented by an acyclic $\cpfcreg$, many other such classes of $\sercq$s likely~exist. 

Since $\fcregucq$ is equivalent to the core spanners (see~\cref{lemma:FCandDS}), 
we can define a class of tractable core spanners as a union of acyclic $\cpfcreg$s.

\begin{proposition}
If $\varphi \in \fcregucq$ is a union of acyclic $\concreg$s, then:
\begin{enumerate}
\item model checking is in polynomial time, and
\item we can enumerate results with polynomial-delay.
\end{enumerate}
\end{proposition}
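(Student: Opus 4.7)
The plan is to lift the single-disjunct guarantees of \cref{corollary:enumerationAndEvaluation} to a finite disjunction, with the only real care needed on the enumeration side to avoid double-counting. Write $\varphi = \bigvee_{i=1}^{m} \varphi_i$ where each $\varphi_i$ is an acyclic $\concreg$; by the definition of $\fcregucq$ we also have $\fvar(\varphi_i)=\fvar(\varphi_j)$ for all $i,j\in[m]$, so the $\fun{\varphi_i}(w)$ are relations over a common set of free variables.

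For the model-checking claim, observe that $\fun{\varphi}(w)\neq\emptyset$ iff $\fun{\varphi_i}(w)\neq\emptyset$ for some $i$. So I would simply run the acyclic $\concreg$ model-checker from \cref{corollary:enumerationAndEvaluation} on each $\varphi_i$ in turn; each individual test costs $\bigO(|\varphi_i|^2|w|^3)$, and summing over $m\le|\varphi|$ disjuncts gives the claimed polynomial combined complexity.

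For the enumeration claim, the natural strategy is a sequential ``enumerate-and-filter'' procedure: process $i=1,2,\dots,m$, use the polynomial-delay enumerator of \cref{corollary:enumerationAndEvaluation} on $\varphi_i$, and before releasing a substitution $\subs$ that it produces, perform a polynomial-time model-check of $\subs$ against $\varphi_1,\dots,\varphi_{i-1}$ (treating $\subs$ as fully instantiating the free variables, and again appealing to \cref{corollary:enumerationAndEvaluation} for each membership test). This guarantees $\subs$ is output exactly once, namely when $i$ is the smallest index with $\subs\in\fun{\varphi_j}(w)$.

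The main obstacle is showing that the filter step does not blow up the delay: in principle, long runs of duplicates produced by the enumerator of $\varphi_i$ could separate two genuinely new outputs. The way I would address this is by enumerating the \emph{differences} $\fun{\varphi_i}(w)\setminus\bigcup_{j<i}\fun{\varphi_j}(w)$ one at a time, using the acyclic structure: after the $\concreg$ for $\varphi_i$ has been materialized atom-by-atom (each relation of size $\bigO(|w|^3)$, by \cref{lemma:datastructure}), the join-tree-based enumeration algorithm underlying \cref{corollary:enumerationAndEvaluation} (as in Bagan, Durand, and Grandjean~\cite{bagan2007acyclic}) can be guided by an extra polynomial-time check against $\varphi_1,\dots,\varphi_{i-1}$ at each node of the search, so that every node visited is on the path to at least one new output. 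This keeps the work between successive outputs polynomial, yielding the claimed polynomial-delay enumeration.
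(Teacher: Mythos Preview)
Your model-checking argument is correct and matches the paper exactly.

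For enumeration, you correctly isolate the obstacle (long runs of duplicates) and you have in hand the two ingredients the paper actually uses: polynomial-delay enumeration of each $\fun{\varphi_i}(w)$ and polynomial-time membership testing of a fully instantiated $\subs$ in each $\fun{\varphi_j}(w)$. Where your argument breaks is the proposed fix. You suggest guiding the Bagan--Durand--Grandjean search for $\varphi_i$ with ``an extra polynomial-time check against $\varphi_1,\dots,\varphi_{i-1}$ at each node of the search, so that every node visited is on the path to at least one new output.'' But at an internal node you only have a \emph{partial} assignment; the test you need there is whether this partial assignment extends to some $\subs\in\fun{\varphi_i}(w)\setminus\bigcup_{j<i}\fun{\varphi_j}(w)$. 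That is a non-emptiness question for $\varphi_i\land\neg\varphi_1\land\cdots\land\neg\varphi_{i-1}$ restricted by the partial assignment, which is no longer an acyclic $\concreg$ (nor even an $\fcregucq$), and nothing in \cref{corollary:enumerationAndEvaluation} lets you answer it in polynomial time. So the pruning step, as stated, is unjustified.

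The paper does not attempt any search-tree modification. Instead it invokes a general black-box result (Proposition~2.38 of Strozecki~\cite{strozecki2010enumeration}): if finitely many sets can each be enumerated with polynomial delay and each admits polynomial-time membership, then their union can be enumerated with polynomial delay. The underlying technique is an interleaving/buffering scheme that amortizes duplicate-skipping against earlier outputs, not a per-node pruning of the enumerator for $\varphi_i$. Your ingredients feed directly into that result; the missing piece in your write-up is either to cite it or to reproduce the amortization argument rather than the (unsupported) pruning idea.
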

\begin{proof}
Let $\varphi \df \bigvee_{i=1}^n \varphi_i$, where $\varphi_i \in \cpfcreg$ is acyclic. 

First, let us consider the model checking problem.
Since model checking for acyclic $\cpfcreg$s is in polynomial time (recall~\cref{corollary:enumerationAndEvaluation}), we simply check $w \models \varphi_i$ for each $i \in [n]$.
If $w \models \varphi_i$ for any $i \in [n]$, then we return true.
Otherwise, we return false. 
Hence, model checking for $\varphi$ is in polynomial time.

For enumeration of results, we use Proposition 2.38 in Strozecki~\cite{strozecki2010enumeration} to immediately determine that we can enumerate results with polynomial delay.
\end{proof}

As we only have sufficient syntactic criteria for a $\sercq$ to be represented as an acyclic $\cpfcreg$s, we only scratch the surface of determining which core spanners can be represented as a union of acyclic $\cpfcreg$s.
While the author believes semantic acyclicity is likely undecidable, finding good sufficient criteria for a $\sercq$ to be realized by an acyclic $\cpfcreg$ seems like a very promising direction for future research.
\section{$k$-ary Decompositions}\label{sec:kfold}
We now generalize the notion of pattern decomposition so that the length of the right-hand side of each word equation in the resulting query is less than or equal to some fixed $k \geq 2$. 
Note that if the right-hand side of each word equation in exactly $k$, then many patterns (such as any pattern of length less than $k$) would not be expressible. 
Thus, this section focuses on word equations where the length of the right-hand side is less than or equal to $k$.
While binary decompositions may be considered the more natural case, we show that generalizing to higher arities increases the expressive power of acyclic patterns.
 
By \index{FCCQ@$\cpfc$!$kfc@$\kconclog{k}$}$\kconclog{k}$ we denote the set of $\cpfc$s where all word equations have a right-hand side of at most length $k$. 
We define \index{bpatk@$\brac_k$}$\brac_k$ formally using the following recursive definition: For all $x \in \Xi$ we have that $x \in \brac_k$, and if $\alpha_1, \alpha_2, \dots, \alpha_i \in \brac_k$ where $i\leq k$, then $(\tilde\alpha_1 \cdot \tilde\alpha_2 \cdots \tilde\alpha_i) \in \brac_k$. 
We write $\tilde\alpha \in \brac_k(\alpha)$ for some $\alpha \in \Xi^+$ if the underlying, unbracketed pattern of $\tilde\alpha$ is $\alpha$. 
We can convert $\tilde\alpha \in \brac_k$ into an equivalent $\kconclog{k}$ analogously to the binary case, see~\cref{defn:conclogConversion}, with the only difference being the right-hand side of the introduced word equations can have a length of up to $k$.

\begin{example}
\label{example:kfold}
Consider the following $4$-ary bracketing: 
\[\tilde\alpha \df \Bigl( \bigl( ( x_1 \cdot x_2 \cdot x_3) \cdot (x_4 \cdot x_2 \cdot x_4) \cdot (x_1 \cdot x_2) \cdot (x_5 \cdot x_5) \bigr) \cdot (x_1 \cdot x_2)  \Bigr) .\]

As with the binary case, we decompose $\tilde\alpha$ to get the following $\kconclog{4}$:
\begin{multline*} 
\decomp_{\tilde\alpha} \df \cqhead{}  (z_1 \logeq x_1 \cdot x_2 \cdot x_3) \land (z_2 \logeq x_4 \cdot x_2 \cdot x_4) \land (z_3 \logeq x_1 \cdot x_2) \\ \land (z_4 \logeq x_5 \cdot x_5) \land (z_5 \logeq z_1 \cdot z_2 \cdot z_3 \cdot z_4) \land (\strucvar \logeq z_5 \cdot z_3). 
\end{multline*}
\end{example}

The definition of $k$-ary concatenation tree for a decomposition $\decomp_{\tilde\alpha} \in \kconclog{k}$ of $\tilde\alpha \in \brac_k$ follows analogously to the concatenation trees for $2$-ary decompositions, see~\cref{defn:concatenationTree}, with the only difference being that a node can have at most $k$ children. 

More formally, the concatenation tree of the decomposition $\decomp_{\tilde\alpha} \in \kconclog{k}$ is a rooted, labeled, undirected tree $\mathcal{T} \df (\mathcal{V}, \mathcal{E}, <, \Gamma, \labelFunction, v_r)$, where $\mathcal{V}$ is the set of nodes, the relation $\mathcal{E}$ is the edge relation, and $<$ is used to denote the order of children of a node (from left to right). We have that $\Gamma \df \var(\decomp_{\tilde\alpha})$ is the alphabet of labels and $\tau \colon \mathcal{V} \rightarrow \Gamma$ is the labelling function. The semantics of a $k$-ary concatenation tree are defined by considering the natural generalization of~\cref{defn:concatenationTree}. 
We say that $\decomp_{\tilde\alpha}$ is \emph{$x$-localized} if all nodes which exist on a path between two $x$-parents (of $\mathcal{T}$) are also $x$-parents.

\begin{restatable}[]{proposition}{3aryLocalized}\label{prop:3aryLocalized}
There exists $\tilde\alpha \in \brac_3$ such that the decomposition $\decomp \in \kconclog{3}$ of $\tilde\alpha$ is acyclic, but there exists $x \in \var(\decomp)$ such that $\decomp$ is not $x$-localized.
\end{restatable}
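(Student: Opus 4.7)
The plan is to establish this proposition by exhibiting a concrete witness $\tilde\alpha \in \brac_3$, computing its decomposition $\decomp$, certifying acyclicity by producing an explicit join tree, and then reading off the concatenation tree to exhibit a pair of $x$-parents whose connecting path passes through a node that is not an $x$-parent. Unlike~\cref{lemma:cycledistance}, which is an ``if and only if'' in the binary setting, the proposition here only asserts existence, so a single well-chosen example suffices.

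The main structural reason to expect such a witness is the following contrast with the binary case. In $\brac_2$, every introduced variable for a sub-bracketing that appears only once lives in exactly two atoms (the atom of the labelled node and the atom of its parent), so the corresponding $n-1$ adjacencies are rigidly forced in any join tree; a counting argument then shows that the join tree coincides with the concatenation tree, which is why $x$-localization becomes necessary for acyclicity in~\cref{lemma:cycledistance}. In $\brac_3$, two relaxations break this rigidity: an atom may carry up to four distinct variables, giving more shared variables between neighbours, and a repeated sub-bracketing produces an introduced variable that appears in three or more atoms, which enforces only a connected subtree rather than a single edge. Either feature loosens the forced-edge count and opens the possibility of a join tree that differs from the concatenation tree, which is precisely the flexibility we need.

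The key steps are: (1) fix a candidate $\tilde\alpha \in \brac_3$ containing at least one repeated sub-bracketing or a ternary atom with large variable overlap; (2) list the atoms of $\decomp$ together with their variable sets, and for each variable record which atoms contain it; (3) identify the edges forced by variables occurring in exactly two atoms, obtaining a spanning forest; (4) complete this forest to a tree $T$ by choosing further edges that respect the connected-subtree requirement for every remaining variable, and verify directly that $T$ is a join tree, thereby certifying acyclicity; (5) draw the concatenation tree $\mathcal{T}$, applying the pruning in~\cref{defn:concatenationTree}, and exhibit two $x$-parents whose connecting path in $\mathcal{T}$ contains a node that is not an $x$-parent. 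The principal obstacle is step~(1): $\tilde\alpha$ must be engineered so that the forced adjacencies from introduced variables do not rigidly reproduce the concatenation tree yet still leave room for a globally consistent join tree; once a suitable $\tilde\alpha$ is fixed, the remaining steps reduce to mechanical checks. I expect the successful witness to have the concatenation tree separate the two $x$-parents by an atom that shares an original variable or a repeated introduced variable with other $x$-carrying atoms, providing the alternative routing that the join tree exploits.
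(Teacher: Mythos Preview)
Your proposal is correct and matches the paper's approach: both proceed by exhibiting a single concrete witness and verifying the two required properties directly. The paper simply writes down $\tilde\alpha = ((x_3\cdot x_3)\cdot((x_3\cdot x_3)\cdot x_2)\cdot(x_1\cdot((x_3\cdot x_3)\cdot x_2)))$, whose decomposition has four atoms $(z_1\logeq x_3 x_3)$, $(z_2\logeq z_1 x_2)$, $(z_3\logeq x_1 z_2)$, $(\strucvar\logeq z_1 z_2 z_3)$; acyclicity is immediate via GYO (the ternary root atom absorbs everything), while the concatenation tree has $z_1$-parents at the root and deep inside the $z_3$-subtree with the non-$z_1$-parent $z_3$-node between them---precisely the repeated-sub-bracketing mechanism you anticipated in your structural discussion.
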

\begin{proof}
Consider $\tilde\alpha \df ( (x_3 \cdot x_3) \cdot ((x_3 \cdot x_3) \cdot x_2) \cdot ( x_1 \cdot ((x_3 \cdot x_3) \cdot x_2)))$. 
This bracketing is decomposed into:
\[ \decomp_{\tilde\alpha} \df \cqhead{} (z_1 \logeq x_3 \cdot x_3) \land (z_2 \logeq z_1 \cdot x_2) \land (z_3 \logeq x_1 \cdot z_2) \land (\strucvar \logeq z_1 \cdot z_2 \cdot z_3) . \]
The formula $\decomp_{\tilde\alpha}$ is not $z_1$-localized, but it is acyclic. 
\end{proof}

Thus, our characterization of $2$-ary acyclic decompositions given in~\cref{lemma:cycledistance} does not hold for higher arities.
Our next focus is on sufficient criteria for $k$-ary acyclic patterns.
To this end, we introduce the following:

\begin{definition}\index{kary@$k$-ary local pattern}
We say that $\alpha \in \Xi^+$ is \emph{$k$-ary local} if there exists a $k$-ary decomposition $\decomp \in \kconclog{k}$ such that $\decomp$ is $x$-localized for all $x \in \var(\decomp)$.
\end{definition}

While we know from~\cref{prop:3aryLocalized} that there are acyclic patterns that are not $k$-ary local, it follows directly from the if-direction of~\cref{lemma:cycledistance} that all $k$-ary local patterns are $k$-ary acyclic.
Thus, we can generalize the results from~\cref{subsec:acycPatAlgorithm}.

\begin{theorem}
\label{thm:karydecomp}
Let $\alpha \in \Xi^+$ and let $k \in \mathbb{N}$.
If $\alpha$ is $k$-ary local, then we can decompose $\alpha$ into an acyclic query $\decomp_\alpha \in \kconclog{k}$ in polynomial time. 
\end{theorem}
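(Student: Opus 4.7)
The plan is to adapt the bottom-up fixed-point algorithm from the proof of~\cref{polytime,thm:acyclicPatternAlgo}, replacing binary splits by splits into $r \leq k$ contiguous pieces. As in the binary case, we index factors by pairs $(i,j)$ with $\alpha[i,j] \df \alpha_i \cdots \alpha_j$, maintain a set $V$ of factors that are already known to admit an $x$-localized $k$-ary decomposition, and a relation $E$ that records, for each $(i,j) \in V$, a witnessing split $((i_1,j_1),\ldots,(i_r,j_r))$ with $r \leq k$, $i_1 = i$, $j_r = j$, and $j_s + 1 = i_{s+1}$. We initialize $V$ with all singletons $(i,i)$ and repeatedly try to extend $V$: for each $(i,j)$ not yet in $V$ and each choice of $r \leq k$ cut points yielding pieces $(i_1,j_1),\ldots,(i_r,j_r) \in V$, we invoke a subroutine $\mathsf{IsLocal}$ that decides whether the parent combining these pieces preserves $x$-localization; if so, we add $(i,j)$ to $V$ and record the split in $E$. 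We stop when a fixed point is reached and return success iff $(1,|\alpha|) \in V$, at which point a concatenation tree (and hence a decomposition $\decomp_\alpha \in \kconclog{k}$) can be read off from $E$ exactly as in the proof of~\cref{thm:acyclicPatternAlgo}.

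The heart of the argument is the $\mathsf{IsLocal}$ check, and it is essentially the only place where the $k$-ary case departs from the binary one. Consider a candidate parent node with children subtrees $T_1,\ldots,T_r$ corresponding to pieces $\alpha[i_s,j_s]$, where each $T_s$ is already $y$-localized for every $y \in \var(\decomp_{T_s})$ by the inductive hypothesis. The new combined decomposition fails to be $x$-localized for some variable $x$ exactly when $x$ appears as an $x$-parent inside two distinct subtrees $T_s, T_{s'}$, yet the new root is not itself an $x$-parent. Since the children of the new root carry labels $w_1,\ldots,w_r$, where each $w_s$ is either the original variable occupying that piece (when $\alpha[i_s,j_s]$ is a single symbol) or an introduced variable representing that piece, the root is an $x$-parent iff some piece is the singleton $\alpha[i_s,j_s]=x$. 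Hence $\mathsf{IsLocal}$ accepts precisely when, for every $x \in \Xi$ that occurs in at least two of the pieces $\alpha[i_s,j_s]$, at least one of those pieces equals $x$ itself; equivalently, every ``shared'' variable is already isolated as its own child of the root. This generalizes Cases 1--6 of the binary $\mathsf{IsAcyclic}$ routine: case 1 (equal pieces) and cases 3--6 (one piece embedded as an immediate child of the other's root) are both instances of having a singleton copy of each shared variable as a child, while case 2 (disjoint variables) is the degenerate case when there are no shared variables at all.

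Correctness of the algorithm then proceeds in two directions, mirroring the proof of~\cref{polytime}. For soundness, any factor added to $V$ inherits an $x$-localized $k$-ary decomposition by splicing the witnessing split with the already-localized sub-decompositions; the $\mathsf{IsLocal}$ condition above is exactly what is needed to extend $x$-localization across the new root. For completeness, if $\alpha$ is $k$-ary local, then some $k$-ary decomposition $\decomp$ is $x$-localized for every $x$. Reading off the top-level split of the corresponding concatenation tree yields pieces that are themselves $k$-ary local (their induced subtrees remain $x$-localized), and the shared-variable-is-a-singleton condition holds at the root by localization. A straightforward induction on the height of the concatenation tree then shows that every factor appearing as a node label is eventually added to $V$, so in particular $(1,|\alpha|)$ is, and the algorithm succeeds. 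Once $(1,|\alpha|) \in V$, we reconstruct an explicit concatenation tree by a top-down traversal of $E$ as in~\cref{algorithm:tree}, and label the non-leaf nodes with fresh introduced variables to obtain $\decomp_\alpha \in \kconclog{k}$.

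For the running time, the outer fixed-point loop runs at most $|V| \in O(|\alpha|^2)$ times. In each iteration, we enumerate at most $O(|\alpha|^2)$ candidate intervals $(i,j)$, and for each we try $O(|\alpha|^{k-1})$ ways to place the $r-1 \leq k-1$ internal cut points. The $\mathsf{IsLocal}$ check reduces to computing, over the $r \leq k$ pieces, which variables occur in at least two of them and verifying that each such variable also occurs as a singleton piece; after linear-time preprocessing of the input word to support constant-time factor equality (\cref{compModel,lemma:datastructure}) and $O(|\alpha|)$ time to collect the variable multisets, this runs in $O(k \cdot |\alpha|)$ time. The total is therefore polynomial in $|\alpha|$ for every fixed $k$, which suffices for the statement of the theorem; the main obstacle in turning this outline into a formal proof is giving a clean argument that $\mathsf{IsLocal}$ correctly captures $x$-localization when fusing $r$ previously localized subtrees, which is really a $k$-ary analogue of the only-if direction of~\cref{lemma:cycledistance} restricted to the localization condition rather than full acyclicity.
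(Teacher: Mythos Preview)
Your overall strategy---bottom-up dynamic programming over intervals with a per-split localization test, then reading off a concatenation tree from $E$---matches the paper's approach. The gap is in your $\mathsf{IsLocal}$ criterion, which is wrong and makes the algorithm incomplete.

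You claim that the new root preserves $x$-localization iff every variable $x$ shared between two pieces also occurs as a singleton piece. Consider $\alpha = x_1 x_2 x_1 x_2$ with the binary split into $(x_1 x_2)$ and $(x_1 x_2)$. Both pieces contain $x_1$ and $x_2$, neither piece is a singleton, so your $\mathsf{IsLocal}$ rejects. But the decomposition $(z_1 \logeq x_1 x_2) \land (\strucvar \logeq z_1 z_1)$ \emph{is} $x$-localized for every $x$: because the two pieces receive the \emph{same} introduced label $z_1$, one child of the root is pruned in the concatenation tree, and all $x_1$-parents (and $x_2$-parents) sit in the single remaining subtree. Since every other $2$-ary bracketing of $x_1 x_2 x_1 x_2$ fails localization, your algorithm would wrongly declare this pattern not $2$-ary local. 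Your sentence that Cases~1 and 3--6 of the binary $\mathsf{IsAcyclic}$ ``are both instances of having a singleton copy of each shared variable as a child'' is simply false.

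The underlying error is that you ignore pruning: when two children of the root carry the same label---because the pieces are equal, or because one piece equals an immediate child of another piece in a decomposition already recorded in $E$---the redundant subtree vanishes and no gap is created at the root. The paper's localization test captures exactly this: it checks, for \emph{each pair} of children, that either (i) the factors are equal, (ii) their variable sets are disjoint, or (iii)/(iv) one factor appears as a child of the other in some tuple already in $E$. Conditions (i), (iii), (iv) all rely on label-sharing via pruning, and none of them is implied by your singleton criterion; moreover, (iii)/(iv) require consulting $E$, which your check never does. To repair the argument you must replace your $\mathsf{IsLocal}$ by this pairwise test against $E$, after which soundness, completeness, and the polynomial bound go through as you sketch.
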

\begin{proof}
To prove this, we give a generalization of~\cref{algorithm:acycPat}.
Therefore, we do not give pseudocode; instead, we give an explanation of each step the algorithm would take.

Let $V$ be a set of intervals of the input pattern $\alpha$, and let $E$ a set of tuples over $V$.
The set $E$ consists of tuples $(n_1, n_2,\dots, n_m)$ where $2 \leq m\leq k$.
Each $n_i$ for $i \in [m]$ is a pair $(i,j)$ where $1 \leq i < j \leq |\alpha|$.

We interpret $(V,E)$ as a graph.
A tuple $(n_1,n_2,\dots, n_m)$ is interpreted as $n_1$ being a node with the children $n_2$, $n_3$, \dots, $n_m$.
This is the natural generalization of $V$ and $E$ that are constructed from a pattern in~\cref{algorithm:acycPat}.

\paragraph{Initialization.}
We initialize $E$ as follows: For all $0 \leq l \leq k-2$, and for all $i,j \in \mathbb{N}$ where $1 \leq i < j \leq |\alpha|$, let
\[
E \df \bigl\{ \bigl( (i,j),(i,i), (i_1,i_1), (i_2,i_2), \dots , (i_l,i_l), (j,j) \bigr) \mid j > i+1 \bigr\}, 
\]
where $i_1 = i + 1$, $i_2 = i_1+1$, $i_3 = i_2+1$, \dots, and $j=i_l + 1$.
For intuition, for any $l \leq k$, we have $l$ edges from $(i,j)$ to its children $(i,i)$, $(i_1,i_1)$, \dots, and $(j,j)$.
We initialize $V$ to be the set of nodes that are used in $E$.

This can be thought of as a $k$-ary generalization of line 2 in~\cref{algorithm:acycPat}.

\paragraph{Localized Tuples.}
Next, we describe what it means for a tuple 
\[ \bigl( (i,j),(i,i_1), (i_1+1,i_2), \dots, (i_{l-1}+1,i_l), (i_l+1,j) \bigr) \] 
to be a \emph{localized} with respect to $E$.
Recall that this tuple is interpreted as a set of edges from $(i,j)$ to each of $(i,i_1)$, $(i,j)$, $(i_1 +1, i_2)$, \dots, and $(i_l +1, j)$.
We say that the tuple $\bigl( (i,j),(i,i_1), (i_1+1,i_2), \dots, (i_{l-1}+1,i_l), (i_l+1,j) \bigr)$ is localized with respect to $E$, if all distinct pairs of children $(u_1,u_2), (u_3,u_4)$ of $(i,j)$ adhere to one of the following conditions:
\begin{enumerate}
\item $\alpha[u_1,u_2] = \alpha[u_3,u_4]$,
\item $\var(\alpha[u_1,u_2]) \intersect \var(\alpha[u_3,u_4]) = \emptyset$,
\item there exists some tuple in $E$ where $(u_3,u_4)$ is a parent and $(v,v')$ is a child of $(u_3,u_4)$ such that $\alpha[v,v'] = \alpha[u_1,u_2]$, or
\item there exists some tuple in $E$ where $(u_1,u_2)$ is a parent and $(v,v')$ is a child of $(u_1,u_2)$ such that $\alpha[v,v'] = \alpha[u_3,u_4]$.
\end{enumerate}

For intuition, an edge is localized if the $\conclog$ representation is $x$-localized for all variables $x$ in the said $\conclog$.
This is a generalization of when~\cref{algorithm:IsAcyclic} returns true in the binary case.

\paragraph{Main Loop.}
Analogously to~\cref{algorithm:acycPat}, we iterate the following until $E$ reaches a fixed point.
\begin{enumerate}
\item Pick some $(i,j) \notin V$ where 
\begin{enumerate}
\item $1 \leq i < j \leq |\alpha|$, 
\item $(i,i_1), (i_1+1,i_2), \dots, (i_{l-1}+1,i_l), (i_l+1,j) \in V$ for some $2 \leq l \leq k$, 
\item $\bigl( (i,j),(i,i_1), (i_1+1,i_2), \dots, (i_{l-1}+1,i_l), (i_l+1,j) \bigr)$ is localized with respect to $E$.
\end{enumerate}
\item Add $(i,j)$ to $V$ and $((i,j),(i,i_1), (i_1+1,i_2), \dots, (i_{l-1}+1,i_l), (i_l+1,j))$ to $E$.
\end{enumerate}

This concludes the discussion about what constitutes the ``main loop''.

After $E$ has reached a fixed point, either $(1, |\alpha|) \in V$ and $\alpha$ is $k$-proximate, or $\alpha$ is not $k$-proximate.

\paragraph*{Deriving the concatenation tree.}
This is a direct generalization of the process of deriving the concatenation tree in the binary case, see the proof of~\cref{thm:acyclicPatternAlgo}.

\subparagraph*{Correctness.}
Directly from the fact that we derive a $k$-ary concatenation tree for some decomposition of $\alpha$, we have that we can find an acyclic decomposition $\decomp_{\tilde\alpha} \in \kconclog{k}$ for $\alpha \in \Xi^+$ where $\decomp_{\tilde\alpha}$ is $x$-localized for all $x \in \var(\decomp_{\tilde\alpha})$.

That is, from $V$ and $E$ computed from the ``main loop'', we perform a top-down traversal of the graph $(V,E)$, and choose one $k$-ary concatenation tree.
Since only localized tuples are added to $E$, it follows that the resulting concatenation tree is $x$-localized for all $x \in \var(\decomp_{\tilde\alpha})$.

\subparagraph*{Complexity.} 
Let us first consider the initialization stage. 
Clearly, $E$ can be initialized in polynomial time. 
For some fixed $k \in \mathbb{N}$, the number of elements in $E$ after initialization is $\bigO(k \cdot |\alpha|)$.
Since no computation is needed to add the elements to the set $E$, the initialization takes $\bigO( |\alpha|)$ time.

Now, let us consider the main loop.
To show the time complexity of this procedure, we first determine the upper bounds for the size of $E$.
Consider $(V,E)$ to be the graph that results from executing the above algorithm on $\alpha \in \Xi^*$.
Let $E_{\mathsf{max}}$ be the maximum number of nodes in $(V,E)$.
There are $\bigO(|\alpha|^2)$ possible nodes since the nodes are intervals of the input pattern $\alpha$.
For each node $(i,j)$, there can be $\bigO(|\alpha|)$ number of possible edges where $(i,j)$ is the parent using binary concatenation.
Likewise, for each node $(i,j)$ there can be $\bigO(|\alpha|^2)$ number of possible edges where $(i,j)$ is the parent using ternary concatenation.
Since we have up to $k$-ary concatenation, we have that for a node $(i,j)$, there are:
\[ \sum_{l=1}^{k} \bigO(|\alpha|^{l-1}) \]
possible edges in $E$ such that $(i,j)$ is the parent node.
As there are $\bigO(|\alpha|^2)$ many nodes in $(V,E)$, we get that:
\begin{align*}
E_{\mathsf{max}} & = \bigO(|\alpha|^2) \cdot  \sum_{l=1}^{k} \bigO(|\alpha|^{l-1}),  \\
					  & =  \bigO(|\alpha|^2) \cdot \bigO(|\alpha|^{k-1}), \\
					  & = \bigO(|\alpha|^{k+1}).
\end{align*}

Since the main loop stops after $E$ has reached a fixed point, it follows that the main loop can be iterated at most $E_{\mathsf{max}}$ times.
It is clear that $E_{\mathsf{max}}$ is polynomial in the size of $\alpha$, since $k$ is fixed.
Hence, if each iteration of the main loop is in polynomial time, the algorithm is in polynomial time.
Therefore, we next show that one iteration of the main loop takes polynomial time.

Recall that for each iteration of the main loop, we do the following:
\begin{enumerate}
\item Pick some $(i,j) \notin V$ where 
\begin{enumerate}
\item $1 \leq i < j \leq |\alpha|$, 
\item $(i,i_1), (i_1+1,i_2), \dots, (i_{l-1}+1,i_l), (i_l+1,j) \in V$ for some $2 \leq l \leq k$, 
\item $\bigl( (i,j),(i,i_1), (i_1+1,i_2), \dots, (i_{l-1}+1,i_l), (i_l+1,j) \bigr)$ is  localized with respect to $E$.
\end{enumerate}
\item Add $(i,j)$ to $V$ and $((i,j),(i,i_1), (i_1+1,i_2), \dots, (i_{l-1}+1,i_l), (i_l+1,j))$ to $E$.
\end{enumerate}

Since $E_{\mathsf{max}}$ is polynomial in the size of $\alpha$, we have find the tuple we check is localized in polynomial time.

Then, checking whether the considered tuple is localized can be done in polynomial time, by iterating over the edges currently in $E$, and check if the required conditions required .
Therefore, both the initialization stage, and the main loop can be done in polynomial time with respect to the length of $\alpha$.

The fact that the concatenation tree can be derived from the resulting $(V,E)$ from the main algorithm follows as a direct generalization of deriving the concatenation tree for the binary case, see the proof of~\cref{polytime}.
This is clearly in polynomial time.
From the concatenation tree, we can directly get an acyclic decomposition $\decomp \in \kconclog{k}$ for some pattern $\alpha \in \Xi^+$. 

Therefore, given a $k$-proximate pattern $\alpha \in \Xi^+$, we can construct, in polynomial time, a $k$-ary acyclic decomposition $\decomp \in \kconclog{k}$.
\end{proof}

As the membership problem for pattern languages is $\np$-complete, there has been some effort to find classes of patterns for which the membership problem is polynomial time~\cite{day2018local, manea2019matching, reidenbach2014patterns}.

We now show that for $k$-ary local patterns, the membership problem is in polynomial time.
Since we have only defined $k$-ary local patterns for terminal-free patterns, we first must discuss what it means for a pattern with terminals to be $k$-ary local.

\begin{definition}
From any pattern $\alpha \in (\Sigma \union \Xi)^+$, we construct a terminal-free pattern $\beta \in \Xi^+$ as follows:
Consider the unique factorization of $\alpha$
\[\alpha = w_0 \cdot \prod_{i=1}^n (w_i \cdot x_i) , \]
where $w_0, \dots, w_n \in \Sigma^*$, and $x_1, \dots, x_n \in \Xi$.
Then, we replace each $w_i$ where $i \in \{0, \dots, n\}$ and $w_i \neq \emptyword$ with a new and unique variable $z_i \in \Xi$ to get $\beta$. 
We call $\beta$ the \emph{terminal-free core} of $\alpha$.

We say that $\alpha \in (\Sigma \union \Xi)^*$ is \emph{$k$-ary local}, if $\alpha = \emptyword$ or the terminal-free core of $\alpha$ is $k$-ary local.
\end{definition}

It is clear that any pattern $\alpha \in (\Sigma \union \Xi)^*$ is $k$-ary local if $|\alpha| \leq k$.
Thus, we have a parametrized class of patterns, and utilizing Yannakakis' algorithm we are able to show that the membership problem for $k$-ary local patterns is tractable.

\begin{corollary}\label{kprox:membership}
For any fixed $k \in \mathbb{N}$, the membership problem for $k$-ary local patterns can be solved in polynomial time.
\end{corollary}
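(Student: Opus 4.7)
The plan is to reduce the membership problem for a $k$-ary local pattern to a model checking problem for an acyclic $\kconcreg{k}$, and then to apply a $k$-ary generalization of the argument behind~\cref{corollary:enumerationAndEvaluation}.

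Given input $\alpha \in (\Sigma \cup \Xi)^+$ and $w \in \Sigma^*$, first handle the trivial case $\alpha = \emptyword$ directly. Otherwise, compute the terminal-free core $\beta \in \Xi^+$ of $\alpha$ in linear time by scanning $\alpha$ and replacing each maximal terminal block $w_i$ with a fresh variable $z_i$. By the definition of a $k$-ary local pattern with terminals, $\beta$ is itself $k$-ary local, so I would invoke~\cref{thm:karydecomp} to obtain, in polynomial time, an acyclic decomposition $\decomp_\beta \in \kconclog{k}$ of $\beta$. Then I would form
\[
\varphi \df \decomp_\beta \;\land\; (\strucvar \regconst w) \;\land\; \bigwedge_{i} (z_i \regconst w_i),
\]
where each $w_i$ is viewed as a regular expression denoting exactly the word $w_i$. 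By construction, for any substitution $\subs$ with $\subs(\strucvar) = w$ we have $\subs \models \varphi$ iff $\subs(z_i) = w_i$ for every introduced $z_i$ and $\subs(\beta) = w$, which in turn holds iff the induced substitution on $\var(\alpha)$ witnesses $w \in \lang(\alpha)$. Hence $w \in \lang(\alpha)$ iff $w \models \varphi$.

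Next, I would argue that $\varphi$ is acyclic as a $\kconcreg{k}$: since $\decomp_\beta$ has a join tree (by~\cref{thm:karydecomp}) and regular constraints are unary predicates, each constraint $z_i \regconst w_i$ (and $\strucvar \regconst w$) can be attached as a new leaf to some node of the join tree containing the relevant variable, which preserves the join-tree property. To bound the model checking cost, I would materialize the relation defined by each atom: for a word equation $z \logeq x_1 \cdots x_l$ with $l \leq k$, the relation has at most $O(|w|^{2(l+1)}) = O(|w|^{2(k+1)})$ tuples, which is polynomial for fixed $k$ and can be enumerated in polynomial time by iterating over spans; for each regular constraint $z \regconst u$, the relation is a subset of the $O(|w|^2)$ factors of $w$ and can be constructed in polynomial time. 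With all atom-relations materialized in polynomial size, I can treat $\varphi$ as an acyclic relational conjunctive query and run Yannakakis' algorithm (\cref{algo:yann}), which then decides $w \models \varphi$ in polynomial time.

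The main technical point, and the place one must be careful, is the dependence on $k$: the construction yields relations of arity up to $k+1$ and size $O(|w|^{2(k+1)})$, so the bound is polynomial \emph{only} for fixed $k$, as asserted in the corollary. Everything else --- extracting the terminal-free core, invoking~\cref{thm:karydecomp}, augmenting with regular constraints while maintaining acyclicity, and executing the generalized Yannakakis procedure --- is a routine adaptation of machinery already developed in this chapter.
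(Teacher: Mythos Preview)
Your proposal is correct and follows essentially the same approach as the paper: compute the terminal-free core, invoke \cref{thm:karydecomp} to obtain an acyclic $\kconclog{k}$ decomposition, reintroduce the terminal blocks via unary regular constraints (which preserve acyclicity), materialize the polynomially-sized atom relations, and run Yannakakis' algorithm. The only cosmetic difference is that the atom $(\strucvar \regconst w)$ is redundant, since $\subs(\strucvar)=w$ is already forced by the semantics of $w \models \varphi$; and your size bound $O(|w|^{2(k+1)})$ is looser than necessary (the concatenation constraint brings it down to $O(|w|^{k+1})$), but this does not affect the polynomial-time conclusion.
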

\begin{proof}
This result is a corollary of~\cref{thm:karydecomp}.
Let $\beta \in \Xi^+$ be a $k$-ary local pattern.
Then, since $\beta$ is $k$-ary local, we can find a $k$-ary acyclic decomposition $\decomp_{\tilde\beta} \in \kconclog{k}$ in polynomial time~\cref{thm:karydecomp}.

Given $w \in \Sigma^*$, we can determine whether $w \in \lang(\alpha)$ as follows:
For each atom $x \logeq \alpha$ of $\decomp_{\tilde\beta}$, we construct the relation $\fun{x \logeq \alpha}(w)$. 
Then, we find a join tree $T \df (V,E)$ for $\decomp_{\tilde\beta}$.
Since $|\alpha| \leq k$ for some fixed $k \in \mathbb{N}$, this can be done in polynomial time.
We then carry out Yannakakis' algorithm to determine whether $w \models \decomp_{\tilde\beta}$, which in turn decides whether $w \in \lang(\beta)$.

We can deal with terminal symbols in our input pattern, by first replacing each terminal word of the input pattern by a variable, and add a regular constraint.
These details are omitted since we have already discussed such considerations in the proof of~\cref{lemma:normalization}.
\end{proof} 

While the majority of the work presented in this thesis considers using patterns and word equations as a useful tool for examining IE queries,~\cref{kprox:membership} indicates how database theory (in our case, acyclic conjunctive queries) can be useful for problems in formal languages.
Namely, we have given a parameterized class of patterns for which the membership problem is in polynomial time by utilizing Yannakakis' algorithm.
A systematic study into $k$-ary acyclic decompositions may yield new insights into tractable patterns and $\cpfc$s.
More general approaches also appear to be a very promising direction for future work.
For example, Freydenberger and Peterfreund~\cite{frey2019finite} looked at patterns that can be represented by $\epfc$-formulas with bounded width, which results in polynomial-time membership.
They also connected this result to treewidth for patterns as introduced by Reidenbach and Schmid~\cite{reidenbach2014patterns}, which is defined as the treewidth of a graph encoding for a given pattern.

\chapter{Dynamic Complexity}\label{chp:dynfo}
This chapter examines information extraction from a \emph{dynamic complexity} point of view. 
The classic dynamic complexity setting was independently introduced by 
Dong, Su, and Topor~\cite{don:non}, and 
Patnaik and Immerman~\cite{pat:dynfo}.
The ``default setting'' of dynamic complexity assumes a big relational database that is constantly changing (where the updates consist of adding or removing tuples to/from relations). 
The goal is then to maintain a set of auxiliary relations that can be updated with ``little effort''.
As this is a descriptive complexity point of view\footnote{Descriptive complexity looks at defining complexity classes based on the logic needed to express a language. See Immerman~\cite{immerman1998descriptive} for more details.}, little effort is defined as using only first-order formulas. 
The class of all problems that can be maintained in this way is called \dynfo. 

A more restricted setting is \dynprop, where only quantifier-free formulas can be used. 
As one might expect, restricting the update formulas leads to various classes between \dynprop and \dynfo. 
Of particular interest to this chapter are the classes \dyncq and \dynucq, where the update formulas are conjunctive queries or unions of conjunctive queries. 
As shown by Zeume and Schwentick~\cite{zeu:dyncq}, $\dyncq=\dynucq$ holds; but it is open whether these are proper subclasses of \dynfo (see Zeume~\cite{zeu:small} for detailed background information).

As we define information extraction as a process on words, we adapt the dynamic complexity setting for formal languages by Gelade, Marquardt, and Schwentick~\cite{gel:dyn}. 
This interprets a word structure as a linear order (of positions in the word) with unary predicates for every terminal symbol. 
To account for the dynamic complexity setting, positions can be undefined, and the update operations are setting a position to a symbol (an insertion or a symbol change), and resetting a position to undefined (deleting a symbol).

We show that in this setting regular spanners can be maintained in \dynprop, core spanners in \dynucq (and, hence, by~\cite{zeu:dyncq} in \dyncq), and generalized core spanners in \dynfo. 
Here, the second of these results is the main result (the third follows directly from it, and the first almost immediately from~\cite{gel:dyn}).
We also show that any relation definable in $\epfcreg$ can be maintained in \dyncq, and any relation definable in $\fcreg$ can be maintained in \dynfo. 
 
Thus, under this view of \emph{incremental maintenance} of queries, dynamic conjunctive queries are actually more expressive than $\epfcreg$, and dynamic first-order logic is more expressive than $\fcreg$.
As a consequence of the results given in this chapter, we can use $\epfcreg$ (or $\fcreg$) as a sufficient criterion for a relation to be maintainable in \dyncq (or \dynfo).

\section{Defining Dynamic Complexity}\label{sec:prelim}
We represent words using a \emph{word-structure}. 
A \index{word-structure}word-structure has a fixed and finite set \index{D@$\worddomain$}$\worddomain := [n+1]$ known as the \emph{domain}  as well as a total order relation $<$ on the domain. 
We use the shorthands $x \leq y$ for $(x<y) \lor (x \logeq y)$. 
The word-structure contains the constant $\$$, which is interpreted as the $<$-maximal element of $\worddomain$. 
This <-maximal element marks the end of the word structure and is required for dynamic spanners, which are defined later. 

\index{$\mathsf{P}_{\mathtt{a}}$}
For each symbol $\mathtt{a} \in \Sigma$, there is a unary relation $\mathsf{P}_{\mathtt{a}}$. 
There is at \emph{most} one $\mathtt{a} \in \Sigma$ such that $\mathsf{P}_{\mathtt{a}}(i)$ holds for $i \in [n]$. 
If we have $\mathsf{P}_{\mathtt{a}}(i)$, for some $\mathtt{a} \in \Sigma$, then we write $w[i] = \mathtt{a}$, otherwise we write $w[i] = \emptyword$. 
If $w[i] \neq \emptyword$ then $i$ is a \emph{symbol-element}.

\index{word structure}\index{w@$\wordstruc$|see {word structure}}\index{word@$\dynword(\wordstruc)$}
A word-structure $\wordstruc$ defines a word $\dynword(\wordstruc) \df w[1] \cdot w[2] \cdots w[n]$. 
Since for some $j \in \worddomain$ it could be that $w[j] = \emptyword$, the length of $\dynword(\wordstruc)$ is likely to be less than $n$. 
For $w \df \dynword(\wordstruc)$, we write $w[i,j]$ to represent the factor $w[i,j] \df w[i] \cdot w[i+1] \cdots w[j]$ where $i,j \in \worddomain$ such that $i \leq j$.

\index{$\Delta$}\index{$\Delta_n$}\index{ins@$\ins{\mathtt{a}}{i}$)}\index{reset@$\reset{i}$}\index{$\unknownupdate(\wordstruc)$}
We now define the set of \emph{abstract updates} $\Delta := \{ \absins{\mathtt{a}} \mid \mathtt{a} \in \Sigma \} \cup \{ \absreset \}$. 
A \emph{concrete update} is either $\ins{\mathtt{a}}{i}$ or $\reset{i}$ where $i \in \worddomain \setminus \{ \$ \}$. 
Given a word-structure with a domain of size $n+1$, we use $\Delta_n$ to represent the set of possible concrete updates.\footnote{We assume that $\$$ cannot be updated, and hence, $w[n+1] = \emptyword$ always holds.} 
The difference between abstract updates and concrete updates is that $\Delta$ contains the ``types'' of updates for an alphabet $\Sigma$, whereas $\Delta_n$ contains updates that can be performed on a word-structure $\wordstruc$. 

For some $\unknownupdate \in \Delta_n$, we denote the word-structure $\wordstruc$ after an update is performed by $\unknownupdate (\wordstruc)$ and this is defined as:
\begin{itemize}
\item If $\unknownupdate  = \ins{\mathtt{a}}{i}$, then $\mathsf{P}_{\mathtt{a}}(i)$ is true and $\mathsf{P}_{\mathtt{b}}(i)$ is false for all $\mathtt{b} \in \Sigma \setminus \{\mathtt{a} \}$,
\item if $\unknownupdate  = \reset{i}$ then $\mathsf{P}_{\mathtt{a}}(i)$ is false for all $\mathtt{a} \in \Sigma$, and 
\item the symbol for all $j \in \worddomain \setminus \{i \}$ remains unchanged.
\end{itemize}
We place the restriction that updates must change the word.
We do not allow $\reset{i}$ if $w[i] = \emptyword$, and we do not allow $\ins{\mathtt{a}}{i}$ if $w[i] = \mathtt{a}$.

\begin{example}
Consider a word-structure $\wordstruc$ over the alphabet $\Sigma \df \{ \mathtt{a},\mathtt{b} \}$ with domain $\worddomain = [6]$, where $6 = \$$. 
If we have that $\mathsf{P}_{\mathtt{a}} = \{ 2,4 \}$ and $\mathsf{P}_{\mathtt{b}} \df \{ 5 \}$, it follows that $\dynword(\wordstruc) = \mathtt{aab}$. 
Performing the operation $\ins{\mathtt{b}}{1}$ would give us an updated word of $\mathtt{baab}$. 
If we then perform $\reset{4}$, the word-structure expresses the word $\mathtt{bab}$.
\end{example}

\index{auxiliary structure}\index{w@$\auxstruc$|see {auxiliary structure}}
We define the \emph{auxiliary structure} $\auxstruc$ as a set of relations over the domain of $\wordstruc$.
A \emph{program state}\index{program state ($\programstate$)} $\programstate := (\wordstruc, \auxstruc)$ is a word-structure and an auxiliary structure. 
An \emph{update program}\index{P@$\updateprogram$|see {update program}}\index{update program} $\updateprogram$ is a finite set of update formulas, which are of the form \index{$\updateformula{R}{\mathsf{op}}{y ; x_1, \dots , x_k }$}$\updateformula{R}{\mathsf{op}}{y ; x_1, \dots , x_k }$ for some $\mathsf{op} \in \Delta$. 
There is an update formula for every $(R, \mathsf{op}) \in \auxstruc \times \Delta$. 
An update, $\mathsf{op}(i)$, performed on $\programstate$ yields $\programstate' = (\unknownupdate (\wordstruc), \auxstruc') $ where all relations $R' \in \auxstruc'$ are defined by $R' := \{ \vec{j} \mid \bar{\programstate} \models \updateformula{R}{op}{i;\vec{j}} \}$, where 
\begin{itemize}
\item $\vec{j}$ is a $k$-tuple (where $k$ is the arity of $R$), and 
\item $\bar{\programstate} \df (\unknownupdate(\wordstruc), \auxstruc)$.
\end{itemize}
For some update $\unknownupdate \in \Delta_n$ performed on $\wordstruc$, we use $w$ as shorthand for $\dynword(\wordstruc)$, and we use $w'$ as shorthand for $\dynword(\unknownupdate(\wordstruc))$. 

\index{pos@$\positionSym{w}{x}$}
Given some $x \in \worddomain$ where $w[x] \neq \emptyword$, we write that $\positionSym{w}{x}= 1$ if for all $x' \in\worddomain$ where $x' < x$ we have that $w[x'] = \emptyword$. 
Let $z,y$ be elements from the domain such that $z<y$ and $w[z] \neq \emptyword$ and $w[y] \neq \emptyword$. 
If for all $x \in \worddomain$ where $z<x<y$ we have that $w[x] = \emptyword$, then $\positionSym{w}{y}= \positionSym{w}{z}+ 1$. 
We write \index{$\nextsym$}$x \nextsym y$ if and only if $\positionSym{w}{y}= \positionSym{w}{x}+1$. 
If it is not the case that $x \nextsym y$ then we write $x \not\nextsym y$. 

\begin{definition}\label{defn:spannerRel}\index{spanner relation (dynamic complexity)}
For every spanner $P$ with $\SVars{P} \df \{ x_1, x_2 \dots x_k \}$ and every word-structure $\wordstruc$, the \emph{spanner relation} $R^P$ is a $2k$-ary relation over $\worddomain$ where each spanner variable $x_i$ is represented by two components $\openspanvar{x_i}$ and $\closespanvar{x_i}$. 
We obtain $R^P$ on $\wordstruc$ by converting each $\mu\in P(w)$ into a $2k$-tuple $(\openspanvar{x_1}, \closespanvar{x_1}, \openspanvar{x_2}, \closespanvar{x_2} \dots \openspanvar{x_k}, \closespanvar{x_k})$. 
For each $i \in [k]$, we have $\mu(x_i)= \spn{\positionSym{w}{\openspanvar{x_i}},\positionSym{w}{\closespanvar{x_i}}}$. 
The exception is if $\mu(x_i) = \spn{j,k}$ and $k = |w| + 1$, then $\closespanvar{x_i} = \$$. 
\end{definition}

For intuition, a spanner relation models a $k$-ary spanner using a $2k$-ary relation over $\worddomain$ where each span variable $x$ is modelled by $\openspanvar{x}$ and $\closespanvar{x}$.
In~\cref{spanrel:example} we give a spanner represented by a regex formula and show the corresponding spanner-relation.
But first, we define a little more machinery.

\begin{definition}
A dynamic program\index{dynamic program} is a triple, containing:
\begin{itemize}
	\item $\updateprogram$ -- an update program over $(\wordstruc, \auxstruc)$.
	\item $\mathsf{INIT}$ -- a first-order initialization program.
	\item $R^P \in \auxstruc$ -- a designated spanner-relation.
\end{itemize}
\end{definition}

For each $R \in \auxstruc$, we have some $\psi_R(\vec{j}) \in \mathsf{INIT}$ which defines the initial tuples of $R$ (before any updates to the input structure occur). 
Note that $\vec{j}$ is a $k$-tuple where the arity of $R$ is $k$. 
For our work $\psi_R$ is a first-order logic formula. 

For any $k \geq 1$, let $\unknownupdate ^* := \unknownupdate _1, \unknownupdate _2, \dots \unknownupdate _k$ be a sequence of updates. 
We use $\unknownupdate ^*(\wordstruc)$ as a short-hand for $\unknownupdate _k ( \dots ( \unknownupdate_2  (\unknownupdate _1 (\wordstruc) ) ) \dots )$. A dynamic program \emph{maintains} a spanner $P$ if we have that $R^P \in \auxstruc$ always corresponds to $P(\unknownupdate^* (\wordstruc))$. 

\begin{definition}\label{defn:maintainingRels}
For a relation $R\subseteq (\Sigma^*)^k$, we define the corresponding relation in the dynamic setting $\bar{R}$ as the $2k$-ary relation of all $(x_1,y_1,\ldots,x_k,y_k)\in \worddomain^{2k}$ such that $(w[x_1,y_1], \ldots, w[x_k,y_k])\in R$ and for each $i \in [k]$, we have that $w[x_i] \neq \emptyword$ and~$w[y_i] \neq \emptyword$. 
We simply say that $R$ can be maintained in \dyncq if $\bar{R}$ can be maintained in \dyncq.
\end{definition}

\begin{example}\label{spanrel:example}
Let $\gamma \df \Sigma^* \cdot x\{ \mathtt{a} \cdot \mathtt{b} \} \cdot \Sigma^*$ where $\mathtt{a},\mathtt{b} \in \Sigma$ and $x \in \Xi$. Now consider the following word-structure:
\begin{center}
	\begin{tabular}{ccccccc}
		1&2&3&4&5&6&$\$$\\
		$\mathtt{a}$&$\emptyword$ & $\mathtt{b}$ & $\emptyword$ & $\mathtt{a}$ & $\emptyword$ & $\emptyword$
	\end{tabular}
\end{center}
Note that the top row is the elements of the domain in order, and the bottom row is the corresponding symbols. 
If we maintain the \emph{spanner relation} of $\fun{\gamma}$ on the above word-structure, then we have some relation $R^P \in \auxstruc$ such that $R^P \df \{ (1,5) \}$. 
Now assume we perform the update $\ins{\mathtt{b}}{6}$. 
The word-structure is now in the following state:

\begin{center}
	\begin{tabular}{ccccccc}
		1&2&3&4&5&6&$\$$\\
		$\mathtt{a}$&$\emptyword$ & $\mathtt{b}$ & $\emptyword$ & $\mathtt{a}$ & $\mathtt{b}$ & $\emptyword$
	\end{tabular}
\end{center}

To maintain the spanner correctly, $\updateformula{R^P}{\absins{\mathtt{b}}}{6;x,y}$ updates $R^P$ to $\{ (1,5), (5,\$) \}$.
\end{example}

Next, let us define the main dynamic complexity classes (\dynfo, \dynprop, \dyncq, and \dynucq) we work with in this chapter.

\begin{definition}\index{dynfo@\dynfo}\index{dynprop@\dynprop}\index{dyncq@\dyncq}\index{dynucq@\dynucq}
\dynfo is the class of all relations that can be maintained by a dynamic program consisting of first-order logic update formulas. 
\dynprop is a subclass of \dynfo where all the update formulas are quantifier-free. 

We also define the classes \dyncq and \dynucq which use conjunctive queries and unions of conjunctive queries as update formulas respectively.	
\end{definition}
For the purposes of this chapter, a first-order formula is a \emph{conjunctive query} (or $\mathsf{CQ}$ for short) if it is built up from atomic formulas (relational symbols and $x \logeq y$, where $x$ and $y$ are either variables or constants), conjunction, and existential quantification. 
We also have unions of conjunctive queries, or $\mathsf{UCQ}$ for short, which allows for the finite disjunction of conjunctive queries. 
As the focus of this chapter is on expressibility, we can assume that $\mathsf{UCQ}$s are represented as existential positive first-order logic as existential positive first-order logic and $\mathsf{UCQ}$s have equivalent expressive power (see, for example, Theorem 28.3 of~\cite{ABLMP21}).

We assume that the input structure is initially empty and that every auxiliary relation is initialized by some first-order initialization. 
This is to allow us to use the result from Zeume and Schwentick~\cite{zeu:dyncq} that $\dynucq = \dyncq$. 
However, in our work we only require a very weak form of initialization.
If one is satisfied with $\mathsf{UCQ}$ update formulas, one could define the precise fragment of first-order logic needed for the necessary precomputation. 
We do not do this as the dynamic complexity class needed to \emph{maintain} a spanner is the main focus here. 

It seems likely that the results from the present chapter could be altered, such that updates allow for the insertion and deletion of unmarked nodes (with an update to the $<$-relation).
However, we do not look at such a setting.

\section{Maintaining Spanners}\label{sec:main}
\newcommand{\limitforreg}{\substack{
p,q \in Q, \; f \in F \\
\delta(s,\xi_1)=p \\ 
\delta(q,\xi_2)=f}}

This section studies the dynamic complexity of three classes of spanners; regular spanners, core spanners, and generalized core spanners.

\subsection{Regular Spanners}
First, we consider regular spanners, and show that any regular spanner can be maintained by a \dynprop program. 
To prove this, we use elements of the proof from Gelade et al. \cite{gel:dyn} that \dynprop maintains exactly the regular languages, along with so-called \emph{vset-path unions}\index{vset-path union} which represent exactly the regular spanners.
For a refresher on regular spanners and vset-automata, see~\cref{pre:spa}.

\begin{proposition}\label{prop:regular}
Regular spanners can be maintained in \dynprop.
\end{proposition}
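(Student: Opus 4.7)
The plan is to reduce the task to the classical result of Gelade, Marquardt and Schwentick that $\dynprop$ captures the regular languages. First I would put the spanner into a convenient normal form. Using the equivalence $\spanner{\RGXreg} = \spanner{\VAset}$, fix a functional vset\nobreakdash-automaton $A = (Q,q_0,q_f,\delta)$ for $P$ with $\SVars{P} = \{x_1, \ldots, x_k\}$. A \emph{vset\nobreakdash-path} is obtained by guessing (i) a linear order $\pi$ on the $2k$ variable markers $\openvar{x_1}, \closevar{x_1}, \ldots, \openvar{x_k}, \closevar{x_k}$ consistent with $\openvar{x_i}$ preceding $\closevar{x_i}$, together with (ii) a sequence $q_0 = p_0, p_1, \ldots, p_{2k} = q_f'$ of states such that the $j$\nobreakdash-th variable transition in $\pi$ leads from $p_{j-1}$ to $p_j$, and $q_f' \in Q$ is any state from which $q_f$ is reachable on terminal transitions. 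Since there are only finitely many vset\nobreakdash-paths, $\spanner{A}$ is the finite union of these \emph{vset\nobreakdash-path spanners}; this is the ``vset\nobreakdash-path union'' alluded to in the statement.

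For a single vset\nobreakdash-path, membership of a tuple $(\openspanvar{x_1}, \closespanvar{x_1}, \ldots, \openspanvar{x_k}, \closespanvar{x_k})$ in the spanner relation reduces to the conjunction of (a) order constraints matching $\pi$ on these $2k$ positions, and (b) $2k+1$ NFA\nobreakdash-reachability queries of the form ``starting in state $p_{j-1}$ and reading $w[\ell_j, \ell_{j+1}]$, state $p_j$ is reached'', where $\ell_1, \ldots, \ell_{2k+1}$ are the positions assigned by $\pi$ (together with the boundary elements $1$ and $\$$). So the whole task is to maintain, for each pair $(p, q) \in Q \times Q$, a binary auxiliary relation $R_{p,q}$ containing all pairs $(i, j)$ of symbol\nobreakdash-elements with $i \leq j$ such that $A$, viewed as an NFA on the terminal sub\nobreakdash-alphabet with state set $Q$, has a run from $p$ to $q$ on $w[i, j]$ (and an analogous unary relation for the empty interval). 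Once these are maintained, the update formula for $R^P$ is the quantifier\nobreakdash-free disjunction over vset\nobreakdash-paths of conjunctions of $R_{p_{j-1}, p_j}$\nobreakdash-atoms, exactly as required for $\dynprop$.

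To maintain the $R_{p,q}$ I would transplant the construction of Gelade et al.\ for maintaining regular languages under single\nobreakdash-symbol edits. Their proof stores, for each pair of states, the set of positions between which the NFA has a run; an update at position $i$ only changes transitions ``through $i$'', and the new relation can be described quantifier\nobreakdash-free by a case analysis on whether the relevant runs pass through $i$ or not, combined with composition of already stored prefix/suffix reachabilities. One has to adapt two details to our word\nobreakdash-structure setting: (i) gaps in the domain must be skipped, which is handled by also maintaining the $\nextsym$ relation and its transitive closure restricted to symbol\nobreakdash-elements (both maintainable in $\dynprop$ by the same reference), and (ii) the constant $\$$ must be usable as a legal value of $\closespanvar{x_i}$, which is folded into the case analysis. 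The $\mathsf{INIT}$ program initialises everything on the empty word using first\nobreakdash-order formulas, as permitted.

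The main obstacle is the quantifier\nobreakdash-free update of the reachability relations: after $\ins{\mathtt{a}}{i}$ or $\reset{i}$, infinitely many pairs $(i', j')$ with $i' \leq i \leq j'$ can change their membership in $R_{p,q}$ simultaneously, and one must express this global change by a formula that refers only to $i$, the tuple $(i', j')$, and the \emph{old} auxiliary relations. The key idea I would exploit is that any new run from $p$ to $q$ on $w'[i', j']$ decomposes as a prefix run from $p$ to some $p'$ on $w[i', i-1]$, a one\nobreakdash-step transition through $i$ labelled by the new symbol (or a silent skip in the reset case), and a suffix run from some $p''$ to $q$ on $w[i+1, j']$; both the prefix and suffix reachabilities are already recorded in the pre\nobreakdash-update $R_{p, p'}$ and $R_{p'', q}$, so the update formula is a disjunction over the finitely many choices of $p', p''$ of conjunctions of these atoms, which is quantifier\nobreakdash-free. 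This is precisely the structural insight that powers the Gelade et al.\ argument, and its extension to vset\nobreakdash-paths then yields the proposition.
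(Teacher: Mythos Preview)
Your plan is essentially the same as the paper's: reduce to a finite union of vset-paths (the paper cites Fagin et al.\ for this normal form), maintain the Gelade--Marquardt--Schwentick state-reachability relations $R_{p,q}$ in $\dynprop$, and take a quantifier-free disjunction over the paths. The paper additionally keeps the unary relations $R^I_q(j)$ (reachability from $q_0$ on $w[1,j-1]$) and $R^F_p(i)$ (reachability to $F$ on $w[i+1,n]$) and writes its main formula in terms of these rather than your binary $R_{p,q}$, but this is a cosmetic difference. One small caution: you claim $\nextsym$ is maintainable in $\dynprop$, but the paper only establishes this in $\dyncq$ (and its update formulas use existential quantifiers); fortunately your argument does not actually need $\nextsym$, since gaps are already handled by the definition of $w[i,j]$ inside the reachability relations.
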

\newcommand{\twolinelimit}{\substack{
p \in Q, \\
\delta(s_i,\mathtt{a})=p 
}}

\newcommand{\newLimit}{\substack{
f \in F, \\
\delta(f, \openvar{x}) = s_i
}}

\newcommand{\newLimitTwo}{\substack{
f' \in F, \\
\delta(f', \closevar{x}) = s_j
}}

\begin{proof}
Due to Fagin et al. \cite{fag:spa} we can assume that our vset-automaton is a so-called \emph{vset-path union}.  
A vset-path  is defined as an ordered sequence of regular deterministic finite automata $A_1$, $A_2$, \dots $A_n$ for some $n \in \mathbb{N}$. 
Each automaton $A_i$ is of the form $(Q, q_o, F, \delta)$ where $Q$ is the set of states, $q_0 \in Q$ is the initial state, $F$ is the set of accepting states, and $\delta$ is the transition function of the form $\delta \colon Q \times \Sigma \rightarrow Q$. 
Also, each $f \in F$ only has incoming transitions. 
All automata, $A_1, A_2, \dots A_n$ share the same alphabet $\Sigma$.

Let $A$ be a vset-path consisting of the sub-automata $A_1, A_2, \dots, A_n$. 
For each automaton $A_k$ where $1 < k \leq n$, the initial state for $A_k$ has incoming transitions from each accepting state from the automaton $A_{k-1}$. 
These extra transitions between the automata are labeled $\openvar{x}$ or $\closevar{x}$, for some $x \in \SVars{A}$. 
We treat the vset-path as a vset-automaton and all semantics follow from the definitions in~\cref{sec:spanner-rep}. 
We can assume that $A$ is functional (see Proposition 3.9 of~\cite{fre:splog}). Again, refer back to~\cref{pre:spa} for the definition of functional vset-automata.

Any vset-automaton can be represented as a union of vset-paths~\cite{fag:spa}. 
Therefore, to prove that any regular spanner can be maintained in \dynprop, it is sufficient to prove that we can maintain a spanner represented by a vset-path, since union can be simulated using disjunction. 

Let $A$ be a vset-path. From Gelade et al. \cite{gel:dyn}, we know that the following relations can be maintained in \dynprop:
\begin{itemize}
\item For any pair of states $p,q \in Q$, let \[ R_{p,q} \df \{ (i,j) \mid i<j \text{ and } \delta^*(p,w[i+1,j-1]) = q \}.\]
\item For each state $q$, let $R^I_q \df \{ j \mid \delta^*(q_0,w[1,j-1])=q \}$.
\item For each state $p$, let $R^F_p \df \{ i \mid \delta^*(p,w[i+1,n]) \in F \}$.
\end{itemize}

We maintain these relations for the vset-path $A$. 
Some work is needed to deal with the transitions labelled $\openvar{x}$ and $\closevar{x}$. 
Let $A_k$ and $A_{k+1}$ be two sub-automata such that $1 \leq i < n$. 
Let $s_{k}$ and $s_{k+1}$ be the initial states for $A_k$ and $A_{k+1}$ respectively. 
Likewise, let $F_k$ and $F_{k+1}$ be the sets of accepting states of $A_k$ and $A_{k+1}$ respectively. 
The intuition is that if $R_{p, f_k}(i,j)$ where $f_k \in F_k$ holds, then $R_{p,s_{k+1}}(i,j)$ holds. 
This is because the transition from an accepting state of $A_k$ to the starting state of $A_{k+1}$ is $\openvar{x}$ or $\closevar{x}$, which are not part of the alphabet $\Sigma$. 
We handle this with 
\[
\updateformula{R_{p,s_{k+1}}}{\unknownupdate}{u;i,j} \df \bigvee\limits_{f \in F_k} \updateformula{R_{p,f}}{\unknownupdate}{u;i,j}.
\]

We do the analogous for $R^I_q$ and $R^F_p$. 
That is, if $R^I_{f_k}(i)$ holds for any $f_k \in F_k$, then so should $R^I_{s_{k+1}}(i)$. 
Similarly, if $R^F_{s_{k+1}}(j)$ holds, then so should $R^F_{f_k}(j)$ for all $f_i \in F_i$. 
To achieve this, we proceed analogously to what was done for $\updateformula{R_{p,s_{k+1}}}{\unknownupdate}{u;i,j}$. 
We also maintain the 0-ary relation $\mathsf{ACC}$, which holds if and only if the current word-structure is a member of the language that the vset-path generates.
Recall that a 0-ary relation is either the empty set, or the set containing the empty tuple which is encoded as true and false respectively.

We now define the following ``main'' formula for maintaining the spanner defined by the vset-path $A$:
\begin{multline*}
\psi \df \bigwedge_{x \in \SVars{A}} \bigl( \bigvee_{\newLimit} \; \bigvee_{\newLimitTwo} \bigl( R_f^{I'}(\openspanvar{x}) \land R_{s_j}^{I'}(\closespanvar{x}) \land \bigvee_{\mathtt{a} \in \Sigma} \bigl( R_{\mathtt{a}}(\openspanvar{x}) \lor (\openspanvar{x} \logeq \$) \bigr) \land\\ \bigvee_{\mathtt{a} \in \Sigma} \bigl( R_{\mathtt{a}}(\closespanvar{x}) \lor (\closespanvar{x} \logeq \$) \bigr)  \bigr) \bigr). 
\end{multline*}
Note that $R'(\vec{x})$ is used as a shorthand for $\updateformula{R}{\unknownupdate}{u;\vec{x}}$ for any maintained relation~$R$.

For intuition, $\psi$ states that for every variable~$x \in \SVars{A}$, there exists some $f \in F$ and $f' \in F$ such that $\delta(f, \openvar{x}) = s_i$ and $\delta(f', \closevar{x}) = s_j$.
The formula $\psi$ ``selects'' the correct $\openspanvar{x}, \closespanvar{x}$ using $R_f^{I'}(\openspanvar{x})$ which holds when $\delta^*(q_0,w[1,\openspanvar{x}-1])=f$, and $R_{s_j}^{I'}(\closespanvar{x})$ which holds when $\delta^*(q_0,w[1,\closespanvar{x}-1])=s_j$. The final part of the formula $\psi$ ensures that each of $\openspanvar{x}$ and $\closespanvar{x}$ are either a symbol element, or $\$$. 
Therefore $\psi$ holds if and only if for every $x \in \SVars{A}$, there are some $\openspanvar{x}, \closespanvar{x} \in \worddomain$ such that the following conditions hold.

\begin{enumerate}
\item  $\delta^*(q_0,w[1,\openspanvar{x}-1])=f$ where $\delta(f, \openvar{x}) = s_i$, and 
\item $\delta^*(q_0,w[1,\closespanvar{x}-1])=s_j$ where $\delta(f', \closevar{x}) = s_j$.
\end{enumerate}

If we assume that $A$ accepts words over $\Sigma \union \Gamma$, we can see that this is the correct behavior, since $\openspanvar{x}$ represents the first symbol after $\openvar{x}$, and $\closespanvar{x}$ represents the first symbol after $\closevar{x}$.
We now use $\psi$ as a subformula for the the update formula that maintains the vset-path spanner $A$:
\[\updateformula{R^A}{\unknownupdate}{u; \openspanvar{x_1},\closespanvar{x_1}, \dots , \openspanvar{x_k}, \closespanvar{x_k} } \df \updateformula{\mathsf{ACC}}{\unknownupdate}{u} \land \psi.\]
Thus, for any vset-path automaton we can construct a $\dynprop$ update formula to maintain the spanner relation. 
\end{proof}

Since Gelade et al. \cite{gel:dyn} proved that \dynprop maintains exactly the regular languages, it is unsurprising that we can extend that result to regular spanners. 
However, some work is needed in order to maintain the spanner relation.

While the opposite direction of~\cref{prop:regular} may seem obvious (because \dynprop maintains exactly the regular languages), it is not clear how one could encode all relations maintainable by a \dynprop program as a regular spanner. 

\begin{example}\label{dynpropAndRegSpan:example}
Consider the following word-structure:
\begin{center}
	\begin{tabular}{ccccccc}
		1&2&3&4&5&6&$\$$\\
		$\mathtt{a}$&$\emptyword$ & $\mathtt{b}$ & $\emptyword$ & $\mathtt{a}$ & $\emptyword$ & $\emptyword$
	\end{tabular}
\end{center}

Let $R$ be a ternary relation maintained in $\dynprop$. 
It could be that $(2,2,6) \in R$ which could not be simulated easily using regular spanners, since spanners reason over spans. 
\end{example}

One could restrict the relations maintained in $\dynprop$ to determine whether $\dynprop$ maintains exactly the regular spanners for ``reasonable relations'', but this is not considered in the present thesis.

\subsection{Core Spanners}
Our next focus is to study the dynamic complexity of core spanners. 
We first show how to maintain some useful relations.
Then, using said relations, we simulate $\epfcreg$ in \dynucq.
As a consequence, we can maintain any relation selectable by a core spanner in \dyncq.

\begin{lemma}\label{obs:nextins}
Let $\unknownupdate = \ins{\mathtt{a}}{u}$ and let $x \nextsym y$ for $x,y \in \worddomain$. Then, $x \not\newnextsym y$ if and only if $x<u<y$.
\end{lemma}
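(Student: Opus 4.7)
The plan is to unpack the definition of $\nextsym$ and $\newnextsym$ and observe that the update $\ins{\mathtt{a}}{u}$ modifies the word-structure only at position $u$: we have $w'[u] = \mathtt{a}$ while $w'[i] = w[i]$ for all $i \in \worddomain \setminus \{u\}$. Since the hypothesis $x \nextsym y$ says precisely that $w[x] \neq \emptyword$, $w[y] \neq \emptyword$, and $w[i] = \emptyword$ for every $i$ with $x < i < y$, the whole question reduces to asking which positions strictly between $x$ and $y$ become symbol-elements after the update, together with tracking whether $x$ and $y$ themselves remain symbol-elements.

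For the ``if'' direction I would argue directly. Assume $x < u < y$. Then $u$ lies strictly between $x$ and $y$, so by the assumption $x \nextsym y$ we have $w[u] = \emptyword$. After the update $w'[u] = \mathtt{a} \neq \emptyword$, while $w'[x]$ and $w'[y]$ are still symbol-elements (changing only $u$ cannot affect them). Hence $u$ is a symbol-element of $w'$ strictly between $x$ and $y$, and so $\positionSym{w'}{y} \geq \positionSym{w'}{x} + 2$, giving $x \not\newnextsym y$.

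For the ``only if'' direction I would argue by contrapositive: assume $u \leq x$ or $u \geq y$ and deduce $x \newnextsym y$. In either case every position strictly between $x$ and $y$ is untouched by the update, so it still satisfies $w'[i] = w[i] = \emptyword$. It remains to check that $x$ and $y$ are still symbol-elements in $w'$; this is immediate when $u \notin \{x, y\}$, and it also holds in the boundary cases $u = x$ or $u = y$, because the update replaces whatever was at $u$ (which in those cases is not $\emptyword$) with the symbol $\mathtt{a}$. Thus $x$ and $y$ remain consecutive symbol-elements in $w'$, i.e., $x \newnextsym y$.

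The only subtlety I anticipate is the edge-case bookkeeping around $u = x$ and $u = y$: one has to verify that an insertion at an already-occupied position does not destroy the symbol-element status of $x$ or $y$, which is immediate from the semantics of $\ins{\mathtt{a}}{\cdot}$ but deserves to be mentioned explicitly so that the contrapositive step is airtight. Beyond that, the lemma is essentially a direct unfolding of definitions.
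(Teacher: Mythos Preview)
Your proposal is correct and follows essentially the same approach as the paper: both directions are handled by observing that the update only affects position $u$, so the key question is whether $u$ lies strictly between $x$ and $y$. Your treatment is in fact more careful than the paper's brief proof, as you explicitly verify that $x$ and $y$ remain symbol-elements in the boundary cases $u = x$ and $u = y$, which the paper leaves implicit.
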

\begin{proof}
Let $\mathtt{a} \in \Sigma$.
If we perform the update $\ins{\mathtt{a}}{u}$ on $\wordstruc$ where $x<u<y$, then it follows that there exists some $z$ such that $w'(z) \neq \emptyword$ and $x<z<y$. Therefore, $x \not\newnextsym y$.
If $u \leq x$ or $y \leq u$, then there does not exists $z \in \worddomain$ such that $x<z<y$ where $w'(z) \neq \emptyword$. 

Hence, if $x \nextsym y$ and $\unknownupdate = \ins{\mathtt{a}}{u}$ then $x \not\newnextsym y$ if and only if $x<u<y$.
\end{proof}

While~\cref{obs:nextins} is somewhat trivial, it allows us to simplify some of the following proofs.

Let us now define the \emph{next-symbol relation}, which we maintain to help us simulate $\epfcreg$ in \dyncq.

\begin{definition}
The \emph{next-symbol relation}\index{next-symbol relation ($\nextrelation$)} is defined as: \[\nextrelation := \{ (x,y) \in \worddomain^2 \mid x \nextsym y \}.\]
Informally, $\nextrelation$ just points from one symbol element, to the next.
\end{definition}

To prove that $\nextrelation$ can be maintained in \dyncq, we also maintain 
\begin{align*}
\firstrel & \df \{ x \in \worddomain \mid \positionSym{w}{x}=1 \}, \text{ and} \\
\lastrel & \df \{ x \in \worddomain \mid \positionSym{w}{y} = |w| \}.
\end{align*} 
Note that these relations would be undefined for an word structure that represents the empty word. 
Hence we have that if $|w|=0$ then $x \in \firstrel$ if and only if $x =\$$ and $y$ is the $<$-minimal element. 
This requires the initialization of $\firstrel \df \{ \$ \}$ and~$\lastrel \df \{ 1 \}$. 

\begin{example}
Consider the following word-structure: 
\begin{center}
	\begin{tabular}{ccccccc}
		1&2&3&4&5&6&$\$$\\
		$\emptyword$ & $\mathtt{a}$ & $\mathtt{b}$ & $\emptyword$ & $\mathtt{b}$ & $\emptyword$ & $\emptyword$
	\end{tabular}
\end{center}
We have that $\firstrel = \{ 2 \}$ and $\lastrel = \{ 5 \}$ and $\nextrelation = \{ (2,3), (3,5)\}$.
\end{example}

Recall that $\dyncq=\dynucq$ (see~\cite{zeu:dyncq}) and therefore to show that a relation can be maintained in \dyncq, it is sufficient to show that the relation can be maintained with \ucq update formulas.
We use this to lower many of our upper bounds from $\dynucq$ to $\dyncq$.

\begin{lemma}
\label{lemma:next}
$\nextrelation$ can be maintained in \dyncq.
\end{lemma}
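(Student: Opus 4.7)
My plan is to jointly maintain $\nextrelation$ together with the auxiliary relations $\firstrel$ and $\lastrel$. These last two are necessary because an insertion past the current leftmost or rightmost symbol element creates a fresh $\nextrelation$-edge whose other endpoint cannot be recovered from the old $\nextrelation$ alone. Since $\dyncq = \dynucq$ by Zeume and Schwentick~\cite{zeu:dyncq}, it suffices to exhibit $\mathsf{UCQ}$ update formulas for all three relations. Initialization is trivial: $\nextrelation \df \emptyset$, $\firstrel \df \{\$\}$, and $\lastrel \df \{1\}$.

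For an update $\ins{\mathtt{a}}{u}$, I take $\updateformula{\nextrelation}{\absins{\mathtt{a}}}{u;x,y}$ to be the disjunction of six conjunctive queries, implementing three intuitive groups: (i) preserve every old pair $(x,y) \in \nextrelation$ with $y \leq u$ or $u \leq x$, via $\nextrelation(x,y)\land y\leq u$ and $\nextrelation(x,y)\land u\leq x$; (ii) split the unique old pair $(x,z)$ with $x<u<z$ (if any) into $(x,u)$ and $(u,z)$, via $y \logeq u \land \exists z\,(\nextrelation(x,z) \land x<u \land u<z)$ and the symmetric disjunct; (iii) extend the chain via $y \logeq u \land \lastrel(x) \land x<u$ and $x \logeq u \land \firstrel(y) \land u<y$ when $u$ becomes the new last or first symbol element. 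For $\reset{u}$, three disjuncts suffice: $\nextrelation(x,y)\land y<u$, $\nextrelation(x,y)\land u<x$, and $\nextrelation(x,u)\land \nextrelation(u,y)$; the last collapses the two edges incident to $u$ into one new edge. Analogous $\mathsf{UCQ}$s maintain $\firstrel$ and $\lastrel$: under $\absins{\mathtt{a}}$, the new first is the old first if it is $\leq u$, and $u$ otherwise (the sentinel $\firstrel=\{\$\}$ makes the empty-word case uniform, since then $\$>u$ triggers the second disjunct); under $\absreset$, the new first is the old first if distinct from $u$, is the $\nextrelation$-successor of $u$ if $u$ was first but not last, and reverts to $\$$ when $u$ was the only symbol element, with symmetric formulas for $\lastrel$.

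The chief obstacle is the absence of negation in $\mathsf{CQ}$s, which seems to prevent explicitly branching on whether $u$ was previously a symbol element (a symbol-change insertion) or $\emptyword$ (a true insertion). I sidestep this by verifying that in the symbol-change case the split and extension disjuncts are silently inactive: no pair of old $\nextrelation$ can straddle a position that is already a symbol element, and such a $u$ can be neither strictly less than $\firstrel$ nor strictly greater than $\lastrel$, so the relevant order atoms fail. Hence the preservation disjuncts alone yield $\nextrelation' = \nextrelation$, as required. The remaining correctness check is a routine case analysis on the update type and the position of $u$ relative to $\firstrel$, $\lastrel$, and the existing chain; invoking $\dyncq = \dynucq$ from~\cite{zeu:dyncq} then completes the proof.
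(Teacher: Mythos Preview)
Your proposal is correct and takes essentially the same approach as the paper: jointly maintain $\nextrelation$, $\firstrel$, and $\lastrel$ with the same initialization, handle insertion via preserve/split/extend disjuncts and reset via preserve/collapse disjuncts, then invoke $\dyncq = \dynucq$. The only cosmetic differences are that the paper groups the insertion disjuncts into five subformulas rather than six (using a non-strict $v \leq u$ in its split clause to absorb the symbol-change case, whereas you use strict inequalities and cover that case via the preservation clauses), and the paper spells out the $\firstrel$/$\lastrel$ update formulas explicitly rather than describing them in prose.
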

\begin{proof}
We split this proof into two parts; one part for the insertion update and one part for the reset update.
For each part, we consider a case distinction and provide an existential positive first-order logic update formula to maintain $\nextrelation$ for that case.
The final update formula is a disjunction of the update formulas for each case.
Due to the fact that existential positive first-order logic is equivalent to $\mathsf{UCQ}$, and $\dynucq = \dyncq$, we can then conclude that $\nextrelation$ is maintainable in $\dyncq$.

Recall the unary relations $\firstrel$ and $\lastrel$, which have the first and last symbol elements in a word structure respectively. 
More formally:
\begin{align*}
\firstrel & := \{ x \in \worddomain \mid \positionSym{w}{x}=1 \},\text{ and} \\
\lastrel & := \{ x \in \worddomain \mid \positionSym{w}{x} = |w| \}.
\end{align*} 
Since $\positionSym{w}{x}$ for any $x \in \worddomain$ is undefined when $w=\emptyword$, we use the initialization $\firstrel := \{ \$ \}$ and $\lastrel := \{ 1 \}$. 
We also initialize $\nextrelation$ to $\emptyset$.

\subparagraph*{Part 1 (insertion):}
To prove this part, we assume that $\nextrelation,\firstrel,\lastrel \in \auxstruc$ are correct for some arbitrary word-structure $\wordstruc$, and then prove that they are correctly updated for $\unknownupdate(\wordstruc)$, where $\unknownupdate = \ins{\mathtt{a}}{u}$. 
We now define the update formula for the $\nextrelation$ relation under $\absins{\mathtt{a}}$:
\[
\updateformula{\nextrelation}{\absins{\mathtt{a}}}{u;x,y} \df \bigvee\limits_{i=1}^{5} \big( \nextsubform{i} \big),
\]
where each $\nextsubform{i}$ is a \ucq subformula defined later. For readability, we use the shorthand $R'(\vec{x})$ for $\updateformula{R}{\absins{\mathtt{a}}}{u; \vec{x}}$. 

For insertion, we consider the following case distinction:
\begin{itemize}
\item Case 1. $(x,y) \in \nextrelation$.
\item Case 2. $(x,y) \notin \nextrelation$ and $(x,y) \in \nextrelation'$.
\item Case 3. $(x,y) \notin \nextrelation$ and $(x,y) \notin \nextrelation'$.
\end{itemize}
We shall further break Case 2 into two subcases.

\begin{description}
\item[Case 1.] $(x,y) \in \nextrelation$.
\end{description}

For this case, we refer back to Lemma \ref{obs:nextins}. 
From this lemma, we can see that if $x \nextsym y$ and $x<u<y$ then $x \not\newnextsym y$. 
It follows that if $(x,y) \in \nextrelation$ and $(x<u<y)$ then we should have $(x,y) \notin \nextrelation'$. 
We can also see that if $x \nextsym y$ and $u \leq x$ or $y \leq u$ then $x \newnextsym y$. Therefore, if $(x,y) \in \nextrelation$ and $(u \leq x) \lor (y \leq u)$ then $(x,y) \in \nextrelation'$. 
This case is handled with
\[
\nextsubform{1} \df \nextrelation(x,y) \land \big( (u \leq x) \lor (y \leq u) \big).
\]

\begin{description}
\item[Case 2.] $(x,y) \notin \nextrelation$ and $(x,y) \in \nextrelation'$.
\end{description}
If $(x,y) \notin \nextrelation$ and $u \neq x$ or $u \neq y$ then it must be that $(x,y) \notin \nextrelation'$. This is because either:
\begin{itemize}
\item $w(x) = \emptyword$ or $w(y) = \emptyword$. This does not change if $u \neq x$ or $u \neq y$.
\item There exists some $v \in \worddomain$ such that $x<v<y$ and $w(v) \neq \emptyword$. Since we are looking at when $\unknownupdate = \ins{\mathtt{a}}{u}$, we still have such an element $v$.
\end{itemize}

Therefore, we now look at two cases; when $u = x$ and when $u = y$:

\underline{Case 2.1.} $u=x$.

We first look at when $\positionSym{w'}{u} = 1$. We now define $\nextsubform{2}$:
\[
\nextsubform{2} \df (u \logeq x) \land \firstrel(y) \land (u < y).
\]
We assume that $\nextsubform{2}$ evaluates to true and shall show that $x \newnextsym y$. We know:
\begin{itemize}
\item $u = x$,
\item $\firstrel(y)$, which is the case when $\positionSym{w}{y} = 1$, and
\item $(u<y)$.
\end{itemize}

Since $\positionSym{w}{y} = 1$ and $u<y$, it follows that $\positionSym{w'}{u} = 1$. 
Furthermore, we can see that because $u<y$ we have that $\positionSym{w'}{y} = \positionSym{w}{y}+1$. 
It follows that $\positionSym{w'}{u} =1$ and $\positionSym{w'}{y} =2$ and, therefore, $u \newnextsym y$. 
Since $u=x$, we have $x \newnextsym y$. 
Hence, this subformula has the correct behaviour. 
But we are still yet to explore when $\positionSym{w'}{u} \neq 1$. 
Let us now define
\[
\nextsubform{3} \df (u \logeq x) \land \exists v \colon \big( \nextrelation(v,y) \land (v \leq u<y) \big). 
\]
Assuming that $\nextsubform{3}$ evaluates to true, it must be that there exists some $v \in \worddomain$ such that:
\begin{itemize}
\item $u = x$,
\item $\nextrelation(v,y)$ -- therefore $v \nextsym y$, and
\item $v<u$ and $u<y$.
\end{itemize}

We know that $u = x$, therefore we can refer to $x$ as the element of the domain for which the symbol is being set. 
Since $v \nextsym y$ and $v \leq x<y$, it follows that $v \newnextsym x \newnextsym y$. 
Therefore, we can see that $x \newnextsym y$ and $(x,y) \in \nextrelation'$, which is the correct behavior for $\nextsubform{3}$ in this case. 

\underline{Case 2.2.} $u=y$.

For this we use $\nextsubform{4}$, to handle when $\positionSym{w'}{u}= |w'|$, and $\nextsubform{5}$, to handle when $\positionSym{w'}{u} \neq |w'|$.
\begin{align*}
& \nextsubform{4} \df (u \logeq y) \land \lastrel(x) \land (u > x), \\
& \nextsubform{5}\df (u \logeq y) \land \exists v \colon \big( \nextrelation(x,v) \land (x < u \leq v) \big).
\end{align*}

The intuition behind these subformulas is analogous to $\nextsubform{2}$ and $\nextsubform{3} $.

\begin{description}
\item[Case 3.] $(x,y) \notin \nextrelation$ and $(x,y) \notin \nextrelation'$.
\end{description}

This is the case where none of the subformulas evaluate to true, and therefore $\updateformula{\nextrelation}{\absins{\mathtt{a}}}{u;x,y}$ evaluates to false. 
Hence $(x,y) \notin \nextrelation'$.

We have proven for each case, the correctness of the update formula for $\nextrelation$ under insertion. We now show the correctness of $\firstrel$ and $\lastrel$ by giving update formulas for them under the update $\unknownupdate = \ins{\mathtt{a}}{u}$:
\begin{align*}
& \updateformula{\firstrel}{\absins{\mathtt{a}}}{u;x} \df \big( \firstrel(x) \land (u>x) \big) \lor \exists y \colon \big( \firstrel(y) \land (u \leq y) \land (u \logeq x) \big), \\
& \updateformula{\lastrel}{\absins{\mathtt{a}}}{u;x} \df \big( \lastrel(x) \land (u<x) \big) \lor \exists y \colon \big( \lastrel(y) \land (u \geq y) \land (u \logeq x) \big).
\end{align*}

The intuition behind $\updateformula{\firstrel}{\absins{\mathtt{a}}}{u;x}$ is, if $u<x$ where $x$ is the first symbol element, then $u$ is the new first symbol element, otherwise $x$ remains the first symbol element. 
The intuition for $\updateformula{\lastrel}{\absins{\mathtt{a}}}{u;x}$ is analogous. 

\subparagraph*{Part 2 (reset):}

For this part, we have that $\unknownupdate =  \reset{u}$ for some $u \in \worddomain$. The update formula for the $\nextrelation$ relation under reset is defined as:
\[
\updateformula{\nextrelation}{\absreset}{u;x,y} \df \big( \nextrelation(x,y) \land ( (u < x) \lor (y < u) ) \big) \lor \big( \nextrelation(x,u) \land \nextrelation(u,y) \big).
\]

Looking at $\updateformula{\nextrelation}{\absreset}{u;x,y}$, we can see that $(x,y) \in \nextrelation$ and $(x,y) \in \nextrelation'$ if $(u < x) \lor (y < u)$. 
If we assume that $(x,y) \in \nextrelation$, it follows that there does not exist $v \in \worddomain$ such that $x<v<y$ and $w(v) \neq \emptyword$. 
Therefore, we have that $(u < x) \lor (y < u)$ can only be false if $u=x$ or $u=y$, since there cannot be another element between $x$ and $y$ which has a symbol. 
If $(x,y) \in \nextrelation$ and $(x,y) \notin \nextrelation'$, then the update is either $\reset{x}$ or $\reset{y}$. 
This is the correct behaviour since, if $w'(x)=\emptyword$ or $w'(y)= \emptyword$, then $x \not\newnextsym y$. 

We also have that $(x,y) \notin \nextrelation$ and $(x,y) \in \nextrelation'$ when $\nextrelation(x,u) \land \nextrelation(u,y)$. 
If $\nextrelation(x,u) \land \nextrelation(u,y)$ holds, then $x \nextsym u \nextsym y$ and $\unknownupdate = \reset{u}$.
Therefore, there does not exist any element $v \in \worddomain$ such that $x<v<y$ and $w[v] \neq \emptyword$, therefore $x \newnextsym y$. 
Thus, the update formula $\updateformula{\nextrelation}{\absins{\mathtt{a}}}{u;x,y}$ has the desired behavior.

The following is the update formula for $\firstrel$ under $\reset{u}$:
\begin{multline*}
\updateformula{\firstrel}{\absreset}{u;x} \df \big( \firstrel(x) \land (u > x) \big) \lor \big( \firstrel(u) \land \nextrelation(u,x) \big) \lor \\ \big(\firstrel(u) \land \lastrel(u) \land (x \logeq \$) \big).
\end{multline*}

Looking at $\updateformula{\firstrel}{\absreset}{u;x}$, we can see that if $x \in \firstrel$ and $u>x$ then $x \in \firstrel'$. 
We can also see that if $u \in \firstrel$, then $x \in \firstrel'$ where $u \nextsym x$. 
This is because if $u \nextsym x$, then it follows that $\positionSym{w}{x} = \positionSym{w}{u}+1$ and therefore, $\positionSym{w}{x} = 2$.
As we are resetting $u$, we have that $\positionSym{w'}{x}=1$. 

We also have one edge case which is when $\firstrel(u)$ and $\lastrel(u)$. If this is the case, it follows that $|w| = 1$ and therefore $|w'| = 0$, i.e. $w' = \emptyword$. Therefore, we have that $\$ \in \firstrel$. We do this because given an insertion, of some element $v \in \worddomain$, it follows that $v< \$$.
Therefore, the update formula $\updateformula{\firstrel}{\absreset}{u;x}$ has the desired behaviour.

The following is the update formula for $\lastrel$:
\begin{multline*}
\updateformula{\lastrel}{\absreset}{u;x} \df \big( \lastrel(x) \land (u<x) \big) \lor \big( \lastrel(u) \land \nextrelation(x,u) \big) \lor \\ \big( \firstrel(u) \land \lastrel(u) \land (x \logeq 1) \big).
\end{multline*}

The reasoning behind the update formula $\updateformula{\lastrel}{\absreset}{u;x}$ is analogous to the reasoning given earlier for the update formula $\updateformula{\firstrel}{\absreset}{u;x}$. 

We have given update formulas for $\nextrelation$ for insertion and deletion, each of which are unions of conjunctive queries. Therefore, we can maintain $\nextrelation$ in \dyncq, due to the fact that $\dyncq = \dynucq$. This concludes the proof.
\end{proof}

The relation we define next is possibly the most important relation for maintaining core spanners.

\begin{definition}
\label{defn:equalfactors}\index{equal factor relation ($\equalsubstr$)}
The \emph{factor equality relation} $\equalsubstr$ is the set of 4-tuples $(\oneopen,\oneclose,\twoopen,\twoclose)$ such that $w[\oneopen,\oneclose]=w[\twoopen,\twoclose]$, and $w[z]\neq\emptyword$ for all $z\in\{\oneopen,\oneclose,\twoopen,\twoclose\}$. 
\end{definition}

Less formally, we have that if $(\oneopen,\oneclose,\twoopen,\twoclose) \in \equalsubstr$ then the word $w[\oneopen,\oneclose]$ is equal to the word $w[\twoopen,\twoclose]$. 
We also wish that each tuple represents a unique pair of spans of $\dynword(\wordstruc)$. 
Therefore, we have that $\oneopen$, $\oneclose$, $\twoopen$, and $\twoclose$ each have symbols associated to them.

\begin{example}
Consider the following word-structure:
\begin{center}
	\begin{tabular}{ccccccccccc}
		1&2&3&4&5&6&7&8&9&10& $\$$ \\
		$\mathtt{a}$ & $\emptyword$ & $\emptyword$ & $\mathtt{b}$ & $\mathtt{a}$ & $\emptyword$ & $\mathtt{b}$ & $\emptyword$ & $\mathtt{a}$ & $\mathtt{b}$ & $\emptyword$
	\end{tabular}
\end{center}

Clearly $(5,10,1,7) \in \equalsubstr$ since $w[5,10] = w[1,7] = \mathtt{aba}$ and $w[i] \neq \emptyword$ for $i \in \{5,10,1,7 \}$. Although $w[5,10] = w[1,8]$, this does not imply $(5,10,1,8) \in \equalsubstr$ since $w[8] = \emptyword$. 
\end{example}

\begin{lemma}\label{lemma:eqsubstr}
$\equalsubstr$ can be maintained in \dyncq.
\end{lemma}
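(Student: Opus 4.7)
The plan is to maintain $\equalsubstr$ together with $\nextrelation$, $\firstrel$, $\lastrel$, and the symbol predicates $\mathsf{P}_\mathtt{a}$, all of which are available in \dyncq by Lemma~\ref{lemma:next}. Since $\dyncq=\dynucq$, it suffices to exhibit UCQ update formulas. The initialization is easy: on the empty structure, $\equalsubstr=\emptyset$.

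The key idea is a ``splitting'' characterization of the new tuples in $\equalsubstr'$ after an update, obtained by looking at what the updated position $u$ does to each span. For an insertion $\unknownupdate=\ins{\mathtt{a}}{u}$, a tuple $(\oneopen,\oneclose,\twoopen,\twoclose)$ belongs to $\equalsubstr'$ iff $w'[z]\neq\emptyword$ for $z\in\{\oneopen,\oneclose,\twoopen,\twoclose\}$ and the two factors agree in $w'$. I would proceed by a case distinction on the position of $u$ relative to the spans:
\begin{enumerate}
\item $u$ is outside both intervals $[\oneopen,\oneclose]$ and $[\twoopen,\twoclose]$: reuse $\equalsubstr(\oneopen,\oneclose,\twoopen,\twoclose)$.
\item $u\in[\oneopen,\oneclose]$ but $u\notin[\twoopen,\twoclose]$: existentially guess a ``matching'' position $v\in[\twoopen,\twoclose]$ with $\mathsf{P}_\mathtt{a}(v)$ and verify that the two halves split at $u$ and $v$ are already in $\equalsubstr$, using $\nextrelation$ (applied to the old structure) to obtain the predecessor $u^-$ and successor $u^+$ of $u$.
\item The symmetric case where $u$ lies only in the second span.
\item $u$ lies in both spans: guess two witnesses $v_1\in[\oneopen,\oneclose]$ and $v_2\in[\twoopen,\twoclose]$ with $\mathsf{P}_\mathtt{a}(v_1),\mathsf{P}_\mathtt{a}(v_2)$, and conjunct three equalities of subfactors taken from $\equalsubstr$.
\end{enumerate}
Each case is a CQ of bounded size; their disjunction yields the UCQ $\updateformula{\equalsubstr}{\absins{\mathtt{a}}}{u;\oneopen,\oneclose,\twoopen,\twoclose}$. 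Boundary subcases (where $u$ coincides with $\oneopen$, $\oneclose$, $\twoopen$, or $\twoclose$, so that there is no predecessor or successor to invoke) are added as separate disjuncts in which $\firstrel$/$\lastrel$-style conditions replace the missing $\nextrelation$ calls.

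For a reset $\unknownupdate=\reset{u}$, first observe that if $u\in\{\oneopen,\oneclose,\twoopen,\twoclose\}$ then the tuple is automatically excluded (those positions must remain symbol-elements). Otherwise, if $u$ is outside both spans, the tuple is unchanged; if $u$ strictly inside exactly one span, equality in $w'$ means there must exist a ``split witness'' $(v^-,v^+)$ in the other span with $v^-\nextsym v^+$ and $(\oneopen,u^-,\twoopen,v^-),(u^+,\oneclose,v^+,\twoclose)\in\equalsubstr$, where $u^-, u^+$ are the $\nextrelation$-neighbours of $u$ in the old word; if $u$ strictly inside both spans we need two such split witnesses. All disjuncts are conjunctive queries, yielding the UCQ update formula for resets.

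The main obstacle is not algorithmic but bookkeeping: verifying that every case can be captured without negation or universal quantification, and correctly handling the boundary subcases in which $u$ coincides with a span endpoint or either span is a single symbol. The crucial enabler is that $\nextrelation$ lets us existentially refer to the predecessor and successor symbols of $u$, so every ``local repair'' becomes a constant-size conjunctive query built from $\equalsubstr$, $\nextrelation$, $\firstrel$, $\lastrel$, and $\mathsf{P}_\mathtt{a}$.
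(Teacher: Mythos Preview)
Your proposal is correct and follows essentially the same approach as the paper: a case distinction on the location of $u$ relative to the two intervals, with the updated equality reduced to a bounded conjunction of old $\equalsubstr$-facts on the pieces obtained by splitting at $u$ and at a guessed matching witness, glued together via $\nextrelation$ and the symbol predicates. The paper additionally normalizes to $\oneopen<\twoopen$ (handling the other cases by symmetry) and, for insertion, uses the \emph{updated} $\nextrelation'$ rather than the old $\nextrelation$ to name the neighbours of $u$; this is a harmless difference since $\nextrelation'$ is itself a UCQ over the old state by Lemma~\ref{lemma:next}, and your variant with the old $\nextrelation$ works with slightly different formulas.
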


The proof of~\cref{lemma:eqsubstr} is quite lengthy, and its contents do not provide any insights to the reader regarding its consequences. 
The important part is that we \emph{can} maintain this relation in \dyncq.
Therefore, we first discuss its implications for maintaining $\epfcreg$-selectable relations. 
The reader can find the proof of~\cref{lemma:eqsubstr} in~\cref{sec:eqsubstr}.

Maintaining the factor equality relatoin in \dyncq is a central part of the proof for our main result.
This relation is the main feature of a construction to maintain pattern languages, which can then be extended with regular constraints, conjunctive, disjunction, and existential quantification to maintain any relation definable by $\epfcreg$.

Let us now recap some preliminaries about pattern languages.
A pattern $\alpha \in (\Sigma \cup \Xi)^*$ generates a language 
\[ \lang(\alpha) \df \{ \subs(\alpha) \mid \text{$\subs$ is a substitution}  \}. \]
This is known as the \emph{erasing} language~$\alpha$ generates.
We say that a substitution $\subs$ is \emph{non-erasing} if $\subs(x) \in \Sigma^+$ for all $x \in \Xi$.
Patterns also generate a \emph{non-erasing language}, which is defined as\index{patterns!$\nonerasing{\lang}(\alpha)$} 
\[ \nonerasing{\lang}(\alpha) \df \{ \subs(\alpha) \mid \text{$\subs$ is a non-erasing substitution}  \}. \] See~\cref{sec:wordsAndLangs} for more details on pattern substitutions.

\begin{example}
Consider $\alpha \df \mathtt{a} xx \mathtt{b}$ where $\mathtt{a}, \mathtt{b} \in \Sigma$ and $x \in \Xi$. 
Then $\mathtt{ababab} \in \lang(\alpha)$ and  $\mathtt{ababab} \in \nonerasing\lang(\alpha)$. 
While $\mathtt{ab} \in \lang(\alpha)$, we have that $\mathtt{ab} \notin \nonerasing\lang(\alpha)$ since $\subs(x) = \emptyword$ cannot hold for the non-erasing language $\alpha$ generates.
\end{example}

We use the definition of \emph{maintaining a language} from \cite{gel:dyn}. 
We can maintain the language $L \subseteq \Sigma^*$ if a dynamic program maintains a 0-ary relation which is true if and only if $\dynword(\wordstruc) \in L$.

\begin{lemma}
\label{prop:patterns}
Every non-erasing pattern language can be maintained in \dyncq.
\end{lemma}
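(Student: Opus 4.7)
The strategy is to reduce maintaining $\nonerasing\lang(\alpha)$ to maintaining the auxiliary relations we already possess: the symbol predicates $\symbolrel{\mathtt{a}}$, the next-symbol relation $\nextrelation$ (\cref{lemma:next}), the first/last symbol relations $\firstrel$ and $\lastrel$, and the factor equality relation $\equalsubstr$ (\cref{lemma:eqsubstr}). Given a non-erasing pattern $\alpha \df \alpha_1 \alpha_2 \cdots \alpha_n$ with $\alpha_i \in \Sigma \cup \Xi$, I would design a single existential-positive first-order sentence $\psi_\alpha$ that holds on $\wordstruc$ precisely when $\dynword(\wordstruc) \in \nonerasing\lang(\alpha)$. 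The sentence guesses endpoints $(o_1,c_1),\ldots,(o_n,c_n)$ for the $n$ factors of the decomposition induced by a non-erasing substitution and verifies consistency through conjunctions of atoms from the maintained relations.

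Concretely, $\psi_\alpha$ would existentially quantify variables $o_i, c_i$ for $i \in [n]$ and take the conjunction of the following atoms: (i) $\firstrel(o_1)$ and $\lastrel(c_n)$, anchoring the decomposition to the actual word; (ii) $\nextrelation(c_i, o_{i+1})$ for each $i \in [n-1]$, ensuring the factors tile the symbol positions contiguously with no gaps; (iii) for every position $i$ with $\alpha_i = \mathtt{a} \in \Sigma$, the pair $\symbolrel{\mathtt{a}}(o_i) \land (o_i \logeq c_i)$, fixing the terminal factors; (iv) for every variable $x \in \Xi$ occurring at positions $i_1 < \cdots < i_k$ in $\alpha$, the chain $\bigwedge_{j=1}^{k-1} \equalsubstr(o_{i_j}, c_{i_j}, o_{i_{j+1}}, c_{i_{j+1}})$, forcing all occurrences of $x$ to denote equal factors (and simultaneously forcing $o_{i_j} \leq c_{i_j}$ together with the endpoints being symbol elements, since these are implicit in $\equalsubstr$); (v) when $x$ occurs only once at position $i$, the atom $\equalsubstr(o_i, c_i, o_i, c_i)$, which on its own is enough to force $o_i \leq c_i$ with both endpoints being symbol elements and hence the factor non-empty. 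Soundness and completeness of $\psi_\alpha$ then follow directly: any non-erasing substitution $\subs$ with $\subs(\alpha) = \dynword(\wordstruc)$ yields the required witnesses, and conversely any satisfying assignment of $\psi_\alpha$ chops $\dynword(\wordstruc)$ into the factors of such a $\subs$.

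To promote $\psi_\alpha$ into a dynamic program, I would maintain $\equalsubstr$, $\nextrelation$, $\firstrel$, $\lastrel$ in \dynucq using the prior lemmas, then designate a 0-ary relation $R_\alpha$ whose update formula under $\ins{\mathtt{a}}{u}$ or $\reset{u}$ is obtained by substituting the respective update formulas of the auxiliary relations (and the trivial update to $\symbolrel{\mathtt{a}}$) in place of the atoms appearing in $\psi_\alpha$. Since $\psi_\alpha$ is a single conjunctive query and each substituted update formula is a union of conjunctive queries, distributing disjunctions over conjunctions yields a \ucq update formula, which by $\dyncq = \dynucq$ of Zeume and Schwentick~\cite{zeu:dyncq} places $R_\alpha$ in \dyncq. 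The main technical point to be careful about is not the construction of $\psi_\alpha$ itself but ensuring that the composition step does not introduce universal quantifiers or negation; this is guaranteed because every ingredient is existential-positive and closure of \ucq under substitution is standard. A minor bookkeeping issue is ensuring that variables appearing exactly once still contribute the length and symbol-element constraints, which is precisely why the diagonal atom $\equalsubstr(o_i, c_i, o_i, c_i)$ is added in case (v).
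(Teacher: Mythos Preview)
Your proposal is correct and takes essentially the same approach as the paper: both guess a decomposition of $\dynword(\wordstruc)$ into factors corresponding to the pattern positions, using $\firstrel$, $\lastrel$, $\nextrelation$ for the structural tiling, $\symbolrel{\mathtt{a}}$ for terminals, and $\equalsubstr$ for repeated variables, then substitute the auxiliary update formulas and invoke $\dyncq=\dynucq$. The only differences are cosmetic: the paper uses a single end-variable $t_i$ for terminal positions and a pair $(x_i,t_i)$ for variable positions (with an explicit $x_i\leq t_i$), whereas you uniformly use pairs $(o_i,c_i)$ and rely on the diagonal atom $\equalsubstr(o_i,c_i,o_i,c_i)$ to enforce the symbol-element and ordering constraints for single-occurrence variables.
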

\begin{proof}
We give a way to construct an update formula to maintain a 0-ary relation $\mathcal{P}$ which updates to true if and only if $w' \in \nonerasing{\lang}(\alpha)$ for the given pattern $\alpha \in (\Sigma \cup \Xi)^+$. 
Let $|\alpha|$ be the length of $\alpha$. 
By $\alpha_i$, we denote the $i^{th}$ symbol (from $\Xi$ or $\Sigma$) of the pattern $\alpha$ where $1 \leq i \leq |\alpha|$. 

Recall that we use $\nextrelation'$ and $\equalsubstr'$ to denote the relation correct \emph{after} the update. 
To achieve this, $\nextrelation'(\dots)$ is short-hand for $\updateformula{\nextrelation}{\unknownupdate}{\dots}$, where $\unknownupdate$ is the update for which the update formula of $\mathcal{P}$ is being constructed. 
The same is done for~$\equalsubstr$.

Let us now give the construction:

First, if $\alpha_1 \in \Sigma$, then let 
\[ \varphi_{\alpha_1} \df  \symbolrel{\alpha_1}(t_1) \land \firstrel'(t_1).\]
Otherwise, let \[ \varphi_{\alpha_1} := (x_1 \leq t_1) \land \firstrel'(x_1).\]
Then, for each $i$ where $2 \leq i \leq |\alpha|$, we do the following:

\begin{enumerate}
\item If $\alpha_i \in \Sigma$, then let 
\[\varphi_{\alpha_i} \df \symbolrel{\alpha_i}(t_i) \land \nextrelation'(t_{i-1},t_i).\]
\item If $\alpha_i \in \Xi $ and there exists $j<i$ with $\alpha_i = \alpha_j$, then let 
\[ \varphi_{\alpha_i}\df \nextrelation'(t_{i-1}, x_i) \land (x_i \leq t_i) \land \equalsubstr'(x_{j_{max}}, t_{j_{max}}, x_i, t_i) ,\]
where $j_\mathsf{max}$ be the largest value $j$ such that $j < i$ and $\alpha_i = \alpha_j$.
\item Otherwise, let $\varphi_{\alpha_i} := \nextrelation'(t_{i-1}, x_i) \land (x_i \leq t_i)$.
\end{enumerate}

Then, we define 
\[\varphi_{\alpha} := \bigwedge_{i=1}^{|\alpha|} (\varphi_{\alpha_i}) \land \lastrel'(t_{|\alpha|}) .\]
Finally, let $\phi_{\absins{\mathtt{a}}}^{\patternquery}(u) := \exists \vec x \colon \varphi_\alpha$ and $\phi_{\absreset}^{\patternquery}(u) := \exists \vec x \colon \varphi_\alpha$ where $\vec x$ contains all variables in $\varphi_\alpha$.

For intuition, we use $\symbolrel{\mathtt{a}}(\cdot)$ for the terminals in $\alpha$, we use $\leq$ to define the ``boundaries'' of variables in $\alpha$, and we use $\equalsubstr'(\cdot)$ for repeating variables.
Then, to ensure that the pattern's shape is modelled correctly, we use $\firstrel'(\cdot)$, $\lastrel'(\cdot)$ and~$\nextrelation'(\cdot)$.
\end{proof}

One side-effect of~\cref{prop:patterns} is that we get the dynamic complexity upper bounds of a class of languages that was not studied in~\cite{gel:dyn}.
Hence, \cref{prop:patterns} extends what is known about the dynamic complexity of formal languages.

\begin{example}
Let $\alpha \df \mathtt{a} x \mathtt{b} x$ be a pattern such that $\mathtt{a}, \mathtt{b} \in \Sigma$ and $x \in \Xi$. As stated, we wish to maintain a 0-ary relation $\mathcal{P}$ such that $\mathcal{P}$ is true if and only if $w' \in \nonerasing\lang(\alpha)$ where $w'$ is our word after some update. 
Using the proof of~\cref{prop:patterns}, we have that
\begin{itemize}
\item $\varphi_{\alpha_1} = \symbolrel{\mathtt{a}}(t_1) \land \firstrel'(t_1)$,
\item $\varphi_{\alpha_2} = \nextrelation'(t_1,x_2) \land (x_2 \leq t_2)$,
\item $\varphi_{\alpha_3} = \nextrelation'(t_2,t_3) \land \symbolrel{\mathtt{b}}(t_3)$, and
\item $\varphi_{\alpha_4} = \nextrelation'(t_3,x_4) \land (x_4 \leq t_4) \land \equalsubstr'(x_2,t_2,x_4,t_4)$.
\end{itemize}

Thus, we have that:
\[ \updateformula{\mathcal{P}}{\unknownupdate}{u} \df \exists t_1,t_2,t_3,t_4,x_2,x_4 \colon \bigl( \bigwedge_{i=1}^4 (\varphi_{\alpha_i}) \land \lastrel'(t_4) \bigr). \]
It follows that $\mathcal{P}$ holds for word structures of the form:
\[ \underbrace{\cdots}_{\emptyword} \, \underbrace{t_1}_{\mathtt{a}} \, \underbrace{\cdots}_{\emptyword} \, \underbrace{x_2 \cdots t_2}_{w} \, \underbrace{\cdots}_{\emptyword} \, \underbrace{t_3}_{\mathtt{b}} \, \underbrace{\cdots}_{\emptyword} \, \underbrace{x_4 \cdots t_3}_{w} \, \underbrace{\cdots}_{\emptyword}. \]

Note that $t_1$ may not be $<-minimal$ and $t_4$ may not be $<-maximal$, but because $\firstrel'(t_1)$ and $\lastrel'(t_4)$ must hold, $t_1$ and $t_4$ are the first and last symbol-element respectively.
\end{example}

Jiang, Salomaa, Salomaa, and Yu~\cite{jiang:pat} proves that every erasing pattern language is the finite union of non-erasing pattern languages.
Thus, immediately from~\cref{prop:patterns} and Jiang et al.~\cite{jiang:pat}, we can observe the following corollary.

\begin{corollary}
\label{cor:erasing}
Every erasing pattern language can be maintained in \dyncq.
\end{corollary}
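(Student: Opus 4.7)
The plan is to combine the two ingredients cited immediately above the corollary: (i) the characterization of Jiang, Salomaa, Salomaa, and Yu that every erasing pattern language $\lang(\alpha)$ equals a finite union $\bigcup_{i=1}^{k} \nonerasing{\lang}(\beta_i)$ of non-erasing pattern languages, and (ii) \cref{prop:patterns}, which gives, for each $\beta_i$, a $\dyncq$ (equivalently $\dynucq$) maintenance program that tracks a $0$-ary relation $\mathcal{P}_i$ which is true exactly when $\dynword(\wordstruc) \in \nonerasing{\lang}(\beta_i)$.

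First, I would apply Jiang et al.~\cite{jiang:pat} to the input pattern $\alpha$ to obtain the finite list $\beta_1,\dots,\beta_k$. Second, I would take the $k$ maintenance programs produced by \cref{prop:patterns} and place them side-by-side, renaming their auxiliary relations so the programs use disjoint vocabularies (this is a purely syntactic relabelling and preserves the $\mathsf{UCQ}$ shape of every update formula). Each of these subprograms continues to correctly maintain its designated $0$-ary relation $\mathcal{P}_i$ because the shared ``infrastructure'' relations $\nextrelation$, $\firstrel$, $\lastrel$, $\equalsubstr$ that underpin \cref{prop:patterns} can simply be duplicated (or, more efficiently, shared) across the subprograms; in either case every update formula remains a union of conjunctive queries.

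Third, I would introduce a new $0$-ary target relation $\mathcal{P}$ intended to capture $\dynword(\wordstruc) \in \lang(\alpha)$. Because $\lang(\alpha) = \bigcup_{i=1}^{k} \nonerasing{\lang}(\beta_i)$, the correct update formula for $\mathcal{P}$ under any abstract update $\unknownupdate \in \Delta$ is
\[
\updateformula{\mathcal{P}}{\unknownupdate}{u} \;\df\; \bigvee_{i=1}^{k} \updateformula{\mathcal{P}_i}{\unknownupdate}{u}.
\]
Since each $\updateformula{\mathcal{P}_i}{\unknownupdate}{u}$ is a $\mathsf{UCQ}$ by \cref{prop:patterns}, the finite disjunction is again a $\mathsf{UCQ}$. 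For the initialization, $\mathcal{P}$ is set to true on the empty word iff $\emptyword \in \lang(\alpha)$, which is a decidable property of $\alpha$ and requires only a trivial first-order $\mathsf{INIT}$.

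Finally, I would invoke the equality $\dynucq = \dyncq$ of Zeume and Schwentick~\cite{zeu:dyncq} to conclude that the combined program lies in $\dyncq$. The main obstacle is essentially bookkeeping rather than a mathematical difficulty: one must check that the $\mathsf{UCQ}$ update formulas from the $k$ separate subprograms can be composed into a single dynamic program without introducing non-positive or non-existential constructs, and that the shared infrastructure relations ($\nextrelation$, $\equalsubstr$, etc.) are updated before (or independently of) the $\mathcal{P}_i$'s, as implicitly assumed in \cref{prop:patterns}. Because all of these ingredients are already $\mathsf{UCQ}$-definable, no genuinely new expressiveness argument is needed, and the corollary follows directly.
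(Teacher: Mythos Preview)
Your proposal is correct and follows essentially the same approach as the paper: apply Jiang et al.\ to write $\lang(\alpha)$ as a finite union of non-erasing pattern languages, maintain a $0$-ary relation for each via \cref{prop:patterns}, take their disjunction, and invoke $\dynucq=\dyncq$. One small refinement: handling $\emptyword\in\lang(\alpha)$ via initialization alone is not enough, since the word can become empty again after a sequence of resets; the paper instead adds $\firstrel(\$)$ as an extra disjunct in the update formula (recall $\firstrel=\{\$\}$ exactly when $w=\emptyword$), which is the dynamic fix you want.
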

\begin{proof}
Every erasing pattern language is the finite union of non-erasing pattern languages~\cite{jiang:pat}. 
Therefore, we can create $0$-ary relations for each non-erasing pattern language and join them with a disjunction. 
The case $\emptyword \in \lang(\alpha)$ is handled by $\firstrel(\$)$. 
We can do this because $\firstrel = \{\$ \}$ whenever $w=\emptyword$.
\end{proof}

Since we are able to maintain any erasing pattern language in \dyncq, we can extend this result to $\fc$ word equations. 
Using this, along with the fact that regular languages can be maintained in \dynprop, we can conclude the following:

\begin{theorem}\label{lem:splogDynCQ}
Any relation definable by $\epfcreg$ can be maintained in \dyncq.
\end{theorem}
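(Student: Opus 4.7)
The plan is to proceed by structural induction on $\varphi \in \epfcreg$, constructing for each subformula $\psi$ with free variables $\vec{y}$ an auxiliary relation $R_\psi$ of position-tuples (two positions per free variable, following \cref{defn:spannerRel} and \cref{defn:maintainingRels}) such that $(x_1^o,x_1^c,\ldots,x_k^o,x_k^c)\in R_\psi$ iff the substitution sending each $y_i$ to the factor $w[x_i^o,x_i^c]$ satisfies $\psi$; each update formula will be a union of conjunctive queries, which via Zeume--Schwentick gives membership in \dyncq. The auxiliary structure will also carry the relations $\firstrel$, $\lastrel$, $\nextrelation$ from \cref{lemma:next} and $\equalsubstr$ from \cref{lemma:eqsubstr}, both maintainable in \dyncq.

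\textbf{Base case --- word equations.} For an atom $(x\logeq \alpha)$ we generalize the construction of \cref{prop:patterns}. Instead of constraining the match to range over the whole word using $\firstrel'(t_1)$ and $\lastrel'(t_{|\alpha|})$, we constrain it to the factor $w[x^o,x^c]$ by replacing those two atoms with $(t_1\logeq x^o)$ and $(t_{|\alpha|}\logeq x^c)$ (or $(x^c\logeq \$)$ at the right end). Every terminal $\alpha_i\in\Sigma$ is checked with $\symbolrel{\alpha_i}(t_i)$, each fresh variable position pair gets endpoints sitting on symbol-elements connected by $\nextrelation'$, and each repeated variable is tied to its first occurrence via $\equalsubstr'$. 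Free variables among those appearing in $\alpha$ become free position-pairs in the update formula; bound variables in $\alpha$ are existentially quantified. The erasing case is handled by the union-of-nonerasing decomposition of Jiang et al.\ already used in \cref{cor:erasing}, contributing a disjunction (hence $\mathsf{UCQ}$).

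\textbf{Base case --- regular constraints.} For $(x\regconst \gamma)$ we adapt \cref{prop:regular}. The \dynprop-maintainable relations $R_{p,q}(i,j)$ of \cite{gel:dyn} already encode "there is a run from $p$ to $q$ on $w[i{+}1,j{-}1]$". We build, for the DFA for $\gamma$, a relation $R_\gamma(x^o,x^c)$ whose update formula asserts the existence of accepting runs on the factor bounded by $x^o$ and $x^c$, anchoring endpoints on symbol-elements using $\nextrelation'$ (exactly as in the main formula $\psi$ of \cref{prop:regular}). Since \dynprop${}\subseteq{}$\dyncq, this step is free.

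\textbf{Inductive cases.} Conjunction is realized by conjoining update formulas; disjunction produces a union of $\mathsf{CQ}$s, which is a $\mathsf{UCQ}$; existential quantification over an $\fc$-variable $y_i$ is realized by existentially quantifying the two position variables $y_i^o,y_i^c$ (and also requiring they lie on symbol-elements, i.e.\ $y_i^o\nextrelation' y_i^{o'}$-style guards or $y_i^o\logeq y_i^c$ for the empty factor, using $\firstrel,\lastrel$ to cover $w[x^o,x^c]=\emptyword$). All constructors preserve the $\mathsf{UCQ}$ format of update formulas. Initialization is trivial for the empty input word since every relation starts empty except for $\firstrel=\{\$\}$ and $\lastrel=\{1\}$.

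\textbf{The main obstacle} is bookkeeping around the empty factor and around $\strucvar$: when $y_i$ is mapped to $\emptyword$ the two position variables need to collapse consistently with \cref{defn:spannerRel}, and for atoms $(\strucvar\logeq \alpha)$ the left-hand side is fixed rather than parameterized by a free position pair. Both issues are mild: the empty-factor case is a separate disjunct anchored by $\firstrel'$/$\lastrel'$, and the $\strucvar$ case is exactly the setting of \cref{prop:patterns} and \cref{cor:erasing}, so it slots in directly. Once these corner cases are verified, closure of \dyncq (equivalently \dynucq) under the existential positive operators delivers the theorem.
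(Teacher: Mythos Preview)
Your approach is essentially the paper's: structural induction on $\epfcreg$, with word equations handled via the pattern-maintenance construction (\cref{prop:patterns}, \cref{cor:erasing}) restricted to a factor rather than the whole word, regular constraints via \cref{prop:regular}, and the connectives treated in the obvious way; the corner cases around $\emptyword$ and $\strucvar$ are also the same. One cosmetic difference: the paper anchors the left-hand variable $x$ not by equating its endpoints to those of the match but by introducing fresh positions $(\oneopen,\oneclose)$ for $x$ and adding $\equalsubstr'(\oneopen,\oneclose,\cdot,\cdot)$ to link them to the match.

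There is one point where your sketch does not meet the invariant you state. You want $R_\psi$ to contain every position tuple whose factors satisfy $\psi$, but for a free variable $y$ occurring on the right-hand side of a word equation you take $(y^o,y^c)$ to be $y$'s occurrence positions inside the match. This undercounts: with $\psi=(x\logeq yy)$ and $w=\mathtt{abcdabab}$, the tuple $(5,8,1,2)$ belongs to $\bar R_\psi$ (since $w[5,8]=w[1,2]^2$), yet any match of $yy$ at $[5,8]$ places $y$ at $(5,6)$ or $(7,8)$, never $(1,2)$. The fix is cheap and remains $\mathsf{UCQ}$: for each free right-hand variable, introduce fresh free positions and tie them to the occurrence positions via $\equalsubstr'$---exactly what the paper does for the left-hand variable. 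With that one adjustment (and the paper's preprocessing that rewrites atoms with $\strucvar$ or $x$ itself on the right-hand side, which you do not mention) the induction goes through as you describe.
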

\begin{proof}
We prove this lemma by induction on the definition of $\epfcreg$. 
We first deal with some special cases regarding word equations.
If $(x \logeq \alpha_1 \cdot \strucvar \cdot \alpha_2)$ is an atom of $\varphi$, then we can re-write this as $(\strucvar \logeq x) \land \bigwedge_{y \in \var(\alpha_1\cdot\alpha_2) }(y \logeq \emptyword)$. 
To prove correctness for this, we consider a length-argument.
Let $\subs$ be a satisfying morphism. 
If $\subs(x) = \subs(\alpha_1) \cdot \subs(\strucvar) \cdot \subs(\alpha_2)$, then $|\subs(x)| = |\subs(\alpha_1)| + |\subs(\strucvar)| + |\subs(\alpha_2)|$. 
Furthermore, because $\subs(x) \sqsubseteq \subs(\strucvar)$ must hold, we know that $\subs(x) \leq \subs(\strucvar)$. 
This implies that $|\subs(\alpha_1)| + |\subs(\alpha_2)| = 0$, and $|\subs(x)| = |\subs(\strucvar)|$. 
Hence, $\subs(\alpha_1) \cdot \subs(\alpha_2) = \emptyword$ and $\subs(x) = \subs(\strucvar)$.

If $(x \logeq \alpha_1 \cdot x \cdot \alpha_2)$ is an atom of $\varphi$, then we can re-write this as $(x \logeq z) \land \bigwedge_{y \in \var(\alpha_1\cdot\alpha_2) }(y \logeq \emptyword)$, where $z \in \Xi$ is a new and unique variable. 
To prove correctness for this, we again consider a length-argument along with a satisfying morphism $\subs$.
If $\subs(x) = \subs(\alpha_1) \cdot \subs(x) \cdot \subs(\alpha_2)$, then $|\subs(x)| = |\subs(\alpha_1)| + |\subs(x)| + |\subs(\alpha_2)|$.  
Hence $|\subs(\alpha_1)| + |\subs(\alpha_2)| = 0$ which implies that $\subs(\alpha_1) = \subs(\alpha_2) = \emptyword.$

We can now assume that for all word equations $(x \logeq \alpha)$ in $\varphi$ we have that $\strucvar \notin \var(x \logeq \alpha)$ and $x \notin \var(x \logeq \alpha)$.

The rest of the proof follows an induction on $\epfcreg$-formulas. 
We first deal with the base cases (word equations and regular constraints), and then consider the recursive cases (connectives).

\noindent \textit{\textbf{Base case 1.} $(x \logeq \alpha_R)$ for every $x \in \Xi$ and $\alpha_R \in  (\Xi\cup\Sigma)^*$.} 

First, we consider the case where $x = \strucvar$. 
Since $\sigma(\strucvar) \in \Sigma^*$ and that $\alpha_R$ does not contain $\strucvar$, we have that $(\strucvar \logeq \alpha_R)$ is equivalent to $\sigma(\strucvar) \in \erasing{\lang}(\alpha_R)$. 
We have proven in~\cref{cor:erasing} that given $\alpha \in (\Sigma \union\ \Xi)^*$, we can maintain a 0-ary relation which is true if and only the word structure is currently a member of the pattern language $\erasing{\lang} (\alpha)$. 

Next, we consider when $x \neq \strucvar$.
This is almost identical to how we maintain pattern languages; the only difference is that we remove the atoms $\firstrel'(t_1)$ and $\lastrel'(t_{|\alpha|})$ from the update formulas.
To handle the left-hand side, we introduce two new variables $\oneopen, \oneclose \in \worddomain$ and do the following:
\begin{itemize} 
\item If the first symbol of $\alpha_R$ is an element of $\Sigma$, then add $\equalsubstr(\oneopen,\oneclose, t_1,t_{|\alpha|})$ to the update formula by conjunction.
\item If the first symbol of $\alpha_R$ is an element of $\Xi$, then add $\equalsubstr(\oneopen,\oneclose, x_1,t_{|\alpha|})$ to the update formula by conjunction.
\end{itemize}

According to the construction given in the proof of~\cref{prop:patterns}, if $y \in \var(\alpha_R)$, then we have two variables $x_i, t_i \in \worddomain$ such that the word $w[x_i,t_i]$ represents $\sigma(y)$ for some substitution $\subs$. 

Since there can be multiple such $x_i, t_i$ which represent $x$ in different places in $\alpha_R$, we choose the smallest such $i$.
Removing the existential quantifiers for said $x_i$ and $t_i$ allows us to maintain the relation defined by $(x \logeq \alpha_R)$. 
We note that~\cref{cor:erasing} does not define how $x_i, t_i$ can represent the empty word. 
For this case, we assume that $x_i = t_i = \$$ whenever $x$ is the emptyword, where $x \in \var(\alpha_R)$ is the variable that $x_i$ and $t_i$ represent.
Since we are simulating erasing using a union of update formulas, each of which maintains a non-erasing pattern language.
Thus, for each $N \subseteq \var(\alpha)$, we create an update formula for the non-erasing language of $\pi_N \alpha$, where $\pi_N$ denotes the projection of $\alpha$ onto the variables of $N$. 
Therefore, we set all variables that represent some $\var(\alpha) \setminus N$ to be $\$$.

\noindent \textit{\textbf{Base case 2.} $(x \regconst \gamma)$ for every $x \in \SVars{\psi}$, and every regular expression~$\gamma$.}

For this case, we maintain the relation
\[ R_\gamma \df \{(i,j) \in \worddomain^2 \mid w[i,j] \in \lang(\gamma) \}. \]
To do so, we construct the formula $\varphi_\gamma$.

We use~\cref{prop:regular} along with the fact that \dynprop is a strict subclass of \dyncq (see Theorem~3.1.5 part~b of~\cite{zeu:small}).
 However, note that from the proof of Base case 1, the way we handle the empty word in spanners, and $\epfcreg$ relations differs slightly.
That is, for some $w \in \Sigma^*$, a span $\spn{i,j}$ where $1 \leq i \leq  j \leq |w|+1$ represents the subword $w[i,j-1]$. 
Therefore, $w[i,i] = \emptyword$ for all $i \in |w|+1$.
In comparison, if a variable $x \in \var(\varphi)$ is substituted with the emptyword, then we model this as $x_i = \$$ and $t_i = \$$, where $x_i$ and $t_i$ are variables used to represent~$x$.

Thus, we deal with the emptyword separately.
First, let $\gamma^+$ be a regular expression such that $\lang(\gamma^+) = \lang(\gamma) \intersect \Sigma^+$.
Let $R_{\gamma^+}$ be the spanner relation for $\Sigma^* \cdot \bind{x}{\gamma^+} \cdot \Sigma^*$.
We know from~\cref{prop:regular} that we can maintain this in \dynprop, and therefore, due to~\cite{zeu:small}, we can maintain this in \dyncq.
However, it follows that
\[ R_{\gamma^+} = \{(i,j) \in \worddomain^2 \mid w_{\spn{\positionSym{w}{i}, \positionSym{w}{j}}} \in \lang(\gamma^+) \}. \]
Therefore, we make some minor modifications to the relation we maintain.
Let 
\[ \varphi_{\gamma^+}(x_i,t_i) \df \exists y_i \colon \bigl( R_{\gamma^+}'(x_i, y_i) \land \nextrelation'(y_i, t_i) \bigr). \]
It follows that $\varphi_{\gamma^+}$ maintains the relation
\[ \{   (i,j) \in \worddomain^2 \mid w[i,j] \in \lang(\gamma^+)  \}. \]

Then, if $\emptyword \notin \lang(\gamma)$, then $\varphi_{\gamma^+}$ maintains $R_\gamma$. Thus, let $\varphi_\gamma \df \varphi_{\gamma^+}$.
Otherwise, we define
\[  \varphi_\gamma(x_i,t_i) \df \varphi_{\gamma^+}(x_i,t_i) \lor \bigl( (x_i \logeq \$) \land (t_i \logeq \$) \bigr). \]

This concludes the proof for this case.

\noindent \textit{\textbf{Recursive case 1.} $(\psi_1 \land \psi_2)$ for all $\psi_1, \psi_2 \in \epfcreg$.} 

Assuming we have update formulas $\updateformula{\psi_1}{\unknownupdate}{u;\vec{v}_1}$ and $\updateformula{\psi_2}{\unknownupdate}{u;\vec{v}_2}$ for~$\psi_1$ and~$\psi_2$ respectively, the update formula for $\updateformula{\psi_1 \land \psi_2}{\unknownupdate}{u; \vec{v}_1 \cup \vec{v}_2 }$ is $\updateformula{\psi_1}{\unknownupdate}{u;\vec{v}_1} \land \updateformula{\psi_2}{\unknownupdate}{u;\vec{v}_2}$.

\noindent \textit{\textbf{Recursive case 2.} $(\psi_1 \lor \psi_2)$ for all  $\psi_1, \psi_2 \in \epfcreg$. } 

Assuming we have update formulas $\updateformula{\psi_1}{\unknownupdate}{u;\vec{v}_1}$ and $\updateformula{\psi_2}{\unknownupdate}{u;\vec{v}_2}$ for $\epfcreg$ formulas~$\psi_1$ and~$\psi_2$ respectively, the update formula for $\updateformula{(\psi_1 \lor \psi_2)}{\unknownupdate}{u; \vec{v}_1 \cup \vec{v}_2}$ is $\updateformula{\psi_1}{\unknownupdate}{u;\vec{v}_1} \lor \updateformula{\psi_2}{\unknownupdate}{u;\vec{v}_2}$.

\noindent \textit{\textbf{Recursive case 3.} $ \exists x \colon \psi$ for all $\psi \in \epfcreg$ and $x \in \fvar(\psi) \setminus \{ \strucvar \}$.} 

If a variable $x \in \Xi$ is existentially quantified within the $\epfcreg$-formula, then we existentially quantify the variables $x_i,t_i \in \worddomain$ where $w[x_i,t_i]$ is used to represent the variable $x$.

This results in an update formula that is in existential positive first-order logic.
However, since $\ucq$s and existential positive first-order logic have the same expressive power, we can maintain any $\epfcreg$ definable relation in $\dynucq$.
Thus, invoking $\dyncq = \dynucq$, we have shown that we can maintain any relation definable in $\epfcreg$ in $\dyncq$.
\end{proof}

Most of the work for this proof follows from \cref{prop:patterns} and \cref{cor:erasing}. 
Extra work is done in order to simulate regular constraints, although this follows almost immediately from~\cref{prop:regular}. 
It may seem that~\cref{splogindyncq} immediately implies that we can maintain any core spanner relation in \dyncq, however this is not the case.
The relation maintained for an $\epfcreg$ relation, and the spanner relation we wish to maintain for a given core spanner are not the same thing (recall~\cref{defn:spannerRel} and~\cref{defn:maintainingRels}).
Therefore, we give a proof that core spanners can be maintained in \dyncq by combining~\cref{prop:regular} and~\cref{lemma:eqsubstr}.

\begin{theorem}\label{splogindyncq}
Core spanners can be maintained in \dyncq.
\end{theorem}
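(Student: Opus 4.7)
The plan is to reduce to the two building blocks we have already established: the fact that regular spanners are maintainable in $\dynprop$ (\cref{prop:regular}), and the fact that the factor equality relation $\equalsubstr$ is maintainable in $\dyncq$ (\cref{lemma:eqsubstr}). The key observation is that every core spanner $P$ can be written in the normal form $P = \pi_Y \, \select^=_{x_1,y_1} \cdots \select^=_{x_k,y_k}(A)$, where $A$ is a functional vset-automaton representing a regular spanner $\spanner{A}$ (see~\cref{sec:spannerAlgebra}). Thus it suffices to show how each of the three operators—joining/projecting inside a regular spanner, equality selection, and final projection—can be handled while maintaining the \emph{spanner relation} $R^P$ in the sense of~\cref{defn:spannerRel}.

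First, I would invoke \cref{prop:regular} on $A$ to maintain, in $\dynprop$, the $2|\SVars{A}|$-ary spanner relation $R^A$ whose tuples encode pairs $(\openspanvar{x}, \closespanvar{x})_{x \in \SVars{A}}$ precisely according to~\cref{defn:spannerRel}. Since $\dynprop \subseteq \dyncq$ (Theorem 3.1.5 of~\cite{zeu:small}), the same relation is maintainable in $\dyncq$. In parallel, by \cref{lemma:eqsubstr} and \cref{lemma:next}, I would maintain $\equalsubstr$, $\nextrelation$, $\firstrel$, and $\lastrel$ in $\dyncq$. All these updates can be combined by conjoining their update formulas component-wise (recall $\dyncq = \dynucq$).

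Next, for each equality selection $\select^=_{x_i,y_i}$, I would conjoin a subformula $\psi^=_{x_i,y_i}(\openspanvar{x_i},\closespanvar{x_i},\openspanvar{y_i},\closespanvar{y_i})$ expressing that $w[\openspanvar{x_i},\closespanvar{x_i}^{\text{--}}] = w[\openspanvar{y_i},\closespanvar{y_i}^{\text{--}}]$, where the notation $\closespanvar{\,\cdot\,}^{\text{--}}$ refers to the symbol-element immediately preceding $\closespanvar{\,\cdot\,}$ in $\nextrelation$ (or $\closespanvar{\,\cdot\,}$ itself when $\closespanvar{\,\cdot\,} = \$$). This subformula will essentially be a $\ucq$ case distinction on whether each span is empty (so $\openspanvar{x_i} = \closespanvar{x_i}$), reaches the end marker (so $\closespanvar{x_i} = \$$), or is a proper nonempty factor — in the last case the condition is a direct lookup in $\equalsubstr$, after using $\nextrelation$ to locate the predecessor of $\closespanvar{x_i}$. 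Finally, the outer projection $\pi_Y$ is handled by existentially quantifying the $\openspanvar{x}, \closespanvar{x}$ pairs for every $x \in \SVars{A} \setminus Y$, which is perfectly expressible in $\dyncq$. Collecting everything, the update formula for $R^P$ is
\[
\updateformula{R^P}{\unknownupdate}{u; (\openspanvar{x},\closespanvar{x})_{x \in Y}} \df \exists (\openspanvar{x},\closespanvar{x})_{x \in \SVars{A}\setminus Y} \colon \updateformula{R^A}{\unknownupdate}{u;\cdots} \land \bigwedge_{i=1}^{k} \psi^=_{x_i,y_i}(\cdots),
\]
which is a union of conjunctive queries over relations already maintained in $\dyncq$.

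The main obstacle will be getting the encoding of spans right at the boundaries. A span $\spn{j,j}$ represents the empty word at a given position and is encoded by $\openspanvar{x} = \closespanvar{x}$ at the same symbol-element, whereas $\equalsubstr$ as defined in~\cref{defn:equalfactors} only relates $4$-tuples of symbol-elements and therefore does not directly speak about empty factors or factors ending just after the last symbol. I expect the bulk of the technical care to go into writing $\psi^=_{x_i,y_i}$ as an explicit case split over (empty vs.\ nonempty) $\times$ (ending at $\$$ vs.\ not) for both $x_i$ and $y_i$, and verifying in each branch that the correct word-level equality is captured using $\equalsubstr$, $\nextrelation$, $\firstrel$, and $\lastrel$. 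Once this case analysis is done, the rest of the argument is a routine composition of dynamic programs in $\dyncq$.
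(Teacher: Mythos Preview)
Your proposal is correct and follows essentially the same approach as the paper: invoke the core-simplification lemma to write $P=\pi_Y\,\select^=_{x_1,y_1}\cdots\select^=_{x_k,y_k}(A)$, maintain $R^A$ via \cref{prop:regular}, conjoin a $\ucq$ subformula $\psi_{x_i,y_i}$ built from $\equalsubstr'$ and $\nextrelation'$ for each equality selection, and existentially quantify the non-projected variable pairs. Your anticipated case split on empty spans and the end marker $\$$ is exactly the technical care the paper's $\psi_{x_i,y_i}$ needs (and you are in fact more explicit about the $\$$ boundary than the paper's own formula).
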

\begin{proof}
To prove this theorem, we can use~\cref{prop:regular} and \cref{lemma:eqsubstr} along with the so-called \emph{core-simplification lemma}~\cite{fag:spa}. 
The core-simplification lemma states that every core spanner can be expressed by spanners of the form $\pi_{V} S A$, where $S$ is a sequence of equality selection operators, $A$ is a vset-automaton, and $V$ is some subset of variables in $S A$.

Let $R^A$ be the spanner-relation for $\fun{A}$. 
We know that $R^A$ can be maintained in \dynprop; and since \dynprop is a strict subclass of \dyncq, we know that $R^P$ can be maintained in \dyncq.

Next, we deal with equalities.
Let $S \df (\select^=_{x_1,y_1}, \select^=_{x_2,y_2}, \dots, \select^=_{x_n,y_n})$ be a sequence of equalities.
For each equality selection $\select^=_{x_i,y_i}$ operator in $S$, we define 
\[\psi_{x_i,y_i} \df \bigl( \equalsubstr'(\openspanvar{x_i}, \closespanvar{\hat{x_i}}, \openspanvar{y_i}, \closespanvar{\hat{y_i}} ) \land \nextrelation'(\closespanvar{\hat{x}_i}, \closespanvar{x_i}) \land \nextrelation'(\closespanvar{\hat {y}_i}, \closespanvar{y_i}) \bigr) \lor \bigl( (\openspanvar{x_i} \logeq \closespanvar{x_i}) \land (\openspanvar{y_i} \logeq \closespanvar{y_i}) \bigr), \]
where $\openspanvar{x}, \closespanvar{x} \in \worddomain$ are used to simulate $x \in \SVars{A}$.

We now define the update formulas for the spanner relation $R^{SA}$ of the spanner $P \df S A$ where $A$ is a vset-automata, and $S$ is a sequence of equality selection operators:
\[ \updateformula{R^P}{\unknownupdate}{u;\vec{x}} \df \bigwedge_{\select^=_{x_i,y_i} \in S} (\psi_{x_i,y_i}) \land \updateformula{R^A}{\unknownupdate}{u;\vec{x}}. \]

Finally, to deal with projection, we add the necessary existential quantifiers.	
This concludes the proof. 
\end{proof}

This raises the question as to why we consider $\epfcreg$ rather than just core spanners.
One reason shall become clear in~\cref{sec:rel}, where we show how $\fc$ (or $\epfcreg$) can be used as a tool for proofs that a language can be maintained in \dynfo (or \dyncq respectively).
Another reason is that $\fcreg$ could be considered a more ``precise'' logic.
That is, if $\varphi \in \fcreg$ realizes the spanner $P$, then $|\fun{\varphi}(w)| = |\fun{P}(w)|$ for all $w \in \Sigma^*$.
However, when considering some spanner $P$ that realizes $\varphi \in \fcreg$, we have that $|\fun{\varphi}(w)|\leq |\fun{P}(w)|$.
This can be easily illustrated by the formula $\psi(x) \df (x \logeq x)$, and $w \in \Sigma^*$ such that $w$ contains a factor in two distinct places.
A spanner $P$ that realizes $\psi$ would contain $\mu_1$ and $\mu_2$, where $\mu_1(x) = \spn{i_1,j_1}$ and $\mu_2(x) =\spn{i_2,j_2}$ such that $\spn{i_1,j_1} \neq \spn{i_2,j_2}$ and $w_{\spn{i_1,j_1}} = w_{\spn{i_2,j_2}}$.

\cref{splogindyncq} shows us that \dyncq is at least as expressive as $\epfcreg$. 
We shall use this along with \cref{eqlen} to show that \dyncq is more expressive than core spanners. 
Given that we can maintain any relation selectable in $\epfcreg$ using \dyncq, it is not a big surprise that adding negation allows us to maintain any $\fc$ selectable relation in \dynfo.

\begin{lemma}\label{splogneg}
Any relation definable in $\fcreg$ can be maintained in \dynfo.
\end{lemma}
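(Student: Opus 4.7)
The plan is to extend the inductive construction from the proof of~\cref{lem:splogDynCQ} to cover the two additional connectives of $\fcreg$, namely negation and universal quantification. As before, each $\fc$-variable $x$ is represented by a pair $(x_i,t_i)\in\worddomain^2$, where $w[x_i,t_i]$ encodes the value of $x$ under a substitution, and with the convention that $x_i=t_i=\$$ encodes $x\mapsto\emptyword$. For every subformula $\psi$ of the input $\fcreg$-formula $\varphi$, I will construct by structural induction a first-order update formula $\updateformula{\psi}{\unknownupdate}{u;\vec{v}_\psi}$ that maintains the relation $\psi(\wordstruc)$ under this encoding, where $\vec{v}_\psi$ contains one pair $(x_i,t_i)$ for each free variable of $\psi$.

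For the base cases (word equations $(x\logeq\alpha)$ and regular constraints $(x\regconst\gamma)$) and for the positive recursive cases (conjunction, disjunction, existential quantification) the required update formulas are exactly those produced in the proof of~\cref{lem:splogDynCQ}, which are in $\ucq$ and therefore in first-order logic. Since $\dyncq\subseteq\dynfo$ (see~\cite{zeu:small}), these cases carry over without modification; the auxiliary relations $\nextrelation$, $\firstrel$, $\lastrel$, $\equalsubstr$, and the relations $R_\gamma$ maintained in~\cref{prop:regular} are all available and are already maintained in $\dyncq\subseteq\dynfo$.

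For the two new cases, I will use the closure of first-order logic under negation and universal quantification. For negation, given an update formula $\updateformula{\psi}{\unknownupdate}{u;\vec{v}_\psi}$ for $\psi$, I define
\[
\updateformula{\neg\psi}{\unknownupdate}{u;\vec{v}_\psi} \df \mathsf{valid}(\vec{v}_\psi) \land \neg\,\updateformula{\psi}{\unknownupdate}{u;\vec{v}_\psi},
\]
where $\mathsf{valid}(\vec{v}_\psi)$ is a quantifier-free first-order formula asserting, for every pair $(x_i,t_i)$ in $\vec{v}_\psi$, that either $x_i=t_i=\$$ or both $x_i,t_i$ are symbol-elements with $x_i\leq t_i$. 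This guard is needed because otherwise negation would include tuples that do not encode any valid $\strucvar$-safe substitution. For universal quantification $\forall x\colon\psi$, I set
\[
\updateformula{\forall x\colon\psi}{\unknownupdate}{u;\vec{v}} \df \forall x_i\,\forall t_i\colon \bigl(\mathsf{valid}(x_i,t_i) \rightarrow \updateformula{\psi}{\unknownupdate}{u;\vec{v},x_i,t_i}\bigr),
\]
which correctly ranges over all factors of the updated word (including $\emptyword$, encoded by $x_i=t_i=\$$), and likewise restricts to valid encodings.

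The main obstacle is to argue that the guard $\mathsf{valid}$ exactly captures the set of factor-encodings of $\dynword(\unknownupdate(\wordstruc))$ and that the semantics of $\fcreg$ coincide with the so-defined relations. Concretely, I must show that for every substitution $\subs$ that is $\strucvar$-safe with respect to $\dynword(\unknownupdate(\wordstruc))$, there is a unique valid tuple $\vec{v}_\psi$ with $\subs$ being represented by $\vec{v}_\psi$, and conversely that every valid tuple encodes a $\strucvar$-safe substitution. The remaining conditions $\subs(x)\sqsubseteq\subs(\strucvar)$ are automatic once the encoding uses positions of $\worddomain$ and the $\emptyword$-convention. With this in place, a routine induction over the syntax of $\fcreg$ combined with the $\dynfo$-maintainability of $\nextrelation$, $\firstrel$, $\lastrel$, $\equalsubstr$, and the $R_\gamma$ yields the claim.
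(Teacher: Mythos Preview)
Your proposal is correct and follows essentially the same approach as the paper: extend the inductive construction of \cref{lem:splogDynCQ} by handling negation and universal quantification via first-order closure. The paper's own proof is extremely terse (three sentences): it simply sets $\updateformula{R^{\neg\varphi}}{\unknownupdate}{u;\vec{x}} = \neg\,\updateformula{R^{\varphi}}{\unknownupdate}{u;\vec{x}}$ and dispatches $\forall$ via the equivalence $\forall x\colon\varphi \equiv \neg\exists x\colon\neg\varphi$, without mentioning any validity guard. Your explicit guard $\mathsf{valid}(\vec{v}_\psi)$ is a worthwhile refinement that the paper glosses over.

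One point to correct in your justification: your claim that ``for every substitution $\subs$ \ldots\ there is a \emph{unique} valid tuple $\vec{v}_\psi$'' is false. The same factor can occur at several positions in $\dynword(\unknownupdate(\wordstruc))$, so a single $\strucvar$-safe substitution may have many valid encodings. Fortunately your construction does not need uniqueness. What you actually need (and what holds) is (i) every factor has \emph{at least one} valid encoding, and (ii) the update formulas produced by the base cases of \cref{lem:splogDynCQ} are position-invariant, i.e.\ their truth depends only on the values $w[x_i,t_i]$ and not on the particular positions chosen. Property (ii) follows because the base cases are built from $\equalsubstr$, the symbol predicates, and the regular-language relations $R_\gamma$, all of which test only the underlying factors. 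With (i) and (ii) in place of uniqueness, the correctness of your negation and universal-quantifier clauses goes through unchanged.
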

\begin{proof}
Let $\varphi \in \epfcreg$ and let $R^{\varphi}$ be the relation maintaining $\varphi$ where the update formulas for $R^{\varphi}$ are in \cq. 
The extra recursive rule allowing for $(\neg\varphi) \in \fc$ can be maintained by $\updateformula{R^{\neg\varphi}}{\unknownupdate}{u;\vec{x}} = \neg \updateformula{R^{\varphi}}{\unknownupdate}{u;\vec{x}}$. 
From basic logical equivalences, we know that $\forall x \colon \varphi$ is equivalent to $\neg\exists x \colon \neg \varphi$.
\end{proof}

As with \cref{splogindyncq}, we shall use~\cref{splogneg} along with~\cref{eqlen} to show that \dynfo is more expressive than $\fcreg$.

Since $\fcreg$ captures the generalized core spanners, it is unsurprising that any generalized core spanner can be maintained in \dynfo. 

\begin{theorem}\label{genCoreInDynFO}
Generalized core spanners can be maintained in \dynfo.
\end{theorem}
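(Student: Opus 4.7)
The plan is to adapt the proof of \cref{splogindyncq} and add handling for the one operator that distinguishes generalized core spanners from core spanners, namely set difference ($\diff$). The skeleton will proceed in four stages: (i) reduce an arbitrary generalized core spanner to a normal form built from vset-automata using $\pi$, $\select^=$, $\cup$, $\join$, and $\diff$; (ii) maintain the spanner-relation of each vset-automaton in \dynprop via \cref{prop:regular}; (iii) realise $\pi$, $\select^=$, $\cup$, and $\join$ exactly as in the proof of \cref{splogindyncq}, using \cref{lemma:eqsubstr} and \cref{lemma:next} so everything so far stays in \dyncq; (iv) introduce difference through negation, which lifts us into \dynfo.

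Concretely, suppose by induction that for generalized core spanners $P_1, P_2$ with $\SVars{P_1} = \SVars{P_2}$ the spanner-relations $R^{P_1}, R^{P_2}$ are maintained by update formulas $\updateformula{R^{P_i}}{\unknownupdate}{u;\vec{x}}$ in first-order logic. Because compatibility gives the same tuple shape, we can put
\[ \updateformula{R^{P_1 \diff P_2}}{\unknownupdate}{u;\vec{x}} \df \updateformula{R^{P_1}}{\unknownupdate}{u;\vec{x}} \land \neg \, \updateformula{R^{P_2}}{\unknownupdate}{u;\vec{x}}. \]
The remaining algebraic operators are handled exactly as in the proof of \cref{splogindyncq}: natural joins become conjunctions (after renaming so that shared variables are matched), unions become disjunctions, projections become existential quantifications over the positional variables $\openspanvar{x}, \closespanvar{x}$, and equality selections reduce to calls to $\equalsubstr'$ together with $\nextrelation'$ to correctly translate a span's endpoint into the element encoding its last symbol, with the empty-word case handled by $\openspanvar{x} \logeq \closespanvar{x}$.

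A cleaner, more modular alternative is to invoke \cref{splogneg}. By \cref{theorem:frey2019finite}, every generalized core spanner $P$ is realized by some $\varphi \in \fcreg$, and \cref{splogneg} maintains the $\fcreg$-relation of $\varphi$ in \dynfo. One then only has to translate between the ``$\fcreg$-style'' output (a tuple of prefix/content factors $(\subs(x^P), \subs(x^C))$) and the ``spanner-style'' output (a tuple of positions $(\openspanvar{x}, \closespanvar{x})$). This conversion is purely local: given the factors $\subs(x^P)$ and $\subs(x^C)$, the positions $\openspanvar{x}$ and $\closespanvar{x}$ are the first elements after $\subs(x^P)$ and after $\subs(x^P)\cdot\subs(x^C)$ respectively, which can be read off with $\nextrelation'$ and a handful of existentials, with a dedicated disjunct for the endpoint-at-$\$$ case.

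The step I expect to be delicate is this last translation in both directions (and, in the direct approach, the exact statement of the normal form for generalized core spanners together with verifying that difference is indeed the only operator that forces negation). The subtlety is the mismatch between \cref{defn:spannerRel}, which demands $w[\openspanvar{x}] \neq \emptyword$ and allows $\closespanvar{x} = \$$ for spans reaching the end, and the $\fcreg$ convention in which empty factors and end-of-word factors are encoded via the content/prefix variables rather than by distinguished positions. Ensuring that the translation behaves correctly under arbitrary updates $\ins{\mathtt{a}}{u}$ and $\reset{u}$, including the degenerate cases $w = \emptyword$ or a span adjacent to $\$$, will require careful bookkeeping, but all the auxiliary relations needed ($\nextrelation$, $\firstrel$, $\lastrel$, $\equalsubstr$) are already available in \dyncq and therefore in \dynfo.
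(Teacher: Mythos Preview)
Your proposal is correct and takes essentially the same approach as the paper: extend the construction from \cref{splogindyncq} by structural induction, adding one more case for $\diff$ via the formula $\updateformula{R^{P_1}}{\unknownupdate}{u;\vec{x}} \land \neg\,\updateformula{R^{P_2}}{\unknownupdate}{u;\vec{x}}$. The paper's proof is exactly this (in two sentences), so your stage-(iv) is the entire content; your stages (i)--(iii) and the alternative route through \cref{splogneg} with the position/factor translation are unnecessary elaborations, and the ``delicate'' translation you worry about never actually needs to be carried out.
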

\begin{proof}
This follows from~\cref{splogindyncq}, and the fact that we can use negation to simulate difference.
That is, to maintain $P \df (P_1 \setminus P_2)$, we add a recursive step to the proof of~\cref{splogindyncq}.
That is, let $R^{P_i}$ be the spanner relation for $P_i$, for $i \in \{ 1, 2\}$.
Then $P$ can be maintained with $\updateformula{R^{P_1}}{\unknownupdate}{u;\vec{x}} \land \neg \updateformula{R^{P_2}}{\unknownupdate}{u;\vec{x}}$.
\end{proof}

As a consequence of this section:
\begin{itemize}
\item Any regular spanner can be maintained in \dynprop,
\item any relation definable in $\epfcreg$ can be maintained in \dyncq, and 
\item any relation definable in $\fcreg$ can be maintained in \dyncq.
\end{itemize} 

In the following section, we briefly look at a particular class of languages that can be represented by core spanners; and hence, are maintainable in \dyncq.

\subsection{Regular Expressions with back-references}\label{sec:xregex}
Since we can maintain all core spanners in \dyncq, it follows from Section 3.3 of~\cite{fre:doc} that we can maintain a subclass of the \emph{extended regular expressions} (\emph{xregex})\index{xregex} languages in~\dyncq. 
This extends ``classical'' regular expressions to allow a so-called \emph{back-reference operator} -- a common feature among modern implementations of regular expressions.

The definition of \emph{xregex} adds variable references, $\& x$ for every $x \in \Xi$, to the definition of regex formulas (see~\cref{defn:regex}).
The semantics of such a subexpression is that the factor captured by the last $\bind{x}{\,}$ is repeated where $\& x$ is~placed.
For the purposes of this short section on xregex, we assume that $\bind{x}{\,}$ must occur before any $\& x$.

\begin{example}
Let $\gamma \df \bind{x}{\mathtt{ab}} \cdot (\& x)^*$. Then, $\lang(\gamma) = \{ (\mathtt{ab})^+ \mid \mathtt{a}, \mathtt{b} \in \Sigma \}$. Non-regular languages can be expressed with xregex-formulas.
For example, let $\gamma_2 \df \bind{x}{\Sigma^*} \cdot \& x \cdot \& x \cdot (\& x)^*$. Then, $\lang(\gamma_2) = w \cdot w \cdot w \cdot (w)^*$ for some $w \in \Sigma^*$.
\end{example} 
	
It was shown by Fagin et al.~\cite{fag:spa} that there are languages expressible by xregex-formulas that cannot be expressed by core spanners. 
For example, in~\cite{fag:spa} the so-called \emph{uniform-0-chunk} language was shown not to be expressible by core spanners.
This language is defined as the set of words $w$, where $w = s_1 \cdot \prod_{i=1}^{n} (t \cdot s_i)$ for some $n \geq 1$, where $t \in \{0 \}^+$ and $s_i \in \{ 1 \}^+$ for all $i \in [n]$.

However, it was shown in~\cite{fre:doc} that the languages of so-called \emph{variable star-free xregex-formulas} are expressible by core spanners.

\begin{definition}
A xregex $\gamma$ is \emph{variable star-free} if for every subexpression of $\gamma$ of the form $\beta^*$ (and by extension $\beta^+$), no subexpression of $\beta$ is a variable binding or a variable reference.
\end{definition}

Theorem 3.21 in~\cite{fre:doc} states that, there is an algorithm that given a variable-star free regex $\gamma$, computes a core spanner $P$ such that $\lang(\gamma) = \lang(P)$.
Thus, this chapter shows that for any variable star-free xregex $\gamma$, the language $\lang(\gamma)$ can be maintained in $\dyncq$. 
A potential area for future research is looking at whether larger classes of xregex can be maintained in \dyncq and \dynfo. 

The next section is dedicated to the proof of~\cref{lemma:eqsubstr}.
Readers who are more interested in the consequences of our results (rather than the proofs) are invited to skip to~\cref{sec:rel}.

\subsection{Proof of~\cref{lemma:eqsubstr}}\label{sec:eqsubstr}
In this section we give the proof for~\cref{lemma:eqsubstr}. 
This proof is the main construction for maintaining $\epfcreg$ and core spanners. 
One of the interesting cases is illustrated in~\cref{fig:word}. 
Here, one can think of the new symbol at node $u$ as a ``bridge'' between the two equal factors $w[x_1,v_1]$ and $w[x_2,v_3]$ (which are the word $w_1$) and the equal factors $w[v_2,y_1]$ and $w[v_4,y_2]$ (which are the word $w_2$). 
Hence, after the update we have that $w'[x_1,y_1]=w'[x_2,y_2]$ even though $w[x_1,y_1] \neq w[x_2,y_2]$, under the assumptions that $w(v) = \mathtt{a}$, $v_1 \newnextsym u \newnextsym v_2$ and $v_3 \newnextsym v \newnextsym v_4$. 

The proof for~\cref{lemma:eqsubstr} produces a $\ucq$ update formula for each cases. 
These subformulas are joined together by disjunction to give us an update formula $\updateformula{\equalsubstr}{\unknownupdate}{u}$ which is in $\dynucq$, and hence we have proven that we can maintain the factor equality relation in $\dyncq$. 

\begin{figure}
\tikzset{every picture/.style={line width=0.75pt}}       
\begin{tikzpicture}[x=0.75pt,y=0.75pt,yscale=-1,xscale=1]

\draw    (70.1,50) -- (70.1,70) ;
 
\draw    (130.1,50) -- (130.1,70) ;

\draw    (170.43,50) -- (170.43,70) ;

\draw    (230.1,50) -- (230.1,70) ;

\draw    (150.1,50) -- (150.1,70) ;

\draw    (360.1,50) -- (360.1,70) ;

\draw    (420.1,50) -- (420.1,70) ;
 
\draw    (460.43,50) -- (460.43,70) ;

\draw    (520.1,50) -- (520.1,70) ;

\draw    (440.1,50) -- (440.1,70) ;

\draw    (30,60) -- (550,60) ;

\draw (150.33,40) node    {$u$};

\draw (150.07,77) node   [align=left] {$\underline{\mathtt{a}}$};

\draw (100.1,76) node    {$w_{1}$};

\draw (200.17,76) node    {$w_{2}$};

\draw (71,41) node    {$x_{1}$};

\draw (229.67,41) node    {$y_{1}$};

\draw (440.33,40) node    {$v$};

\draw (439.97,77) node   [align=left] {$\mathtt{a}$};

\draw (390,76) node    {$w_{1}$};

\draw (490.17,76) node    {$w_{2}$};

\draw (361,41) node    {$x_{2}$};

\draw (519.67,41) node    {$y_{2}$};

\draw (130.2,41) node    {$v_{1}$};

\draw (170.2,41) node    {$v_{2}$};

\draw (420.2,41) node    {$v_{3}$};

\draw (459.8,41) node    {$v_{4}$};
\end{tikzpicture}
\caption{A word structure where the insertion of an $\mathtt{a}$ at $u$ causes $w'[x_1,y_1]=w'[x_2,y_2]$ to hold. \label{fig:word}}
\end{figure}
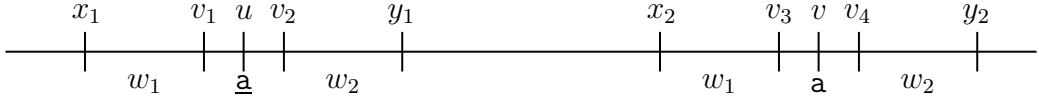

We now observe the following small result, which help us in the actual proof of \cref{lemma:eqsubstr}:

\begin{lemma}
\label{obs:next}
If $x \leq y$, $z \leq v$, and $y \nextsym z$ for $x,y,z,v \in \worddomain$, then $w[x,y] \cdot w[z,v] = w[x,v]$ .
\end{lemma}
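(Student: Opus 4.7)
The plan is to unpack the definitions and reduce the claim to the observation that everything strictly between $y$ and $z$ must be $\emptyword$. First I would note that $y \nextsym z$ forces $y < z$: by definition $\positionSym{w}{z} = \positionSym{w}{y}+1$, so $y \neq z$, and moreover both $y$ and $z$ are symbol-elements. Combining this with the hypotheses $x \leq y$ and $z \leq v$ gives $x \leq y < z \leq v$, so all three expressions $w[x,y]$, $w[z,v]$ and $w[x,v]$ are well-defined factors.

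Next I would unfold $w[x,v]$ using its definition as a concatenation of single positions, and split it into three consecutive blocks at the cuts $y$ and $z$:
\[
w[x,v] \;=\; \bigl(w[x]\cdot w[x+1]\cdots w[y]\bigr)\cdot\bigl(w[y+1]\cdots w[z-1]\bigr)\cdot\bigl(w[z]\cdot w[z+1]\cdots w[v]\bigr).
\]
By definition the first block is exactly $w[x,y]$ and the third block is exactly $w[z,v]$, so it suffices to show that the middle block equals $\emptyword$.

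The key step is precisely this middle-block claim, which is immediate from the definition of $\nextsym$: because $\positionSym{w}{z} = \positionSym{w}{y}+1$, no symbol-element lies strictly between $y$ and $z$, i.e. $w[i] = \emptyword$ for every $i$ with $y < i < z$. Hence the middle block is a concatenation of empty words, which is $\emptyword$, and therefore $w[x,v] = w[x,y] \cdot w[z,v]$. The only edge case is $z = y+1$, in which the middle block is a vacuous (length-zero) concatenation equal to $\emptyword$ by convention, so the argument still goes through. I do not expect any genuine obstacle here; the statement is essentially a bookkeeping lemma used to justify splicing factors at an $\nextsym$-boundary, and the proof is a direct unfolding of the definitions of $w[i,j]$ and of $\nextsym$.
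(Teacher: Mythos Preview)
Your proposal is correct and follows essentially the same approach as the paper: split $w[x,v]$ at $y$ and $z$, observe that $y \nextsym z$ forces the middle block $w[y+1,z-1]$ to be $\emptyword$, and conclude. Your write-up is in fact more careful than the paper's two-line argument, explicitly justifying $y < z$ and handling the edge case $z = y+1$.
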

\begin{proof}
Because $y \nextsym z$ it follows that $w[y+1,z-1] = \emptyword$. Since we can write $w[x,v]$ as $w[x,y] \cdot w[y+1,z-1] \cdot w[z,v]$ and because $w[y+1,z-1] = \emptyword$, it follows that $w[x,y] \cdot w[z,v] = w[x,v]$.
\end{proof}

Next, we give the proof of~\cref{lemma:eqsubstr}. 

\subsubsection{Actual proof of \cref{lemma:eqsubstr}}\label{sec:proofEqSubStr}
\begin{proof}
In a similar fashion to the proof of \cref{lemma:next}, we split this proof into two parts, considering insertion and deletion separately.
For both parts we assume that $\equalsubstr$ is correct for a word-structure in some state, then prove that the update formula $\updateformula{\equalsubstr}{\unknownupdate}{u; \oneopen,\oneclose,\twoopen,\twoclose}$ correctly updates $\equalsubstr$.
We note that $\equalsubstr$ is initialized to be $\emptyset$. 
If our update formulas are in existential positive $\fo$, then the factor equality relation can be maintained in $\dyncq$. 
We prove this lemma for the restricted relation where $\oneopen < \twoopen$. 
Once this relation has been maintained, the case where $\twoopen < \oneopen$ follows symmetrically, and the case where $\oneopen = \twoopen$ is trivial. 
That is, if $\oneopen = \twoopen$, then $w[\oneopen,\oneclose] = w[\twoopen, \twoclose]$ if and only if $\oneclose = \twoclose$.
Throughout the proof, we use $\equalsubstr'$ to denote the relation $\{ (\oneopen,\oneclose,\twoopen,\twoclose) \mid \bar{\programstate} \models \updateformula{\equalsubstr}{\absins{\mathtt{a}}}{u; \oneopen,\oneclose,\twoopen,\twoclose} \}$.

\paragraph{Part 1 (insertion):}

Let $\unknownupdate = \ins{\mathtt{a}}{u}$. The update formula for $\equalsubstr$ under insertion is:
\begin{multline*}
	\updateformula{\equalsubstr}{\absins{\mathtt{a}}}{u; \oneopen,\oneclose,\twoopen,\twoclose} \df  \bigvee_{i=1}^{15} \big( \substrsubform{i} \big) \land (\oneopen < \twoopen) \land \\ \symel{\oneopen} \land \symel{\oneclose} \land \symel{\twoopen} \land \symel{\twoclose},
\end{multline*}
where $\substrsubform{i}$ are to be defined.

We have that for $\updateformula{\equalsubstr}{\absins{\mathtt{a}}}{u; \oneopen,\oneclose,\twoopen,\twoclose}$ to evaluate to true, it must be that $(\oneopen < \twoopen)$ and $\symel{\oneopen}$, the latter only holds when $w(\oneopen) \neq \emptyword$.
Similarly, it must be that $w(\oneclose)$, $w(\twoopen)$ and $w(\twoclose)$ are all not the empty word. 
This is per the definition of the factor equality relation, see~\cref{defn:equalfactors}. 
Therefore, it is enough to show that if $w'[\oneopen, \oneclose] = w'[\twoopen,\twoclose]$ then $\substrsubform{i}$ holds, since the other restrictions on $\equalsubstr'$ have already been considered. 

We consider four main cases, some of which are split into further sub-cases.
Since the formulas that we give are joined by disjunction to form $\updateformula{\equalsubstr}{\absins{\mathtt{a}}}{u; \oneopen,\oneclose,\twoopen,\twoclose}$, if one of the subformulas evaluates to true then $(\oneopen,\oneclose,\twoopen,\twoclose) \in \equalsubstr'$.

The four cases are as follows:
\begin{itemize}
\item Case 1. $u \notin [\oneopen, \oneclose]$, and $u \notin [\twoopen, \twoclose]$,
\item Case 2. $u \in [\oneopen, \oneclose]$ and $u \notin [\twoopen,\twoclose]$, 
\item Case 3. $u \notin [\oneopen,\oneclose]$ and $u \in [\twoopen,\twoclose]$, and
\item Case 4. $u \in [\oneopen,\oneclose]$ and $u \in [\twoopen,\twoclose]$.
\end{itemize}
Cases 2 and 4 are split into further subcases, and Case 3 is analogous to Case 2.
We now consider these cases individually.

\begin{description}
\item[Case 1.] $u \notin [\oneopen,\oneclose]$ and $u \notin [\twoopen,\twoclose]$.
\end{description}
If $u$ is outside of the intervals $[\oneopen, \oneclose]$ and $[\twoopen,\twoclose]$ then $w[\oneopen,\oneclose] = w'[\oneopen,\oneclose]$ and $w[\twoopen,\twoclose] = w'[\twoopen, \twoclose]$. 
Thus, $w[\oneopen,\oneclose] = w[\twoopen,\twoclose]$ if and only if $w'[\oneopen, \oneclose] = w'[\twoopen,\twoclose]$.
We now define 
\[
\substrsubform{1} \df \equalsubstr(\oneopen,\oneclose,\twoopen,\twoclose) \land \Big( (u<\oneopen) \lor \big( ( \oneclose < u ) \land ( u < \twoopen ) \big) \lor ((\oneclose < u) \land ( \twoclose < u )) \Big).
\]
The above formula handles the case where $u \notin [\oneopen, \oneclose]$ and $u \notin [\twoopen,\twoclose]$.
For this case, it is clear that $\equalsubstr(\oneopen,\oneclose,\twoopen,\twoclose)$ if and only if $\equalsubstr'(\oneopen,\oneclose,\twoopen,\twoclose)$.

\begin{description}
\item[Case 2.] $u \in [\oneopen,\oneclose]$ and $u \notin [\twoopen,\twoclose].$
\end{description}

This case is split into four subcases.
\begin{itemize}
\item Case 2.1. $\oneopen = u = \oneclose$.
\item Case 2.2. $u = \oneopen$ and $\oneopen < \oneclose$.
\item Case 2.3. $\oneopen < u < \oneclose$.
\item Case 2.4. $u = \oneclose$ and $\oneopen < \oneclose$.
\end{itemize}

We examine each subcase, and provide an subformula of the update formula to handle the behaviour of said case.

\underline{Case 2.1.} $u = \oneopen$ and $u = \oneclose$.

To handle this case, we define 
\[
\substrsubform{2} \df (u \logeq \oneopen) \land (\oneopen \logeq \oneclose) \land (\twoopen \logeq \twoclose) \land \symbolrel{\mathtt{a}}(\twoopen).
\]

Since we are inserting an $\mathtt{a}$ at position $u$, if $\symbolrel{\mathtt{a}}(\twoopen)$ holds, then $w'[u,u] = w'[\twoopen,\twoopen]$. 
Because we have $u = \oneopen$, $\oneopen = \oneclose$, and $\twoopen = \twoclose$, it follows that if $\substrsubform{2}$ holds, then $w'[\oneopen,\oneclose] = w'[\twoopen,\twoclose]$.

\underline{Case 2.2.} $u = \oneopen$ and $\oneopen < \oneclose$.

For this case, we define
\[
\substrsubform{3} \df \exists v_1 \exists v_2 \colon \big(\equalsubstr(v_1, \oneclose, v_2, \twoclose) \land \nextrelation'(\oneopen, v_1)  \land \nextrelation'(\twoopen, v_2) \land \symbolrel{\mathtt{a}}(\twoopen) \land (u \logeq \oneopen) \big).
\]

Assume $\substrsubform{3}$ holds. We now show that this implies $w'[\oneopen,\oneclose] = w'[\twoopen,\twoclose]$. 
We know that $w'[u,u] = w[\twoopen,\twoopen]$ as $\symbolrel{\mathtt{a}}(\twoopen)$ holds, and, secondly, that $w[v_1, \oneclose] = w[v_2, \twoclose]$.
Hence, it follows that:
\[
w'[u,u] \cdot w[v_1, \oneclose] = w'[\twoopen,\twoopen] \cdot w[v_2, \twoclose].
\]
Moreover, since $u = \oneopen$,
\[
w'[\oneopen,\oneopen] \cdot w[v_1, \oneclose] = w'[\twoopen,\twoopen] \cdot w[v_2, \twoclose].
\]

Furthermore, the only change to the word structure is that $w'[u] = \mathtt{a}$. Therefore, all factors that do not contain $u$ remain unchanged. Since we assume that $\oneopen< \twoopen$, and $u = \oneopen$, it follows that $u \notin [\twoopen,\twoclose]$. Hence
\[
w'[\oneopen,\oneopen] \cdot w'[v_1, \oneclose] = w'[\twoopen,\twoopen] \cdot w'[v_2, \twoclose].
\]

For $\substrsubform{3}$ to hold, $\oneopen \newnextsym v_1$ and $\twoopen \newnextsym v_2$ must hold. Using~\cref{obs:next}, we get:
\[
w'[\oneopen, \oneclose] = w'[\oneopen,\oneopen] \cdot w'[v_1, \oneclose]  \text{ and } w'[\twoopen, \twoclose] = w'[\twoopen,\twoopen] \cdot w'[v_2, \twoclose].
\]

Consequently, we have shown that if $\substrsubform{3}$ holds, then $w'[\oneopen, \oneclose]  = w'[\twoopen, \twoclose]$.

\underline{Case 2.3.} $\oneopen < u < \oneclose$.

For this case, we define 
\begin{multline*}
\substrsubform{4} \df \exists z_1, z_2, z_3, z_4, v \colon \big( \equalsubstr(\oneopen, z_1, \twoopen, z_3) \land \equalsubstr(z_2,\oneclose,z_4,\twoclose ) \land \nextrelation'(z_1, u) \\
\land \nextrelation'(u,z_2) \land \nextrelation'(z_3, v) \land \nextrelation'(v,z_4) \land \symbolrel{\mathtt{a}}(v) \land (u \notin [\twoopen,\twoclose])  \big).
\end{multline*}

We use the shorthand $u \notin [\twoopen,\twoclose]$ to denote that $u < \twoopen$ or $\twoclose < u$. 

Assume that $\substrsubform{4}$ holds. 
We know that $w[\oneopen,z_1] = w[\twoopen,z_3]$, $w[z_2,\oneclose] = w[z_4,\twoclose]$ and $w'[u,u] = w'[v,v]$. 
Therefore, we can write:
\[
w[\oneopen,z_1] \cdot w'[u,u] \cdot w[z_2,\oneclose] = w[\twoopen,z_3] \cdot w'[v,v] \cdot w[z_4,\twoclose].
\]
Since the only change to the word-structure is that $w'[u] = \mathtt{a}$, we know that all factors of the word-structure that do not contain $u$ remain unchanged. 
Thus,
\[
w'[\oneopen,z_1] \cdot w'[u,u] \cdot w'[z_2,\oneclose] = w'[\twoopen,z_3] \cdot w'[v,v] \cdot w'[z_4,\twoclose].
\]
We are assuming that $\substrsubform{4}$ holds. 
Hence, $z_1 \newnextsym u$ and $u \newnextsym z_2$.
It follows that $w'[\oneopen,\oneclose] = w'[\oneopen,z_1] \cdot w'[u,u] \cdot w'[z_2,\oneclose]$ and similarly because $v_3 \newnextsym v$ and $v \newnextsym z_4$ we have that $w'[\twoopen,\twoclose] = w'[\twoopen,z_3] \cdot w'[v,v] \cdot w'[z_4,\twoclose]$. 
This all follows from \cref{obs:next}. 
Consequently, $w'[\oneopen,\oneclose]=w'[\twoopen,\twoclose]$ holds.

\underline{Case 2.4.} $u = \oneclose$ and $\oneopen < \oneclose$.

For this case, we define 
\begin{multline*}
\substrsubform{5} \df \exists v_1 \exists v_2 \colon \big(  \equalsubstr(\oneopen,v_1,\twoopen,v_2) \land \nextrelation'(v_1,u)  \land (u \logeq \oneclose) \land \\  \nextrelation'(v_2,\twoclose) \land  \symbolrel{\mathtt{a}}(\twoclose) \land (u \notin [\twoopen,\twoclose]) \big).
\end{multline*}
We now show that if $\substrsubform{5}$ holds, then $w'[\oneopen,\oneclose] = w'[\twoopen,\twoclose]$. 
Since $w[\oneopen,v_1] = w[\twoopen,v_2]$ and $w'[u,u] = w'[\twoclose,\twoclose]$, we know that:
\[
w[\oneopen,v_1] \cdot w'[u,u]= w[\twoopen,v_2] \cdot w'[\twoclose,\twoclose].
\]

Also since the only difference between the word before the update and after the update is that $w'[u] = \mathtt{a}$, we can write:
\[
w'[\oneopen,v_1] \cdot w'[u,u]= w'[\twoopen,v_2] \cdot w'[\twoclose,\twoclose].
\]

Moreover, we know that $v_1 \newnextsym u$ and that $v_2 \newnextsym \twoclose$, therefore, because of \cref{obs:next}, we can write that $w'[\oneopen,\oneclose] =w'[\oneopen,v_1] \cdot w'[u,u]$ and that $w'[\twoopen,\twoclose] = w'[\twoopen,v_2] \cdot w'[\twoclose,\twoclose]$. Hence, we have shown that if $\substrsubform{5}$ holds, then $w'[\oneopen,\oneclose] = w'[\twoopen,\twoclose]$.

\begin{description}
\item[Case 3.] $u \notin [\oneopen,\oneclose]$ and $u \in [\twoopen,\twoclose]$.
\end{description}

This case is split into four cases analogous to cases 2.1 to 2.4
Due to the fact that the cases are symmetrical, we have skipped the proofs for these cases.
It should be clear from cases 2.1 to 2.4 that we can write the formulas $\substrsubform{i}$ for $i \in \{6,7,8,9\}$ to handle each of these cases.

\begin{description}
\item[Case 4.] $u \in [\oneopen, \oneclose]$ and $u \in [\twoopen, \twoclose]$.
\end{description}
Since $\oneopen < \twoopen$, we have that $u$ appears ``later on'' in $w[\oneopen,\oneclose]$ compared to $w[\twoopen,\twoclose]$. More formally, we have that $\positionSym{w'}{u} - \positionSym{w'}{\twoopen}<\positionSym{w'}{u} - \positionSym{w'}{\oneopen}$.
Therefore, we do not need to consider the case where $u = \twoclose$.
Thus, we can split this case into the following three subcases.
\begin{itemize}
\item Case 4.1. $\oneopen < u < \oneclose$ and $\twoopen < u < \twoclose$,
\item Case 4.2. $u = \oneclose$ and $\twoopen < u < \twoclose$, and
\item Case 4.3. $u = \oneclose$ and $u = \twoopen$.
\end{itemize}
Next, we discuss how each of these subcases can be handled.

\underline{Case 4.1.} $\oneopen < u < \oneclose$ and $\twoopen < u < \twoclose$.
We know that \[\positionSym{w'}{u} - \positionSym{w'}{\twoopen}<\positionSym{w'}{u} - \positionSym{w'}{\oneopen}.\] 
Thus, we consider the following equality:
\begin{align*} 
     & w[\oneopen, v_1] \cdot w'[v_2] \cdot w[v_3,v_4] \cdot w'[u] \cdot w[v_5,\oneclose] \\
=\; & w[\twoopen, v_6] \cdot w'[u] \cdot w[v_7, v_8] \cdot w'[v_9] \cdot w[v_{10}, \twoclose], 
\end{align*}
where $w[\oneopen, v_1], w[v_5,\oneclose] \in \Sigma^+$, since $\oneopen$ and $\oneclose$ are symbol elements, and where $w[v_3,v_4] \in \Sigma^*$.
Similar to previous cases, we can check, using $\equalsubstr$ and $\mathsf{P}_\mathtt{a}$, in $\dyncq$ whether:
\begin{itemize}
\item $w[\oneopen, v_1] = w[\twoopen,v_6]$,
\item $w[v_5, \oneclose] = w[v_{10},\twoclose]$,
\item $w[v_3,v_4] = w[v_7,v_8]$,
\item $w'[v_2] = w'[u]$, and
\item $w'[u] = w'[v_9]$.
\end{itemize}
Since the update ensures that $\mathsf{P}_\mathtt{a}(u)$ holds; and per the definition of the relation that we are maintaining, we have that $w'[u] \neq \emptyword$, $w[\oneopen,v_1] \neq \emptyword$, and $w[v_5,\oneclose] \neq \emptyword$.

Since $u$ is the only position that is changed, it follows that the previous equality implies the following:
\begin{align*}
& w'[\oneopen, v_1] \cdot w'[v_2] \cdot w'[v_3,v_4] \cdot w'[u] \cdot w'[v_5,\oneclose] \\
=\; & w'[\twoopen, v_6] \cdot w'[u] \cdot w'[v_7, v_8] \cdot w'[v_9] \cdot w'[v_{10}, \twoclose].
\end{align*}

We now show that, if these cases hold, then $w'[\oneopen,\oneclose] = w'[\twoopen,\twoclose]$.
Using a shorthand notation, we first consider the case where $w'[v_3,v_4] \neq \emptyword$ by defining:
\begin{multline*}
\substrsubform{10} \df (w[\oneopen, v_1] = w[\twoopen, v_6]) \land (w[v_5, \oneclose] = w[v_{10},\twoclose]) \land (w'[v_2] = w'[u]) \\
\land (w'[u] = w'[v_9]) \land (w[v_3,v_4] = w[v_7,v_8]) \land (v_6 \newnextsym u \newnextsym v_7) \land \\
(v_8 \newnextsym v_9 \newnextsym v_{10}) \land (v_1 \newnextsym v_2 \newnextsym v_3) \land (v_4 \newnextsym u \newnextsym v_5) .
\end{multline*}
Notice that these shorthands can be easily expressed in $\ucq$ using the relations $\equalsubstr$ and $\nextrelation$ that we maintain along with the $\mathsf{P}_\mathtt{a}$ relations for each $\mathtt{a} \in \Sigma$. 

We now handle the case where $w'[v_3,v_4] = w'[v_7, v_8] = \emptyword$.
Consider
\begin{align*}
& w'[\oneopen, v_1] \cdot w'[v_2] \cdot w'[u] \cdot w'[v_5,\oneclose] \\
=\; & w'[\twoopen, v_6] \cdot w'[u] \cdot w'[v_9] \cdot w'[v_{10}, \twoclose].
\end{align*}
This is easily dealt with an extra formula that does not introduce the variables $v_3,v_4,v_7,v_8$, and add the constraint that $v_2 \newnextsym u$ and $u \newnextsym v_9$. 
Consider the following subformula, denoted in shorthand:
\begin{multline*}
\substrsubform{11} \df (w[\oneopen, v_1] = w[\twoopen, v_6]) \land (w[v_5, \oneclose] = w[v_{10},\twoclose]) \land (w'[u] = w'[v_2]) \\
\land (w'[u] = w'[v_9]) \land (v_6 \newnextsym u \newnextsym v_9 \newnextsym v_{10}) \land  (v_1 \newnextsym v_2 \newnextsym u \newnextsym v_5).
\end{multline*}
Therefore, this case is handled by two formulas $\substrsubform{10}$ and $\substrsubform{11}$, which we do not give explicitly, but instead give a shorthand notation.
We assume that the free variables for $\substrsubform{10}$ and $\substrsubform{11}$ are $\oneopen, \oneclose, \twoopen, \twoclose$.
All other variables are existentially quantified.

\underline{Case 4.2.} $u = \oneclose$ and $\twoopen < u < \twoclose$.

This case proceeds analogously to Case 4.1. Consider the following equality:
\begin{align*}
& w[\oneopen, v_1] \cdot w'[v_2] \cdot w[v_3,v_4] \cdot w'[u]  \\
=\; & w[\twoopen, v_5] \cdot w'[u] \cdot w[v_6, v_7] \cdot w'[\twoclose]. 
\end{align*}

Note that $u = \oneclose$, therefore $w'[\oneopen,u] = w'[\oneopen,\oneclose]$.
There is also the case where $w[v_3, v_4] = w[v_6, v_7] = \emptyword$.
Observing previous cases, it is straightforward to construct such formulas $\substrsubform{12}$ and $\substrsubform{13}$ that handle this case.

\underline{Case 4.3.} $u = \oneclose$ and $u = \twoopen$.

This case again proceeds analogously to Case 4.1. Consider the following equality:
\begin{align*}
& w[\oneopen] \cdot w[v_1,v_2] \cdot w'[u]  \\
=\; & w'[u] \cdot w[v_3,v_4] \cdot w[\twoclose]. 
\end{align*}

We also need to handle the case where $w[v_1,v_2] =w[v_3,v_4] = \emptyword$, as previously described.
Therefore, the present case is handled by $\substrsubform{14}$ and $\substrsubform{15}$, which we do not give explicitly.

This concludes Part 1 of this proof.
Since we can exhaustively considered each case, and produced a $\ucq$ formula that handles the case, it follows that if $w'[\oneopen,\oneclose] \neq w'[\twoopen,\twoclose]$, then $(\oneopen,\oneclose,\twoopen,\twoclose) \notin \equalsubstr'$.

\paragraph{Part 2 (reset):}
For this part, we define the following update formula:
\begin{multline*}
\updateformula{\equalsubstr}{\absreset}{u;\oneopen,\oneclose,\twoopen,\twoclose} \df \bigvee\limits_{i=1}^{5}\big( \subreset{i} \big) \land (\oneopen < \twoopen) \land \symel{\oneopen} \land \\ \symel{\oneclose} \land \symel{\twoopen} \land \symel{\twoclose}.
\end{multline*}
Again, $\subreset{i}$ for $i \in [5]$ is to be defined later.

First note that if we perform the update $\reset{u}$ where $u \in \{ \oneopen,\oneclose,\twoopen,\twoclose \}$, then clearly $(\oneopen,\oneclose,\twoopen,\twoclose) \notin \equalsubstr'$.
This is per the definition of $\equalsubstr$, which assumes that each $\oneopen,\oneclose,\twoopen,\twoclose$ is a symbol element.

We consider four cases:
\begin{itemize}
\item Case 1. $u \notin [\oneopen, \oneclose]$ and $u \notin [\twoopen,\twoclose]$.
\item Case 2. $\oneopen < u < \oneclose$ and $u \notin [\twoopen,\twoclose]$.
\item Case 3. $u \notin [\oneopen, \oneclose]$ and $\twoopen < u< \twoclose$.
\item Case 4. $\oneopen < u < \oneclose$ and $\twoopen < u < \twoclose$.
\end{itemize}
For cases 1, 2, and 3, we give a $\ucq$ to handle that case.
Case 4 is slightly more tricky, and therefore, we given two $\ucq$s.

\begin{description}
\item[Case 1.] $u \notin [\oneopen, \oneclose]$ and $u \notin [\twoopen,\twoclose]$.
\end{description}
We define
\[
\subreset{1} \df \equalsubstr(\oneopen,\oneclose,\twoopen,\twoclose) \land \bigl( (u<\oneopen) \lor (u > \oneclose) \bigr) \land \bigl( (u < \twoopen) \lor (u > \twoclose) \bigr).
\]

This subformula states that $\equalsubstr'(\oneopen,\oneclose,\twoopen,\twoclose) \in \equalsubstr$ where $u \notin [\oneopen, \oneclose]$ and $u \notin [\twoopen,\twoclose]$ if and only if $\equalsubstr(\oneopen,\oneclose,\twoopen,\twoclose)$.

\begin{description}
\item[Case 2.] $\oneopen < u < \oneclose$ and $u \notin [\twoopen,\twoclose]$.
\end{description}

For this case, let
\begin{multline*}
\subreset{2} \df \exists z_1, z_2, z_3, z_4 \colon \big( \equalsubstr(\oneopen,z_1,\twoopen, z_3) \land \equalsubstr(z_2,x_c,z_4,y_c) \land \nextrelation(z_1,u) \land \\
\nextrelation(u,z_2) \land  \nextrelation(z_3,z_4) \land (u \notin [\twoopen, \twoclose]) \big).
\end{multline*}

We can see that, if $\subreset{2}$ holds, then $z_1 \nextsym u \nextsym z_2$.
However, since $\unknownupdate = \reset{u}$, it follows that $z_1 \newnextsym z_2$. 
Therefore, $w'[\oneopen,\oneclose] = w'[\oneopen,z_1] \cdot w'[z_2,\oneclose]$ and $w'[\twoopen,\twoclose] = w'[\twoopen,z_3]\cdot w'[z_4,\twoclose]$. 
Hence, $w'[\oneopen,\oneclose] = w'[\twoopen,\twoclose]$.

\begin{description}
\item[Case 3.] $u \notin [\oneopen, \oneclose]$ and $\twoopen < u< \twoclose$.
\end{description}

This case is analogous to Case 2.
Thus, we define $\subreset{3}$ analogously to $\subreset{2}$.

\begin{description}
\item[Case 4.] $\oneopen < u < \oneclose$ and $\twoopen < u < \twoclose$.
\end{description}

For this case, observe the following equality:

\begin{center}
	\begin{tabular}{cccccccccc}
		& $w[\oneopen, v_1]$ & $\emptyword$ & $w[v_2, v_3]$ & $w'[u]$ & $w[v_4,\oneclose]$\\
		$=$ & $w[\twoopen, v_5]$ & $w'[u]$ & $w[v_6, v_7]$ & $\emptyword$ & $w[v_8, \twoclose]$,
	\end{tabular}
\end{center}
where $v_i \in \worddomain$ for $i \in [8]$.

In order to handle $w[v_1+1,v_2-1] = \emptyword$ and $w[v_7 +1, v_8 - 1] = \emptyword$, we simply check $v_1 \newnextsym v_2$ and $v_7 \newnextsym v_8$.
Assuming that 
\[w[\oneopen, v_1] \cdot \emptyword \cdot w[v_2, v_3] \cdot w'[u] \cdot w'[v_4,\oneclose] = w[\oneopen,\oneclose]\]
and 
\[ w[\twoopen, v_5] \cdot w'[u] \cdot w'[v_6, v_7] \cdot \emptyword \cdot w'[v_8, \twoclose] = w[\twoopen,\twoclose],\]
 we have that $w'[\oneopen,\oneclose]  = w'[\twoopen,\twoclose]$. 
We can handle this case using a formula $\subreset{4}$ in $\ucq$. 
Notice that $w[\oneopen, v_1] = \emptyword$ cannot hold, since we assume that $\oneopen$ is a symbol element -- likewise, $w'[v_4,\oneclose] = \emptyword$ cannot hold.

The case where $w[v_2, v_3] = w[v_6, v_7] = \emptyword$ can be easily dealt with analogously. 
Therefore, this case is handled by $\subreset{5}$.

\paragraph{Concluding the proof.}
We have proven that we can maintain the factor equality relation under both insertion and deletion of a single symbol. 
For each of these parts, we have considered each case and given an update formula (or described how one could be constructed). 
While we have only considered the case where $\oneopen < \twoopen$, the other cases follow trivially. 
Furthermore, we given update formulas in existential positive $\fo$.
However, since existential positive $\fo$ is equivalent to $\ucq$s, and $\dyncq = \dynucq$, we have shown that $\equalsubstr$ can be maintained in \dyncq.
Therefore, we conclude the proof.
\end{proof}

\section{Relations in FC[REG] and DynCQ}\label{sec:rel}	
In this section, we examine the comparative expressive power of $\epfcreg$ and \dyncq.  
Consider the equal length relation $R_{\mathsf{len}} \df \{ (w_1,w_2) \mid |w_1| = |w_2| \}$. 
According to Fagin et al. \cite{fag:spa}, this relation is not selectable by core spanners; and therefore, is not expressible in $\epfcreg$.
Moreover, according to Freydenberger and Peterfreund~\cite{frey2019finite}, even generalized core spanners (and hence, $\fcreg$) cannot express this relation. 
We define the equal length relation\index{equal length relation} in the dynamic setting as the following: 
\begin{multline*}
\bar{R}_{\mathsf{len}} \df \bigl\{ (u_1,u_2,v_1,v_1) \in \worddomain^4 \mid   |w[u_1,u_2]| = |w[v_1,v_2]|  \\ \text{ and } w[i] \neq \emptyword \text{ for } i \in \{ u_1,u_2,v_1,v_2\}  \bigr\}.
\end{multline*}

\begin{lemma}\label{eqlen}
The equal length relation $\bar R_\mathsf{len}$ can be maintained in \dyncq.
\end{lemma}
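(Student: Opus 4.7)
The plan is to maintain $\bar R_{\mathsf{len}}$ directly by giving $\ucq$ update formulas (for both $\absins{\mathtt{a}}$ and $\absreset$) that reference $\bar R_{\mathsf{len}}$ itself (pre-update), together with the already-maintained $\nextrelation$, $\firstrel$, $\lastrel$, and the symbol predicates $\symbolrel{\mathtt{a}}$. Since $\dyncq = \dynucq$~\cite{zeu:dyncq}, producing $\ucq$ update formulas suffices, and the crucial ingredient -- the one that lifts us beyond $\fcreg$ -- is that the update formula for $\bar R_{\mathsf{len}}$ is allowed to use $\bar R_{\mathsf{len}}$ on smaller intervals. The construction mirrors the structure of the proof of~\cref{lemma:eqsubstr}, but is simpler because we only need to track counts, not actual factor contents.

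First I would initialize $\bar R_{\mathsf{len}}\df\emptyset$ (vacuously correct on the empty word structure). Then, for the update $\ins{\mathtt{a}}{u}$, I would write the update formula for $\bar R_{\mathsf{len}}$ as a disjunction $\bigvee_i \phi_i$, one disjunct per case:
\begin{itemize}
\item \textbf{Case A} ($u$ outside both $[u_1,u_2]$ and $[v_1,v_2]$): the two lengths are unchanged, so the disjunct is $\bar R_{\mathsf{len}}(u_1,u_2,v_1,v_2)$ conjoined with the positional constraints on $u$ and the symbol-element constraints $\symel{u_1}\land\symel{u_2}\land\symel{v_1}\land\symel{v_2}$.
\item \textbf{Case B} ($u_1<u<u_2$, $u\notin[v_1,v_2]$): the length of $[u_1,u_2]$ grows by one and the length of $[v_1,v_2]$ is unchanged, so after the update the tuple lies in $\bar R_{\mathsf{len}}'$ iff the old length of $[v_1,v_2]$ was one more than the old length of $[u_1,u_2]$. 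This is expressible as $\exists v_m'\colon \bar R_{\mathsf{len}}(u_1,u_2,v_1,v_m')\land \nextrelation(v_m',v_2)$, which says we can chop the final symbol of $[v_1,v_2]$ to obtain a length-matched pair.
\item \textbf{Case C} (symmetric to B, with $u$ inside $[v_1,v_2]$ only).
\item \textbf{Case D} ($u$ strictly inside both intervals): both lengths grow by one, so $\bar R_{\mathsf{len}}(u_1,u_2,v_1,v_2)$ is preserved.
\item \textbf{Boundary cases}, where $u\in\{u_1,u_2,v_1,v_2\}$: here $u$ becomes a new endpoint, and I use $\nextrelation'$ to locate the adjacent old symbol-element (e.g.\ if $u=u_1$, pick $u_1'$ with $\nextrelation'(u,u_1')$, then require $\bar R_{\mathsf{len}}(u_1',u_2,v_1,v_m')$ with $v_m'\nextsym v_2$, or an analogous chop from the other side). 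Each such boundary scenario contributes one further $\cq$ disjunct. Recall that $\nextrelation'$ is shorthand for its own $\ucq$ update formula (\cref{lemma:next}), which keeps the overall formula in $\ucq$.
\end{itemize}

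For the update $\reset{u}$, I would carry out the dual case analysis. If $u$ was outside both intervals, the disjunct is $\bar R_{\mathsf{len}}(u_1,u_2,v_1,v_2)\land(u<u_1\lor u>u_2)\land(u<v_1\lor u>v_2)$. If $u$ was strictly inside exactly one interval, that interval loses a symbol, so I express the required off-by-one via $\exists v_m\colon \bar R_{\mathsf{len}}(u_1,u_2,v_1,v_m)\land \nextrelation(v_m,v_2)$ (extending $[v_1,v_2]$'s matched prefix by one symbol). If $u$ was strictly inside both, both lengths drop by one, preserving $\bar R_{\mathsf{len}}$. Resets landing on $u_1,u_2,v_1$, or $v_2$ destroy the symbol-element condition and the tuple is simply dropped (no disjunct contributes in that situation).

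The main obstacle is, as in~\cref{lemma:eqsubstr}, the enumeration of boundary cases -- in particular those in which $u$ coincides with one of $u_1,u_2,v_1,v_2$ in the post-update structure, and those where one of the chopped subintervals collapses to a single symbol (so the auxiliary witness $v_m'$ may equal $v_1$, or $v_m$ may equal $v_2$). These degenerate subcases must be spelled out as additional disjuncts using $\firstrel$, $\lastrel$, and equalities between variables. Once this bookkeeping is complete, every disjunct is a conjunctive query over previously maintained relations, so the overall update formula lies in $\ucq$, and hence $\bar R_{\mathsf{len}}$ is maintainable in $\dynucq=\dyncq$.
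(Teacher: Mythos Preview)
Your direct construction is a legitimate alternative to the paper's proof, but as written it has one genuine gap.

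The paper does not build the update formulas from scratch. It simply takes the update formulas for $\equalsubstr$ from \cref{lemma:eqsubstr}, replaces every atom $\symbolrel{\mathtt{a}}(v)$ by $\bigvee_{\mathtt{b}\in\Sigma}\symbolrel{\mathtt{b}}(v)$, and replaces $\equalsubstr$ by $\bar R_{\mathsf{len}}$. The point is that the $\equalsubstr$ formulas already split both intervals at $u$ via $\nextrelation'$ and then match the pieces; after the substitution the ``match'' checks only lengths, so the long case analysis of \cref{lemma:eqsubstr} yields $\bar R_{\mathsf{len}}$ for free.

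The gap in your version is that you treat $\ins{\mathtt{a}}{u}$ as always creating a \emph{new} symbol element. In this framework $\ins{\mathtt{a}}{u}$ is also allowed when $w[u]=\mathtt{b}$ for some $\mathtt{b}\neq\mathtt{a}$ (a symbol change), and then no interval length changes. Your Case~B disjunct $\exists v_m'\colon \bar R_{\mathsf{len}}(u_1,u_2,v_1,v_m')\land \nextrelation(v_m',v_2)$, however, fires exactly when the old lengths differ by one. So in the symbol-change scenario with $u_1<u<u_2$ and $u\notin[v_1,v_2]$, your formula both drops tuples that should stay and admits tuples that should not. The shortcut ``chop one symbol off the other interval'' instead of ``split at $u$ via $\nextrelation'$ and match pieces'' is precisely where correctness is lost. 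The fix stays in $\ucq$: either mirror the $\equalsubstr$ Case~2.3 structure (split $[u_1,u_2]$ at $u$ using $\nextrelation'$ and match against a corresponding split of $[v_1,v_2]$, which handles both fresh-insert and symbol-change uniformly), or conjoin Cases~B and~C with $\exists z_1,z_2\colon \nextrelation(z_1,z_2)\land(z_1<u<z_2)$ (forcing $u$ into an old gap) and add a separate symbol-change disjunct $\bigl(\firstrel(u)\lor\exists z\colon\nextrelation(z,u)\bigr)\land \bar R_{\mathsf{len}}(u_1,u_2,v_1,v_2)$.
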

\begin{proof}
To maintain $\bar R_\mathsf{len}$, we take the update formulas from the proof of~\cref{lemma:eqsubstr}, used to maintain the factor equality relation, and replace any atoms of the form $\mathsf{P}_\mathtt{a}(v)$ with the formula $\bigvee_{\mathtt{a} \in \Sigma} \mathsf{P}_\mathtt{a}(v)$.
We also use $\bar{R}_{\mathsf{len}}$ in any update formula, rather than $R_{\mathsf{eq}}$. 
For example, take the update formula for insertion given in Case 2.2 of the proof for~\cref{lemma:eqsubstr}. 
Now consider the alternative update formula to maintain equal length:
\[  \exists v_1 \exists v_2 \colon \big(\bar{R}_{\mathsf{len}}(v_1, \oneclose, v_2, \twoclose) \land \nextrelation'(\oneopen, v_1)  \land \nextrelation'(\twoopen, v_2) \land \bigvee_{\mathtt{a} \in \Sigma}(\symbolrel{\mathtt{a}}(\twoopen)) \land (u \logeq \oneopen) \big). \]
The only exception to replacing $\mathsf{P}_\mathtt{a}(v)$ with the formula $\bigvee_{\mathtt{a} \in \Sigma^*} \mathsf{P}_\mathtt{a}(v)$ is when such $\mathsf{P}_\mathtt{a}(v)$ is a subformula of $\bigvee_{\mathtt{a} \in \Sigma} \mathsf{P}_\mathtt{a}(v)$ already. An example of such a case is used to ensure that $\oneopen,\oneclose,\twoopen,\twoclose$ are all symbol elements.
\end{proof}

While this allows us to separate the languages that are definable by core spanners from the ones that can be maintained in \dyncq, let us now consider more wide-ranging examples:
Proposition~6.7 in~\cite{fre:splog} establishes that none of the following relations are selectable by core spanners (and hence, cannot be defined by an $\epfcreg$-formula), yet we show them to be in \dyncq:

\index{scattered subword}
\begin{proposition}\label{prop:beyondsplog}
The following relations can be maintained in \dyncq, but not definable in $\epfcreg$:
\begin{align*}
	R_{<} &\df \{ (w_1,w_2)\mid |w_1| < |w_2| \}, \\
	R_{\mathsf{rev}} &\df \{ (w_1,w_2) \mid w_2=w_1^R \}, \text{ where $w_1^R$ is the reversal of $w_1$,}\\
	R_{\mathsf{num}(\mathtt{a})} &\df \{ (w_1,w_2) \mid |w_1|_\mathtt{a} = |w_2|_\mathtt{a} \} \text{ for } \mathtt{a} \in \Sigma, \\
	R_{\mathsf{perm}} &\df \{ (w_1,w_2) \mid |w_1|_\mathtt{a} = |w_2|_\mathtt{a} \text{ for all } \mathtt{a} \in \Sigma\},\\
	R_{\mathsf{scatt}} &\df \{ (w_1,w_2) \mid w_1 \text{ is a scattered subword of } w_2 \},
\end{align*}
where $w_1$ is a scattered subword of $w_2$ if, for some $n\geq 1$, there exist $s_1,\ldots,s_n \in \Sigma^*$ and $\bar{s}_0, \ldots,\bar{s}_n\in\Sigma^*$ such that $w_1=s_1\cdots s_n$ and $w_2= \bar{s}_0 s_1 \bar{s}_1 \cdots s_n \bar{s}_n$.
\end{proposition}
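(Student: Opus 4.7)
The non-definability in $\epfcreg$ of each of these relations is exactly the content of Proposition~6.7 of~\cite{fre:splog}, so the work lies entirely in showing that each relation is maintainable in \dyncq. My plan is to reuse the machinery already built (the relations $\nextrelation$, $\firstrel$, $\lastrel$, $\equalsubstr$, and $\bar R_{\mathsf{len}}$) and, where necessary, introduce analogous auxiliary relations tailored to each target relation. Because $\dyncq=\dynucq$~\cite{zeu:dyncq}, it suffices to produce update formulas in existential positive first-order logic.

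First I would handle $R_{<}$ as a light modification of~\cref{eqlen}: set $\bar R_{<}(u_1,u_2,v_1,v_2)$ to hold iff there exists $z\in\worddomain$ with $v_1\leq z<v_2$, $z$ is a symbol element, $\nextrelation(z,z')$ for some $z'\leq v_2$, and $\bar R_{\mathsf{len}}(u_1,u_2,v_1,z)$. This is immediately expressible as a \ucq update formula on top of the maintained $\bar R_{\mathsf{len}}$. Next, for $R_{\mathsf{num}(\mathtt{a})}$, I would introduce, for each $\mathtt{a}\in\Sigma$, the auxiliary relations $\firsta$, $\lasta$, $\nexta$ -- the $\mathtt{a}$-restricted analogues of $\firstrel$, $\lastrel$, $\nextrelation$ (the first/last/successor $\mathtt{a}$-position). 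Their maintenance under $\absins{\mathtt{b}}$ and $\absreset$ is an exact replay of the case analysis in~\cref{lemma:next}, restricted to $\mathtt{a}$-positions; the only delicate point is that inserting an $\mathtt{a}$ at $u$ or resetting an $\mathtt{a}$ at $u$ modifies these relations, while inserting a $\mathtt{b}\neq\mathtt{a}$ over a non-$\mathtt{a}$ leaves them unchanged. Using $\firsta,\lasta,\nexta$, I would then maintain $\numa := \bar R_{\mathsf{num}(\mathtt{a})}$ by re-running the proof of~\cref{eqlen} but over the $\mathtt{a}$-skeleton: the equal-length construction only used $\firstrel,\lastrel,\nextrelation$ and a ``symbol exists'' predicate, so replacing these with $\firsta,\lasta,\nexta$ and $\symbolrel{\mathtt{a}}$ yields the desired \dyncq-program. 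Finally, $R_{\mathsf{perm}}$ follows as $\bigwedge_{\mathtt{a}\in\Sigma}\numa$, which is a conjunction of \ucq-update formulas and hence a \ucq-update formula.

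For $R_{\mathsf{rev}}$ I would maintain the auxiliary relation directly by imitating the structure of~\cref{lemma:eqsubstr}. Define the relation $\bar R_{\mathsf{rev}}$ of $4$-tuples $(u_1,u_2,v_1,v_2)$ of symbol elements with $w[u_1,u_2]=w[v_1,v_2]^R$. The case distinction follows the same pattern as for $\equalsubstr$ under insertion and reset -- whether the updated position $u$ lies inside $[u_1,u_2]$, inside $[v_1,v_2]$, both, or neither -- the difference being that the ``matching'' between the two spans now runs from the left on one side and from the right on the other. Each bridging argument (e.g.\ the picture in \cref{fig:word}) has an exact mirror version: if $u$ sits at position $k$ from the left of $[u_1,u_2]$, it must be mirrored at position $k$ from the right of $[v_1,v_2]$. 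Since the update formulas only need to express ``factors match'' (via $\bar R_{\mathsf{rev}}$ itself and $\symbolrel{\mathtt{a}}$) together with adjacency (via $\nextrelation$), the construction stays within \ucq.

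For $R_{\mathsf{scatt}}$ I would again maintain $\scattrelation := \bar R_{\mathsf{scatt}}$ as a fresh auxiliary relation. The key observation is the recursive characterisation: $w[u_1,u_2] \subwordscatt w[v_1,v_2]$ iff either $u_1=u_2$ and there is some $v\in[v_1,v_2]$ with $w[v]=w[u_1]$, or there exist symbol elements $u_1'$ and $v'$ with $u_1\nextsym u_1'$, $v_1\leq v'\leq v_2$, $w[v']=w[u_1]$, some $v''$ with $v'\nextsym v''\leq v_2$, and $\scattrelation(u_1',u_2,v'',v_2)$. This recursive definition is not a \ucq, but the whole point of maintaining $\scattrelation$ dynamically is that after every update we only need to re-derive $\scattrelation'$ from the previous $\scattrelation$ and from $\nextrelation'$; an insertion at $u$ can only create new scattered-subword witnesses that pass through $u$, and a case analysis on whether $u$ lies in $[u_1,u_2]$, $[v_1,v_2]$, both, or neither (exactly mirroring the structure of~\cref{lemma:eqsubstr}) lets us express $\scattrelation'$ as a \ucq over $\scattrelation$, $\nextrelation'$ and the symbol predicates. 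I would initialise $\scattrelation$ and each $\numa$ to $\emptyset$, which is correct on the empty word.

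The main obstacle is the book-keeping for $R_{\mathsf{rev}}$ and $R_{\mathsf{scatt}}$: their update formulas demand the same Case~1--Case~4 decomposition used for $\equalsubstr$, and the sub-cases for $R_{\mathsf{scatt}}$ in particular are the most delicate, because inserting a symbol can simultaneously create a new matching position on the right and destroy an adjacency on the left. Once the pattern of~\cref{lemma:eqsubstr} is instantiated carefully for each relation, however, every resulting update formula is a disjunction of conjunctions of atoms over the maintained auxiliary relations, and hence lies in \ucq, completing the \dyncq-maintenance proofs.
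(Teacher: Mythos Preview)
Your proposal is correct and follows essentially the same route as the paper: derive $\bar R_<$ from $\bar R_{\mathsf{len}}$, introduce the $\mathtt{a}$-restricted analogues $\firsta,\lasta,\nexta$ and replay the equal-length construction over the $\mathtt{a}$-skeleton for $R_{\mathsf{num}(\mathtt{a})}$, take a conjunction for $R_{\mathsf{perm}}$, and handle $R_{\mathsf{rev}}$ and $R_{\mathsf{scatt}}$ by re-instantiating the case analysis from the $\equalsubstr$ proof with the obvious mirroring or weakening. The one detail you gloss over is that the $\mathtt{a}$-skeleton construction only directly maintains the relation restricted to tuples whose endpoints carry an $\mathtt{a}$; the paper adds a final \ucq wrapper (using $\nexta$ to snap arbitrary symbol-element endpoints to the nearest enclosing $\mathtt{a}$-positions) to recover the full $\bar R_{\mathsf{num}(\mathtt{a})}$, but this is routine.
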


Due to the length of the proof of~\cref{prop:beyondsplog}, and the fact that its consequences are of interest rather than the proof itself, we first discuss extending $\fc$ (or $\epfc$) with $\dynfo$ (or $\dyncq$) constraints as a useful tool before giving the proof of~\cref{prop:beyondsplog} in~\cref{sec:extraRelations}.

One can show that relations like the factor relation, or equality modulo a bounded Levenshtein-distance are all $\epfcreg$-selectable (see Section~5.1 of~\cite{fre:splog}. While~\cite{fre:splog} considers $\splog$ instead of $\epfcreg$, they have equivalent expressive power. See~\cref{theorem:Splog} and~\cref{theorem:frey2019finite}). 
By \cref{lem:splogDynCQ}, we can directly use these relations in constructions for \dyncq-definable languages and \dyncq-selectable relations.
For example, Proposition 5.2 in Fagin et al.~\cite{fag:spa} shows factor inequality can be expressed by core spanners.
This leads us to the following example:
\begin{example}
Let $R_{\neq} \df \{ w_1, w_2 \mid w_1 \neq w_2 \}$. It can be easily shown that $R_{\neq}$ is definable in $\epfcreg$, as demonstrated with the following:
\begin{multline*}
\varphi(w_1,w_2) \df \exists p, s_1, s_2 \colon \bigl(  \bigvee_{\mathtt{a} \in \Sigma} \; \bigvee_{\mathtt{b} \in \Sigma \setminus \{ \mathtt{a} \} } \bigl( (w_1 \logeq p \cdot \mathtt{a} \cdot s_1) \land (w_2 \logeq p \cdot \mathtt{b} \cdot s_2) \bigr) \bigr) \\
\lor \exists y \colon \bigl( (w_1 \logeq w_2 \cdot y) \land (y \regconst \Sigma^+) \bigr) \lor \exists y \colon \bigl( (w_2 \logeq w_1 \cdot y) \land (y \regconst \Sigma^+)\bigr)
\end{multline*}

If $\varphi$ holds, then either:
\begin{itemize}
\item  $w_1$ and $w_2$ have a different symbol at position $|p|+1$, where $p \in \Sigma^*$ is the (possible empty) common-prefix, 
\item $w_1$ is a proper prefix of $w_2$, or
\item $w_2$ is a proper prefix of $w_1$. 
\end{itemize}
Hence, it follows that $w_1 \neq w_2$.
\end{example}

Although one could show directly that the factor inequality relation is \dyncq-selectable, using $\epfcreg$ and \cref{lem:splogDynCQ} can avoid hand-waving.

This can be generalized beyond $\epfcreg$.
We can extend $\epfcreg$ with relation symbols for any \dyncq-maintainable relation, and use the resulting logic for~\dyncq over words. 
Of course, all this applies to $\fcreg$ and \dynfo -- one can define $\fc[\dynfo]$ as the set of $\fc$-formulas extended with constraints that are $\dynfo$-selectable.

\begin{example}\label{example:anbn}
For some input text $w \in \Sigma^*$, let 
\[R_{\mathsf{len}} \df \{ w_1, w_2 \sqsubseteq w \mid |w_1| = |w_2| \}.\]
 We know that $R_{\mathsf{len}}$ is selectable in $\dyncq$, see~\cref{eqlen}. Now consider the following $\epfc[\dyncq]$ formula:

\[\varphi \df \exists x, y \colon (\strucvar \logeq x \cdot y) \land (x \regconst (\mathtt{a})^+) \land (y \regconst (\mathtt{b})^+) \land R_{\mathsf{len}}(x,y). \]

Therefore, we can maintain the language $\{ \mathtt{a}^n \; \mathtt{b}^n \mid n \geq 1 \}$ in $\dyncq$. 
This language is not expressible by generalized core spanners~\cite{frey2019finite}.
\end{example}

While one could write an update formula to prove that the language 
\[ \{ w \in \Sigma^* \mid w = \mathtt{a}^n \; \mathtt{b}^n \text{ and } n \geq 1 \}\]
 can be maintained in $\dyncq$, \cref{example:anbn} illustrates a proof using $\epfc[\dyncq]$. 

In this section, we have given numerous examples of relations that can be maintained in \dyncq, but are not definable by an $\epfcreg$ formula.
Furthermore, we have shown the efficacy of using $\epfc[\dyncq]$ (or $\fc[\dynfo]$ respectively) as a tool to prove the maintainability of relations in \dyncq (or \dynfo).

\subsection{Proof of~\cref{prop:beyondsplog}}\label{sec:extraRelations}

This section is dedicated for the proof of~\cref{prop:beyondsplog}. 
The proof of maintaining many of the relations given in~\cref{prop:beyondsplog} mirrors the proof of~\cref{lemma:eqsubstr}.
Therefore, in certain circumstances, we reference the case distinctions in the proof of~\cref{lemma:eqsubstr} instead of exhaustively considering every case.

The proof of~\cref{prop:beyondsplog} is given as the following lemmas.
That is, the following relations are maintainable in \dyncq
\begin{itemize}
\item \cref{lemma:less}. $R_<$.
\item \cref{lemma:rev}. $R_\mathsf{rev}$.
\item \cref{lemma:numa}. $R_{\mathsf{num}(\mathtt{a})}$.
\item \cref{lemma:perm}. $R_\mathsf{perm}$.
\item \cref{lemma:scatt}. $R_\mathsf{scatt}$.
\end{itemize}

\begin{lemma}\label{lemma:less}
$R_<$ can be maintained in \dyncq.
\end{lemma}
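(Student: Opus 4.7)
The plan is to build $\bar R_<$ on top of two relations already shown to be in \dyncq, namely $\bar R_\mathsf{len}$ (from \cref{eqlen}) and $\nextrelation$ (from \cref{lemma:next}), and then invoke $\dyncq=\dynucq$. Following the conventions of \cref{defn:maintainingRels}, I would work with
\[ \bar R_< \df \{ (u_1,u_2,v_1,v_2) \in \worddomain^4 \mid |w[u_1,u_2]| < |w[v_1,v_2]| \text{ and } w[i] \neq \emptyword \text{ for } i \in \{u_1,u_2,v_1,v_2\}\}. \]
The key combinatorial observation is that $|w[u_1,u_2]| < |w[v_1,v_2]|$ holds precisely when $w[u_1,u_2]$ has the same length as some proper prefix of $w[v_1,v_2]$; that is, when there exists a symbol-element $v'$ with $v_1 \leq v' < v_2$ such that $|w[u_1,u_2]| = |w[v_1,v']|$ and the next symbol-element $v''$ after $v'$ still satisfies $v'' \leq v_2$.

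Concretely, I would set
\[ \updateformula{\bar R_<}{\unknownupdate}{u;u_1,u_2,v_1,v_2} \df \exists v',v'' \colon \bigl( \bar R_\mathsf{len}'(u_1,u_2,v_1,v') \land \nextrelation'(v',v'') \land (v'' \leq v_2) \land \psi_\mathsf{sym} \bigr), \]
where, as in \cref{sec:main}, a primed relation denotes the corresponding update formula after $\unknownupdate$, and $\psi_\mathsf{sym}$ is the conjunction $\symel{u_1} \land \symel{u_2} \land \symel{v_1} \land \symel{v_2}$ that enforces the symbol-element condition. For initialization, $\bar R_<$ starts as $\emptyset$, consistent with the initially empty word-structure.

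The steps, in order, would be: (i) argue correctness of the equivalence above, using $\nextrelation$ to witness that $v'$ is not the $<$-maximal symbol-element inside $[v_1,v_2]$ and hence that $w[v_1,v']$ is a proper prefix of $w[v_1,v_2]$; (ii) observe that $\bar R_\mathsf{len}'$ and $\nextrelation'$ are \ucq update formulas by \cref{eqlen} and \cref{lemma:next}, so that the right-hand side displayed above is an existential positive first-order formula over the auxiliary structure, i.e.\ a \ucq; (iii) conclude, via $\dyncq = \dynucq$ from Zeume and Schwentick~\cite{zeu:dyncq}, that $\bar R_<$ can be maintained in \dyncq when we simultaneously maintain $\bar R_\mathsf{len}$ and $\nextrelation$ as auxiliary relations.

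There is no genuine obstacle; the only point that warrants a line of justification is checking that the ``proper prefix'' condition is faithfully captured by the $\nextrelation'$-witness $v''$ together with $v'' \leq v_2$, and that this witness exists iff $w[v_1,v']$ is strictly shorter than $w[v_1,v_2]$. Everything else is a direct reuse of infrastructure developed in \cref{sec:main}, and essentially mirrors how \cref{eqlen} was obtained from the construction for $\equalsubstr$.
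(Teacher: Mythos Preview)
Your proposal is correct and takes essentially the same approach as the paper: reduce $\bar R_<$ to $\bar R_{\mathsf{len}}$ via a single existential-positive update formula and then invoke $\dyncq=\dynucq$. The only cosmetic difference is that the paper witnesses strict containment inside $[v_1,v_2]$ using just the order relation, whereas you use $\nextrelation$ to certify that the equal-length prefix is proper; both are valid.
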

\begin{proof}
We maintain this relation with the following update formula:
\begin{multline*}
\updateformula{\bar{R}_{<}}{\unknownupdate}{u;u_1,u_2,v_1,v_2} \df  \exists x_1 \exists x_2 \colon \big( R_{\mathsf{len}}'(u_1,u_2,x_1,x_2)  \land\\
(x_1<v_1) \land (v_1\leq v_2) \land (v_2<x_2) \land \bigwedge_{z \in \{u_1,u_2,v_1,v_2\}} ( \bigvee_{\mathtt{a} \in \Sigma} \mathsf{P}_\mathtt{a}(z) )  \big). 
\end{multline*}
Thus, almost immediately from~\cref{eqlen}, we can maintain $\bar R_<$ in \dyncq.
\end{proof}

\begin{lemma}\label{lemma:rev}
$R_{\mathsf{rev}}$ can be maintained in \dyncq.
\end{lemma}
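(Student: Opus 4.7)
The plan is to mirror the construction used in the proof of \cref{lemma:eqsubstr} for $\equalsubstr$, replacing ``forward-forward'' factor matching with ``forward-backward'' matching. The maintained relation is $\bar{R}_{\mathsf{rev}}$ consisting of 4-tuples $(u_1,u_2,v_1,v_2) \in \worddomain^4$ such that $w[u_1,u_2]^R = w[v_1,v_2]$ and $w[z] \neq \emptyword$ for each $z \in \{u_1,u_2,v_1,v_2\}$. The underlying recursive identity is: $w[u_1,u_2]^R = w[v_1,v_2]$ iff $w[u_1] = w[v_2]$ and $w[u_1^+,u_2]^R = w[v_1,v_2^-]$, where $u_1^+$ and $v_2^-$ are the next/previous symbol-elements of $u_1$ and $v_2$ respectively (supplied by $\nextrelation$ from \cref{lemma:next}). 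This identity, together with the $\mathsf{P}_\mathtt{a}$ predicates and $\nextrelation$, will be the only tools needed in each subformula.

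For an insertion update $\unknownupdate = \ins{\mathtt{a}}{u}$, I would proceed by the same case distinction as in \cref{lemma:eqsubstr}, except that in each ``stitching'' subformula the two inner witness variables in the second interval are placed \emph{symmetrically from the right} rather than from the left. Concretely: when $u \notin [u_1,u_2] \cup [v_1,v_2]$, the update formula reduces to $\bar{R}_{\mathsf{rev}}(u_1,u_2,v_1,v_2)$ guarded by positional comparisons. When $u$ lies in exactly one of the intervals (say $u \in [u_1,u_2]$, $u \notin [v_1,v_2]$), we existentially guess the symbol-element neighbours of $u$ in that interval together with the \emph{rightmost} matching boundary in $[v_1,v_2]$; e.g., the analogue of $\substrsubform{3}$ becomes
\[
\exists z_1, z_2 \colon \bigl(\bar{R}_{\mathsf{rev}}(z_1,u_2,v_1,z_2) \land \nextrelation'(u,z_1) \land \nextrelation'(z_2,v_2) \land \symbolrel{\mathtt{a}}(v_2) \land (u \logeq u_1)\bigr),
\]
and similar subformulas are written for each of the subcases 2.1--2.4. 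The case $u \in [v_1,v_2]$ is handled symmetrically, except now the inserted symbol must match $w[u_1]$ (resp.\ $w[u_2]$, or an interior position) rather than a right endpoint.

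The delicate case is Case~4, where $u$ lies in both intervals. Unlike factor equality, where the two intervals are scanned in the same direction so that $u$ appears at symmetric offsets, for reversal a position at offset $k$ from $u_1$ in $[u_1,u_2]$ corresponds to offset $k$ from $v_2$ in $[v_1,v_2]$; likewise, the insertion point $u$ viewed inside $[v_1,v_2]$ corresponds to a \emph{different} position inside $[u_1,u_2]$ (namely the one at the same distance from $u_2$). Thus each Case~4 subformula must simultaneously existentially guess two ``mirror'' witnesses inside the second interval, one paired with $u$ viewed as a position of the first interval and one paired with $u$ viewed as a position of the second interval, and check via $\symbolrel{\mathtt{a}}$ that both mirror witnesses carry the symbol $\mathtt{a}$. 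The subcases $u = u_2$, $u = v_2$, $u = u_1$, $u = v_1$ and various empty-half combinations are written out separately, exactly as $\substrsubform{10}$ and $\substrsubform{11}$ were split in \cref{lemma:eqsubstr}.

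For the reset update $\unknownupdate = \reset{u}$, the analogue of $\subreset{1}$--$\subreset{5}$ is obtained in the same manner: using the pre-update $\nextrelation$ to identify the symbol-element neighbours of $u$ inside each interval that contains it, and rejoining the halves via $\bar{R}_{\mathsf{rev}}$ (again with mirrored matching in the second interval). All resulting update subformulas only use $\bar{R}_{\mathsf{rev}}$, $\nextrelation'$, $\mathsf{P}_\mathtt{a}$, equality, and order comparisons, and are joined by disjunction, so they lie in existential positive first-order logic. Invoking $\dyncq = \dynucq$~\cite{zeu:dyncq}, this yields a \dyncq-program maintaining $\bar{R}_{\mathsf{rev}}$. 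The main obstacle throughout is purely combinatorial bookkeeping: keeping the two directions of traversal consistent in the Case~4 subformulas, and handling all the boundary-coincidence subcases ($u$ equal to an endpoint, or one of the halves being empty); no new maintenance technique beyond those already developed for $\equalsubstr$ and $\nextrelation$ is required.
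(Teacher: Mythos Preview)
Your proposal is correct and takes essentially the same approach as the paper: adapt the $\equalsubstr$ update formulas by replacing $\equalsubstr$ with $\bar{R}_{\mathsf{rev}}$ and mirroring the traversal direction in the second interval (what the paper phrases as ``reversing the ordering of certain indices'' and ``checking $y \nextsym x$ instead of $x \nextsym y$''). Your writeup is in fact considerably more explicit than the paper's proof, which is a three-sentence sketch; your concrete analogue of $\substrsubform{3}$ and your discussion of why Case~4 needs two mirror witnesses are exactly the bookkeeping the paper leaves implicit.
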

\begin{proof}
We can maintain this with a simple variation of the update formula which maintains $\equalsubstr$. 
Whenever $\equalsubstr(\cdot)$ is used as a subformula, one would need to use $R_{\mathsf{rev}}(\cdot)$ instead. 
The slightly more involved aspect of altering the update formulas would be to reverse the ordering of certain indices. 
This can be done by checking~$y \nextsym x$ instead of $x \nextsym y$ where necessary. 
\end{proof}

Next, we consider maintaining $R_{\mathsf{num}(\mathtt{a})}$. 
To maintain this relation, we consider the proof for maintaining the factor equality relation (see~\cref{lemma:eqsubstr}).
Like with maintaining the equal factor relation, we maintain $\firsta$, $\lasta$, and $\nexta$ which are relations that point to the first position that carries an $\mathtt{a}$, the last position that carries an $\mathtt{a}$, and point from one position that carries an $\mathtt{a}$ to the next.
Then, using these relations, we can augment the proof of~\cref{lemma:eqsubstr} to maintain $R_\mathsf{num}(\mathtt{a})$.

\begin{lemma}\label{lemma:numa}
$R_{\mathsf{num}(\mathtt{a})}$ can be maintained in \dyncq.
\end{lemma}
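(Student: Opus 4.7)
The plan is to mirror the proof of \cref{lemma:eqsubstr} while restricting attention to positions carrying the symbol $\mathtt{a}$. First I would introduce three auxiliary relations $\firsta$, $\lasta$, and $\nexta$, which are the $\mathtt{a}$-analogues of $\firstrel$, $\lastrel$, and $\nextrelation$: $\firsta$ contains the leftmost position $x$ with $w[x]=\mathtt{a}$, $\lasta$ the rightmost such position, and $\nexta(x,y)$ holds iff $w[x]=w[y]=\mathtt{a}$ and no position strictly between $x$ and $y$ carries $\mathtt{a}$. Maintaining these in \dyncq follows almost verbatim the argument of \cref{lemma:next}, with the modification that the update $\ins{\mathtt{b}}{u}$ for $\mathtt{b}\neq \mathtt{a}$ behaves (with respect to these relations) like $\reset{u}$ when $w[u]=\mathtt{a}$, and like a no-op otherwise; and the update $\ins{\mathtt{a}}{u}$ behaves like an insertion of ``the only symbol that matters.''

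Next I would maintain the $4$-ary relation
\[\barnuma \df \{(x_1,x_2,y_1,y_2)\in \worddomain^4 \mid |w[x_1,x_2]|_\mathtt{a}=|w[y_1,y_2]|_\mathtt{a},\ w[x_i]=w[y_i]=\mathtt{a}\ \text{for } i\in\{1,2\}\},\]
which plays the role of $\equalsubstr$ in counting $\mathtt{a}$'s. The update formulas for $\barnuma$ are obtained from the update formulas in the proof of \cref{lemma:eqsubstr} by three purely mechanical substitutions: (i) every occurrence of $\nextrelation$ (resp.\ $\firstrel$, $\lastrel$) is replaced by $\nexta$ (resp.\ $\firsta$, $\lasta$); (ii) every atom of the form $\bigvee_{\mathtt{b}\in\Sigma}\mathsf{P}_\mathtt{b}(z)$ guaranteeing that $z$ is a symbol-element is replaced by $\mathsf{P}_\mathtt{a}(z)$; (iii) the primitive equality $w'[u]=w'[v]$ used in Cases~4.1--4.3 is replaced by ``both $u$ and $v$ carry $\mathtt{a}$ (or neither does)''. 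The case analysis (Cases~1--4 of the proof of \cref{lemma:eqsubstr}) then goes through with the same justifications, since concatenating factors and counting $\mathtt{a}$'s respects the case decomposition in exactly the same way as concatenating factors and comparing them.

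Finally, I would assemble the update formula for $\bar{R}_{\mathsf{num}(\mathtt{a})}$ itself. Given $(u_1,u_2,v_1,v_2)$ with $w[u_1],w[u_2],w[v_1],w[v_2]\neq\emptyword$, the formula takes the disjunction of two cases: either both intervals are $\mathtt{a}$-free, which is easily expressible using $\nexta$ and $\firsta/\lasta$; or both intervals contain at least one $\mathtt{a}$, in which case we existentially quantify positions $x_1,x_2,y_1,y_2$ that are the leftmost and rightmost $\mathtt{a}$'s in $[u_1,u_2]$ and $[v_1,v_2]$ respectively (again expressible via $\nexta$), and then assert $\barnuma(x_1,x_2,y_1,y_2)$. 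The mixed case where one interval is $\mathtt{a}$-free and the other is not is excluded by a single conjunct.

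The main obstacle will be being careful with the update distinctions in Step~2. Unlike the $\equalsubstr$ case, where an $\absins{\mathtt{a}}$ update always converts an empty slot into a symbol-element, here the effect of $\absins{\mathtt{b}}$ on the $\mathtt{a}$-count depends on whether $u$ previously carried $\mathtt{a}$: $\ins{\mathtt{b}}{u}$ with $\mathtt{b}\neq \mathtt{a}$ and previous content $\mathtt{a}$ \emph{removes} an $\mathtt{a}$, and must therefore be treated like $\absreset$ in the case analysis, while $\ins{\mathtt{b}}{u}$ with $\mathtt{b}\neq\mathtt{a}$ and previous content in $\Sigma\setminus\{\mathtt{a}\}$ leaves $\barnuma$ unchanged. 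Book-keeping this split (in particular, ensuring that the resulting formulas remain in $\ucq$ and that the ``new $u$ carries $\mathtt{a}$'' and ``new $u$ does not carry $\mathtt{a}$'' branches never interact incorrectly) is the only delicate part of the argument; once it is handled, the rest is a translation of \cref{lemma:eqsubstr}. Invoking $\dyncq=\dynucq$~\cite{zeu:dyncq} as in \cref{lemma:eqsubstr} then yields the claim.
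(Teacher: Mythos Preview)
Your proposal is correct and matches the paper's approach essentially step for step: the paper likewise introduces $\firsta$, $\lasta$, $\nexta$, maintains the restricted relation $\barnuma$ by mechanically substituting these into the update formulas from the proof of \cref{lemma:eqsubstr}, treats $\absins{\mathtt{b}}$ for $\mathtt{b}\neq\mathtt{a}$ identically to $\absreset$, and then lifts $\barnuma$ to $\bar{R}_{\mathsf{num}(\mathtt{a})}$ via $\nexta$. Your explicit handling of the $\mathtt{a}$-free interval case in the final assembly is in fact slightly more careful than the paper's own formula, which glosses over that edge case.
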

\begin{proof}
To maintain $\numa$, we first maintain the following relations:
\begin{itemize}
\item $\firsta \df \{ i \in \worddomain \mid w[i] = \mathtt{a} \text{ and for all } j < i, w[j] \neq \mathtt{a} \}$,
\item $\lasta \df \{ i \in \worddomain \mid w[i] = \mathtt{a} \text{ and for all } j > i, w[j] \neq \mathtt{a} \}$,
\item $\nexta \df \{ (i,j) \in \worddomain^2 \mid w[i] =w[j] = \mathtt{a} \text{ where } i < j \text{ and } w[k] \neq \mathtt{a} \text{ for each } k \in \{i+1,\dots,j-1\}\}$.
\end{itemize}

Let $\firsta$ be initialized to $\{ \$ \}$.
To maintain $\firsta$ under $\ins{\mathtt{a}}{u}$, we use 
\[ \updateformula{\firsta}{\absins{\mathtt{a}}}{u;x} \df \bigl( \firsta(x) \land (u > x) \bigr) \lor \bigl( (u \logeq x) \land \exists v \colon (\firsta(v) \land (u < v) ) \bigr). \]

To maintain $\firsta$ under $\ins{\mathtt{b}}{u}$, for all $\mathtt{b} \in \Sigma \setminus \{ \mathtt{a} \}$, we use 
\begin{multline*}
\updateformula{\firsta}{\absins{\mathtt{b}}}{u;x} \df  \bigl( \firsta(x) \land ((u<x) \lor (u>x))\bigr) \lor \bigl( \firsta(u) \land \nexta(u,x) \bigr) \lor \\
 \bigl( \firsta(u) \land \lasta(u) \land (x \logeq \$) \bigr). 
 \end{multline*}

Now consider maintaining $\firsta$ under the reset operation.
\begin{multline*}
\updateformula{\firsta}{\absreset}{u;x} \df \bigl( \firsta(x) \land ( (u>x) \lor (u<x) ) \bigr) \lor \bigl( \firsta(u) \land \nexta(u,x) \bigr) \lor \\ \bigl( \firsta(u) \land \lasta(u) \land (x \logeq \$) \bigr). 
\end{multline*}

Maintaining $\lasta$ is analogous to maintaining $\firsta$. 
Let $\lasta$ be initialized to~$\{ 1 \}$. 
First, consider the insertion of an $\mathtt{a}$ at position $u \in \worddomain$.
\[ \updateformula{\lasta}{\absins{\mathtt{a}}}{u;x} \df \bigl( \lasta(x) \land (u < x) \bigr) \lor \bigl( (u \logeq x) \land \exists v \colon (\lasta(v) \land (u > v) ) \bigr). \]

To maintain $\lasta$ under the $\ins{\mathtt{b}}{u}$ operation, for all $\mathtt{b} \in \Sigma \setminus \{ \mathtt{a} \}$, we use the following:
\begin{multline*}
\updateformula{\lasta}{\absins{\mathtt{b}}}{u;x} \df  \bigl( \lasta(x) \land ((u > x) \lor (u < x))\bigr) \lor \bigl( \firsta(u) \land \nexta(x,u) \bigr) \lor \\ \bigl( \lasta(u) \land \firsta(u) \land (x \logeq 1) \bigr). 
\end{multline*}

To maintain $\lasta$ under the reset operation, we define 
\begin{multline*}
\updateformula{\lasta}{\absreset}{u;x} \df \bigl( \lasta(x) \land ((u > x) \lor (u < x))\bigr) \lor \bigl( \firsta(u) \land \nexta(x,u) \bigr) \lor \\ \bigl( \lasta(u) \land \firsta(u) \land (x \logeq 1) \bigr). 
\end{multline*}

Next, we look at maintaining $\nexta$. 
We initialize $\nexta$ to $\emptyset$.
To maintain $\nexta$ under the operation $\absins{\mathtt{a}}$, we define 
\begin{multline*}
\updateformula{\nexta}{\absins{\mathtt{a}}}{u;x,y} \df \bigl( \nexta(x,y) \land ((u< x) \lor (u < y)) \bigr) \lor \\
\exists v \colon \bigl( \nexta(x,v) \land (x < u < v) \land (u \logeq y) \bigr) \lor \\
\exists v \colon \bigl( \nexta(x,v) \land (x < u < v) \bigr) \lor \bigl( \lasta(x) \land (x<y) \land (u \logeq y) \bigr) \lor \\
\bigl( \firsta(y) \land (x<y) \land (u \logeq x) \bigr).  
\end{multline*} 
Now, to maintain $\nexta$ under $\absins{\mathtt{b}}$, where $\mathtt{b} \in \Sigma \setminus \{\mathtt{a} \}$, let 
\begin{multline*}
\updateformula{\nexta}{\absins{\mathtt{b}}}{u;x,y} \df \bigl(  \nexta(x,y) \land ( ((u<x) \lor (u>y)) \lor ((x<u) \land (u<y) ) ) \bigr) \lor \\ \bigl( \nexta(x,u) \land \nexta(u,y) \bigr). 
\end{multline*}

In regards to maintaining $\nexta$, inserting the symbol $\mathtt{b}$, for some $\mathtt{b} \in \Sigma \setminus \{\mathtt{a} \}$, and performing a reset operation at position $u$ is essentially the same operation. Therefore, in order to deal with the reset operation, we let $\updateformula{\nexta}{\absins{\mathtt{b}}}{u;x,y} = \updateformula{\nexta}{\absreset}{u;x,y}$.

Now we consider maintaining $\numa$ under the insertion of $\mathtt{a}$ at position $u$. 
Before maintaining $\numa$ under $\absins{\mathtt{a}}$, we consider a restricted version of the relation which we denote with $\barnuma$.
If $(\oneopen,\oneclose,\twoopen,\twoclose) \in \barnuma$, then $w[\oneopen,\oneclose]_{\mathtt{a}} = w[\twoopen,\twoclose]_{\mathtt{a}}$, and $w[i] = \mathtt{a}$ for each $i \in \{ \oneopen, \oneclose, \twoopen, \twoclose \}$.
Since we have the relations $\firsta$, $\lasta$, and $\nexta$, we can utilize the proof of~\cref{lemma:eqsubstr} with minor changes.
Consider $\updateformula{\absins{\mathtt{a}}}{\equalsubstr}{u;\oneopen,\oneclose,\twoopen,\twoclose}$ as defined in~\cref{lemma:eqsubstr}.

\paragraph{Maintaining $\bar R_{\mathsf{num}(\mathtt{a})}$.}
To define $\updateformula{\absins{\mathtt{a}}}{\barnuma}{u;\oneopen,\oneclose,\twoopen,\twoclose}$, we replace each occurrence of $\equalsubstr(\cdot)$, $\firstrel(\cdot)$, $\lastrel(\cdot)$, and $\nextrelation(\cdot)$ with $\barnuma(\cdot)$, $\firsta(\cdot)$, $\lasta(\cdot)$, and $\nexta(\cdot)$ respectively.

To maintain $\barnuma$ under the $\absreset$ operation, we consider the same procedure of replacing atoms as we did for $\updateformula{\absins{\mathtt{a}}}{\barnuma}{u;\oneopen,\oneclose,\twoopen,\twoclose}$.	
Note that the update formula for $\barnuma$ under the $\absins{\mathtt{b}}$, where $\mathtt{b} \in \Sigma \setminus \{ \mathtt{a} \}$ is the same as the update formula for $\absreset$. 

To generalize from maintaining $\barnuma$ to $\numa$, we use 
\begin{multline*}
\updateformula{\numa}{\unknownupdate}{u;u_1,u_2,v_1,v_2} \df \exists \bar{x}_o, \bar{x}_c, \bar{y}_o, \bar{y}_c \colon \bigl( \nexta(\bar{x}_o, \oneopen) \land \nexta(\oneclose, \bar{x}_c) \land \nexta(\bar{y}_o, \twoopen) \land \\
\nexta(\twoclose, \bar{y}_c) \land (\bar{x}_o < u_1 \leq \oneopen) \land  (\oneclose < u_2 \leq \bar{x}_c) \land (\bar{y}_o < v_1 \leq \twoopen) \land \\
(\twoclose \leq v_2 < \bar{y}_c) \land \updateformula{\barnuma}{\unknownupdate}{u;u_1,u_2,v_1,v_2} \bigr) 
\end{multline*}

For intuition, we are stating that $(u_1,u_2,v_1,v_2) \in \numa'$ if $(\oneopen,\oneclose,\twoopen,\twoclose) \in \barnuma$, and there does not exist an element that holds the symbol $\mathtt{a}$ between $u_1$ and $\oneopen$, $\oneclose$ and $u_2$, $v_1$ and $\twoopen$, or $\twoclose$ and $v_2$. 
\end{proof}

Since $R_\mathsf{num}(\mathtt{a})$ for any $\mathtt{a} \in \Sigma$ is \dyncq-seletable, we can immediately determine that $R_\mathsf{perm}$ is \dyncq-selectable:
\begin{lemma}\label{lemma:perm}
$R_\mathsf{perm}$ can be maintained in \dyncq.
\end{lemma}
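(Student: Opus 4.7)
The plan is to reduce $R_\mathsf{perm}$ to a conjunction of the relations $R_{\mathsf{num}(\mathtt{a})}$ that we have already shown (in Lemma~\ref{lemma:numa}) to be maintainable in \dyncq. By definition, $(w_1,w_2) \in R_\mathsf{perm}$ if and only if $|w_1|_\mathtt{a} = |w_2|_\mathtt{a}$ for every $\mathtt{a} \in \Sigma$, so in the dynamic setting
\[
\bar R_\mathsf{perm} = \bigcap_{\mathtt{a} \in \Sigma} \bar R_{\mathsf{num}(\mathtt{a})},
\]
where the intersection is taken over the same $2k$-ary tuples $(u_1,u_2,v_1,v_2)$. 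Since $\Sigma$ is finite and fixed, this is a finite intersection.

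Concretely, I would maintain all of the auxiliary relations already used in the proof of Lemma~\ref{lemma:numa} -- namely $\firsta$, $\lasta$, $\nexta$, and $\barnuma$ for every $\mathtt{a} \in \Sigma$ -- each via the $\ucq$-update formulas given there. For the designated relation $\bar R_\mathsf{perm}$, and for every abstract update $\unknownupdate \in \Delta$, I would define
\[
\updateformula{\bar R_\mathsf{perm}}{\unknownupdate}{u; u_1, u_2, v_1, v_2} \;\df\; \bigwedge_{\mathtt{a} \in \Sigma} \updateformula{\bar R_{\mathsf{num}(\mathtt{a})}}{\unknownupdate}{u; u_1, u_2, v_1, v_2}.
\]
Each conjunct is a $\ucq$ (existential positive first-order) formula by Lemma~\ref{lemma:numa}. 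A finite conjunction of $\ucq$s is again equivalent to a $\ucq$ (after distributing $\land$ over $\lor$ and renaming quantified variables so that the conjuncts share no bound variables), so the update formula itself is a $\ucq$. Together with the auxiliary relations, this yields a \dynucq program maintaining $\bar R_\mathsf{perm}$. Invoking $\dynucq = \dyncq$ from Zeume and Schwentick~\cite{zeu:dyncq}, as was done throughout this section, we conclude that $R_\mathsf{perm}$ is maintainable in \dyncq.

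There is no real obstacle here; the only point worth handling carefully is the syntactic blow-up when distributing the outer conjunction across the disjunctions appearing in each $\updateformula{\bar R_{\mathsf{num}(\mathtt{a})}}{\unknownupdate}{\cdot}$, and the standard renaming of existentially quantified variables so that the sets of bound variables from distinct conjuncts are disjoint. Since $|\Sigma|$ is a fixed constant with respect to the dynamic complexity framework, this expansion yields a formula of constant size (in the query sense) and thus remains within $\ucq$, and consequently within \dyncq.
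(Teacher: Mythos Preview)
Your proposal is correct and follows exactly the paper's approach: the update formula for $\bar R_\mathsf{perm}$ is simply the conjunction $\bigwedge_{\mathtt{a}\in\Sigma}\updateformula{\numa}{\unknownupdate}{u;u_1,u_2,v_1,v_2}$, relying on \cref{lemma:numa}. Your additional remarks on distributing conjunction over disjunction and invoking $\dynucq=\dyncq$ are sound elaborations that the paper leaves implicit.
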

\begin{proof}
We maintain this relation with
\[\updateformula{\bar{R}_{\mathsf{perm}}}{\unknownupdate}{u;u_1,u_2,v_1,v_2} \df \bigwedge\limits_{\mathtt{a} \in \Sigma} \updateformula{\numa}{\unknownupdate}{u;u_1,u_2,v_1,v_2}, \]
for each $\unknownupdate \in \Delta$.
From~\cref{lemma:numa}, $\numa$ can be maintained in \dyncq.
\end{proof}

\begin{lemma}\label{lemma:scatt}
$R_\mathsf{scatt}$ can be maintained in \dyncq.
\end{lemma}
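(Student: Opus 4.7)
I would prove this by mirroring the case-analytic structure used for $\equalsubstr$ in \cref{lemma:eqsubstr}, exploiting a \emph{left--right splittability} of the scattered-subword relation: for any decomposition $w_2 = w_2' \cdot w_2''$, we have $w_1 \subwordscatt w_2$ iff $w_1$ admits a split $w_1 = w_1' \cdot w_1''$ with $w_1' \subwordscatt w_2'$ and $w_1'' \subwordscatt w_2''$. The dynamic program maintains a 4-ary relation $\scattrelation$ of tuples $(u_1, u_2, v_1, v_2)$ such that each of $u_1, u_2, v_1, v_2$ is a symbol-element and $w[u_1, u_2] \subwordscatt w[v_1, v_2]$, together with the auxiliary relations $\nextrelation$, $\firstrel$, and $\lastrel$ from \cref{lemma:next}. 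The initial value of $\scattrelation$ is $\emptyset$.

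For the insertion update $\unknownupdate = \ins{\mathtt{a}}{u}$, I would split into cases according to whether $u$ lies in $[u_1, u_2]$, in $[v_1, v_2]$, in both, or in neither, together with sub-cases for when $u$ coincides with an endpoint. The key observation is that since $u$ is the only position whose symbol changes, any scattered embedding of $w'[u_1, u_2]$ into $w'[v_1, v_2]$ either (a) avoids $u$ in the target, in which case it is essentially an embedding of $w[u_1, u_2]$ into $w[v_1, v_2]$ (with minor adjustments if $u \in [u_1, u_2]$), or (b) uses $u$ as the image of some $\mathtt{a}$-position in the source. Using the splittability property at $u$, case (a) reduces to existing membership of sub-tuples in $\scattrelation$, while case (b) reduces to the existence of a split position $p \in [u_1, u_2]$ with $w'[p] = \mathtt{a}$ such that the pre-$p$ and post-$p$ parts of $w'[u_1, u_2]$ embed into the pre-$u$ and post-$u$ parts of $w'[v_1, v_2]$ respectively; the pre-/post-$u$ parts of $w'[v_1, v_2]$ are delimited by the symbol-neighbours of $u$ obtained via $\nextrelation$. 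Each case yields a conjunctive query built from $\scattrelation$, $\nextrelation$, $\firstrel$, $\lastrel$, and the $\symbolrel{\mathtt{a}}$ predicates; their disjunction forms $\updateformula{\scattrelation}{\absins{\mathtt{a}}}{u; u_1, u_2, v_1, v_2}$, which is in existential positive first-order logic and hence expressible in \dyncq via $\dyncq = \dynucq$.

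For the reset update $\unknownupdate = \reset{u}$, the analysis is symmetric and somewhat simpler: when $u \notin [u_1, u_2]$ and $u \notin [v_1, v_2]$ the tuple is unchanged; when $u \in (v_1, v_2)$ we must check that some embedding survives after removing the symbol at $u$, which by splittability at the symbol-neighbours of $u$ is expressible via membership of a ``left'' and a ``right'' sub-tuple in $\scattrelation$; when $u \in (u_1, u_2)$ we need the pre- and post-$u$ parts of $w[u_1, u_2]$ to collectively embed into $w[v_1, v_2]$, which we capture by existentially quantifying a symbol-element of $[v_1, v_2]$ at which to split $w[v_1, v_2]$. Boundary subcases in which $u$ equals one of $u_1, u_2, v_1, v_2$ simply expel the tuple from $\scattrelation$, since the endpoints must remain symbol-elements.

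The main obstacle I foresee is carefully handling the boundary subcases when $u$ coincides with one of $u_1, u_2, v_1, v_2$, and ensuring that each split point introduced by an existential quantifier is itself a symbol-element, so that the recursively referenced sub-tuples are legitimate members of $\scattrelation$ before the update. This is analogous to, but slightly less intricate than, the fifteen-subcase analysis for $\equalsubstr$; correctness in each case reduces to algebraic manipulation of factors using \cref{obs:next} and its natural update-time analogue to concatenate and split the words $w[i, j]$ around the position $u$.
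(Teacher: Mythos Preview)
Your proposal is correct and follows essentially the same approach as the paper. Both proofs maintain $\scattrelation$ via a case analysis on the position of the updated element $u$ relative to the two intervals, decompose each case into sub-tuples of the old $\scattrelation$ using the left--right splittability of $\subwordscatt$ at $u$ (together with $\nextrelation'$ to locate the symbol-neighbours of $u$), and assemble the resulting conjunctive queries into a $\ucq$ update formula; you state the splittability property explicitly as an organizing principle, whereas the paper uses it implicitly while writing out the concrete formulas for each sub-case.
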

\begin{proof}
Let $u \subwordscatt v$ denote that $u \in \Sigma^*$ is a scattered subword of $v \in \Sigma^*$.
To prove that we can maintain the scattered subword relation in \dyncq, for both insertion and reset operations, we look at a case distinction, and give a formula to handle each case. 
The update formula is the disjunction of all the given formulas. 
Formally, the relation we maintain is:
\begin{multline*}
\bar R_{\mathsf{scatt}} \df \{ (u_1,u_2,v_1,v_2) \in \worddomain^4 \mid w'[u_1,u_2] \sqsubseteq_{\mathsf{scatt}} w'[v_1,v_2] \text{ and } \\
w'[z] \neq \emptyword \text{ for each } z \in \{ u_1,u_2,v_1,v_2\} \}. 
\end{multline*}

\subparagraph{Part 1 (insertion).}
Consider the insertion of the symbol $\mathtt{a} \in \Sigma$ at position $u \in \worddomain$.
First, we handle the ``base case'' with the following formula. 
\[ \varphi_1 \df \exists z \colon \bigl( \mathsf{P}_\mathtt{a}(u_1) \land (u_1 \logeq u_2) \land (v_1 \leq z) \land (z \leq v_2) \land \mathsf{P}_\mathtt{a}(z) \bigr). \]

For $\varphi_1$ to hold, we have that $w'[u_1,u_2] = \mathtt{a}$, and $w'[v_1,v_2] = \bar{s}_0 \cdot \mathtt{a} \cdot \bar{s}_1$ for some $\bar{s}_0, \bar{s}_1 \in \Sigma^*$. 
We can therefore see that this is the correct behaviour for this case.

For the inductive step, we have four cases. 
The cases we explore are as follows:
\begin{enumerate}
\item $u \notin [u_1,u_2]$ and $u \notin [v_1,v_2]$,
\item $u \in [u_1,u_2]$ and $v \notin [v_1,v_2]$,
\item $u \notin [u_1,u_2]$ and $v \in [v_1,v_2]$, and
\item $u \in [u_1,u_2]$ and $v \in [v_1,v_2]$.
\end{enumerate}

Cases 2, 3, and 4 are split into further sub-cases. 

\begin{description}
\item[Case 1.] $u \notin [u_1,u_2]$ and $u \notin [v_1,v_2]$:
\end{description}
If $w[u_1,u_2] \subwordscatt w[v_1,v_2]$ and $u$ is outside of the interval $[u_1,u_2]$ and $[v_1,v_2]$, then $w'[u_1,u_2] \subwordscatt w'[v_1,v_2]$.
Therefore, $R'_{\mathsf{scatt}}(u_1,u_2,v_1,v_2)$ should hold. 
This is handled by
\[ \varphi_2 \df \scattrelation(u_1,u_2,v_1,v_2) \land (u \notin [u_1,u_2]) \land (u \notin [v_1,v_2]). \]

We use the shorthand $u \notin [u_1,u_2]$ for $(u < u_1) \lor (u_2 < u)$.

If $w[u_1,u_2] \subwordscatt w[v_1,v_2]$ does not hold, and $u \notin [u_1,u_2]$ and $u \notin [v_1,v_2]$, then $w'[u_1,u_2] \subwordscatt w'[v_1,v_2]$ also does not hold.

We now move on to the next case.

\begin{description}
\item[Case 2.] $u \in [u_1,u_2]$ and $u \notin [v_1,v_2]$:
\end{description}
We split this case into three subcases:
\begin{itemize}
\item Case 2.1. $u_1 = u$.
\item Case 2.2. $u_1 < u < u_2$.
\item Case 2.3. $u_2 = u$.
\end{itemize}
Notice that $\varphi_1$ handles the case where $u = u_1 = u_2$ and therefore we can safely assume that $u_1 < u_2$.

\underline{Case 2.1.} $u_1 = u$:

If $w[x_1,u_2] \subwordscatt w[x_2,v_2]$, where $u_1 = u$, $u \newnextsym x_1$, and there is some $v$ such that $v_1 \leq v < x_2$ and $w'(u) = w'(v)$, then $w'[u_1,u_2] \subwordscatt w'[v_1,v_2]$.
That is, as long as $u \notin [v_1,v_2]$. 
We handle this case with the following formula:

\begin{multline*}
 \varphi_3 \df  \exists x_1, x_2, v \colon \bigl( \scattrelation(x_1, u_2, x_2,v_2) \land (v_1 \leq v < x_2) \land \mathsf{P}_\mathtt{a}(v) \land \nextrelation'(u,x_1) \land \\
 (u \notin [v_1,v_2]) \land (u \logeq u_1) \bigr).  
 \end{multline*}

\underline{Case 2.2.} $u_1 < u < u_2$:

For this case, we have that $w[u_1,u_2] \subwordscatt w[v_1,v_2]$ if 
\begin{itemize}
\item $w[u_1,\bar{x}_1] \subwordscatt w[v_1,x_1]$,
\item $w[\bar{x}_2,u_2] \subwordscatt w[x_2,v_2]$ for some $\bar x_1, \bar x_2 \in \worddomain$, 
\item $\bar{x}_1 \newnextsym u \newnextsym \bar{x}_2$, and 
\item $w'[u] \subwordscatt w[x_1+1,x_2-1]$. That is, there exists some $v$ where $x_1 < v < x_2$ and $w'[v]=w'[u]$. 
\end{itemize}
This can be thought of as being similar to the case illustrated in~\cref{fig:word} (generalized to scattered subwords, rather than equal factors).
We handle this case with 

\begin{multline*}
\varphi_4 \df \exists x_1, x_2, \bar{x}_1, \bar{x}_2, y \colon \bigl( (u \notin [v_1,v_2]) \land (u_1 < u < u_2) \land \scattrelation(u_1, \bar{x}_1, v_1, x_1) \land \\ \scattrelation(\bar{x}_2, u_2, x_2, v_2) \land  \nextrelation'(\bar{x}_1, u) \land \nextrelation'(u,\bar{x}_2) \land (x_1 < y < x_2) \land \mathsf{P}_\mathtt{a}(y) \bigr). 
\end{multline*}

\underline{Case 2.3.} $u_2 = u$:

This case has analogous reasoning to Case 2.1. We handle this case with 
\begin{multline*}
 \varphi_5 \df  \exists x_1, x_2, v \colon \bigl(  \scattrelation(u_1,x_1,v_1,x_2) \land (x_2 < v \leq v_2) \land \mathsf{P}_\mathtt{a}(v) \land \nextrelation'(x_1,u) \land \\ (u \notin [v_1,v_2]) \land (u \logeq u_2) \bigr). 
 \end{multline*}

This concludes Case 2.

\begin{description}
\item[Case 3.] $u \notin [u_1,u_2]$ and $u \in [v_1, v_2]$:
\end{description}

Similarly to Case 2, this case is split into three further subcases: 
\begin{itemize}
\item Case 3.1. $v_1 = u$, 
\item Case 3.2. $v_1 < u < v_2$, and 
\item Case 3.3. $v_2 = u$.
\end{itemize}
The case where $u = v_1 = v_2$ is dealt with by~$\varphi_1$.
Consider $\varphi_1$ for the case where $v_1 = z = v_2 = u$.
Thus, we do not consider the case where $v_1 = v_2 = u$.

\underline{Case 3.1.} $v_1 = u$:

We handle this case with the following formula:
\begin{multline*}
\varphi_6 \df \exists x,y,z \colon \bigl(  \scattrelation(x, u_2, y, v_2) \land \nextrelation'(u_1,x) \land (u \logeq v_1) \land (v_1 \leq z < y) \land \\ \bigvee_{\mathtt{b} \in \Sigma} ( \mathsf{P}_{\mathtt{b}}(z) \land \mathsf{P}_{\mathtt{b}}(u_1) ) \land (u \notin [u_1,u_2])  \bigr). 
\end{multline*}

\underline{Case 3.2.} $v_1 < u < v_2$:

We realize this case with
\begin{multline*}
\varphi_8 \df \exists x_1, x_2, \dots, x_6, y \colon \bigl( \scattrelation(u_1, x_1, v_1, x_4) \land \scattrelation(x_3,u_2,x_5,v_2) \land \\ \nextrelation(x_1,x_2) \land \nextrelation(x_2,x_3) 
\land \bigvee_{\mathtt{b} \in \Sigma} \bigl( \mathsf{P}_{\mathtt{b}}(x_2)  \land \mathsf{P}_{\mathtt{b}}(y) \bigr) \land \\ (x_4 < y < x_5) \land (x_4 < u < x_5) \land (u \notin [u_1,u_2])\bigr).
\end{multline*}

Since this case is slightly more involved than previous cases, we give more explanation regarding $\varphi_8$.
If $\varphi_8$ holds, then the following hold:
\begin{itemize}
\item $w[u_1, x_1]$ is a scattered subword of $w[v_1,x_4]$,
\item $w[x_3, u_2]$ is a scattered subword of $w[x_5,v_2]$,
\item $w'[x_2]$ is a scattered subword of $w'[x_4+1,x_5-1]$,
\item $x_1 \nextsym x_2 \nextsym x_3$, and
\item $u \notin [u_1,u_2]$.
\end{itemize}

Since $x_1 \nextsym x_2 \nextsym x_3$, it follows that $w[u_1,u_2] = w[u_1,x_1] \cdot w[x_2] \cdot w[x_3,u_2]$. 
Furthermore, $w'[v_1,v_2] = w'[v_1,x_4] \cdot w'[x_4+1,x_5-1] \cdot w'[x_5,v_2]$.
Therefore, $w'[u_1,u_2] \subwordscatt w'[v_1,v_2]$.

\underline{Case 3.3.} $v_2 = u$:

This is analogous to Case 3.1.
Consider 
\begin{multline*}
\varphi_6 \df \exists x,y,z \colon \bigl(  \scattrelation(u_1, x, v_1, y) \land \nextrelation'(x,u_2) \land (u \logeq v_2) \land (y < z \leq v_2) \land \\ \bigvee_{\mathtt{b} \in \Sigma} ( \mathsf{P}_{\mathtt{b}}(y) \land \mathsf{P}_{\mathtt{b}}(u_2) ) \land (u \notin [u_1,u_2])  \bigr). 
\end{multline*}

\begin{description}
\item[Case 4.] $u \in [u_1,u_2]$ and $u \in [v_1,v_2]$:
\end{description}
This is the most involved case, due to the fact that the intervals $[u_1,u_2]$ and $[v_1,v_2]$ ``overlap''. 
However, due to the following, we can simplify our case distinction.

First consider following straightforward case:
If $v_1 \leq u_1$ and $u_2 \leq v_2$, then $w'[u_1,u_2] \subwordscatt w'[v_1,v_2]$, since $w'[u_1,u_2] \sqsubseteq w'[v_1,v_2]$. This is realized with 
\[ \varphi_{11} \df (v_1 \leq u_1) \land (u_2 \leq v_2).\]

Furthermore, if $u_1 < v_1$ and $v_2 < u_2$, then $w'[u_1,u_2]$ cannot be a scattered subword of $w'[v_1,v_2]$, because then $w'[v_1,v_2] \sqsubset w'[u_1,u_2]$; recall that each of $u_1, u_2, v_1, v_2$ are symbol elements.

Since we are in the case where $u \in [u_1,u_2]$ and $u \in [v_1,v_2]$, this limits us to examine the following sub-cases:
\begin{itemize}
\item Case 4.1. $v_1 < u_1$ and $v_2 < u_2$, and
\item Case 4.2. $u_1 < v_1$ and $u_2 < v_2$.
\end{itemize}

We now give a proof for each of these cases.

\underline{Case 4.1} $v_1 < u_1$ and $v_2 < u_2$:

Notice that if $v_1 < u_1$ and $v_2 < u_2$ where $u_1 < u < u_2$ and $v_1 < u < v_2$, then $\positionSym{w'}{u} - \positionSym{w'}{u_1} < \positionSym{w'}{u} - \positionSym{w'}{v_1}$. This leads to the following:
\begin{align*}
& w[u_1,z_1] \cdot w'[u] \cdot w[z_2, z_3] \cdot w'[y_2] \cdot w[z_4,u_2] \\
\subwordscatt \; & w[v_1,x_1] \cdot w'[y_1] \cdot w[x_2,x_3] \cdot w'[u] \cdot w[x_4,v_2],
\end{align*}

where the following hold:
\begin{itemize}
\item $w'[u_1,u_2] = w[u_1,z_1] \cdot w'[u] \cdot w[z_2, z_3] \cdot w'[y_2] \cdot w[z_4,u_2]$, 
\item $w'[v_1,v_2] = w[v_1,x_1] \cdot w'[y_1] \cdot w[x_2,x_3] \cdot w'[u] \cdot w[x_4,v_2]$,
\item $w[u_1,z_1] \subwordscatt w[v_1,x_1]$,
\item $w[z_2, z_3] \subwordscatt w[x_2,x_3]$,
\item $w[z_4,u_2] \subwordscatt w[x_4,v_2]$,
\item $w'[y_1] = w'[u]$, and
\item $w'[y_2] = w'[u]$.
\end{itemize}

Then, it follows that $w'[u_1,u_2] \subwordscatt w'[v_1,v_2]$. This can be realized by the following formula:
 
\begin{multline*}
\varphi_{11} \df \exists z_1,x_1,\dots,z_4,x_4,y_1,y_2 \colon \bigl( \scattrelation(u_1,z_2,v_1,x_1) \land \scattrelation(z_2,z_3,x_2,x_3) \land \\ \scattrelation(z_4,u_2,x_4,v_2) \land
\mathsf{P}_\mathtt{a}(y_1) \land \mathsf{P}_\mathtt{a}(y_2) \land \nextrelation'(z_1,u) \land \nextrelation'(u,z_2) \land \\ 
\nextrelation'(z_3,y_2) \land \nextrelation'(y_2,z_4) \land \nextrelation'(x_1,y_1) \\ \land \nextrelation'(y_1,x_2) \land \nextrelation'(x_3,u) \land \nextrelation'(u,x_4) \bigr) 
\end{multline*}

Note that in $\varphi_{11}$, we have that $w[z_2,z_3] \in \Sigma^+$.
Hence, we now handle the case where $w[z_2,z_3] = \emptyword$. 
For this, consider the following:
\begin{center}
	\begin{tabular}{cccccccccc}
		& $w[u_1, z_1]$ & $w'[u]$ & $\emptyword$ & $w'[z_2]$ & $w[z_3,u_2]$,\\
		$\subwordscatt$ & $w[v_1,x_1] $ & $w'[x_2]$ & $w[x_3,x_4] $ & $w'[u]$ & $w[x_5,v_2]$.
	\end{tabular}
\end{center}

Which is realized by $\varphi_{12}$:
\begin{multline*}
\varphi_{12} \df \exists z_1,z_2,z_3,x_1,x_2,x_5 \colon \bigl( \scattrelation(u_1,z_1,v_1,x_1) \land \scattrelation(z_3,u_2,x_5,v_2)  \land  \nextrelation'(z_1,u) \\ \land \nextrelation'(u,z_2) \land \nextrelation'(z_2,z_3)
\land (x_1<x_2<u<x_5) \land \mathsf{P}_\mathtt{a}(x_2) \land \mathsf{P}_\mathtt{a}(v_2) \bigr).  
\end{multline*}

Notice that we do not need to introduce the variables $x_3$ and $v_4$ in $\varphi_{12}$, since $x_2 < u$ is sufficient (due to the fact that $w[u+1,z-1] = \emptyword$).

The cases where $u=u_1$ and $u=v_2$ follow analogously. 
Consider the following:
\begin{center}
	\begin{tabular}{cccccccccc}
				 & $w'[u]$ & $\emptyword$ & $w[z_1,z_2]$ & $\emptyword$ & $w[u_2]$\\
$\subwordscatt$ & $w[v_1, y_1] $ & $\cdots$ & $w[x_1,x_2]$ & $\cdots$ & $w'[y_2, u]$,
	\end{tabular}
\end{center}
where $u=u_1$ and $u=v_2$.
Similarly to previous cases, we handle this case with
\begin{multline*}
\varphi_{13} \df \exists z_1,z_2,x_1,x_2,y_1,y_2 \colon \bigl(  \scattrelation(z_1,z_2,x_1,x_2) \land (v_1 \leq y_1 < x_1) \land \mathsf{P}_\mathtt{a}(y_1) \land\\
 (x_2 < y_2 \leq u) \land \mathsf{P}_\mathtt{a}(y_2) \land \nextrelation'(u,z_1) \land  \\
 \nextrelation'(z_2,u_2) \land (v_1<z_1<z_2<u) \land (u \logeq u_1) \land (u \logeq v_2) \bigr). 
\end{multline*}
For intuition, $\varphi_{13}$ states that 
\begin{itemize}
\item $w[z_1,z_2] \subwordscatt w[x_1,x_2]$, 
\item there exists some position $y_1$, where $v_1 \leq y_1 < x_1$ such that $w'[u_1] =w'[y_1]$, and 
\item there exists some position $y_2$, where $x_2 < y_2 \leq v_1$ such that $w'[y_2] = w'[v_2]$.
\end{itemize}
 Therefore, it follows that $w'[u_1,u_2] \subwordscatt w'[v_1,v_2]$. 
 
 For the case where $u = u_1$ and $v_1 < u < v_2$, we have that 
\begin{center}
	\begin{tabular}{cccccccccc}
				 & $w'[u]$ 		    & $w[z_1,z_2]$  & $w[z_3]$ & $w[z_4, u_2]$\\
$\subwordscatt$ & $w[v_1, x_1] $ & $w[x_2,x_3]$ & $w'[u]$   & $w'[x_3, v_2]$.
	\end{tabular}
\end{center}

For the case where $u_1 < u < u_2$ and $u = v_2$, we have that 
\begin{center}
	\begin{tabular}{cccccccccc}
				 & $w[u_1]$ 	  &  $w[z_1,z_2]$ & $w'[u]$    & $w[z_3, u_2]$\\
$\subwordscatt$ & $w'[u, x_1] $ & $w[x_2,x_3]$  & $w[x_4]$ & $w'[x_5, v_2]$.
	\end{tabular}
\end{center}

Since these cases follow analogously, we do not give formulas to realize them.

\underline{Case 4.2} $u_1 < v_1$ and $u_2 < v_2$:

This case is symmetric to Case 4.1. We have that:
\begin{align*}
& w[u_1,z_1] \cdot w'[y_1] \cdot w[z_2, z_3] \cdot w'[u] \cdot w[z_4,u_2] \\
\subwordscatt \; & w[v_1,x_1] \cdot w'[u] \cdot w[x_2,x_3] \cdot w'[y_2] \cdot w[x_4,v_2].
\end{align*}
Observing Case 4.1, it is clear that a subformula that handles this case can be written.

\paragraph{Concluding the proof for insertion.}
The update formula $\updateformula{\scattrelation}{\absins{\mathtt{a}}}{u; u_1,u_2,v_1,v_2}$ is defined by a disjunction over $\varphi_i$, where each $\varphi_i$ realizes a case.
The update formula also ensures that $u_1, u_2, v_1$, and $v_2$ are all symbol elements. 

\subparagraph{Part 2 (reset).}
We now consider the reset operation. 
Assume $u \in \worddomain$ is being reset. 
Firstly, we deal with some special cases:
\begin{itemize}
\item If $u$ is outside of the intervals $[u_1,u_2]$ and $[v_1,v_2]$, then $\scattrelation(u_1,u_2,v_1,v_2)$ if and only if $R'_{\mathsf{scatt}}(u_1,u_2,v_1,v_2)$. 
\item If $u \in \{u_1,u_2,v_1,v_2 \}$, then  $R'_{\mathsf{scatt}}(u_1,u_2,v_1,v_2)$ should not hold, since $u_1,u_2,v_1,v_2$ must be symbol elements.
\item If $u_1 < u < u_2$ and $u \notin [v_1,v_2]$, then $w'[u_1,u_2] \subwordscatt w'[v_1,v_2]$ holds.
\item If $v_1 \leq u_1$ and $u_2 \leq v_2$, then $w'[u_1,u_2] \subwordscatt w'[v_1,v_2]$.
\end{itemize}
As we have considered such cases many times, we omit the update formula. 

Therefore, we only need to consider the cases for which $v_1 < u < v_2$.

\begin{description}
\item[Case 1.] $v_1 < u < v_2$ and $u \notin [u_1,u_2]$:
\end{description}

In this case, if we can, we ``stitch'' together two scattered subwords.
That is illustrated with
\begin{center}
	\begin{tabular}{cccccccccc}
		& $w[u_1, z_1]$ & $\emptyword$ & $w[z_2, u_2]$\\
		$\subwordscatt$ & $w[v_1,x_1] $ & $w'[u]$ & $w[x_2, v_2]$,
	\end{tabular}
\end{center}

such that $w[u_1,z_1] \subwordscatt w[v_1,x_1]$ and $w[z_2,u_2] \subwordscatt w[x_2,v_2]$ where $z_1 \nextsym z_2$ and $x_1 \nextsym u \nextsym x_2$. 
This can be handled by
\begin{multline*}
\exists z_1, z_2, x_1, x_2 \colon \bigl( \scattrelation(u_1,z_2,v_1,x_1) \land \scattrelation(z_2,u_2,x_2,v_2) \land \nextrelation(z_1,z_2) \land\\
 \nextrelation(x_1,u) \land \nextrelation(u,x_2) \land (u \notin [u_1,u_2]) \bigr). 
\end{multline*}

\begin{description}
\item[Case 2.] $v_1 < u < v_2$ and $u_1 < u < u_2$:
\end{description}
This case has two sub-cases. We only look at Case 2.1. in detail, as Case 2.2. follows analogously.

\underline{Case 2.1.} $v_1 < u_1$ and $v_2 < u_2$:

Consider the following:
\begin{center}
	\begin{tabular}{cccccccccc}
						 & $w[u_1, z_1]$ & $w'[u]$                 & $w[z_2, z_3]$ & $\emptyword$ & $w[z_4,u_2]$ \\
		$\subwordscatt$ & $w[v_1,x_1] $ & $w'[x_1+1,x_2-1]$ & $w[x_2, x_3]$ & $w'[u]$          & $w[x_4,v_2]$.
	\end{tabular}
\end{center}

This can be realized by the following formula:
\begin{multline*}
\exists x_1,z_1,\dots,x_4,v_4 \colon \bigl(  \scattrelation(u_1,z_1,v_1,x_2) \land \scattrelation(z_2,z_3,x_2,x_3) \land \\ \scattrelation(z_4,u_2,x_4,v_2) \land 
\nextrelation(z_1,u)  \land \nextrelation(u,z_2) \land \nextrelation(z_3,z_4) \land \\ (x_1 < x_2) \land \nextrelation(x_3,u) \land \nextrelation(u,x_4) \bigr).
\end{multline*}

There is also the case where the $w[z_2, z_3] = \emptyword$. For this case, consider the following:
\begin{center}
	\begin{tabular}{cccccccccc}
						 & $w[u_1, z_1]$ & $w'[u ]$   & $\emptyword$ & $w[z_2,u_2]$ \\
		$\subwordscatt$ & $w[v_1,x_1] $ & $w[x_1+1,u-1]$  & $w'[u]$          & $w[x_2,v_2]$,
	\end{tabular}
\end{center}

which can be realized by the following formula:
\begin{multline*}
\exists x_1,v_1,x_2,v_2 \colon \bigl( \scattrelation(u_1,z_2,v_1,x_1) \land \scattrelation(z_2,u_2,x_2,v_2) \land \\ (x_1<u<x_2) \land \nextrelation'(z_1, z_2) \bigr). 
\end{multline*}

\underline{Case 2.2.} $u_1 < v_1$ and $u_2 < v_2$:

This case is analogous to Case 2.1, therefore we omit a detailed explanation.

\paragraph{Concluding the proof for reset.}
The update formula for the reset operation (akin to the update formula for the insertion operation) is the disjunction of these subformulas. This concludes the proof for maintaining $\scattrelation$.
\end{proof}

\chapter{Conclusions}\label{ch:conclusions}
We conclude this thesis with a summary of each chapter, along with open problems and future directions for research.

\paragraph{\cref{chp:fccq}.}
This chapter introduces $\cpfc$ and $\cpfcreg$, and looks at expressive power and decision problems with a particular focus on static analysis problems.

Regarding expressive power,
we show that $\cpfcreg$s have an equivalent expressive power as $\sercq$s.
From Freydenberger~\cite{fre:splog}, we immediately know that $\fcregucq$s have the same expressive power as core spanners.
The relative expressive power of fragments of $\fcregucq$s is also considered in~\cref{chp:fccq} and these results are summarized in~\cref{fig:hierarchy}. 
One big open problem from this section is whether $\lang(\cpfcreg) \subset \lang(\fcregucq)$.
In fact, inexpressibility for any fragment of $\fcreg$ is an interesting, but likely difficult problem.
For example, Freydenberger and Peterfreund~\cite{frey2019finite} showed that the language $\mathtt{a}^n \cdot \mathtt{b}^n$ is not expressible in $\fcreg$ using the so-called \emph{Feferman-Vaught theorem}.
However, this approach seems to be more difficult for more complicated languages.

Regarding decision problems,
we show that model checking is $\np$-complete, even for restricted cases such as the input word being of length one or when the query is weakly acyclic.
Reducing from the emptiness problem for extended Turing machines, we show that $\cpfcreg$ universality (``$= \Sigma^*$'') is undecidable.
Furthermore, reducing from the finiteness problem for extended Turing machines, we show that $\cpfc$ regularity is undecidable, and is neither semi-decidable nor co-semi-decidable.
These undecidability results have consequences for query optimization:
For example, there is no algorithm that given an $\cpfc$ computes an equivalent minimal $\cpfc$.

We show that three split-correctness problems  (split-correctness, splittability, and self-splittability) for $\cpfc$s are all undecidable.
However, Doleschal, Kimelfeld, Martens, Nahshon, and Neven~\cite{dol:split} consider many other aspects regarding parallel correctness.
Thus, there is still a long way to go until we understand parallel correctness for $\cpfc$.
For example, one could look at conditions that make splittability, self-splittability, and split-correctness decidable.
Alternatively, one could consider an evaluation-based approach to splitting the document, and consider algorithms based on this approach of first splitting a document before querying.
This second approach could be useful for making querying large documents tractable.
This is because splitting a large document into small sections that are easy to query drastically lowers the number of possible factors that need to be considered. 

The last topic considered in this chapter is ambiguity, were we adapted pattern ambiguity from Mateescu and Salomaa~\cite{mateescu1994finite} to $\cpfcreg$.
We show that it is $\pspace$-complete to decide whether a given $\cpfcreg$ is $k$-ambiguous.
This direction seems like a very interesting and promising one for future research.
Not only for finding queries for which the resulting relation is small and easy to enumerate, but knowing the size of the intermediate relations of a query could be used for heuristic query optimizers~\cite{lan2021survey, leis2017cardinality}.

 \paragraph{\cref{chp:split}.}
This chapter develops the connection between relational conjunctive queries and $\cpfc$s by providing a polynomial-time algorithm that either decomposes  an $\cpfcreg$ into an acyclic $\concreg$, or determines that this is not possible. 
These acyclic $\concreg$s allow for polynomial time model checking, and their results can be enumerated with polynomial delay.
This follows from the fact that for $\concreg$, we can treat the word equations and regular constraints as standard $\cq$ atoms as we can materialize their relations quickly.
Consequently, \cref{chp:split} establishes a notion of tractable acyclicity for $\cpfc$s. 

Due to the close connections between fragments of $\fcreg$ and classes of document spanners, this provides us with a large class of tractable $\sercq$s and core spanners.
We briefly develop this connection further by giving sufficient syntactic criteria for a $\sercq$ to be represented as an acyclic $\cpfcreg$.
But this is only the first step in the study of tractable $\cpfcreg$s and $\sercq$s. \break

An important area of future research is that of semantic acyclicity for $\cpfc$ and $\cpfcreg$.
That is, given $\varphi \in \cpfcreg$ (or $\cpfc$), decide whether there exists an equivalent $\psi \in \cpfcreg$ (or $\cpfc$) such that $\psi$ is acyclic; and ideally, compute $\psi$.
The undecidability of many static analysis problems in~\cref{chp:fccq} gives us some indication that this problem is undecidable.

Another future research direction is faster algorithms.
It seems likely that  more efficient algorithms for model checking and enumeration can be found by utilizing string algorithms rather than materializing the relations for each atom.
Alternatively, one could look into sufficient criteria for efficient model checking.

We introduced a class of parameterized patterns, called $k$-ary local patterns -- for which the membership problem can be solved in polynomial time.
Another promising direction is utilizing other structural parameters.
A systematic study of the decomposition of $\cpfc$s into $\conclog$s of bounded treewidth would likely yield a large class of $\cpfc$s with polynomial-time model checking.
As a consequence, one could define a suitable notion of treewidth for core spanners; or at least, define sufficient criteria for a core spanner to be represented as an $\epfcreg$ with bounded treewidth. 
Determining the exact class of $\cpfc$s with polynomial-time model checking is likely a hard problem. 
This is because such a result would solve the open problem in formal languages of determining exactly what patterns have polynomial-time membership.

\paragraph{\cref{chp:dynfo}.}
From a document spanner point of view, \cref{chp:dynfo} establishes upper bounds for maintaining the three most commonly examined classes of document spanners, namely \dynprop for regular spanners, \dyncq for core spanners, and \dynfo for generalized core spanners. One consequence of this is that a large class of regular expressions with backreference operators are in fact \dyncq-languages~(see~\cref{sec:xregex}).

While the bounds for regular spanners and generalized core spanners are what one might expect from related work, the \dyncq-bound for core spanners could be considered surprising low (it is still open whether $\dyncq \subset \dynfo$). 

By analysing the proof of \cref{lem:splogDynCQ}, the central construction of our main result, it seems that the most important part of maintaining core spanners is updating the equality selection operator -- and, to a lesser extent, the regular constraints. 
One big question for future work is whether this may have a practical use for the evaluation of core spanners. 
Although some may consider this unlikely, there is at least a possibility that the techniques given could be useful.

\cref{sec:rel} describes how $\epfc[\dyncq]$ can be used as a convenient tool to show that languages/relations can be maintained in \dyncq. 
We have also given many relations that can be maintained in \dyncq but are not selectable by generalized core spanners.
Thus showing that \dyncq is more expressive than core spanners, and \dynfo is more expressive than generalized core spanners.
While the exact difference in expressive power between dynamic complexity classes and spanner classes remains open, this question is a difficult one given the lack of lower bound proof techniques for \dyncq and \dynfo.

\paragraph{FC-Datalog.}
Freydenberger and Peterfreund~\cite{frey2019finite} extended $\fc$ with iteration operators.
As with first-order logic, extending $\fc$ with certain iteration operators allows one to capture complexity classes.
For example, $\fc$ (or first-order logic over a linear order) extended with so-called \emph{least fixed points} captures $\mathsf{PTIME}$~\cite{frey2019finite}.
This gives rise to a variant of Datalog based on words utilizing $\fc$ called $\fc$-Datalog, introduced in~\cite{frey2019finite}.
$\fc$-Datalog has strong connections to existing research topics.
For example, $\fc$-Datalog can be seen as an alternative to so-called \emph{range concatenation grammars}~\cite{boullier2004range} or \emph{Hereditary Elementary Formal Systems}~\cite{ikeda1997computational, miyano2000polynomial} -- all three of which characterize $\mathsf{PTIME}$.

Each rule of $\fc$-Datalog is an $\cpfc$, and non-recursive $\fc$-Datalog is alternative notation for $\fcucq$.
Thus, one can think of $\fc$-Datalog as $\fcucq$ with recursion.
It is yet to be seen whether techniques introduced in this thesis can be used for $\fc$-Datalog. 
Regardless of whether this is the case, $\fc$-Datalog is a very promising direction for future research.

\bibliographystyle{plain}
\cleardoublepage
\phantomsection
\addcontentsline{toc}{chapter}{References}
\bibliography{ref}

\printindex

\backmatter
\end{document}